%% For double-blind review submission, w/o CCS and ACM Reference (max
%% submission space)
%% PAGE LIMIT: 25 pages textx
% \documentclass[acmsmall,review,anonymous]{acmart}\settopmatter{printfolios=true,printccs=false,printacmref=false}
%% For double-blind review submission, w/ CCS and ACM Reference
%\documentclass[acmsmall,review,anonymous]{acmart}\settopmatter{printfolios=true}
%% For single-blind review submission, w/o CCS and ACM Reference (max submission space)
%\documentclass[acmsmall,review]{acmart}\settopmatter{printfolios=true,printccs=false,printacmref=false}
%% For single-blind review submission, w/ CCS and ACM Reference
%\documentclass[acmsmall,review]{acmart}\settopmatter{printfolios=true}
%% For final camera-ready submission, w/ required CCS and ACM Reference
\documentclass[manuscript]{acmart}\settopmatter{}

\settopmatter{printacmref=false} % Removes citation information below abstract
\renewcommand\footnotetextcopyrightpermission[1]{} % removes footnote with conference information in first column
\pagestyle{plain} % removes running headers
\renewcommand\footnotetextcopyrightpermission[1]{}

\usepackage{xpatch}
\makeatletter
\xpatchcmd{\ps@firstpagestyle}{Manuscript submitted to ACM}{}{\typeout{First patch succeeded}}{\typeout{first patch failed}}
\xpatchcmd{\ps@standardpagestyle}{Manuscript submitted to ACM}{}{\typeout{Second patch succeeded}}{\typeout{Second patch failed}}    \@ACM@manuscriptfalse% Also in titlepage
\makeatother

%% Journal information
%% Supplied to authors by publisher for camera-ready submission;
%% use defaults for review submission.
% \acmJournal{PACMPL}
% \acmVolume{1}
% \acmNumber{POPL} % CONF = POPL or ICFP or OOPSLA
% \acmArticle{1}
% \acmYear{2018}
% \acmMonth{1}
% \acmDOI{} % \acmDOI{10.1145/nnnnnnn.nnnnnnn}
% \startPage{1}

%% Copyright information
%% Supplied to authors (based on authors' rights management selection;
%% see authors.acm.org) by publisher for camera-ready submission;
%% use 'none' for review submission.
\setcopyright{none}
%\setcopyright{acmcopyright}
%\setcopyright{acmlicensed}
%\setcopyright{rightsretained}
%\copyrightyear{2018}           %% If different from \acmYear

%% Bibliography style
\bibliographystyle{abbrvnat}
%% Citation style
%% Note: author/year citations are required for papers published as an
%% issue of PACMPL.
\citestyle{acmauthoryear}   %% For author/year citations

%%%%%%%%%%%%%%%%%%%%%%%%%%%%%%%%%%%%%%%%%%%%%%%%%%%%%%%%%%%%%%%%%%%%%%
%% Note: Authors migrating a paper from PACMPL format to traditional
%% SIGPLAN proceedings format must update the '\documentclass' and
%% topmatter commands above; see 'acmart-sigplanproc-template.tex'.
%%%%%%%%%%%%%%%%%%%%%%%%%%%%%%%%%%%%%%%%%%%%%%%%%%%%%%%%%%%%%%%%%%%%%%

%% Some recommended packages.
\usepackage{booktabs}   %% For formal tables:
                        %% http://ctan.org/pkg/booktabs
\usepackage{subcaption} %% For complex figures with subfigures/subcaptions
                        %% http://ctan.org/pkg/subcaption

% ---------------------------------------------------------------------------
% ----------------------- LONG / SHORT VERSION ------------------------------
% ---------------------------------------------------------------------------

\newcommand{\SUBSTCLO}[1]{}
\newcommand{\LONGVERSION}[1]{#1}
\newcommand{\LONGVERSIONCHECKED}[1]{#1}
\newcommand{\SHORTVERSION}[1]{}

\ifdefined\LONGVERSION
  \relax
\else
% short version:
% \newcommand{\LONGVERSION}[1]{#1}
% \newcommand{\LONGVERSIONCHECKED}[1]{}
% \newcommand{\SHORTVERSION}[1]{#1}
% % short version:
% \SUBSTCLO command hides the attempt to add
% first-class substitutions to the LF layer of the type theory
%
 \newcommand{\SUBSTCLO}[1]{}
 \newcommand{\LONGVERSION}[1]{}
 \newcommand{\LONGVERSIONCHECKED}[1]{}
 \newcommand{\SHORTVERSION}[1]{#1}
\fi

% ---------------------------------------------------------------------------

%%%%%%%%%%%%%%%%%%%%%%%%%%%%%%%%%%%%%%%%%%%%%%%%%%%%%%%%%%%%
%%%                                                      %%%
%%% packages                                      %%%
%%%                                                      %%%
%%%%%%%%%%%%%%%%%%%%%%%%%%%%%%%%%%%%%%%%%%%%%%%%%%%%%%%%%%%%

\usepackage{proof}
\usepackage{amssymb}
\usepackage{mathtools}
\usepackage{longtable}
\usepackage{enumitem}
\usepackage{xspace}
\usepackage{natbib}
\usepackage{cdsty}
\usepackage{hyperref}

\usepackage[final]{listings}
\usepackage{lstextract}
\usepackage{srcltx}

\usepackage{bbding}
\usepackage{pifont}
\usepackage{wasysym}
\usepackage{amssymb}

% \usepackage{catchfilebetweentags}
% ---------------------------------------------------------------------
% ---------------------------- Todo-------------- ---------------------
% ---------------------------------------------------------------------
% \usepackage{todo}
\usepackage{todonotes}

\newcommand{\highlight}[1]{{\color{green}{#1}}}

% ---------------------------------------------------------------------
% ---------------------------- Names of languages ---------------------
% ---------------------------------------------------------------------

\newcommand{\beluga}{\textsc{Beluga}\xspace}
\newcommand{\cocon}{\textsc{Cocon}\xspace}

% ---------------------------------------------------------------------------
% ------------------------ Theorems and environments ------------------------
% ---------------------------------------------------------------------------

% \newtheorem{definition}{Definition}[section]
% \newtheorem{theorem}{Theorem}[section]
% \newtheorem{conjecture}[theorem]{Conjecture}
% \newtheorem{corollary}[theorem]{Corollary}
% \newtheorem{proposition}[theorem]{Proposition}
% \newtheorem{lemma}[theorem]{Lemma}

\newcommand{\hatctx}[1]{\hat{#1}}
\newcommand\floor[1]{\lfloor#1\rfloor}
\newcommand\ceil[1]{\lceil#1\rceil}
\newcommand{\cbox}[1]{\ceil {#1}}
\newcommand{\unbox}[2]{\floor {#1}_{#2}}
\newcommand{\unboxc}[1]{ #1}

\newcommand{\csubclo}[2]{\{#1 / #2\}_{clo}}
\newcommand{\lfss}[2]{[#1 / #2]}
\newcommand{\lfs}[2]{[#1 / \hatctx #2]}
% \newcommand{\lfs}[3]{[#1]_{\hatctx #2}^{\hatctx #3}}

% ---------------------------------------------------------------------------
% ------------------------------ Contextual ML ------------------------------
% ---------------------------------------------------------------------------
\definecolor{DimGrey}{rgb}{0.8,0.8,0.8}

\lstdefinelanguage{ContextualML}
{
  morekeywords={and, block, case, of, mlam, fn, impossible, let, in, schema,
    some, rec, type, ctype, prop, stratified, inductive, coinductive, LF, if, then,
    else, total, with},
  keepspaces=true,
  sensitive,
  morecomment=[l]{\%},
  morecomment=[n]{\%\{}{\}\%},
  morestring=[b]"
}[keywords,comments,strings]

\lstloadlanguages{ContextualML}
\lstset{language=ContextualML}

% The order of the "literate" definitions is significant:
%   later definitions shadow earlier ones.  The \\Pi definition must come
%   *after* the \\ definition, or the first part of \\Pi --- that is, \\ --- will
%   be matched, and instead of $\Pi$ you'll get $\lambda Pi$.
%
\lstset{literate={->}{{$\rightarrow~$}}2 %
                 {=>}{{$\Rightarrow~$}}2 %
                 {|-}{{$\vdash\,$}}2 %
                 {..}{{$.\hspace{-0.025cm}.\hspace{-0.025cm}.$}}1 % is there any nicer way?
                 {\\}{{$\lambda$}}1 %
                 {\\Pi}{{$\Pi$}}1 %
                 {\\gamma}{{$\gamma$}}1 %
                 {\\psi}{{$\psi$}}1 %
                 {\\sigma}{{$\sigma$}}1 %
                 {FN}{{$\Lambda$}}1 %
                 {[}{{$\lceil$}}1 % box
                {]}{{$\rceil$}}1 % box
                 {u<}{{$\lfloor$}}1 % unbox
                 {>u}{{$\rfloor$}}1 % unbox
                 {<<}{\color{ForestGreen}}1 %
                 {<<r}{\color{FireBrick}}1 %
                 {<*}{\color{ForestGreen}}1 %
                 {<dim}{\color{DimGrey}}1 %
                 {>>}{\color{black}}1 %
                 {?}{\bf{?}}1,
        columns=[l]fullflexible,
        basicstyle=\ttfamily\lst@ifdisplaystyle\footnotesize\fi,
        keywordstyle=\bf,
        identifierstyle=\relax,
        stringstyle=\relax,
        commentstyle=\slshape\color{DimGrey},
        breaklines=true,
        % breakatwhitespace=true,   % doesn't do anything (?!)
        mathescape=true,   % interprets $...$ in listing as math mode
        xleftmargin=0.5cm,
      }

% ---------------------------------------------------------------------------
% ----------------------- Type Setting Proofs  ------------------------------
% ---------------------------------------------------------------------------

% \newcommand{\pcase}[1]{\myindent\emph{Case}. #1 \\[0.5em]}
% \newcommand{\prf}[1]{\myindent#1\\}
\newcommand{\pcase}[1]{\emph{Case}. #1 \\[0.75em]}

\newcommand{\prf}[1]{#1\\[0.15em]}

% ---------------------------------------------------------------------------
% ----------------------- Standard math/CS notations ------------------------
% ---------------------------------------------------------------------------
\newcommand{\LF}{\mathsf{LF}}

\newcommand{\evaln}[3]{#1 \Downarrow_#3 #2}

\newcommand{\der}{\vdash}
\newcommand{\sem}{\Vdash}
\newcommand{\semlf}{\Vdash_{\mathsf{LF}}}

\newcommand{\D}{{\mathcal{D}}}
\newcommand{\Ca}{{\mathcal{C}}}

\newcommand{\M}{{\mathcal{M}}}

\newcommand{\IH}{{\mathcal{I}}}
\newcommand{\J}{{\mathcal{J}}}

\newcommand{\JLF}{\J_{\mathsf{LF}}}
\newcommand{\Jcomp}{\J_{\mathsf{comp}}}

\newcommand{\Ax}{\mathfrak{A}}
\newcommand{\Ru}{\mathfrak{R}}

\newcommand{\rappto}{~}
\newcommand{\cappto}{\ll}
\newcommand{\typeof}{\mathsf{typeof}}

\newcommand{\ann}[1]{ \breve{#1}}

\newcommand{\pos}{\mathrm{lookup}~}
\newcommand{\trunc}{\mathrm{trunc}}

\newcommand{\clam}{\mathtt{lam}~\xspace}
\newcommand{\capp}{\mathtt{app}~\xspace}

\newcommand{\bnfas}{\;\mathrel{::=}\;}
\newcommand{\bnfalt}{\, \mid \,}
\newcommand{\wvec}[1]{\overrightarrow{#1}}
 % Lambda for de Bruijn

\newcommand{\Pityp}[3]{\Pi #1{:}#2.#3}
\newcommand{\const}[1]{\textsf{#1}}
\newcommand{\wwk}[1]{\textbf{wk}}
\newcommand{\wk}[1]{\textsf{wk}_{#1}}
\newcommand{\sclo}[2]{\wk{#1} \circ #2}
\newcommand{\id}{\textsf{id}}
\renewcommand{\arrow}{\Rightarrow}
\newcommand{\univ}[1]{\textsf{U}_{#1}}

\newcommand{\lftype}[0]{\mathsf{type}}
\newcommand{\lfkind}[0]{\mathsf{kind}}
\newcommand{\ctx}[0]{\mathsf{ctx}}

% ---------------------------------------------------------------------------
% ------------------------ Judgments, properties, types ---------------------
% ---------------------------------------------------------------------------

\newcommand{\Nat}{\textsf{Nat}}

\newcommand{\FV}{\mathsf{FV}}

% ---------------------------------------------------------------------------
% ------------------------ Terms for object languages -----------------------
% ---------------------------------------------------------------------------

\newcommand{\tmctx}{\mathsf{tm\_ctx}}
\newcommand{\tm}{\mathsf{tm}}
\newcommand{\tp}{\mathsf{tp}}

\newcommand{\R}{\mathcal{B}}
\newcommand{\trec}[3]{\textsf{rec}^{#3}~\ensuremath{{#1}}}
\newcommand{\titer}[3]{\textsf{rec}^{#3}~\ensuremath{{#1}}}

\newcommand{\mto}{\Rightarrow}

\newcommand{\tmfn}[2]{\textsf{fn } #1 \Rightarrow #2}

\newcommand{\tmrec}[4]{\textsf{rec}^{#1} (#2 \mid #3 \mid #4)~}
\newcommand{\tmrecctx}[3]{\textsf{rec}^{#1} (#2 \mid #3)~}

\newcommand{\whnf}{\searrow}
\newcommand{\lfwhnf}{\searrow_{\mathsf{LF}}}

\newcommand{\tightoverset}[2]{%
  \mathop{#2}\limits^{\vbox to -.5ex{\kern-0.95ex\hbox{$#1$}\vss}}}

\newcommand{\neut}{\textsf{wne}~}

\newcommand{\norm}{\textsf{whnf}~}

\newcommand{\defiff}{: \Leftrightarrow}
% ---------------------------------------------------------------------------- %
% -------------------------- Inference rules' names -------------------------- %
% ---------------------------------------------------------------------------- %

% ---------------------------------------------------------------------------
% ------------------ Proofs and derivation trees, cases ---------------------
% ---------------------------------------------------------------------------

%%% Local Variables:
%%% mode: latex
%%% TeX-master: "main"
%%% End:

% Andreas, 2019-01-10: reduce line separation (set baseline stretch to 1.0)
\setstretch{1.0}

\begin{document}

%% Title information
% Cocon: Computation in Contextual Modal Type Theory
% Contextual Type Theory and Computations
% Marrying intensional and extensional function spaces
% Dependent Type Theory
%
\title[\cocon: Computation in Contextual Type Theory]{\cocon:
  Computation in Contextual Type Theory}         %% [Short Title] is optional;
                                        %% when present, will be used in
                                        %% header instead of Full Title.
% \titlenote{with title note}             %% \titlenote is optional;
                                        %% can be repeated if necessary;
                                        %% contents suppressed with 'anonymous'
% \subtitle{Subtitle}                     %% \subtitle is optional
% \subtitlenote{with subtitle note}       %% \subtitlenote is optional;
                                        %% can be repeated if necessary;
                                        %% contents suppressed with 'anonymous'

%% Author information
%% Contents and number of authors suppressed with 'anonymous'.
%% Each author should be introduced by \author, followed by
%% \authornote (optional), \orcid (optional), \affiliation, and
%% \email.
%% An author may have multiple affiliations and/or emails; repeat the
%% appropriate command.
%% Many elements are not rendered, but should be provided for metadata
%% extraction tools.

%% Author with two affiliations and emails.
\author{Brigitte Pientka}
% \authornote{with author2 note}          %% \authornote is optional;
                                        %% can be repeated if necessary
\orcid{nnnn-nnnn-nnnn-nnnn}             %% \orcid is optional
\affiliation{
  \position{Associate Professor}
  \department{School of Computer Science}             %% \department is recommended
  \institution{McGill University}                     %% \institution is required
%  \streetaddress{3480 University Street}
%  \city{Montreal}
%  \state{QC}
%  \postcode{H3A 0E9}
  \country{Canada}
}
\email{bpientka@cs.mcgill.ca}         %% \email is recommended
% \affiliation{
%   \position{}
%   \department{Institut f{\"u}r Theoretische Informatik}             %% \department is recommended
%   \institution{Ludwig-Maximilian Universit{\"a}t M{\"u}nchen}           %% \institution is required
% %  \streetaddress{Street3b Address2b}
% %  \city{Munich}
% %  \state{State2b}
% %  \postcode{Post-Code2b}
%   \country{Germany}
% }
% \email{bpientka@cs.mcgill.ca}         %% \email is recommended

%% Author with single affiliation.
\author{Andreas Abel}
% \authornote{with author1 note}          %% \authornote is optional;
                                        %% can be repeated if necessary
\orcid{0000-0003-0420-4492}             %% \orcid is optional
\affiliation{
  \position{Senior lecturer}
  \department{Department of Computer Science and Engineering}              %% \department is recommended
  \institution{Gothenburg University}  %% \institution is required
 % \streetaddress{Rännvägen 6b}
 % \city{Göteborg}
 % \state{}
 % \postcode{41321}
   \country{Sweden}
}
\email{andreas.abel@gu.se}          %% \email is recommended

\author{Francisco Ferreira}
% \authornote{with author1 note}          %% \authornote is optional;
                                        %% can be repeated if necessary
\orcid{0000-0003-0420-4492}             %% \orcid is optional
\affiliation{
  \position{Research Associate}
  \department{School of Computer Science}             %% \department is recommended
  \institution{Imperial College London}           %% \institution is required
%  \streetaddress{Exhibition Road}
%  \city{London}
%  \state{}
%  \postcode{SW7 2AZ}
  \country{United Kingdom}
}
\email{f.ferreira-ruiz@imperial.ac.uk}          %% \email is recommended

\author{David Thibodeau}
% \authornote{with author1 note}          %% \authornote is optional;
                                        %% can be repeated if necessary
\orcid{0000-0003-0420-4492}             %% \orcid is optional
\affiliation{
  \position{PhD Student}
  \department{School of Computer Science}             %% \department is recommended
  \institution{McGill University}           %% \institution is required
%  \streetaddress{3480 University Street}
%  \city{Montreal}
%  \state{QC}
%  \postcode{H3A 0E9}
  \country{Canada}
}
\email{david.thibodeau@mail.mcgill.ca}          %% \email is recommended

\author{Rebecca Zucchini}
% \authornote{with author1 note}          %% \authornote is optional;
                                        %% can be repeated if necessary
\orcid{0000-0003-0420-4492}             %% \orcid is optional
\affiliation{
  \position{Master Student}
  \department{}             %% \department is recommended
  \institution{ENS Paris Saclay}           %% \institution is required
  \streetaddress{sss}
  \city{xxx}
  \state{xx}
  \postcode{xxxxx}
  \country{France}
}
\email{XXX}          %% \email is recommended

%% Paper note
%% The \thanks command may be used to create a "paper note" ---
%% similar to a title note or an author note, but not explicitly
%% associated with a particular element.  It will appear immediately
%% above the permission/copyright statement.
% \thanks{with paper note}                %% \thanks is optional
                                        %% can be repeated if necessary
                                        %% contents suppressed with 'anonymous'

%% Abstract
%% Note: \begin{abstract}...\end{abstract} environment must come
%% before \maketitle command
\begin{abstract}
We describe a Martin-L{\"o}f-style dependent type theory, called \cocon, that allows us to
mix the intensional function space that is used to represent
higher-order abstract syntax (HOAS) trees with the extensional
function space that describes (recursive) computations. We mediate
between HOAS representations and computations using contextual modal
types.  Our type theory also supports an infinite hierarchy of
universes and hence supports type-level computation---thereby providing
metaprogramming and (small-scale) reflection.
Our main contribution is the development of a Kripke-style model for
\cocon that allows us to prove normalization. From the normalization
proof, we derive subject reduction and consistency. Our work lays the
foundation to incorporate the methodology of logical frameworks into
systems such as Agda and bridges the longstanding gap between
these two worlds.
\end{abstract}

%% 2012 ACM Computing Classification System (CSS) concepts
%% Generate at 'http://dl.acm.org/ccs/ccs.cfm'.
%  \begin{CCSXML}
% <ccs2012>
% <concept>
% <concept_id>10003752.10003790.10011740</concept_id>
% <concept_desc>Theory of computation~Type theory</concept_desc>
% <concept_significance>500</concept_significance>
% </concept>
% </ccs2012>
% \end{CCSXML}

% \ccsdesc[500]{Theory of computation~Type theory}

%% End of generated code

%% Keywords
%% comma separated list
\keywords{Dependent Types, Logical Relations, Proof assistants}  %% \keywords is optional

% \tableofcontents

%% \maketitle
%% Note: \maketitle command must come after title commands, author
%% commands, abstract environment, Computing Classification System
%% environment and commands, and keywords command.
\maketitle

\section{Introduction}
% use of proof assistants is wide spread
% stimulated by the POPLmark challenge, a
% many recent recent successes stories such as CompCert, DeepSpec, Sel4, Cogent.
% Generally spei

Higher-order abstract syntax (HOAS)  is an elegant and deceptively simple idea of encoding syntax and more generally formal systems given via axioms and inference rules. The basic idea
is to map uniformly binding structures in our object language to the function space in a meta-language thereby inheriting $\alpha$-renaming and capture-avoiding substitution. In the logical framework LF \cite{Harper93jacm}, for example, we encode a simple object language consisting of functions, function application, and let-expressions using a type \lstinline!tm! together with the constants as follows:

\begin{lstlisting}
lam : (tm -> tm) -> tm.
app : tm -> tm -> tm.
letv: tm -> (tm -> tm) -> tm.
\end{lstlisting}

The object language term $(\mathsf{lam}~x.\mathsf{lam}~y.\mathsf{let}~w = x~y~\mathsf{in}~w~y)$ is then encoded as
\lstinline[basicstyle=\ttfamily\small]!(lam \x.lam \y.letv (app x y) \w.app w y)!
using the LF abstractions to model binding.
Object level substitution is modelled through LF application; for instance, the fact that
$((\mathsf{lam}~x.M)~N)$ reduces to $[N/x]M$ in our object language is expressed as
\lstinline[basicstyle=\ttfamily\small]!(app (lam M) N)!
reducing to
\lstinline[basicstyle=\ttfamily\small]!(M N)!.
% Common concepts and operations such as $\alpha$-renaming and capture avoiding substitution are directly inherited from the meta-language, here in the example LF.
%
% This idea goes back to the seminal work by~\cite{Harper93jacm} on defining logics using the logical framework LF where the  authors observed that a dependently-typed lambda calculus (sometimes also called $\lambda\Pi$-calculus) can serve as a meta-language to represent formal systems. In particular, we can model uniformly binding structures in the object language by reusing the LF function space of the meta-language. Moreover, % as \cite{Harper93jacm} noted,
% this encoding technique scales to representing formal systems that use
% hypothetical and parametrical reasoning by providing generic support
% for managing hypotheses and the corresponding substitution lemmas.
%
This approach can offer substantial benefits: programmers do not need to build up the
basic mathematical infrastructure, they can work at a higher-level of abstraction, encodings are more compact, and hence it is easier to mechanize formal systems together with their meta-theory. % it can also lead to tangible benefits for proof checking and proof search.

However, this approach relies on the fact that we use an
\emph{intensional} function space that lacks recursion, case analysis,
inductive types, and universes to adequately represent syntax.  In the
logical framework LF \citep{Harper93jacm} for example we use the
dependently-typed lambda calculus as a meta-language to represent
formal systems. Two LF objects are equal if they have the same
$\beta\eta$-normal form.
% An LF object
% \lstinline[basicstyle=\ttfamily\small]!(abs \x.abs \y.app y x)!
% is then equal to another LF object
% \lstinline[basicstyle=\ttfamily\small]!(abs \a.abs \b. ((\x. x) (app a b))!, since we can simply normalize the latter LF object by eliminating the redex \lstinline! ((\x. x) (app a b))! and then compare the two canonical LF objects modulo $\alpha$-renaming.
% In contrast the term \lstinline[basicstyle=\ttfamily\small]!(abs \x. abs \y. app (app x y)  y)! is not (syntactically) equal to \lstinline!m!. These two representations are only equivalent given a simple program transformation that would eliminate \lstinline!let!-expressions however their syntactic representations are distinct.
Under this view, intensional functions represent syntactic binding structures. However, we cannot write recursive programs about such syntactic structures \emph{within} LF, as we lack the power of recursion. We only have a way to represent data.
In contrast, to describe computation we rely on the \emph{extensional}
function space. Under this view, two functions are (extensionally)
equal if they \emph{behave} in the same way, i.e. when
they produce equal results when applied to equal inputs. Under
this view, functions are opaque.

 \subsection{Intensional and Extensional Functions -- A World of a Difference }
To understand the fundamental difference between defining HOAS trees in LF vs.
defining HOAS-style trees using inductive types, let us consider an inductive type
\lstinline!D! with one constructor
\lstinline!lam: (D -> D) -> D!. What is the problem with such a
definition in type theory? -- In functional ML-like languages, this is, of course,
possible, and types like \lstinline|D| can be explained using domain theory
\cite{scott:dataTypesAsLattices}.
% Andreas: The justification for negative recursive types comes from domain theory
% making precise the concept of partiality.
% TODO: mention domain theory?
However, the function argument to the constructor \lstinline!lam! is
opaque and we would not be able to pattern match deeper on the
argument to inspect the shape and structure of the syntax
tree that is described by it. We can only observe it by applying it to
some argument. The resulting encoding also would not be adequate,
i.e.~there are terms of type \lstinline!D! that are in normal form but
do not uniquely correspond to a term in the object language we try to
model. As a consequence, we may need to rule out such ``exotic''
representations \cite{Despeyroux:TLCA95}. But there is a more fundamental problem. In proof
assistants based on type theory such as Coq or Agda, we cannot afford
to work within an inconsistent system and we demand that all programs we write are
terminating. The definition of a constructor \lstinline!lam! as given
previously would be forbidden as it violates what is known as the
positivity restriction. Were we to allow it, we can easily write
non-terminating programs by pattern matching -- even without making a recursive call.

% data D : u where
% inj : (D -> D) -> D
\noindent
\begin{minipage}{4cm}
\begin{lstlisting}
apply : D -> (D -> D)
apply  (lam f) = f
\end{lstlisting}
\end{minipage}
\begin{minipage}{5cm}
\begin{lstlisting}
omega : D
omega = lam (\x -> apply x x)
\end{lstlisting}
\end{minipage}
\begin{minipage}{6cm}
\begin{lstlisting}
Omega : D
Omega = apply omega omega
\end{lstlisting}
\end{minipage}

Here we simply write two functions: the function \lstinline!apply! unpacks an
object of type \lstinline!D! using pattern matching and the function
\lstinline!omega! creates an object of type \lstinline!D!. Using
\lstinline!apply! and \lstinline!omega! we can now write a
non-terminating program that will continue to reproduce itself.

This example begs two questions: How can we reason inductively about
LF definitions if they are seemingly not inductive? Do we have to
simply give up on HOAS definitions to model syntactic structures
within type theory to remain consistent?

\subsection{Towards Bridging the Gap between Intensional and Extensional Functions}

Over the past two decades, we have made substantial progress in
bringing the intensional and extensional views closer
together. \citet{Despeyroux97} made the key observation that we can
mediate between the weak LF and the strong computation-level
function space using a box-modality. The authors describe a simply
typed lambda-calculus with iteration and case constructs which
preserves the adequacy of higher-order abstract syntax encodings. The
well-known paradoxes are avoided through the use
of a modal box operator which obeys the laws of S4. In addition to
being simply typed, all computation had to be on closed HOAS trees. \citet{Despeyroux99}
sketch an extension of this line of work to dependent type theory --
however it lacks a normalization proof. %

\beluga \cite{Pientka:IJCAR10,Pientka:CADE15} took another important step towards writing inductive proofs
about HOAS trees by generalizing the box-modality to a contextual
modal type \cite{Nanevski:ICML05,Pientka:POPL08}. This allows us to
characterize HOAS trees that depend on a context of assumptions. More
importantly, \beluga allows programmers to analyze these contextual
HOAS trees using case distinction and recursion. Exploiting the
Curry-Howard isomorphism, inductive proofs about HOAS trees can be
described using recursive functions.
% and cleanly separates representing syntax from reasoning about syntax.
However, the gap between full dependent type theories with recursion
and universes such as Martin-L{\"of} type theory, and weak dependent
type theories such as LF remains. In particular, \beluga cleanly
separates representing syntax from reasoning about syntax. The
resulting language is an indexed type system in the tradition of
\citet{Zenger:TCS97} and \citet{Xi99popl} where the index language is
completely different from the computation language which is used to
write recursive programs. In \beluga, contextual LF is taken as the
index domain. This has the key advantage that meta-theoretic proofs
are modular and hinge on the fact that equality in the index domain is
decidable.  However, this approach also gives up a lot of
expressivity; in particular we can only express properties of HOAS
trees, but we lack the power to express properties of the functions we
write about them. This prevents us from fully exploiting the power of
metaprogramming and reflection.

% More importantly, it prevents us from translating explicit proof steps into implicit ones and fully exploiting the power of metaprogramming and reflection.
% This is a key step in building efficient domain-specific provers and decision procedures.  More generally, it prevents meta-programming and using reflection.

% The development of a full dependent type theory which supports reasoning about HOAS structures has been a long coveted goal
\subsection{The Best of Both Worlds}
In this paper, we present the Martin-L{\"o}f style dependent type
theory \cocon where we mediate between intensional syntactic
structures and extensional computations using contextual types
\cite{Nanevski:ICML05,Pientka:POPL08}. Following \beluga, we pair a LF
object together with its surrounding LF context and embed it as a
contextual object into computations using the box-modality. For
example, $\cbox{x,y \vdash \capp x~y}$ 
describes a contextual LF object that has the contextual LF type
$\cbox{x{:}\tm,y{:}\tm \vdash \tm}$. In contrast to \beluga, we also
allow computations to be embedded within LF objects. For example, if a
program $t$ promises to compute a value
$\cbox{x{:}\tm,y{:}\tm \vdash \tm}$, then we can embed $t$ directly into
an LF object writing
$\clam \lambda x. \clam \lambda y. \capp \unbox{t}{}~x$. In general,
we can use a computation that produces a value of type
$\cbox{\Psi \vdash A}$ when constructing a LF object in a LF context
$\Phi$ by unboxing it together with a LF substitution that moves the
value from the LF context $\Psi$ to the current LF context
$\Phi$. This is written as $\unbox{t}\sigma$. In the example, we
omitted %for simplicity
the substitution as the computation already
promised to produce a value in the appropriate LF
context. %Hence, for simplicity we omitted unbox $t$ with the identity substitution.

Being able to embed functions into LF objects is key to express properties about function we write about them. For example, we might implement a function that evaluates a $\tm$-object and another function \lstinline!trans! that eliminates let-expressions from our $\tm$ language. Then we would like to know whether both the original term and the translated term evaluate to the same value.

Allowing computation within LF objects, might seem like a small
change, but it has far reaching consequences. To establish consistency
of the type theory, we cannot consider normalization of LF separately
from normalization of computations anymore, as it is done in
\citet{Pientka:TLCA15} and \citet{JacobRao:stratified2018}. As
Martin-L{\"o}f type theory \citeyearpar{Martin-Loef73a}, \cocon is a
predicative type theory and supports an infinite hierarchy of
universes. This allows us to write type-level computation, i.e. we can
compute types whose shape depends on a given value. Such recursively
defined types are sometimes called large eliminations
\cite{Werner:1992}. Due to the presence of type-level computations
dependencies cannot be erased from the model. As a consequence, the
simpler proof technique of \citet{Harper03tocl} which considers
approximate shape of types and has been used to prove completeness of
equivalence algorithm for LF's type theory cannot be used in our
setting. Instead, we follow recent work by \citet{Abel:LMCS12} and
\citet{Abel:POPL18} on proving normalization of our fully dependent
type theory using a Kripke logical relation. Our semantic model
highlights the intensional character of the LF function space and the
extensional character of computations.
Our main contribution is the design of the Kripke-style model for the
dependent type theory \cocon that allows us to establish
normalization. From the normalization proof, we derive type
uniqueness, subject reduction, and consistency. %\LO for our operational semantics which is based on weak head reductions.

We believe \cocon lays the foundation to incorporate the methodology of logical frameworks into systems such as Agda \cite{Norell:phd07} or Coq \cite{bertot/casteran:2004}.
 This finally allows us to combine the world of type theory and logical frameworks inheriting the best of both worlds.

\subsection{Outline of the Technical Development}
Before delving into the technical details, we sketch here the main structure of the technical development.
\cocon consists of two mutually defined layers: LF to define HOAS and
computation to write recursive programs. The syntax and typing rules
of \cocon together with definitional equality are described in
Sec.~\ref{sec:cocon}.
We distinguish between two different kinds of variables,
LF variables and computation variables.
In particular, we define two different substitution
operations. We then proceed to prove some elementary properties about
LF (Sec. \ref{sec:proplf}) and computation (Sec. \ref{sec:propcomp}),
in particular well-formedness of LF contexts, LF Weakening and LF
Substitution properties. For LF we also establish functionality of LF
typing from which injectivity of LF function types follows.

Similarly, we establish some elementary properties about computation-level contexts and compu\-ta\-tion-level substitutions. We then proceed to define weak head reductions for LF and computations (Sec. \ref{sec:whred}) and show that they are closed under weakening (renaming).

Using weak head reduction, we define semantic equality using a Kripke model (Sec. \ref{sec:logrel}). Our model is Kripke-style in the sense that it is closed under weakening. It contains all well-typed terms in weak head normal form (whnf) and is built on top of definitional equality. We do not define semantic typing, but say a term is semantically well typed, if it is semantically equal to itself.
%
% Semantic equality definitions correspond to the definitional equality judgments.
% If two term (LF terms or computations) are semantically equal, then we not only know that they are definitionally equal, but in particular semantic equality remains to hold for their weak head normal forms.
Since we embed computations inside LF terms, our typing rules for LF and computations are mutually defined, and one might wonder how we can break this cycle to arrive at a well-founded definition of semantic equality. We consider two LF terms $M$ and $N$ that weak head reduce to a $\unbox t \sigma$ and $\unbox {t'}{\sigma'}$ resp. semantically equal, if the computations $t$ and $t'$ are definitionally equal and the corresponding LF substitutions are semantically equal. % This allows us to break the cycle.
This allows us to first define semantic equality for LF objects and subsequently semantic equality for computations breaking the cycle.

% To define semantic equality for LF terms $M$ and $N$ at LF function type $\Pi x{:}A.B$, we weak head reduce $M$ and $N$ to their resp. whnf $\lambda x.M'$ and $\lambda x.N'$, and continue to check whether the bodies $M'$ and $N'$ are semantically equal. Hence, equality on the LF level is  intensional, i.e. comparing the syntactic structure.
%
%In contrast, if two terms (computations) $t$ and $t'$ are semantically equal at type $\tau$, then we not only know that they are also definitionally equal, but in particular semantic equality remains to hold for the weak head normal forms (whnf) of $t$, $t'$ and $\tau$, and that all possible observations on those whnf also must belong to model. In other words,
%terms (computations) at function type are extensionally equal. Two terms $t$ and $t'$ are semantically equal at function type, if for all observation $s$ and $s'$, if the observations are equal, applying $s$ and $s'$ respectively to the whnf of $t$ and $t'$ they remain equal.
%
As we allow type-level computation, semantic equality for computations cannot be inductively defined on the structure of computation-level types. Instead, we use semantic kinding for types as a measure to define the semantic equality for computations.

Our semantic equality definitions are stable under renaming (weakening). We also prove symmetry, transitivity and type conversion for semantic equality that are the cornerstone of the development. This allows us to show that our semantic definition for terms is backwards closed and that neutral terms are semantically equal (see Sec. \ref{sec:semprop}).
%
% Establishing this property for semantic equality for computations is
Using the Kripke-model, we then show normalization and subject
reduction (see Sec. \ref{sec:validity}).
Logical consistency follows.
The full development including the proofs can be found in the accompanying long version.

\paragraph{Summary of Contributions}

\begin{itemize}
\item We describe \cocon, a Martin-L{\"o}f style type theory with an
an infinite hierarchy of universes and two intertwined layers: on
the LF layer, we can define HOAS trees referring to values produced by
computations and on the computation layer we can write recursive
functions on HOAS trees and exploit the power of large
eliminations. We mediate between these layers using contextual modal
types. This allows us to bridge the gap between the intensional LF
function space and the extensional function space used for writing
recursive computations.

\item We give a Kripke-style model to describe semantic equality for well-typed LF objects and well-typed computations highlighting the difference between intensional and extensional functions. Using this model we prove normalization.

% \item Describe the whnf
% \item Describe semantic model for semantic equality for well-typed terms; establish key properties of the semantic definition and prove finally normalization
\end{itemize}

%%% Local Variables:
%%% mode: latex
%%% TeX-master: "main"
%%% End:

\section{\cocon: Computation in Contextual Type Theory}\label{sec:cocon}
% Cocon
\cocon combines the logical framework LF \cite{Harper93jacm} with a full dependent type theory that supports recursion over HOAS objects and universes. For clarity, we split \cocon's grammar into different syntactic
categories (see Fig.~\ref{fig:grammar}). The LF layer describes LF objects, LF types, LF contexts; the computation layer consists of terms and types that describe recursive computation and universes. We mediate the interaction between LF objects and computations via a (contextual) box modality following \citet{Pientka:POPL08}: we embed contextual LF objects into  computations, by pairing an LF object with its LF contexts and we embed computations within LF objects by unboxing the result
of a computation. This allows us to not only write functions about LF
objects, but also establish proofs about such functions and opens the way for metaprogramming and writing programs using reflection.

\begin{figure}[htb]
  \centering
\[
\begin{array}{p{3cm}@{~}l@{~}r@{~}l}
LF Kinds           & K           & \bnfas & \lftype \bnfalt \Pityp x A K  \\
% LF Atomic Types    & P, Q        & \bnfas & \const{a} \bnfalt P~M\\
LF Types           & A, B        & \bnfas & \const{a}~M_1 \ldots M_n \bnfalt \Pityp x A B\\
LF Terms           & M, N       & \bnfas & \lambda x.M \bnfalt x \bnfalt \const{c} \bnfalt M~N \bnfalt \unbox t \sigma \\
LF Constants       & \const{a}, \const{c} & \bnfas & \tm \bnfalt \clam\bnfalt\capp\ \\
LF Contexts        & \Psi, \Phi & \bnfas & \psi \bnfalt \cdot \bnfalt \Psi, x{:}A\\
LF Context (Erased) & \hatctx{\Psi},\hatctx{\Phi} & \bnfas &  \psi \bnfalt \cdot \bnfalt \hatctx{\Psi},~x\\
LF Substitutions   & \sigma   & \bnfas & \cdot \bnfalt \wk{\hatctx\Psi} \bnfalt \sigma, M \SUBSTCLO{\bnfalt \sclo {\hatctx\Psi} {\unbox t \sigma}} \\
% LF Closure         & r        & \bnfas & \unbox t \sigma \\
LF Signature       & \Sigma   & \bnfas & \tm{:}\lftype,~ \clam{:} \Pityp y {(\Pityp x \tm \tm)} \tm,~\mathsf{app}{:} \Pityp x \tm {\Pityp y \tm \tm}
\\[0.25em]
\hline
\\[-0.75em]
Contextual Types & T & \bnfas &
                     \Psi \vdash A \bnfalt      % Type of Meta-Objects
                     \Psi \vdash_\# A           % Type of Parameter Objects
   \SUBSTCLO{\bnfalt \Psi \vdash \Phi \mid \Psi \vdash_\# \Phi}
\\
Contextual Objects & C & \bnfas &
                      \hatctx{\Psi} \vdash M    % Meta-Objects
    \SUBSTCLO{\bnfalt \hatctx{\Psi} \vdash \sigma }
%                     \ldots
\\[0.25em]
\hline
\\[-0.75em]

Sorts     & u          & \bnfas & \univ k \\
Domain of Discourse & \ann\tau & \bnfas & \tau \bnfalt \tmctx \\
Types and & \tau, \IH, & \bnfas &  u \bnfalt \cbox T
                             \bnfalt (y :\ann{\tau}_1) \arrow \tau_2 % \bnfalt \tmctx
\\
Terms & t, s &  \bnfalt &  y \bnfalt  \cbox C \bnfalt t_1~t_2 \bnfalt \tmfn y t % \bnfalt \Psi
             \bnfalt \titer{\R}{}{\IH} \rappto\Psi~t
\\
% & & \bnfalt & % \trec{R}{\tmctx}{\IH} \rappto t \bnfalt
%        \titer{\R}{\tm}{\IH} \rappto \Psi~t \bnfalt
%        \titer{\R}{\#\tm}{\IH} \rappto \Psi~t \bnfalt
%        \titer{\R}{s}{\IH} \rappto \Psi~t \bnfalt
% \\[0.5em]
Branches  & {\R} & \bnfas &
  ({\psi, p \mto t_v} \mid{\psi, m, n, f_m, f_n \mto t_{\mathsf{app}}}\mid {\psi, m, f_m \mto t_{\clam}})
\LONGVERSION{\\[0.25em]
         & & \bnfalt &
   ({\psi \mto t_x} \mid {\psi, y, f_y \mto t_y}) }
\SUBSTCLO{\\[0.25em]
          & & \bnfalt &
  ({ \cdot \mto t_e} \mid {\psi, \sigma, m, f_{\sigma} \mto t_c})}
% \\[0.25em]
\\% [0.25em]
% \hline
% \\[-0.75em]
Contexts & \Gamma & \bnfas & \cdot \bnfalt \Gamma, y:\ann\tau % \bnfalt \Gamma, y:= t : \tau
\end{array}
\]
  \caption{Syntax of \cocon}
  \label{fig:grammar}
\end{figure}

\subsection{Syntax}
% In this section, we introduce each syntactic category.
\paragraph{Logical Framework LF with Embedded Computations}
%\cocon supports specifying objects within a dependently typed lambda-calculus based on the logical framework LF \cite{Harper93jacm}.
As in LF we allow dependent kinds and types; LF terms can be defined by
LF variables, constants, LF applications, and LF lambda-abstractions. In addition, we allow a computation $t$ to be embedded into LF terms using a closure $\unbox t {\sigma}$. Here the computation $t$ eventually computes to a \emph{contextual} object that depends on assumptions $\Psi$ following \citet{Pientka:POPL08}. Once computation of $t$ produces a contextual object $\hatctx\Psi\vdash M$ we can embed the result by applying the substitution $\sigma$ to $M$ moving $M$ from the LF context $\Psi$ to the current context $\Phi$.
% We describe the suspended computation $t$
% that promises to compute  a value of contextual (modal) type $\cbox{\Psi \vdash A}$
%together with the LF substitution $\sigma$ using the LF closure $\unbox t {\sigma}$.
% We abbreviate LF closures by writing $r$.
% In a LF substitution, a LF closure $\unbox t {\sigma}$ describes a suspended substitution. As soon as $t$ computes a substitution $\sigma'$, we continue to compose $\sigma$ and $\sigma'$ by applying $\sigma$ to $\sigma'$.

We distinguish between computations that characterize a general LF term $M$ of type $A$ in a context $\Psi$ using the contextual type $\Psi \vdash A$ and computations that are guaranteed to return a variable in a context $\Psi$ of type $A$ using the contextual type $\Psi \vdash_\# A$. This is essential when describing recursors over contextual LF terms, but also generally important when mechanizing formal systems and it is smoothly integrated in our type theory. \SHORTVERSION{For simplicity, we focus on $\Psi \vdash A$ in the subsequent development and fix the LF signature to include the type $\tm$ and the LF constants $\clam$ and $\capp$.}

\LONGVERSION{For simplicity, we fix here the LF signature to include the type $\tm$ and the LF constants $\clam$ and $\capp$. This allows us to for example define recursors on $\tm$-objects directly.}

% Similarly, we characterize general LF substitutions $\sigma$ which map elements from the LF context $\Phi$ to the LF context $\Psi$ using the contextual type $\Psi \vdash \Phi$. LF renamings (variable) substitutions that only replace variables in the LF context $\Phi$ with its corresponding variable in $\Psi$ are described by the contextual type $\Psi \vdash_\# \Phi$.
%  Previous work on contextual (modal) types (see for example \cite{Nanevski:ICML05,Pientka:POPL08}) adopted a let-box formulation.

\paragraph{LF contexts}
LF contexts are either empty or are built by extending a context with a declaration $x{:}A$.
We may also use a (context) variable $\psi$ that stands for a context prefix and must be declared on the computation-level. In particular, we can write functions where we abstract over (context) variables. Consequently, we can pass LF contexts as arguments to functions. We classify LF contexts via schemas -- for this paper, we pre-define the schema $\tmctx$ which classifies a LF context which consists of $\tm$ declarations. Such context schemas are similar to adding base types to computation-level types.
We often do not need to carry the full LF context with the type annotations, but it suffices to simply consider the erased LF context. Erased LF contexts are simply a list of variables possibly with a context variable at the head.

\LONGVERSION{At the moment, we do not support computation on context at the moment -- this simplifies the design. Recall that the head of a context denotes a possibly empty sequence of declarations. This prefix should be abstract and opaque to any LF term or LF type that is considered within this context. In other words, a LF term $M$ (or LF type $A$) should be meaningful without requiring any specific knowledge about the prefix of declarations. Second, it would be difficult to enforce well-scoping and $\alpha$-renaming. To illustrate, consider the following LF term $\capp~x~y$ in the LF context $x{:}\tm, y{:}\tm$. If we were to allow type checking to exploit equivalence relations on LF contexts that take into account computations on LF contexts, we can argue that since $x{:}\tm, y{:}\tm$ is equivalent to $\unboxc{\texttt{copy}~\cbox{x{:}\tm, y{:}\tm}}$, $\capp~x~y$ should also be meaningful in the latter LF context. However, now the LF variables $x$ and $y$ are free in $\capp~x~y$.}

% One may ask whether it is still possible allow computation on contexts? -- This seems possible provided that we require certain properties of the computation on contexts such as ``naturality'', i.e. the computation preserves the structure of the context and any morphisms on them. More importantly, one always need to compute the context first before using and referring to it.
% all references to that (context) variable refer to the same , but is handled via explicit equality constraints on the type level. In particular, we may state:

% \[
% (\psi : \cbox \tmctx) \arrow (\phi := \texttt{f}~\psi : \tmctx) \arrow (m : \cbox {\psi \vdash \tm}) \arrow \cbox {\phi \vdash \tm}
% \]

% We can use such a function by supplying a LF context $\cbox \Psi$ for $\psi$ and a LF context $\cbox \Phi$ for $\phi$ s.t. $\cbox \Phi$ is the result of computing $\texttt{f}~\cbox{\Psi}$.

% , to keep things manageable. However, recent work on {\mathsc{lincx}} \cite{Georges:ESOP17} suggests th
\paragraph{LF Substitutions}
LF substitutions allow us to move between LF contexts. The empty LF substitution provides a mapping from an empty LF context to a LF context $\Psi$ and hence has weakening built-in. The weakening substitution
written as $\wk{\hatctx\Psi}$ where $\Psi$ describes the prefix of the range that corresponds to the domain; in other words it describes the weakening of the domain $\Psi$ to the range $\Psi, \wvec{x{:}A}$. In general, we may weaken any given LF context with the declarations $\wvec{x{:}A}$. The generality of weakening substitutions is necessary to, for example, express that we can weaken a LF context $\unboxc{\psi}$.  We may write simply $\id$, if $|\wvec{x{:}A}| = 0$.

Weakening substitutions do not subsume the empty substitutions -- only the empty substitution that maps the empty context to a concrete context $x_n{:}A_n, \ldots, x_1{:}A_1$ can be expressed as $\wk{\cdot}$ where we annotate the weakening substitution with the empty LF context. For example, we would not be able to represent a substitution with the empty context as the domain and a context variable $\psi$ as the range using a weakening substitution. Our built-in weakening substitutions are also sometimes called \emph{renamings} as they only allow contexts to be extended to the right but they do not support arbitrary weakening of a LF context where we would insert a declaration in the middle (i.e. given a context $x{:}A_1, y{:}A_3$ we can weaken it to $x{:}A_1, w{:}A_2, y{:}A_3$).

From a de Bruijn perspective, the weakening substitution $\wk{\cdot}$ which maps the empty context to ${x_n{:}A_n, \ldots, x_1{:}A_1}$ can be viewed as a shift $n$. Further, as in the de Bruijn world, $\wk {x_n{:}A_n, \ldots, x_1{:}A_1}$ can be expanded and is equivalent to $\cdot, x_n, \ldots, x_1$.
While our theory lends itself to an implementation with de Bruijn indices, we formulate our type theory using a named representation of variables. This not only simplifies our subsequent definitions of substitutions, but also leaves open how variables are realized in an implementation.

 LF substitutions can also be built by extending a LF substitution $\sigma$ with a LF term
$M$. Following \citet{Nanevski:ICML05}, we do not store the domain of a substitution, but simply write them as a list of terms. We resurrect the domain of the substitution before applying it by erasing types from a context. To apply a substitution $\sigma$ to a term $M$ in an erased context $\hatctx \Psi$, we write $[\sigma / \hatctx\Psi]M$.

\SUBSTCLO{
Finally, we allow computations to be embedded into substitutions; such computations are useful, if want to state abstractly properties about substitutions, for example if we want to relate a context $\psi:{\tmctx}$ to another context $\phi:{\tmctx}$. Such properties naturally arise meta-theoretic proofs.
% such as that establish normalization or bisimulation results. Our type theory supports abstracting over LF substitutions and we can write recursive functions about them.
% Another example is the one we have seen earlier: translating terms. An important property of the translation is that translating a term $m$ of type $\cbox{x:\tm \vdash \tm}$ and then replacing $x$ with a value $v$ yields the same term as first replacing the variable $x$ with $v$ in $m$ and then doing the translation. To prove such a statement, we need to generalize the statement to consider not only single substitutions for a single variable $x$, but consider a term $m$ of type $\cbox{\unboxc{\psi} \vdash \tm}$ and a grounding substitution $s$ that provides values for each of the declarations in the context $\unboxc{\psi}$, i.e. $s$ would have type $\cbox{ \cdot \vdash \psi}$.
 We embed computations into substitutions using LF closures, written as $\unbox t \sigma$ following \cite{Cave:LFMTP13}. Intuitively, these closures describe when applying a LF substitution gets stuck. When we compose LF substitutions, we may get stuck when we compose $\sigma$ with a computation $t$ that promised to produce eventually a LF substitution. Hence we form a LF closure $\unbox{t}\sigma$. Further, we incorporate weakening. If we compose a weakening substitution with a computation $t$, we may be again stuck until we can proceed and know the result of $t$. These two observations lead to defining suspended substitutions as $\sclo {\hatctx\Psi} {\unbox t \sigma}$. The development follows closely ideas by \citet{Cave:LFMTP13}.
% and generalizes previous work to accommodate computations that are embedded instead of merely referring to meta-variables.

In addition to general contextual types $\Psi \vdash \Phi$ that describe a (contextual) LF substitution, we can also support the restricted form of contextual substitution types $\Psi \vdash_\# \Phi$ that are only inhabited by weakening substitutions. First-class support for weakening substitutions has proven useful in encoding formal systems using contextual LF (see for example \cite{Cave:LFMTP15}). Such renamings are crucial in many proofs such as for example normalization proofs using logical predicates \cite{Cave:LFMTP13,Cave:LFMTP15,Thibodeau:Howe16,POPLMarkReloaded}. For the subsequent development, we will concentrate on LF substitutions of contextual types $\Psi \vdash \Phi$.
}

\paragraph{Contextual Objects and Types}
We mediate between the LF and computation level using contextual types. We consider here general contextual LF terms that have type $\Psi \vdash A$, and contextual variable objects that have type $\Psi \vdash_\# A$. \SUBSTCLO{Similarly, we can also add  general contextual LF substitutions that have type $\Psi \vdash \Phi$ and contextual weakening substitutions that have type $\Psi \vdash_\# \Phi$. }

% Following prior work by \citet{Pientka:POPL08}, contextual LF terms pair the LF term with the context in which they are meaningful. Any contextual object that denotes a variable and has type $\Psi \vdash_\# A$ also has type $\Psi \vdash A$. The distinction between variables and general LF terms is useful when describing the possible cases we have to consider when recursing over an object of type $\Psi \vdash \tm$: we either have a variable object, i.e. an object of type $\Psi \vdash_\# \tm$, or we have a LF term built by a constructor $\clam$ or $\capp$. Similarly, contextual LF substitutions pair the LF substitution with its range.

\paragraph{Computations and their Types}
Computations are formed by computation-level functions, written as $\tmfn y t$, that are extensional, i.e. we can only observe their behaviour, applications, written as $t_1~t_2$,
boxed contextual objects, written as $\cbox C$, and  recursor, written as $\titer{\R}{}{\IH} \rappto\Psi~t$. We annotate the recursor with the typing invariant $\IH$ and recurse over the values computed by the term $t$. The LF context $\Psi$ describes the local LF-world in which the value computed by $t$ makes sense.
Finally, $\R$ describes the different branches we can take depending on the value computed by $t$. These branches can be generated generically following \citet{Pientka:TLCA15}. We focus on in the rest of the paper on the iterator over contextual objects of type $\cbox{\Psi \vdash \tm}$. In this case, we consider three different branches: 1) In the LF variable case,
$(\psi, p \mto t_v)$, the variable $p$ stands for a LF variable $\tm$ in the LF context $\psi$ and has type $\cbox{\psi \vdash_\# \tm}$ and $t_v$ is the body of the branch.
2) In the $\capp$-case, written as $(\psi, m, n, f_m, f_n \mto t_{\mathsf{app}})$, we pattern match on a LF term $\capp~m~n$ in the LF context $\psi$. The recursive calls are denoted by $f_m$ and $f_n$ respectively and $t_{\mathsf{app}}$ describes the body of the branch.
3) In the $\clam$-case, written as $({\psi, m, f_m \mto t_{\clam}})$, we pattern match on a LF term $\clam ~\lambda x. m$ where $m$ denotes a LF term of LF type $\tm$ in the LF context $\psi, x{:}\tm$. The recursive call is described by $f_m$ and the body of the branch is denoted by $t_{\clam}$.

\LONGVERSION{For illustration, we also include other branches to construct also recursors over contextual objects of type $\cbox{\Psi \vdash_\# \tm}$, i.e. variables of type $\tm$ in the LF context $\Psi$. In this case, we only consider two branches where $\Psi = \Psi', x{:}\tm$: in the first branch $({\psi \mto t_x})$ we pattern match against $x$ and $\psi$ will be instantiated with $\Psi'$; in the second branch, $({\psi, y, f_y \mto t_y})$, the LF variable we are looking for is not $x$, but is somewhere in $\Psi'$. In this case, $y$ denotes intuitively a LF variable that is not $x$ and has type $\cbox{\psi \vdash_\# \tm}$ and we will instantiate $\psi$ with $\Psi'$; $f_y$ is the recursive call on the smaller LF context $\Psi'$ and $t_y$ is the body of the branch. }
\LONGVERSION{We also include branches for recursing over LF substitution which have either the contextual type $\cbox{\Psi \vdash \Phi}$ or $\cbox{\Psi \vdash_\# \Phi}$. Here we consider two cases: either the LF substitution is empty then we choose the first branch, or it is of the shape $\sigma, m$ and $f_\sigma$ denotes the recursive call on the smaller LF substitution $\sigma$. }
% For the remainder of the paper, we only consider the recursor over objects of type $\cbox{\Psi \vdash \tm}$, however  our % development is generic and scales similarly to the other recursors.

Computation-level types consist of boxed contextual types, written as $\cbox T$, and dependent types, written as $ (y:\ann{\tau}_1) \arrow \tau_2$. We overload the dependent function space and allow as domain of discourse both computation-level types and the schema $\tmctx$ of LF context. To form both functions we use $\tmfn y t$. We also overload function application $t~s$ to eliminate dependent types $(y : \tau_1) \arrow \tau_2$ and and $(y : \tmctx) \arrow \tau_2$, although in the latter case $s$ stands for a LF context.

\cocon is a pure type system (PTS) with infinite hierarchy of predicative universes, written as $\univ k$ where $k \in \Nat$.
The universes are not cumulative. We use sorts $u,k \in \mathcal S$, axioms $\Ax = \{(\univ i,~\univ {i+1} \mid i \in \Nat \}$, and rules $\Ru = \{ (\univ i,~ \univ j,~\univ {\mathsf{max}(i,j)}) \mid i,j \in \Nat \}$. Universes add additional power.

% \subsection{Examples}
% To give a better intuition for \cocon we give a few example programs.

\begin{example}
To illustrate the syntax of \cocon, we write a program that counts the
number of constructors in a given $\tm$. The type of the function is
$\IH = (\psi : {\tmctx}) \arrow (m:\cbox{\unboxc{\psi} \vdash \tm}) \arrow \mathsf{nat}$.

\[
  \begin{array}{r@{~}lcl}
\mathsf{fn}~\psi \Rightarrow \mathsf{fn}~m \Rightarrow \mathsf{rec}^\IH (& \psi, p & \mto & 0 \\
\mid & \psi,  m, n, f_n, f_m & \mto & f_n + f_m + 1 \\
\mid & \psi, m, f_m & \mto & f_m ~+~ 1~) \rappto~\psi~m
  \end{array}
%
%\tmrec {\mathtt{nat}} {\psi, p \mto 0}
%{\psi,  m, n, f_n, f_m \mto f_n ~+~ f_m +1}
%{\psi, m, f_m \mto f_m  ~+~ 1}
\]

The first branch describes the variable case where $p$ describes a variable from the LF context $\unboxc{\psi}$ which has type $\cbox{\unboxc \psi \vdash_\# \tm}$. The second branch describes the application case; here $f_n$ and $f_m$ respectively denote the recursive calls and have type $\mathsf{nat}$. The third branch describes the lambda case where $f_m$ is the recursive call made on the body of the lambda-term.

\end{example}

\begin{example}
Next we implement $\mathsf{copy}$ of type
$\IH = (\psi:{\tmctx}) \arrow (m:\cbox{\unboxc{\psi} \vdash \tm}) \arrow \cbox{\unboxc{\psi} \vdash \tm}$.
We abbreviate the identity substitution $\wk{\unboxc{\psi}}$ by simply writing $\id$.

\[
  \begin{array}{r@{~}lcl}
\mathsf{fn}~\psi \Rightarrow \mathsf{fn}~m \Rightarrow \mathsf{rec}^\IH (& \psi, p & \mto & \cbox{\unboxc{\psi} \vdash \unbox p {\id}~} \\
\mid & \psi,  m, n, f_n, f_m & \mto & \cbox{\unboxc{\psi} \vdash \capp \unbox{f_n}{\id}~\unbox{f_m}{\id}~} \\
\mid & \psi, m, f_m & \mto & \cbox{\unboxc{\psi} \vdash \clam~ \lambda x.\unbox{f_m}{\id}~}~)~\rappto~\psi~m
  \end{array}
\]

In this example the input and output type depends on $\psi$; in particular the type of the recursive call $f_m$ in the lambda case will be $\cbox{\unboxc \psi, x{:}\tm \vdash \tm}$.
\end{example}

\LONGVERSION{
\begin{example}
We return the position of a LF variable in a LF context by writing a function $\mathsf{pos}$ that has type $\IH = (\psi : \tmctx) \arrow (x : \cbox{\psi \vdash_\# \tm}) \arrow \mathsf{nat}$.

\[
  \begin{array}{r@{~}lcl}
\mathsf{fn}~\psi \Rightarrow \mathsf{fn}~x \Rightarrow \mathsf{rec}^\IH (& \psi & \mto & 0 \\
\mid & \psi, y, f_y & \mto & 1 + f_y)~\psi~x
  \end{array}
\]

\end{example}

}

\subsection{LF Substitution Operation}
Our type theory distinguishes between LF-variables and computation-level variables. We have substitution operation for both. %While we write $[M/x]$ for replacing a LF-variable with a LF term $M$, we write $\{t/x\}$ for replacing a computation-level variable $x$ with a computation term $t$.
\LONGVERSION{Let's consider first a few examples to get a better intuition. Let's look at a few examples to get a better intuition.
% First, the presence of built-in weakening substitution $\wk_n$ makes it difficult to define a single substitution operation. Why is this the case? -- Recall that the LF substitution $\wk_n$ allows us to weaken a LF context $\Psi$ to a LF context $\Psi, x_n{:}A_n, \ldots, x_1{:}A_1$. Now, consider a term $M$ in a LF context $\Psi, x{:}A, y{:}B$ where $M$ may contain some computation $\unbox{t}{\wk_2}$ where $t$ returns a LF object in the LF context $\Psi$. When we replace $x$ with $N$ in $M$, we must also define how to apply the substitution $M$ for $x$ to $\unbox{t}{\wk_2}$. It would be wrong to simply return $\unbox{t}{\wk_2}$ and leave it untouched.
%

\paragraph{Examples 1} Consider the LF term $\capp \unbox{\cbox{x,y \vdash \capp~x~y}}{\wk{x, y}}~w$. This LF term is obviously well-typed in the (normal) LF context $x{:}\tm, y{:}\tm, w{:}\tm$ and applying the substitution $\wk{x,y}$ to $\capp~x~y$ is meaningful as $\wk{x,y}$ expands to $\cdot, x, y$. When we apply $\cdot, x, y$ to unbox $\cbox{x,y \vdash \capp~x~y}$, we resurrect the domain and apply $[\cdot, x, y ~/~ \cdot, x, y](\capp~x~y)$.

\paragraph{Examples 2} What about considering the $\alpha$-equivalent term $\unbox{\cbox{x',y' \vdash \capp~x'~y'}}{\wk{x,y}}$? --
Again we observe that $\wk{x,y}$ expands to $\cdot, x, y$; when we apply $\cdot, x, y$ to unbox $\cbox{x',y' \vdash \capp~x'~y'}$, we resurrect the domain and apply $[\cdot,x,y ~/~ \cdot, x', y'](\capp~x'~y')$ effectively renaming $x'$ and $y'$ to $x$ and $y$ respectively.

% \paragraph{Examples 3} What happens if we have instead of the LF context $x{:}\tm, y{:}\tm, w{:}\tm$  the equivalent LF context $\unboxc{\mathsf{copy}~\cbox{x{:}\tm,y{:}\tm}}, w{:}\tm$? Can we unbox $\cbox{x',y' \vdash \capp~x'~y'}$ with the substitution $\wk{\unboxc{\mathsf{copy}~\cbox{x{:}\tm,y{:}\tm}}}$? -- We resurrect the domain and apply $[\wk{\unboxc{\mathsf{copy}~\cbox{x{:}\tm,y{:}\tm}}} ~/~ x', y']( \capp~x'~y')$. However, applying the given substitution is (still) stuck and we simply return the original term, namely, $\unbox{\cbox{x',y' \vdash \capp~x'~y'}}{\wk{\unboxc{\mathsf{copy}~\cbox{x{:}\tm,y{:}\tm}}}}$. To apply the substitution, we must first normalize ${\unboxc{\mathsf{copy}~\cbox{x{:}\tm,y{:}\tm}}}$.
}

\begin{figure}[htb]
% Before we apply $\sigma$ to a term $M$ within the erased context
%$\hatctx{\Psi}$, we resurrect the domain of $\sigma$
  \[
  \begin{array}{l@{~}c@{~}l@{~\mathrm{where~}}l}
%    \multicolumn{3}{l}{\mbox{Definition for $[\sigma /\hatctx \Psi]M$ and $[\sigma / \hatctx \Psi]\sigma'$}}
%    \\[1em]
\lfs \sigma {\Psi} ( \lambda x. M  ) & = &   \lambda x. M' & \lfs {\sigma,x}{\Psi, x}(M) = M'\quad
{\mathrm{provided\ that\ }x \notin \FV(\sigma) ~\mathrm{and}~ x \not\in \hatctx{\Psi}}
    \\[0.25em]
 \lfs \sigma \Psi (M~N) & = & M'~N' & {\lfs \sigma \Psi (M) = M'~\mathrm{and}~\lfs \sigma \Psi (N)= N'}
    \\[0.25em]
\lfs \sigma \Psi (\unbox{ t} {\sigma'})& = & \unbox t {\sigma''} &
\lfs{\sigma}{\Psi}(\sigma') = \sigma''
    \\[0.25em]
\lfs \sigma \Psi  (x)  & = & M & \pos x~\lfs \sigma \Psi = M
    \\[0.25em]
 \lfs \sigma \Psi c     & = & \multicolumn{2}{l}{c}  \\[1em]
\lfs \sigma \Psi (\cdot)  & = & \multicolumn{2}{l}{\cdot}  \\[0.25em]
\lfs \sigma \Psi (\wk{\hatctx\Phi})  & = & \sigma' & \trunc_\Phi ~(\sigma / \hatctx\Psi) = \sigma'\\[0.25em]
\SUBSTCLO{\lfs \sigma \Psi (\sclo {\hatctx\Phi} {\unbox{t}{\sigma'}}) & = & \sclo {\hatctx\Phi} {\unbox{t}{\sigma''}} & \lfs \sigma \Psi (\sigma') = \sigma''  \\[0.25em]}
\lfs \sigma \Psi (\sigma', M) & = & \sigma'', M' &
\lfs \sigma \Psi (\sigma') = \sigma''~\mathrm{and}~\lfs \sigma \Psi (M)=M'
 \end{array}
  \]

\caption{Simultaneous LF Substitution for LF Objects }
\label{fig:lfsubst}
\end{figure}

We define LF substitutions uniformly using simultaneous substitution operation written as $\lfs \sigma \Psi M$ \LONGVERSION{(and similarly $\lfs   \sigma \Psi A$ and $\lfs \sigma \Psi K$) }(see Fig.~\ref{fig:lfsubst}). As LF substitutions are simply a list of terms, we need to resurrect the domain to lookup the instantiation for a LF variable $x$ in $\sigma$. This is always possible. When pushing the substitution through an application $M~N$, we simply apply it to $M$ and $N$ respectively. When pushing the LF substitution through a $\lambda$-abstraction, we extend it.
When applying $\sigma$ to a LF variable $x$, we retrieve the corresponding instantiation from $\sigma$ using the auxiliary function $\pos$ which works mostly as expected.
When applying the LF substitution $\sigma$ to the LF closure $\unbox{t}{\sigma'}$ we leave $t$ untouched, since $t$ cannot contain any free LF variables and compose $\sigma$ and $\sigma'$.

 \LONGVERSION{
%  We define lookup as follows:
\[
\begin{array}{l@{~}c@{~}l@{~}l}
% \pos x~(\sigma, M)_{\unboxc t}^{\hatctx \Phi} & = & \unbox{\cbox{\unboxc{t} \vdash x}}{(\sigma,M)}  & & &\\[0.5em]
\pos x~\lfs {\sigma, M}{\Psi, x} & = & M & \\[0.25em]
\pos x~\lfs {\sigma, M}{\Psi, y} & = & \pos x~ \lfs \sigma \Psi & \\[0.25em]
\pos x~\lfs {\wk{\hatctx\Psi}}{\Psi} & = & x &\mathrm{where}~x\in\hatctx\Psi \\
\SUBSTCLO{
  \pos x~\lfs {\sclo{\hatctx\Psi} {\unbox t\sigma}} \Psi & = &
    \lfs{\sigma}{\Phi}(M) &
    \mathrm{where}~t = \cbox{\hatctx\Phi \vdash \sigma'}
 \mathrm{and}~
    \trunc_\Psi ~ (\sigma' / \hatctx\Psi) = \sigma''
\\
 & & & ~\mathrm{and}~
    \pos x~\lfs{\sigma''}{\Psi} = M
\\[0.25em]
\pos x~\lfs {\sclo{\hatctx\Psi} {\unbox t \sigma}} \Psi & = & \unbox{\cbox{\hatctx\Psi',x \vdash x}}{\sclo {\hatctx\Psi',x} {\unbox t \sigma}} &
    \mathrm{where}~t \neq {\cbox{\hatctx\Phi \vdash \sigma'}}
  ~\mathrm{and}~\hatctx\Psi = \hatctx\Psi',x, \vec y
\\
}
\pos x~\lfs \sigma \Psi & = & \mbox{fails otherwise}
 \end{array}
  \]
}

Composition of LF substitution is straightforward.
%
% Here we need to be careful in defining the function $\pos$ when $\sigma$ stands for $\sclo {\hatctx\Psi} r$, since applying the substitution to a variable $x$ is stuck, provided that $r$ is not . We therefore create a LF closure on the LF term level that allows us to proceed, once the LF closure $r$ is known.
%
% Further, the lookup can only succeed, if the domain is a LF context that is structured the same way as $\sigma$ -- however the domain could also contain a computation $\unboxc{t}$ and the structure of the domain and $\sigma$ may not yet match. In this case, we also return a closure.
%
\SUBSTCLO{
However, we need to be careful in handling substitution closures $\sclo{\hatctx\Psi} \unbox{t}\sigma$, as we need to guarantee that we make progress, if we can. For example, if $t =  \cbox{\hatctx\Phi \vdash \sigma'}$, then we want continue to look up the variable $x$ in $\sigma'$. To ensure we work with the same domain $\hatctx\Psi$, we first truncate $\sigma'$ dropping instantiations that do not play a role. Finally, we apply $\sigma$ to the term that the variable $x$ is mapped to. If $t$ cannot be further unfolded, then we create a closure. If $r \neq  \unbox {\cbox{\hatctx\Phi \vdash \sigma'}} {\sigma}$, we create a LF closure on the LF term level. Intuitively, we proceed, once the LF closure is known. \SHORTVERSION{The full definition can be found in the accompanying long version and is omitted due to lack of space.}
}
%
%
% $\sigma''$ where $\sigma'' = \trunc_\Psi ~ (\lfs {\sigma}{\Phi}(\sigma') / \hatctx\Psi)$
%
% Composing substitutions is also slightly tricky due to substitution closures and weakening.
When we apply $\sigma$ to $\wk{\hatctx\Psi}$, we truncate $\sigma$ and only keep those entries corresponding to the LF context $\Psi$. Recall that $\wk{\hatctx\Psi}$ provides a weakening substitution from a context $\Psi$ to another context $\Psi, \wvec{x{:}A}$ where $|\wvec{x{:}A}| = n$. The simultaneous substitution $\sigma$ provides mappings for all the variables in $\hatctx{\Psi}, \vec{x}$. The result of $[\sigma / \hatctx \Psi, \vec{x}]\wk{\hatctx\Psi}$ then should only provide mappings for all the variables in $\Psi$. We use the operation $\trunc$ to remove irrelevant instantiations. \SHORTVERSION{Its definition is straightforward and is given in the long version.}
\LONGVERSION{
The definition of truncation is straightforward.
\[
\begin{array}{l@{~}c@{~}l@{~}l}
\trunc_\Psi ~ (\sigma / \hatctx \Psi) & = & \sigma\\
\trunc_\Psi ~ (\sigma, M / \hatctx \Phi, x) & = & \trunc_\Psi ~(\sigma / \hatctx \Phi)\\
\trunc_\Psi ~ (\wk{(\hatctx\Psi, \vec{x})} / \hatctx \Psi, \vec{x}) & = & \wk{\hatctx\Psi} \\
\SUBSTCLO{\trunc_\Psi ~(\sclo {\hatctx\Psi, \vec{x}} {\unbox t \sigma}  / \hatctx \Psi, \vec{x}) & = & \sclo {\hatctx\Psi} {\unbox t \sigma}\\}
\trunc_\Psi ~(\dot / \cdot) & = & \mbox{fails}~\Psi\neq \cdot
\end{array}
\]
}

\subsection{Computation-level Substitution Operation}

The computation-level substitution operation $\{t/x\}t'$ traverses the computation $t'$ and replaces any free occurrence of the computation-level variable $x$ in $t'$ with $t$\LONGVERSION{\/(see Fig.~\ref{fig:csub})}. The interesting case is $\{t/x\}\cbox{C}$. Here we push the substitution into $C$ and we will further apply it to objects in the LF layer. When we encounter a closure such as $\unbox{t''}{\sigma}$, we continue to push it inside $\sigma$ and also into $t''$.
% \medskip\medskip
%
% Another interesting case
% is when we encounter a LF context $\Phi$, in particular when we replace a computation-level variable $y$ (or $\psi$) with a LF context $\Psi$.
When substituting a LF context $\Psi$ for the variable $\psi$ in a context $\Phi$, we rename the declarations present in $\Phi$. This is a convention. It would equally work to rename the variable declarations in $\Psi$. For example,
in $\{(x{:}\tm, y{:}\tm) / \psi\}(\hatctx \psi, x \vdash \clam \lambda y. \capp x~y~)$, we rename the variable $x$ in $\hatctx\psi, x$ and replace $\psi$ with $(x{:}\tm, y{:}\tm)$ in $(\hatctx\psi, w \vdash \clam \lambda y. \capp w~y)$. This results in $x, y, w \vdash \clam \lambda y. \capp w~y$. When type checking this term we will eventually also $\alpha$-rename the $\lambda$-bound LF variable $y$.

% Note that whether we rename the declarations in the context $\Psi$, i.e. the LF context that we substitute for $\psi$, or we rename the declarations in $\Phi$, i.e. the LF context into which we substitute, does not matter.
\LONGVERSION{
\begin{figure}
  \centering
\[
  \begin{array}{l@{~=~}l@{~\qquad}l}
\multicolumn{3}{l}{\mbox{Computation-level Substitution for Terms}}\\[0.5em]
   \{t/y\}(\tmfn x t') & \tmfn x \{t/y\}t'    & \mathrm{provided~} x \not\in \FV(t) \\[0.1em]
   \{t/y\}(t_1~t_2)    & \{t/y\}t_1 ~ \{t/y\}t_2 & \\[0.1em]
   \{t/y\}(\titer{\R}{}{\IH} \rappto \Psi~t) & \titer {\{t/y\}b_1 \mid \{t/y\}b_k}{}{\IH} \rappto \{t/y\}\Psi~\{t/y\}t & \mathrm{where~} \R = b_1 \mid \ldots \mid b_n \\[0.1em]
   \{t/y\}\cbox C & \cbox {\{t/y\} (C)} & \\[0.1em]
   \{t/y\} (y) & t & \\[0.1em]
   \{t/y\} (x) & x & \mathrm{where~} x \neq y\\[0.1em]
   \{t/y\} (\cdot) & \cdot & \\[0.1em]
   \{t/y\} (\Psi, x:A) & \{t/y\}\Psi, x:\{t/y\}A & \mathrm{provided~} (\hatctx{\Psi},x) \not\in \FV(t)\\[0.5em]
%\multicolumn{3}{l}{\mbox{Computation-level Substitution for Recursors}}\\[0.5em]
%   \{t/y\} (\textsf{rec}^{\IH} (\vec{b})) &
%         \textsf{rec}^\IH (\wvec{\{t/y\}b}) &
% \\[0.5em]
\multicolumn{3}{l}{\mbox{Computation-level Substitution for Branches}}\\[0.5em]
   \{t/y\} (\vec x \mto t') & (\vec x \mto \{t/y\}t') &
   \mathrm{provided~} \vec x \not \in \FV(t)
\\[0.5em]
\multicolumn{3}{l}{\mbox{Computation-level Substitution for Contextual Objects}}\\[0.5em]
   \{t/y\}(\hatctx{\Psi} \vdash M) & \{t/y\}\hatctx\Psi \vdash \{t/y\}M & \mathrm{provided~} \hatctx\Psi\not\in \FV(t) \\[0.1em]
   \{t/y\}(\hatctx{\Psi} \vdash \sigma) & \{t/y\}\hatctx\Psi \vdash \{t/y\}\sigma & \mathrm{provided~} \hatctx\Psi\not\in \FV(t) \\[0.1em]
\multicolumn{3}{l}{\mbox{Computation-level Substitution for LF Objects}}\\[0.5em]
   \{t/y\}(\lambda x.M) & \lambda x. \{t/y\}M & \\[0.1em]
   \{t/y\}(M~N) & \{t/y\}M ~ \{t/y\}N & \\[0.1em]
   \{t/y\}(\unbox {t'}{\sigma})   & \unbox {\{t/y\}t'}{\{t/y\}\sigma} \\[0.1em]
   \{t/y\}(\const c) & \const c & \\[0.1em]
   \{t/y\}(x)        & x & \\[0.1em]
% \multicolumn{3}{l}{\mbox{Computation-level Substitution for LF Closures}}\\[0.5em]
%   \csubclo t y (\unbox {t'} {\sigma}) & \unbox {\{t/y\}t'}{\{t/y\}\sigma} & \\[0.5em]
\multicolumn{3}{l}{\mbox{Computation-level Substitution for LF Substitutions}}\\[0.5em]
   \{t/y\}(\cdot) & \cdot & \\[0.1em]
   \{t/y\}(\sigma, M) & \{t/y\}\sigma,~\{t/y\}M & \\[0.1em]
   \{t/y\}(\wk{\hatctx\Psi})   & \wk{\{t/y\}({\hatctx\Psi})} & \\[0.1em]
\SUBSTCLO{   \{t/y\}(\sclo {\hatctx\Psi} r) & \sclo {\{t/y\}({\hatctx\Psi})} {\csubclo t y {~}r}}
  \end{array}
\]
  \caption{Computation-level Substitution}\label{fig:csub}
\end{figure}
}

\LONGVERSION{
\begin{figure}[htb]
  \centering
\[
\begin{array}{c}
\multicolumn{1}{p{13.5cm}}{% \mbox{LF Kinding}~~
\fbox{$\Gamma ; \Phi \vdash A : \lftype$}~\mbox{and}~
\fbox{$\Gamma ; \Phi \vdash K : \lfkind$}~~LF type $A$ is well-kinded
  and LF kind $K$ is well-formed}
\\[1em]
\infer{\Gamma; \Psi \vdash \const a: K}{\Gamma \vdash \Psi : \ctx & \const a{:}K \in \Sigma}
\quad
\infer{\Gamma ; \Psi \vdash P~M : [M/x]K}
{\Gamma ; \Psi \vdash P : \Pityp x A K & \Gamma; \Psi \vdash M : A}
\quad
\infer{\Gamma ; \Psi \vdash \Pityp x A B : \lftype}
{\Gamma ; \Psi \vdash A : \lftype & \Gamma ; \Psi, x{:}A \vdash B : \lftype}
\\[1em]
\infer{\Gamma ; \Psi \vdash A : K}
{\Gamma ; \Psi \vdash A : K'  & \Gamma ; \Psi \vdash K' \equiv K : \lfkind}
\quad
\infer{\Gamma ; \Psi \vdash \lftype : \lfkind}{\Gamma \vdash \Psi : \ctx}
\quad
\infer{\Gamma ; \Psi \vdash \Pityp x A K: \lfkind}
{\Gamma ; \Psi \vdash A : \lftype & \Gamma ; \Psi, x{:}A \vdash K : \lfkind}
% \\[0.5em]
% \hline
% \\[-0.5em]
\end{array}
\]
\caption{Kinding Rules for LF Types}\label{fig:lfkinding}
\end{figure}
}

\begin{figure}[htb]
  \centering
\[
\begin{array}{c}
\LONGVERSION{ \multicolumn{1}{p{13cm}}{% LF Typing:~~
 \fbox{$\Gamma ; \Psi \vdash_\# M : A$}~
 ~~LF term $M$ of LF type $A$ in the LF context $\Psi$ and context $\Gamma$ describes a variable}
 \\[0.3em]
\infer{\Gamma ; \Psi\vdash_\# M :  A}
  {\Gamma ; \Psi \vdash M \equiv x : A & \Psi(x) = A
}
\qquad
\infer{\Gamma ; \Psi \vdash_\# M : A}
      {\Gamma ; \Psi \vdash_\# M : B &
       \Gamma ; \Psi \vdash B \equiv A : \lftype }
\\[0.5em]
\infer{\Gamma ; \Psi \vdash_\# M : A}
{\Gamma ; \Psi \vdash M \equiv \unbox{t}{\sigma} : \lfs {\sigma}\Phi (A) &
 \Gamma \vdash t : [\Phi \vdash_\# A] &
 \Gamma; \Psi \vdash_\# \sigma : \Phi }
\\[0.5em]
}
\multicolumn{1}{p{13cm}}{% LF Typing:~~
\fbox{$\Gamma ; \Psi \vdash M : A$}~
~~LF term $M$ has LF type $A$ in the LF context $\Psi$ and context $\Gamma$}
\\[0.5em]
 \infer{\Gamma ; \Psi \vdash \lambda x.M : \Pityp x A B}
         {\Gamma ; \Psi, x{:}A \vdash M : B}
\quad
\infer{\Gamma ; \Psi \vdash \unbox t {\sigma} : \lfs \sigma \Phi A}
         {\Gamma \vdash t : [\Phi \vdash A]  ~\mbox{or}~\Gamma \vdash t : [\Phi \vdash_\# A] &
          \Gamma; \Psi \vdash \sigma : \Phi}
% \quad
% \infer{\Gamma ; \Psi \vdash M : A}{\Gamma ; \Psi \vdash_\# M : A}
\\[0.5em]
\infer{\Gamma ; \Psi \vdash x : A}{
\Gamma \vdash \Psi : \ctx&
\Psi(x) = A }
\qquad
\infer
      {\Gamma ; \Psi \vdash M : A}
      {\Gamma ; \Psi \vdash M : B &
%       \Gamma \vdash \Psi \equiv \Phi : \ctx &
       \Gamma ; \Psi \vdash B \equiv A : \lftype }
%\quad
%\infer[!!!]{\Gamma ; \Psi \vdash x_i : [\wk_{n-i}/x_1, \ldots x_n]A}{\Gamma \vdash \Psi : \ctx& \Psi = \unboxc{t},x_1{:}A_1, \ldots, x_n{:}A_n & i %\leq n }
\\[0.5em]
\infer{\Gamma ; \Psi \vdash M~N : [N/x]B}
      {\Gamma ; \Psi \vdash M : \Pityp x A B &
       \Gamma ; \Psi \vdash N : A}
\quad
\infer{\Gamma ; \Psi \vdash \const{c} : A}
      {\Gamma\vdash \Psi : \ctx & \const{c}:A \in \Sigma}
\\[1em]
\multicolumn{1}{p{13.5cm}}{% \mbox{LF Substitution}~~
\fbox{$\Gamma ; \Phi \vdash \sigma : \Psi$}~LF substitution $\sigma$ provides a mapping from the LF context $\Psi$ to $\Phi$}
\\[0.3em]
\infer{\Gamma ;\Psi, \wvec{x{:}A} \vdash \wk{\hatctx\Psi} : \Psi}
{\Gamma\vdash \Psi, \wvec{x{:}A} : \ctx }
\quad
 \infer{\Gamma ; \Phi \vdash \cdot : \cdot}{\Gamma \vdash \Phi :\ctx }
\quad
\infer{\Gamma ; \Phi \vdash \sigma, M : \Psi, x{:}A}
      {\Gamma ; \Phi \vdash \sigma : \Psi &
       \Gamma ; \Phi \vdash M : \lfs \sigma \Psi A}
\SUBSTCLO{
\\[0.3em]
\infer{\Gamma ; \Phi \vdash \sclo {\hatctx\Psi} {\unbox{t}{\sigma}} : \Psi}
{\Gamma \vdash t : \cbox{\Psi' \vdash \Psi, \wvec{x{:}A}}
 & \Gamma ; \Phi \vdash \sigma : \Psi'
&  \Gamma\vdash \Psi, \wvec{x{:}A} : \ctx }
\\[0.3em]}
\LONGVERSION{\\[1em]
 \multicolumn{1}{p{13cm}}{% LF Typing:~~
 \fbox{$\Gamma ; \Psi \vdash_\# \sigma : \Phi$}~
 ~~LF substitution $\sigma$ from LF context $\Phi$ to the LF context $\Psi$ is a weakening subst.}
 \\[0.5em]
\infer{\Gamma ; \Psi, \wvec{x{:}A} \vdash_\# \sigma : \Psi}
  {\Gamma ; \Psi, \wvec{x{:}A} \vdash \sigma \equiv \wk{\hatctx\Psi} : \Psi
}
\\[0.3em]
\SUBSTCLO{\infer{\Gamma ; \Psi  \vdash_\# \sigma : \Phi}
{\Gamma ; \Psi \vdash \sigma \equiv \sclo {\hatctx\Phi} {\unbox{t}{\wk{\Phi,\vec{x}}}} : \Phi &
 \Gamma \vdash t : \cbox{\Psi \vdash_\# \Phi, \wvec{x{:}A}} &
 \Gamma ; \Psi \vdash \wk{\hatctx\Psi} : \Phi,\wvec{x{:}A}}
}}
\end{array}
\]
\caption{Typing Rules for LF Terms and LF Substitutions}\label{fig:lftyping}
\end{figure}

\subsection{LF Typing}
We concentrate here on the typing rules for LF terms, LF substitutions and LF contexts (see Fig.~\ref{fig:lftyping}). The rules for LF types and kinds are straightforward (see Fig.~\ref{fig:lfkinding}). All of the typing rules have access to a LF signature $\Sigma$ which we omit to keep the presentation compact.
%Most of the typing rules for LF terms (see Fig.~\ref{fig:lftyping}) follow \citet{Cave:POPL12}.
In typing rules for LF abstractions $\lambda x.M$ we simply extend the LF context and check the body $M$. When we encounter a LF variable, we look up its type in the LF context. The conversion rule is important and subtle. We only allow conversion of types -- conversion of the LF context is not necessary, as we do not allow computations to appear directly in the LF context and we can keep part of the LF context abstract.
However, we deviate from \citet{Cave:POPL12} in the rule that allows us to embed computations into LF terms. Given a computation $t$ that has type $\cbox {\Psi \vdash A}$ or $\cbox{\Psi \vdash_\# A}$, we can embed it into the current LF context $\Phi$ by forming the closure $\unbox t {\sigma}$ where $\sigma$ provides a mapping for the variables in $\Psi$. This formulation generalizes previous work which only allowed variables declared in $\Gamma$ to be embedded in LF terms. This enforced a strict separation between computations and LF terms.
% which significantly simplified the meta-theoretic development such as proving normalization \cite{Pientka:TLCA15}, as the expressive power of these systems corresponds to G{\"o}del's system T. In our work, we change the game by allowing computations to be embedded. While this might seem a small syntactic change, it has significant impact on the expressiveness of the system and its meta-theoretic properties. A LF term $\unbox{t}\sigma$ is well typed, if the computation $t$ has type $\cbox{\Phi \vdash A}$ and $\sigma$ allows us to move from the LF context $\Phi$ to the present LF context $\Psi$.
%
%
%\LONGVERSION{In our LF typing judgments we distinguish between $\Gamma ; \Psi \vdash M : A$ from $\Gamma ; \Psi \vdash_\# M : A$: the former describes LF terms that have type $A$ in the LF context $\Psi$ and computation-level context $\Gamma$; the latter only describes LF terms that denote LF variables from the LF context $\Psi$, i.e. either $M$ is equivalent to a variable $x$ in $\Psi$ or $M$ is the result of a computation $t$ that guarantees the return of a LF variable  in $\Psi$. }
%
The typing rules for LF substitutions are as expected.
\SUBSTCLO{The rule for suspended composition of substitution $\sclo {\hatctx\Phi}{\unbox{t}{\sigma}}$ with domain $\Psi$  requires some care.  Here we ensure that $t$ computes a LF substitution that maps LF variables from $\Psi, \wvec{x{:}A}$ to the LF context $\Psi'$. Hence, we allow $t$ to provide a ``bigger'' substitution than required; these additional instantiations will be truncated when we know what LF substitution the term $t$ computes. The ``stuck'' LF substitution $\sigma$ must map LF variables from $\Psi'$ to the final target LF context $\Phi$. As for LF terms, we distinguish between general LF substitutions and LF renamings (or weakenings) that guarantee that we only map LF variables to LF variables.
}

\SHORTVERSION{The typing rules for LF contexts simply analyze the structure of a LF context. When we reach the head we either encounter a empty LF context or an context variable $y$ which must be declared in the computation-level context $\Gamma$. They can be found in the long version. }
\LONGVERSION{Last, we consider the typing rules for LF contexts (see Fig.~\ref{fig:lfctxtyping}). They simply analyze the structure of a LF context. When we reach the head we either encounter a empty LF context or an context variable $y$ which must be declared in the computation-level context $\Gamma$.

\begin{figure}[htb]
  \centering
\[
\begin{array}{c}
\multicolumn{1}{l}{% \mbox{Well-formed LF Context}~~
\fbox{$\Gamma \vdash \Psi : \ctx$}~\mbox{LF context $\Psi$ is a well-formed}} \\[0.3em]
\infer{\Gamma \vdash \cdot : \ctx}{\vdash \Gamma}
\quad
% \infer{\Gamma \vdash \psi : \ctx}{\psi: \cbox{\tmctx} \in \Gamma&\vdash \Gamma}%needed to prove theo 3.1
\infer{\Gamma \vdash \unboxc{y} : \ctx}{\Gamma(y) = \tmctx & \vdash \Gamma}
\quad
\infer{\Gamma \vdash \Psi, x{:}A : \ctx}
{\Gamma \vdash \Psi : \ctx & \Gamma ; \Psi \vdash A : \lftype}
\end{array}
\]
\caption{Typing Rules for LF Contexts}\label{fig:lfctxtyping}
\end{figure}
}

% The previous rule for meta-variables and parameters can be derived by using a variable $y:\cbox{\Psi \vdash A}$ or $y:\cbox{\Psi \vdash_\# A}$ declared in $\Gamma$ for the computation $t$.

% Note that we write the closure as $\unbox t {\sigma}$ with the LF substitution in postfix, while we write $[\sigma / \hatctx \Psi]A$ and $[\sigma / \hatctx \Psi]M$ resp. for applying a LF substitution $\sigma$ to the LF type $A$ and resp. LF term $M$.

% To define recursors for objects of type $\Psi \vdash \tm$,  we need to distinguish between LF objects that denote a LF variable and those that describe a LF term.
% Since we do not keep terms in normal form, we therefore use the LF typing judgment $\Gamma ; \Psi \vdash_\# M : A$ as a restricted version of the LF typing judgment $\Gamma ; \Psi \vdash M : A$. The restricted version removes the rule for constants and enforces that computations return an object of type $\cbox {\Phi \vdash_\# A}$.

% We have two different substitution operations in \cocon. The first
% one usually written as $[M/x]N$ or $[M/x]A$ replaces a LF variable
% $x$ with the LF term $M$.

\subsection{Definitional LF Equality}
We now consider definitional LF equality. We omit the transitive closure rules as well as congruence rules, but concentrate here on the reduction and expansion rules. For LF terms, equality is $\beta\eta$. In addition, we can reduce $\unbox{\Psi \vdash M}{\sigma}$ by simply applying $\sigma$ to $M$.

% For LF contexts, $\unboxc{\cbox{\Psi}}$ is equivalent to the context $\Psi$.
For LF substitutions, we take into account that weakening substitutions are not unique.
 For example, the substitution $\wk\cdot$ may stand for a mapping from the empty context to another LF context; so does the empty  substitution $\cdot$.
Similarly, $\wk{x_1, \ldots x_n}$ is equivalent to $\wk{\cdot}, x_1, \ldots, x_n$.

% Furthermore,

\begin{figure}[h]
  \centering
  \[
  \begin{array}{c}
    \multicolumn{1}{p{13cm}}{\fbox{$\Gamma ; \Psi \vdash M \equiv N : A$}\quad LF Term $M$ is definitionally equal to LF Term $N$ at LF type
    $A$}\\[0.75em]
    \infer{\Gamma ; \Psi \vdash M \equiv \lambda x.M~x : \Pityp x A B}{\Gamma ; \Psi \vdash M : \Pityp x A B}
    \quad
    \infer{\Gamma ; \Psi \vdash (\lambda x.M_1)~M_2 \equiv [M_2/x]M_1 : [M_2/x]B}
          {\Gamma ; \Psi, x{:}A \vdash M_1 : B & \Gamma ; \Psi \vdash M_2 : A}
    \\[0.75em]
\infer
  {\Gamma ; \Psi \vdash \unbox{\cbox{\hatctx{\Phi} \vdash N}}{\sigma} \equiv \lfs \sigma\Phi N :\lfs \sigma\Phi A}
          {\Gamma ; \Phi \vdash N : A & \Gamma ; \Psi \vdash \sigma : \Phi}
\\[1em]
\multicolumn{1}{p{13cm}}{\fbox{$\Gamma ; \Psi \vdash \sigma \equiv \sigma' : \Phi$}~\quad LF Substitution $\sigma$ is definitionally equal to LF Substitution $\sigma'$ }\\[1em]
% \infer{\Gamma ; \Psi \vdash \sigma \equiv \sigma' : \cdot}{\Gamma \vdash \Psi : \ctx}
 \infer{\Gamma ; \Psi \vdash \wk{\cdot} \equiv \cdot : \cdot}{
 \Gamma \vdash \Psi : \ctx }
 \quad
 \infer{\Gamma ; \Phi, x{:}A, \wvec{y{:}B} \vdash \wk{\hatctx\Phi,x} \equiv \wk{\Phi}, x: \Phi, x{:}A}
 {
 \Gamma \vdash \Phi, x{:}A, \wvec{y{:}B} : \ctx
 }
% \infer{\Gamma ; \Psi \vdash \sigma \equiv \sigma' : \Phi, x{:}A}
% {\Gamma ; \Psi \vdash \lfs \sigma {\Phi, x}(x) = \lfs {\sigma'} {\Phi,x}(x) : \lfs \sigma {\Phi, x}(\lfs {\wk{\hatctx\Phi}}\Phi A)}
\SUBSTCLO{\\[1em]
\infer% [\FV(\sigma') \subseteq \Psi']
{\Gamma ; \Phi \vdash \sclo {\hatctx\Psi} {\unbox{\cbox{\hatctx {\Phi'} \vdash \sigma'}}{\sigma}}  \equiv   \lfs\sigma {\Phi'}~(\lfss {\sigma'}{\hatctx{\Psi},\vec x}  ~\wk\Psi) : \Psi}
{
\Gamma ; \Phi' \vdash \sigma' : \Psi, \wvec{x{:}A} & \Gamma ; \Phi \vdash \sigma : \Phi' }
\\[1em]
%\infer% [\FV(\sigma') \subseteq \Psi']
%{\Gamma ; \Phi \vdash \sclo {\cdot} r \equiv  \cdot : \cdot}
%{\Gamma \vdash \Phi : \ctx}
%\\[0.5em]
\infer
{\Gamma ; \Phi \vdash \sclo {\hatctx\Psi,x} r \equiv   \sclo{\hatctx\Psi} r, \unbox{\cbox{\hatctx\Psi,x \vdash x}} {\sclo {\hatctx\Psi,x} r} : \Psi,x{:}A}
{r = \unbox{t}{\sigma} \qquad t \neq \cbox{\hatctx {\Phi'} \vdash \sigma'} \qquad \Gamma ; \Phi \vdash \sigma : \Phi' \quad \Gamma \vdash t : \cbox{\Phi' \vdash \Psi, x{:}A,\wvec{x{:}A}}
}}
\LONGVERSION{\\[1em]
\infer{\Gamma ; \Psi \vdash \sigma, M \equiv \sigma', N : \Phi, x{:}A}
{\Gamma ; \Psi \vdash \sigma \equiv \sigma' : \Phi &
 \Gamma ; \Psi \vdash M \equiv N : \lfs \sigma \Phi A }
}
\end{array}
\]
\caption{Reduction and Expansion for LF Terms and LF Substitutions}
\end{figure}

% \subsection{Definitional Equality on Contextual Objects and Types}
\subsection{Contextual LF Typing and Definitional Equivalence}

We describe typing and equivalence of contextual objects in Fig.~\ref{fig:ctxtyping}. This is standard.
% Contextual objects are either contextual LF terms $\Psi \vdash M$, contextual LF substitutions $\Psi \vdash \Phi$.
%We omit here the rules for LF variables ($\Psi \vdash_\# \tm$) \SUBSTCLO{and LF renaming substitutions ($\Psi \vdash_\# \Phi$)} to keep the presentation compact.
%
We lift definitional equality on LF terms to
contextual objects.
% We omit the rules for $\Psi \vdash_\# A$ and $\Psi \vdash_\# \Phi$ since these are also covered by the more general rules given. We also omit reflexivity and transitivity and congruence rules.
Note that we overload notation, writing $\hatctx{\Psi}$ for a LF context $\Psi$ where we have already erased type declarations, but we sometimes abuse notation and write $\hatctx{\Psi}$ for taking a LF context $\Psi$ and erasing its type information.

\begin{figure}[htb]
  \centering
\[
\begin{array}{c}
\LONGVERSION{
\multicolumn{1}{p{13.5cm}}{\fbox{$\Gamma \vdash T$}\quad Contextual  Type $T$ is well-kinded}
\\[1.5em]
\infer{\Gamma \vdash (\Psi\vdash A)}{
% \Gamma \vdash \Psi : \ctx &   % -bp we get this premise from Lemma 4.2: Wellformedness of LF Context
  \Gamma ; \Psi \vdash A : \lftype }
\LONGVERSION{\quad
\infer{\Gamma \vdash (\Psi\vdash_\# A)}{
% \Gamma \vdash \Psi : \ctx &  % -bp we get this premise from Lemma 4.2: Well-formedness of LF Context
  \Gamma ; \Psi \vdash A : \lftype }}
\SUBSTCLO{\\[1em]
\infer{\Gamma \vdash (\Psi\vdash \Phi)}{
\Gamma \vdash \Psi : \ctx &
\Gamma \vdash  \Phi : \ctx }
% \quad\quad
%\infer{\Gamma \vdash \tmctx }{\vdash \Gamma}
\qquad
\infer{\Gamma \vdash (\Psi\vdash_\# \Phi)}{
\Gamma \vdash \Psi : \ctx &
\Gamma \vdash \Phi : \ctx}}
\\[1em]
}
\multicolumn{1}{p{13.5cm}}{\fbox{$\Gamma \vdash C : T$}\quad Contextual
  Objects $C$ has Contextual Type $T$ in context $\Gamma$}
\\[0.5em]
\infer{\Gamma \vdash (\hatctx{\Psi} \vdash M) : (\Psi \vdash A)}
{% \Gamma \vdash \hatctx{\Psi} \equivhat \hatctx{\Phi}  &
  \Gamma ; \Psi \vdash M : A}
\LONGVERSION{\quad
\infer{\Gamma \vdash (\hatctx{\Psi} \vdash M) : (\Psi \vdash_\# A)}
{% \Gamma \vdash \hatctx{\Psi} \equivhat \hatctx{\Phi}  &
  \Gamma ; \Psi \vdash_\# M : A}}
\SUBSTCLO{\\[1em]
\infer{\Gamma \vdash (\hatctx{\Psi} \vdash \sigma) : (\Psi \vdash \Phi)}
{\Gamma ; \Psi \vdash \sigma : \Phi}
\quad
\infer{\Gamma \vdash (\hatctx{\Psi} \vdash \sigma) : (\Psi \vdash_\# \Phi)}
{\Gamma ; \Psi \vdash_\# \sigma : \Phi}}
\\[0.75em]
\multicolumn{1}{p{13.5cm}}{\fbox{$\Gamma \vdash C \equiv C' : T$}\quad
  Definitional Equivalence between Contextual Object}
\\[0.5em]
    \infer{\Gamma \vdash (\hatctx{\Psi} \vdash M) \equiv (\hatctx{\Psi} \vdash N) : (\Psi \vdash A)}
          {% \Gamma \vdash \hatctx{\Psi_1} \equivhat \hatctx{\Psi_2} &
           \Gamma ; \Psi \vdash M \equiv N : A}
\SUBSTCLO{\quad
    \infer{\Gamma \vdash (\hatctx{\Psi} \vdash \sigma_1) \equiv (\hatctx{\Psi} \vdash \sigma_2) : (\Psi \vdash \Phi)}
          {% \Gamma \vdash \hatctx{\Psi_1} \equivhat \hatctx{\Psi_2} &
           \Gamma ; \Psi \vdash \sigma_1 \equiv \sigma_2 : \Phi}}
\LONGVERSION
{\\[1em]
\multicolumn{1}{p{13.5cm}}{\fbox{$\Gamma \vdash T \equiv T' $}\quad
  Definitional Equivalence between Contextual Types}
\\[0.5em]
\infer{\Gamma \vdash (\Psi \vdash A) \equiv (\Phi \vdash B)}
      {\Gamma \vdash \Psi \equiv \Phi : \ctx
     & \Gamma ; \Psi \vdash A \equiv B : \lftype}
\qquad
\SUBSTCLO{
\infer{\Gamma \vdash (\Psi \vdash \Psi') \equiv (\Phi \vdash \Phi')}
      {\Gamma \vdash \Psi \equiv \Phi : \ctx
     & \Gamma \vdash \Psi' \equiv \Phi' : \ctx}
}
}
\LONGVERSION{\\[1em]
\infer{\Gamma \vdash (\Psi \vdash_\# A) \equiv (\Phi \vdash_\# B)}
      {\Gamma \vdash \Psi \equiv \Phi : \ctx
     & \Gamma ; \Psi \vdash A \equiv B : \lftype}
\qquad
\SUBSTCLO{\infer{\Gamma \vdash (\Psi \vdash_\# \Psi') \equiv (\Phi \vdash_\# \Phi')}
      {\Gamma \vdash \Psi \equiv \Phi : \ctx
     & \Gamma \vdash \Psi' \equiv \Phi' : \ctx}}
}
\end{array}
\]
  \caption{Typing and Equivalence Rules for Contextual Objects}
  \label{fig:ctxtyping}
\end{figure}

\subsection{Computation Typing}
We describe well-typed computations in Fig.~\ref{fig:comptyping} using the typing judgment $\Gamma \vdash t : \tau$.
Computations only have access to computation-level variables declared in the context $\Gamma$.
To avoid duplication of typing rules, we overload the typing judgment and write $\ann\tau$ instead of $\tau$, if the same judgment is used to check that a given LF context is of schema $\tmctx$. For example, to ensure that $(y : \ann\tau_1) \arrow \tau_2$ has kind $u_3$, we check that $\ann\tau_1$ is well-kinded. For compactness, we abuse notation writing $\Gamma \vdash \tmctx : u$ although the schema $\tmctx$ is not a proper type whose elements can be computed.
In the typing rules for computation-level (extensional) functions, the input to the function which we also call domain of discourse may either be of type $\tau_1$ or $\tmctx$. To eliminate a term $t$ of type $(y : \tau_1) \arrow \tau_2$, we check that $s$ is of type $\tau_1$ and then return $\{s/y\}\tau_2$ as the type of $t~s$.  To eliminate a term of type $(y : \tmctx) \arrow \tau$, we overload application simply writing $t~s$, although $s$ stands for a LF context and check that $s$ is of schema $\tmctx$. This distinction between the domains of discourse is important, as we only allow LF contexts to be built either by a context variable or a LF type declaration.
We can embed contextual object $C$ into computations by boxing it and transitioning to the typing rules for LF. We eliminate contextual types using the recursor.

In general, the output type of the recursor may depend on the argument we are recursing over. We hence annotate the recursor itself with an invariant $\IH$. We consider only the recursor for contextual LF terms where $\IH = (\psi : \tmctx) \arrow (y:\cbox{\psi \vdash \tm}) \arrow \tau$, but other recursors follow similar ideas. To check that the recursor $\titer{\R}{}\IH~\Psi~t$ has type $\{\Psi/\psi, t/y\}\tau$, we check that each of the three branches has the specified type $\IH$. In the base case, we may assume in addition to $\psi:{\tmctx}$ that we have a variable $p:\cbox{\unboxc{\psi} \vdash_\#\tm}$ and check that the body has the appropriate type.
%
% The
% branch $\psi, p \mto t_v$ has type $\{p/y\}\tau$ where we restrict the
% variable $y$ by replacing $y$ which has type $\cbox {\unboxc{\psi} \vdash \tm}$ with
% a variable $p$ of type $\cbox{\unboxc{\psi} \vdash_\# \tm}$. This captures the base
% case in our recursor over objects of type $\cbox{\unboxc{\psi} \vdash \tm}$.
%
If we encounter a contextual LF object built with the LF constant $\capp$, then we choose the branch $b_\capp$. We assume $\psi:\cbox{\tmctx}$, $m:\cbox{\unboxc{\psi}\vdash \tm}$, $n:\cbox{\unboxc{\psi}\vdash \tm}$, as well as $f_n$ and $f_m$ which stand for the recursive calls on $m$ and $n$ respectively. We then check that the body $t_\capp$ is well-typed.
If we encounter a LF object built with the LF constant $\clam$, then we choose the branch $b_\clam$. We assume $\psi : \cbox{\tmctx}$ and $m:\cbox{\psi, x{:}\tm \vdash \tm}$ together with the recursive call $f_m$ on $m$ in the extended LF context $\psi, x{:}\tm$. We then check that the body  $t_\clam$ is well-typed.

\begin{figure}[htb]
  \centering
\[
\begin{array}{c}
\multicolumn{1}{l}{\mbox{Well-formed Context:}~\fbox{$\vdash \Gamma$}}\\[-1.5em]
\infer{\vdash \cdot}{} \qquad
\infer{\vdash \Gamma, x{:}\ann\tau}{\vdash \Gamma & \Gamma \vdash \ann\tau : u}
\\[1em]
\multicolumn{1}{l}{\mbox{Typing and Kinding Judgments for Computations}~~\fbox{$\Gamma \vdash t : \tau$} ~\mbox{and}~\fbox{$\Gamma \vdash \tau : u$}}
\\[0.75em]
\infer[(u_1, u_2) \in \Ax]{\Gamma \vdash u_1: u_2}{\vdash \Gamma}
\quad
\infer[(u_1,~u_2,~u_3) \in \Ru]{\Gamma \vdash (y:\ann\tau_1) \arrow \tau_2 : u_3}
      {\Gamma \vdash \ann\tau_1 : u_1 &
       \Gamma, y{:}\tau_1 \vdash \tau_2 : u_2}
\quad
\infer{\Gamma \vdash \cbox{T} : u}{\Gamma \vdash T}
% \qquad
% \infer{\Gamma\vdash \tmctx : u}{\vdash \Gamma}
\\[0.75em]
\infer{\Gamma \vdash y : \ann\tau}{y:\ann\tau \in \Gamma&\vdash\Gamma}
\quad
\infer{\Gamma \vdash t~s : \{s/y\}\tau_2}
{\Gamma \vdash t : (y:\ann\tau_1) \arrow \tau_2 &
 \Gamma \vdash s : \ann\tau_1}
\quad
\infer{\Gamma \vdash \tmfn y t : (y:\ann\tau_1) \arrow \tau_2}
        {\Gamma, y:\ann\tau_1 \vdash t : \tau_2 & \Gamma \vdash (y:\ann\tau_1) \arrow \tau_2 : u}
\\[0.75em]
\infer{\Gamma \vdash \cbox C : \cbox T}{\Gamma \vdash C : T}
\quad
\infer{\Gamma \vdash t : \tau}
{\Gamma \vdash t : \tau' & \Gamma \vdash \tau' \equiv \tau : u}
\\[1em]
\multicolumn{1}{l}{\mbox{Schema checking of LF Context}~~\fbox{$\Gamma \vdash \Psi : \tmctx$}~~\mbox{;Well-Formedness of Schema}~\fbox{$\Gamma \vdash \tmctx : u$}}\\[0.75em]
\infer{\Gamma\vdash \tmctx : u}{\vdash \Gamma}
\quad
\infer{\Gamma \vdash \cdot : \tmctx}{\vdash\Gamma}
\quad
% \infer{\Gamma \vdash \unboxc{x} : \tmctx}{\Gamma(x) = \tmctx & \vdash\Gamma}
% \quad
\infer{\Gamma \vdash \Psi, x{:}A : \tmctx}
{\Gamma \vdash \Psi : \tmctx & \Gamma ; \Psi \vdash A : \lftype & \Gamma ; \Psi \vdash A \equiv \tm : \lftype}
% \\[1em]
% \infer[\mbox{where}~\IH = (\psi : \cbox{\tmctx}) \arrow (y:\cbox{\psi \vdash \tm}) \arrow \tau]
% {\Gamma \vdash \trec{\R}{\tm}{\IH}~\cbox{\Psi}~t :  \{\cbox \Psi/\psi, t/y\}\tau}
% { \Gamma \vdash \trec{\R}{\tm}{\IH} : \IH &
%  \Gamma \vdash \Psi : \tmctx &
%  \Gamma \vdash t : \cbox{\Psi \vdash \tm}}
% \\[1em]
\end{array}
\]
  \caption{Typing Rules for Computations (without recursor)}
  \label{fig:comptyping}
\end{figure}

\begin{figure}[htb]
  \centering
\[
\begin{array}{c}
% \multicolumn{1}{l}{\mbox{Recursor over LF Context where}~\IH = (\psi : \cbox{\tmctx}) \arrow \tau }
% \\[0.5em]
% \infer% [\mbox{where}~\IH = (\psi : \cbox{\tmctx}) \arrow \tau]
%      {\Gamma \vdash \tmrecctx {\IH} {\cdot \mto b_e}{\psi,f_\psi \mto b_c} \rappto t :\{\cbox{\Psi}/\psi\}\tau}
% {\Gamma \vdash t \equiv \cbox{\Psi} : \cbox{\tmctx} &
%  \Gamma \vdash b_e: \{\cbox{\cdot}/\psi\}\tau &
%  \Gamma, \psi:\cbox{\tmctx}, f_\psi: \tau \vdash b_c : \{\cbox{\unbox{\psi},x{:}\tm}/\psi\}\tau
% }
% \\[1em]
% \multicolumn{1}{l}{\mbox{Recursor over LF Substitutions}~\IH = (\psi : \cbox{\tmctx}) \arrow (s:\cbox{\Phi \vdash \psi}) \arrow \tau }\\[0.5em]
\LONGVERSION{
\multicolumn{1}{l}{\mbox{Recursor over LF Parameters}~\IH = (\psi : \tmctx) \arrow (q:\cbox{\psi \vdash_\# \tm}) \arrow \tau }\\[0.5em]
\infer[]
{\Gamma \vdash \tmrecctx {\IH} {\psi \mto b_e}{\psi, q, f_q \mto b_c} \rappto \Psi~t: \{\Psi/\psi,~t/y\}\tau}
{
\Gamma \vdash t :  \cbox{\Psi \vdash_\# \tm} & \Gamma \vdash \IH : u &
\Gamma \vdash (\psi \mto b_e) : \IH &
\Gamma \vdash (\psi, q, f_q \mto b_c) : \IH
 }
\\[1em]
\multicolumn{1}{p{13.5cm}}{\mbox{Branches where} \quad$\IH = (\psi : \tmctx) \arrow (y:\cbox{\psi \vdash_\# \tm}) \arrow \tau$ }
\\[1em]
\infer{\Gamma \vdash (\psi \mto b_e) : \IH}
{\Gamma, \psi:\tmctx \vdash  b_e : \{{(\psi,x{:}\tm)}/\psi,~\cbox{\psi, x \vdash x}/p \}\tau}
\\[0.5em]
\infer{\Gamma \vdash (\psi, q, f_q \mto b_c) : \IH}
{ \Gamma, \psi:\tmctx, q:\cbox{\psi \vdash_\# \tm},  f_q: \{q/p\}\tau  \vdash  b_c :
  \{(\psi,x{:}\tm)/\psi,~\{\cbox{\psi, x \vdash  \unbox{q}{\wk{\psi}}}/p \}\tau }
\\[1em]
}
\multicolumn{1}{l}{\mbox{Recursor over LF Terms}~\IH = (\psi : \tmctx) \arrow (y:\cbox{\psi \vdash \tm}) \arrow \tau }\\[0.5em]
\infer% [\mbox{where}~\IH = (\psi : \tmctx) \arrow (y:\cbox{\psi \vdash \tm}) \arrow \tau]
{\Gamma \vdash \tmrec {\IH} {b_v} {b_{\mathsf{app}}} {b_{\clam}} \rappto \Psi~t : \{{\Psi}/\psi,~t/y\}\tau}
{% \Gamma \vdash t_1 \equiv {\Psi}:   \tmctx &
 \Gamma \vdash t :  \cbox{\Psi \vdash \tm} & \Gamma \vdash \IH : u &
 \Gamma \vdash b_v : \IH & \Gamma \vdash b_{\mathsf{app}} : \IH & \Gamma \vdash b_{\mathsf{lam}} : \IH}
\\[1em]
\multicolumn{1}{p{13.5cm}}{\mbox{Branches where} \quad$\IH = (\psi : \tmctx) \arrow (y:\cbox{\psi \vdash \tm}) \arrow \tau$ }
\\[1em]
\infer{\Gamma \vdash ({\psi,p \mto t_v}) : \IH }
{ \Gamma, \psi:\tmctx, p:\cbox{~\unboxc{\psi} \vdash_\# \tm}  \vdash  t_v : \{p / y\}\tau}
\\[1em]
\infer{\Gamma \vdash (\psi, m, n, f_n, f_m \mto t_{\mathsf{app}}) : \IH}
{
  \begin{array}{l@{}lcl}
\Gamma, & \psi:\tmctx, m:\cbox{~\unboxc{\psi} \vdash \tm}, n:\cbox{~\unboxc{\psi} \vdash \tm}& & \\
        & f_m: \{m/y\}\tau, f_n: \{n/y\}\tau  & \vdash &  t_{\mathsf{app}} : \{\cbox{~\unboxc{\psi} \vdash \capp~\unbox{m}{\id}~\unbox{n}{\id}}/y\}\tau
  \end{array}
}
\\[1em]
\infer{\Gamma \vdash \psi, m, f_m \mto t_{\clam} : \IH}
{ \begin{array}{l@{}lcl}
\Gamma, & \phi:\tmctx,  m:\cbox{\phi, x:\tm \vdash \tm}, & & \\
        &  f_m:\{(\phi, x:\tm)/\psi, m /y \} \tau          & \vdash & t_{\clam} : \{\phi/\psi, \cbox{~\unboxc{\phi} \vdash \clam~\lambda x.\unbox{m}{\id}~} / y\}\tau
 \end{array}
 }
\end{array}
\]
  \caption{Typing Rules for Recursors}
  \label{fig:comptypingrec}
\end{figure}

\subsection{Definitional Equality for Computations}
We now consider definitional equality for computations concentrating on the reduction rules. We omit the transitive closure and congruence rules, as they are as expected.

% \LONGVERSION{
% \begin{figure}
%   \centering
% \[
%   \begin{array}{c}
% \multicolumn{1}{p{13.5cm}}{\fbox{$\Gamma \vdash \Psi \equiv \Phi :
%       \tmctx$}\quad Definitional Equivalence between LF context $\Psi$ and $\Phi$}\\[0.5em]
%       \infer{\Gamma \vdash \cdot \equiv \cdot : \tmctx}{\vdash \Gamma}
% \quad
%       \infer{\Gamma \vdash \Psi, x{:}A \equiv \Phi, x{:}B : \tmctx}
%       {\Gamma \vdash \Psi \equiv \Phi : \tmctx &
%        \Gamma ; \Psi \vdash A \equiv B : \lftype}
% \quad
%   \end{array}
% \]
%   \caption{Definitional Equality on LF Context}
%   \label{fig:eqlfctx}
% \end{figure}
% }

\begin{figure}[h]
  \centering
  \[
  \begin{array}{c}
    \multicolumn{1}{l}{\mbox{Reduction and Expansions for Computations}}\\[0.75em]
          \infer{\Gamma  \vdash (\tmfn y t)~s \equiv \{s/y\}t : \{s/y\}\tau_2}
                {\Gamma \vdash \tmfn y t : ( y{:}\ann\tau_1) \arrow \tau_2 & \Gamma \vdash s : \ann\tau_1
                                        % & \highlight{\Gamma \vdash \tau_1 : u}
                                        % not needed here -bp
                }
\qquad
\infer{\Gamma \vdash \cbox{\hatctx \Psi \vdash \unbox{t}{\wk{\hatctx\Psi}}} \equiv t : \cbox{\Psi \vdash A}}{
        \Gamma \vdash t : \cbox{\Psi \vdash A}}
                \\[1em]
\multicolumn{1}{l}{
\mbox{let}~{\R} =  ({\psi,p \mto t_p} \mid {\psi,m,n,f_m, f_n \mto t_{\mathsf{app}}} \mid {\psi, m, f_m \mto t_{\clam}})}\\[0.5em]
\multicolumn{1}{l}{
\mbox{and}~\IH = (\psi : \tmctx) \arrow (y : \cbox{\psi \vdash \tm}) \arrow \tau}
\\[0.75em]
\infer{\Gamma \vdash \titer{\R}{\tm}{\IH}\rappto~\Psi~\cbox{\hatctx{\Psi} \vdash \clam \lambda x.M} \equiv \{\theta\}t_{\clam}
                  :  \{\Psi/\psi, \cbox{\hatctx{\Psi} \vdash \clam \lambda x.M}/y\}\tau }
      {\Gamma \vdash \Psi : \tmctx \qquad \Gamma; \Psi, x{:}\tm \vdash M : \tm
       {\qquad\Gamma \vdash \IH : u}
      }
 \\[0.5em]\mbox{where}~
%     \begin{array}{lcl}
% {ih}_m & = & \trec{\R}{\tm}{\IH}~\cbox{\Psi,x{:}\tm}~{\cbox{\hatctx{\Psi}, x \vdash M}} \\
 \theta   =  \Psi/\psi,~
             \cbox{\hatctx{\Psi}, x \vdash M}/m,~
             \titer{\R}{\tm}{\IH}~\rappto {(\Psi,x{:}\tm)}~{\cbox{\hatctx{\Psi}, x \vdash M}}/f
\\[1em]
\infer{\Gamma \vdash \titer{\R}{\tm}{\IH}~\rappto {\Psi} \cbox{\hatctx{\Psi} \vdash \capp M~N} \equiv \{\theta\} t_{\capp}
:  \{\Psi/\psi, \cbox{\hatctx{\Psi} \vdash \capp M~N}/y\}\tau }
{
\Gamma  \vdash \Psi : \tmctx  \qquad
\Gamma; \Psi   \vdash  M : \tm  \qquad
\Gamma; \Psi   \vdash  N : \tm \qquad  {\Gamma \vdash \IH : u}
}
\\[0.5em]
\mbox{where}~
\theta  =  \Psi/\psi,~\cbox{\hatctx{\Psi} \vdash M}/m,~
           \cbox{\hatctx{\Psi} \vdash N}/n,~
           \titer{\R}{\tm}{\IH}~\rappto {\Psi}~\cbox{\hatctx{\Psi} \vdash M}/f_m,~
           \titer{\R}{\tm}{\IH}~\rappto {\Psi}~\cbox{\hatctx{\Psi} \vdash N}/f_n

% \begin{array}{lcl}
%{ih}_m & = & \trec{\R}{\tm}{\IH}~\cbox{\Psi}~{\cbox{\hatctx{\Psi} \vdash M}}\\% [0.25em]
%{ih}_n & = & \trec{\R}{\tm}{\IH}~\cbox{\Psi}~\cbox{\hatctx{\Psi} \vdash N}\\
%\end{array}
\\[1em]
\infer{\Gamma \vdash \trec{\R}{\tm}{\IH}~\rappto {\Psi}~{\cbox{\hatctx \Psi \vdash x} }
          \equiv \{\Psi/\psi,~\cbox{\hatctx{\Psi} \vdash x}/p\} t_p :  \{\Psi/\psi, \cbox{\Psi \vdash x}/y\}\tau }
      {x{:}\tm \in \Psi  & \Gamma \vdash \Psi : \tmctx
             {\qquad\Gamma \vdash \IH : u}}
      \\[0.75em]
  \end{array}
  \]
  \caption{Definitional Equality for Computations}
  \label{fig:etype}
\end{figure}

We consider two computations to be equal, if they evaluate to the same result. We propagate values through computations and types relying on the computation-level substitution operation. When we apply a term $s$ to a computation $\tmfn y t$, we $\beta$-reduce and replace $y$ in the body $t$ with $s$. We unfold the recursor depending on the value passed. If it is $\cbox{\hatctx{\Psi} \vdash \clam~ \lambda x.M}$, then we choose the branch $t_{\clam}$. If the value is $\cbox{\hatctx{\Psi} \vdash \capp~ M~N}$, we continue with the branch $t_{\capp}$. If it is ${\cbox{\hatctx{\Psi} \vdash x} }$, i.e. the variable case, we continue with $t_v$. Note that if $\Psi$ is empty, then the case for variables is unreachable, since there is no LF variable of type $\tm$ in the empty LF context and hence the contextual type $\cbox{\cdot \vdash_\# \tm}$ is empty.

We also include the expansion of a computation $t$ at type $\cbox{\Psi \vdash A}$; it is equivalent to unboxing $t$ with the identity substitution and subsequently boxing it, i.e. $t$ is equivalent to $\cbox{\hatctx \Psi \vdash \unbox{t}{\wk{\hatctx{\Psi}}}}$ .

%We omit here the typing rules for the recursor for context, but they follow similar ideas and our development extends to them. We also omit recursor or at least case analysis for variable objects of type $\Psi \vdash_\# \tm$ to simplify the further development. However, in a practical system both recursors are highly desirable, and one might even argue necessary to write programs.

\section{Elementary Properties of Typing and Definitional Equality}

We now state and prove some basic properties about our type theory
before we give its semantic interpretation and show that all
well-typed terms normalize.
%
% \paragraph{Context well-formedness, Weakening, and Substitution}
For stating the theorems succinctly, we refer to judgments that only depend on the computation context $\Gamma$ using $\Jcomp$ and judgments that refer to both the computation context $\Gamma$ and the LF context $\Psi$ with $\JLF$. \SHORTVERSION{We concentrate here on the judgments we introduced so far.}

\[
  \begin{array}{lcl}
\Jcomp & = & \{t:\ann\tau, t\equiv t' : \ann\tau
% C : T, C \equiv C' : T,  T, T \equiv T',
% \LONGVERSION{, \Psi : \ctx ,   \Psi \equiv \Psi' :\ctx }
\}\\
% &   & \\
%           \Psi : \tmctx,
%%      \Psi \equiv \Psi' : \tmctx\}\\
\JLF & = & \{M:A, M\equiv N : A,
           \LONGVERSION{A : K, A \equiv B : K, K : \lfkind, K\equiv K' : \lfkind,}
            \sigma : \Psi, \sigma\equiv \sigma' : \Psi\}
  \end{array}
\]

Next we prove some elementary properties for LF and computations. As we separate the LF variables from the computation-level variables, we establish first properties such as well-formedness of context, weakening and substitution, for LF and then we prove the dual properties for computations.

% These
% properties need to be stated mutually as the LF
% layer can refer to computations and vice versa.

%%%%%%%%%%  Context well-formness  %%%%%%%%%%%%

% \begin{lemma}
% \label{lm:ctxred}
%   If $\vdash \Gamma, x{:}\tau, \Gamma' $ then  $\vdash \Gamma, x{:}\tau $.
% \end{lemma}
% \begin{proof}
%  By induction on the structure of $\Gamma'$.
% \LONGVERSION{
% \\[0.5em]
% \pcase{$\Gamma' = \cdot$.~~~This case is trivial.}
% %\\[0.5em]
% \pcase{$\Gamma' = \Gamma'',  y{:}\tau'$}
% \prf{$\vdash (\Gamma, x{:}\tau, \Gamma'',  y{:}\tau') $ \hfill by assumption}
% \prf{$\vdash \Gamma,  x{:}\tau, \Gamma'' $ and
% $\Gamma,  x{:}\tau, \Gamma'' \vdash \tau'$ \hfill by  rule $\mathrm{ADD-Ctx}$}
% \prf{$\vdash \Gamma,  x{:}\tau $ \hfill by IH}
% }
% \end{proof}
% \smallskip
\subsection{Elementary Properties of LF}\label{sec:proplf}
\begin{theorem}[Well-Formedness of LF Context]\quad% \\\mbox{\qquad}
  \label{lm:lfctxwf}
  \begin{enumerate}
    \item \label{it:lfctxwf} If $\D ::\Gamma \vdash \Psi,x{:}A,\Psi' : \ctx$ then $\Ca ::~\Gamma \vdash \Psi : \ctx$ and $\Ca<\D$.
      % Let $\mathcal{J}= \{M:A, A : K, K : \lfkind, \sigma : \Psi\}$. \\
    \item \label{it:lflfctxwf} If $\D :: \Gamma; \Psi \vdash \JLF$ then $\Ca ::~\Gamma \vdash \Psi : \ctx$ and $\Ca<\D$.
  \end{enumerate}
\end{theorem}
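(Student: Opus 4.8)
The plan is to prove the two parts in the order given: part~\ref{it:lfctxwf} by a short induction on the trailing LF context $\Psi'$, and part~\ref{it:lflfctxwf} by induction on the LF derivation $\D$, invoking part~\ref{it:lfctxwf} in the cases where the ambient LF context grows. For part~\ref{it:lfctxwf}, if $\Psi' = \cdot$ then $\D$ derives $\Gamma \vdash \Psi, x{:}A : \ctx$, and the only rule that can conclude this is the context-extension rule, whose left premise is exactly $\Ca :: \Gamma \vdash \Psi : \ctx$; this $\Ca$ is an immediate subderivation of $\D$. (The context-variable rule cannot apply here, since its conclusion carries no trailing declaration.) If $\Psi' = \Psi'', y{:}B$, then $\D$ again ends in the context-extension rule with left premise $\D' :: \Gamma \vdash \Psi, x{:}A, \Psi'' : \ctx$, and the induction hypothesis applied to $\D'$ yields $\Ca$ with $\Ca < \D' < \D$.

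For part~\ref{it:lflfctxwf} I proceed by induction on $\D :: \Gamma; \Psi \vdash \JLF$, inspecting the last rule, and the rules naturally split into three groups. The \emph{leaf} rules --- the LF-variable rule, the LF-constant rule, the empty-substitution rule, and the weakening-substitution rule --- each carry the well-formedness of the ambient LF context ($\Gamma \vdash \Psi : \ctx$, or $\Gamma \vdash \Psi, \wvec{x{:}A} : \ctx$ in the weakening case, where $\Psi,\wvec{x{:}A}$ is itself the ambient context) directly as a premise, so we take $\Ca$ to be that premise. The \emph{context-preserving} rules --- LF application, the LF conversion rule, the $\unbox{t}{\sigma}$ introduction rule (using its LF-substitution premise $\Gamma;\Psi \vdash \sigma : \Phi$, \emph{not} the computation premise), substitution extension, and all the congruence and transitivity rules together with the $\eta$-rule, the $\unbox{\cbox{\hatctx\Phi\vdash N}}{\sigma}$ reduction rule, and the substitution-equality rules --- all have a premise with the same LF context $\Psi$, so we simply apply the induction hypothesis. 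The \emph{binder} rules --- LF $\lambda$-abstraction, the $\beta$-reduction rule, and (on the kinding/type level) the $\Pi$-formation rules --- have a premise in the extended context $\Psi, x{:}A$: here the induction hypothesis gives $\Ca' :: \Gamma \vdash \Psi, x{:}A : \ctx$ with $\Ca' < \D$, and part~\ref{it:lfctxwf} applied to $\Ca'$ (with empty trailing context) produces $\Ca :: \Gamma \vdash \Psi : \ctx$ with $\Ca < \Ca' < \D$.

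There is no deep obstacle; the proof is essentially bookkeeping, but two points deserve care. First, because LF typing and computation typing are mutually defined, the induction in part~\ref{it:lflfctxwf} must never be forced to recurse into a computation-typing premise --- and it is not, precisely because in the two rules mentioning $\unbox{t}{\sigma}$ the needed well-formedness is supplied by the LF-substitution premise rather than by the computation premise on $t$. Second, part~\ref{it:lflfctxwf} calls part~\ref{it:lfctxwf}, and the final claims $\Ca < \D$ are obtained by chaining several ``is a subderivation of'' steps, so one must note that this relation is transitive and that part~\ref{it:lfctxwf} returns a genuine subderivation; both are immediate, and since every appeal (to either part) is on a strictly smaller derivation, the whole argument is well-founded.
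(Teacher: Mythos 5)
Your proposal is correct and follows essentially the same route as the paper: part (1) by induction on the trailing context $\Psi'$ using inversion on the context-formation rules, and part (2) by induction on the LF derivation, reading off the well-formedness premise in the leaf cases, applying the induction hypothesis in context-preserving cases, and combining the induction hypothesis with part (1) (i.e., inversion) in the binder cases. Your explicit remarks about the mutual LF/computation definition and the transitivity of the subderivation ordering are sound and consistent with how the paper's argument is organized.
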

\begin{proof}
\ref{it:lfctxwf}(1) by induction on the structure of $\Psi'$;
\ref{it:lflfctxwf}(2) by induction on $\Gamma; \Psi \vdash \JLF$.
\LONGVERSIONCHECKED{
\\[0.5em]
First statement:  If $\D ::\Gamma \vdash \Psi,x{:}A,\Psi'$ then $\Ca ::~\Gamma \vdash \Psi$ and $\Ca<\D$
\\[0.25em]
\pcase{$\Psi' = \cdot$}
\prf{$\D :: \Gamma\vdash \Psi,x{:}A : \ctx$ \hfill by assumption}
\prf{$\Ca :: \Gamma\vdash \Psi : \ctx$ and $\Ca<\D$ \hfill by inversion}
\\[0.15em]
\pcase{$\Psi' = \Psi'', x':B$}
\prf{$\D :: \Gamma\vdash \Psi,x{:}A,\Psi'',x'{:}B : \ctx$ \hfill by assumption}
\prf{$\D :: \Gamma\vdash \Psi,x{:}A,\Psi''$ and $\D'<\Ca'$ \hfill by inversion}
\prf{$\Ca :: \Gamma\vdash \Psi$ and $\Ca<\D$ \hfill by IH}
\\[0.5em]
Second statement:If $\D :: \Gamma; \Psi \vdash \JLF$ then $\Ca ::~\Gamma \vdash \Psi : \ctx$ and $\Ca<\D$.
\\[0.75em]
  \pcase{$\D=\ibnc{\Gamma \vdash \Psi : \ctx}{\const a{:}K \in \Sigma}{\Gamma; \Psi \vdash \const a: K}{}$}
  \prf{$\Ca :: \Gamma \vdash \Psi : \ctx$ and $\Ca<\D$ \hfill by assumption}
\\[-0.5em]
  \pcase{$\D = \ianc{\Gamma ; \Psi, x{:}A \vdash M : B}{\Gamma ; \Psi \vdash \lambda x.M : \Pityp x A B}{}$}
  \prf{$\Ca' :: \Gamma \vdash \Psi, x{:}A : \ctx$ and $\Ca'<\D$ \hfill by IH}
  \prf{$\Ca :: \Gamma \vdash \Psi : \ctx$ and $\Ca<\D$ \hfill by inversion on well-formedness rules for LF contexts}
}
\end{proof}

 \begin{lemma}[LF Weakening]
   \label{lm:extlfctx}~
 Let $\wvec{y{:}B} = y_1{:}B_1, \ldots, y_k{:}B_k$.
%      $\sigma = \wk_{k+1}, y_1, \ldots y_k$.
 \\\mbox{\quad}
 If~~$\Gamma; \Psi, \wvec{y{:}B} \vdash \JLF$ and
    $\Gamma \vdash (\Psi,x{:}A,  \wvec{y{:}B}) : \ctx$
   % and
   % $\Gamma ; \Psi, x{:}A, [\wk_1/\hatctx \Psi]\wvec{y{:}B} \vdash
   % \sigma : \Psi, \wvec{y{:}B}$
   % \\\mbox{\quad}
    then~~$\Gamma; \Psi,x{:}A,\wvec{y{:}B} \vdash \JLF$.
 \end{lemma}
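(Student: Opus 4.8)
The plan is to prove LF Weakening by a straightforward induction on the derivation $\D :: \Gamma; \Psi, \wvec{y{:}B} \vdash \JLF$, simultaneously over all the judgment forms collected in $\JLF$ (that is, $M:A$, $M \equiv N : A$, $A : K$, $A \equiv B : K$, $K : \lfkind$, $K \equiv K' : \lfkind$, $\sigma : \Phi$, and $\sigma \equiv \sigma' : \Phi$). In each case I push the weakening through the premises using the induction hypothesis and reassemble the same rule. The subtlety is purely bookkeeping: the context $\Psi, \wvec{y{:}B}$ grows when we go under a binder (an LF $\lambda$-abstraction, a $\Pi$-type, or a $\Pi$-kind), so the $\wvec{y{:}B}$ suffix in the statement must be read as an arbitrary such suffix, and in the binder cases we simply apply the IH with the suffix extended by the bound declaration. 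The side condition $\Gamma \vdash (\Psi, x{:}A, \wvec{y{:}B}) : \ctx$ is what lets us form the weakened context in the first place, and it too must be threaded through the binder cases: when we extend $\wvec{y{:}B}$ with a new declaration $z{:}C$, we need $\Gamma \vdash (\Psi, x{:}A, \wvec{y{:}B}, z{:}C) : \ctx$, which follows from the well-formedness premise of the rule being inverted together with the IH applied at the type-level judgment for $C$ (or from Theorem~\ref{lm:lfctxwf} applied to a derivation in the extended context).

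First I would handle the leaf cases. For the variable rule, from $\Gamma; \Psi, \wvec{y{:}B} \vdash z : C$ with $(\Psi, \wvec{y{:}B})(z) = C$, we observe that $(\Psi, x{:}A, \wvec{y{:}B})(z) = C$ as well (inserting $x{:}A$ before the $\wvec{y{:}B}$ block does not disturb any existing lookup), and we have $\Gamma \vdash (\Psi, x{:}A, \wvec{y{:}B}) : \ctx$ by assumption, so the variable rule reapplies. For the constant rules ($\const c : A \in \Sigma$, $\const a : K \in \Sigma$) and for $\lftype : \lfkind$, the only premise is well-formedness of the context, which we have by assumption, so these are immediate. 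The conversion rules ($M : A$ from $M : B$ and $B \equiv A : \lftype$, similarly for types and kinds) just apply the IH to both premises and reapply the rule.

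Next the structural cases. For application $\Gamma; \Psi, \wvec{y{:}B} \vdash M~N : [N/z]B$ from $M : \Pityp z {A'} B$ and $N : A'$: apply the IH to each premise to get the weakened typings, then reapply the application rule; the substitution $[N/z]B$ is unchanged since weakening the LF context does not touch the LF term $N$ or the type $B$ syntactically. The case for closures $\unbox t \sigma$ is handled the same way: the premise $\Gamma \vdash t : [\Phi \vdash A]$ does not mention $\Psi$ at all, so it is untouched, and we apply the IH only to $\Gamma; \Psi, \wvec{y{:}B} \vdash \sigma : \Phi$; the resulting type $\lfs \sigma \Phi A$ is unchanged. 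For the LF substitution judgments: the empty substitution $\cdot : \cdot$ and the weakening substitution $\wk{\hatctx\Phi} : \Phi$ again only require context well-formedness as a premise; note that in the weakening-substitution rule the context has the shape $\Phi, \wvec{z{:}C}$, and inserting $x{:}A$ somewhere to the left of the $\wvec{y{:}B}$ suffix keeps it of the right form, so the rule reapplies. The extension rule $\sigma, M : \Phi, z{:}C$ applies the IH to both premises. The binder cases — $\lambda x.M$, $\Pityp z {A'} B$, $\Pityp z {A'} K$, and the $\eta$/$\beta$ equality rules that go under binders — are where the induction suffix grows: to weaken $\Gamma; \Psi, \wvec{y{:}B}, z{:}C \vdash M : B$ I invoke the IH with the suffix $\wvec{y{:}B}, z{:}C$ in place of $\wvec{y{:}B}$, which requires $\Gamma \vdash (\Psi, x{:}A, \wvec{y{:}B}, z{:}C) : \ctx$; this last fact comes from the well-formedness premise $\Gamma \vdash (\Psi, \wvec{y{:}B}, z{:}C) : \ctx$ available in that subderivation together with Theorem~\ref{lm:lfctxwf} and the IH at the relevant type judgment.

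The main obstacle — and it is a mild one — is making the simultaneous induction over the bundle $\JLF$ and the growing-suffix convention fully precise, so that every binder case cleanly reduces to an IH call with a longer suffix while keeping the $\ctx$-well-formedness side condition satisfied. There is no deep difficulty: no case requires anything beyond reassembling the same typing rule after weakening its premises, because the typing rules never inspect or constrain the shape of the LF context beyond lookup and well-formedness, both of which are manifestly stable under inserting a fresh declaration to the left of the $\wvec{y{:}B}$ block. The only place one must be slightly careful is the weakening-substitution rule, where one checks that the inserted $x{:}A$ does not fall inside the prefix $\Phi$ that the weakening substitution projects onto — but since $x{:}A$ is inserted immediately before $\wvec{y{:}B}$ and the domain $\Phi$ of $\wk{\hatctx\Phi}$ is a prefix of $\Psi$ that sits entirely to the left, this is automatic.
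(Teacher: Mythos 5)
Your proposal matches the paper's proof: a straightforward induction on the first derivation over all the judgment forms bundled in $\JLF$, reapplying each rule after weakening its premises, with the suffix $\wvec{y{:}B}$ extended in the binder cases and well-formedness of the extended context obtained from the IH at the type of the new declaration together with the context-formation rules. The two cases the paper spells out (the constant rule and the $\Pi$-type rule) are handled exactly as you describe.
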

 \begin{proof}
   By induction on the first derivation.
 \LONGVERSIONCHECKED{
\\[1em]
   \pcase{\ibnc{\Gamma \vdash \Psi,\wvec{y{:}B} : \ctx}{\const a{:}K \in \Sigma}{\Gamma; \Psi,\wvec{y{:}B} \vdash \const a: K}{}}
   \prf{$\Gamma \vdash \Psi,x{:}A, \wvec{y{:}B} : \ctx$ \hfill by assumption}
   \prf{$\Gamma; \Psi,x:A, \wvec{y{:}B} \vdash \const a: K$ \hfill by rule}
\\
   \pcase{$\ibnc{\Gamma ; \Psi,\wvec{y{:}B} \vdash A' : \lftype}
                {\Gamma ; \Psi,\wvec{y{:}B}, x'{:}A' \vdash B' : \lftype}
                {\Gamma ; \Psi,\wvec{y{:}B} \vdash \Pityp {x'} {A'} B' : \lftype}{}$}
\prf{$\Gamma ; \Psi, x{:}A, \wvec{y{:}B} \vdash A' : \lftype$ \hfill by IH}
\prf{$\Gamma \vdash \Psi, x{:}A, \wvec{y{:}B}  : \ctx$ \hfill by assumption}
\prf{$\Gamma \vdash \Psi, x{:}A, \wvec{y{:}B}, x'{:}A' : \ctx$ \hfill by rules for well-formed LF contexts}
\prf{$\D::\Gamma ; \Psi,\wvec{y{:}B}, x'{:}A' \vdash B' : \lftype$     \hfill by premise}
% \prf{$\Ca::\Gamma \vdash \Psi, \wvec{y{:}B}, x'{:} A': \ctx$ and $\Ca<\D$ \hfill by Lemma \ref{lm:lfctxwf}}
\prf{$\Gamma;\Psi, x{:}A, \wvec{y{:}B}, x'{:}A' \vdash B' : \lftype$ \hfill by IH}
\prf{$\Gamma;\Psi, x{:}A, \wvec{y{:}B} \vdash \Pityp {x'} {A'} B' : \lftype$ \hfill by rule}
 }
 \end{proof}

%%%%%%%%%%%%% LF Substitution Lemma %%%%%%%%%%%%%%%%
\LONGVERSION{ \begin{lemma}[LF Variable Lookup]\label{lm:lflookup}~
Let $\Gamma \vdash \Psi : \ctx$ and
   $\Psi(x) = A$.\\\mbox{\quad}
If    $\Gamma ; \Phi \vdash \sigma : \Psi$
then $\Gamma; \Phi \vdash M : \lfs{\sigma}{\Psi} A $ and $\pos
x~\lfs\sigma\Psi = M$.
 \end{lemma}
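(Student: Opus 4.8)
The plan is to proceed by induction on the derivation of $\Gamma ; \Phi \vdash \sigma : \Psi$, since the substitution $\sigma$ is the object whose structure drives both the definition of $\pos x~\lfs\sigma\Psi$ and the relevant typing. By Theorem~\ref{lm:lfctxwf} we may freely use that all relevant LF contexts are well-formed, and by the assumption $\Psi(x) = A$ we know $x$ actually occurs in $\Psi$, so the lookup cannot fail. There are three cases for $\sigma$: the empty substitution $\cdot$, a weakening substitution $\wk{\hatctx\Psi}$, and an extension $\sigma', M$.

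First I would dispose of the empty-substitution case: if $\Gamma ; \Phi \vdash \cdot : \cdot$ then $\Psi = \cdot$, contradicting $\Psi(x) = A$, so this case is vacuous. Next, for the weakening case $\sigma = \wk{\hatctx\Psi}$ with $\Phi = \Psi, \wvec{z{:}C}$: here $\pos x~\lfs{\wk{\hatctx\Psi}}{\Psi} = x$ provided $x \in \hatctx\Psi$, which holds since $\Psi(x) = A$. It remains to check that $\Gamma; \Phi \vdash x : \lfs{\wk{\hatctx\Psi}}{\Psi} A$. Since $\wk{\hatctx\Psi}$ acts as (a prefix of) the identity, $\lfs{\wk{\hatctx\Psi}}{\Psi} A$ should reduce to $A$ up to definitional equality — I would either prove this directly by a small computation on the substitution operation, or appeal to an identity-substitution lemma if one is available — and then $\Gamma; \Phi \vdash x : A$ follows from the variable rule using $\Phi(x) = \Psi(x) = A$ (weakening the lookup along $\wvec{z{:}C}$) together with LF Weakening (Lemma~\ref{lm:extlfctx}) applied to the well-kindedness of $A$, and finally the LF type conversion rule.

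For the extension case $\sigma = \sigma', N$ with $\Psi = \Psi'', y{:}B$ and $\Gamma;\Phi \vdash \sigma' : \Psi''$, $\Gamma;\Phi \vdash N : \lfs{\sigma'}{\Psi''}B$, I would split on whether $x = y$. If $x = y$, then $A = B$, $\pos x~\lfs{\sigma', N}{\Psi'', x} = N$, and the required typing $\Gamma;\Phi \vdash N : \lfs{\sigma',N}{\Psi}A$ follows from the premise once we observe $\lfs{\sigma',N}{\Psi}B = \lfs{\sigma'}{\Psi''}B$ (the last entry $N$ is irrelevant to variables other than $x$, and here $B$ lives in $\Psi''$). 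If $x \neq y$, then $\pos x~\lfs{\sigma',N}{\Psi'', y} = \pos x~\lfs{\sigma'}{\Psi''}$, and $\Psi''(x) = A$, so the induction hypothesis on $\sigma'$ gives some $M$ with $\pos x~\lfs{\sigma'}{\Psi''} = M$ and $\Gamma;\Phi \vdash M : \lfs{\sigma'}{\Psi''}A$; again $\lfs{\sigma',N}{\Psi}A = \lfs{\sigma'}{\Psi''}A$ because $A$ does not mention $y$, so we are done.

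The main obstacle I anticipate is not any one case in isolation but the bookkeeping around how the substitution operation interacts with the erased-context annotation: specifically, confirming that applying $\sigma', N$ in erased context $\hatctx{\Psi'', y}$ to a type or term that lives in the prefix $\Psi''$ genuinely coincides with applying $\sigma'$ in $\hatctx{\Psi''}$, and that $\wk{\hatctx\Psi}$ really behaves as an identity on terms/types over $\Psi$. These are "obvious" facts about the substitution machinery of Fig.~\ref{fig:lfsubst}, but they need to be stated carefully (or extracted as small auxiliary lemmas about $\pos$ and $\trunc$) so that the type annotations on the conclusion line up exactly; the $\equiv$-conversion rule for LF types gives enough slack that this should go through smoothly.
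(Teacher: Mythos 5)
Your proof is correct and follows essentially the same route as the paper's: induction on the derivation of $\Gamma ; \Phi \vdash \sigma : \Psi$, with the weakening case handled by the variable rule (using that $\wk{\hatctx\Psi}$ acts as the identity on $A$) and the extension case split on whether $x$ is the last-bound variable, using the premise directly when it is and the induction hypothesis plus the observation that the trailing entry does not affect $A$ when it is not. The only difference is presentational: you make the vacuous empty-substitution case and the identity-substitution bookkeeping explicit, which the paper leaves implicit.
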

 \begin{proof}
By induction $\Gamma ; \Phi \vdash \sigma : \Psi$.
\LONGVERSIONCHECKED{
\\[0.5em]
\pcase{$\ianc {\Gamma\vdash \Psi, \wvec{y{:}B} : \ctx }
              {\Gamma ;\Psi, \wvec{y{:}B} \vdash \wk{\hatctx\Psi} : \Psi}{}$}
\\
\prf{$x \in \hatctx\Psi$ \hfill by assumption $\Psi(x) = A$}
\prf{$\pos x~\lfs {\wk{\hatctx\Psi}}{\Psi}  =  x$ \hfill by definition of $\pos$}
\prf{$(\Psi, \wvec{y{:}B})(x) = A$ \hfill since $\Psi(x) = A$}
\prf{$\Gamma ; \Psi, \wvec{y{:}B} \vdash x : A$ \hfill by typing rule}
\\
\pcase{$\ibnc{\Gamma ; \Phi \vdash \sigma : \Psi' }
             {\Gamma ; \Phi \vdash N : \lfs \sigma {\Psi '} B}
             {\Gamma ; \Phi \vdash \sigma, N : \Psi', y{:}B}{}$  where $\Psi'(x) = A$ and $x \neq y$}
\prf{$\Gamma ; \Phi \vdash M : \lfs \sigma {\Psi'} A$ \hfill by IH}
\prf{$\Gamma ; \Phi \vdash M : \lfs {\sigma, N} {\Psi', y} A$ \hfill since $y \not\in \FV(A)$}

\pcase{$\ibnc{\Gamma ; \Phi \vdash \sigma : \Psi '}
             {\Gamma ; \Phi \vdash M : \lfs \sigma {\Psi '} A}
             {\Gamma ; \Phi \vdash \sigma, M : \Psi ', x{:}A}{}$  where $(\Psi ', x{:}A)(x) = A$ }
\prf{$\pos x~\lfs {\sigma, M}{\Psi ', x}  =  M$ \hfill by def. of $\pos$}
\prf{$\Gamma ; \Phi \vdash M : \lfs \sigma {\Psi '} A$ \hfill by premise}
\prf{$\Gamma ; \Phi \vdash M : \lfs {\sigma, M} {\Psi ', x} A$ \hfill since $x \not\in \FV(A)$}

%% HIDING subst. closures
\SUBSTCLO{
\pcase{$ \ianc{\Gamma \vdash t : \cbox{\Psi' \vdash \Psi, \wvec{y{:}B}}
         \quad \Gamma ; \Phi \vdash \sigma : \Psi'
         \quad \Gamma\vdash \Psi, \wvec{y{:}B} : \ctx }
              {\Gamma ; \Phi \vdash \sclo {\hatctx\Psi} {r} : \Psi}{r = \unbox{t}{\sigma}}
$
}
\prf{$\Psi(x_i) = A$ where $\Psi = \Psi_i, x_i{:}A, \wvec{y{:}A}$\hfill by assumption}
\\
\prf{\emph{Subcase.} $r \neq \unbox {\cbox{\hatctx{\Psi'} \vdash \sigma'}} {\sigma}$}
\prf{$\pos x~\lfs {\sclo {\hatctx\Psi} r} \Psi  =
       \unbox{\cbox{\hatctx{\Psi_i},x_i \vdash x_i}}{\sclo {\hatctx{\Psi_i},x_i} r} $}
%\prf{$\Gamma ; \Psi \vdash x : A$ \hfill by typing rule since $\Gamma \vdash \Psi : \ctx$ and $\Psi(x) = A$}
\prf{$\Gamma ; \Psi_i,x_i{:}A \vdash x_i : A$ \hfill by typing rule since $\Gamma \vdash \Psi : \ctx$ }
\prf{$\Gamma \vdash \cbox{\hatctx{\Psi_i},x_i \vdash x_i} : \cbox{\Psi_i,x_i{:}A \vdash A}$ \hfill by typing rule}
\prf{$\Gamma ; \Phi \vdash \unbox{\cbox{\hatctx{\Psi_i},x_i \vdash x_i}}{\sclo {\hatctx{\Psi_i},x_i} r} : \lfs {\sclo {\hatctx{\Psi_i},x_i} r}{\Psi_i,x_i} A$ \hfill by typing rule}
\prf{$\Gamma ; \Phi \vdash \unbox{\cbox{\hatctx{\Psi_i},x_i \vdash x_i}}{\sclo {\hatctx{\Psi_i},x_i} r} : \lfs {\sclo {\hatctx{\Psi}} r}{\Psi} A$ \hfill using
$\lfs {\sclo {\hatctx{\Psi_i},x_i} r}{\Psi_i,x_i} A = \lfs {\sclo {\hatctx{\Psi}} r}{\Psi} A$}
%\prf{$\Gamma ; \Phi \vdash \unbox{\cbox{\hatctx\Psi \vdash x}}{\sclo {\hatctx\Psi} r}
%\equiv \unbox{\cbox{\hatctx{\Psi_i},x_i \vdash x_i}}{\sclo \hatctx{\Psi_i},x_i} r : \lfs {\sclo {\hatctx\Psi} r}{\Psi} A$ \hfill by $\equiv$}
%\prf{$\Gamma ; \Phi \vdash \unbox{\cbox{\hatctx\Psi_i,x_i \vdash x_i}}{\sclo \hatctx\Psi_i,%x_i} r : \lfs {\sclo {\hatctx\Psi} r}{\Psi} A$ \hfill using
%}
\\
\prf{\emph{Subcase.} $r = \unbox {\cbox{\hatctx{\Psi'} \vdash \sigma'}} {\sigma}$}
\prf{$\Gamma ; \Psi' \vdash \sigma' : \Psi, \wvec{y{:}B}$
     \hfill by $\Gamma \vdash t : \cbox{\Psi' \vdash \Psi, \wvec{y{:}B}}$}
\prf{$\Gamma ; \Psi' \vdash \sigma'' : \Psi$ where $\sigma'' = \trunc_\Psi~(\sigma' / \hatctx{\Psi},\vec y)$}
\prf{$\Gamma ; \Psi' \vdash M : \lfs{\sigma''}{\Psi}A$ and $\pos x~\lfs {\sigma'}{\Psi} = M$\hfill by IH (using $\Gamma ; \Psi' \vdash \sigma'' : \Psi$)}
\prf{$\Gamma ; \Phi \vdash \lfs \sigma {\Psi'} M : \lfs \sigma {\Psi'}( \lfs{\sigma''}{\Psi}A)$ \hfill by LF subst. lemma}
\prf{$\Gamma ; \Phi \vdash \lfs \sigma {\Psi'} M : \lfs {\sclo
    {\hatctx\Psi} {r} }{\Psi} A$ \hfill since by LF subst. def.}
}
}

 \end{proof}
}

\begin{lemma}[LF Substitution]\label{lm:lfsubst}~
If $~\Gamma ; \Psi \vdash \JLF$ and $~\Gamma ; \Phi \vdash  \sigma : \Psi$
      then $\Gamma ; \Phi \vdash \lfss{\sigma}{\Psi} \JLF$.
\end{lemma}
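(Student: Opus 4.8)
The statement is the LF Substitution Lemma: substituting a well-typed LF substitution $\sigma$ (mapping $\Psi$ to $\Phi$) into a derivation $\Gamma;\Psi \vdash \J_{\mathsf{LF}}$ yields $\Gamma;\Phi \vdash \lfss{\sigma}{\Psi}\J_{\mathsf{LF}}$. Since $\J_{\mathsf{LF}}$ ranges over typing of LF terms, definitional equality of LF terms, and (in the long version) kinding, well-formedness of kinds, and typing/equality of LF substitutions, this is really a simultaneous statement proved by a single induction. The plan is to proceed by induction on the derivation $\D :: \Gamma;\Psi \vdash \J_{\mathsf{LF}}$, appealing to Theorem~\ref{lm:lfctxwf} (well-formedness of LF contexts) and Lemma~\ref{lm:extlfctx} (LF weakening) wherever the context is extended. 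The key technical device is the characterization of how $\lfss{\sigma}{\Psi}(-)$ commutes with the single substitutions $[N/x]$ that appear in the application and $\Pi$-elimination rules; this is the standard "substitution lemma" compatibility equation $\lfss{\sigma}{\Psi}([N/x]B) = [\lfss{\sigma}{\Psi}N / x](\lfss{(\sigma, x)}{(\Psi,x)}B)$, which I would either have proved as a preceding lemma about the substitution operation or establish inline by a small induction on $B$.

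\textbf{Key steps.} First, the variable case $\Gamma;\Psi \vdash x : A$ with $\Psi(x) = A$: by definition of $\lfss{\sigma}{\Psi}$ we have $\lfss{\sigma}{\Psi}(x) = M$ where $\pos x~\lfss{\sigma}{\Psi} = M$, so the goal $\Gamma;\Phi \vdash M : \lfss{\sigma}{\Psi}A$ is exactly the LF Variable Lookup Lemma (\ref{lm:lflookup}), applied to $\Gamma;\Phi \vdash \sigma : \Psi$. Second, the $\lambda$-abstraction case: from $\Gamma;\Psi,x{:}A \vdash M : B$, apply the IH with the extended substitution $\sigma, x : \Psi, x{:}A \to \Phi, x{:}\lfss{\sigma}{\Psi}A$ (which is well-typed by weakening of $\sigma$ plus the variable rule, using Lemma~\ref{lm:extlfctx}), obtaining $\Gamma;\Phi, x{:}\lfss{\sigma}{\Psi}A \vdash \lfss{(\sigma,x)}{(\Psi,x)}M : \lfss{(\sigma,x)}{(\Psi,x)}B$, then re-apply the $\lambda$-rule; the side condition $x \notin \FV(\sigma)$ in Fig.~\ref{fig:lfsubst} is arranged by $\alpha$-renaming. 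Third, the application case $\Gamma;\Psi \vdash M~N : [N/x]B$: apply the IH to both premises and then use the commutation equation above to identify $\lfss{\sigma}{\Psi}([N/x]B)$ with $[\lfss{\sigma}{\Psi}N/x](\lfss{(\sigma,x)}{(\Psi,x)}B)$ so that the application rule fires. Fourth, the constant and conversion cases: constants are immediate (using Theorem~\ref{lm:lfctxwf} to get $\Gamma \vdash \Phi : \ctx$), and conversion follows since definitional equality is preserved by the IH on the equality premise. The closure case $\Gamma;\Psi \vdash \unbox{t}{\sigma'} : \lfss{\sigma'}{\Phi'}A$ uses $\lfss{\sigma}{\Psi}(\unbox{t}{\sigma'}) = \unbox{t}{\sigma''}$ where $\lfss{\sigma}{\Psi}\sigma' = \sigma''$, the IH on the substitution typing $\Gamma;\Psi \vdash \sigma' : \Phi'$, and an associativity-of-composition lemma to match the result type. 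Finally, the definitional-equality cases (the $\beta$, $\eta$, and unbox-reduction rules) and the LF substitution cases follow the same pattern, with the unbox-reduction rule $\unbox{\cbox{\hatctx\Phi \vdash N}}{\sigma'} \equiv \lfss{\sigma'}{\Phi}N$ again needing the composition lemma.

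\textbf{Main obstacle.} The real work is bookkeeping around the substitution operation itself: establishing that $\lfss{\sigma}{\Psi}$ commutes correctly with single substitutions $[N/x]$ and, for the closure and weakening subcases, that composition of LF substitutions is associative and interacts correctly with $\trunc$ and $\pos$. These are the "composition of LF substitution is straightforward" facts alluded to in the paper, but making them precise enough to drive the application case and the closure cases is where care is needed; in a mechanization these would be separate lemmas proved by induction on terms and on substitutions respectively, and the present lemma would cite them. Everything else is a routine rule-by-rule replay, so I expect no conceptual difficulty beyond getting those compatibility equations stated correctly and threading the context-well-formedness and weakening side conditions through each case.
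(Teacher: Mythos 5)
Your proposal matches the paper's proof: the paper also proceeds by induction on the first derivation, invokes well-formedness of LF contexts (Theorem~\ref{lm:lfctxwf}) and LF weakening (Lemma~\ref{lm:extlfctx}) to type the extended substitution $\sigma, x$ in the binding cases, dispatches the variable case to the LF Variable Lookup Lemma~\ref{lm:lflookup}, and discharges the application and closure cases "by definition and composition of substitution" — exactly the commutation equation you identify as the main bookkeeping obstacle. No gaps.
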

\begin{proof}
  By induction on the derivation on the first derivation using
  well-formedness of LF contexts (Lemma \ref{lm:lfctxwf}) and LF
  weakening (Lemma \ref{lm:extlfctx}). \SHORTVERSION{For the LF variable case we
  prove by induction on $\Gamma ; \Phi \vdash  \sigma : \Psi$ that
for any $x \in \Psi$, $\Gamma; \Phi \vdash M : \lfs{\sigma}{\Psi} A $
where $\pos x~\lfs\sigma\Psi = M$.}
\LONGVERSION{In the LF variable case, we refer to Lemma \ref{lm:lflookup}.}
\LONGVERSIONCHECKED{
Most cases are straightforward; we only show a few cases, the others are similar.
\\[1em]
\pcase{$\ianc
{
\Gamma \vdash \Psi : \ctx
\quad  \Psi(x) = A
}
{\Gamma ; \Psi \vdash x : A}{}
$}
% \prf{$\Psi = \Psi_i, x{:}A, \Psi_0$ and $\sigma = \sigma_i, M, \sigma_0$ \hfill by def.}
\prf{$\Gamma ; \Phi \vdash M :  \lfss {\sigma}{\Psi}A$ and $\pos x~\lfs\sigma\Psi = M$ \hfill by Lemma \ref{lm:lflookup}}
\prf{$\Gamma ; \Phi \vdash \lfss \sigma \Psi x :  \lfss \sigma \Psi A$ \hfill by subst. def.}
\\
\pcase{$\ianc{\const a:K \in \Sigma \qquad \Gamma \vdash \Psi : \ctx }
             {\Gamma; \Psi \vdash \const a: K}{}$}
\prf{$\Gamma\vdash \Phi : \ctx$ \hfill by Lemma \ref{lm:lfctxwf}}
\prf{$\lfss\sigma\Psi (\const a:K) \in \Sigma$ \hfill as $K$ is
  closed and $\lfss\sigma\Psi K = K$}
\prf{$\Gamma; \Phi \vdash \lfss\sigma\Psi \const a: \lfss\sigma\Psi K$ \hfill by rule and substitution def. }
\\[1em]
%% Pi Type Case
\pcase{$\ibnc{\Gamma ; \Psi \vdash A : \lftype}
             {\Gamma ; \Psi, x:A \vdash B : \lftype}
             {\Gamma ; \Psi \vdash \Pityp x A B : \lftype}{}$}
\prf{$\Gamma ; \Phi \vdash \lfss\sigma\Psi A : \lftype$ \hfill by IH }
\prf{$\Gamma \vdash \Phi : \ctx$ \hfill by the lemma \ref{lm:lfctxwf} using $~\Gamma ; \Phi \vdash  \sigma : \Psi$}
\prf{$\Gamma \vdash \Phi, x:\lfss\sigma\Psi A : \ctx$ \hfill by rule}
\prf{$\Gamma ; \Phi \vdash  \sigma : \Psi$ \hfill by assumption}
\prf{$\Gamma ; \Phi, x:\lfss \sigma\Psi A  \vdash  \sigma : \Psi$ \hfill by Lemma \ref{lm:extlfctx}}
\prf{$\Gamma ; \Phi, x:\lfss \sigma\Psi A \vdash x : (\lfs\sigma\Psi A)$ \hfill by rule}
\prf{$\Gamma ; \Phi, x:\lfss \sigma\Psi A \vdash \sigma,~ x : \Psi, x:A$ \hfill by rule}
\prf{$\Gamma ; \Phi, x:\lfss \sigma\Psi A \vdash \lfss {\sigma, ~x}{\Psi,x}B : \lftype$ \hfill by IH}
\prf{$\Gamma ; \Phi \vdash \Pityp x {\lfss\sigma\Psi A} {\lfss{\sigma,x}{\Psi,x}B} : \lftype$ \hfill by rule}
\prf{$\Gamma ; \Phi \vdash \lfss\sigma\Psi(\Pityp x A B) : \lfss\sigma\Psi \lftype$ \hfill by substitution def.}
\\[1em]
\pcase{$\ibnc{\Gamma \vdash t : [\Phi' \vdash A] ~\mbox{or}~\Gamma \vdash t : [\Phi' \vdash_\# A]}
            {\Gamma; \Psi \vdash \sigma' : \Phi'}
            {\Gamma ; \Psi \vdash \unbox t {\sigma'} : \lfss{\sigma'}{\Phi'}A}{}$}
\prf{$\Gamma; \Phi \vdash \lfss\sigma\Psi \sigma' : \Phi'$ \hfill IH}
\prf{$\Gamma; \Phi \vdash \unbox t {\lfss\sigma\Psi\sigma'} : \lfss {\lfss\sigma\Psi \sigma'}{\Phi'}A$ \hfill by rule}
\prf{$\Gamma; \Phi \vdash \lfss\sigma\Psi (\unbox t {\sigma'}) : \lfss\sigma\Psi(\lfss{\sigma'}{\Phi'}A)$ \hfill by substitution def.}\\[1em]
\pcase{
$\ibnc{\Gamma ; \Psi \vdash M : \Pityp x A B}
      {\Gamma ; \Psi \vdash N : A}
      {\Gamma ; \Psi \vdash M~N : [N/x]B}{}$}
\prf{$\Gamma ; \Phi \vdash \lfss\sigma\Psi M :\lfss\sigma\Psi ( \Pityp x A B)$ \hfill by IH}
\prf{$\Gamma ; \Phi \vdash \lfss\sigma\Psi M : \Pityp x {\lfss\sigma\Psi A} {(\lfss{\sigma,~x}{\Psi, x}B)}$ \hfill by substitution def.}
\prf{$\Gamma ; \Phi \vdash \lfss\sigma\Psi N : \lfss \sigma\Psi A$ \hfill by IH}
\prf{$\Gamma ; \Phi \vdash (\lfss\sigma\Psi M)~(\lfss\sigma\Psi N) : [\lfss\sigma\Psi N/x](\lfss{\sigma,~x}{\Psi, x}~B)$\hfill by rule}
\prf{$\Gamma ; \Phi \vdash \lfss{\sigma}{\Psi}(M~N) : \lfss{\sigma}{\Psi}([ N /x]B)$\hfill by definition and composition of substitution }\\
\pcase{$\ianc{\Gamma \vdash \Psi, \wvec{x{:}A} :\ctx \quad }
             {\Gamma ;\Psi, \wvec{x:A} \vdash \wk{\hatctx\Psi} : \Psi}{}$}\\
%    \prf{$\Gamma \vdash \Phi:\ctx $ \hfill by IH}
%    \prf{$\Phi$ has the form $\unboxc{t}, \wvec{x:A'}$ \hfill by construction of $~\Gamma ; \Phi \vdash  \sigma : \Psi$ }
    \prf{$\Gamma ; \Phi \vdash \sigma  :\Psi, \wvec{x{:}A}$ \hfill by assumption}
    \prf{$\Gamma ; \Phi \vdash \sigma' :\Psi$ \hfill by inversion where $\sigma = \sigma', M_n, \ldots M_1$}
    \prf{$\Gamma ; \Phi \vdash \lfss \sigma{\Psi,\vec x}(\wk{\hatctx\Psi}) : \Psi$ \hfill using the fact that  $\sigma' = \trunc_\Psi (\sigma / \hatctx \Psi, \vec x) = \lfss\sigma{\Psi, \vec x}(\wk{\hatctx\Psi})$}\\
    \pcase{$\ibnc{\Gamma ; \Psi \vdash \sigma' : \Phi'}
                 {\Gamma ; \Psi \vdash M : \lfss{\sigma'}{\Phi'}A}
                 {\Gamma ; \Psi \vdash \sigma', M : \Phi', x:A}{}$ }\\
    \prf{$\Gamma ; \Phi \vdash \lfss\sigma\Psi \sigma' : \Phi'$ \hfill by IH}
    \prf{$\Gamma ; \Phi \vdash \lfss\sigma\Psi M : \lfss\sigma\Psi (\lfss{\sigma'}{\Phi'} A$ \hfill by IH }
    \prf{$\Gamma ; \Phi \vdash \lfss\sigma\Psi M : \lfss{\lfss\sigma\Psi\sigma'}{\Phi'} A$ \hfill by substitution def.}
    \prf{$\Gamma ; \Phi \vdash \lfss\sigma\Psi \sigma', \lfss\sigma\Psi M : \Phi', x:A$ \hfill by rule}
    \prf{$\Gamma ; \Phi \vdash \lfss\sigma\Psi (\sigma', M) : \Phi', x:A$  \hfill by substitution def.}
}
\end{proof}

\LONGVERSION{We omit here the substitution lemma for the restricted LF typing
judgments $\Gamma ; \Psi \vdash_\# M : A$ and $\Gamma ; \Phi \vdash_\#
\sigma : \Psi$. However, it is worth noting that when we apply an LF
substitution $\sigma$ where $\Gamma ; \Phi \vdash \sigma : \Psi$ to
$M$ where $\Gamma ; \Psi \vdash_\# M : A$ we are not guaranteed to
obtain a variable and hence we can only conclude $\Gamma ; \Phi \vdash
\lfs \sigma \Psi M : \lfs \sigma \Psi A$. We can only guarantee that
we remain in $\vdash_\#$ if the LF substitution is a variable
substitution.}

\begin{lemma}[LF Context Conversion]\label{lm:lfctxconv}
Assume $\Gamma \vdash \Psi, x{:}A : \ctx$ and $\Gamma ; \Psi \vdash B : \lftype$.\\\mbox{\quad}
If $\Gamma ; \Psi, x{:}A \vdash \JLF$ and $\Gamma ; \Psi \vdash A \equiv B : \lftype$
then $\Gamma ; \Psi, x{:}B \vdash \JLF$.
\end{lemma}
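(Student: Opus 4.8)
The plan is to obtain LF context conversion as a corollary of the LF Substitution Lemma (Lemma~\ref{lm:lfsubst}), rather than by a fresh induction. The key observation is that there is an ``identity-up-to-conversion'' LF substitution, namely $\sigma_0 := \wk{\hatctx\Psi}, x$, that maps the context $\Psi, x{:}A$ into $\Psi, x{:}B$: it renames every variable of $\Psi$ to itself and sends $x$ to $x$, the only subtlety being that $x$ must be re-typed from $B$ back to $A$. Once $\sigma_0$ is shown to be well-typed, applying Lemma~\ref{lm:lfsubst} transports every judgment $\Gamma ; \Psi, x{:}A \vdash \JLF$ to $\Gamma ; \Psi, x{:}B \vdash \lfss{\sigma_0}{\Psi, x{:}A}\JLF$, and since $\sigma_0$ acts as the identity this is literally $\Gamma ; \Psi, x{:}B \vdash \JLF$.

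Concretely, I would proceed as follows. First, from $\Gamma \vdash \Psi, x{:}A : \ctx$ and well-formedness of LF contexts (Lemma~\ref{lm:lfctxwf}) we get $\Gamma \vdash \Psi : \ctx$, and together with the hypothesis $\Gamma ; \Psi \vdash B : \lftype$ the context-formation rule gives $\Gamma \vdash \Psi, x{:}B : \ctx$. Second, I type $\sigma_0$: the weakening-substitution rule yields $\Gamma ; \Psi, x{:}B \vdash \wk{\hatctx\Psi} : \Psi$, and the variable rule yields $\Gamma ; \Psi, x{:}B \vdash x : B$. Weakening the hypothesis $\Gamma ; \Psi \vdash A \equiv B : \lftype$ into the extended context by LF Weakening (Lemma~\ref{lm:extlfctx}, which applies equally to the definitional-equality judgments) and using symmetry of definitional equality, the conversion rule gives $\Gamma ; \Psi, x{:}B \vdash x : A$. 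Since $A$ is well-kinded in $\Psi$ (inversion on $\Gamma \vdash \Psi, x{:}A : \ctx$) and $\wk{\hatctx\Psi}$ is the identity renaming on $\hatctx\Psi$, we have $\lfss{\wk{\hatctx\Psi}}{\Psi}A = A$, so the substitution-extension rule gives $\Gamma ; \Psi, x{:}B \vdash \sigma_0 : \Psi, x{:}A$.

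Third, I invoke Lemma~\ref{lm:lfsubst} with $\sigma_0$ to get $\Gamma ; \Psi, x{:}B \vdash \lfss{\sigma_0}{\Psi, x{:}A}\JLF$ from $\Gamma ; \Psi, x{:}A \vdash \JLF$, and then check that $\lfss{\sigma_0}{\Psi, x{:}A}$ is the identity on every LF object (term, type, substitution, and the types/contexts that appear on the right of the judgments): this is a routine structural induction using that the lookup operation sends $x$ to $x$ and each $y \in \hatctx\Psi$ to $y$, per the defining equations of lookup for $\wk{\hatctx\Psi}$. Rewriting with these identities yields exactly $\Gamma ; \Psi, x{:}B \vdash \JLF$.

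The main obstacle I anticipate is the apparent circularity in typing $\sigma_0$: deriving $\Gamma ; \Psi, x{:}B \vdash x : A$ uses the conversion rule, which is precisely the kind of reasoning that context conversion supports. This is not a genuine circularity, since we are only \emph{weakening} the given equality $A \equiv B$, not re-deriving it; but it does make the proof depend on LF weakening extending to the definitional-equality judgments and on symmetry of definitional equality, both of which are available. The only other non-trivial ingredient is the identity-substitution fact $\lfss{\wk{\hatctx\Psi},\, x}{\Psi, x}(-) = (-)$, which is an easy induction on syntax. A self-contained alternative would be direct induction on the derivation of $\Gamma ; \Psi, x{:}A \vdash \JLF$, but that forces a generalization to contexts of the form $\Psi, x{:}A, \Psi'$ (subderivations in the $\lambda$-case extend the context past $x$) together with a variable-lookup case handled by conversion --- effectively re-proving a special case of the substitution lemma --- so the corollary route above is preferable.
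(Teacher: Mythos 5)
Your proof is correct and is essentially the paper's own argument: both construct the substitution $\wk{\hatctx\Psi}, x$ from $\Psi, x{:}B$ to $\Psi, x{:}A$ by retyping $x$ via symmetry and conversion, then apply the LF Substitution Lemma (Lemma~\ref{lm:lfsubst}) and observe that this substitution acts as the identity on $\JLF$. Your explicit attention to weakening the equality $A \equiv B$ into the extended context is a detail the paper leaves implicit, but the route is the same.
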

\begin{proof}
Proof using LF Substitution (Lemma \ref{lm:lfsubst}).
\LONGVERSIONCHECKED{
\\[0.5em]
\prf{$\Gamma \vdash \Psi, x{:}A : \ctx$ \hfill by assumption}
\prf{$\Gamma \vdash \Psi : \ctx$ \hfill by inversion}
\prf{$\Gamma \vdash \Psi, x{:}B : \ctx$ \hfill by rule}
\prf{$\Gamma ; \Psi, x{:}B \vdash \wk{\hatctx{\Psi}}: \Psi$ \hfill by rule}
\prf{$\Gamma ; \Psi, x{:}B \vdash x : B$ \hfill by rule}
\prf{$\Gamma ; \Psi \vdash B \equiv A : \lftype$ \hfill by symmetry}
\prf{$\Gamma ; \Psi, x{:}B \vdash x : A$ \hfill by rule}
\prf{$\Gamma ; \Psi, x{:}B \vdash x :\lfs{\wk{\hatctx\Psi}}{\Psi} A$ \hfill as $\lfs{\wk{\hatctx\Psi}}{\Psi} A = A$}
\prf{$\Gamma ; \Psi, x{:}B \vdash \wk{\hatctx{\Psi}}, x : \Psi, x{:}A$ \hfill by rule}
\prf{$\Gamma ; \Psi, x{:}B \vdash \J$ \hfill by Lemma \ref{lm:lfsubst} and $\lfs{\wk{\hatctx\Psi}, x}{\Psi, x} \J = \J$}

%\prf{$\Gamma \vdash \Psi : \ctx$ \hfill by inversion on $\Gamma \vdash \Psi, x{:}A : \ctx$ } \prf{$\Gamma \vdash \Psi, x{:}B : \ctx$ \hfill by context well-formedness rules}
%\prf{$\Gamma ; \Psi \vdash B \equiv A : \lftype$ \hfill by symmetry}
%\prf{$\Gamma ; \Psi, x{:}B \vdash B \equiv A : \lftype$ \hfill by LF weakening (\ref{lm:extlfctx}) }
%\prf{$Gamma ; \Psi, x{:}B \vdash x : B$ \hfill by typing rule using $\Gamma \vdash \Psi, x{:}B : \ctx$}
%\prf{$\Gamma ; \Psi, x{:}B \vdash x : A$ \hfill conversion using $\Gamma ; \Psi, x{:}B \vdash B \equiv A : \lftype$}
%\prf{$\Gamma ; \Psi, x{:}B \vdash x : \lfs{\wk{\hatctx\Psi}}{\Psi} A$ \hfill as $A = \lfs{\wk{\hatctx\Psi}}{\Psi} A$ }
%\prf{$\Gamma ; \Psi, x{:}B \vdash \wk{\hatctx{\Psi}}, x, x : \Psi, x{:}B, y{:}A$ \hfill by typing rules for LF substitution}
%\prf{$\Gamma ; \Psi, x{:}B, y{:}A \vdash \lfs{\wk{\hatctx\Psi}, y}{\Psi, y} : \Psi, x{:}A$ \hfill by typing rule for LF substitution}
%\prf{$\Gamma ; \Psi, x{:}B \vdash \lfs{\wk{\hatctx{\Psi}}, x, x}{\Psi,
 %   x, y}\J'$ where $\J' = \lfs{\wk{\hatctx\Psi}, y}{\Psi, x}\J$ \hfill by LF subst. (Lemma \ref{lm:lfsubst})}
%\prf{$\lfs{\wk{\hatctx{\Psi}}, x, x}{\Psi, x, y}(\wk{\hatctx\Psi}, y) = \wk{\hatctx\Psi}, x$ \hfill by subst. def.}
%\prf{$\Gamma ; \Psi, x{:}B \vdash \lfs{\wk{\hatctx\Psi}, x}{\Psi, x} \J$ \hfill by previous lines}
%\prf{$\Gamma ; \Psi, x{:}B \vdash \J$ \hfill using the fact that $\lfs{\wk{\hatctx\Psi}, x}{\Psi, x} \J = \J$}
}
\end{proof}

\begin{lemma}[Functionality of LF Typing]\label{lm:func-lftyping}$\quad$ \\\mbox{\quad}
Let $\Gamma ; \Psi \vdash \sigma_1 : \Phi$ and $\Gamma ; \Psi \vdash \sigma_2 : \Phi$, and
 $\Gamma ; \Psi \vdash \sigma_1 \equiv \sigma_2 : \Phi$.
\begin{enumerate}
 \item If $\Phi = \Phi_i, x_i{:}A, \wvec{y{:}A}$ and $\Gamma ; \Phi \vdash x_i : A$
      then $\Gamma ; \Psi \vdash \lfs{\sigma_1}\Phi (x_i) \equiv \lfs{\sigma_2}\Phi  (x_i) :
     \lfs{\sigma_1}\Phi (A)$.
 \item If $\Gamma ; \Phi \vdash \sigma : \Phi'$
 then $\Gamma ; \Psi \vdash \lfs{\sigma_1}{\Phi}\sigma \equiv \lfs{\sigma_2}{\Phi}\sigma : \Phi'$.

\item If $\Gamma ; \Phi \vdash M : A$ then
      $\Gamma ; \Psi \vdash \lfs{\sigma_1}{\Phi}M \equiv \lfs{\sigma_2}{\Phi}M : \lfs{\sigma_1}{\Phi}A$.
\LONGVERSION{\item If $\Gamma ; \Phi \vdash A : \lftype$ then
      $\Gamma ; \Psi \vdash \lfs{\sigma_1}{\Phi}A \equiv \lfs{\sigma_2}{\Phi}A : \lftype$.}
\end{enumerate}
\end{lemma}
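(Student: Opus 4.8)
The plan is to prove the four statements by one simultaneous induction, with statement~(1) recursing on the derivation of $\Gamma ; \Psi \vdash \sigma_1 \equiv \sigma_2 : \Phi$ and statements~(2)--(4) recursing on the typing derivation of the object being substituted into ($\Gamma ; \Phi \vdash \sigma : \Phi'$, $\Gamma ; \Phi \vdash M : A$, and $\Gamma ; \Phi \vdash A : \lftype$ respectively), the common measure being the size of these derivations. Throughout, the substitutions $\sigma_1,\sigma_2$ and the equality $\sigma_1 \equiv \sigma_2$ travel along as parameters and are extended with a fresh variable whenever the induction crosses a binder. The statements are genuinely interdependent: the LF-variable case of~(3) appeals to~(1); statement~(1) on an extension congruence either reads off the two (definitionally equal) trailing entries --- using that $\lfs{\sigma_1}{\Phi}(A)$ equals $\lfs{\sigma_1'}{\Phi'}(A)$ when $A$ omits the trailing variable --- or recurses on the shorter substitutions, the weakening-substitution equality rules resolve every variable to itself (so we close by reflexivity), and the closure rules use type conversion supplied by~(4); and statement~(4) on a $\Pi$-type recurses via~(4) on its components (the second in an extended context), and on a type-level application via~(3).

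For the remaining cases of~(2)--(4): constants and their signature types are closed, so both substitutions act as the identity and we finish by reflexivity. A closure $\unbox{t}{\sigma'}$ leaves the computation $t$ untouched --- it has no free LF variables, so no circularity with the computation layer arises --- and reduces to the induction hypothesis on $\sigma'$ (statement~(2)) followed by the congruence rule for closures. For an application $M~N$ we combine the induction hypotheses on $M$ and $N$ via the LF-application congruence rule and rewrite the resulting type with the composition law of the LF substitution operation so that it reads $\lfs{\sigma_1}{\Phi}([N/x]B)$. The LF type-conversion rule for $M$ (from $B$ to $A$ along $\Gamma ; \Phi \vdash B \equiv A : \lftype$) is discharged by pairing the induction hypothesis with $\Gamma ; \Psi \vdash \lfs{\sigma_1}{\Phi}B \equiv \lfs{\sigma_1}{\Phi}A : \lftype$, which is the LF Substitution Lemma (Lemma~\ref{lm:lfsubst}, which also covers definitional-equality judgments) instantiated with the single substitution $\sigma_1$. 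The $\lambda$-abstraction case --- and the $\Pi$-case of statement~(4) --- is the delicate one: to apply the induction hypothesis to the body one must extend both substitutions by a fresh $x$ living in a common context $\Psi, x{:}\lfs{\sigma_1}{\Phi}A$; extending $\sigma_1$ is immediate, but typing $\sigma_2, x$ there requires $\Gamma ; \Psi \vdash \lfs{\sigma_1}{\Phi}A \equiv \lfs{\sigma_2}{\Phi}A : \lftype$ (statement~(4)) followed by LF Context Conversion (Lemma~\ref{lm:lfctxconv}), while LF Weakening (Lemma~\ref{lm:extlfctx}) moves $\sigma_1,\sigma_2$ into the extended context; then $\sigma_1, x \equiv \sigma_2, x : \Phi, x{:}A$ is the extension congruence (with $x \equiv x$ by reflexivity), and the body equality produced by the induction hypothesis is lifted by the $\lambda$-congruence rule. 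Statement~(2) is analogous, handling the weakening substitution $\wk{\hatctx\Phi}$ via truncation and $\sigma', M$ via statements~(2) and~(3).

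The main obstacle I anticipate is precisely this binder case: because $\sigma_1$ and $\sigma_2$ produce definitionally equal but syntactically distinct domain types, the two extended substitutions do not a priori inhabit the same LF context, and reconciling this is what forces LF Context Conversion and the substitution lemma into the argument --- and, structurally, is why the four statements cannot be proven one after another but must be threaded through a single simultaneous induction. A secondary, purely administrative nuisance is that every congruence-rule conclusion carries a type annotation that has to be normalized into the form $\lfs{\sigma_1}{\Phi}(\cdot)$ via the composition behaviour of the LF substitution operation.
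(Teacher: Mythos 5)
Your proposal is correct and follows essentially the same route as the paper: an induction on the typing derivations for statements (2)--(4) combined with an (inner) induction on the derivation of $\Gamma ; \Psi \vdash \sigma_1 \equiv \sigma_2 : \Phi$ for the variable statement (1), with the variable case of (3) discharged by (1), the conversion case by the LF Substitution Lemma, and the binder cases by extending both substitutions using weakening and context/type conversion. The only sketch-level omission is that the inner induction must also treat the symmetry and transitivity closure rules of $\equiv$ (the paper handles symmetry via the induction hypothesis plus type conversion), but this does not affect the soundness of the approach.
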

\begin{proof}
We prove these statements by induction on the typing derivation $\Gamma ; \Phi \vdash M : A$ (resp. $\Gamma ; \Phi \vdash \sigma : \Phi'$\LONGVERSION{\/and  $\Gamma ; \Phi \vdash A : \lftype$}) and followed by another inner induction on $\Gamma ; \Psi \vdash \sigma_1 \equiv \sigma_2 : \Phi$ to prove (1).
\LONGVERSIONCHECKED{
\\[1em]
% Most cases are straightforward; we concentrate here on the variable
% case.
We concentrate first on the variable case (1).

\pcase{$\ianc
 {\Gamma \vdash \Phi_0, x_0{:}A_0, \wvec{y{:}B} : \ctx}
 {\Gamma ; \Phi_0, x_0{:}A_0, \wvec{y{:}B} \vdash \wk{\hatctx\Phi,x} \equiv \wk{\Phi_0}, x_0: \Phi_0, x_0{:}A_0}{}
$}
\prf{Let $x_i \in \hatctx\Phi_0$ and $\Phi_0 = \bullet, x_n, \ldots, x_1$ where $\bullet$ stands for either the empty context or a variable.
Then $\pos {x_i} \lfs{\wk{\bullet, x_n \ldots, x_1}}{x_n \ldots, x_1} = x_i$}
% \prf{By induction on $i$ (or the context $\Phi_0, x_0$)}
\\% [0.5em]
 \prf{\emph{Subcase.} $x_i = x_0$}
 \prf{$\pos {x_i}~\lfs {\wk{\hatctx\Phi,x_0}}{\Phi_0, x_0} = x_0$ \hfill since $x_i{:}A_i \in (\Phi_0, x_0:A_0)$}
\prf{$\pos {x_i}~\lfs { \wk{\Phi_0}, x_0}{\Phi_0, x_0} = x_0$ \hfill by $\pos{}{}$}
\\% [0.5em]
\prf{\emph{Subcase.} $x \not= x_0$ and $x_i \in x_n, \ldots, x_1$}
\prf{$\pos {x_i}~\lfs {\wk{\hatctx\Phi,x}}{\Phi_0, x_0} = x_i$ \hfill since $x_i{:}A_i \in (\Phi_0, x_0:A_0)$}
\prf{$\pos {x_i}~\lfs { \wk{\Phi_0}, x_0}{\Phi_0, x_0} = \pos
  {x_i}~\lfs{\wk{\Phi_0}}{\Phi_0} = x_i$ \\ \indent \hfill since  $\pos {x_i}  [\wk{\bullet, x_n \ldots, x_1}/{\bullet, x_n \ldots, x_1}] = x_i$}
\prf{$\Gamma ; \Phi_0,  x_0{:}A_0, \wvec{y{:}B} \vdash x_i \equiv x_i  : A_i$ \hfill
using $A_i = [\wk{\bullet, x_n, \ldots, x_{i-1}} / \bullet, x_n, \ldots, x_{i-1}]A_i$}
\\[1em]
% HIDE Substitution Closure
\SUBSTCLO{
\pcase{$\ianc
{\Gamma ; \Phi' \vdash \sigma' : \Phi, \wvec{y{:}A} \qquad \Gamma ; \Psi \vdash \sigma : \Phi' }
{\Gamma ; \Psi \vdash \sclo {\hatctx\Phi} {\unbox{\cbox{\hatctx {\Phi'} \vdash \sigma'}}{\sigma}}  \equiv   \lfs\sigma {\Phi'}~(\lfss {\sigma'}{\hatctx{\Phi},\vec x}  ~\wk\Phi) : \Phi}{}
$
}
\prf{$\pos x~(\sclo {\hatctx\Phi} {\unbox{\cbox{\hatctx {\Phi'}  \vdash \sigma'}}{\sigma}})  =  \lfs{\sigma}{\Phi'}(M) $ where $\trunc_\Psi ~ (\sigma' / \hatctx\Psi) = \sigma''~\mathrm{and}~
    \pos x~\lfs{\sigma''}{\Psi} = M$ }
\prf{$\Gamma ; \Phi' \vdash M : \lfs{\sigma''}{\Psi}(A)$ \hfill by LF Variable Lookup (Lemma \ref{lm:lflookup})}
\prf{$\Gamma ; \Psi \vdash \lfs{\sigma}{\Phi'}(M) : \lfs{\sigma}{\Phi'}(\lfs{\sigma''}{\Psi}(A))$ \hfill by LF subst. lemma}
\prf{$\Gamma ; \Psi \vdash \lfs{\sigma}{\Phi'}(M) : \lfs{\sclo {\hatctx\Phi} {\unbox{\cbox{\hatctx {\Phi'} \vdash \sigma'}}{\sigma}}  }{\Phi} A$ \hfill by LF subst. prop.}
\prf{$\pos~x~ (\lfss {\sigma'}{\hatctx{\Phi},\vec x}  ~\wk\Phi / \hatctx\Phi) = \pos~x~(\sigma''/\hatctx\Phi) = M$ where $\sigma'' =\trunc_\Psi ~ (\sigma' / \hatctx\Phi)$}
\prf{$\Gamma ; \Psi \vdash \lfs{\sclo {\hatctx\Phi} {\unbox{\cbox{\hatctx {\Phi'} \vdash \sigma'}}{\sigma}}}{\Phi}(x)  \equiv   \lfs\sigma {\Phi'}~(\lfs {\lfss {\sigma'}{\hatctx{\Phi},\vec x}  ~\wk\Phi}{\Phi}(x)) : \lfs{\sclo {\hatctx\Phi} {\unbox{\cbox{\hatctx {\Phi'} \vdash \sigma'}}{\sigma}}  }{\Phi} A$}
\\[1em]
\pcase{$\ianc
{r = \unbox{t}{\sigma} \qquad t \neq \cbox{\Phi' \vdash \sigma'}\qquad \Gamma ; \Phi \vdash \sigma : \Phi' \quad \Gamma \vdash t : \cbox{\Phi' \vdash \Psi, x_0{:}A_0,\wvec{x{:}A}}}
{\Gamma ; \Phi \vdash \sclo {\hatctx\Psi,x_0} r \equiv   \sclo{\hatctx\Psi} r, \unbox{\cbox{\hatctx\Psi,x_0 \vdash x_0}} {\sclo {\hatctx\Psi,x_0} r} : \Psi,x_0{:}A_0}{}
$}
\prf{$\pos~x_i~\lfs{\sclo {\hatctx\Psi,x_0} r }{\Psi} = \unbox{\cbox{\hatctx\Psi_i,x_i \vdash x_i}}{\sclo {\hatctx\Psi_i,x_i} r}$ where $\hatctx\Psi = \hatctx\Psi_i,x_i,\ldots,x_1$}
\prf{$\pos~x_i~ \lfs{\sclo{\hatctx\Psi} r, \unbox{\cbox{\hatctx\Psi,x_0 \vdash x_0}} {\sclo {\hatctx\Psi,x_0} r}}{\Psi,x_0}$}
\\
\prf{\emph{Subcase.}$x_i = x_0$}
\prf{$\pos~x_i~\lfs{\sclo {\hatctx\Psi,x_0} r }{\Psi} = \unbox{\cbox{\hatctx\Psi,x_0 \vdash x_0}}{\sclo {\hatctx\Psi,x_0} r}$}
\prf{$\pos~x_i~ \lfs{\sclo{\hatctx\Psi} r, \unbox{\cbox{\hatctx\Psi,x_0 \vdash x_0}} {\sclo {\hatctx\Psi,x_0} r}}{\Psi,x_0} = \unbox{\cbox{\hatctx\Psi,x_0 \vdash x_0}} {\sclo {\hatctx\Psi,x_0} r}$}
\prf{$\Gamma ; \Phi \vdash \lfs{ \sclo {\hatctx\Psi,x_0} r}{\Psi_0,x_0}(x)
     \equiv \lfs{\sclo{\hatctx\Psi} r, \unbox{\cbox{\hatctx\Psi,x_0 \vdash x_0}} {\sclo {\hatctx\Psi,x_0} r}}{\Psi_0, x_0} (x) : \lfs{ \sclo {\hatctx\Psi,x_0} r}{\Psi_0, x_0}(A)$}
\\
\prf{\emph{Subcase.}$x_i \neq x_0$}
\prf{$\pos~x_i~\lfs{\sclo {\hatctx\Psi,x_0} r }{\Psi} = \unbox{\cbox{\hatctx\Psi_i,x_i \vdash x_i}}{\sclo {\hatctx\Psi_i,x_i} r}$ where $\hatctx\Psi = \hatctx\Psi_i,x_i,\ldots,x_1$}
\prf{$\pos~x_i~ \lfs{\sclo{\hatctx\Psi} r, \unbox{\cbox{\hatctx\Psi,x_0 \vdash x_0}} {\sclo {\hatctx\Psi,x_0} r}}{\Psi,x_0} = \pos x_i~\lfs{\sclo{\hatctx\Psi} r}{\Psi} = \unbox{\cbox{\hatctx\Psi_i,x_i \vdash x_i}}{\sclo {\hatctx\Psi_i,x_i} r}
$}
}
\\[1em]
\pcase{$\ianc
{\Gamma ; \Psi \vdash \sigma = \sigma' : \Phi \quad
 \Gamma ; \Psi \vdash M \equiv N : \lfs \sigma \Phi A }
{\Gamma ; \Psi \vdash \sigma, M \equiv \sigma', N : \Phi, y{:}A}{}$
}
\prf{\emph{Subcase.} $x = y$}
\prf{$\pos x~\lfs{\sigma, M}{\Phi, x{:}A} = M$ \hfill by def. of $\pos{}{}$}
\prf{$\pos x~\lfs{\sigma', N}{\Phi, x{:}A} = N$ \hfill by def. of $\pos{}{}$}
\prf{$\Gamma ; \Psi \vdash M \equiv N : \lfs \sigma \Phi A$ \hfill by premise}
\prf{$\Gamma ; \Psi \vdash M \equiv N : \lfs {\sigma, M}{\Phi,x} A$ \hfill since $ \lfs {\sigma, M}{\Phi,x} A = \lfs \sigma \Phi A$ }
\\% [1em]
\pcase{$\ianc{\Gamma ; \Psi \vdash \sigma' \equiv \sigma : \Phi}
             {\Gamma ; \Psi \vdash \sigma \equiv \sigma' : \Phi}{}$}
\prf{$\Gamma ; \Psi \vdash \lfs{\sigma'}{\Phi}(x) \equiv \lfs{\sigma}{\Phi}(x) : \lfs{\sigma'}{\Phi}A$ \hfill by IH}
\prf{$\Gamma ; \Psi \vdash \lfs{\sigma'}{\Phi}A \equiv \lfs{\sigma}{\Phi} A$ \hfill by IH}
\prf{$\Gamma ; \Psi \vdash \lfs{\sigma}\Phi (x_i) \equiv \lfs{\sigma'}\Phi  (x_i) :
     \lfs{\sigma'}\Phi (A)$ \hfill by type conversion}
\\[1em]
% We now show a few cases for \fbox{If $\Gamma ; \Phi \vdash M :A$ then $\Gamma ; \Phi \vdash \lfs{\sigma_1}{\Phi}(M) \equiv \lfs{\sigma_2}{\Phi}(N): \lfs{\sigma_1}{\Phi}(A)$}
% \\[1em]
% \pcase{$\ianc{x{:}A \in \Phi}{\Gamma ; \Phi \vdash x : A}{}$}
% \prf{$\Gamma ; \Psi \vdash \sigma_1 = \sigma_2 : \Phi$ \hfill by
%  assumption}
% \prf{To Show: }
%
}
\end{proof}

\begin{lemma}[Equality Inversion]\label{lm:lfeqinv}
If $\Gamma ; \Psi \vdash A \equiv \Pi x{:}B_1.B_2 : \lftype$
or $\Gamma ; \Psi \vdash \Pi x{:}B_1.B_2 \equiv A : \lftype$
then $A = \Pi x{:}A_1.A_2$ for some $A_1$ and $A_2$
and $\Gamma ; \Psi \vdash A_1 \equiv B_1 : \lftype$
and $\Gamma ; \Psi, x{:}A_1 \vdash A_2 \equiv B_2 : \lftype$.
\end{lemma}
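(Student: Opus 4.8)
The plan is to prove, by induction on the derivation of the definitional equality, a two-sided strengthening: for every derivation of $\Gamma;\Psi\vdash A\equiv A':\lftype$, (a) if $A'=\Pi x{:}B_1.B_2$ then $A=\Pi x{:}A_1.A_2$ with $\Gamma;\Psi\vdash A_1\equiv B_1:\lftype$ and $\Gamma;\Psi,x{:}A_1\vdash A_2\equiv B_2:\lftype$, and (b) the mirror claim with the roles of $A$ and $A'$ swapped. The two alternatives in the lemma statement are exactly instances of (a) and (b), and carrying both through the induction simultaneously is precisely what makes the symmetry rule go through.

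The decisive structural observation is that definitional equality on LF \emph{types} (in contrast to LF terms) is generated purely by reflexivity, symmetry, transitivity, and the congruence rules for the two type constructors $\const{a}\,M_1\ldots M_n$ and $\Pi x{:}A.B$: there are no $\beta/\eta$-style reductions at the type level, since the closures $\unbox t \sigma$ and the LF $\lambda$-abstractions occur only inside the argument terms $M_i$ of an atomic type. Hence the case analysis on the last rule is short. The $\Pi$-congruence rule supplies the two premises we need verbatim. The atomic-congruence rule and reflexivity at an atomic type make the hypotheses of both (a) and (b) vacuous, so there is nothing to prove; reflexivity at a $\Pi$-type is immediate (take $A_i=B_i$ and apply reflexivity). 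In the symmetry case the immediate subderivation carries the $\Pi$-type on the opposite side, so we invoke the other half of the induction hypothesis and read off the result. A kind-conversion rule, should one be present, can only replace $\lftype$ by a kind definitionally equal to it, which by the same purely congruential structure of kind equality must itself be $\lftype$, so we simply recurse.

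The only real work is the transitivity case, $\Gamma;\Psi\vdash A\equiv C:\lftype$ and $\Gamma;\Psi\vdash C\equiv\Pi x{:}B_1.B_2:\lftype$. Applying the induction hypothesis to the second subderivation forces $C=\Pi x{:}C_1.C_2$ with $\Gamma;\Psi\vdash C_1\equiv B_1:\lftype$ and $\Gamma;\Psi,x{:}C_1\vdash C_2\equiv B_2:\lftype$; since this fixes $C$ syntactically, the first subderivation now has the shape $\Gamma;\Psi\vdash A\equiv\Pi x{:}C_1.C_2:\lftype$, and a second appeal to the induction hypothesis yields $A=\Pi x{:}A_1.A_2$ with $\Gamma;\Psi\vdash A_1\equiv C_1:\lftype$ and $\Gamma;\Psi,x{:}A_1\vdash A_2\equiv C_2:\lftype$. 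Transitivity of LF equality closes the domain, $\Gamma;\Psi\vdash A_1\equiv B_1:\lftype$. For the codomain the two equalities $A_2\equiv C_2$ and $C_2\equiv B_2$ live in the different contexts $\Psi,x{:}A_1$ and $\Psi,x{:}C_1$; using $\Gamma;\Psi\vdash A_1\equiv C_1:\lftype$ I transport the second into $\Psi,x{:}A_1$ by LF Context Conversion (Lemma~\ref{lm:lfctxconv}), whose side conditions ($\Gamma\vdash\Psi,x{:}C_1:\ctx$ and $\Gamma;\Psi\vdash A_1:\lftype$) follow from well-formedness of LF contexts (Lemma~\ref{lm:lfctxwf}) together with the routine presupposition that the two sides of a well-formed definitional equality are well-kinded. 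Transitivity in $\Psi,x{:}A_1$ then gives $\Gamma;\Psi,x{:}A_1\vdash A_2\equiv B_2:\lftype$, completing part (a); part (b) is handled by the same argument with the congruence, symmetry, and transitivity cases phrased for the left-hand component. I expect the main obstacles to be exactly this context realignment in the transitivity case and verifying that the case split on the last rule is genuinely exhaustive --- which is where the ``no type-level reductions'' observation does the essential work, ruling out any derivation equating an atomic type with a $\Pi$-type.
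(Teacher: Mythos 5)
Your proof is correct and follows the same route as the paper, which proves this lemma simply ``by induction on the definitional equality derivation''; your write-up supplies the details that the paper leaves implicit (the two-sided strengthening for the symmetry case, the observation that no type-level reduction rule can equate an atomic type with a $\Pi$-type, and the use of LF Context Conversion to realign the codomain contexts in the transitivity case), all of which are the right ingredients.
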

\begin{proof}
By induction on the definitional equality derivation.
\end{proof}

\begin{lemma}[Injectivity of LF Pi-Types]\label{lm:lfpi-inj}
If $\Gamma ; \Psi \vdash \Pi x{:}A.B \equiv \Pi x{:}A'.B' : \lftype$ then
$\Gamma ; \Psi \vdash A \equiv A' : \lftype$ and
$\Gamma ; \Psi, x{:}A \vdash B \equiv B' : \lftype$.
\end{lemma}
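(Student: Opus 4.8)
The plan is to obtain this as an immediate corollary of Equality Inversion (Lemma~\ref{lm:lfeqinv}). Given the hypothesis $\Gamma ; \Psi \vdash \Pi x{:}A.B \equiv \Pi x{:}A'.B' : \lftype$, I would apply Lemma~\ref{lm:lfeqinv} in its first form, instantiating its ``$A$'' with $\Pi x{:}A.B$ and its ``$\Pi x{:}B_1.B_2$'' with $\Pi x{:}A'.B'$, i.e. taking $B_1 := A'$ and $B_2 := B'$. The lemma then tells us that $\Pi x{:}A.B$ is itself of the form $\Pi x{:}A_1.A_2$; by uniqueness of syntax this forces $A_1 = A$ and $A_2 = B$. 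The two equalities delivered by the lemma, $\Gamma ; \Psi \vdash A_1 \equiv B_1 : \lftype$ and $\Gamma ; \Psi, x{:}A_1 \vdash A_2 \equiv B_2 : \lftype$, are then exactly $\Gamma ; \Psi \vdash A \equiv A' : \lftype$ and $\Gamma ; \Psi, x{:}A \vdash B \equiv B' : \lftype$, which is the desired conclusion. So no new induction is needed at this point.

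It is worth recalling why Lemma~\ref{lm:lfeqinv} does the work here, since the injectivity statement looks superficially like it should require its own analysis of the equality derivation. The key structural fact about LF is that an LF type is always headed either by a type constant $\const a$ applied to LF terms or by $\Pi$, and none of the reduction/expansion rules for LF types relates a $\const a$-headed type to a $\Pi$-headed one, nor is a $\Pi$-type ever a $\beta\eta$-redex or the contractum of one. Hence in the induction proving Lemma~\ref{lm:lfeqinv}, any type definitionally equal to $\Pi x{:}B_1.B_2$ must again be a $\Pi$-type, and the base (congruence) rule for $\Pi$ directly yields the component equalities.

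The only places in that induction that require care are the structural rules of the equivalence closure, and I do not expect a genuine obstacle there. For transitivity, from $\Gamma;\Psi \vdash A \equiv C : \lftype$ and $\Gamma;\Psi \vdash C \equiv \Pi x{:}B_1.B_2 : \lftype$ the inner induction hypothesis on the second premise identifies $C$ as $\Pi x{:}C_1.C_2$ with $\Gamma;\Psi \vdash C_1 \equiv B_1$ and $\Gamma;\Psi, x{:}C_1 \vdash C_2 \equiv B_2$; the hypothesis on the first premise then identifies $A$ as $\Pi x{:}A_1.A_2$ with $\Gamma;\Psi \vdash A_1 \equiv C_1$ and $\Gamma;\Psi, x{:}A_1 \vdash A_2 \equiv C_2$. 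One concludes $\Gamma;\Psi \vdash A_1 \equiv B_1$ by transitivity, and for the codomains one first uses LF Context Conversion (Lemma~\ref{lm:lfctxconv}) with $A_1 \equiv C_1$ to move $\Gamma;\Psi, x{:}C_1 \vdash C_2 \equiv B_2$ to $\Gamma;\Psi, x{:}A_1 \vdash C_2 \equiv B_2$, then chains by transitivity to get $\Gamma;\Psi, x{:}A_1 \vdash A_2 \equiv B_2$. Symmetry is handled analogously by flipping the derived component equalities (again with a context-conversion step to keep the codomain contexts aligned). In short, the substantive subtlety — ensuring the type-level closure rules cannot escape the class of $\Pi$-types — is already absorbed into Lemma~\ref{lm:lfeqinv}, and the injectivity lemma itself is a one-line consequence.
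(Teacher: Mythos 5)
Your proposal is correct and matches the paper's own proof, which likewise derives injectivity as an immediate corollary of Equality Inversion (Lemma~\ref{lm:lfeqinv}); the additional discussion of how that inversion lemma's induction on the equality derivation handles transitivity and symmetry is consistent with how the paper establishes Lemma~\ref{lm:lfeqinv} itself.
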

\begin{proof}
By equality inversion  (Lemma \ref{lm:lfeqinv}).
\end{proof}

%%%%%%%%%%% Substitution lemma for computations %%%%%%%%%%%%%%%%%%%%%%%%%%%%%
\subsection{Elementary Properties of Computations}\label{sec:propcomp}
\begin{theorem}[Well-Formedness of Computation Context]~
  \label{lm:ctxwf}
  \begin{enumerate}
  \item \label{it:ctxwf} If $\D::~\vdash \Gamma, x{:}\ann\tau, \Gamma' $ then  $\Ca::~\vdash \Gamma $ and $\Ca$ is a sub-derivation of $\D$, i.e. $\Ca < \D$.
  \item \label{it:lfwf} If $\D::\Gamma;\Psi \vdash \JLF$ then $\Ca::~\vdash \Gamma $ and $\Ca$ is a sub-derivation of $\D$, i.e. $\Ca < \D$.
  \item \label{it:compwf} If $\D::\Gamma \vdash \Jcomp$ then $\Ca::~\vdash \Gamma $ and $\Ca$ is a sub-derivation of $\D$, i.e. $\Ca < \D$.
  \end{enumerate}
\end{theorem}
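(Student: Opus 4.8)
The plan is to prove part~(\ref{it:ctxwf}) by a direct induction on the structure of $\Gamma'$, just as in Theorem~\ref{lm:lfctxwf}(\ref{it:lfctxwf}), and to prove parts~(\ref{it:lfwf}) and~(\ref{it:compwf}) by one simultaneous induction. The latter two cannot be done in isolation because computations occur inside LF terms (as closures $\unbox t\sigma$), contextual objects occur inside computations (as $\cbox C$), and LF contexts, contextual types and schemas are built from both; hence $\Gamma;\Psi\vdash\JLF$ and $\Gamma\vdash\Jcomp$ are entangled with the auxiliary judgments $\Gamma\vdash\Psi:\ctx$, $\Gamma\vdash\Psi:\tmctx$, $\Gamma\vdash T$, $\Gamma\vdash C:T$ and their equality variants. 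I will therefore first \emph{strengthen} the statement to: ``for every derivation $\D$ of one of these context-indexed judgments there is $\Ca::{\vdash\Gamma}$ with $\Ca<\D$'', and prove the strengthened claim by structural induction on $\D$.

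For part~(\ref{it:ctxwf}): the context $\Gamma,x{:}\ann\tau,\Gamma'$ is non-empty, so $\D$ ends in the rule that forms $\vdash\Delta,z{:}\ann\sigma$ from $\vdash\Delta$ and $\Delta\vdash\ann\sigma:u$. If $\Gamma'=\cdot$ then $\Delta=\Gamma$ and the first premise is the desired $\Ca::{\vdash\Gamma}$, with $\Ca<\D$. If $\Gamma'=\Gamma'',z{:}\ann\sigma$ then $\Delta=\Gamma,x{:}\ann\tau,\Gamma''$; the first premise is a derivation $\D''<\D$ of $\vdash\Gamma,x{:}\ann\tau,\Gamma''$, and the claim follows from the induction hypothesis and transitivity of $<$.

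For the mutual induction I observe that every typing, kinding or definitional-equality rule of the calculus (Figures~\ref{fig:lftyping}, \ref{fig:lfkinding}, \ref{fig:lfctxtyping}, \ref{fig:ctxtyping}, \ref{fig:comptyping}, \ref{fig:comptypingrec}, \ref{fig:etype}, together with the omitted congruence and transitivity rules) falls into one of three cases. (i)~It is a ``leaf'' rule that already lists a derivation of $\vdash\Gamma$ among its premises, next to side conditions such as $\Psi(x)=A$, $\const c{:}A\in\Sigma$ or $y{:}\ann\tau\in\Gamma$ --- e.g.\ the LF and computation variable rules, the constant rules, $u_1:u_2$, $\tmctx:u$, $\cdot:\tmctx$, $\Gamma\vdash\cdot:\ctx$, $\Gamma\vdash\unboxc y:\ctx$, $\Gamma;\Psi,\wvec{x{:}A}\vdash\wk{\hatctx\Psi}:\Psi$; here I take $\Ca$ to be that subderivation, which is proper. (ii)~It has a premise that is itself a context-indexed judgment over the \emph{same} computation context $\Gamma$ (the LF context $\Psi$ may be enlarged, which is harmless) --- e.g.\ $\lambda x.M$, LF application, the LF- and computation-conversion rules, $\tmfn y t$ (use the well-formedness premise $\Gamma\vdash(y{:}\ann\tau_1)\arrow\tau_2:u$), $t\,s$, $\cbox C$, the recursor (use the premise $\Gamma\vdash t:\cbox{\Psi\vdash\tm}$), the $\beta$/$\eta$/unbox reductions, and the LF-substitution extension and weakening-equality rules; here the induction hypothesis on that premise gives $\Ca::{\vdash\Gamma}$ strictly below the premise, hence below $\D$. (iii)~Its only informative premise lives in an \emph{extended} context $\Gamma,\Gamma_0$ (the branch rules of Figure~\ref{fig:comptypingrec} are of this shape); then the induction hypothesis yields $\Ca_0::{\vdash\Gamma,\Gamma_0}$, and part~(\ref{it:ctxwf}) applied to $\Ca_0$ produces $\Ca::{\vdash\Gamma}$, still strictly below $\D$.

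I expect no genuinely hard step: the whole argument rests on the fact that $\vdash\Gamma$ is either recorded verbatim in a leaf rule or transported unchanged (possibly after trimming extra bindings with part~(\ref{it:ctxwf})) through every other rule. The only points requiring care are the \emph{completeness} of the strengthened statement --- it must be closed under ``is a premise of'' for \emph{all} rules of the calculus, including the cross-layer rules $\Gamma;\Psi\vdash\unbox t\sigma:\lfs\sigma\Phi A$ and $\Gamma\vdash\cbox C:\cbox T$, which is precisely why the auxiliary judgments $\Gamma\vdash\Psi:\ctx$, $\Gamma\vdash T$, $\Gamma\vdash C:T$ and their equality variants must be carried along in the induction --- and threading the strict-subderivation bound $\Ca<\D$ uniformly, using transitivity of $<$ in cases~(ii) and~(iii).
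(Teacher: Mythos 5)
Your proposal is correct and follows essentially the same route as the paper: part~(1) by induction on the structure of $\Gamma'$, and parts~(2) and~(3) by mutual induction on $\D$, reading off $\vdash\Gamma$ either from a leaf premise or from the induction hypothesis on a premise (trimming extra bindings via part~(1) when the premise lives in an extended context, exactly as in the paper's $\tmfn y t$ case). Your explicit strengthening to the auxiliary judgments $\Gamma\vdash\Psi:\ctx$, $\Gamma\vdash T$, $\Gamma\vdash C:T$, etc.\ is only a more careful spelling-out of what the paper's mutual induction implicitly assumes.
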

\begin{proof}
(\ref{it:ctxwf}) by induction on the structure of $\Gamma'$;
(\ref{it:lfwf}) and (\ref{it:compwf}) by mutual induction on $\D$.
\LONGVERSIONCHECKED{
\\[0.5em]
First statement:  If $\D::~\vdash \Gamma, x{:}\ann\tau, \Gamma' $ then  $\Ca::~\vdash \Gamma $ and $\Ca$ is a sub-derivation of $\D$, i.e. $\Ca < \D$.
\\[0.5em]
\pcase{$\Gamma' = \cdot$}
\prf{$\D ::~\vdash \Gamma,  x{:}\ann\tau$ \hfill by assumption}
\prf{$\Ca ::~\vdash \Gamma $ and $\Ca<\D$\hfill by inversion}
\\[-0.5em]
\pcase{$\Gamma' = \Gamma'',  y{:}\ann\tau'$}
\prf{$\D::~\vdash (\Gamma, x{:}\ann\tau, \Gamma'',  y{:}\ann\tau') $ \hfill by assumption}
\prf{$\D'::~\vdash \Gamma,  x{:}\ann\tau, \Gamma'' $ and $\D'<\D$ \hfill by inversion}
\prf{$\vdash \Gamma $ and $\Ca<\D$ \hfill by IH }
\\[0.5em]
For the 2nd and 3rd statement we show a few cases; most of the cases are straightforward and follow either directly by applying the induction hypothesis or by the premises of a rule. We only show one case.
\\[1em]

\pcase{$\D =\ianc{\Gamma \vdash t : \tau}{\Gamma \vdash t \equiv t : \tau}{}$}
\prf{$\Ca::~\vdash\Gamma$ and  $\Ca < \D$ \hfill by IH}

\pcase{$\D =\ianc{\Gamma, y{:}\ann\tau_1 \vdash t \equiv s : \tau_2}{\Gamma \vdash \tmfn y t \equiv \tmfn y s : (y:\ann\tau_1) \arrow \tau_2}{}$}
\prf{${\Ca'::}~\vdash\Gamma, y : \ann\tau_1$ and  $\Ca' < \D$ \hfill by IH}
\prf{${\Ca~::}~~\vdash\Gamma$ and $\Ca < \D$ \hfill by well-formed context rule}
}
\end{proof}

% For simplicity, we state the weakening and substitution property here for terms of type $\tau$ but
% they also holds when we replace a variable $y : \tmctx$ with an LF
% context $\Psi$ of the appropriate schema.

\begin{lemma}[Computation-level Weakening]\quad
  \label{lm:wkctx}
  \begin{enumerate}
  \item If $\Gamma_1,\Gamma_2 \vdash \Jcomp$ and
    $\vdash\Gamma_1,y:\ann\tau,\Gamma_2$  then $\Gamma_1,y:\tau, \Gamma_2 \vdash \Jcomp$
  \item If $\Gamma_1, \Gamma_2 ; \Psi \vdash \JLF$ and $\vdash\Gamma_1,y:\ann\tau,\Gamma_2$
       then $\Gamma_1,y:\ann\tau,\Gamma_2; \Psi \vdash \JLF$.
  \end{enumerate}
\end{lemma}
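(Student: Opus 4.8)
The proof will be by mutual induction on the typing/equality derivation. Because the LF layer and the computation layer of \cocon are mutually defined — LF terms may contain unboxed computations $\unbox t \sigma$ and computations may contain boxed contextual objects $\cbox C$ — parts (1) and (2) must be proved simultaneously, and I would in fact strengthen the induction hypothesis to cover every judgment of \cocon that mentions the computation context $\Gamma$: besides $\Jcomp$ and $\JLF$ this includes LF context well-formedness $\Gamma \vdash \Psi : \ctx$, schema checking $\Gamma \vdash \Psi : \tmctx$, and the contextual typing and equality judgments $\Gamma \vdash C : T$ and $\Gamma \vdash C \equiv C' : T$ (all of which merely thread $\Gamma$ through). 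In each case one applies the induction hypothesis to the premises and re-applies the same inference rule; the substitutions and definitional-equality judgments appearing in side conditions are untouched, since the LF and computation-level substitution operations act only on variables and are insensitive to weakening of the ambient context. The leaf cases are the variable lookups: $\Gamma \vdash y' : \ann\tau'$ holds again because inserting $y{:}\ann\tau$ in the middle of $\Gamma_1,\Gamma_2$ preserves the declaration $y'{:}\ann\tau'$ and the well-formedness side condition is exactly the hypothesis $\vdash \Gamma_1, y{:}\ann\tau, \Gamma_2$; the LF-variable rule ($\Psi(x) = A$) and the context-variable lookup ($\Gamma(\psi) = \tmctx$) are analogous.

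The one point requiring attention is re-establishing the well-formedness side condition whenever a rule goes under a binder that extends $\Gamma$ — the rule for $\tmfn {y'} t$, and the recursor branches of Fig.~\ref{fig:comptypingrec}, which extend $\Gamma$ with $\psi{:}\tmctx$, the contextual-object variables, and variables for the recursive calls. Before invoking the induction hypothesis on such a premise $\Gamma_1,\Gamma_2, y'{:}\ann\tau_1 \vdash \dots$ I need $\vdash \Gamma_1, y{:}\ann\tau, \Gamma_2, y'{:}\ann\tau_1$. This follows by a short side computation: Theorem~\ref{lm:ctxwf} applied to the premise gives $\vdash \Gamma_1,\Gamma_2, y'{:}\ann\tau_1$, inversion yields $\Gamma_1,\Gamma_2 \vdash \ann\tau_1 : u$, the induction hypothesis (part 1, on a strictly smaller derivation) gives $\Gamma_1, y{:}\ann\tau, \Gamma_2 \vdash \ann\tau_1 : u$, and re-applying the well-formed-context rule produces the desired $\vdash \Gamma_1, y{:}\ann\tau, \Gamma_2, y'{:}\ann\tau_1$. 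Extending the LF context $\Psi$ instead (e.g.\ in the rule for $\lambda x.M$) needs no such care, since $\Psi$ is not being weakened here.

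The remaining cases — applications $t~s$ and $M~N$, boxing $\cbox C$ and unboxing $\unbox t \sigma$ (whose premises straddle both layers), the conversion rules, the LF-substitution typing rules, and the schema-checking rules for LF contexts — are immediate from the induction hypotheses. I expect no real obstacle beyond the bookkeeping of the simultaneous induction and the small well-formedness sub-argument above; the proof closely mirrors those of Theorem~\ref{lm:ctxwf} and Lemma~\ref{lm:extlfctx}.
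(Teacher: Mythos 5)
Your proposal is correct and follows essentially the same route as the paper: mutual induction on the derivation, re-applying each rule after invoking the induction hypothesis on the premises, with the well-formedness side condition for binders that extend $\Gamma$ recovered exactly as you describe — via Theorem~\ref{lm:ctxwf}, inversion to obtain $\Gamma_1,\Gamma_2 \vdash \ann\tau_1 : u$, the induction hypothesis, and the context-formation rule. Your additional remarks about strengthening the statement to the auxiliary $\Gamma$-indexed judgments and about the recursor branches are just the bookkeeping the paper leaves implicit under ``mutual induction.''
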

\begin{proof}
  Proof by  mutual induction exploiting Lemma \ref{lm:ctxwf}.
\LONGVERSIONCHECKED{
\\[1em]
  \pcase{\ianc{\vdash \Gamma_1,\Gamma_2}{\Gamma_1,\Gamma_2 \vdash \cdot : \ctx}{}}
  \prf{$\vdash\Gamma_1,y:\ann\tau,\Gamma_2$ \hfill by assumption}
  \prf{$\Gamma_1,y:\ann\tau,\Gamma_2 \vdash \cdot : \ctx$ \hfill by rule}
\\
  \pcase{$\D = $
\ibnc{\Gamma_1,\Gamma_2 , y':\ann\tau_1 \vdash t :  \tau_2}
   {\Gamma_1,\Gamma_2 \vdash (y':\ann\tau_1) \arrow \tau_2 : u}
   {\Gamma_1,\Gamma_2 \vdash \tmfn {y'} t : (y':\ann\tau_1) \arrow \tau_2}{}}
  \prf{$\vdash \Gamma_1, y:\ann\tau,\Gamma_2$ \hfill by assumption}
  \prf{$\vdash \Gamma_1,\Gamma_2, y':\ann\tau_1$ \hfill by Lemma \ref{lm:ctxwf}}
  \prf{$\Gamma_1,\Gamma_2\vdash \ann\tau_1 : u$ \hfill by inversion}
  \prf{$\Gamma_1, y :\ann\tau,\Gamma_2\vdash \ann\tau_1 : u$ \hfill by IH}
  \prf{$\vdash \Gamma_1, y :\ann\tau ,\Gamma_2, y': \ann\tau_1$ \hfill by rule}
  \prf{$\Gamma_1,y:\ann\tau,\Gamma_2 \vdash (y':\ann\tau_1) \arrow \tau_2 : u$ \hfill by IH}
%  \prf{$ \Ca\;:: ~\quad\vdash\Gamma_1,\Gamma_2, y' : \tau_1 $ and $\Ca < \D$ \hfill by Lemma \ref{lm:ctxwf}}
%  \prf{$\Ca' :: \Gamma_1,\Gamma_2 \vdash \tau_1$ and $\Ca'<\D$\hfill by inversion}
%  \prf{$\Gamma_1,y:\tau,\Gamma_2 \vdash  \tau_1$ \hfill by IH}
  \prf{$\Gamma_1,y:\tau,\Gamma_2, y':\ann\tau_1 \vdash t : \tau_2$ \hfill by IH}
  \prf{$\Gamma_1,y:\tau,\Gamma_2 \vdash \tmfn y' t : (y':\ann\tau_1) \arrow \tau_2$ \hfill by rule}
\\
  \pcase{$\D =$ \ibnc{y' :\ann\tau' \in (\Gamma_1,\Gamma_2)}
                     {\vdash (\Gamma_1, \Gamma_2)}
                     {\Gamma_1,\Gamma_2 \vdash y' : \ann\tau'}{}}
  \prf{$y' :\ann\tau' \in (\Gamma_1,y:\ann\tau,\Gamma_2)$ \hfill by since $y' :\ann\tau' \in (\Gamma_1,\Gamma_2)$}
  \prf{$\vdash \Gamma_1,y:\ann\tau,\Gamma_2$ \hfill by assumption}
  \prf{$\Gamma_1,y:\ann\tau,\Gamma_2 \vdash y' : \ann\tau'$\hfill by rule}
}
\end{proof}

\begin{lemma}[Computation-level Substitution]~\label{lm:compsub}
  \begin{enumerate}
  \item % [Part 1]
    If $\vdash \Gamma, y :\ann\tau , \Gamma'$ and $\Gamma\vdash t : \ann\tau$ then $\vdash\Gamma, \{t/y\}\Gamma' $
    % [Part 2]
    \item If~$~\Gamma, y:\ann\tau,\Gamma' ; \Psi \vdash \JLF$ and $~\Gamma \vdash t : \ann\tau$
      then $\Gamma, \{t/y\} \Gamma'; \Psi \vdash \{t/y\}\JLF$.
    \item If~$~\Gamma, y:\ann\tau, \Gamma' \vdash \Jcomp~$ and $~\Gamma \vdash t : \ann\tau$
      then $\Gamma, \{t/y\}\Gamma'\vdash \{t/y\}\Jcomp$.
    \end{enumerate}
\end{lemma}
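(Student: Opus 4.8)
The plan is to prove the three statements simultaneously. Statement~(1) is an induction on the structure of $\Gamma'$ (equivalently, on the derivation of $\vdash \Gamma, y{:}\ann\tau, \Gamma'$), and statements~(2) and~(3) -- together with the kinding and kinding-equality judgments $\Gamma \vdash \tau : u$ and $\Gamma \vdash \tau \equiv \tau' : u$, which belong to the same family and are needed for~(1) -- are proved by mutual induction on the given typing or equality derivation. A single mutual induction is unavoidable, since a computation $t$ may occur inside an LF term through a closure $\unbox{t}{\sigma}$ while an LF context, an LF type, or a contextual object may occur inside a computation type through $\cbox{T}$; hence the LF and computation judgments must be treated together. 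Every recursive appeal is to a strict subderivation of the input: in particular, whenever a base rule carries a premise $\vdash \Gamma$ and I need $\vdash \Gamma, \{t/y\}\Gamma'$, I obtain it from statement~(1) applied to the context-well-formedness subderivation extracted by Theorems~\ref{lm:ctxwf} and~\ref{lm:lfctxwf}.

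Before the main cases I would establish two routine commutation facts, each by a direct induction on syntax. First, computation-level substitution commutes with \emph{LF} substitution: $\{t/y\}(\lfs{\sigma}{\Phi}M) = \lfs{\{t/y\}\sigma}{\{t/y\}\Phi}(\{t/y\}M)$, and likewise for LF types, LF kinds, and composition of LF substitutions. Second, two computation-level substitutions commute as expected: for $z \neq y$ with $y \notin \FV(s)$, $\{s/z\}(\{t/y\}u) = \{(\{t/y\}s)/z\}(\{t/y\}u)$. The first fact is what pushes $\{t/y\}$ through the conclusion types of the rules that apply an LF substitution (the $\unbox{t'}{\sigma}$ rule, LF application, and the motive instantiation $\{\Psi/\psi,\,t/y\}\tau$ of the recursor); the second pushes $\{t/y\}$ through the $\beta$-reduction, the $\cbox{T}$-expansion, and the recursor-unfolding cases of definitional equality, all of which expose a substitution underneath $\{t/y\}$.

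With these facts in hand most cases are mechanical: apply the induction hypotheses to the premises, rewrite the conclusion with the commutation facts, and reapply the rule. The only genuinely rule-specific points are the variable rule and the case $\ann\tau = \tmctx$. For $\Gamma, y{:}\ann\tau, \Gamma' \vdash z : \ann\tau'$ I split on the position of $z$: if $z \in \Gamma$ it and its type are untouched; if $z \in \Gamma'$ I apply the rule in $\Gamma, \{t/y\}\Gamma'$ at type $\{t/y\}\ann\tau'$; and if $z = y$ then $\ann\tau' = \ann\tau$, $\{t/y\}y = t$, and $\{t/y\}\ann\tau = \ann\tau$ (as $y \notin \FV(\ann\tau)$ by well-formedness of the context), so $\Gamma, \{t/y\}\Gamma' \vdash t : \ann\tau$ follows from $\Gamma \vdash t : \ann\tau$ by computation-level weakening (Lemma~\ref{lm:wkctx}). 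When $\ann\tau = \tmctx$, the operation $\{t/y\}$ is a context substitution that replaces the context variable $y$ by a concrete LF context; besides the commutation bookkeeping one must then inspect the rules that mention the \emph{shape} of an LF context -- the LF variable rule, the weakening-substitution rule for $\wk{\hatctx\Psi}$ (whose erased domain changes when the variable's prefix is expanded into declarations), and the schema-checking rules for $\Gamma \vdash \Psi : \tmctx$ -- which go through by re-deriving over the expanded context and appealing to LF weakening (Lemma~\ref{lm:extlfctx}) and LF substitution (Lemma~\ref{lm:lfsubst}).

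The step I expect to be the main obstacle is the recursor. Its typing rule and the three branch rules carry the motive $\IH = (\psi{:}\tmctx) \arrow (y{:}\cbox{\psi \vdash \tm}) \arrow \tau$ together with compound instantiations such as $\{\cbox{\hatctx{\psi} \vdash \capp~\unbox{m}{\id}~\unbox{n}{\id}}/y\}\tau$ and $\{(\psi, x{:}\tm)/\psi,\,\cbox{\hatctx{\psi}, x \vdash \unbox{q}{\wk{\psi}}}/p\}\tau$, so pushing $\{t/y\}$ through them needs repeated, carefully ordered use of \emph{both} commutation facts, and it needs the branch-bound variables $\psi, p, m, n, f_m, f_n$ and the LF binder $x$ to be $\alpha$-renamed away from $\FV(t)$ before $\{t/y\}$ is pushed in -- precisely the freshness side conditions already built into the computation-level substitution operation. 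Once $\{t/y\}$ is shown to distribute over these compound substitutions, each branch premise is discharged by the induction hypothesis exactly as for any other computation rule.
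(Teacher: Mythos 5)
Your proposal matches the paper's proof in essentials: part (1) by induction on the structure of $\Gamma'$, parts (2) and (3) by mutual induction on the given derivation, computation-level weakening (Lemma~\ref{lm:wkctx}) for the case of the substituted variable itself, and composition/commutation of substitutions to push $\{t/y\}$ through the types in the application and recursor conclusions (the paper invokes these as ``by definition and composition rules of substitution'' rather than stating them as separate lemmas as you do). One small slip: your second commutation fact should read $\{t/y\}(\{s/z\}u) = \{(\{t/y\}s)/z\}(\{t/y\}u)$ --- as written, with $\{s/z\}$ applied on the outside of the left-hand side, it is trivially true when $y \notin \FV(s)$ and is not the identity you actually use.
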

\begin{proof}
  By mutual induction on the first derivation exploiting Lemma \ref{lm:wkctx}.
% (either the derivation is smaller or $\Gamma'$ is getting smaller).\\
  \LONGVERSIONCHECKED{
\\
We show here a few cases. Most cases are straightforward and only
require us to apply the induction hypothesis.
    \\ [1em]
    Part 1.\\[0.5em]
    \pcase{$\Gamma' = \cdot$}
    \prf{$\vdash \Gamma, y:\ann\tau$ \hfill by assumption}
    \prf{$\vdash \Gamma$ \hfill by inversion}
    \\
    \pcase{$\Gamma' = \Gamma_0, x{:}\ann\tau_0$}
    \prf{$\vdash \Gamma, y{:}\ann\tau, \Gamma_0$ and
         $\Gamma, y{:}\ann\tau, \Gamma_0 \vdash \ann\tau_0 : u$ \hfill by inversion on assumption}
    \prf{$\vdash \Gamma, \{t/y\}\Gamma_0$ \hfill by IH (part 1)}
    \prf{$\Gamma, \{t/y\}\Gamma_0 \vdash \{t/y\}\ann\tau_0 : u$ \hfill by IH (part 2)}
    \prf{$\vdash \Gamma, \{t/y\}\Gamma_0, \{t/y\}\ann\tau_0$ \hfill by rule}
    \prf{$\vdash \Gamma, \{t/y\}(\Gamma_0, x{:}\ann\tau_0)$ \hfill by subst. def.}
    \\[0.5em]
    Part 2.
    \\[0.5em]
    \pcase{\ibnc{y'{:}\ann\tau' \in (\Gamma, y{:}\ann\tau,\Gamma')}
                {\vdash\Gamma, y{:}\ann\tau,\Gamma'}
                {\Gamma, y{:}\ann\tau,\Gamma' \vdash y' : \ann\tau'}{}  where $y \not=y'$}\\
    \emph{Subcase}: $y'{:}\ann\tau' \in \Gamma$.\\[0.5em]
    \prf{$\{t/y\}y' = y'$ \hfill by subst. def.}
    \prf{$\{t/y\}\ann\tau' = \ann\tau'$ \hfill by subst. def. and the fact that $y \not\in \FV(\ann\tau')$}
    \prf{$\vdash \Gamma, \{t/y\}\Gamma'$ \hfill by IH (part 1)}
    \prf{$y {:}\ann\tau' \in (\Gamma, \{t/y\}\Gamma')$ \hfill by previous lines}
    \prf{$\Gamma, \{t/y\}\Gamma' \vdash y' : \ann\tau'$ \hfill by rule}
    \\% [0.25em]
    \emph{Subcase}: $y{:}\ann\tau' \in \Gamma'$.\\[0.5em]
    \prf{$\vdash \Gamma, \{t/y\}\Gamma'$ \hfill by IH (part 1)}
    \prf{$y'{:}\{t/y\}\ann\tau' \in \{t/y\}\Gamma'$ \hfill by previous lines}
    \prf{$y {:}\{t/y\}\ann\tau' \in (\Gamma, \{t/y\}\Gamma')$ \hfill by previous lines}
    \prf{$\Gamma, \{t/y\}\Gamma' \vdash y' : \{t/y\}\ann\tau'$ \hfill by rule}
    \\% [em]
\pcase{\ibnc{y{:}\ann\tau \in (\Gamma, y{:}\ann\tau,\Gamma')}{\vdash\Gamma, y{:}\ann\tau,\Gamma'}{\Gamma, y{:}\ann\tau,\Gamma' \vdash y : \ann\tau}{}}\\
    \prf{$\Gamma \vdash t : \ann\tau$ \hfill by assumption}
    \prf{$\vdash \Gamma, \{t/y\}\Gamma'$ \hfill by IH (part 1)}
    \prf{$\{t/y\}\ann\tau = \ann\tau$ \hfill by subst. def. and the fact that $y \not\in \FV(\tau)$}
    \prf{$\Gamma, \{t/y\}\Gamma' \vdash \{t/y\}y : \{t/y\}\ann\tau$ \hfill by  Lemma \ref{lm:wkctx}}
    \\
\pcase{\ibnc{\Gamma, y{:}\ann\tau, \Gamma' ,x{:}\ann\tau_1 \vdash t' : \tau_2 }{\Gamma,y: \ann\tau,\Gamma' \vdash (x: \ann\tau_1) \arrow \tau_2 : u}{\Gamma, y{:}\ann\tau,\Gamma' \vdash \tmfn x t' : (x{:}\ann\tau_1) \arrow \tau_2}{}}\\
    \prf{$\Gamma, \{t/y\}(\Gamma', x{:}\ann\tau_1) \vdash \{t/y\}t' : \{t/y\}\tau_2$ \hfill by IH (part 2)}
    \prf{$\Gamma, \{t/y\}\Gamma', x{:}\{t/y\}\ann\tau_1 \vdash \{t/y\}t' : \{t/y\}\tau_2$ \hfill by subst. def.}
    \prf{$\Gamma,\{t/y\}\Gamma' \vdash \{t/y\}((x: \ann\tau_1) \arrow \tau_2) : \{t/y\}u$ \hfill by IH (part 2)}
    \prf{$\Gamma,\{t/y\}\Gamma' \vdash (x:\{t/y\}\ann\tau_1) \arrow \{t/y,x/x\}\tau_2) : \{t/y\}u$ \hfill by subst. def.}
    \prf{$\Gamma, \{t/y\}\Gamma' \vdash \{t/y\}(\tmfn x t') : \{t/y\}((x{:}\ann\tau_1) \arrow \tau_2)$ \hfill by rule and subst. def.}
    \\
    \pcase{\ibnc{\Gamma,y{:}\ann\tau,\Gamma'  \vdash t' : (x: \ann\tau_1) \arrow \tau_2}{\Gamma,y{:}\ann\tau,\Gamma'\vdash s : \tau_1}{\Gamma,y{:}\tau,\Gamma' \vdash t'~s : \{s/x\}\tau_2}{}}
    \prf{$\Gamma, \{t/y\}\Gamma'\vdash \{t/y\}t' : (x:\{t/y\}\ann\tau_1) \arrow \{t/y,~x/x\}\tau_2$ \hfill by IH (part 2) and definition of substitution}
    \prf{$\Gamma,\{t/y\}\Gamma'\vdash \{t/y\}s : \{t/y\}\ann\tau_1$ \hfill by IH (part 2)}
    \prf{$\Gamma,\{t/y\}\Gamma' \vdash (\{t/y\}t')~(\{t/y\}s) : \{\{t/y\}s/x\}(\{t/y, x/x\}\tau_2)$ \hfill by rule}
    \prf{$\Gamma,\{t/y\}\Gamma' \vdash \{t/y\}(t'~s) : \{t/y\}(\{s/x\}\tau_2)$ \hfill by definition and composition rules of substitution}
  }
\end{proof}

%%% Local Variables:
%%% mode: latex
%%% TeX-master: "main"
%%% End:

% \medskip

%\{Computation-level Substitutions}

Last, we define simultaneous computation-level substitution using the judgment \fbox{$\Gamma'\vdash  \theta : \Gamma$}. For simplicity, we overload the typing judgment simply writing $\Gamma \vdash t : \ann\tau$, although if $\ann\tau = \tmctx$, $t$ stands for a LF context.
%  (see Fig.~\ref{fig:csubst}).

% \begin{figure}[htb]
%  \centering
  \[
  \begin{array}{c}
%    \multicolumn{1}{l}{\mbox{Simultaneous Substitution (Computation-level):}~~\fbox{$\Gamma'\vdash  \theta : \Gamma$}}
    %%%%%%% Do we have to keep {\rho} for t/x
%    \\[0.5em]
    \infer{\Gamma' \vdash \cdot : \cdot}{\vdash \Gamma'}
    \quad
    \infer{\Gamma'\vdash\theta, t / x :\Gamma, x: \ann\tau}{\Gamma'\vdash \theta : \Gamma & \Gamma' \vdash t : \{\theta \}\ann\tau }
%\quad
%    \infer{\Gamma'\vdash\theta, \Psi / \psi :\Gamma, x: \tmctx}{\Gamma'\vdash \theta : \Gamma & \Gamma' \vdash \Psi : \tmctx}
%    \\[1em]
%    \multicolumn{1}{l}{\mbox{Weakening Substitution (Computation-level):}~~\fbox{$\Gamma'\leq_\rho\Gamma$}}
%    \\[0.5em]
%    \infer{\Gamma' \leq_{\cdot}  \cdot}{\vdash \Gamma'}
%    \quad
%    \infer{\Gamma'\leq_{\rho, y/ x} \Gamma, x: \tau}{\Gamma'\leq_\rho  \Gamma & y : \{\rho\}\tau\in \Gamma'
%}
  \end{array}
  \]
% \caption{Well-Typed Simultaneous Computation-level Substitutions}
%  \label{fig:csubst}
%\end{figure}

We distinguish between a substitution $\theta$ that provides
instantiations for variables declared in the
computation context $\Gamma$, and a renaming substitution $\rho$ which maps variables  in the computation context $\Gamma$ to the same variables in the context $\Gamma'$ where $\Gamma'
= \Gamma, \wvec{x{:}\ann\tau}$ and $\Gamma' \vdash \rho : \Gamma$. We
write $\Gamma' \leq_\rho \Gamma$ for the latter.  We note that the substitution properties also hold for renamings.
% We often refer to
% such a weakening substitution as a renaming substitution to avoid
% talking about weakening of weakening substitutions.

\begin{lemma}[Well-Formed Contexts for Substitutions]
  \label{lm:wfcsub}
  If\/$\Gamma'\vdash\theta :\Gamma$ then $\vdash \Gamma'$.
\end{lemma}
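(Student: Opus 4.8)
The plan is to proceed by a straightforward induction on the derivation of $\Gamma' \vdash \theta : \Gamma$ (equivalently, on the structure of $\theta$, or of $\Gamma$). Only two rules can conclude this judgment, so there are exactly two cases to handle.

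In the base case the derivation ends in the rule for the empty substitution, so $\theta = \cdot$ and $\Gamma = \cdot$; the single premise of that rule is precisely $\vdash \Gamma'$, which is what we must show. In the step case the derivation ends in the rule for extending a substitution, so $\theta = \theta', t/x$ and $\Gamma = \Gamma_0, x{:}\ann\tau$, with premises $\Gamma' \vdash \theta' : \Gamma_0$ and $\Gamma' \vdash t : \{\theta'\}\ann\tau$. Applying the induction hypothesis to the first premise immediately gives $\vdash \Gamma'$.

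I do not anticipate any real obstacle: the codomain context $\Gamma'$ is threaded unchanged through both formation rules for simultaneous substitutions, so the well-formedness condition $\vdash \Gamma'$ is simply propagated from the base case up through every extension. The only mild point worth noting is that the companion premise $\Gamma' \vdash t : \{\theta'\}\ann\tau$ of the extension rule (which, under the overloaded notation, may instead be the schema-checking judgment $\Gamma' \vdash s : \tmctx$ when $\ann\tau = \tmctx$) plays no role here; it is the subderivation $\Gamma' \vdash \theta' : \Gamma_0$ that carries the induction.
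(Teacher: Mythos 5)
Your proof is correct and matches the paper's own argument exactly: induction on the derivation of $\Gamma' \vdash \theta : \Gamma$, reading off $\vdash \Gamma'$ from the premise of the empty-substitution rule in the base case and from the induction hypothesis on $\Gamma' \vdash \theta' : \Gamma_0$ in the extension case. No issues.
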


\begin{proof}
  By induction on the structure of the derivation of $\Gamma'\vdash\theta : \Gamma$.
\LONGVERSIONCHECKED{
\\[1em]
\pcase{$\ianc{\vdash \Gamma'}{\Gamma' \vdash \cdot : \cdot}{}$}
\prf{$\vdash \Gamma'$ \hfill by premise}

\pcase{$\ibnc{\Gamma'\vdash \theta :\Gamma}
             {\Gamma'\vdash t : \{\theta \}\ann\tau}
             {\Gamma'\vdash\theta, t / x :\Gamma, x: \ann\tau}{}
$}
\prf{$\vdash \Gamma'$ \hfill by IH}
}
\end{proof}

\begin{lemma}[Weakening for computation-level substitutions]\label{lm:wkwksub}
  Let $y$ be a new name s.t. $y \notin dom(\Gamma')$. If $\Gamma' \vdash \theta : \Gamma$.
 and $\Gamma' \vdash \ann\tau : u$ then $\Gamma', y : \ann\tau \vdash \theta : \Gamma$.
\end{lemma}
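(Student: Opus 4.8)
The plan is to proceed by induction on the derivation of $\Gamma' \vdash \theta : \Gamma$, which is really an induction on the structure of $\theta$ (equivalently, of $\Gamma$). Since $y$ is chosen fresh for $\Gamma'$, it is in particular not free in any term appearing in $\theta$, so the substitution operation $\{\theta\}(-)$ is literally unchanged by the weakening; only the ambient context grows, and no reasoning about stability of substitution under the new variable is needed. Before the case analysis I would record one preliminary fact reused throughout: from $\Gamma' \vdash \theta : \Gamma$ we obtain $\vdash \Gamma'$ by Lemma~\ref{lm:wfcsub}, and together with the hypothesis $\Gamma' \vdash \ann\tau : u$ the well-formed-context rule yields $\vdash \Gamma', y{:}\ann\tau$.

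In the base case $\theta = \cdot$ and $\Gamma = \cdot$; the typing rule for the empty substitution requires only $\vdash \Gamma', y{:}\ann\tau$, which we have just established, so $\Gamma', y{:}\ann\tau \vdash \cdot : \cdot$ holds immediately. In the step case $\theta = \theta', t/x$ and $\Gamma = \Gamma_0, x{:}\ann\tau_1$, with premises $\Gamma' \vdash \theta' : \Gamma_0$ and $\Gamma' \vdash t : \{\theta'\}\ann\tau_1$. The induction hypothesis applied to the first premise gives $\Gamma', y{:}\ann\tau \vdash \theta' : \Gamma_0$. For the second premise, Computation-level Weakening (Lemma~\ref{lm:wkctx}, part~1, instantiated with the right-hand context empty) applied to $\Gamma' \vdash t : \{\theta'\}\ann\tau_1$ and $\vdash \Gamma', y{:}\ann\tau$ yields $\Gamma', y{:}\ann\tau \vdash t : \{\theta'\}\ann\tau_1$. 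Re-applying the extension rule for simultaneous substitutions then produces $\Gamma', y{:}\ann\tau \vdash \theta', t/x : \Gamma_0, x{:}\ann\tau_1$, which is the desired conclusion.

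There is essentially no hard step here: the only thing to be careful about is discharging the well-formedness side conditions of the substitution-typing rules (hence the appeals to Lemma~\ref{lm:wfcsub} and the well-formed-context rule) and observing that freshness of $y$ makes $\{\theta'\}\ann\tau_1$ the same type before and after the weakening. All the real content is delegated to Computation-level Weakening (Lemma~\ref{lm:wkctx}); this lemma is, in effect, just the pointwise lifting of that result to simultaneous substitutions.
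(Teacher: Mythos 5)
Your proof is correct and follows essentially the same route as the paper: induction on the derivation of $\Gamma' \vdash \theta : \Gamma$, discharging the base case via the well-formed-context rule and the step case via the induction hypothesis together with Computation-level Weakening (Lemma~\ref{lm:wkctx}).
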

\begin{proof}
By induction on the first derivation using Lemma \ref{lm:wkctx}.
\LONGVERSIONCHECKED{\\[1em]
\pcase{\ianc{\vdash \Gamma'}{\Gamma' \vdash \cdot : \cdot}{}}
\prf{$\vdash \Gamma'$ \hfill by premise}
\prf{$\Gamma' \vdash \ann\tau : u$ \hfill by assumption}
\prf{$\vdash \Gamma', y : \ann\tau$ \hfill by rule}
\prf{$\Gamma', y : \ann\tau \vdash \cdot : \cdot$ \hfill by rule}

\pcase{\ibnc{\Gamma'\vdash \theta : \Gamma}
             {\Gamma'\vdash t : \{\theta\}\ann\tau' }
             {\Gamma'\vdash \theta, t/ x :\Gamma, x: \ann\tau'}{}
}
\prf{$\Gamma', y: \ann\tau \vdash \theta : \Gamma$ \hfill by IH}
\prf{$\Gamma', y: \ann\tau \vdash t : \{\theta\}\ann\tau'$ \hfill by Lemma \ref{lm:wkctx}}
\prf{$\Gamma', y: \ann\tau \vdash \theta, t/ x :\Gamma, x: \ann\tau'$ \hfill by rule}
}
\end{proof}

\LONGVERSION{
\begin{corollary}[Identity Extension of Computation-level Substitution]
  \label{lm:lemaux2}
Let $y$ be a new name s.t. $y \notin dom(\Gamma')$ and $y \notin dom(\Gamma)$.
If\/$~\Gamma' \vdash \theta : \Gamma$
and $\Gamma' \vdash \{\theta\}\ann\tau : u$ then
   $\Gamma', y : \{\theta\}\ann\tau \vdash \theta,y/y :  \Gamma, y{:}\ann\tau$.
\end{corollary}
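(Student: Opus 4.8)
The plan is to construct $\theta, y/y$ by the extension rule for simultaneous computation-level substitutions, reducing the goal to its two premises over the ambient context $\Gamma', y{:}\{\theta\}\ann\tau$. First I would note that this extended context is well-formed: from $\Gamma' \vdash \theta : \Gamma$ we get $\vdash \Gamma'$ by Lemma~\ref{lm:wfcsub}, and since $y$ is fresh for $\Gamma'$, combining this with the hypothesis $\Gamma' \vdash \{\theta\}\ann\tau : u$ and the context-formation rule yields $\vdash \Gamma', y{:}\{\theta\}\ann\tau$.

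Next I would discharge the two premises of the extension rule. The first, $\Gamma', y{:}\{\theta\}\ann\tau \vdash \theta : \Gamma$, is exactly Weakening for computation-level substitutions (Lemma~\ref{lm:wkwksub}) applied to $\Gamma' \vdash \theta : \Gamma$ with the new declaration $y{:}\{\theta\}\ann\tau$; its side conditions ($y \notin dom(\Gamma')$ and $\Gamma' \vdash \{\theta\}\ann\tau : u$) are precisely our assumptions. For the second premise we must show $\Gamma', y{:}\{\theta\}\ann\tau \vdash y : \{\theta, y/y\}\ann\tau$. Since $\ann\tau$ is well-formed over $\Gamma$ and $y \notin dom(\Gamma)$, we have $y \notin \FV(\ann\tau)$, so $\{\theta, y/y\}\ann\tau = \{\theta\}\ann\tau$; hence it suffices to derive $\Gamma', y{:}\{\theta\}\ann\tau \vdash y : \{\theta\}\ann\tau$, which is the variable rule applied to the entry $y{:}\{\theta\}\ann\tau$ in the (already established well-formed) context.

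Applying the extension rule to these two premises gives $\Gamma', y{:}\{\theta\}\ann\tau \vdash \theta, y/y : \Gamma, y{:}\ann\tau$, as required. The only point needing care is the equational bookkeeping that $\{\theta, y/y\}\ann\tau$ and $\{\theta\}\ann\tau$ coincide, which follows immediately from $y$ being fresh for $\Gamma$ and hence not occurring free in $\ann\tau$; apart from that, the argument is a direct chaining of Lemmas~\ref{lm:wfcsub} and~\ref{lm:wkwksub} with the computation-level typing rules, so I do not expect any genuine obstacle.
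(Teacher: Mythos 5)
Your proof is correct and follows essentially the same route as the paper's: weaken $\theta$ into the extended context via Lemma~\ref{lm:wkwksub}, type $y$ by the variable rule, and close with the extension rule. The only superfluous step is your equational bookkeeping for $\{\theta,y/y\}\ann\tau$ --- the extension rule's second premise is stated with the \emph{unextended} substitution, i.e.\ it literally requires $\Gamma' \vdash t : \{\theta\}\ann\tau$, so the variable rule applies directly without appealing to $y \notin \FV(\ann\tau)$.
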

\SHORTVERSION{\smallskip}
\LONGVERSIONCHECKED{
\begin{proof}~\\% [1em]
  \prf{$\Gamma', y:\{\theta\}\ann\tau \vdash \theta : \Gamma$ \hfill by Lemma \ref{lm:wkwksub} }
  \prf{$\vdash \Gamma', y:\{\theta\}\ann\tau$ \hfill by Lemma \ref{lm:wfcsub}}
  \prf{$\Gamma', y:\{\theta\}\ann\tau \vdash y:\{\theta\}\ann\tau $ \hfill by typing rule}
\prf{$\Gamma', y:\{\theta\}\ann\tau \vdash \theta, y/ y :\Gamma, y: \ann\tau$ \hfill by rule}
\end{proof}
}
}

\begin{lemma}[Computation-level Simultaneous Substitution]\label{lm:compsubst}~
  \begin{enumerate}
  \item If $\Gamma' \vdash \theta : \Gamma$ and $\Gamma; \Psi \vdash \JLF$ then $\Gamma' ; \{\theta\}\Psi \vdash \{\theta\}\JLF$.
  \item If $\Gamma' \vdash \theta : \Gamma$ and $\Gamma \vdash \Jcomp$ then $\Gamma' \vdash \{\theta\}\Jcomp$.
  \end{enumerate}
\end{lemma}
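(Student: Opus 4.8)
The plan is to prove both parts simultaneously by induction on the derivation of $\Gamma' \vdash \theta : \Gamma$, reducing each step to the single-variable substitution Lemma~\ref{lm:compsub}. The two parts do not interact in this induction, so I treat them in parallel: for a statement of shape $\JLF$ I will appeal to part~(2) of Lemma~\ref{lm:compsub}, and for one of shape $\Jcomp$ to part~(3).

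In the base case $\theta = \cdot$ and $\Gamma = \cdot$, so the derivation $\Gamma;\Psi \vdash \JLF$ (resp.\ $\Gamma \vdash \Jcomp$) contains no computation-level variables; hence $\{\cdot\}$ acts as the identity on $\Psi$ and on the judgment, and the conclusion is immediate, with the side condition $\vdash \Gamma'$ supplied by Lemma~\ref{lm:wfcsub}. In the step case $\theta = \theta', t/x$ with $\Gamma = \Gamma_0, x{:}\ann\tau$, inverting the typing rule for simultaneous substitutions gives $\Gamma' \vdash \theta' : \Gamma_0$ and $\Gamma' \vdash t : \{\theta'\}\ann\tau$. By $\alpha$-renaming we may assume $x \notin \mathrm{dom}(\Gamma')$ and $x \notin \FV(\theta')$. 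I first weaken $\theta'$ by Lemma~\ref{lm:wkwksub} to obtain $\Gamma', x{:}\{\theta'\}\ann\tau \vdash \theta' : \Gamma_0$, and then extend it by the identity on $x$ using Corollary~\ref{lm:lemaux2}, yielding $\Gamma', x{:}\{\theta'\}\ann\tau \vdash \theta', x/x : \Gamma_0, x{:}\ann\tau$. Applying the induction hypothesis to this substitution and the given derivation gives $\Gamma', x{:}\{\theta'\}\ann\tau ;\, \{\theta', x/x\}\Psi \vdash \{\theta', x/x\}\JLF$ (resp.\ the analogous $\Jcomp$ judgment). Since $\Gamma' \vdash t : \{\theta'\}\ann\tau$, the single-variable Lemma~\ref{lm:compsub} applied with $t/x$ then produces $\Gamma' ;\, \{t/x\}(\{\theta', x/x\}\Psi) \vdash \{t/x\}(\{\theta', x/x\}\JLF)$.

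It remains to discharge the substitution-composition identity $\{t/x\}(\{\theta', x/x\}\J) = \{\theta', t/x\}\J = \{\theta\}\J$, and likewise on LF contexts, which holds because $x \notin \FV(\theta')$ forces $\{t/x\}$ to touch only the slot previously held by $x$. \textbf{This identity is the main obstacle}: it has to be verified uniformly across LF objects, LF substitutions, contextual objects, LF contexts, and computations, and it relies on the capture-avoidance side conditions built into the definition of $\{\,\cdot\,/\,\cdot\,\}$ in Fig.~\ref{fig:csub} together with the freshness invariants ($x \notin \FV(\theta')$, and $x$ not free in the relevant bound variables). Establishing it is a routine but somewhat tedious structural induction that can be stated as a separate composition lemma; once it is available, the rest of the argument is bookkeeping, with the remaining well-formedness side conditions (such as $\vdash \Gamma'$ or well-formedness of $\{\theta\}\Psi$) discharged via Lemmas~\ref{lm:wfcsub}, \ref{lm:ctxwf}, and part~(1) itself.
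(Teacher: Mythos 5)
Your induction is not well-founded, and this is a genuine gap rather than a presentational one. You induct on the derivation of $\Gamma' \vdash \theta : \Gamma$, but in the step case $\theta = \theta', t/x$ you apply the induction hypothesis to the substitution $\Gamma', x{:}\{\theta'\}\ann\tau \vdash \theta', x/x : \Gamma_0, x{:}\ann\tau$. That derivation has exactly the same shape and length as the one for $\theta', t/x$ — it is built from the derivation of $\theta'$ plus one entry — so it is not a subderivation and your measure does not decrease. The circularity is forced by the dependent setting: the term $t$ is typed in $\Gamma'$, not in $\Gamma_0$, so Lemma~\ref{lm:compsub} cannot substitute $t$ for $x$ until the rest of the judgment has already been transported to $\Gamma'$ (or $\Gamma', x{:}\{\theta'\}\ann\tau$) — and that transport is itself a simultaneous substitution of the same size. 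One could try to rescue the argument with a more delicate measure (e.g.\ the number of non-identity entries in $\theta$, with identity-like substitutions as a separate base case), but you have not set that up, and even then the base case needs its own proof that near-identity substitutions preserve typing. A smaller issue in the same vein: your base case $\Gamma = \cdot$ is not "immediate" — you need computation-level weakening (Lemma~\ref{lm:wkctx}, iterated) to pass from $\cdot;\Psi \vdash \JLF$ to $\Gamma';\Psi \vdash \JLF$; Lemma~\ref{lm:wfcsub} only gives $\vdash \Gamma'$.

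The paper avoids all of this by inducting on the \emph{second} derivation, i.e.\ the typing or equality derivation of $\JLF$ resp.\ $\Jcomp$, carrying $\theta$ along. In the variable case it reads the instantiation directly off $\theta$ by inversion; at binders it extends $\theta$ with an identity entry via Corollary~\ref{lm:lemaux2} (after Lemmas~\ref{lm:ctxwf} and~\ref{lm:wkwksub}), which lengthens the substitution but shrinks the typing derivation, so the induction is well-founded. The compositionality facts you isolate (such as $\{\{\theta\}s/y\}\{\theta,y/y\}\tau_2 = \{\theta\}\{s/y\}\tau_2$ in the application case) are still needed there, but only as local syntactic identities, not as the engine of the induction. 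You should restructure your proof along these lines.
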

\begin{proof}
By mutual induction on the second derivation using Lemma \ref{lm:ctxwf} and  Lemma \ref{lm:wkwksub}.
\LONGVERSIONCHECKED{
\\[0.5em]
\pcase{$\ibnc{x{:} \ann\tau \in \Gamma}{\vdash \Gamma}{\Gamma \vdash x : \ann\tau}{}$}
\prf{$\Gamma' \vdash \theta : \Gamma$ \hfill by assumption}
\prf{$\Gamma' \vdash \theta_0, t/x : \Gamma_0, x{:}\ann\tau$ and $\theta = \theta_0,~t/x,~\theta_1$ and $\Gamma = \Gamma_0, x{:}\ann\tau, \Gamma_1$ \hfill by inversion}
\prf{$\Gamma' \vdash t : \{\theta_0\}\ann\tau$ \hfill by inversion}
\prf{$\Gamma' \vdash t : \{\theta\}\ann\tau$ \hfill since $\ann\tau$ does not depend on the variable in $(x{:}\ann\tau,\Gamma_1)$}
\prf{$\Gamma' \vdash \{\theta\}x : \{\theta\}\ann\tau$ \hfill since $\{\theta\}x = t$ and $t$ does not depend on the variable in $(x{:}\ann\tau,\Gamma_1)$}

\pcase{$\ibnc{\Gamma \vdash t : (y: \ann\tau_1) \arrow \tau_2 }
              {\Gamma \vdash s : \ann\tau_1}
              {\Gamma \vdash t~s : \{s/y\}\tau_2}{}
$}
\prf{$\Gamma' \vdash \{\theta\}s : \{\theta\}\ann\tau_1$ \hfill by IH}
\prf{$\Gamma' \vdash \{\theta\}t : \{\theta\}((y: \ann\tau_1) \arrow \tau_2)$ \hfill by IH}
\prf{$\Gamma' \vdash \{\theta\}t : (y: \{\theta\}\ann\tau_1) \arrow \{\theta,y/y\}\tau_2$ \hfill by subst. definition}
\prf{$\Gamma' \vdash (\{\theta\} t)~(\{\theta\}s) : \{\{\theta\}s/y\}(\{\theta,y/y\}\tau_2)$ \hfill by rule}
\prf{$\Gamma' \vdash \{\theta\}(t~s) : \{\{\theta\}s/y\}(\{\theta,y/y\}\tau_2)$ \hfill by subst. definition}
\prf{$\Gamma' \vdash \{\theta\}(t~s) :\{\theta\}(\{s/y\}\tau_2)$ \hfill by compositionality of substitution}

\pcase{$\ianc{\Gamma, y: \ann\tau_1 \vdash t : \tau_2}
             {\Gamma \vdash \tmfn y t : (y: \ann\tau_1) \arrow \tau_2}{}$}
\prf{$\Gamma' \vdash \theta : \Gamma$ \hfill by assumption}
\prf{$\vdash \Gamma, y : \ann\tau_1$ \hfill by Lemma \ref{lm:ctxwf}}
\prf{$\Gamma \vdash \ann\tau_1 : u$ \hfill by inversion}
\prf{$\Gamma' \vdash\{\theta\}\ann\tau_1 : u$ \hfill by IH}
%\prf{$\vdash \Gamma'$ \hfill by Lemma \ref{lm:wfcsub}}
%\prf{$\vdash \Gamma', y : \{\theta\}\tau_1$ \hfill by rules}
\prf{$\Gamma', y : \{\theta\}\ann\tau_1 \vdash \theta, y/y : \Gamma, y: \ann\tau_1$ \hfill by Lemma \ref{lm:lemaux2}}
\prf{$\Gamma', y:\{\theta\}\ann\tau_1 \vdash \{\theta,y/y\}t : \{\theta,y/y\}\tau_2$ \hfill by IH}
\prf{$\Gamma' \vdash \tmfn y {\{\theta,y/y\}t} : (y:\{\theta\}\ann\tau_1) \arrow \{\theta,y/y\}\tau_2$ \hfill by rule}
\prf{$\Gamma' \vdash \{\theta\}(\tmfn y t) : \{\theta\}((y: \ann\tau_1) \arrow \tau_2)$ \hfill by subst. definition}

% We note that the case where $(y : \tmctx) \arrow \tau_2$ is similar.

\pcase{\ianc{ \Gamma, \psi:\tmctx, p:\cbox{~\unboxc{\psi} \vdash_\# \tm}  \vdash t_v : \{p / y\}\tau}{\Gamma \vdash ({\psi,p \mto t_v}) : \IH }{} \\[0.5em] where $\IH = (\psi : \tmctx) \arrow (y:\cbox{\psi \vdash \tm}) \arrow \tau$}
\prf{$\Gamma'\vdash \theta : \Gamma$ \hfill by assumption}
\prf{$\vdash \Gamma, \psi:\tmctx, p:\cbox{~\unboxc{\psi} \vdash_\# \tm}$\hfill by Lemma \ref{lm:ctxwf}}
\prf{$\vdash \Gamma, \psi:\tmctx$\hfill by inversion}
\prf{$\Gamma, \psi:\tmctx \vdash \cbox{~\unboxc{\psi} \vdash_\# \tm}$\hfill by inversion}
% \prf{$\Gamma', \psi:\tmctx \vdash \psi:\tmctx$\hfill by typing rule}
\prf{$\Gamma'  \vdash \tmctx : u$\hfill by typing rule}
\prf{$\Gamma' \vdash \{\theta\}\tmctx : u$ \hfill by subst. def. since $\{\theta\}\tmctx = \tmctx$}
\prf{$\Gamma',\psi:\tmctx \vdash \theta, \psi/\psi : \Gamma, \psi:\tmctx$ \hfill by Lemma \ref{lm:lemaux2}}
\prf{$\Gamma', \psi:\tmctx \vdash \cbox{~\unboxc{\psi} \vdash_\# \tm} : u$\hfill by typing rules}
\prf{let $\theta_v = \theta, \psi/\psi, p/p$ \hfill \\
$\Gamma',\psi:\tmctx, p : \cbox{~\unboxc{\psi} \vdash_\# \tm} \vdash \theta_v : \Gamma, \psi:\tmctx, p:  \cbox{~\unboxc{\psi} \vdash_\# \tm}$ \hfill by Lemma \ref{lm:lemaux2}}
\prf{$\Gamma', \psi:\tmctx, p:\cbox{~\unboxc{\psi} \vdash_\# \tm}  \vdash \{\theta_v\} t_v : \{p/ y\}\{\theta_v\}\tau$ \hfill by IH and by definition of substitution}
\prf{$\Gamma' \vdash \{\theta\} ({\psi,p \mto t_v}) : \{\theta\}\IH$ \hfill by rule and by definition of substitution}
 }
\end{proof}

\LONGVERSION{
Next, we show that we can always extend a renaming substitution.

\begin{lemma}[Weakening of Renaming Substitutions]\label{lm:wkwksub2}
Let $y$ be a new name s.t. $y \notin dom(\Gamma')$.\\\mbox{\qquad}
If\/$\Gamma' \leq_\rho  \Gamma$ and $\Gamma' \vdash \tau : u$ then
   $\Gamma', y : \tau \leq_\rho  \Gamma$.
\end{lemma}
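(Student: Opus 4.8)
The plan is to reduce this to the weakening lemma for computation-level substitutions (Lemma~\ref{lm:wkwksub}), since a renaming is just a particular kind of substitution. First I would unfold the definition of $\Gamma' \leq_\rho \Gamma$: this tells us that $\Gamma' = \Gamma, \wvec{x{:}\ann\tau}$ for some list of declarations $\wvec{x{:}\ann\tau}$, that $\rho$ is the renaming sending each variable of $\Gamma$ to itself, and that $\Gamma' \vdash \rho : \Gamma$ is derivable.

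From $\Gamma' \vdash \rho : \Gamma$ together with the hypothesis $\Gamma' \vdash \tau : u$, Lemma~\ref{lm:wkwksub} immediately yields $\Gamma', y{:}\tau \vdash \rho : \Gamma$ (this uses the freshness assumption $y \notin dom(\Gamma')$). It then remains only to check that the two structural side conditions defining $\leq_\rho$ still hold. The context $\Gamma', y{:}\tau$ has the required shape: substituting $\Gamma' = \Gamma, \wvec{x{:}\ann\tau}$ gives $\Gamma', y{:}\tau = \Gamma, (\wvec{x{:}\ann\tau}, y{:}\tau)$, i.e.\ $\Gamma$ extended with a (longer) list of declarations. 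Moreover $\rho$ is still the identity-on-$\Gamma$ renaming: appending the fresh declaration $y{:}\tau$ to the target context changes neither which variables $\rho$ mentions nor how it maps them. Hence $\Gamma', y{:}\tau \leq_\rho \Gamma$, as required.

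There is essentially no hard part here: the lemma is a bookkeeping corollary that packages Lemma~\ref{lm:wkwksub} together with the observation that both the ``range is an extension of the domain'' condition and the ``is a renaming'' condition are stable under appending a fresh declaration to the range. The only point that needs a word of care is the freshness of $y$ — it must be new with respect to all of $\Gamma'$ (hence with respect to $\Gamma, \wvec{x{:}\ann\tau}$) so that $\Gamma', y{:}\tau$ is a well-formed context and $\rho$ does not collide with the new variable; this is exactly the hypothesis $y \notin dom(\Gamma')$, which is also what Lemma~\ref{lm:wkwksub} needs.
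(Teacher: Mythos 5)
Your proof is correct and follows exactly the route the paper takes: the paper's entire proof is ``Follows from Lemma~\ref{lm:wkwksub}'', and your argument simply spells out that reduction, applying the weakening lemma for computation-level substitutions to obtain $\Gamma', y{:}\tau \vdash \rho : \Gamma$ and then observing that the shape condition $\Gamma', y{:}\tau = \Gamma, \wvec{x{:}\ann\tau}, y{:}\tau$ and the identity-renaming condition are preserved. The extra bookkeeping you include is accurate and is exactly what the paper leaves implicit.
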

\begin{proof}
Follows from Lemma \ref{lm:wkwksub}.
\end{proof}

\begin{corollary}[Identity Extension of Renaming Computation-level Substitution]
  \label{lm:weaklemaux2}
Let $y$ be a new name s.t. $y \notin dom(\Gamma')$ and $y \notin dom(\Gamma)$.\\
If\/$\Gamma' \leq_{\rho}  \Gamma$ and $\Gamma' \vdash \{\rho\}\ann\tau : u$ then
   $\Gamma', y {:} \{\rho\}\ann\tau, \Gamma' \leq_{\rho,y/y}  \Gamma, y {:}\ann\tau $.
\end{corollary}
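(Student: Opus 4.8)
The plan is to mirror the proof of Corollary~\ref{lm:lemaux2}, now specialised to renamings. Recall that $\Gamma' \leq_\rho \Gamma$ means that $\rho$ is a variable-for-variable substitution with $\Gamma' \vdash \rho : \Gamma$ (and $\Gamma'$ is $\Gamma$ suitably extended), and that, as noted in the text just before Lemma~\ref{lm:wkwksub}, the substitution lemmas carry over verbatim to renamings. So the argument proceeds in four short steps, exactly paralleling the four lines of the proof of Corollary~\ref{lm:lemaux2} with $\theta$ replaced by $\rho$ and Lemma~\ref{lm:wkwksub} replaced by its renaming counterpart Lemma~\ref{lm:wkwksub2}. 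I read the conclusion as $\Gamma', y{:}\{\rho\}\ann\tau \leq_{\rho, y/y} \Gamma, y{:}\ann\tau$.

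First I would apply Lemma~\ref{lm:wkwksub2} (Weakening of Renaming Substitutions) with the fresh name $y$ and the type $\{\rho\}\ann\tau$, which is well-kinded by hypothesis, to obtain $\Gamma', y{:}\{\rho\}\ann\tau \leq_\rho \Gamma$. Next, since a renaming is in particular a computation-level substitution, Lemma~\ref{lm:wfcsub} (Well-Formed Contexts for Substitutions) gives $\vdash \Gamma', y{:}\{\rho\}\ann\tau$. Third, by the variable typing rule $\Gamma', y{:}\{\rho\}\ann\tau \vdash y : \{\rho\}\ann\tau$; and because $\Gamma, y{:}\ann\tau$ is well-formed, $\ann\tau$ mentions only variables of $\Gamma$, so $y \notin \FV(\ann\tau)$ and hence $\{\rho, y/y\}\ann\tau = \{\rho\}\ann\tau$, giving $\Gamma', y{:}\{\rho\}\ann\tau \vdash y : \{\rho, y/y\}\ann\tau$. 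Finally, the rule for extending a substitution yields $\Gamma', y{:}\{\rho\}\ann\tau \vdash \rho, y/y : \Gamma, y{:}\ann\tau$, and since $\rho, y/y$ is still variable-for-variable and maps every declaration of $\Gamma, y{:}\ann\tau$ to the identically named declaration, this is precisely $\Gamma', y{:}\{\rho\}\ann\tau \leq_{\rho, y/y} \Gamma, y{:}\ann\tau$.

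The only point needing a moment's care -- and the closest thing to an obstacle here -- is the last step: one must check that $\rho, y/y$ still qualifies as a renaming in the sense of $\leq$, i.e. that it is an order-preserving variable-for-variable map and that the target context has the required shape relative to the source. This is immediate from the definition, since we are merely appending a fresh, identically named declaration to both sides; and it is exactly the fact that the substitution typing rules apply to renamings that lets steps two and four go through without any extra bookkeeping.
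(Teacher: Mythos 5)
Your proof is correct and follows essentially the same route as the paper: the paper's own proof is just the two-line version (apply Lemma~\ref{lm:wkwksub2} to get $\Gamma', y{:}\{\rho\}\ann\tau \leq_\rho \Gamma$, then conclude ``by rule''), and your steps two through four simply unfold that final ``by rule'' exactly as in the proof of Corollary~\ref{lm:lemaux2}. Your reading of the conclusion as $\Gamma', y{:}\{\rho\}\ann\tau \leq_{\rho,y/y} \Gamma, y{:}\ann\tau$ matches what the paper intends.
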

\LONGVERSIONCHECKED{
\begin{proof}~\\% [1em]
\prf{$\Gamma', y:\{\rho\}\ann\tau \leq_{\rho} : \Gamma$ \hfill by Lemma \ref{lm:wkwksub2} }
\prf{$\Gamma', y:\{\rho\}\ann\tau \leq_{\rho, y/y} :\Gamma, y: \ann\tau$ \hfill by rule}
\end{proof}
}
\medskip

\begin{lemma}[Computation-level Renaming Lemma]~\label{lem:weakcomp}
  \begin{enumerate}
  \item If\/ $\Gamma' \leq_\rho  \Gamma$ and $\Gamma; \Psi \vdash \JLF$ then $\Gamma' ; \{\rho\}\Psi \vdash \{\rho\}\JLF$.
  \item If\/ $\Gamma' \leq_\rho  \Gamma$ and $\Gamma \vdash \Jcomp$ then $\Gamma' \vdash \{\rho\}\Jcomp$.
  \end{enumerate}
\end{lemma}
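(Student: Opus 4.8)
The plan is to obtain this lemma as an immediate corollary of the Computation-level Simultaneous Substitution lemma (Lemma~\ref{lm:compsubst}), exploiting the fact that a renaming is a particular well-typed simultaneous substitution.

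\textbf{Step 1.} First I would observe that $\Gamma' \leq_\rho \Gamma$ implies $\Gamma' \vdash \rho : \Gamma$. Since, by definition, $\Gamma' = \Gamma, \wvec{x{:}\ann\tau}$ and $\rho$ maps every variable of $\Gamma$ to itself, this is a straightforward induction on the length of $\wvec{x{:}\ann\tau}$: the base case is $\rho = \id$ on $\Gamma$, which is well typed because $\vdash\Gamma$ (using that every declared type in $\Gamma$ depends only on earlier variables, so applying the identity prefix of $\rho$ to it leaves it unchanged); the step case adds one declaration and is handled by Lemma~\ref{lm:wkwksub2} (weakening of renaming substitutions) together with Lemma~\ref{lm:weaklemaux2} (identity extension of renaming substitutions).

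\textbf{Step 2.} With $\Gamma' \vdash \rho : \Gamma$ in hand, I would simply instantiate Lemma~\ref{lm:compsubst} with $\theta := \rho$: part~(1) of that lemma gives $\Gamma'; \{\rho\}\Psi \vdash \{\rho\}\JLF$, which is part~(1) here, and part~(2) gives $\Gamma' \vdash \{\rho\}\Jcomp$, which is part~(2) here.

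\textbf{Main obstacle.} There is essentially none: both parts are corollaries of results already established. The only point requiring (minor) care is Step~1 — checking that a renaming really is a well-typed simultaneous substitution — but this is precisely the remark made just before the statement of the lemma. Note moreover that in the named presentation used here $\{\rho\}(-)$ acts as the identity on LF contexts, LF terms, LF substitutions, and computations (it merely renames each variable to itself), so the conclusions could equivalently be read as $\Gamma'; \Psi \vdash \JLF$ and $\Gamma' \vdash \Jcomp$, i.e.\ as iterated computation-level weakening (Lemma~\ref{lm:wkctx}); either route yields the result with no real difficulty.
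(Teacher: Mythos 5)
Your proposal is correct, but it takes a different route from the paper. The paper proves Lemma~\ref{lem:weakcomp} by redoing the induction on the typing derivation (mirroring the proof of Lemma~\ref{lm:compsubst}, with the renaming-specific auxiliary lemmas \ref{lm:wkwksub2} and \ref{lm:weaklemaux2} standing in for \ref{lm:wkwksub} and \ref{lm:lemaux2}, and with the recursor case spelled out in full). You instead obtain the statement as a direct instance of the already-established simultaneous substitution lemma~\ref{lm:compsubst}. This is sound: the paper defines $\Gamma' \leq_\rho \Gamma$ so that it \emph{includes} the judgment $\Gamma' \vdash \rho : \Gamma$, which makes your Step~1 essentially a restatement of the definition rather than something requiring the induction you sketch (your argument there is harmless but redundant); and Lemma~\ref{lm:compsubst} is proved before and independently of the renaming lemma, so there is no circularity. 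Indeed the paper itself remarks, just after introducing renamings, that "the substitution properties also hold for renamings," and it uses exactly this specialization elsewhere (e.g.\ in the proof of Lemma~\ref{lem:weakwhnf}). What your route buys is the elimination of a duplicated induction; what the paper's route buys is an explicit, self-contained treatment of the cases (notably the recursor), which is useful as a reference since the renaming lemma is the form invoked throughout the semantic development. Your closing observation — that $\{\rho\}$ is the identity on terms in the named presentation, so the lemma is equivalently iterated single-variable weakening via Lemma~\ref{lm:wkctx} — is also correct and gives a second valid derivation.
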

\begin{proof}
By induction on the second derivation using Lemma \ref{lm:ctxwf}
\LONGVERSIONCHECKED{We show a few cases.
\\[1em]
\pcase{$\D = \ibnc{x{:}\ann\tau \in \Gamma}{\vdash \Gamma}{\Gamma \vdash x : \ann\tau}{}$}
\prf{$\Gamma' \leq_\rho  \Gamma$ \hfill by assumption}
\prf{$\Gamma' \leq_{\rho_0, x/x}  \Gamma_0, x{:}\ann\tau$ and $\rho = \rho_0, x/x, \rho_1$ and $\Gamma = \Gamma_0, x{:}\ann\tau, \Gamma_1$ \hfill by inversion}
\prf{$\Gamma' \vdash x : \{\rho_0\}\ann\tau$ \hfill by inversion}
\prf{$\Gamma' \vdash x : \{\rho\}\ann\tau$ \hfill since $\tau$ does not depend on the variable in $(x{:}\ann\tau,\Gamma_1)$}
\prf{$\Gamma' \vdash \{\rho\}x : \{\rho\}\ann\tau$ \hfill since $\{\rho\}x = y$ }

\pcase{$\D = \ibnc{\Gamma \vdash t : (y: \ann\tau_1) \arrow \tau_2 }
              {\Gamma \vdash s : \ann\tau_1}
              {\Gamma \vdash t~s : \{s/y\}\tau_2}{}
$}
\prf{$\Gamma' \vdash \{\rho\}s : \{\rho\}\ann\tau_1$ \hfill by IH}
\prf{$\Gamma' \vdash \{\rho\}t : \{\rho\}((y: \ann\tau_1) \arrow \tau_2)$ \hfill by IH}
\prf{$\Gamma' \vdash \{\rho\}t : (y: \{\rho\}\ann\tau_1) \arrow \{\rho,y/y\}\tau_2$ \hfill by subst. definition}
\prf{$\Gamma' \vdash (\{\rho\} t)~(\{\rho\}s) : \{\{\rho\}s/y\}(\{\rho,y/y\}\tau_2)$ \hfill by rule}
\prf{$\Gamma' \vdash \{\rho\}(t~s) : \{\{\rho\}s/y\}(\{\rho,y/y\}\tau_2)$ \hfill by subst. definition}
\prf{$\Gamma' \vdash \{\rho\}(t~s) :\{\rho\}(\{s/y\}\tau_2)$ \hfill by compositionality of substitution}

\pcase{$\D = \ianc{\Gamma, y: \ann\tau_1 \vdash t : \tau_2}
             {\Gamma \vdash \tmfn y t : (y: \ann\tau_1) \arrow \tau_2}{}$}
\prf{$\Gamma' \leq_\rho  \Gamma$ \hfill by assumption}
\prf{$\Ca::\quad\vdash \Gamma, y : \ann\tau_1$ and moreover $\Ca$ is smaller than $\D$  \hfill by Lemma \ref{lm:ctxwf}}
\prf{$\Gamma \vdash \ann\tau_1 : u$ \hfill by inversion}
\prf{$\Gamma' \vdash\{\rho\}\ann\tau_1 : u$ \hfill by IH}
\prf{$\Gamma', y : \{\rho\}\ann\tau_1\leq_{\rho, y/y}  \Gamma, y: \ann\tau_1$ \hfill by Lemma \ref{lm:wkwksub2}}
\prf{$\Gamma', y:\{\rho\}\ann\tau_1 \vdash \{\rho,y/y\}t : \{\rho,y/y\}\tau_2$ \hfill by IH}
\prf{$\Gamma' \vdash \tmfn y {\{\rho,y/y\}t} : (y:\{\rho\}\ann\tau_1) \arrow \{\rho,y/y\}\tau_2$ \hfill by rule}
\prf{$\Gamma' \vdash \{\rho\}(\tmfn y t) : \{\rho\}((y: \ann\tau_1) \arrow \tau_2)$ \hfill by subst. definition}

% We note that the case were $(y : \tmctx) \arrow \tau_2$ is similar.

\pcase{$\D = {\Gamma \vdash \tmrec {\IH} {\psi,p \mto t_v} {\psi, m, n, f_n, f_m \mto t_{\mathsf{app}}} {\psi, m, f_m \mto t_{\clam}} t : \{[\Psi]/\psi, t/y\}\tau}$}
\prf{where $\IH =  (\psi : \cbox{\tmctx}) \arrow (m:\cbox{\psi \vdash \tm}) \arrow \tau$.}\\[0.15em]
\prf{$\Gamma \vdash t : \cbox{\Psi \vdash \tm}$ }
\prf{$\Gamma, \psi{:}\tmctx,
       m{:}\cbox{~\psi, x{:}\tm \vdash \tm},
       f_m{:}\{\cbox{~\psi, x{:}\tm}/\psi, m /y \} \tau
      \vdash t_{\clam} : \{\psi/\psi, \cbox{~\psi \vdash \clam~\lambda x.\unbox m {\id}} / y\}\tau$}
\prf{$\Gamma, \psi{:}\tmctx,
      m{:}\cbox{~\psi \vdash \tm},
      n{:}\cbox{~\psi \vdash \tm},
    f_m{:} \{m/y\}\tau, f_n: \{n/y\}\tau  \vdash
    t_{\mathsf{app}} {:} \{\cbox{~\psi \vdash\capp~\unbox m{\id}~\unbox n{\id}}/y\}\tau$
}
\prf{$\Gamma, \psi{:}\tmctx, p{:}\cbox{~\psi \vdash_\# \tm} \vdash t_v : \{p / y\}\tau$
     \hfill by premise}
\\% [0.5em]
\prf{ $\Gamma' \vdash \{\rho\} t : \{\rho\}\cbox{\Psi \vdash \tm}$ \hfill by IH}
\prf{$\Gamma' \vdash \{\rho\} t : \cbox{\{\rho\}\Psi \vdash \tm}$ \hfill by subst. definition}
\\
\prf{$\Gamma' \vdash \tmctx$ \hfill by rule }
\prf{$\Gamma' , \psi:\tmctx \vdash \psi : \tmctx$ \hfill by rule}
\prf{$\Gamma' , \psi:\tmctx \vdash \psi : \ctx$ \hfill by rule}
\prf{$\Gamma' , \psi:\tmctx ; \psi \vdash\tm : \lftype$ \hfill by rule}
\prf{$\Gamma' , \psi:\tmctx \vdash \psi \vdash_\# \tm : \lftype$ \hfill by rule}
\prf{$\Gamma', \psi:\tmctx \vdash \cbox{~\psi \vdash_\# \tm} : u$ \hfill by rule}
\prf{$\Gamma', \psi:\tmctx,p :\cbox{~\psi \vdash_\# \tm} \leq_{(\rho,\psi/\psi,p/p)}   \Gamma, \psi:\tmctx, m:\cbox{~\psi, x{:}\tm \vdash \tm}$
     \hfill by Lemma \ref{lm:wkwksub2}}
\prf{$\Gamma', \psi:\tmctx, p:\cbox{~\psi \vdash_\# \tm}  \vdash  \{\psi/\psi,p/p\}t_v : \{\rho,\psi/\psi, p / y\}\tau$ \hfill by IH (since $\tau$ does not depend on $\Gamma'$)}
% LAM CASE
%
\prf{$\Gamma' \vdash \tmctx$ \hfill by rule }
\prf{$\Gamma' , \psi:\tmctx \vdash \psi : \tmctx$ \hfill by rule}
\prf{$\Gamma' , \psi:\tmctx \vdash \psi : \ctx$ \hfill by rule}
\prf{$\Gamma' , \psi:\tmctx \vdash \psi, x{:}\tm :\ctx$ \hfill by rule}
\prf{$\Gamma' , \psi:\tmctx ; \psi, x{:}\tm \vdash\tm : \lftype$ \hfill by rule}
% \prf{$\Gamma' , \psi:\tmctx \vdash (\psi, x{:}\tm \vdash \tm)$ \hfill by rule}
\prf{$\Gamma' , \psi:\tmctx \vdash \cbox{~\psi, x{:}\tm \vdash \tm} : u$ \hfill by rule}
\prf{$\Gamma', \psi:\tmctx,
       m:\cbox{~\psi, x:\tm \vdash \tm}
       \vdash \{\cbox{\psi, x{:}\tm}/\psi, m /y \} \tau : u
       $ \hfill by assumption and Lemma \ref{lm:ctxwf}}
\prf{$\Gamma', \psi{:}\tmctx,
               m{:}\cbox{~\psi, x:\tm \vdash \tm}
       \leq_{(\rho, \psi/\psi, m/m)}
       \Gamma, \psi{:}\tmctx, m{:}\cbox{\psi, x{:}\tm \vdash \tm}$
       \hfill by Lemma \ref{lm:wkwksub2}
     }
\prf{$\Gamma', \psi:\tmctx,
       m:\cbox{~\psi, x:\tm \vdash \tm}
       \vdash
       \{\psi/\psi, m/m\}
        (\{\cbox{~\psi, x{:}\tm}/\psi, m /y \} \tau) : u$
\hfill \\
\mbox{\quad}\hfill by IH (since $\tau$ does not depend of $\Gamma'$)
 }
\prf{$\Gamma', \psi{:}\tmctx,
       m{:}\cbox{~\psi, x:\tm \vdash \tm},
       f_m{:} \{\cbox{~\psi, x{:}\tm}/\psi, m /y \} \tau\\\mbox{\qquad}
       \leq_{(\rho, \psi/\psi, m/m, f_m / f_m)}     \Gamma, \psi{:}\tmctx,
       m{:}\cbox{~\psi, x:\tm \vdash \tm},
       f_m{:}\{\cbox{~\psi, x{:}\tm}/\psi, m /y \} \tau
        $ \hfill by Lemma \ref{lm:wkwksub2}}
\prf{$\Gamma', \psi:\tmctx,
       m:\cbox{~\psi, x:\tm \vdash \tm},
       f_m: \{\cbox{~\psi, x{:}\tm}/\psi, m /y \} \tau$}
\prf{\mbox{\hspace{2cm}}~~~~~$\hfill
       \vdash
        \{\rho,  \psi/\psi, m/m, f_m/f_m\} t_{\clam} :
       \{\psi/\psi, \cbox{~\psi \vdash \clam~\lambda x.\unbox m {\id}} / y\}\tau$}
\prf{\mbox{\hspace{1cm}}\hfill by IH (since $\tau$ does not depend on $\Gamma'$)
}
\\
        With a similar argument, we have :\\
 $\Gamma', \psi:\tmctx,
           m:\cbox{~\psi \vdash \tm},
           n:\cbox{~\unbox{\psi} \vdash \tm},
           f_m: \{m/y\}\tau, f_n: \{n/y\}\tau$\\
\mbox{\hspace{1cm}}\hfill $
  \leq_{\rho, \psi/\psi, m/m, n/n}
  \Gamma, \psi:\tmctx, m:\cbox{~\psi \vdash \tm}, n:\cbox{~\psi \vdash \tm}, f_m: \{m/y\}\tau, f_n: \{n/y\}\tau$
\\[1em]
So we have by IH: \\
$\Gamma', \psi{:}\tmctx, m{:}\cbox{\psi \vdash \tm}, n{:}\cbox{\psi \vdash \tm}, f_m{:} \{m/y\}\tau, f_n{:} \{n/y\}\tau
\vdash \{\rho\} t_{\mathsf{app}} : \{\cbox{~\psi \vdash\capp~\unbox m {\id}~\unbox n{\id}}/y\}\tau$
\\[1em]
      So we have :\\
%${\Gamma' \vdash \tmrec {\IH} {\psi,p \mto\{\rho\} t_v} {\psi, m, n, f_n, f_m \mto \{\rho\}t_{\mathsf{app}}} {\psi, m, f_m \mto \{\rho\}t_{\clam}} {\{\rho\}t} : \{\cbox{\{\rho\}\Psi}/\psi, \{\rho\}t/y\}\tau}$
% \\[1em]
$\Gamma' \vdash \{\rho\} \tmrec {\IH} {\psi,p \mto t_v} {\psi, m, n, f_n, f_m \mto t_{\mathsf{app}}} {\psi, m, f_m \mto t_{\clam}} t : \{\rho\}(\{\cbox{\Psi}/\psi, t/y\}\tau)$

}
\end{proof}
}

%%% Local Variables:
%%% mode: latex
%%% TeX-master: "main"
%%% End:

\newcommand{\smallerderiv}[1]{{\color{blue}{#1}}}
\newcommand{\redundant}[1]{{\color{orange}{#1}}}
\newcommand{\unsure}[1]{{\color{red}{#1}}}
% \subsection{Algorithmic Equality}
% \section{Algorithmic Equality}
% To arrive at a practical type checking algorithm algorithmic equality
% is essential. Our algorithmic equality algorithm is inspired by
% \cite{Harper03tocl} and  \cite{AbelScherer:TYPES10}.

\section{Weak head reduction}\label{sec:whred}
% To arrive at a practical type checking algorithm algorithmic equality
%  is essential. Algorithmic equality relies on weak head normalization.
Before we define the operational semantics of \cocon using weak head reduction,
we characterize weak head normal forms for both, (contextual) LF and
computations (\LONGVERSION{Fig.~\ref{fig:lfwhnf} and} Fig.~\ref{fig:whnf}). They are mutually defined.
% The definition of weak head normal form (whnf) for the LF objects is unsurprising.
Any computation-level term $t$ that is unboxed within an LF object (i.e.\/$\unbox t \sigma$) must be neutral and not further trigger any reductions. We leave the substitution $\sigma$ that is associated with it untouched. LF substitutions are in whnf.
%

% \SHORTVERSION{
%   \begin{figure}
%     \centering
% \[
%     \begin{array}{p{3.75cm}@{~}l@{~}r@{~}l}
% {LF normal terms} & \norm M & \bnfas & \lambda x.M \bnfalt R \\
% {LF neutral terms} & \neut R & \bnfas & \clam~M \bnfalt \capp M~N \bnfalt x \bnfalt \unbox{r}{\sigma}\\
% {LF normal subst.} & \norm \sigma & \bnfas & \wk{\psi} \bnfalt \cdot \bnfalt \sigma, M \bnfalt \sclo {\wk{\psi}}{\unbox r \sigma}
% \\[0.5em]
% {Contextual normal obj.} & \norm C & \bnfas & (\hatctx\Psi \vdash M)\\
% {Contextual normal type} & \norm T & \bnfas & (\hatctx\Psi \vdash A)
% \\[0.5em]
% {Normal terms} & \norm t, \tau & \bnfas & u \bnfalt \cbox T
%                              \bnfalt (y:\tau_1) \arrow \tau_2 \bnfalt \tmctx
% \bnfalt \cbox C \bnfalt \Psi \bnfalt n \\
% {Neutral terms} & \neut n & \bnfas & y \bnfalt n~t \bnfalt \titer{\R}{}{\IH} \rappto\Psi~n
% \bnfalt \titer{\R}{}{\IH} \rappto\Psi~\cbox{\hatctx\Psi \vdash {\unbox n \sigma}}
%     \end{array}
% \]
%     \caption{Weak Head Normal Forms}
%     \label{fig:whnf}
%   \end{figure}
% }

% \LONGVERSION{
\begin{figure}[htb]
  \centering
  \[
  \begin{array}{c}
\multicolumn{1}{l}{\fbox{$\norm M$}:\mbox{LF term $M$ is in weak head normal;~}\quad~\fbox{$\neut M$}:\mbox{LF term $M$ is in weak head neutral}}\\[0.5em]
    \infer{\norm \lambda x.M}{}
    \quad
% \infer{\norm r}{\neutclo r}\quad
\infer{\norm \unbox t \sigma}{\neut t}
 \quad % \\[0.5em]
 \infer{\norm M}{\neut M}{} \quad
 \infer{\neut (\clam M)}{}
 \quad
 \infer{\neut (\capp M~N)}{}
 \quad
 \infer{\neut x}{}
\\[0.5em]
\multicolumn{1}{l}{\fbox{$\norm \sigma$}:~\mbox{LF substitution $\sigma$ is in weak head normal form}}\\[0.5em]
\SUBSTCLO{\infer{\norm (\sclo {\psi} {\unbox t \sigma})}{\neut t}
\quad
\infer{\norm \sigma}{\neut \sigma}
}
\qquad
\infer{\norm (\sigma, M)}{}
\quad
 \infer{\norm \cdot}{}
 \qquad
 \infer{\norm {\wk{\psi}}}{}
\LONGVERSION
{\\[1em]
\multicolumn{1}{l}{\fbox{$\norm A$}:~\mbox{LF type $A$ is in weak head normal form}}\\[1em]
\infer{\norm{\const{a}~M_1 \ldots M_n}}{}
\qquad
\infer{\norm{\Pityp x A B}}{\norm A & \norm B}
}
  \end{array}
\]
  \caption{Normal and Neutral LF Terms \LONGVERSION{and LF Types}}
\label{fig:lfwhnf}
\end{figure}
% }

LF types are always considered to be in whnf, as computation may only produce a contextual LF term, but not a contextual LF type.
LF contexts are in weak head normal form, as we do not allow
the embedding of computations that compute a context.
%We do not enforce that each of the
%type declarations must be weak head normal form.
Similarly, the erased context, described by $\hatctx{\Psi}$, is in weak
head normal form. Further, contextual objects $C$ and types $T$ are considered to be in weak head normal form.
% , if the head is either empty or the computation-level expression at
% the head is neutral. \SHORTVERSION{We omit the definition of $\norm
% \hatctx{\Psi}$ which mirrors the $\norm$ definition of contexts.}

Computation-level expressions are in weak head normal form, if they do not trigger any further computation-level reductions. We consider boxed objects, i.e. $\cbox{C}$, and boxed types, i.e. $\cbox{T}$, in whnf. The contextual object $C$ will be further reduced when we use them and have to unbox them.
A computation-level term $t$ is neutral (i.e.~$\neut t$) if it cannot be reduced further because it contains free variables. Variables are
neutral, applications $t~s$ are neutral, if $t$ is neutral, and $\tmrec \IH {\psi, p \mto t_v} {\psi, m, n, f_m, f_n \mto
  t_{\mathsf{app}}} {\psi, m, f_m \mto t_{\mathsf{lam}}} \rappto \Psi~t$ is neutral, if $t$ is neutral.
We note that weakening preserves weak head normal forms.

% \LONGVERSION{
\begin{figure}[htb]
  \centering
  \[
  \begin{array}{c}
\multicolumn{1}{l}{\fbox{$\norm t$}:~\mbox{Computation (type) $t$ is in weak head normal form}}\\[0.5em]
% \multicolumn{1}{l}{\fbox{$\norm \tau$}:~\mbox{Computation type $\tau$ is in weak head neutral}}\\[0.5em]
% \infer{\norm \tmctx}{}
% \quad
\infer{\norm \cbox T}{}
\quad
\infer{\norm (y : \tau_1) \arrow \tau_2}{}
\quad
\infer{\norm u}{}
% \\[0.75em]
\quad
% \infer{\norm (\Psi)}{} \qquad
\infer{\norm (\tmfn y t)}{} \qquad
% \infer{\norm \trec{\R}{\tm}{\IH}}{}\qquad
% \infer{\norm \cbox{\hatctx{\Psi} \vdash M}}{} \qquad
% \infer{\norm \cbox{\Psi}}{} \qquad
\infer{\norm{\cbox{C}}}{} \qquad
\infer{\norm t}{\neut t}
\\[0.75em]
%\qquad
\infer{\neut y}{} \qquad
% \infer{\neut t~s}{\neut t & \norm s}
\infer{\neut (t~s)}{\neut t}
\qquad
% \infer{\neut (\trec{\R}{\tm}{\IH}~s_1)}{}
% \quad
%\infer{\neut (\trec{\R}{\tm}{\IH}\rappto ~\Psi~s_2)}{\neut s_1 }
%\quad
%\\[0.5em]
% \infer{\neut (\trec{\R}{\tm}{\IH}~\cbox{\Psi})}{\neut \Psi}
% \quad
% \infer{\neut (\trec{\R}{\tm}{\IH}\rappto (\Psi)~s_2)}{\neut \Psi}
% \quad
\infer{\neut (\trec{\R}{\tm}{\IH}\rappto (\Psi)~s)}{\neut s}
\quad
\infer{\neut
    (\trec{\R}{\tm}{\IH}~\rappto (\Psi)~\cbox{\hatctx{\Psi}\vdash r})}
{r = \unbox t \sigma \quad \neut t}
\quad
% \infer{\neut (\trec{\R}{\tm}{\IH}~s_1~s_2)}{\norm s_1 & \neut s_2}
% \infer{\neut \trec{\R}{\tm}{\IH}~s_1}{\neut s_1}
% \infer{\neut (\trec{\R}{\tm}{\IH}~
%                 s_1~s_2)}{\norm s_1 & \norm s_2}
  \end{array}
\]
  \caption{Normal and Neutral Computations}
\label{fig:whnf}
\end{figure}
% }

\LONGVERSION{
\begin{lemma}[Weakening preserves head normal forms]\quad\label{lem:weaknorm}\\
For LF Terms:
    \begin{enumerate}
    \item If $\norm M$ and $\Gamma' \leq _\rho \Gamma$ and
          $\Gamma ; \Psi \vdash M : A$ then $\norm \{\rho\}M$.
    \item If $\neut M$ and $\Gamma' \leq _\rho \Gamma$ and $\Gamma ; \Psi \vdash M : A$ then $\neut \{\rho\}M$.
    \end{enumerate}
For LF Substitutions
    \begin{enumerate}
    \item If $\norm \sigma$ and $\Gamma' \leq _\rho \Gamma$ and
          $\Gamma ; \Psi \vdash \sigma : \Phi$ then $\norm \{\rho\}\sigma$.
    \item If $\neut \sigma$ and $\Gamma' \leq _\rho \Gamma$ and $\Gamma ; \Psi \vdash \sigma : \Phi$ then $\neut \{\rho\}\sigma$.
    \end{enumerate}
% For LF Context:
%     \begin{enumerate}
%     \item If $\norm \Psi$ and $\Gamma' \leq _\rho \Gamma$ and $\Gamma \vdash \Psi \ctx$ then $\norm \{\rho\}\Psi$.
%     \item If $\neut \Psi$ and $\Gamma' \leq _\rho \Gamma$ and $\Gamma \vdash \Psi \ctx$ then $\neut \{\rho\}\Psi$.
%     \end{enumerate}
%
For Computations:
    \begin{enumerate}
    \item  If $\norm t$ and $\Gamma' \leq _\rho \Gamma$ and
           $\Gamma  \vdash t : \tau$ then $\norm \{\rho\}t$.
    \item  If $\neut t$ and $\Gamma' \leq _\rho \Gamma$ and $\Gamma \vdash t : \tau$ then $\neut\{\rho\}t$.
    \end{enumerate}
\end{lemma}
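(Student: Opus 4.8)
The plan is to prove all three parts together — and within each the \emph{whnf} statement and the \emph{neutral} statement — by a single mutual induction on the derivation of the weak-head-normal / weak-head-neutral judgment. The guiding observation is that a renaming never disturbs the head constructor of a term: the defining clauses of the computation-level substitution operation push $\{\rho\}$ through every term former, a computation-level variable $y$ is sent to the (renamed) computation-level variable $\{\rho\}y$, while LF variables $x$ and LF constants $\clam$, $\capp$ are left fixed; hence whichever shape the predicates $\norm{\cdot}$ and $\neut{\cdot}$ inspect is still exposed after applying $\{\rho\}$.

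Concretely, I would work through the rules layer by layer. For the \emph{LF-term} rules the cases $\norm{\lambda x.M}$, $\neut{(\clam~M)}$, $\neut{(\capp~M~N)}$ and $\neut x$ are immediate, since $\{\rho\}$ commutes with $\lambda$-abstraction and LF application and fixes constants and LF variables; for $\norm{\unbox t \sigma}$ (with premise $\neut t$) we have $\{\rho\}(\unbox t \sigma) = \unbox{\{\rho\}t}{\{\rho\}\sigma}$, and inverting the typing premise $\Gamma ; \Psi \vdash \unbox t \sigma : A$ gives $\Gamma \vdash t : \cbox{\Phi \vdash A'}$ (or the $\vdash_\#$ variant), so the computation induction hypothesis yields $\neut{\{\rho\}t}$ and the closure is again in whnf; the rule $\norm M$ from $\neut M$ is the induction hypothesis. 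For \emph{LF substitutions} the forms $\norm{(\sigma, M)}$, $\norm{\cdot}$, $\norm{\wk{\hatctx\Psi}}$ are preserved outright, since $\{\rho\}$ commutes with substitution extension, fixes the empty substitution, and on $\wk{\hatctx\Psi}$ only renames a context variable possibly occurring in $\hatctx\Psi$. For \emph{computations} the cases $\norm{\cbox T}$, $\norm{\cbox C}$, $\norm u$, $\norm{(y{:}\tau_1)\arrow\tau_2}$ and $\norm{\tmfn y t}$ follow from the substitution clauses; $\neut y$ stays a variable; and $\neut{(t~s)}$, the two recursor rules, and $\norm t$-from-$\neut t$ follow from the relevant induction hypotheses, where the typing of the neutral subterm needed to invoke the (typing-laden) induction hypothesis is again obtained by inversion on the typing premise. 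In each case the induction hypothesis is applied to a strict subderivation of the whnf/neutral derivation — namely to one of the premises $\neut t$, $\neut s$, $\neut M$, $\neut\sigma$ — so the induction is well-founded.

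The only point I expect to need genuine care — and which I would flag as the ``hard part'', though it is minor — is the mutual entanglement of the two layers: a neutral computation $t$ occurs inside the LF closure $\unbox t \sigma$ and inside the recursor scrutinee $\cbox{\hatctx\Psi \vdash \unbox t \sigma}$, so the six statements must be established simultaneously and the well-foundedness argument must account for crossing the layer boundary — which it does, since the premise $\neut t$ in those rules is a strict subderivation. The second thing worth emphasising is \emph{why a renaming rather than an arbitrary computation-level substitution}: at $\neut y$, and at $\unbox t \sigma$ with $t$ a variable, an arbitrary substitution could replace the variable by a $\tmfn y t$ or by a boxed object $\cbox C$, which would destroy neutrality and, in the closure case, even unblock a reduction — whereas a renaming keeps a variable a variable. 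Everything else is a routine case analysis along the defining clauses of computation-level substitution, using only that $\{\rho\}$ is a syntactic operation commuting with the term formers.
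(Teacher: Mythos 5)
Your proof is correct and follows essentially the same route as the paper, which simply states ``By induction on the first derivation'' (i.e.\ on the $\norm{}$/$\neut{}$ judgment) and relies on exactly the observation you make: a renaming commutes with all term formers and keeps variables as variables, so the head shape inspected by the whnf/neutral predicates is preserved. Your explicit handling of the cross-layer cases ($\unbox t \sigma$ and the recursor scrutinee) and your remark on why renamings, as opposed to general substitutions, are essential are both accurate and fill in details the paper leaves implicit.
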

\begin{proof}
By induction on the first derivation.
     % \pcase{\ianc{}{\norm (\tmfn y t)}{}}
     % \prf{$\{\rho\}(\tmfn y t) = \tmfn y (\{\rho, y/y\}t)$ \hfill by substitution def.}
     % \prf{$\norm \{\rho\}(\tmfn y t)$ \hfill by rule}
     % \pcase{\ianc{}{\norm{\cbox{C}}}{}}
     % \prf{$\norm{\{\rho\}\cbox{C}}$ \hfill since $\{\rho\}\cbox C = \cbox{\{\rho\}C}$ and by rule}
     % \pcase{\ianc{\neut t}{\norm t}{}}
     % \prf{$\neut \{\rho\}t$ by IH}
     % \prf{$\norm \{\rho\}t$ by rule}
     % \pcase{\ianc{}{\neut y}{}}
     % \prf{$\neut\{\rho\}y$ \hfill since $\{\rho\}y$ is a variable}
     % \pcase{\ianc{\neut t}{\neut t~s}{}}
     % \prf{$\neut \{\rho\}t$ \hfill by IH}
     % \prf{$\norm \{\rho\}(t~s)$ \hfill by rule and substitution def.}
     % \pcase{\ibnc{\norm \Psi}{\neut t}{\neut (\trec{\R}{\tm}{\IH}~\cbox{\Psi}~t)}{}}
     % \prf{$\norm \{\rho\}\Psi$ \hfill by IH}
     % \prf{$\neut \{\rho\}t$ \hfill by IH}
     % \prf{$\neut \{\rho\}(\trec{\R}{\tm}{\IH}~\cbox{\Psi}~t)$ \hfill by rule}
     % \pcase{\ianc{}{\neut x}{}}
     % \prf{$\neut \{\rho\}x$ \hfill by $\{\rho\}x = x$}
     % \pcase{\ianc{}{\neut \const c}{}}
     % \prf{$\neut \{\rho\}\const c$ \hfill since $\{\rho\}\const c =\const c$}
     % \pcase{\ianc{}{\norm \cdot}{}}
     % \prf{$\norm \{\rho\}\cdot$ \hfill since $\{\rho\}\cdot =\cdot$}
     % \pcase{\ianc{\neut t}{\norm {\wwk_{\unboxc{t}}}}{}}
     % \prf{$\neut \{\rho\}t$ \hfill by IH}
     % \prf{$\norm {\{\rho\}\wwk_{\unboxc{t}}}$ \hfill by rule}
%     \pcase{\ibnc{\mbox{for all}~i.~\norm{M_i}}{\norm{\const{a}~M_1 \ldots M_n}}{}}
%     \prf{$\mbox{for all}~i.~\norm{\{\rho\}M_i}$\hfill by IH}
%     \prf{$\norm{\const{a}~M_1 \ldots M_n}$ \hfill by }

\end{proof}
}

We define weak head reductions
%\footnote{\highlight{Unfortunately, this
%  does not produce $\beta\eta$-normal forms... -bp}}
 for LF in Fig.~\ref{fig:lfwhnfred} and for computations in Fig.~\ref{fig:whnf}. If an LF term is not already in $\norm$ form, we have two cases: either we encounter an LF application $M~N$ and we may need to beta-reduce or  $M$ reduces to $\unbox t \sigma$. If $t$ is neutral, then we are done; otherwise $t$ reduces to a contextual object $\cbox{\hatctx{\Psi} \vdash M}$, and we continue to reduce $[\sigma / \hatctx \Psi]M$.

\begin{figure}[htb]
  \centering
  \[
    \begin{array}{c}
\multicolumn{1}{l}{\fbox{$M \lfwhnf N$}:~~\mbox{LF Term $M$ weak head
      reduces to $N$ s.t. $\norm N$}}
\\[1em]
\infer{M \lfwhnf M}
{\norm M}
\quad
\infer{M\;N \lfwhnf R}
{M \lfwhnf \lambda x.M' & [N/x]M' \lfwhnf R}
\qquad
\infer{M~N \lfwhnf R \;N}{
M \lfwhnf R & \neut R}
 \\[0.5em]
\infer{\unbox t{\sigma} \lfwhnf N}
      {t \whnf \cbox{\hatctx{\Psi} \vdash M}\quad
%       \sigma \lfwhnf \sigma' \quad
       \lfs {\sigma}{\Psi} M \lfwhnf N}
\quad
\infer{\unbox t{\sigma} \lfwhnf \unbox{n}{\sigma}}
      {t \whnf n \quad \neut {n}}
%\quad
%\infer{\unbox t{\sigma} \lfwhnf \unbox {\cbox{\hatctx{\Psi} \vdash M}}{\sigma'}}
%      {t \whnf  \cbox{\hatctx{\Psi} \vdash M} \quad
%       \sigma \lfwhnf \sigma' \quad \neut \sigma'}
%\quad
% \quad
% \infer{\unbox t{\sigma} \lfwhnf \unbox t \sigma}
% {\neut t}
%
\\[0.5em]
 \multicolumn{1}{l}{\fbox{$\sigma \lfwhnf \sigma'$}:~~\mbox{LF Substitution $\sigma$ weak head reduces to $\sigma'$ s.t. $\norm \sigma'$}}
 \\[1em]
  \infer{\sigma \lfwhnf \sigma}{\norm \sigma}
%\qquad
%\infer{\sigma, M \lfwhnf \sigma', M}
%      {\sigma \lfwhnf \sigma'}
\quad
\infer{\wk\cdot \lfwhnf \cdot}
      { }
\SUBSTCLO{\quad
\infer{\sclo {\cdot} {\unbox{t}{\sigma} } \lfwhnf \cdot}{ }
}
\qquad
\infer{\wk{(\hatctx{\Psi},x)} \lfwhnf \wk{\hatctx\Psi}, x}
      {}
 % \quad
\SUBSTCLO{\\[0.5em]
\infer{(\sclo {\psi} {\unbox{t}{\sigma}}) \lfwhnf \sigma''}{
t \whnf \cbox{\hatctx \Psi' \vdash \sigma_1} &
% \sigma \lfwhnf \sigma_2 &
[\sigma / \hatctx {\Psi'}]\sigma_1 \lfwhnf \sigma' &
\trunc_\psi ~ \sigma' = \sigma''
}
\\[0.5em]
\infer{\sclo {\hatctx\Psi,x}{\unbox t \sigma} \lfwhnf \sclo{\hatctx\Psi} \unbox{t}\sigma , \unbox{\cbox{\hatctx\Psi,x \vdash x}}{\sclo {\hatctx\Psi,x} {\unbox t \sigma}}}{}
\qquad
\infer{\sclo {\psi} {\unbox{t}{\sigma} } \lfwhnf \sclo {\psi} {\unbox{n}{\sigma} }}{
t \whnf n & \neut n }
}
    \end{array}
  \]
  \caption{Weak Head Reductions for LF Terms, LF Substitutions, LF Contexts, and LF Contextual Terms.}
  \label{fig:lfwhnfred}
\end{figure}

If a computation-level term $t$ is not already in $\norm$ form, we
have either an application $t_1~t_2$ or a recursor. For an application $t_1~t_2$, we reduce $t_1$. If it reduces to a function and we continue to
beta-reduce otherwise we build a neutral application. For the recursor $\titer{\R}{\tm}{\IH}~\Psi~t$ we also consider different cases:
1) if $t$ reduces to a neutral term, then we cannot proceed; 2) if $t$ reduces to $\cbox{\hatctx{\Psi'} \vdash M}$, and then proceed to further reduce $M$. If the result is $\unbox {t'}{\sigma}$, where $t$ is neutral, then we cannot proceed; if the result is $N$ where $N$ is neutral, then we proceed and choose the appropriate branch in $\R$. We note that weak head reduction for LF and computation is deterministic.

\begin{figure}[htb]
  \centering
  \[
    \begin{array}{c}
\multicolumn{1}{l}{\fbox{$t \whnf r$}:~~\mbox{Computation-level term $t$ weak head reduces to $r$ s.t. $\norm r$}}\\[1em]
\infer{t \whnf t}{\norm t}
\qquad
% \infer{t \;s \whnf t'\;s}
% {t \whnf t' & \neut t'}
% \qquad
 \infer{t_1\;t_2 \whnf v}
          {t_1 \whnf \tmfn y t & \{t_2/y\}t \whnf v}
\quad
 \infer{t_1\;t_2 \whnf w~t_2}
      {t_1 \whnf w & \neut w  }
 \\[0.5em]
% Recursion that is stuck
% \infer{t_1~t_2 \whnf t'_1~t_2}
% {t_1 \whnf t'_1 & \neut t'_1
% }
% \\[0.5em]
%\quad
% \infer{\trec{\R}{\tm}{\IH}~\rappto \Psi~t \whnf \trec{\R}{\tm}{\IH}~\rappto s_1~t_2}
% {t_1 \whnf s_1 \quad \neut {s_1}}
% \quad
% \infer{\trec{\R}{\tm}{\IH}~\rappto \Psi~t \whnf \trec{\R}{\tm}{\IH}~\rappto \cbox{\Psi'}~t}
% {t_1 \whnf \cbox{\Psi} &  \Psi \lfwhnf \Psi' \quad \neut \Psi' }
% \\[0.5em]
\infer{\trec{\R}{\tm}{\IH}~\rappto \Psi~t \whnf \trec{\R}{\tm}{\IH}~\rappto \Psi~s}
{% t_1 \whnf \cbox{\Psi} & \Psi \lfwhnf \Psi' \quad
 t \whnf s & \neut{s} }
\qquad
\infer{\trec{\R}{\tm}{\IH}~\rappto \Psi~t \whnf \titer{\R}{}{\IH}
             \rappto \Psi~(\hatctx{\Psi} \vdash \unbox {t'} \sigma)}
{% t_1 \whnf \cbox{\Psi} &  \Psi \lfwhnf \Psi' \quad
 t \whnf \cbox{\hatctx \Psi \vdash M} &
 M   \lfwhnf \unbox {t'} \sigma \quad \neut t'
% \quad
% \trec{\R}{\tm}{\IH} \cappto (\Psi)~(\hatctx{\Psi} \vdash N) \whnf v
}
\\[0.5em]
% \infer{\trec{\R}{\tm}{\IH}~\rappto t_1~t_2 \whnf
%      \trec{\R}{\tm}{\IH}~\rappto \cbox{\Psi'}~\cbox{\hatctx \Phi \vdash N}}
%{t_1 \whnf \cbox{\Psi} &  \Psi \lfwhnf \Psi' \quad
% t_2 \whnf \cbox{\hatctx \Phi \vdash M} &
%   M   \lfwhnf N \quad \neut N }
%\\[0.5em]
\infer{\titer{\R}{}{\IH}~\rappto \Psi~t \whnf v}
{% t_1 \whnf \cbox{\Psi} &  \Psi \lfwhnf \Psi' \quad
 t \whnf \cbox{\hatctx \Psi \vdash M} &
 M   \lfwhnf N \quad \neut N \quad
 {\R} \cappto (\Psi)~(\hatctx{\Psi} \vdash N) \whnf v
}
\\[0.5em]
\multicolumn{1}{l}{\mbox{let}~
\R =  ({\psi,p \mto t_v} \mid {\psi,m,n,f_m, f_n \mto t_{\mathsf{app}}}
      \mid {\psi, m, f_m \mto t_{\clam}})}
 \\[1em]
% \qquad
% \\[1.5em]
 \infer{{\R} \cappto  (\Psi)~(\hatctx{\Psi} \vdash \capp M~N)  \whnf  v}
 {
\{\Psi/\psi,~~~\cbox{\hatctx{\Psi} \vdash M}/m,~~~
  \cbox{\hatctx{\Psi} \vdash N}/n,~~
  \trec{\R}{\tm}\IH~\rappto {\Psi}~{\cbox{\hatctx{\Psi} \vdash M}}/f_m,~~~
   \trec{\R}{\tm}\IH~\rappto {\Psi}~\cbox{\hatctx{\Psi} \vdash N}/f_n\} t_{\capp} \whnf v
}
\\[1em]
 \infer{{\R} \cappto  (\Psi)~(\hatctx{\Psi} \vdash x) \whnf  v}
       { \{\Psi/\psi,~\cbox{\hatctx{\Psi} \vdash x}/p\} t_v \whnf v}
\quad
\infer[]
{{\R} \cappto (\Psi)~(\hatctx{\Psi} \vdash \clam \lambda x.M)  \whnf  v}
 {
\{ {\Psi}/\psi,~\cbox{\hatctx{\Psi}, x \vdash
      M}/m,~\trec{\R}{\tm}{\IH}\rappto {(\Psi,x{:}\tm)}~{\cbox{\hatctx{\Psi}, x \vdash M}}/f  \} t_{\clam} \whnf v
}
  \end{array}
  \]
  \caption{Weak Head Reductions for Computations}
  \label{fig:whnfred}
\end{figure}

\LONGVERSION{
\begin{lemma}[Determinacy of whnf reduction]\label{lem:detwhnf}\quad
  \begin{enumerate}
  \item If $M \lfwhnf N_1$ and $M \lfwhnf N_2$ then $N_1 = N_2$.
  \item If $\sigma \lfwhnf \sigma_1$ and $\sigma \lfwhnf \sigma_2$ then $\sigma_1 = \sigma_2$.
%  \item If $\Psi \lfwhnf \Phi_1$ and $\Psi \lfwhnf \Phi_2$ then $\Phi_1 = \Phi_2$.
  \item \label{it:comp-detwhnf} If $t \whnf t_1$ and $t \whnf t_2$ then $t_1 = t_2$.
  \end{enumerate}
\end{lemma}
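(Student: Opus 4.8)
The plan is to prove the three statements simultaneously by one mutual induction on the height of the first of the two given weak-head-reduction derivations (for each part). The guiding principle is that the judgments $\lfwhnf$ and $\whnf$ are syntax-directed: the outermost shape of the subject narrows the last rule to a short list of candidates, and I will argue that exactly one of them is applicable, so the two derivations must end with the same rule; its premises can then be matched up and the induction hypothesis applied to each pair of corresponding sub-derivations. The one structural fact I keep using is that neutrality is incompatible with the canonical shapes: $\lambda x.M$, $\cbox{C}$ and $\unbox{t}{\sigma}$ (even when $\neut t$) are never neutral LF terms, and $\cbox{C}$ is never a neutral computation; together with the observation that nothing in $\norm$ form is the subject of a non-reflexive rule with satisfiable premises, this makes rule selection deterministic.

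I would then proceed by case analysis on the subject. Statement (2) is immediate and needs no induction hypothesis: every LF substitution in the fragment is either already in $\norm$ form --- in which case only the reflexivity rule applies --- or is $\wk{\cdot}$ or $\wk{(\hatctx{\Psi},x)}$, each matched by a single premise-free rule. For statement (1) the only non-trivial shapes are $M_0\,N_0$ and $\unbox{t}{\sigma}$. In the application case, a derivation ends with either the $\beta$-rule or the neutral-application rule; were one derivation to use each, the induction hypothesis would force the whnf of $M_0$ to be simultaneously a $\lambda$-abstraction and a neutral term, which is impossible; hence both use the same rule, and matching premises and applying the induction hypothesis (to the reduction of $M_0$, then to that of $[N_0/x]M'$, resp.\ again to that of $M_0$) gives $N_1 = N_2$. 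In the $\unbox{t}{\sigma}$ case, a derivation uses either the rule requiring $t\whnf\cbox{\hatctx{\Psi}\vdash M}$ or the rule requiring $t\whnf n$ with $\neut n$; the induction hypothesis for statement (3) gives a unique whnf for $t$, and since it cannot be both a box and neutral, the two derivations use the same rule and the induction hypothesis closes the case.

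For statement (3), applications are handled exactly as LF applications, and the recursor $\trec{\R}{\tm}{\IH}~\rappto\Psi~t$ is the crux. The induction hypothesis for (3) gives the unique whnf $s$ of $t$, which is either neutral --- forcing the stuck rule --- or a box $\cbox{\hatctx{\Psi}\vdash M}$; in the latter case the induction hypothesis for (1) gives the unique whnf of $M$, which is either $\unbox{t'}{\sigma}$ with $\neut t'$ (forcing the rule that pushes the recursor inside) or a neutral $N$, and these two are exclusive; when $N$ is neutral, its head constructor ($\clam$, $\capp$, or a variable) selects a unique branch of $\R$ and a uniquely determined instantiation, so $\R\cappto(\Psi)~(\hatctx{\Psi}\vdash N)\whnf v$ unfolds to a single premise of the form $\{\theta\}t_b\whnf v$, and the induction hypothesis for (3) yields $v_1 = v_2$.

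The main obstacle is organizational rather than conceptual. Because the $\unbox{t}{\sigma}$-case of (1) appeals to determinacy of computation reduction (statement (3)) while the recursor case of (3) appeals back to determinacy of LF-term reduction (statement (1)), the three statements genuinely must be proved together, with the induction hypothesis available at strictly smaller derivation height across all of them; I would verify that every sub-reduction invoked in such an appeal --- e.g.\ $\lfs{\sigma}{\Psi}M\lfwhnf N$ inside the $\unbox{t}{\sigma}$-rule, and the $\{\theta\}t_b\whnf v$ reached through the branch-selection rules --- is literally a sub-derivation of the given one. The only genuinely delicate point is the exclusivity check for applications whose head is a constant of the signature $\{\tm,\clam,\capp\}$: one has to observe that a bare or partially applied constant is a dead end for $\lfwhnf$, so that no derivation can commit to the $\beta$-rule in such a position; once that is settled, every case closes by rote.
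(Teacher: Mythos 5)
Your proposal is correct and is essentially a careful elaboration of the paper's own proof, which reads in its entirety ``By inspection of the rules'': your mutual induction on derivation height with syntax-directed case analysis is exactly the inspection the paper leaves implicit, and your identification of the partially-applied-constant subtlety is the one point genuinely worth spelling out. One small imprecision: your claim that the two derivations must end with the same rule is not literally true --- e.g.\ for $\unbox t \sigma$ with $\neut t$ (which is in $\norm$ form), both the reflexivity rule and the rule with premises $t \whnf n$, $\neut n$ are applicable --- but in every such overlap the two conclusions coincide, so the determinacy claim itself is unaffected.
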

\begin{proof}
By inspection of the rules.
\end{proof}
}
Our semantic model for equivalence characterizes well-typed terms. To facilitate our further development  we introduce the following notational abbreviations for well-typed weak head normal forms (see Def. \ref{def:typedwhnf}) and show that whnf reductions are preserved under renamings and are stable under substitutions.

\begin{definition}[Well-Typed Whnf]\label{def:typedwhnf}
\[
  \begin{array}{r@{~}c@{~}l@{~}c@{~}l@{~}l@{\quad}c@{\quad}l}
\Gamma ; \Psi & \vdash & M & \lfwhnf & N & : A & \defiff &
\Gamma ; \Psi \vdash M : A ~\mbox{and}~
\Gamma ; \Psi \vdash N : A ~\mbox{and}~
\Gamma ; \Psi \vdash M \equiv N : A ~\mbox{and}~
M \lfwhnf N
\\
% \Gamma & \vdash & \Psi & \lfwhnf & \Phi & : \ctx & \defiff &
% \Gamma \vdash \Psi : \ctx ~\mbox{and}~
% \Gamma \vdash \Phi : \ctx ~\mbox{and}~
% \Gamma \vdash \Psi \equiv \Phi : \ctx
% \\
% \\
  \Gamma ; \Psi & \vdash & \sigma_1 & \lfwhnf & \sigma_2 & : \Phi & \defiff &
  \Gamma ; \Psi \vdash \sigma_1 : \Phi ~\mbox{and}~
  \Gamma ; \Psi \vdash \sigma_2 : \Phi ~\mbox{and}~
  \Gamma ; \Psi \vdash \sigma_1 \equiv \sigma_2 : \Phi~\mbox{and}~
\sigma_1 \lfwhnf \sigma_2
 \\
\Gamma & \vdash & t & \whnf & t' & : {\tau}  & \defiff &
% \Gamma \vdash \tau \equiv \tau'~\mbox{and}~
\Gamma \vdash t : \tau~\mbox{and}~
\Gamma \vdash t' : \tau~\mbox{and}~
\Gamma \vdash t \equiv t' : \tau~\mbox{and}~
t \whnf t'
%  \\[1em]
  \end{array}
\]
\end{definition}

\begin{lemma}[Weak Head Reductions preserved under Weakening]\label{lem:weakwhnf}\quad
  \begin{enumerate}
%  \item If\/ $\Gamma \vdash \Psi \lfwhnf \Psi' : \ctx$ and $\Gamma' \leq_\rho \Gamma$
%    then $\Gamma' \vdash \{\rho\}\Psi \lfwhnf \{\rho\}\Psi' : \ctx$.
  \item If\/ $\Gamma ; \Psi \vdash M \lfwhnf N: A$ and $\Gamma' \leq_\rho \Gamma$
    then $\Gamma' ; \{\rho\}\Psi \vdash \{\rho\}M \lfwhnf \{\rho\}N : \{\rho\}A$.
  \item If\/ $\Gamma ; \Psi \vdash \sigma \lfwhnf \sigma': \Phi$ and $\Gamma' \leq_\rho \Gamma$
    then $\Gamma' ; \{\rho\}\Psi \vdash \{\rho\}\sigma \lfwhnf \{\rho\}\sigma' : \{\rho\}\Phi$.
  \item \label{it:sweakcomp} If\/ $\Gamma \vdash t \whnf t': \tau$ and $\Gamma' \leq_\rho \Gamma$
    then $\Gamma' \vdash \{\rho\}t \whnf \{\rho\}t' : \{\rho\}\tau$.

  \end{enumerate}

\end{lemma}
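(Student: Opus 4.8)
The statement asserts that the three weak-head-reduction relations — for LF terms, LF substitutions, and computations — are all stable under a renaming substitution $\rho$ with $\Gamma' \leq_\rho \Gamma$. The plan is to prove all three parts simultaneously by induction on the derivation of the weak-head-reduction judgment (the last conjunct in each clause of Definition~\ref{def:typedwhnf}), which is what actually drives the recursion; the typing and definitional-equality side-conditions will be discharged along the way using the computation-level renaming lemma (Lemma~\ref{lem:weakcomp}). A preliminary observation I would record first: renaming commutes with LF substitution, i.e.\ $\{\rho\}(\lfs{\sigma}{\Psi}M) = \lfs{\{\rho\}\sigma}{\{\rho\}\Psi}\{\rho\}M$ and similarly for substitution composition and for truncation $\{\rho\}(\trunc_\Psi \sigma) = \trunc_{\{\rho\}\Psi}\{\rho\}\sigma$. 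These are straightforward because $\rho$ only touches computation-level variables, which are disjoint from LF variables, and $\rho$ commutes with contextual-object and LF-substitution formation; I would state them as an auxiliary lemma and use them freely.

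\textbf{Key steps.} The cases split naturally. (i) For the reflexive rules $M \lfwhnf M$ with $\norm M$ (and likewise $\sigma \lfwhnf \sigma$, $t \whnf t$): use Lemma~\ref{lem:weaknorm} (weakening preserves weak head normal forms) to get $\norm \{\rho\}M$, and Lemma~\ref{lem:weakcomp} to transport the typing judgment; the equality $\Gamma';\{\rho\}\Psi \vdash \{\rho\}M \equiv \{\rho\}M$ is reflexivity. (ii) For $M\,N \lfwhnf R$ via $M \lfwhnf \lambda x.M'$ and $\lfs{N/x}{}M' \lfwhnf R$: apply the induction hypothesis to both premises, push $\{\rho\}$ through the application and the $\beta$-redex using the commutation lemma, and reassemble; the definitional-equality side condition follows by congruence of $\equiv$ and the $\beta$-rule, transported by Lemma~\ref{lem:weakcomp}. (iii) For $\unbox t \sigma \lfwhnf N$ via $t \whnf \cbox{\hatctx\Psi \vdash M}$ and $\lfs{\sigma}{\Psi}M \lfwhnf N$: apply the IH for computations (part 3) to get $\{\rho\}t \whnf \{\rho\}\cbox{\hatctx\Psi\vdash M} = \cbox{\{\rho\}\hatctx\Psi \vdash \{\rho\}M}$, apply the IH for LF terms to the second premise, and use the commutation lemma to identify $\{\rho\}(\lfs{\sigma}{\Psi}M)$ with $\lfs{\{\rho\}\sigma}{\{\rho\}\Psi}\{\rho\}M$. (iv) For $\unbox t \sigma \lfwhnf \unbox{n}{\sigma}$ with $t \whnf n$, $\neut n$: the IH gives $\{\rho\}t \whnf \{\rho\}n$ and Lemma~\ref{lem:weaknorm} gives $\neut \{\rho\}n$. (v) The substitution cases ($\wk\cdot \lfwhnf \cdot$, $\wk{(\hatctx\Psi,x)} \lfwhnf \wk{\hatctx\Psi},x$) are immediate since $\{\rho\}$ acts as identity on the relevant structure. (vi) For computations: the two application rules mirror case (ii); the recursor rules mirror case (iii)/(iv), peeling off $\{\rho\}$ through the recursor, through the contextual object, and — in the branch-selection rules $\R \cappto (\Psi)\,C \whnf v$ — through the branch substitution $\theta$ (here one needs that $\{\rho\}$ commutes with the computation-level substitution built from the branch, which is a standard substitution-composition fact already implicit in Lemma~\ref{lm:compsubst}).

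\textbf{Main obstacle.} The routine part is large but mechanical; the genuine subtlety is in the recursor rules that generate a branch substitution $\theta$ containing \emph{recursive calls} (e.g.\ $\trec{\R}{\tm}{\IH}\,\rappto\,(\Psi,x{:}\tm)\,\cbox{\ldots}/f$). To close that case one must check that $\{\rho\}$ distributes over $\theta$ correctly, which in turn requires that the recursor, its invariant annotation $\IH$, its branch set $\R$, and its LF-context argument $\Psi$ all commute with $\{\rho\}$ — and crucially that the recursive-call subterm inside $\theta$ is itself obtained by applying $\{\rho\}$, so the reassembled right-hand side is syntactically $\{\rho\}v$ for the $v$ produced on the unrenamed side. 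This is where the bookkeeping is heaviest: one is essentially verifying that the one-step unfolding of the recursor commutes with renaming, which amounts to a careful but routine compatibility check between $\{\rho\}$, LF substitution, and computation-level substitution, relying on the commutation lemma together with compositionality of both substitution operations. I do not expect any real mathematical difficulty — only careful case analysis — so the proof is, in the end, "by induction on the reduction derivation, using Lemmas~\ref{lem:weaknorm} and~\ref{lem:weakcomp} and commutation of renaming with substitution."
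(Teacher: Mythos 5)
Your proposal is correct and matches the paper's approach: the paper proves this by mutual induction on the first derivation, citing the computation-level substitution lemma (Lemma~\ref{lm:compsubst}) with the observation that renamings are a special case of computation-level substitutions, which is exactly the skeleton you flesh out. Your additional detail on commutation of renaming with LF substitution and the recursor bookkeeping is a faithful (if more explicit) elaboration of the same argument.
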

\begin{proof}
By mutual induction on the first derivation using the computation-level substitution lemma \ref{lm:compsubst}, as renaming $\Gamma' \leq_\rho \Gamma$ are a special case of computation-level substitutions.
\end{proof}

 \begin{lemma}[LF Weak Head Reduction is stable under LF Substitutions]\label{lm:lfwhnfsub}
 Let $\Gamma ; \Psi \vdash \sigma : \Phi$.
 \begin{enumerate}
 \item If $\Gamma ; \Phi \vdash M \lfwhnf \unbox {t_1} {\sigma_1} : A$
       then $\Gamma ; \Psi \vdash \lfs \sigma \Phi M \lfwhnf \unbox{t_1}{\lfs \sigma \Phi {\sigma_1}} : \lfs \sigma \Phi A$.
 \item If $\Gamma ; \Phi \vdash M \lfwhnf \lambda x.N : \Pi x{:}A.B$
       then $\Gamma ; \Psi \vdash \lfs \sigma \Phi M \lfwhnf \lfs \sigma \Phi (\lambda x.N) : \lfs \sigma \Phi (\Pi x{:}A.B)$.
 \item If $\Gamma ; \Phi \vdash M \lfwhnf x : A$ and $\Gamma ; \Psi \vdash \sigma(x) \lfwhnf N : \lfs\sigma\Phi A$
       then $\Gamma ; \Psi \vdash \lfs \sigma \Phi M \lfwhnf N : \lfs \sigma \Phi A$.
 \item If $\Gamma ; \Phi \vdash M \lfwhnf \capp~M_1~M_2 : \tm$
       then $\Gamma ; \Psi \vdash \lfs\sigma\Phi M \lfwhnf \lfs\sigma\Phi (\capp~M_1~M_2) : \tm$.
 \item If $\Gamma ; \Phi \vdash M \lfwhnf \clam~M_1 : \tm$
       then $\Gamma ; \Psi \vdash \lfs\sigma\Phi M \lfwhnf \lfs\sigma\Phi (\clam~M_1) : \tm$.
 \item If $\Gamma ; \Phi \vdash \sigma_1 \lfwhnf \sigma_2 : \Phi'$
       then $\Gamma ; \Psi \vdash \lfs\sigma\Phi\sigma_1 \lfwhnf \lfs\sigma\Phi\sigma_2 : \Phi'$.
 \end{enumerate}
\end{lemma}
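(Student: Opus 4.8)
The plan is to prove statements (1)–(5) by a single induction on the derivation of the $\Phi$-side reduction $\Gamma;\Phi\vdash M\lfwhnf N:A$, and statement (6) by a separate, essentially trivial, induction on $\Gamma;\Phi\vdash\sigma_1\lfwhnf\sigma_2:\Phi'$. For (6) there are only three rules ($\norm\sigma_1$; $\wk\cdot\lfwhnf\cdot$; $\wk{(\hatctx\Psi,x)}\lfwhnf\wk{\hatctx\Psi},x$), and each is dispatched directly: applying $\sigma$ commutes with each right-hand side by the definition of the LF substitution operation (and of $\trunc$), while the typing and definitional-equality side conditions of Def.~\ref{def:typedwhnf} come from Lemma~\ref{lm:lfsubst} (LF Substitution) and Lemma~\ref{lm:func-lftyping} (Functionality of LF typing).

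For (1)–(5) I would case split on the last rule of $M\lfwhnf N$. In the reflexivity rule $M\lfwhnf M$ with $\norm M$ I split further on the shape of the whnf $M$: if $M=\lambda x.M_0$ then $\lfs\sigma\Phi M=\lambda x.\lfs{\sigma,x}{\Phi,x}M_0$ is again a whnf, matching (2); if $M=\unbox t{\sigma_0}$ with $\neut t$ then $\lfs\sigma\Phi M=\unbox t{\lfs\sigma\Phi{\sigma_0}}$ is still a neutral-headed closure, matching (1); if $M$ is $\capp$- or $\clam$-headed then so is $\lfs\sigma\Phi M$ and it remains neutral, matching (4) and (5); and if $M=x$ then $\lfs\sigma\Phi M=\pos x~\lfs\sigma\Phi{}$ is exactly the term denoted $\sigma(x)$ in (3), so the desired conclusion is precisely the hypothesis. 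The two genuine step cases are $\beta$-reduction of an application and reduction through $\unbox{-}{-}$. For $M\,N\lfwhnf R$ arising from $M\lfwhnf\lambda x.M'$ and $[N/x]M'\lfwhnf R$: the induction hypothesis, instance (2), on the first premise yields $\lfs\sigma\Phi M\lfwhnf\lambda x.\lfs{\sigma,x}{\Phi,x}M'$, so $\lfs\sigma\Phi{(M\,N)}$ first $\beta$-steps; by compositionality of LF substitution $[\lfs\sigma\Phi N/x](\lfs{\sigma,x}{\Phi,x}M')=\lfs\sigma\Phi{([N/x]M')}$, so the induction hypothesis applied to $[N/x]M'\lfwhnf R$, in whichever instance matches the shape of $R$, closes the case. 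The $\unbox t{\sigma_0}\lfwhnf N$ case is parallel: $\sigma$ does not touch $t$, so $t\whnf\cbox{\hatctx{\Psi_0}\vdash M_0}$ is unchanged, compositionality gives $[\lfs\sigma\Phi{\sigma_0}/\hatctx{\Psi_0}]M_0=\lfs\sigma\Phi{(\lfs{\sigma_0}{\Psi_0}M_0)}$, and the induction hypothesis applies to $\lfs{\sigma_0}{\Psi_0}M_0\lfwhnf N$.

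Throughout, the typing components of the Def.~\ref{def:typedwhnf} judgments come from Lemma~\ref{lm:lfsubst}, applied both to the typing and to the definitional-equality derivations that the hypotheses carry (recall $M\equiv N:A$ lies in $\JLF$, so Lemma~\ref{lm:lfsubst} transports it under $\sigma$); the required equalities $\Gamma;\Psi\vdash\lfs\sigma\Phi M\equiv N':\lfs\sigma\Phi A$ then follow by transitivity, chaining $\lfs\sigma\Phi M\equiv\lfs\sigma\Phi{N_0}$ with the equality supplied by the induction hypothesis for the reduct. I expect the main obstacle to be organizational rather than conceptual: keeping the mutual case analysis straight, and in particular handling (3) correctly — since $\sigma$ may send a $\Phi$-side variable to an arbitrary term that reduces further, the ``shape'' asserted in the conclusion of the step cases must be read off the reduct returned by the recursive call, not off the syntax of $M$ — together with carefully justifying the load-bearing compositionality identity for the LF substitution operation. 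A small preliminary point is to confirm that, in the fixed signature, every weak-head neutral well-typed LF term is one of $x$, $\capp M~N$, $\clam M$, so that the split on $\norm M$ is exhaustive.
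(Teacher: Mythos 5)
Your proposal is correct and follows essentially the same route as the paper: induction on the weak head reduction derivation (with the five term statements handled simultaneously so that the $\beta$-case can invoke the $\lambda$-instance on its first premise), the compositionality identity $[\lfs\sigma\Phi N/x](\lfs{\sigma,x}{\Phi,x}M')=\lfs\sigma\Phi([N/x]M')$ in the application and unbox cases, Lemma~\ref{lm:lfsubst} for the typing and equality components of Def.~\ref{def:typedwhnf}, and reading the conclusion of the variable case off the supplied reduction of $\sigma(x)$ rather than off $M$. The only compression relative to the paper is in part (6), where the weakening-substitution rule additionally requires a short sub-case analysis on the shape of $\sigma$ (extension versus weakening) since the two substituted sides may coincide literally rather than by a reduction step, but this is exactly the "definition of the substitution operation and of $\trunc$" unfolding you flag.
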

\begin{proof}
By induction on $M \lfwhnf M'$ relation that is part of the well-typed weak head reduction using Lemma \ref{lm:lfctxwf} and \ref{lm:lfsubst}.
\LONGVERSIONCHECKED{\\[1em]
For (1): \fbox{If $\Gamma ; \Phi \vdash M \lfwhnf \unbox {t_1} {\sigma_1} : A$
       then $\Gamma ; \Psi \vdash \lfs \sigma \Phi M \lfwhnf \unbox{t_1}{\lfs \sigma \Phi {\sigma_1}} : \lfs \sigma \Phi A$.}
\\[1em]
\prf{\emph{Case} $M = \unbox{t_0}{\sigma_1}$ and $t_0 \whnf t_1$ and $\neut t_1$}
\prf{$\Gamma ; \Phi \vdash M : A$ \hfill by assumption}
\prf{$\Gamma ; \Phi \vdash \unbox{t_1}{\sigma_1} : A$ \hfill by assumption}
% \prf{$\Gamma ; \Phi \vdash \sigma_1 : \Phi_1$ and $\Gamma \vdash t_1 : \cbox{\Phi_1 \vdash A_1}$ s.t. $\lfs{\sigma_1}{\Phi_1} A_1 = A$ \hfill by inversion}
\prf{$\Gamma ; \Psi \vdash \lfs\sigma\Phi M : \lfs\sigma\Phi A$ \hfill by LF subst. lemma \ref{lm:lfsubst}}
\prf{$\Gamma ; \Psi \vdash  \lfs\sigma\Phi (\unbox{t_1}{\sigma_1}) :  \lfs\sigma\Phi A$ \hfill by LF subst. lemma \ref{lm:lfsubst}}
\prf{$ \lfs\sigma\Phi M \lfwhnf \unbox{t_1}{\lfs\sigma\Phi {\sigma_1}}$ \hfill since $\neut t_1$ and $t_0 \whnf t_1$ and LF subst. prop.}
\prf{$\Gamma ; \Phi \vdash \lfs\sigma\Phi M \lfwhnf \unbox{t_1}{\lfs\sigma\Phi {\sigma_1}} : \lfs\sigma\Phi A$ \hfill by well-typed whnf (Def \ref{def:typedwhnf})}
\\[0.25em]
\prf{\emph{Case} $M = \unbox{t_0}{\sigma_0}$ and $t_0 \whnf \cbox{\hatctx\Phi' \vdash M'}$, $\lfs{\sigma_0}{\Phi'}M' \lfwhnf \unbox{t_1}{\sigma_1}$}
\prf{$\Gamma ; \Phi \vdash M : A$ \hfill by assumption}
\prf{$\Gamma ; \Phi \vdash \unbox{t_1}{\sigma_1} : A$ \hfill by assumption}
% \prf{$\Gamma ; \Phi \vdash \sigma_1 : \Phi_1$ and $\Gamma \vdash t_1 : \cbox{\Phi_1 \vdash A_1}$ s.t. $\lfs{\sigma_1}{\Phi_1} A_1 = A$ \hfill by inversion}
\prf{$\Gamma ; \Psi \vdash \lfs\sigma\Phi M : \lfs\sigma\Phi A$ \hfill by LF subst. lemma \ref{lm:lfsubst}}
\prf{$\Gamma ; \Psi \vdash  \lfs\sigma\Phi (\unbox{t_1}{\sigma_1}) :  \lfs\sigma\Phi A$ \hfill by LF subst. lemma \ref{lm:lfsubst}}
\prf{$\lfs{\lfs\sigma\Phi{\sigma_0}}{\Phi'}M' \lfwhnf \unbox{t_1}{\lfs\sigma\Phi {\sigma_1}}$ \hfill by IH (and subst. prop)}
% \prf{$\lfs\sigma\Phi \sigma_0 \lfwhnf \lfs\sigma\Phi \sigma'_0$ \hfill by IH}
\prf{$\unbox{t_0}{\lfs\sigma\Phi \sigma_0} \lfwhnf \unbox{t_1}{\lfs\sigma\Phi{\sigma_1}}$ \hfill by whnf}
\prf{$\Gamma ; \Psi \vdash \lfs\sigma\Phi (\unbox{t_0}{\sigma_0}) \lfwhnf \unbox{t_1}{\lfs\sigma\Phi{\sigma_1}} : \lfs\sigma\Phi A$ \hfill by subst. prop. and  well-typed whnf (Def \ref{def:typedwhnf})}
\\[0.25em]
% \prf{\emph{Case} $M = \unbox{t_1}{\sigma_1}$ and $t_1 = \cbox{\hatctx\Phi_1 \vdash M_1}$ and $\neut \sigma_1$}
% \prf{$\Gamma ; \Phi \vdash M : A$ \hfill by assumption}
% \prf{$\Gamma ; \Phi \vdash \unbox{t_1}{\sigma_1} : A$ \hfill by assumption}
% \prf{$\Gamma ; \Psi \vdash \lfs\sigma\Phi M : \lfs\sigma\Phi A$ \hfill by LF subst. lemma}
% \prf{$\Gamma ; \Psi \vdash  \lfs\sigma\Phi (\unbox{t_1}{\sigma_1}) :  \lfs\sigma\Phi A$ \hfill by LF subst. lemma}
% %
% \prf{$\neut {\lfs\sigma\Phi{\sigma_1}}$ \hfill since $\neut \sigma_1$}
% \prf{$\unbox{t_1}{\lfs\sigma\Phi {\sigma_1}} \lfwhnf \unbox{t_1}{\lfs\sigma\Phi {\sigma_1}}$ \hfill by whnf}
% \prf{$\Gamma ; \Psi \vdash \lfs\sigma\Phi (\unbox{t_1}{\sigma_1}) \lfwhnf \unbox{t_1}{\lfs\sigma\Phi {\sigma_1}} : \lfs\sigma\Phi A$ \hfill by well-typed whnf (Def \ref{def:typedwhnf})}
% \\[0.25em]
\prf{\emph{Case} $M = \unbox{t_1}{\sigma_1}$ and $\neut t_1$}
\prf{$\Gamma ; \Phi \vdash M : A$ \hfill by assumption}
\prf{$\Gamma ; \Phi \vdash \unbox{t_1}{\sigma_1} : A$ \hfill by assumption}
\prf{$\Gamma ; \Psi \vdash \lfs\sigma\Phi M : \lfs\sigma\Phi A$ \hfill by LF subst. lemma \ref{lm:lfsubst}}
\prf{$\Gamma ; \Psi \vdash  \lfs\sigma\Phi (\unbox{t_1}{\sigma_1}) :  \lfs\sigma\Phi A$ \hfill by LF subst. lemma \ref{lm:lfsubst}}
\prf{$\lfs\sigma\Phi(\unbox{t_1}{\sigma_1}) = \unbox{t_1}{\lfs\sigma\Phi{\sigma_1}}$ \hfill by subst. def.}
\prf{$\neut \unbox{t_1}{\lfs\sigma\Phi{\sigma_1}}$ \hfill since $\neut t_1$}
\prf{$\lfs\sigma\Phi M \lfwhnf \unbox{t_1}{\lfs\sigma\Phi {\sigma_1}}$ \hfill by whnf}
\prf{$\Gamma ; \Psi \vdash \lfs\sigma\Phi M \lfwhnf \unbox{t_1}{\lfs\sigma\Phi {\sigma_1}} : \lfs\sigma\Phi A$ \hfill by well-typed whnf (Def \ref{def:typedwhnf})}
\\[-0.75em]
\prf{\emph{Case} $M = M_1~M_2$ and $M \lfwhnf \unbox{t_1}{\sigma_1}$}
\prf{$\Gamma ; \Phi \vdash M : A$ \hfill by assumption}
\prf{$\Gamma ; \Phi \vdash \unbox{t_1}{\sigma_1} : A$ \hfill by assumption}
\prf{$\Gamma ; \Psi \vdash \lfs\sigma\Phi M : \lfs\sigma\Phi A$ \hfill by LF subst. lemma \ref{lm:lfsubst}}
\prf{$\Gamma ; \Psi \vdash  \lfs\sigma\Phi (\unbox{t_1}{\sigma_1}) :  \lfs\sigma\Phi A$ \hfill by LF subst. lemma \ref{lm:lfsubst}}
\prf{$M_1 \lfwhnf \lambda x.M'$ and $[M_2/x]M' \lfwhnf \unbox{t_1}{\sigma_1}$ \hfill by inversion}
\prf{$\lfs\sigma\Phi {[M_2/x]M'} \lfwhnf \lfs\sigma\Phi (\unbox{t_1}{\sigma_1})$ \hfill by IH}
\prf{$\lfs{\sigma, \lfs\sigma\Phi M_2}{\Phi, x}M' \lfwhnf \lfs\sigma\Phi(\unbox{t_1}{\sigma_1})$ \hfill by subst. def}
\prf{$\lfs\sigma\Phi M_1 \lfwhnf \lfs\sigma\Phi (\lambda x.M')$ \hfill by IH}
\prf{$\lfs\sigma\Phi{M_1} \lfwhnf \lambda x. \lfs{\sigma, x}{\Phi,x}M'$ \hfill by subst. def.}
\prf{$[\lfs\sigma\Phi M_2/x](\sigma, x) = \sigma, \lfs\sigma\Phi M_2$ \hfill by subst. def.}
\prf{$\lfs\sigma\Phi M_1~\lfs\sigma\Phi M_2 \lfwhnf \lfs\sigma\Phi (\unbox{t_1}{\sigma_1})$ \hfill by whnf rules}
\prf{$\lfs\sigma\Phi(M_1~M_2) \lfwhnf \unbox{t_1}{\lfs\sigma\Phi \sigma_1}$ \hfill by subst. def.}
}
\LONGVERSIONCHECKED{\\[1em]
For (3): \fbox{If $\Gamma ; \Phi \vdash M \lfwhnf x : A$ and $\Gamma ; \Psi \vdash \sigma(x) \lfwhnf N : \lfs\sigma\Phi A$
       then $\Gamma ; \Psi \vdash \lfs \sigma \Phi M \lfwhnf N : \lfs \sigma \Phi A$.}
\\[1em]
% For brevity, we don't track the typing assumptions and typing proofs explicitly.
\\[0.5em]
\prf{\emph{Case} $M = x$ where $x \in \Phi$ and $\norm M$}
\prf{$\Gamma ; \Phi \vdash x : A$ and $x:A \in \Phi$ \hfill since $\Gamma ; \Phi \vdash M \lfwhnf x : A$}
\prf{$\Gamma ; \Phi \vdash M : A$ \hfill since $\Gamma ; \Phi \vdash M \lfwhnf x : A$}
\prf{$\Gamma ; \Psi \vdash \lfs\sigma\Phi M : \lfs\sigma\Phi A$ \hfill LF subst. lemma \ref{lm:lfsubst}}
\prf{$\Gamma ; \Psi \vdash N : \lfs\sigma\Phi A$ \hfill since $\Gamma ; \Psi \vdash \sigma : \Phi$ and $x:A \in \Phi$}
\prf{$\lfs\sigma\Phi M = \lfs\sigma\Phi x = \sigma(x)$ \hfill by subst. def.}
\prf{$\lfs\sigma\Phi M \lfwhnf N$ \hfill since $\sigma(x) \lfwhnf N$}
\\[-0.75em]
\prf{\emph{Case} $M = M_1~M_2$ and $M \lfwhnf x$}
\prf{$\Gamma ; \Phi \vdash x : A$ and $x:A \in \Phi$ \hfill since $\Gamma ; \Phi \vdash M \lfwhnf x : A$}
\prf{$\Gamma ; \Phi \vdash M : A$ \hfill since $\Gamma ; \Phi \vdash M \lfwhnf x : A$}
\prf{$\Gamma ; \Psi \vdash \lfs\sigma\Phi M : \lfs\sigma\Phi A$ \hfill LF subst. lemma \ref{lm:lfsubst}}
\prf{$M_1 \lfwhnf \lambda x.M'$ and $[M_2/x]M' \lfwhnf x$ \hfill by inversion}
\prf{$\lfs\sigma\Phi [M_2/x]M' \lfwhnf N$ \hfill by IH using $\sigma(x) \lfwhnf N$}
\prf{$\lfs{\sigma,\lfs{\sigma}{\Phi}M_2}{\Phi, x}M' \lfwhnf N$ \hfill by subst. def.}
\prf{$\lfs\sigma\Phi M_1 \lfwhnf \lambda x.[\sigma, x]M'$ \hfill by IH and LF subst. prop.}
\prf{$\lfs\sigma\Phi M \lfwhnf N$ \hfill by whnf rules}
\prf{$\Gamma ; \Psi \vdash \lfs\sigma\Phi M \lfwhnf N : \lfs\sigma \Phi A$ \hfill by  well-typed whnf (Def \ref{def:typedwhnf})}
\\[0.25em]
\prf{\emph{Case} $M = \unbox{t_1}{\sigma_1}$ and $M \lfwhnf x$}
\prf{$\Gamma ; \Phi \vdash x : A$ and $x:A \in \Phi$ \hfill since $\Gamma ; \Phi \vdash M \lfwhnf x : A$}
\prf{$\Gamma ; \Phi \vdash M : A$ \hfill since $\Gamma ; \Phi \vdash M \lfwhnf x : A$}
\prf{$\Gamma ; \Psi \vdash \lfs\sigma\Phi M : \lfs\sigma\Phi A$ \hfill LF subst. lemma \ref{lm:lfsubst}}
\prf{$t_1 \whnf \cbox{\hatctx\Phi' \vdash M'}$ \hfill since $\Gamma ; \Phi \vdash M \lfwhnf x : \tm$}
\prf{$\sigma_1 \lfwhnf \sigma_2$ and $\lfs{\sigma_2}{\Phi'}M' \lfwhnf x$ \hfill since $\Gamma ; \Phi \vdash M \lfwhnf x : \tm$}\\[-0.85em]
\prf{$\lfs\sigma\Phi \sigma_1 \lfwhnf \lfs\sigma\Phi \sigma_2$ \hfill by IH}\\[-0.85em]
\prf{$\lfs\sigma\Phi (\lfs{\sigma_2}{\Phi'}M' ) \lfwhnf N$ \hfill by IH}\\[-0.85em]
\prf{$\lfs{\lfs\sigma \Phi \sigma_2}{\Phi'} M'\lfwhnf N$ \hfill by subst. prop.}\\[-0.85em]
\prf{$\unbox{t_1}{\lfs\sigma \Phi \sigma_1} \lfwhnf N$ \hfill by whnf rules}\\[-0.85em]
\prf{$\Gamma ; \Psi \vdash \unbox{t_1}{\lfs\sigma \Phi \sigma_1} \lfwhnf N : \tm$ \hfill by  well-typed whnf (Def \ref{def:typedwhnf})}
}
\LONGVERSIONCHECKED{
\\[1em]
For (6): \fbox{If $\Gamma ; \Phi \vdash \sigma_1 \lfwhnf \sigma_2 : \Phi'$
       then $\Gamma ; \Psi \vdash \lfs\sigma\Phi\sigma_1 \lfwhnf \lfs\sigma\Phi\sigma_2 : \Phi'$.
}
\\[1em]
\pcase{$\norm \sigma_1$ and $\sigma_2 = \sigma_1$}
\prf{$\norm \lfs\sigma\Phi \sigma_1$ \hfill by def. of $\norm$}
\prf{$\lfs\sigma\Phi \sigma_1 \lfwhnf \lfs\sigma\Phi \sigma_1$ \hfill by whnf}
\prf{$\Gamma ; \Phi \vdash \sigma_1 : \Phi'$ and $\Gamma ; \Phi \vdash \sigma_2 : \Phi'$ \hfill by assumption}
\prf{$\Gamma ; \Psi \vdash \lfs\sigma\Phi\sigma_1 : \Phi$ and $\Gamma ; \Psi \vdash \lfs\sigma\Phi\sigma_2 : \Phi$     \hfill by LF subst. lemma}
\prf{$\Gamma ; \Psi \vdash \lfs\sigma\Phi\sigma_1 \lfwhnf \lfs\sigma\Phi\sigma_1 : \Phi'$  \hfill by well-typed whnf (Def \ref{def:typedwhnf})}
\\
\pcase{$\sigma_1 = \wk\cdot$}
\prf{$\Gamma ; \Phi \vdash \wk\cdot : \cdot$ \hfill by assumption}
\prf{$\Gamma \vdash \Phi : \ctx$ and $\Gamma \vdash \Psi : \ctx$ \hfill by well-formedness of LF context (Lemma \ref{lm:lfctxwf})}
\prf{$\Gamma ; \Phi \vdash \cdot : \cdot$ \hfill by typing rule}
\prf{$\cdot = \lfs\sigma\Phi{\wk\cdot} = \trunc_\cdot (\sigma / \hatctx\Phi)$ \hfill by subst. def.}
\prf{$\Gamma ; \Psi \vdash \cdot : \cdot$ \hfill by typing rule}
\prf{$\norm \cdot$ \hfill by whnf}
\prf{$\cdot \lfwhnf \cdot$ \hfill by whnf}
\prf{$\lfs\sigma\Phi{\wk\cdot} \lfwhnf \lfs\sigma\Phi\cdot$ \hfill since $\cdot = \lfs\sigma\Phi{\wk\cdot}$ and $\lfs\sigma\Phi\cdot = \cdot$}
\prf{$\Gamma ; \Psi \vdash \lfs\sigma\Phi \wk\cdot \lfwhnf \lfs\sigma\Phi\cdot : \cdot$ \hfill  by well-typed whnf (Def \ref{def:typedwhnf})}
\\[1em]
\pcase{$\sigma_1 = \wk{\hatctx\Phi',x}$ where $\Phi = \Phi', x{:}A, \wvec{x{:}A}$ }
\prf{$\sigma_2 = \wk{\hatctx\Phi'}, x$ \hfill by assumption}
\prf{$\Gamma ; \Phi \vdash \wk{\hatctx\Phi',x} : \Phi', x{:}A$ and $\Gamma ; \Phi \vdash \wk{\hatctx\Phi'} : \Phi'$ \hfill by assumption and typing}
\prf{$\Gamma ; \Psi \vdash \sigma : \Phi', x{:}A, \wvec{x{:}A}$ \hfill by assumption\\}
\prf{\emph{Sub-case} : $\sigma = (\sigma', M, \vec{M})$}
\prf{$\Gamma ; \Psi \vdash (\sigma', M, \vec{M}) :  \Phi', x{:}A, \wvec{x{:}A}$ \hfill where $\sigma = (\sigma', M, \vec{M})$}
\prf{$\Gamma ; \Psi \vdash \sigma' : \Phi'$ and $\Gamma ; \Psi \vdash M : \lfs{\sigma'}{\Phi'}A$ \hfill by inversion}
\prf{$ \lfs\sigma\Phi (\wk{\hatctx\Phi',x}) = \trunc_{\hatctx\Phi',x} (\sigma / \Phi) = \sigma', M$ \hfill by definition}
\prf{$\lfs\sigma\Phi (\wk{\hatctx\Phi'}) = \trunc_{\hatctx\Phi'}(\sigma / \Phi) = \sigma'$ \hfill by definition}
\prf{$\Gamma ; \Psi \vdash \sigma', M \lfwhnf \sigma', M : \Phi', x{:}A$ \hfill since $\neut (\sigma', M)$}
\prf{$\Gamma ; \Psi \vdash \lfs\sigma\Phi (\wk{\hatctx\Phi',x}) \lfwhnf \lfs\sigma\Phi (\wk{\hatctx{\Phi'}}, x) : \Phi', x{:}A$ \hfill  by well-typed whnf (Def \ref{def:typedwhnf})\\}
\prf{\emph{Sub-case} : $\sigma = \wk{\hatctx\Phi', x, \vec x}$}
\prf{$ \lfs\sigma\Phi (\wk{\hatctx\Phi',x}) = \trunc_{\hatctx\Phi',x} (\sigma / \Phi) = \trunc_{\hatctx\Phi', x}(\wk{\hatctx\Phi', x, \vec x} / \hatctx\Phi',x,\vec x) = \wk{\hatctx\Phi',x}$ \hfill by definition}
\prf{$\lfs\sigma\Phi (\wk{\hatctx\Phi'},x) = \trunc_{\hatctx\Phi'} (\sigma / \Phi) = \wk{\hatctx\Phi'}$ \hfill by definition}
\prf{$\lfs\sigma\Phi (x) = x$ \hfill since $\sigma = \wk{\hatctx\Phi', x, \vec x}$}
\prf{$\Gamma ; \Phi', x{:}A, \wvec{x:A} \vdash \wk{\hatctx\Phi',x} \lfwhnf \wk{\hatctx\Phi'}, x : \Phi', x{:}A$ \hfill by $\lfwhnf$ rule and  by well-typed whnf (Def \ref{def:typedwhnf})}
\SUBSTCLO{\\[1em]
\prf{\emph{Sub-case} : $\sigma = \sclo {\hatctx\Phi',x,\vec x} \unbox{t}{\sigma'}$  }
\prf{$\Gamma ; \Psi \vdash \sclo {\hatctx\Phi',x,\vec x} \unbox{t}{\sigma'} : \Phi', x{:}A, \wvec{x{:}A}$ \hfill by assumption}
\prf{$ \lfs\sigma\Phi \wk{\hatctx\Phi',x} = \trunc_{\hatctx\Phi',x} (\sigma / \Phi) = \sclo {\hatctx\Phi',x}{\unbox{t}{\sigma'}}$ \hfill by definition}
\prf{$\lfs\sigma\Phi \wk{\hatctx\Phi'} = \trunc_{\hatctx\Phi'} (\sigma / \Phi) = \sclo {\hatctx\Phi'}{\unbox{t}{\sigma'}}$ \hfill by definition}
\prf{$\lfs\sigma\Phi (x) = \unbox{\cbox{\hatctx\Phi \vdash x}}{\sigma}$ \hfill by def.}
\prf{$\Gamma ; \Phi', x{:}A, \wvec{x{:}A} \vdash \sclo {\hatctx\Phi',x}{\unbox{t}{\sigma'}} \lfwhnf \sclo {\hatctx\Phi'}{\unbox{t}{\sigma'}},~\unbox{\cbox{\hatctx\Phi \vdash x}}{\sigma} : \Phi', x{:}A$ \hfill by rules $\lfwhnf$}
% \prf{...}
\prf{$\Gamma ; \Phi', x{:}A, \wvec{x:A} \vdash \lfs\sigma\Phi (\wk{\hatctx\Phi',x}) \lfwhnf \lfs\sigma\Phi (\wk{\hatctx{\Phi'}}, x) : \Phi', x{:}A$ \hfill by well-typed whnf (Def \ref{def:typedwhnf}) }
}
\SUBSTCLO{
\\[1em]
\pcase{$\sigma_1 = (\sclo {\psi} {\unbox{t}{\xi}})$ and $t \whnf \cbox{\psi,\vec{w} \vdash \xi_1}$ and\\
$\lfs {\xi}{\psi, \vec{w}} \xi_1 \lfwhnf \xi'$ and
$\lfs{\xi'}{\psi,\vec{w}}(\wk\psi) = \trunc_\psi ~ \xi' = \sigma_2$}
\prf{$\Gamma ; \lfs\sigma\Phi \lfs {\xi}{\psi, \vec{w}} \xi_1 \lfwhnf \lfs\sigma\Phi\xi'$ \hfill by IH}
\prf{$\Gamma ; \Phi', x{:}A \wvec{x:A} \vdash \lfs{\lfs\sigma\Phi \xi}{\psi,\vec{w}} \xi_1 \lfwhnf \lfs\sigma\Phi\xi'$ \hfill by subst. prop.}
\prf{$\Gamma ; \Phi', x{:}A, \wvec{x:A} \vdash (\sclo {\psi} {\unbox{t}{\lfs\sigma\Phi \xi}})\lfwhnf \lfs{\lfs\sigma\Phi\xi'}{\psi,\vec{w}}(\wk\psi) $ \hfill by $\lfwhnf$ and \\
\mbox{$\quad$}\hfill well-typed whnf (Def \ref{def:typedwhnf})}
\prf{$\Gamma ; \Phi', x{:}A, \wvec{x:A} \vdash \lfs\sigma\Phi (\sclo {\psi} {\unbox{t}{\xi}}) \lfwhnf \lfs\sigma\Phi (\sigma_2) : \Phi', x{:}A$ \hfill by subst. prop.
}
\\[1em]
\pcase{$\sigma_1 = \sclo {\phi} {\unbox t \xi} $ and $t \whnf t'$ and $\neut t'$ and $\sigma_2 = \sclo{\phi}\unbox{t'}{\xi}$}
\prf{$\Gamma ; \Psi \vdash \sclo {\phi}{\unbox t {\lfs\sigma\Phi \xi}} \lfwhnf \sclo{\phi}{\unbox{t'}{\lfs\sigma\Phi \xi}}$\hfill since $\neut t'$\\
\mbox{$\quad$}\hfill and  by well-typed whnf (Def \ref{def:typedwhnf})}
\prf{$\Gamma ; \Psi \vdash \lfs \sigma\Phi ( \sclo {\phi} {\unbox{t}{\xi} }) \lfwhnf \lfs\sigma\Phi (\sclo{\phi}{\unbox{t'}{\xi}})$ \hfill by subst. prop.}
}
\\[1em]
}
\LONGVERSIONCHECKED{
\fbox{For (4): If $\Gamma ; \Phi \vdash M \lfwhnf \capp~M_1~M_2 : \tm$
       then $\Gamma ; \Psi \vdash \lfs\sigma\Phi M \lfwhnf \lfs\sigma\Phi (\capp~M_1~M_2) : \tm$.}
\\[1em]
\pcase{$\ianc
{\norm (\capp~M_1~M_2)}
{\capp~M_1~M_2 \lfwhnf \capp~M_1~M_2}{}
$
}
\prf{$\Gamma ; \Phi \vdash \capp~M_1~M_2 : \tm$ \hfill by assumption}
\prf{$\Gamma ; \Psi \vdash \lfs\sigma\Phi (\capp~M_1~M_2) : \tm$ \hfill by LF subst. lemma}
\prf{$\Gamma ; \Psi \vdash \capp \lfs\sigma\Phi(M_1)~\lfs\sigma\Phi(M_2) : \tm$ \hfill subst. prop}
\prf{$\norm (\capp \lfs\sigma\Phi(M_1)~\lfs\sigma\Phi(M_2))$ \hfill by whnf def.}
\prf{$\Gamma ; \Psi \vdash \lfs\sigma\Phi (\capp~M_1~M_2) \lfwhnf \lfs\sigma\Phi (\capp~M_1~M_2) : \tm$ \hfill by subst. prop., $\lfwhnf$ rule, and Def. \ref{def:typedwhnf}}
\\[1em]
\pcase{$\ianc
{M \lfwhnf \lambda x.M' \quad [N/x]M' \lfwhnf \capp~M_1~M_2 }
{M\;N \lfwhnf \capp~M_1~M_2}{}
$
}
\prf{$\Gamma ; \Phi \vdash \capp~M_1~M_2 : \tm$ \hfill by assumption}
\prf{$\Gamma ; \Psi \vdash  \lfs\sigma\Phi (\capp~M_1~M_2) : \tm$ \hfill by LF subst. lemma}
\prf{$\Gamma ; \Phi \vdash M~N : \tm$ \hfill by assumption}
\prf{$\Gamma ; \Psi \vdash \lfs\sigma\Phi (M~N) : \tm$ \hfill by LF subst. lemma}
\prf{$\Gamma ; \Psi \vdash \lfs\sigma\Phi M \lfwhnf \lfs\sigma\Phi (\lambda x.M') : \tm$ \hfill by IH}
\prf{$\Gamma ; \Psi \vdash \lfs\sigma\Phi([N/x]M') \lfwhnf \lfs\sigma\Phi (\capp~M_1~M_2) : \tm$ \hfill by IH}
\prf{$\Gamma ; \Psi \vdash \lfs\sigma\Phi (M~N) \lfwhnf \lfs\sigma\Phi (\capp~M_1~M_2) : \tm$ \hfill
%\mbox{$\quad$} \hfill
with $\lfs\sigma\Phi([N/x]M') = \lfs{\sigma, \lfs\sigma\Phi N}{\Phi, x{:}\tm}$ }
\\[1em]
\pcase{$\ianc
      {t \whnf \cbox{\hatctx{\Phi'} \vdash M}\quad
       \lfs {\sigma'}{\Phi'} M \lfwhnf \capp~M_1~M_2 }
{\unbox t{\sigma'} \lfwhnf \capp~M_1~M_2 }{}
$
}
\prf{$\Gamma ; \Psi \vdash \lfs{\sigma}{\Phi}\lfs {\sigma'}{\Phi'} M \lfwhnf  \lfs\sigma\Phi (\capp~M_1~M_2) : \tm$ \hfill by IH}
\prf{$\Gamma ; \Psi \vdash \unbox t{\lfs{\sigma}{\Phi}\sigma'} \lfwhnf \lfs{\sigma}{\Phi}(\capp~M_1~M_2) : \tm$ \hfill by  subst. prop., $\lfwhnf$ rule, and Def. \ref{def:typedwhnf}}

The remaining cases for (2) and (5) are similar.
}
% \highlight{TODO -- REMAINING CASES (2) (5)}
 \end{proof}

% \end{lemma}

% NOT NEEDED ANYMORE -bp
% \begin{definition}[Well-Typed Equivalence]\label{def:typedequiv}
% \[
%   \begin{array}{lcl}
% \Gamma ; \Psi \vdash \sigma :=: \sigma' : \Phi & \defiff &
% \Gamma ; \Psi \vdash \sigma : \Phi ~\mbox{and}~
% \Gamma ; \Psi \vdash \sigma' : \Phi ~\mbox{and}~
% \Gamma ; \Psi \vdash \sigma \equiv \sigma' : \Phi
%   \end{array}
% \]

% \end{definition}

% \section{Properties of Weak Head Reductions and Weak Head Normal Forms}

% \section{Weak Head Normalization}
\section{Kripke-style Logical Relation}\label{sec:logrel}
We construct a Kripke-logical relation to prove weak head
normalization. % From it we derive subject reduction.
%  It also allows us
% ot derive soundness of algorithmic equality. We leave completeness of
% algorithmic equality which would typically be proven by another
% logical relations argument for future work.
Our semantic definitions for computations follows closely
\citet{Abel:LMCS12} to accommodate type-level computation.

\begin{figure}[h]
\hrulefill
  \centering
  \[
    \begin{array}{c}
%\multicolumn{1}{l}{\mbox{Semantic Equality for LF Terms (inductive)}}\\[0.75em]
\infer{\Gamma ; \Psi \Vdash M = N : \tm}
{
\begin{array}{lll}
\Gamma ; \Psi \vdash M \lfwhnf \unbox {t_1} {\sigma_1} : \tm   & \typeof (\Gamma  \vdash t_1) = \cbox{\Phi_1 \vdash \tm} &
% \Gamma \Vdash \Phi_1 = \Phi_2 : \ctx\\
\Gamma \vdash \Phi_1 \equiv \Phi_2 : \ctx \\
\Gamma ; \Psi \vdash N \lfwhnf \unbox {t_2} {\sigma_2} : \tm & \typeof (\Gamma  \vdash t_2) = \cbox{\Phi_2 \vdash \tm} &
\Gamma  \vdash t_1 \equiv t_2 : \cbox{\Phi_1 \vdash \tm} \quad \Gamma ; \Psi \Vdash \sigma_1 = \sigma_2 : \Phi_1
% \Gamma \Vdash t_1 = t_2 : \cbox{\Phi_1 \vdash \tm}
\end{array}
}
\\[0.75em]
\infer {\Gamma ; \Psi \Vdash M = N : \tm}
{
      \begin{array}{ll}
\Gamma ; \Psi \vdash M \lfwhnf \clam M' : \tm & \\
\Gamma ; \Psi \vdash N \lfwhnf \clam N' : \tm & \Gamma ; \Psi, x{:}\tm \Vdash M'~x = N'~x: \tm
      \end{array}
}
\\[0.75em]
\infer {\Gamma ; \Psi \Vdash M = N : \tm}
{
      \begin{array}{ll}
\Gamma ; \Psi \vdash M \lfwhnf \capp {M_1}~{M_2} : \tm & \\
\Gamma ; \Psi \vdash N \lfwhnf \capp {N_1}~{N_2} : \tm &
\Gamma ; \Psi \Vdash M_1 = N_1 : \tm \quad \Gamma ; \Psi \Vdash M_2 = N_2 : \tm
      \end{array}
}
\\[0.75em]
\infer {\Gamma ; \Psi \Vdash M = N : \tm}
{ \Gamma ; \Psi \vdash M \lfwhnf x : \tm  &
  \Gamma ; \Psi \vdash N \lfwhnf x : \tm }
% \\[0.5em]
% NOTE: lfwhnf does not necessarily produce a term in eta-long form, so I think this semantic
% definition does not quite work and would need to be generalized; however given that we concentrate on a particular signature, it is sufficient to define the semantic model for these terms.
% \infer[]
% {\Gamma ; \Psi \Vdash M = N: \Pi x{:}A. B}
% { % M \lfwhnf \lambda x.M' &
%   \Gamma ; \Psi \vdash M \lfwhnf \lambda x.M' : \Pi x{:}A.B &
%   \Gamma ; \Psi \vdash N \lfwhnf \lambda x.N' : \Pi x{:}A. B &
%  \Gamma ; \Psi, x{:}A \Vdash M' = N': B
% }
% \\[1em]
% \multicolumn{1}{l}{\mbox{Semantic Typing for LF Terms denoting LF variables (inductive)}}\\[0.75em]
% \infer{\Gamma ; \Psi \Vdash_\# M = N : \tm}
% {
% M \lfwhnf \unbox t {\sigma} \qquad
% N \lfwhnf \unbox {t'} {\sigma'} \qquad
% \Gamma; \Psi \Vdash \sigma = \sigma' : \Phi' \qquad
% \Gamma \Vdash t = t' : \cbox{\Phi \vdash_\# \tm}
% }
% \\[0.75em]
% % \infer{\Gamma ; \Psi \Vdash_\# M = N : \tm}
% % {M \lfwhnf \unbox{\cbox{\hatctx{\Phi} \vdash M'}} \sigma &
% % \Gamma ; \Psi \Vdash_\# [\sigma]M' = N : \tm}
% \qquad
% \infer {\Gamma ; \Psi \Vdash_\# M = N: \tm}
% {M \lfwhnf x & M \lfwhnf x & x:\tm \in \Psi}
% \\[1em]
%\multicolumn{1}{l}{\mbox{Semantic Typing for Contextual LF Terms (inductive): \fbox{$\Gamma \semlf C = C' : T$}}}
% \\[0.75em]
%\infer{\Gamma \semlf (\hatctx{\Psi} \vdash M) = (\hatctx{\Psi} \vdash N): (\Psi \vdash A)}
%{\Gamma ; \Psi \Vdash M = N : A}
  \end{array}
\]
\hrulefill
  \caption{Semantic Equality for LF Terms: \fbox{$\Gamma ; \Psi \Vdash M = N : A$}}
  \label{fig:LFsem}
\end{figure}

We start by defining semantic equality for LF terms of type $\tm$ (Fig.~\ref{fig:LFsem}), as we restricted our LF signature and these are the terms of interest. %Such terms are formed with constants ($\capp$, $\clam$) from the signature, LF variables from the LF context $\Psi$, or a computation $t$ that eventually will produce an LF object of type $\tm$.
To define semantic equality for LF terms $M$ and $N$, we consider different cases depending on their whnf: 1) if they reduce to $\capp M_1~M_2$ and $\capp~N_1~N_2$ respectively, then $M_i$ must be semantically equal to $N_i$; 2) if they reduce to $\clam M'$ and $\clam N'$ respectively, then the bodies of $M'$ and $N'$ must be equal. To compare their bodies, we apply both $M'$ and $N'$ to an LF variable $x$ and consider $M'~x$ and $N'~x$ in the extended LF context $\Psi, x{:}\tm$. This has the effect of opening up the body and replacing the bound LF variable with a fresh one. This highlights the difference between the intensional LF function space and the extensional nature of the computation-level functions. In the former, we can concentrate on LF variables and continue to analyze the LF function body; in the latter, we consider all possible inputs, not just variables; 3) if the LF terms $M$ and $N$ may reduce to the same LF variable in $\Psi$, then they are obviously also semantically equal; 4) last, if $M$ and $N$ reduce to $\unbox {t_i} {\sigma_i}$ respectively. In this case $t_i$ is neutral and we only need to semantically compare the LF substitutions $\sigma_i$ and check whether the terms $t_i$ are definitional equal. However, what type should we choose? -- As the computation $t_i$ is neutral, we can infer a unique type $\cbox{\Phi \vdash \tm}$ which we can use.
% and we can then check that (a) $\sigma$ and $\sigma'$ are semantically equal at domain $\Phi$ and (b) the computation $t$ and $t'$ are equivalent at type $\cbox{\Phi \vdash \tm}$. Inferring a unique type for $t$ and $t'$ respectively is important, to ensure that our semantic definition satisfies transitivity property.
This is defined as follows:

\[
  \begin{array}{c}
 \multicolumn{1}{l}{\mbox{Type inference for Neutral Computations $t$}: \typeof (\Gamma \vdash t) = \tau}\\[1em]
 \infer{\typeof (\Gamma \vdash t~s) = \{s/y\}\tau_2}
 {\typeof (\Gamma \vdash t ) = \tau & \tau \whnf (y{:}\tau_1) \arrow \tau_2 &
  \Gamma \vdash s : \tau_1} \qquad
 \infer{\typeof (\Gamma \vdash x) = \tau}{x{:}\tau \in \Gamma}
\\[1em]
\infer{\typeof (\Gamma \vdash \titer{\R}{}{\IH}~\Psi~t) = \{\Psi/\psi, t/y\}\tau}
{\IH = (\psi : \tmctx) \arrow (y : \cbox{\psi \vdash \tm}) \arrow \tau}
  \end{array}
\]

\LONGVERSION{
  \begin{lemma}\label{lm:typeof}
If $\Gamma \vdash t : \tau$ and $\neut t$ then $\typeof (\Gamma \vdash t) = \tau'$ and $\Gamma \vdash \tau \equiv \tau' : u$.
  \end{lemma}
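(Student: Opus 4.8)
The plan is to proceed by induction on the derivation that $t$ is neutral, i.e.\ on the structure of the judgment $\neut t$ (equivalently, by induction on the structure of $t$ restricted to the neutral shapes: a variable $y$, an application $t_0~s$ with $\neut t_0$, or a recursor $\titer{\R}{}{\IH}~\Psi~t_0$ with $\neut t_0$). In each case I will invert the typing derivation $\Gamma \vdash t : \tau$, match it against the corresponding clause of $\typeof$, and produce the required definitional equality $\Gamma \vdash \tau \equiv \tau' : u$.

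First, the variable case $t = y$: typing gives $y{:}\ann\tau \in \Gamma$ and $\Gamma \vdash y : \ann\tau$, possibly followed by uses of the conversion rule $\Gamma \vdash y : \tau$ with $\Gamma \vdash \tau' \equiv \tau : u$ where $\tau'$ is the declared type; since $\typeof (\Gamma \vdash y) = \ann\tau$ reads off exactly the declared type, transitivity and symmetry of definitional equality (together with well-typedness of $\tau$, via Well-Formedness of the Computation Context, Theorem \ref{lm:ctxwf}) give the claim. Here I rely on the fact that the only way a variable can be given a type other than its declared one is through the conversion rule, so stripping conversions gets us back to $\typeof$. Second, the application case $t = t_0~s$: inversion on the typing of $t_0~s$ (again modulo conversions) yields $\Gamma \vdash t_0 : (y{:}\ann\tau_1)\arrow \tau_2$ and $\Gamma \vdash s : \ann\tau_1$ with $\tau = \{s/y\}\tau_2$ up to $\equiv$. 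Since $\neut t_0$, the IH gives $\typeof(\Gamma \vdash t_0) = \tau_0$ with $\Gamma \vdash (y{:}\ann\tau_1)\arrow\tau_2 \equiv \tau_0 : u$; by injectivity of the dependent function type for definitional equality at the computation level (the $\Pi$-injectivity analogue of Lemma \ref{lm:lfpi-inj}, plus the $\whnf$-reduction built into the $\typeof$ clause for application) and functionality of computation-level substitution, we get that $\typeof(\Gamma \vdash t_0~s)$, which unfolds $\tau_0$ to $(y{:}\tau_1')\arrow\tau_2'$ and returns $\{s/y\}\tau_2'$, is definitionally equal to $\{s/y\}\tau_2 = \tau$. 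The recursor case is analogous but simpler: the typing rule for $\titer{\R}{}{\IH}~\Psi~t_0$ fixes $\IH = (\psi:\tmctx)\arrow(y:\cbox{\psi\vdash\tm})\arrow\tau$ and outputs $\{\Psi/\psi, t_0/y\}\tau$, which is exactly what $\typeof$ computes; the only slack is again from the conversion rule, absorbed by transitivity of $\equiv$.

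The main obstacle I anticipate is the application case, specifically justifying that when $\typeof$ weak-head-reduces the inferred type $\tau_0$ of $t_0$ to a function type $(y{:}\tau_1')\arrow\tau_2'$ and substitutes, the result agrees up to $\equiv$ with $\{s/y\}\tau_2$ obtained from inversion. This needs (i) subject reduction / stability of definitional equality under $\whnf$ at the type level so that $\tau_0 \equiv (y{:}\tau_1')\arrow\tau_2'$, (ii) injectivity of the computation-level $\Pi$ under $\equiv$ to pull the equality down to the components, and (iii) functionality of computation-level substitution (the analogue of Lemma \ref{lm:func-lftyping}) to push the pointwise equality through $\{s/y\}(-)$. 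Assuming these congruence/injectivity facts — all of which are of the kind established or stated earlier in the development — the argument is routine; I would isolate a small helper fact "if $\Gamma \vdash t : \tau$ and $\neut t$ then the conversion-free core typing derivation ends with the rule matching $\typeof$" to keep the induction clean.
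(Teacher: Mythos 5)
Your proposal is correct and follows exactly the paper's route: the paper proves this lemma by induction on the derivation of $\neut t$, with the case split over variables, neutral applications, and the recursor, just as you describe. In fact you are more explicit than the paper (whose proof is a one-line "by induction on $\neut t$") about the only delicate point — that in the application case the inferred type of the head must weak-head-reduce to a function type agreeing, up to $\equiv$ and Pi-injectivity, with the type obtained by inversion.
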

  \begin{proof}
By induction on $\neut t$.
  \end{proof}
}

Semantic equality for LF substitutions is also defined by considering different weak head normal forms (see Fig. \ref{fig:LFsemctx}). As we only work with well-typed LF objects, there is only one inhabitant for an empty context. Moreover, given a LF substitution with domain $\Phi, x{:}A$, we can weak head reduce the LF substitutions $\sigma$ and $\sigma'$ and continue to recursively compare them. Finally, for LF substitutions with domain $\psi$, a context variable, there are two cases we consider: either both LF substitution reduce to a weakening $\wk\psi$ or they reduce to substitution closure.

\begin{figure}[h]
\hrulefill
  \centering
  \[
    \begin{array}{c}
%  \multicolumn{1}{p{13cm}}{Semantic Equality for LF Substitutions: \fbox{$\Gamma ; \Psi \Vdash \sigma = \sigma' : \Phi$}}
% \\[0.75em]
\SUBSTCLO{
\infer{\Gamma ; \Psi \Vdash \sigma = \sigma' : \phi}
{
      \begin{array}{lll}
%    & &  \Gamma \vdash \Phi \equiv \Psi_1 \equiv \Psi_2 : \ctx\\
\Gamma ; \Psi \vdash \sigma \lfwhnf (\sclo \phi {\unbox{t_1}{\sigma_1}})  : \phi
   & \typeof (\Gamma \vdash t_1) = \cbox{\Phi_1 \vdash \Phi'_1}  & \Gamma \vdash \Phi'_1 \equiv (\phi, \wvec{x{:}A}) : \ctx
\\
\Gamma ; \Psi \vdash \sigma' \lfwhnf (\sclo \phi {\unbox{t_2}{\sigma_2}}) : \phi
    & \typeof (\Gamma \vdash t_2) = \cbox{\Phi_2 \vdash \Phi'_2} & \Gamma \vdash \Phi'_2 \equiv (\phi, \wvec{x{:}A}) : \ctx
\\
\Gamma \vdash t_1 \equiv t_2 : \cbox{\Phi_1 \vdash \Phi'_1}
    & \Gamma ; \Psi \Vdash \sigma_1 = \sigma_2 : \Phi_1
    & \Gamma \vdash \Phi_1 \equiv \Phi_2 : \ctx
\end{array}
}
\\[0.5em]
}
\raisebox{1ex}{
\ianc{
% \Gamma \vdash \Phi \lfwhnf \cdot : \ctx  &
\Gamma ; \Psi \vdash \sigma \lfwhnf \cdot : \cdot \quad
\Gamma ; \Psi \vdash \sigma' \lfwhnf \cdot : \cdot }
{\Gamma ; \Psi \Vdash \sigma = \sigma' : \cdot}{}}
% \\[0.5em]
\quad
 \ianc{
\Gamma ; \psi, \wvec{x{:}A} \vdash \sigma \lfwhnf \wk{\psi} : \psi \quad
\Gamma ; \psi, \wvec{x{:}A} \vdash \sigma' \lfwhnf \wk{\psi} : \psi
}{\Gamma ; \psi, \wvec{x{:}A} \Vdash \sigma = \sigma' : \psi}{}
\\[0.5em]
%  \infer{\Gamma ; \Psi \Vdash \sigma = \sigma' : \cdot}
% {\Gamma \vdash \Phi \lfwhnf \cdot : \ctx \quad
% \Gamma ; \Psi \vdash \sigma \lfwhnf \cdot : \cdot \quad
% \Gamma ; \Psi \vdash \sigma' \lfwhnf \wk{\cdot} : \cdot  \quad
% \Gamma ; \Psi \vdash \cdot \equiv \wk{\cdot} : \cdot
% }
% \\[0.5em]
%  \infer{\Gamma ; \Psi \Vdash \sigma = \sigma' : \cdot}
% {
% % \Gamma \vdash \Phi \lfwhnf \cdot : \cdot \quad
% \Gamma ; \Psi \vdash \sigma \lfwhnf \wk\cdot : \cdot  \quad
% \Gamma ; \Psi \vdash \sigma' \lfwhnf \cdot : \cdot \quad
% % \Gamma ; \Psi \vdash \cdot \equiv \wk\cdot : \cdot
% }
% \quad
%  \infer{\Gamma ; \Psi \Vdash \sigma = \sigma' : \cdot}
% {
% % \Gamma \vdash \Phi \lfwhnf \cdot : \cdot \quad
% \Gamma ; \Psi \vdash \sigma \lfwhnf \cdot : \cdot \quad
% \Gamma ; \Psi \vdash \sigma' \lfwhnf \wk\cdot : \cdot  \quad
% % \Gamma ; \Psi \vdash \cdot \equiv \wk\cdot : \cdot
% }
% \\[1em]
 \infer{\Gamma ; \Psi \Vdash \sigma = \sigma' : \Phi, x{:}A}
  {\begin{array}{ll}
% & \Gamma ; \Psi \vdash \sigma_1, M : \Phi', x_{n+1}:A & \\ % \Gamma ; \Psi \Vdash [\sigma_1]A = [\sigma_2]A \\
% \Gamma \vdash \Phi \lfwhnf (\Phi', x_{n+1}{:}A) : \ctx & \\
\Gamma ; \Psi \vdash \sigma \lfwhnf \sigma_1, M : \Phi, x{:}A  & \\
\Gamma ; \Psi \vdash \sigma' \lfwhnf \sigma_2, N : \Phi, x{:}A &
\Gamma ; \Psi \Vdash \sigma_1 = \sigma_2 : \Phi \qquad \Gamma ; \Psi  \Vdash M = N : \lfs{\sigma_1}{\Phi} A
\end{array}
}
% \\[1em]
%  \infer{\Gamma ; \Psi \Vdash \sigma = \sigma' : \Phi, x{:}A}
%   {
%       \begin{array}{ll}
% % \Gamma \vdash \Phi \lfwhnf \Phi', x_{n+1}:A : \ctx & \\
% \Gamma ; \Psi \vdash \sigma \lfwhnf \wk{{\hatctx\Phi}, x} : \Phi, x{:}A & \\
% \Gamma ; \Psi \vdash \sigma' \lfwhnf \sigma_2, M : \Phi, x{:}A &
% \Gamma ; \Psi \Vdash \wk{\Phi} = \sigma_2 : \Phi \qquad \Gamma ; \Psi  \Vdash x = M : A
%       \end{array}
% }
% \\[1em]
%  \infer{\Gamma ; \Psi \Vdash \sigma = \sigma' : \Phi, x{:}A}
%   {\begin{array}{ll}
% %                           & \Gamma ; \Psi \vdash \sigma_1, M: \Phi', x_{n+1}:A  & \\ %  \Gamma ; \Psi \Vdash [\sigma_1]A = [\wwk_{n+1}]A \\
% % \Gamma \vdash \Phi \lfwhnf (\Phi', x{:}A) : \ctx & \\
% \Gamma ; \Psi \vdash \sigma' \lfwhnf \wk{\hatctx\Phi, x} : \Phi, x{:}A & \\
% \Gamma ; \Psi \vdash \sigma \lfwhnf \sigma_1, M : \Phi, x{:}A &
% \Gamma ; \Psi \Vdash \sigma_1 = \wk{\hatctx\Phi} : \Phi \qquad \Gamma ; \Psi  \Vdash M = x : A
%    \end{array}
% }
  \end{array}
\]
\hrulefill
  \caption{Semantic Equality for LF Substitutions: \fbox{$\Gamma ; \Psi \Vdash \sigma = \sigma' : \Phi$}}
  \label{fig:LFsemctx}
\end{figure}

\SHORTVERSION{
We can then lift the semantic equality definition to contextual terms: \fbox{$\Gamma \semlf C = C' : T$}.

\[
\infer{\Gamma \semlf (\hatctx{\Psi} \vdash M) = (\hatctx{\Psi} \vdash N): (\Psi \vdash A)}
{\Gamma ; \Psi \Vdash M = N : A}
\]
}

\LONGVERSION{
\begin{figure}
\hrulefill
  \centering
  \[
    \begin{array}{c}
% \multicolumn{1}{l}{\mbox{Semantic Kinding for Contextual LF Types(inductive)}}\\[0.75em]
%\infer {\Gamma \semlf T = T' }
%       {\Gamma \vdash T\equiv T'}
% \\[1em]
% \multicolumn{1}{l}{\mbox{Semantic Typing for Contextual LF Terms (inductive)}}
% \\[0.75em]
\infer{\Gamma \semlf (\hatctx{\Psi} \vdash M) = (\hatctx{\Psi} \vdash N): (\Psi \vdash A)}
{\Gamma ; \Psi \Vdash M = N : A}
\LONGVERSION{\quad
\infer{\Gamma \semlf (\hatctx{\Psi} \vdash M) = (\hatctx{\Psi} \vdash N) : (\Psi \vdash_\# A)}
{ \Gamma ; \Psi \Vdash_\# M = N : A}}
% \\[0.75em]
\SUBSTCLO{\\[1em]
\infer{\Gamma \semlf (\hatctx{\Psi} \vdash \sigma) = (\hatctx{\Psi} \vdash \sigma'): (\Psi \vdash \Phi)}
{\Gamma ; \Psi \Vdash \sigma = \sigma' : \Phi}
\qquad
% \\[0.75em]
\infer{\Gamma \semlf (\hatctx{\Psi} \vdash \sigma) = (\hatctx{\Psi} \vdash \sigma) : (\Psi \vdash_\# \Phi)}
{ \Gamma ; \Psi \Vdash_\# \sigma = \sigma' : \Phi}}
  \end{array}
\]
\hrulefill
  \caption{Semantic Typing for Contextual LF Terms}
  \label{fig:LFsem2}
\end{figure}
}

\begin{figure}[h]
\hrulefill
\vspace{1ex}
\[
  \begin{array}{c}
%\multicolumn{1}{l}{\mbox{Semantic Kinding for Types \fbox{$\Gamma \sem \ann\tau : u$} (inductive)}}
%\\[0.75em]
\infer{\Gamma \Vdash \tau : u}
  { \Gamma \der \tau \whnf \cbox{T} : u
  & \Gamma \vdash T \equiv T }
  % & \forall \Gamma' \leq_\rho \Gamma.~\Gamma' \Vdash \{\rho\}T}
\qquad
\infer{\Gamma \Vdash \tau : u}
  {\Gamma \der \tau \whnf u' : u & u' < u}
\qquad
\infer{\Gamma \Vdash \tau : u}
{\Gamma \vdash \tau \whnf x~\vec{t} : u \qquad \neut (x~\vec{t})}
\\[1em]
\infer[(u_1,~u_2,~u) \in \Ru]{\Gamma \Vdash \tau : u}
  {
    \begin{array}{c}
     \Gamma \der \tau \whnf (y:\ann\tau_1) \arrow \tau_2 : u
     \hfill~\quad
     \forall \Gamma' \leq_\rho \Gamma.\ \Gamma' \Vdash \{\rho\}\ann\tau_1 : u_1
   \\% [0.25em]
   \forall \Gamma' \leq_\rho \Gamma.\ \Gamma' \Vdash s = s~ : \{\rho\}\ann\tau_1
    \Longrightarrow \Gamma' \Vdash \{\rho, s/y\} \tau_2 : u
    \end{array}
% \hfill \\
% ONLY NEEDED IF WE NEED SEMANTIC REFLEXIVITY FOR TYPES tau -bp !!!
%{  \forall \Gamma' \leq_\rho \Gamma.\ \Gamma' \Vdash s = s' : \{\rho\}\tau_1
%   \Longrightarrow \Gamma' \Vdash \{\rho, s/y\} \tau_2 = \{\rho, s'/y\} \tau_2 : u_2
%   }
}
\qquad
\infer{\Gamma \Vdash \tmctx: u}
  {\vdash \Gamma}
%  \\[0.5em]
  \end{array}
\]
\hrulefill
\caption{Semantic Kinding for Types \fbox{$\Gamma \sem \ann\tau : u$} (inductive)}
\label{fig:semkind}
\end{figure}

To keep the definition compact, we again overload the semantic kinding and equality for types and terms. For example, we define the judgment $\Gamma \Vdash \ann \tau : u$ which falls into two parts: $\Gamma \Vdash \tau : u$ and $\Gamma \Vdash \tmctx : u$ where the latter is simply notation, as $\tmctx$ is not a computation-level type. Similarly, we define $\Gamma \Vdash t = t' : \ann\tau$ to stand for $\Gamma \vdash t = t' : \tau$, i.e. semantic equality for terms, and semantic equality for LF contexts where we write $\Gamma \Vdash t = t' : \tmctx$, although $t$ and $t'$ stand for LF contexts.

Our semantic kinding for types (Fig.~\ref{fig:semkind}) is used as a measure to define the semantic typing for computations. In particular, we define $\Gamma \Vdash \ann\tau = \ann\tau' : u$ and $\Gamma \Vdash t = t' : \ann\tau$ recursively on the semantic kinding of $\ann\tau$. i.e. $\Gamma \Vdash \ann\tau : u$.  For better readability, we simply write for example $\Gamma \Vdash t = t': \cbox{T}$ instead of
$\Gamma \Vdash t = t': \tau$ where $\tau \whnf \cbox{T}$, and $\Gamma \vdash T \equiv T$ in proofs.
\LONGVERSION{
We note that to prove reflexivity for types, we would need to strengthen our semantic kinding definition with the additional premise:
$\forall \Gamma' \leq_\rho \Gamma.\ \Gamma' \Vdash s = s' : \{\rho\}\ann\tau_1
   \Longrightarrow \Gamma' \Vdash \{\rho, s/y\} \tau_2 = \{\rho, s'/y\} \tau_2 : u_2$. This is possible, but since semantic reflexivity for types is not needed, we keep the semantic kinding definition more compact.
}

\begin{figure}[h]
\hrulefill
  \centering
\[
  \begin{array}{l@{~}c@{~}l}
\multicolumn{3}{l}{\mbox{Semantic Equality for Types: \fbox{$\Gamma \Vdash \ann\tau = \ann\tau' :u$} recursively defined by recursion on $\Gamma \Vdash \tau : u$}}\\[0.75em]    %%% Had changed verify it is true
\Gamma \Vdash \tmctx = \tmctx : u & \defiff & true
\\[1em]
\Gamma \Vdash \cbox T = \tau' : u & \defiff &
\Gamma \vdash \tau' \whnf \cbox{T'} : u~
% \mbox{and}~\Gamma \semlf T = T'
 \mbox{and}~\Gamma \vdash T \equiv T'
\\[1em]
\Gamma \Vdash u' = \tau' : u & \defiff &
\Gamma \vdash \tau' \whnf u' : u
\\[1em]
\Gamma \Vdash (y:\ann\tau_1) \arrow \tau_2 = \tau' : u & \defiff &
\Gamma \vdash \tau' \whnf (y' :\ann\tau_1' )\arrow \tau_2' : u ~\mbox{and}~
\forall \Gamma' \leq_\rho \Gamma.~ \Gamma'\Vdash \{\rho\}\ann\tau_1 = \{\rho\}\ann\tau_1' : u_1~\mbox{and}\\
& &\forall \Gamma'\leq_\rho \Gamma.~ \Gamma' \Vdash s = s' : \{\rho\}\ann\tau_1 \Longrightarrow \Gamma' \Vdash \{\rho, s/y\} \tau_2 = \{\rho,s'/y'\}\tau_2' : u_2 ~\\& & \mbox{where}~ (u_1, u_2, u) \in \Ru
% & &\forall \Gamma'\leq_\rho \Gamma.~ \Gamma' \Vdash s : \{\rho\}\tau_1 \Longrightarrow \Gamma' \Vdash \{\rho, s/y\} \tau_2 = \{\rho,s/y'\}\tau_2' : u
\\[1em]
\Gamma \Vdash x~\vec{t} = \tau' : u & \defiff & \Gamma \vdash \tau' \whnf x~\vec{s} : u~\mbox{and}~\Gamma \vdash x~\vec{t} \equiv x~\vec{s} : u
\\[1em]
% \multicolumn{3}{l}{\mbox{Semantic Typing for Terms (Computations): \fbox{$\Gamma \Vdash t : \tau$}~recursively defined on the derivation $\Gamma \Vdash \tau$}}\\[0.75em]
% \Gamma \Vdash t : \cbox T & \defiff &
% \Gamma \vdash t \whnf \cbox C : \cbox T~\mbox{and}~
% \Gamma  \Vdash C : T~\mbox{or}~\\
% & &
% \Gamma \vdash t \whnf n : \cbox T~\mbox{and}~\neut n%~\mbox{and}~ \Gamma \vdash t \equiv n : \cbox T
% \\[0.75em]
% \Gamma \sem t : u & \multicolumn 2 {@{}l} {\mbox{already defined}}
% \\[0.75em]
% % \Gamma \Vdash t : [\Psi \vdash A] & \defiff &
% % \Gamma \vdash t \twhnf w : [\Psi \vdash A]~\mbox{and}~\Gamma \vdash w : [\Psi \vdash A] \\[0.75em]
% \Gamma \Vdash t : (y:\tau_1) \arrow \tau_2 & \defiff &
% \Gamma \vdash t \whnf w : (y:\tau_1) \arrow \tau_2 ~\mbox{and}~\\
% & &
% \forall \Gamma' \leq_\rho \Gamma.~\Gamma' \Vdash s : \{\rho\}\tau_1 \Longrightarrow \Gamma' \Vdash (\{\rho\}w)~s : \{\rho, s/y\}\tau_2 ~\mbox{and}~\\
% & &
% \forall \Gamma' \leq_\rho \Gamma.~\Gamma' \Vdash s = s' : \{\rho\}\tau_1 \Longrightarrow \Gamma' \Vdash (\{\rho\}w)~s = (\{\rho\}w)~s' : \{\rho, s/y\}\tau_2
%\\[2em]
\multicolumn{3}{l}{\mbox{Semantic Equality for Terms (Computations): \fbox{$\Gamma \Vdash t = t' : \ann\tau$} by recursion on $\Gamma \Vdash \ann\tau : u$}}\\[0.75em]
\Gamma \Vdash \Psi = \Psi' : \tmctx & \defiff &
%\Gamma \vdash t \whnf \Psi  : \tmctx ~\mbox{and}~
%\Gamma \vdash t'\whnf \Psi'  : \tmctx ~\mbox{and}~
\Gamma \vdash \Psi \equiv \Psi' : \tmctx
% &~\mbox{or}~& \Gamma \vdash t \whnf n : \tmctx ~\mbox{and}~
% \Gamma \vdash t'\whnf n'  : \tmctx ~\mbox{and}~
% \neut n,n' ~\mbox{and}~\\
% & & \Gamma \vdash n \equiv n' : \tmctx
\\[0.5em]
\Gamma \Vdash t = t' : \cbox {\Psi \vdash A}  & \defiff &
\Gamma \vdash t \whnf w : \cbox{\Psi \vdash A} ~\mbox{and}~
\Gamma \vdash t' \whnf w' : \cbox{\Psi \vdash A} ~\mbox{and}~\\
& & \Gamma ; \Psi \Vdash \unbox{w}{\id} = \unbox{w'}{\id} : A
% CHANGED SEMANTIC DEF.
%% \Gamma \vdash t \whnf \cbox C   : \cbox T ~\mbox{and}~
%% \Gamma \vdash t'\whnf \cbox {C'}  : \cbox T ~\mbox{and}~
%% \Gamma \semlf C = C' : T ~\mbox{or}~ \\
%% && \Gamma \vdash t \whnf n : \cbox T ~\mbox{and}~
%% \Gamma \vdash t'\whnf n'  : \cbox T ~\mbox{and}~
%% \Gamma \vdash n \equiv n' : \cbox T ~\mbox{and}~ \neut n,n'
    \\[0.75em]
\Gamma \Vdash t = t' : u'  &  \multicolumn{2}{@{}l} {\mbox{already defined above by recursion on $\Gamma \Vdash t : u'$}}\\
& \multicolumn{2}{@{}l}{\mbox{we note that we have  $\Gamma \Vdash \tau : u$ where $\Gamma \vdash \tau \whnf u' :  u$ and $u' < u$;}}\\
& \multicolumn{2}{@{}p{11cm}}{hence using the sem. eq. for types
  is well-founded as we do so at a smaller universe $u'$ and our sem. eq. for types is recursively defined on $\Gamma \Vdash t : u'$}
    \\[1.75em]
\Gamma \Vdash t = t' : (y:\ann\tau_1) \arrow \tau_2 & \defiff &
\Gamma \vdash t \whnf w : (y:\ann\tau_1) \arrow \tau_2 ~\mbox{and}~
\Gamma \vdash t'\whnf w': (y:\ann\tau_1) \arrow \tau_2 ~\mbox{and}~\\
%&& \mbox{for all}~ \Gamma' \leq_\rho \Gamma,~s,~s'
% ~\mbox{where}~
% \Gamma' \vdash s : \{\rho\}\tau_1~\mbox{and}~
% \Gamma' \vdash s' : \{\rho\}\tau_1~
%\\
&&\forall \Gamma' \leq_\rho \Gamma.
~\Gamma' \Vdash s = s' : \{\rho\}\ann\tau_1 \Longrightarrow \Gamma' \Vdash \{\rho\}w~s = \{\rho\}w'~s' : \{\rho, s/y\}\tau_2
\\[1em]
\Gamma \Vdash t = t' : x~\vec{s} & \defiff &
 \Gamma \vdash t \whnf n :  x~ \vec{s} ~\mbox{and}~
 \Gamma \vdash t' \whnf n' :  x~\vec{s}~\mbox{and}~\neut n,n'~\mbox{and}~
\Gamma \vdash n \equiv n' : x~\vec{s}
% \multicolumn{3}{l}{\mbox{Semantic Equality for Types: \fbox{$\Gamma \Vdash \tau = \tau' \flagred{:u}$} recursively defined on the derivation $\Gamma \Vdash \tau$}}\\[0.75em]    %%% Had changed verify it is true
% \Gamma \Vdash \cbox T = \tau' \flagred{:u} & \defiff &
% \Gamma \vdash \tau' \whnf \cbox{T'}~
% \mbox{and}~\Gamma \vdash T \equiv T'
% \\[1em]
% \Gamma \Vdash (y:\tau_1) \arrow \tau_2 = \tau' \flagred{:u_3} & \defiff &
% \Gamma \vdash \tau' \whnf (y' :\tau_1' )\arrow \tau_2' ~\mbox{and}~
% (\forall \Gamma' \leq_\rho \Gamma.\Gamma'\Vdash \{\rho\}\tau_1 = \{\rho\} \tau_1') \flagred{: u_1}~\mbox{and}\\
% & &\forall \Gamma'\leq_\rho \Gamma. \Gamma' \Vdash t : \{\rho\}\tau_1 \Longrightarrow \Gamma' \Vdash \{\rho, t/y\} \tau_2 = \{\rho,t/y'\}\tau_2' \flagred{: u_2} \\
% & & \flagred{(u_1, u_2, u_3) \in \R}
  \end{array}
  \]
\hrulefill
  \caption{Semantic Equality}
  \label{fig:sem}
\end{figure}

\begin{lemma}[Weakening of Type Inference for Neutral Computations]\label{lem:wktypinf}
If $\typeof (\Gamma \vdash t) = \tau$ and $\Gamma' \leq_\rho \Gamma$ then $\typeof (\Gamma' \vdash \{\rho\}t) = \tau'$ s.t. $\tau' = \{\rho\}\tau$.
\end{lemma}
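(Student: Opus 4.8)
The plan is to proceed by induction on the derivation of $\typeof(\Gamma \vdash t) = \tau$, which is just a case analysis on the shape of the neutral term $t$. In each case I rewrite $\{\rho\}t$ using the definition of computation-level substitution, re-run the corresponding defining clause of $\typeof$ in the extended context $\Gamma'$, and reconcile the type it produces with $\{\rho\}\tau$ using compositionality of substitution together with the fact that the bound names introduced in the clauses ($y$ in the application case, and $\psi,y$ in the recursor case) are fresh for $\rho$.

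For the variable case $t = x$ we have $x{:}\tau \in \Gamma$; since $\Gamma' \leq_\rho \Gamma$ the renaming is the identity on the variables of $\Gamma$, so $\{\rho\}x = x$, the declaration $x{:}\tau$ is still present in $\Gamma'$, and $\{\rho\}\tau = \tau$. Hence $\typeof(\Gamma' \vdash \{\rho\}x) = \tau = \{\rho\}\tau$. For the recursor case $t = \titer{\R}{}{\IH}~\Psi~t_0$ with $\IH = (\psi : \tmctx) \arrow (y : \cbox{\psi \vdash \tm}) \arrow \tau'$, we have $\{\rho\}t = \titer{\{\rho\}\R}{}{\{\rho\}\IH}~\{\rho\}\Psi~\{\rho\}t_0$ where $\{\rho\}\IH$ has the same shape with body $\{\rho\}\tau'$ (because $\psi,y \notin \FV(\rho)$), so $\typeof(\Gamma' \vdash \{\rho\}t) = \{\{\rho\}\Psi/\psi,\ \{\rho\}t_0/y\}(\{\rho\}\tau') = \{\rho\}(\{\Psi/\psi,\ t_0/y\}\tau')$ by compositionality.

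The substantive case is the application $t = t_0~s$, where $\typeof(\Gamma \vdash t_0) = \tau_0$ with $\tau_0 \whnf (y{:}\tau_1)\arrow\tau_2$, $\Gamma \vdash s : \tau_1$, and $\tau = \{s/y\}\tau_2$. The induction hypothesis gives $\typeof(\Gamma' \vdash \{\rho\}t_0) = \{\rho\}\tau_0$; Lemma~\ref{lem:weakcomp} gives $\Gamma' \vdash \{\rho\}s : \{\rho\}\tau_1$; and I need $\{\rho\}\tau_0 \whnf (y{:}\{\rho\}\tau_1)\arrow\{\rho\}\tau_2$, which follows from preservation of weak head reduction under weakening (Lemma~\ref{lem:weakwhnf}), carrying along — or re-deriving via Lemma~\ref{lm:typeof} — the typing of $\tau_0$ that a well-typed neutral term supplies. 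The application clause of $\typeof$ then fires, yielding $\typeof(\Gamma' \vdash \{\rho\}(t_0~s)) = \{\{\rho\}s/y\}(\{\rho\}\tau_2)$, and since $y \notin \FV(\rho)$ the substitution-commutation equation $\{\{\rho\}s/y\}(\{\rho\}\tau_2) = \{\rho\}(\{s/y\}\tau_2) = \{\rho\}\tau$ closes the case.

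The step I expect to be the main obstacle is precisely this preservation of the bare weak head reduction $\tau_0 \whnf (y{:}\tau_1)\arrow\tau_2$ under the renaming. Unlike general substitutions — which can create fresh redexes by plugging a $\lambda$ for a variable and hence do not preserve weak head normal or neutral forms — renamings only rename variables and so keep neutral terms neutral; Lemma~\ref{lem:weakwhnf} is exactly the tool that makes this rigorous, but it is phrased for the \emph{typed} weak head reduction, so the only real care needed is to make sure the requisite typing of $\tau_0$ is available at the point where we invoke it. Everything else in the proof is routine bookkeeping with the substitution lemmas.
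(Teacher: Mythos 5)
Your proposal is correct and follows essentially the same route as the paper, whose entire proof is ``by induction on $\typeof(\Gamma \vdash t) = \tau$ using Lemma~\ref{lem:weakwhnf}~(\ref{it:sweakcomp})'' --- the same induction, the same key lemma for the weak head reduction of $\tau_0$ in the application case, and the same substitution-commutation bookkeeping. The typing-availability concern you flag for invoking the typed-whnf lemma is real but is glossed over by the paper as well, so it does not distinguish your argument from theirs.
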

\begin{proof}
By induction on    $\typeof (\Gamma \vdash t) = \tau$ using Lemma \ref{lem:weakwhnf} (\ref{it:sweakcomp}).
\end{proof}

\section{Semantic Properties }\label{sec:semprop}
\subsection{Semantic Properties of LF}

% As our semantic

% \begin{lemma}[Reflexivity  Semantic Type Equality]\label{lem:semsym}
% Let $\Gamma \Vdash \tau : u$ and $\Gamma \sem \tau = \tau' : u$ and $\Gamma \Vdash t_1 = t_2 : \tau$.  Then:
% $\Gamma \sem \tau = \tau : u$
% \end{lemma}
% \begin{proof}
% By cases on  $\Gamma \Vdash \tau : u$ relying on reflexivity of
% semantic equality for contextual types.
% \end{proof}

% Neutral soundness can be further generalized, although

 \begin{lemma}[Well-Formedness of Semantic LF Typing] \quad \label{lm:semlfwf}
   \begin{enumerate}
   \item If $\Gamma ; \Psi \Vdash M = N : A$ then
         $\Gamma ; \Psi \vdash M : A$ and $\Gamma ; \Psi \vdash N : A$
         and $\Gamma \vdash M \equiv N : A$.
   \item If $\Gamma ; \Psi \Vdash \sigma_1 = \sigma_2 : \Phi$ then
         $\Gamma ; \Psi \vdash \sigma_1 : \Phi$ and $\Gamma ; \Psi \vdash \sigma_2 : \Phi$
         and $\Gamma ; \Psi \vdash \sigma_1 \equiv \sigma_2 : \Phi$.
   \end{enumerate}
 \end{lemma}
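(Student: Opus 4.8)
The plan is to prove both statements simultaneously by induction on the derivation of semantic equality, i.e.\ on the structure of the derivation of $\Gamma ; \Psi \Vdash M = N : A$ for part~(1) and $\Gamma ; \Psi \Vdash \sigma_1 = \sigma_2 : \Phi$ for part~(2), using a mutual induction since the LF-term case involving $\unbox{t_i}{\sigma_i}$ appeals to the LF-substitution judgment and vice versa. The key observation that makes this work is that the semantic equality definitions (Fig.~\ref{fig:LFsem} and Fig.~\ref{fig:LFsemctx}) are built \emph{on top of} well-typed weak head reduction (Def.~\ref{def:typedwhnf}): every premise of the form $\Gamma ; \Psi \vdash M \lfwhnf N' : A$ already carries with it $\Gamma ; \Psi \vdash M : A$, $\Gamma ; \Psi \vdash N' : A$, and $\Gamma ; \Psi \vdash M \equiv N' : A$ by definition. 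So in each case the typing of the outer terms $M$, $N$ (resp.\ $\sigma_1$, $\sigma_2$) is essentially immediate, and the only real work is assembling the definitional equality $\Gamma ; \Psi \vdash M \equiv N : A$ from the equalities on the weak head normal forms and the equalities supplied by the induction hypotheses.

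First I would dispatch the variable case of part~(1): if both $M$ and $N$ weak head reduce to the same LF variable $x : A$ in $\Psi$, then from $\Gamma ; \Psi \vdash M \lfwhnf x : A$ and $\Gamma ; \Psi \vdash N \lfwhnf x : A$ we get $\Gamma ; \Psi \vdash M \equiv x : A$ and $\Gamma ; \Psi \vdash N \equiv x : A$, hence $\Gamma ; \Psi \vdash M \equiv N : A$ by symmetry and transitivity. Next, the $\capp$ and $\clam$ congruence cases: here I apply the induction hypothesis to the subderivations ($\Gamma ; \Psi \Vdash M_i = N_i : \tm$ in the application case; $\Gamma ; \Psi, x{:}\tm \Vdash M'~x = N'~x : \tm$ in the lambda case), obtaining the component typings and definitional equalities, then use the congruence rules for LF definitional equality to conclude $\Gamma ; \Psi \vdash \capp M_1~M_2 \equiv \capp N_1~N_2 : \tm$ (resp.\ the same for $\clam$; in the lambda case I additionally need $\eta$ / injectivity-style reasoning or just the congruence rule on the body together with the $\eta$-expansion rule, since we compared $M'~x$ and $N'~x$ rather than $M'$ and $N'$ directly). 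Combining with the weak head reduction equalities $M \equiv \capp M_1~M_2$ and $N \equiv \capp N_1~N_2$ gives the result. For part~(2), the $\cdot$, $\wk\psi$, and $\sigma_1, M$ cases are handled in exactly the same spirit using the induction hypothesis on the subcomponents and the congruence rules for LF substitution equality.

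The main obstacle will be the $\unbox{t_1}{\sigma_1}$ / $\unbox{t_2}{\sigma_2}$ case of part~(1) (and the analogous substitution-closure case of part~(2) when \SUBSTCLO{} content is included). There the semantic premises give $\Gamma ; \Psi \vdash M \lfwhnf \unbox{t_1}{\sigma_1} : \tm$, $\Gamma ; \Psi \vdash N \lfwhnf \unbox{t_2}{\sigma_2} : \tm$, plus $\typeof(\Gamma \vdash t_i) = \cbox{\Phi_i \vdash \tm}$, $\Gamma \vdash \Phi_1 \equiv \Phi_2 : \ctx$, $\Gamma \vdash t_1 \equiv t_2 : \cbox{\Phi_1 \vdash \tm}$, and $\Gamma ; \Psi \Vdash \sigma_1 = \sigma_2 : \Phi_1$. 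The typing of $M$ and $N$ is again immediate from Def.~\ref{def:typedwhnf}. To build $\Gamma ; \Psi \vdash M \equiv N : \tm$ I first need $\Gamma ; \Psi \vdash \unbox{t_1}{\sigma_1} \equiv \unbox{t_2}{\sigma_2} : \tm$: this follows from the congruence rule for closures applied to $\Gamma \vdash t_1 \equiv t_2 : \cbox{\Phi_1 \vdash \tm}$ and to $\Gamma ; \Psi \vdash \sigma_1 \equiv \sigma_2 : \Phi_1$, the latter obtained from the induction hypothesis of part~(2) on the smaller derivation $\Gamma ; \Psi \Vdash \sigma_1 = \sigma_2 : \Phi_1$. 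I must be a little careful about the type at which the closures are compared, using Lemma~\ref{lm:typeof} to relate $\typeof(\Gamma \vdash t_i)$ with the type at which $t_i$ was actually typed and possibly a type-conversion step to move between $\cbox{\Phi_1 \vdash \tm}$ and $\cbox{\Phi_2 \vdash \tm}$ via $\Gamma \vdash \Phi_1 \equiv \Phi_2 : \ctx$; also I need that $\Gamma ; \Psi \vdash \sigma_i : \Phi_i$ typechecks, which again comes from the part~(2) induction hypothesis. Finally, transitivity with $M \equiv \unbox{t_1}{\sigma_1}$ and $N \equiv \unbox{t_2}{\sigma_2}$ (from the well-typed whnf premises) closes the case. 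The remaining subtlety is just bookkeeping: ensuring the mutual induction is well-founded, which it is because each appeal crosses to a structurally smaller semantic-equality derivation.
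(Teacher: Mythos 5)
Your proposal matches the paper's proof: both proceed by induction on the semantic equality derivation, read off the typings and the equalities $M \equiv \mathrm{whnf}(M)$ directly from Def.~\ref{def:typedwhnf}, apply the induction hypothesis to the subcomponents (including the mutual appeal to part~(2) in the $\unbox{t_i}{\sigma_i}$ case), and close each case with the congruence rules for definitional equality plus symmetry and transitivity. Your extra remark about needing $\eta$-style reasoning in the $\clam$ case (since the model compares $M'~x$ and $N'~x$) is a fair observation about a detail the paper glosses over, but it does not change the approach.
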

 \begin{proof}
By induction on the semantic definition. In each case, we refer the Def.~\ref{def:typedwhnf}. To illustrate, consider the case where
$\Gamma ; \Psi \vdash M \lfwhnf \lambda x.M' : \Pi x{:}A.B$ and
$\Gamma ; \Psi \vdash N \lfwhnf \lambda x.N' : \Pi x{:}A.B$, we also know that
$\Gamma ; \Psi \vdash M \equiv \lambda x.M' : \Pi x{:}A.B$ and
$\Gamma ; \Psi \vdash N \equiv \lambda x.N' : \Pi x{:}A.B$ by Def.~\ref{def:typedwhnf}.

Further, we have that $\Gamma ; \Psi, x{:}A \Vdash M' = N' : B$. By
IH, we get that $\Gamma ; \Psi, x{:}A \vdash M' \equiv N' : B$ By
dec. equivalence rules, we have $\Gamma ; \Psi \vdash \lambda x.M'
\equiv \lambda x.N' : \Pi x{:}A.B$. Therefore, by symmetry and
transitivity of $\equiv$, we have $\Gamma ; \Psi \vdash M \equiv N :
\Pi x{:}A.B$. The typing invariants are left implicit.
\LONGVERSIONCHECKED{
\\
 We show the expanded proofs below concentrating on showing
  $\equiv$ and leaving the tracking of typing invariants implicit.
  \\[1em]
\SUBSTCLO{
  \pcase {$\infer{\Gamma ; \Psi \Vdash \sigma = \sigma' : \Phi}
    {\begin{array}{lll}
        \Gamma ; \Psi \vdash \sigma \lfwhnf (\sclo \phi {\unbox{t_1}{\sigma_1}})  : \phi
        & \typeof (\Gamma \vdash t_1) = \cbox{\Phi_1 \vdash \Phi'_1}
        & \Gamma \vdash \Phi'_1 \equiv (\phi, \wvec{x{:}A}) : \ctx \\
        \Gamma ; \Psi \vdash \sigma' \lfwhnf (\sclo \phi {\unbox{t_2}{\sigma_2}}) : \phi
        & \typeof (\Gamma \vdash t_2) = \cbox{\Phi_2 \vdash \Phi'_2}
        & \Gamma \vdash \Phi'_2 \equiv (\phi, \wvec{x{:}A}) : \ctx \\
        \Gamma \vdash t_1 \equiv t_2 : \cbox{\Phi_1 \vdash \Phi'_1}
        & \Gamma ; \Psi \Vdash \sigma_1 = \sigma_2 : \Phi_1
        & \Gamma \vdash \Phi_1 \equiv \Phi_2 : \ctx
      \end{array}}$}
  \prf{$\Gamma ; \Psi \vdash \sigma \equiv (\sclo \phi {\unbox{t_1}{\sigma_1'}}) : \phi$
    and $\Gamma ; \Psi \vdash \sigma' \equiv (\sclo \phi {\unbox{t_2}{\sigma_2'}}) : \phi$
    \hfill by Def.~\ref{def:typedwhnf}}
  \prf{$\Gamma ; \Psi \vdash \sigma_1' \equiv \sigma_2' : \Phi_1$
    \hfill by induction hypothesis}
  \prf{$\Gamma ; \Psi \vdash (\sclo \phi {\unbox{t_1}{\sigma_1'}})
    \equiv (\sclo \phi {\unbox{t_2}{\sigma_2'}}) : \phi$
    \hfill by dec. equivalence rules}
  \prf{$\Gamma ; \Psi \vdash \sigma \equiv \sigma': \phi$
    \hfill by symmetry, transitivity, conversion of $\equiv$}
  \\
}
  \pcase {$\infer{\Gamma ; \Psi \Vdash \sigma = \sigma' : \cdot}
    {\Gamma ; \Psi \vdash \sigma \lfwhnf \cdot : \cdot  &
      \Gamma ; \Psi \vdash \sigma' \lfwhnf \cdot : \cdot }$}
  \prf{$\Gamma ; \Psi \vdash \sigma_1 \equiv \cdot : \cdot$
    and $\Gamma ; \Psi \vdash \sigma_2 \equiv \cdot : \cdot$
    \hfill by Def.~\ref{def:typedwhnf}}
  \prf{$\Gamma ; \Psi \vdash \sigma_1 \equiv \sigma_2 : \cdot$ \hfill   by symmetry and transitivity of $\equiv$}
  \\
  \pcase {$ \infer{\Gamma ; \psi, \wvec{x{:}A} \Vdash \sigma = \sigma' : \psi}{
      \Gamma ; \psi, \wvec{x{:}A} \vdash \sigma \lfwhnf \wk{\psi} : \psi &
      \Gamma ; \psi, \wvec{x{:}A} \vdash \sigma' \lfwhnf \wk{\psi} : \psi
    }$}
  \prf{$\Gamma ; \psi, \wvec{x{:}A} \vdash \sigma_1 \equiv \wk{\psi} : \psi$ and
    $\Gamma ; \psi, \wvec{x{:}A} \vdash \sigma_2 \equiv \wk{\psi} : \psi$
    \hfill by Def.~\ref{def:typedwhnf}}
  \prf{$\Gamma ; \psi \wvec{x{:}A} \vdash \sigma_1 \equiv \sigma_2 : \psi$
    \hfill by symmetry and transitivity of $\equiv$}
  \\
  \pcase{$ \infer{\Gamma ; \Psi \Vdash \sigma_1 = \sigma_2 : \Phi, x{:}A}
    {\begin{array}{ll}
       \Gamma ; \Psi \vdash \sigma_1 \lfwhnf \sigma'_1, M : \Phi, x{:}A  & \\
       \Gamma ; \Psi \vdash \sigma_2 \lfwhnf \sigma'_2, N : \Phi, x{:}A &
       \Gamma ; \Psi \Vdash \sigma'_1 = \sigma'_2 : \Phi
       \qquad \Gamma ; \Psi  \Vdash M = N : \lfs{\sigma'_1}{\Phi} A
      \end{array}}$}
  \prf{$\Gamma ; \Psi \vdash \sigma_1 \equiv \sigma_1', M : \Phi, x{:}A$
    and $\Gamma ; \Psi \vdash \sigma_2 \equiv \sigma_2', N : \Phi, x{:}A$
    \hfill by Def.~\ref{def:typedwhnf}}
  \prf{$\Gamma ; \Psi \vdash \sigma_1' \equiv \sigma_2' : \Phi$
    and $\Gamma ; \Psi  \vdash M \equiv N : \lfs{\sigma_1'}{\Phi} A$
    \hfill by induction hypothesis}
  \prf{$\Gamma ; \Psi \vdash \sigma_1', M \equiv \sigma_2', N : \Phi, x{:}A$
    \hfill by dec. equivalence rules}
  \prf{$\Gamma ; \Psi \vdash \sigma_1 \equiv \sigma_2 : \Phi, x{:}A$
    \hfill by symmetry and transitivity of $\equiv$}
  \\
  \pcase{$\infer{\Gamma ; \Psi \Vdash M = N : \tm}
    {\begin{array}{lll}
        \Gamma ; \Psi \vdash M \lfwhnf \unbox {t_1} {\sigma_1} : \tm
        & \typeof (\Gamma  \vdash t_1) = \cbox{\Phi_1 \vdash \tm} &
        \Gamma \vdash \Phi_1 \equiv \Phi_2 : \ctx \\
        \Gamma ; \Psi \vdash N \lfwhnf \unbox {t_2} {\sigma_2} : \tm
        & \typeof (\Gamma  \vdash t_2) = \cbox{\Phi_2 \vdash \tm} &
        \Gamma  \vdash t_1 \equiv t_2 : \cbox{\Phi_1 \vdash \tm}
        \quad \Gamma ; \Psi \Vdash \sigma_1 = \sigma_2 : \Phi_1
      \end{array}}$}
  \prf{$\Gamma ; \Psi \vdash M \equiv \unbox {t_1} {\sigma_1} : \tm $
    and $\Gamma ; \Psi \vdash N \equiv \unbox {t_2} {\sigma_2} : \tm$
    \hfill by Def.~\ref{def:typedwhnf}}
  \prf{$\Gamma ; \Psi \vdash \sigma_1 \equiv \sigma_2 : \Phi_1$ \hfill by induction hypothesis}
  \prf{$\Gamma ; \Psi \vdash \unbox {t_1} {\sigma_1} \equiv \unbox {t_2} {\sigma_2} : \tm$
    \hfill by dec. equivalence rules}
  \prf{$\Gamma ; \Psi \vdash M \equiv N : \tm$ \hfill by symmetry and transitivity  of $\equiv$}
  \\
  \pcase{$\infer {\Gamma ; \Psi \Vdash M = N : \tm}
    {\begin{array}{ll}
        \Gamma ; \Psi \vdash M \lfwhnf \clam M' : \tm & \\
        \Gamma ; \Psi \vdash N \lfwhnf \clam N' : \tm
        & \Gamma ; \Psi \Vdash M' = N': \Pi x{:}\tm.\tm
      \end{array}}$}
  \prf{$\Gamma ; \Psi \vdash M \equiv \clam M' : \tm$
    and $\Gamma ; \Psi \vdash N \equiv \clam N' : \tm$
    \hfill by Def.~\ref{def:typedwhnf}}
  \prf{$\Gamma ; \Psi \vdash M' \equiv N': \Pi x{:}\tm.\tm$
    \hfill by induction hypothesis}
  \prf{$\Gamma ; \Psi \vdash \clam M' \equiv \clam N' : \tm$
    \hfill by dec. equivalence rules}
  \prf{$\Gamma ; \Psi \vdash M \equiv N : \tm$ \hfill by symmetry and transitivity  of $\equiv$}
  \\
  \pcase{$\infer {\Gamma ; \Psi \Vdash M = N : \tm}
    {\begin{array}{ll}
        \Gamma ; \Psi \vdash M \lfwhnf \capp {M_1}~{M_2} : \tm & \\
        \Gamma ; \Psi \vdash N \lfwhnf \capp {N_1}~{N_2} : \tm &
        \Gamma ; \Psi \Vdash M_1 = N_1 : \tm \quad \Gamma ; \Psi \Vdash M_2 = N_2 : \tm
      \end{array}}$}
  \prf{$\Gamma ; \Psi \vdash M \equiv \capp {M_1}~{M_2} : \tm$
    and $\Gamma ; \Psi \vdash N \equiv \capp {N_1}~{N_2} : \tm$
    \hfill by Def.~\ref{def:typedwhnf}}
  \prf{$\Gamma ; \Psi \vdash M_1 \equiv N_1 : \tm$
    and $\Gamma ; \Psi \vdash M_2 \equiv N_2 : \tm$
    \hfill by induction hypothesis}
  \prf{$\Gamma ; \Psi \vdash \capp {M_1}~{M_2} \equiv \capp {N_1}~{N_2}$
    \hfill by dec. equivalence rules}
  \prf{$\Gamma ; \Psi \vdash M \equiv N : \tm$ \hfill by symmetry and transitivity  of $\equiv$}
  \\
  \pcase{$\infer {\Gamma ; \Psi \Vdash M = N : \tm}
    { \Gamma ; \Psi \vdash M \lfwhnf x : \tm  &
      \Gamma ; \Psi \vdash N \lfwhnf x : \tm }$}
  \prf{$\Gamma ; \Psi \vdash M \equiv x : \tm$
    and $\Gamma ; \Psi \vdash N \equiv x : \tm$
    \hfill by Def.~\ref{def:typedwhnf}}
  \prf{$\Gamma ; \Psi \vdash M \equiv N : \tm$ \hfill by symmetry and transitivity  of $\equiv$}}
 \end{proof}

\begin{lemma}[Semantic Weakening for LF]\label{lem:semweak}\quad
% \\
\begin{enumerate}
%  \item If \/ $\Gamma \Vdash \Psi = \Phi : \ctx$ and $\Gamma' \leq_\rho \Gamma$
%    then $\Gamma' \Vdash \{\rho\}\Psi = \{\rho\}\Phi : \ctx$.
  \item \label{it:sweaksemlf} If\/ $\Gamma ; \Psi \Vdash M = N: A$ and $\Gamma' \leq_\rho \Gamma$
     then $\Gamma' ; \{\rho\}\Psi \Vdash \{\rho\}M = \{\rho\}N: \{\rho\}A$.
  \item \label{it:sweaklfsub} If\/ $\Gamma ; \Psi \Vdash \sigma = \sigma': \Phi$ and $\Gamma' \leq_\rho \Gamma$
     then $\Gamma' ; \{\rho\}\Psi \Vdash \{\rho\}\sigma = \{\rho\}\sigma': \{\rho\}\Phi$.
  \end{enumerate}
\end{lemma}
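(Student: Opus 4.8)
The plan is to prove (1) and (2) simultaneously by mutual induction on the derivations of $\Gamma ; \Psi \Vdash M = N : A$ and $\Gamma ; \Psi \Vdash \sigma = \sigma' : \Phi$, following the clauses of Fig.~\ref{fig:LFsem} and Fig.~\ref{fig:LFsemctx}. Each clause records that the two sides weak head reduce to matching whnfs, so the first move in every case is to transport these well-typed whnf reductions along $\rho$ via Lemma~\ref{lem:weakwhnf}, using that a renaming $\Gamma' \leq_\rho \Gamma$ is a special case of a computation-level substitution. Since $\rho$ acts only on computation-level variables it commutes with the LF constructors and with the closure $\unbox{t}{\sigma}$ (Fig.~\ref{fig:csub}), so for instance $\{\rho\}(\clam M') = \clam (\{\rho\}M')$, $\{\rho\}(\capp M_1~M_2) = \capp (\{\rho\}M_1)~(\{\rho\}M_2)$, $\{\rho\}(\unbox{t}{\sigma}) = \unbox{\{\rho\}t}{\{\rho\}\sigma}$, and $\{\rho\}\tm = \tm$; these syntactic identities let us read off the whnfs of $\{\rho\}M$ and $\{\rho\}N$ in the shape demanded by the corresponding semantic rule.

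For the structural LF cases we simply feed the induction hypothesis the premises. In the $\clam$ clause we apply the IH to $\Gamma ; \Psi, x{:}\tm \Vdash M'~x = N'~x : \tm$, observing that $\{\rho\}(\Psi, x{:}\tm) = \{\rho\}\Psi, x{:}\tm$ and $\{\rho\}(M'~x) = (\{\rho\}M')~x$, so no extension of $\rho$ is needed; the $\capp$ clause is a direct appeal to the IH on the two argument sub-derivations; and the variable clause is immediate because $\{\rho\}x = x$ for an LF variable and $x \in \{\rho\}\Psi$. For LF substitutions, the clauses for $\cdot$ and $\wk\psi$ are immediate (using $\{\rho\}\psi = \psi$ and $\{\rho\}(\wk\psi) = \wk\psi$), while for $\Phi, x{:}A$ we invoke the IH on the sub-judgements for $\sigma_1 = \sigma_2$ and for $M = N$, using the commutation $\{\rho\}(\lfs{\sigma_1}{\Phi} A) = \lfs{\{\rho\}\sigma_1}{(\{\rho\}\Phi)}(\{\rho\}A)$ of renaming with LF substitution application.

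The only delicate case is the one going through an embedded computation, $M \lfwhnf \unbox{t_1}{\sigma_1}$ and $N \lfwhnf \unbox{t_2}{\sigma_2}$ (and, if substitution closures are in scope, the analogous clause for LF substitutions). Here the derivation supplies $\typeof(\Gamma \vdash t_i) = \cbox{\Phi_i \vdash \tm}$, the equations $\Gamma \vdash \Phi_1 \equiv \Phi_2 : \ctx$ and $\Gamma \vdash t_1 \equiv t_2 : \cbox{\Phi_1 \vdash \tm}$, and $\Gamma ; \Psi \Vdash \sigma_1 = \sigma_2 : \Phi_1$. To weaken: Lemma~\ref{lem:wktypinf} gives $\typeof(\Gamma' \vdash \{\rho\}t_i) = \cbox{\{\rho\}\Phi_i \vdash \tm}$; Lemma~\ref{lm:compsubst} specialised to the renaming $\rho$ transports the two definitional equalities to $\Gamma' \vdash \{\rho\}\Phi_1 \equiv \{\rho\}\Phi_2 : \ctx$ and $\Gamma' \vdash \{\rho\}t_1 \equiv \{\rho\}t_2 : \cbox{\{\rho\}\Phi_1 \vdash \tm}$; and the IH on the LF-substitution sub-judgement gives $\Gamma' ; \{\rho\}\Psi \Vdash \{\rho\}\sigma_1 = \{\rho\}\sigma_2 : \{\rho\}\Phi_1$. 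Reassembling these with the transported whnf reductions yields the claim.

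Thus the main obstacle is purely the bookkeeping in these computation-carrying clauses, namely confirming that type inference for neutral computations and definitional equality of computations commute with renaming (delegated to Lemmas~\ref{lem:wktypinf} and~\ref{lm:compsubst}) and that $\{\rho\}$ commutes with LF substitution application; everything else is a routine structural recursion driven by Lemma~\ref{lem:weakwhnf}.
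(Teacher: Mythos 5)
Your proposal is correct and follows essentially the same route as the paper: structural induction on the semantic equality derivations, transporting the well-typed whnf reductions along $\rho$ via Lemma~\ref{lem:weakwhnf}, using that renaming commutes with the LF constructors and with LF substitution application, and handling the $\unbox{t_i}{\sigma_i}$ clause by combining Lemma~\ref{lem:wktypinf}, weakening of definitional equality for computations, and the induction hypothesis on the substitution sub-judgement. The only cosmetic difference is that you invoke Lemma~\ref{lm:compsubst} specialised to renamings where the paper cites its renaming corollary directly; these are interchangeable.
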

\begin{proof}
By induction on the first derivation.
\LONGVERSIONCHECKED{
\\[1em]
\pcase{$\ianc
{\begin{array}{lll}
\Gamma ; \Psi \vdash M \lfwhnf \unbox {t_1} {\sigma_1} : \tm   & \typeof (\Gamma  \vdash t_1) = \cbox{\Phi_1 \vdash \tm} &
\Gamma \vdash \Phi_1 \equiv \Phi_2 : \ctx\\
\Gamma ; \Psi \vdash N \lfwhnf \unbox {t_2} {\sigma_2} : \tm & \typeof (\Gamma  \vdash t_2) = \cbox{\Phi_2 \vdash \tm} &
\Gamma ; \Psi \Vdash \sigma_1 = \sigma_2 : \Phi_1 \quad
\Gamma  \vdash t_1 \equiv t_2 : \cbox{\Phi_1 \vdash \tm}
% \Gamma \Vdash t_1 \equiv t_2 : \cbox{\Phi_1 \vdash \tm}
\end{array}}
{\Gamma \Vdash M = N : \tm}{}$}
    \prf{$\Gamma'; \{\rho\}\Psi \vdash \{\rho\}M \lfwhnf \{\rho\}(\unbox {t_1} {\sigma_1}) : \tm$ and
         $\Gamma'; \{\rho\}\Psi \vdash \{\rho\}N \lfwhnf \{\rho\}(\unbox {t_2} {\sigma_1}) : \tm$ \hfill by Lemma \ref{lem:weakwhnf}}
    \prf{$\Gamma'; \{\rho\}\Psi \vdash \{\rho\}M \lfwhnf \unbox {\{\rho\}t_1} {\{\rho\}\sigma_1} : \tm$ and
         $\Gamma'; \{\rho\}\Psi \vdash \{\rho\}N \lfwhnf \unbox {\{\rho\}t_2} {\{\rho\}\sigma_2} : \tm$ \hfill by subst. def.}
    \prf{$\Gamma'; \{\rho\}\Psi \Vdash \{\rho\}\sigma_1 = \{\rho\}\sigma_2: \{\rho\}\Phi_1$ \hfill by IH}
    \prf{$\Gamma' \vdash \{\rho\}\Phi_1 \equiv \{\rho\}\Phi_2 : \ctx$ \hfill by Lemma \ref{lem:weakcomp}}
    \prf{$\Gamma' \vdash \{\rho\}t_1 \equiv \{\rho\}t_2 : \{\rho\}\cbox{\Phi_1 \vdash \tm}$ \hfill by Lemma \ref{lem:weakcomp}}
    \prf{$\Gamma' \vdash \{\rho\}t = \{\rho\}t': \cbox{\{\rho\}\Phi_1 \vdash \tm}$ \hfill by substitution def.}
    \prf{$\typeof (\Gamma'  \vdash \{\rho\}t_1) = \cbox{\{\rho\}\Phi_1 \vdash \tm}$ \hfill by Lemma \ref{lem:wktypinf}}
    \prf{$\typeof (\Gamma'  \vdash \{\rho\}t_2) = \cbox{\{\rho\}\Phi_2 \vdash \tm}$ \hfill by Lemma \ref{lem:wktypinf} }
    \prf{$\Gamma' ; \{\rho\}\Psi \Vdash \{\rho\}M = \{\rho\}N: \{\rho\}\tm$ \hfill by rule and substitution def.}
\\
}
\end{proof}

% \begin{lemma}[Transitivity for weak head reduction of LF terms]\label{lem:lftermtrans}\quad
%   If $M_1 \lfwhnf M_2$ and $M_2 \lfwhnf M_3$ then $M_1 \lfwhnf M_3$.
% \end{lemma}
% \begin{proof}
%   By case analysis of the definition of $M_1 \lfwhnf M_2$.\\
%   We omit the case that $M_1$ and $M_2$ are equal.\\
%   The cases that one of the premises contains \neut M is not possible as $M_2$ could not be reduce further.\\
%   For the cases where one of the premises has a form $P\lfwhnf M_2$, we apply the induction hypothesis on this premise to obtain the result.
% \end{proof}

\begin{lemma}[Backwards Closure for LF terms]\label{lem:lfbclosed}\quad
  \begin{enumerate}
  \item If $\Gamma; \Psi \Vdash  Q = N: A$ and $\Gamma ; \Psi \vdash M \lfwhnf Q : A$
then $\Gamma \sem M = N: A$
   \item If $\Gamma; \Psi \Vdash N = Q : A$ and $\Gamma ; \Psi \vdash M \lfwhnf Q : A$
then $\Gamma \sem N = M: A$
  \end{enumerate}
\end{lemma}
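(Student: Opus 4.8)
The plan is to prove both parts by a direct inspection of the last rule of the semantic derivation $\Gamma ; \Psi \Vdash Q = N : A$ (respectively $\Gamma ; \Psi \Vdash N = Q : A$), using the fact that the semantic equality of Fig.~\ref{fig:LFsem} only looks at the weak head normal forms of the two terms it compares. The decisive observation is that $Q$ is \emph{already} in weak head normal form: it is the target of the reduction in the hypothesis $\Gamma ; \Psi \vdash M \lfwhnf Q : A$, so $\norm Q$ holds, and by determinacy of weak head reduction (Lemma~\ref{lem:detwhnf}) the only whnf-reduct of $Q$ is $Q$ itself. Hence, whichever of the four clauses of Fig.~\ref{fig:LFsem} produced $\Gamma ; \Psi \Vdash Q = N : A$, its left premise $\Gamma ; \Psi \vdash Q \lfwhnf Q_0 : A$ — with $Q_0$ one of $\capp Q_1~Q_2$, $\clam Q'$, a variable $x$, or $\unbox{t_1}{\sigma_1}$ with $\neut t_1$ — forces $Q_0 = Q$.

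Given this, for part (1) I would reconstruct the same clause of Fig.~\ref{fig:LFsem} with $M$ in place of $Q$, replacing the left premise $\Gamma ; \Psi \vdash Q \lfwhnf Q_0 : A$ by $\Gamma ; \Psi \vdash M \lfwhnf Q_0 : A$, which is exactly the second hypothesis (since $Q_0 = Q$), and keeping every remaining premise — the component semantic equalities in the $\capp$ and $\clam$ cases, and the type-inference side conditions, the definitional equality of the unboxed computations, and the semantic equality of the associated LF substitutions in the $\unbox$ case — verbatim from the derivation of $\Gamma ; \Psi \Vdash Q = N : A$. No appeal to an induction hypothesis is needed, as these sub-goals are discharged unchanged. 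For the typed-whnf side conditions that Definition~\ref{def:typedwhnf} bundles into $\Gamma ; \Psi \vdash M \lfwhnf Q_0 : A$, namely $\Gamma ; \Psi \vdash M : A$, $\Gamma ; \Psi \vdash Q_0 : A$, $\Gamma ; \Psi \vdash M \equiv Q_0 : A$ and $M \lfwhnf Q_0$, everything comes directly from the hypothesis $\Gamma ; \Psi \vdash M \lfwhnf Q : A$; where a reconstructed clause additionally requires $\Gamma ; \Psi \vdash N : A$ and $\Gamma ; \Psi \vdash M \equiv N : A$, the first together with $\Gamma ; \Psi \vdash Q \equiv N : A$ follows from Lemma~\ref{lm:semlfwf} applied to $\Gamma ; \Psi \Vdash Q = N : A$, and then $\Gamma ; \Psi \vdash M \equiv N : A$ follows by transitivity from $M \equiv Q$ and $Q \equiv N$. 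Part (2) is entirely symmetric: I do the same case split on $\Gamma ; \Psi \Vdash N = Q : A$, rewrite the \emph{right} premise $\Gamma ; \Psi \vdash Q \lfwhnf Q_0 : A$ to $\Gamma ; \Psi \vdash M \lfwhnf Q_0 : A$, and use symmetry of $\equiv$ where needed.

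I do not expect a genuine obstacle: the lemma merely records that prepending weak head reduction steps to a term does not affect its semantic equality class, which is immediate once one notes that $Q$ is already normal. The only point requiring care is keeping the typed-whnf invariants of Definition~\ref{def:typedwhnf} straight across the four clauses and correctly threading Lemma~\ref{lm:semlfwf} and transitivity of definitional equality; everything else is bookkeeping.
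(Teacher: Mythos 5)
Your proposal is correct and matches the paper's own argument: the paper likewise proceeds by case analysis on the semantic derivation, observes that $Q$ is already in whnf (so its only weak head reduct is itself), and reassembles the same clause of the semantic definition using $\Gamma;\Psi \vdash M \lfwhnf Q : A$ in place of the reflexive reduction of $Q$. Your extra bookkeeping on the typed-whnf invariants via Definition~\ref{def:typedwhnf} and Lemma~\ref{lm:semlfwf} is consistent with, if slightly more explicit than, what the paper records.
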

\begin{proof}
By case analysis on $\Gamma; \Psi \Vdash  Q = N: A$ and the fact that
$Q$ is in $\norm$. %, since $\Gamma ; \Psi \vdash M \lfwhnf Q$.
% we can hence
% replace the premise $\Gamma ; \Psi \Vdash Q \lfwhnf P : A$ (from
% $\Gamma; \Psi \Vdash  Q = N: A$) with $\Gamma ; \Psi \vdash M \lfwhnf
% P:A$.
\LONGVERSIONCHECKED{
$\quad$\\[1em]
\pcase{$\ianc
{
\begin{array}{lll}
\Gamma ; \Psi \vdash Q \lfwhnf \unbox {t_1} {\sigma_1} : \tm   & \typeof (\Gamma  \vdash t_1) = \cbox{\Phi_1 \vdash \tm} &
% \Gamma \Vdash \Phi_1 = \Phi_2 : \ctx\\
\Gamma \vdash \Phi_1 \equiv \Phi_2 : \ctx \\
\Gamma ; \Psi \vdash N \lfwhnf \unbox {t_2} {\sigma_2} : \tm & \typeof (\Gamma  \vdash t_2) = \cbox{\Phi_2 \vdash \tm} &
\Gamma  \vdash t_1 \equiv t_2 : \cbox{\Phi_1 \vdash \tm} \quad \Gamma ; \Psi \Vdash \sigma_1 = \sigma_2 : \Phi_1
% \Gamma \Vdash t_1 = t_2 : \cbox{\Phi_1 \vdash \tm}
\end{array}
}
{\Gamma ; \Psi \Vdash Q = N : \tm}{}$}
\prf{$\Gamma ; \Psi \vdash M \lfwhnf Q : A$ \hfill by assumption}
\prf{$\norm Q$ \hfill by invariant of $\lfwhnf$}
\prf{$Q = \unbox {t_1} {\sigma_1}$ \hfill since $\norm Q$}
\prf{$\Gamma ; \Psi \Vdash M = N : \tm$ \hfill using $\M \lfwhnf \unbox{t_1}{\sigma_1}$ and sem. def.}
\\[1em]
\pcase{$\ianc
{
      \begin{array}{ll}
\Gamma ; \Psi \vdash Q \lfwhnf \clam M' : \tm & \\
\Gamma ; \Psi \vdash N \lfwhnf \clam N' : \tm & \Gamma ; \Psi \Vdash M' = N': \Pi x{:}\tm.\tm
      \end{array}
}{\Gamma ; \Psi \Vdash Q = N : \tm}{}
$}
\prf{$\Gamma ; \Psi \vdash M \lfwhnf Q : A$ \hfill by assumption}
\prf{$\norm Q$ \hfill by invariant of $\lfwhnf$}
\prf{$Q = \clam M' $ \hfill by $\Gamma ; \Psi \vdash Q \lfwhnf \clam M' : \tm $ using $\norm Q$}
\prf{$\Gamma ; \Psi \Vdash M = N : \tm$ \hfill using $\Gamma ; \Psi \vdash M \lfwhnf \clam M' : \tm$}
}
\end{proof}

 \begin{lemma}[Semantic LF Equality is preserved under LF Substitution] \label{lem:semlfeqsub}\quad
   \begin{enumerate}
   \item If $\Gamma ; \Psi \sem \sigma = \sigma' : \Phi$
and $\Gamma ; \Phi \sem M = N : A$
then $\Gamma ; \Psi \sem \lfs\sigma\Phi M = \lfs{\sigma'}\Phi N : \lfs\sigma\Phi A$.
   \item If $\Gamma ; \Psi \sem \sigma = \sigma' : \Phi$
and $\Gamma ; \Phi \sem \sigma_1 = \sigma_2 : \Phi'$
then $\Gamma ; \Psi \sem \lfs\sigma\Phi \sigma_1 = \lfs{\sigma'}\Phi \sigma_2 : \Phi'$.
   \end{enumerate}
 \end{lemma}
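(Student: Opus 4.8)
The plan is to prove both statements simultaneously by induction on the semantic equality derivation $\Gamma ; \Phi \sem M = N : A$ (resp.\ $\Gamma ; \Phi \sem \sigma_1 = \sigma_2 : \Phi'$), since the two judgments are mutually defined and the LF-substitution structure in both is parallel. The key preliminary observation is that, by \textbf{Lemma~\ref{lm:lfwhnfsub}} (LF weak head reduction is stable under LF substitutions), whenever $\Gamma ; \Phi \vdash M \lfwhnf W : A$ and $\Gamma ; \Psi \vdash \sigma : \Phi$, the reduct $W$ of $M$ determines the weak head normal form of $\lfs\sigma\Phi M$ in a controlled way: if $W = \lambda x.M'$ then $\lfs\sigma\Phi M$ reduces to $\lambda x.\lfs{\sigma,x}{\Phi,x}M'$; if $W = \capp~M_1~M_2$ then $\lfs\sigma\Phi M$ reduces to $\capp~\lfs\sigma\Phi M_1~\lfs\sigma\Phi M_2$; if $W = \clam~M_1$ then it reduces to $\clam~\lfs\sigma\Phi M_1$; if $W = \unbox{t_1}{\sigma_1}$ with $\neut t_1$ then it reduces to $\unbox{t_1}{\lfs\sigma\Phi\sigma_1}$; and if $W = x$ then $\lfs\sigma\Phi M$ reduces to whatever $\sigma(x)$ reduces to. This lemma is precisely engineered to make the induction go through.

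For the main induction on statement~(1): in the $\clam$ case, from $\Gamma; \Psi \sem \sigma = \sigma' : \Phi$ and the premise $\Gamma ; \Phi, x{:}\tm \Vdash M'~x = N'~x : \tm$, I would extend the substitution via the weakening rule to get $\Gamma ; \Psi, x{:}\tm \sem \sigma, x = \sigma', x : \Phi, x{:}\tm$ (using that $x$ is semantically equal to itself, which follows from the neutral-variable case of the semantic definition together with semantic weakening, \textbf{Lemma~\ref{lem:semweak}}), then apply the IH to obtain $\Gamma ; \Psi, x{:}\tm \sem \lfs{\sigma,x}{\Phi,x}(M'~x) = \lfs{\sigma',x}{\Phi,x}(N'~x) : \tm$, push the substitution through the application, and repackage using the $\clam$ clause of the semantic definition together with \textbf{Lemma~\ref{lm:lfwhnfsub}} to certify the reduction. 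The $\capp$ and LF-variable ($x$) cases are similar but simpler: for the variable case I directly unfold $\lfs\sigma\Phi x = \sigma(x)$ and $\lfs{\sigma'}\Phi x = \sigma'(x)$ and invoke the already-established fact that $\Gamma ; \Psi \sem \sigma(x) = \sigma'(x) : \tm$, which is extracted from $\Gamma ; \Psi \sem \sigma = \sigma' : \Phi$ by a sub-induction on that derivation (analogous to functionality of LF typing, \textbf{Lemma~\ref{lm:func-lftyping}}). The $\unbox{t}{\sigma_1}$ case uses the IH on statement~(2) to get $\Gamma ; \Psi \sem \lfs\sigma\Phi\sigma_1 = \lfs{\sigma'}\Phi\sigma_2 : \Phi_1$, keeps $t_1 \equiv t_2$ unchanged (they contain no free LF variables), reuses the same inferred type via $\typeof$, and certifies the reduction again by \textbf{Lemma~\ref{lm:lfwhnfsub}}. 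Well-formedness side conditions and definitional-equality bookkeeping come from \textbf{Lemma~\ref{lm:semlfwf}} together with the LF substitution lemma (\textbf{Lemma~\ref{lm:lfsubst}}).

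For statement~(2) the induction is on $\Gamma ; \Phi \sem \sigma_1 = \sigma_2 : \Phi'$: the empty case and the weakening case $\wk\psi$ are handled by noting that applying $\sigma$ truncates/composes appropriately (again via \textbf{Lemma~\ref{lm:lfwhnfsub}} part~6, and the $\trunc$ computation); the extension case $(\sigma_1', M)$, $(\sigma_2', N)$ uses the IH on the smaller substitution together with the IH on statement~(1) for the term component $M, N$, plus the fact that $\lfs{\lfs\sigma\Phi{\sigma_1'}}{\Phi'}A = \lfs\sigma\Phi(\lfs{\sigma_1'}{\Phi'}A)$ by composition of LF substitutions, to re-assemble the semantic equality at type $\Phi', x{:}A$.

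The main obstacle I anticipate is the LF-variable case of statement~(1), specifically establishing that from $\Gamma ; \Psi \sem \sigma = \sigma' : \Phi$ one can read off $\Gamma ; \Psi \sem \sigma(x) = \sigma'(x) : A$ (at the correct substituted type) for every $x \in \Phi$ --- this requires an auxiliary nested induction on the semantic-substitution derivation, carefully tracking how $\pos x$ interacts with the various weak head normal forms of $\sigma$ (weakenings, extensions, and closures), mirroring the functionality-of-LF-typing argument but at the semantic level. The closure case of that auxiliary argument, in particular, forces us to again invoke \textbf{Lemma~\ref{lm:lfwhnfsub}} and the $\trunc$ machinery, and getting the inferred types to line up (via \textbf{Lemma~\ref{lm:typeof}} / \textbf{Lemma~\ref{lem:wktypinf}}) is the delicate bookkeeping point.
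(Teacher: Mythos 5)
Your proposal matches the paper's proof: a mutual induction on the semantic equality derivations, driven by Lemma~\ref{lm:lfwhnfsub} to transport weak head reductions under LF substitution, with a nested lookup argument extracting $\Gamma;\Psi \Vdash \sigma(x) = \sigma'(x)$ in the variable case and backwards closure (Lemma~\ref{lem:lfbclosed}) to reassemble. The one quibble is that the weakening needed in the $\clam$ case is weakening of the LF context $\Psi$ (to justify $\Gamma;\Psi,x{:}\tm \Vdash \sigma,x = \sigma',x : \Phi,x{:}\tm$), not of the computation context $\Gamma$, so Lemma~\ref{lem:semweak} is the wrong citation there --- but the paper itself dismisses that case with ``other cases are similar,'' so this is a shared, minor bookkeeping point rather than a gap in your argument.
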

 \begin{proof}
Proof by mutual induction on $\Gamma ; \Phi \sem M = N : A$ and $\Gamma ; \Phi \sem \sigma = \sigma' : \Phi$ using the fact that weak head reduction is preserved under substitution (Lemma \ref{lm:lfwhnfsub}).
% Proof by induction on $\Gamma ; \Phi \sem \sigma = \sigma' : \Phi$ using the fact that weak head reduction is preserved under substitution (Lemma \ref{lm:lfwhnfsub}).
\LONGVERSIONCHECKED{
%  \highlight{Proof to check and complete other cases}
\\[0.5em]
 (1) \fbox{If $\Gamma ; \Psi \sem \sigma = \sigma' : \Phi$
and $\Gamma ; \Phi \sem M = N : A$
then $\Gamma ; \Psi \sem \lfs\sigma\Phi M = \lfs{\sigma'}\Phi N :\lfs\sigma\Phi A$.}
\\[0.5em]
\pcase{
$\ianc
{
\begin{array}{lll}
\Gamma ; \Phi \vdash M \lfwhnf \unbox {t_1} {\sigma_1} : \tm   & \typeof (\Gamma  \vdash t_1) = \cbox{\Phi_1 \vdash \tm} &
\Gamma \vdash \Phi_1 = \Phi_2 : \ctx\\
\Gamma ; \Phi \vdash N \lfwhnf \unbox {t_2} {\sigma_2} : \tm & \typeof (\Gamma  \vdash t_2) = \cbox{\Phi_2 \vdash \tm} &
\Gamma ; \Phi \Vdash \sigma_1 = \sigma_2 : \Phi_1 \quad
\Gamma \vdash t_1 \equiv t_2 : \cbox{\Phi_1 \vdash \tm}
\end{array}
}
{\Gamma ; \Phi \Vdash M = N : \tm}{}$
}
\prf{$\Gamma ; \Psi \vdash \sigma : \Phi$ and $\Gamma ; \Psi \vdash \sigma' : \Phi$ \hfill by well-formedness of semantic equ. (Lemma \ref{lm:semlfwf})}
\prf{$\Gamma ; \Psi \vdash \lfs\sigma\Phi M \lfwhnf \lfs\sigma\Phi (\unbox{t_1}{\sigma_1}) : \tm$ \hfill by Lemma \ref{lm:lfwhnfsub} }
\prf{$\Gamma ; \Psi \vdash \lfs{\sigma'}{\Phi} N \lfwhnf \lfs{\sigma'}{\Phi}(\unbox{t_2}{\sigma_2}) : \tm$ \hfill by Lemma \ref{lm:lfwhnfsub}}
\prf{$\Gamma ; \Phi \Vdash \lfs\sigma\Phi\sigma_1 = \lfs{\sigma'}\Phi \sigma_2 : \Phi_1$ \hfill by IH}
\prf{$\Gamma ; \Phi \Vdash \lfs{\sigma}\Phi M = \lfs{\sigma'}\Phi N : \tm$ \hfill by well-typed }
\\
\pcase{
$\ianc
{ \Gamma ; \Phi \vdash M \lfwhnf x : \tm  \quad
  \Gamma ; \Phi \vdash N \lfwhnf x : \tm }
 {\Gamma ; \Phi \Vdash M = N : \tm}{}$
}
\prf{$\Gamma ; \Psi \Vdash \sigma(x) = \sigma'(x) : \tm$ \hfill by $\Gamma ; \Phi \Vdash \sigma = \sigma' : \Psi$}
\prf{$\Gamma ; \Psi \vdash \sigma(x) \lfwhnf M' : \tm$ and $\Gamma ; \Phi \vdash \sigma'(x) \lfwhnf N' : \tm$
 \hfill by $\Gamma ; \Psi \Vdash \sigma(x) = \sigma'(x) : \tm$
% \\[-0.75em]
}
\prf{$\Gamma ; \Psi \Vdash M' = N' : \tm$ \hfill since both $\norm M'$ and $\norm N'$}
% \\[-0.75em]
\prf{$\Gamma ; \Psi \vdash \lfs\sigma\Phi M \lfwhnf M' : \tm$    \hfill by Lemma \ref{lm:lfwhnfsub}
}
\prf{$\Gamma ; \Psi \vdash \lfs\sigma\Phi N \lfwhnf N' : \tm$    \hfill by Lemma \ref{lm:lfwhnfsub}}
\prf{$\Gamma ; \Phi \Vdash \lfs\sigma\Phi M = \lfs{\sigma'}\Phi N : \tm$ \hfill Backwards Closure (Lemma \ref{lem:lfbclosed})}
\\
% Other cases [To Check]
Other cases are similar.
}
\LONGVERSIONCHECKED{
\\[1em]
(2) \fbox{If $\Gamma ; \Psi \sem \sigma = \sigma' : \Phi$
and $\Gamma ; \Phi \sem \sigma_1 = \sigma_2 : \Phi'$
then $\Gamma ; \Psi \sem \lfs\sigma\Phi \sigma_1 = \lfs{\sigma'}\Phi \sigma_2 : \Phi'$.}\\[1em]
Proof by induction on $\Gamma ; \Phi \sem \sigma = \sigma' : \Phi$ using the fact that weak head reduction is preserved under substitution (Lemma \ref{lm:lfwhnfsub}).
\\[1em]
\SUBSTCLO{
 \pcase{$\ianc {
       \begin{array}{lll}
 \Gamma ; \Phi \vdash \sigma_1 \lfwhnf (\sclo \phi {\unbox{t_1}{\sigma'_1}})  : \phi
    & \typeof (\Gamma \vdash t_1) = \cbox{\Phi_1 \vdash \Phi'_1}  & \Gamma \vdash \Phi'_1 \equiv (\phi, \wvec{x{:}A}) : \ctx
 \\
 \Gamma ; \Phi \vdash \sigma_2 \lfwhnf (\sclo \phi {\unbox{t_2}{\sigma'_2}}) : \phi
     & \typeof (\Gamma \vdash t_2) = \cbox{\Phi_2 \vdash \Phi'_2} & \Gamma \vdash \Phi'_2 \equiv (\phi, \wvec{x{:}A}) : \ctx
 \\
 \Gamma \vdash t_1 \equiv t_2 : \cbox{\Phi_1 \vdash \Phi'_1}
     & \Gamma ; \Phi \Vdash \sigma'_1 = \sigma'_2 : \Phi_1
     & \Gamma \vdash \Phi_1 \equiv \Phi_2 : \ctx
 \end{array}
 }
 {\Gamma ; \Phi \Vdash \sigma_1 = \sigma_2 : \phi}{}$}
 \prf{$\Gamma ; \Psi \vdash \lfs{\sigma}{\Phi}(\sigma_1) \lfwhnf \lfs{\sigma}{\Phi} (\sclo \phi {\unbox{t_1}{\sigma'_1}})  : \phi$  \hfill by Lemma \ref{lm:lfwhnfsub}}
 \prf{$\Gamma ; \Psi \vdash \lfs{\sigma'}{\Phi}(\sigma_2) \lfwhnf \lfs{\sigma'}{\Phi} (\sclo \phi {\unbox{t_2}{\sigma'_2}})  : \phi$   \hfill by Lemma \ref{lm:lfwhnfsub}}
 \prf{$\lfs{\sigma}{\Phi} (\sclo \phi {\unbox{t_1}{\sigma'_1}}) = \sclo \phi {\unbox{t_1}{\lfs{\sigma}{\phi}{\sigma'_1}}}$ \hfill by LF subst. def.}
 \prf{$\lfs{\sigma'}{\Phi} (\sclo \phi {\unbox{t_2}{\sigma'_2}}) = \sclo \phi {\unbox{t_2}{\lfs{\sigma'}{\phi}{\sigma'_2}}}$ \hfill by LF subst. def.}
 \prf{$\Gamma ; \Psi \Vdash \lfs{\sigma}{\Phi}\sigma'_1 = \lfs{\sigma'}{\Phi}\sigma'_2: \Phi_1$ \hfill by IH}
 \prf{$\Gamma ; \Psi \Vdash \lfs{\sigma}{\Phi}\sigma_1 = \lfs{\sigma'}{\Phi}\sigma_2:\phi$ \hfill by sem. equ. def.}
 \\[1em]
}
 \pcase{$
 \ianc {
 \Gamma ; \Phi \vdash \sigma_1 \lfwhnf \cdot : \cdot  \quad
 \Gamma ; \Phi \vdash \sigma_2 \lfwhnf \cdot : \cdot }
 {\Gamma ; \Phi \Vdash \sigma_1 = \sigma_2 : \cdot}{}
 $}
\prf{$\Gamma ; \Psi \vdash \lfs{\sigma}{\Phi} \sigma_1 \lfwhnf \lfs{\sigma}{\Phi} \cdot : \cdot$ \hfill by Lemma \ref{lm:lfwhnfsub}}
\prf{$\Gamma ; \Psi \vdash \lfs{\sigma}{\Phi} \sigma_2 \lfwhnf \lfs{\sigma}{\Phi} \cdot : \cdot$ \hfill by Lemma \ref{lm:lfwhnfsub}}
\prf{$\lfs{\sigma}\Phi \cdot = \cdot$ \hfill by LF subst. def.}
\prf{$\Gamma \Psi \Vdash \lfs{\sigma}{\Phi}\sigma_1 = \lfs{\sigma}{\Phi} \sigma_2 : \cdot$ \hfill by sem. eq. def.}
 \\[1em]
 \pcase{$\ianc
 {
 \Gamma ; \phi, \wvec{x{:}A} \vdash \sigma_1 \lfwhnf \wk{\phi} : \phi \qquad
 \Gamma ; \phi, \wvec{x{:}A} \vdash \sigma_2 \lfwhnf \wk{\phi} : \phi
 }
 {\Gamma ; \phi, \wvec{x{:}A} \Vdash \sigma_1 = \sigma_2 : \phi}{}
 $}
\prf{$\Gamma ; \Psi \vdash [\sigma / \phi, \vec x] \sigma_1 \lfwhnf [\sigma / \phi, \vec x] \wk{\phi} : \phi$ \hfill by Lemma \ref{lm:lfwhnfsub}}
\prf{$[\sigma / \phi, \vec x] \wk{\phi} = \trunc_{\phi} (\sigma / \phi, \vec x) = \sigma'_1$ where $\Gamma ; \Psi \vdash \sigma'_1 : \phi$ \hfill by LF subst. def.}
\prf{$\Gamma ; \Psi \vdash [\sigma' / \phi, \vec x] \sigma_2 \lfwhnf [\sigma' / \phi, \vec x] \wk{\phi} : \phi$ \hfill by Lemma \ref{lm:lfwhnfsub}}
\prf{$[\sigma' / \phi, \vec x] \wk{\phi} = \trunc_{\phi} (\sigma' / \phi, \vec x) = \sigma'_2$ where $\Gamma ; \Psi \vdash \sigma'_2 : \phi$ \hfill by LF subst. def.}
\prf{$\Gamma ; \Psi \Vdash \sigma'_1 = \sigma'_2 : \phi$ \hfill since $\Gamma ; \Psi \Vdash \sigma = \sigma' : \phi, \vec x$ }
 \\[1em]
 \pcase{$
 \ianc {\begin{array}{ll}
 \Gamma ; \Phi \vdash \sigma_1 \lfwhnf \sigma'_1, M : \Phi', x{:}A  & \\
 \Gamma ; \Phi \vdash \sigma_2 \lfwhnf \sigma'_2, N : \Phi', x{:}A &
 \Gamma ; \Phi \Vdash \sigma'_1 = \sigma'_2 : \Phi' \qquad \Gamma ; \Psi  \Vdash M = N : \lfs{\sigma'_1}{\Phi'} A
 \end{array}
 }
 {\Gamma ; \Phi \Vdash \sigma_1 = \sigma_2 : \Phi', x{:}A}{}
 $
 }
\prf{$\Gamma ; \Psi \vdash \lfs{\sigma}{\Phi}\sigma_1 \lfwhnf \lfs{\sigma}\Phi {(\sigma'_1,M)} :  \Phi', x{:}A$ \hfill by Lemma \ref{lm:lfwhnfsub}}
\prf{$\Gamma ; \Psi \vdash \lfs{\sigma'}{\Phi}\sigma_2 \lfwhnf \lfs{\sigma'}\Phi {(\sigma'_2,N)} :  \Phi', x{:}A$ \hfill by Lemma \ref{lm:lfwhnfsub}}
\prf{$\lfs{\sigma'}\Phi {(\sigma'_2,N)} = \lfs{\sigma'}\Phi \sigma'_2,~\lfs{\sigma'}\Phi N$ \hfill by LF subst. def.}
\prf{$\lfs{\sigma}\Phi {(\sigma'_1,M)} = \lfs{\sigma}\Phi \sigma'_1,~\lfs{\sigma}\Phi M$ \hfill by LF subst. def.}
\prf{$\Gamma ; \Psi \Vdash \lfs{\sigma}\Phi \sigma'_1 = \lfs{\sigma'}\Phi \sigma'_2 :  \Phi'$ \hfill by IH}
\prf{$\Gamma ; \Psi \Vdash \lfs{\sigma}\Phi M = \lfs{\sigma'}\Phi N : \lfs{\sigma}{\Phi}(\lfs{\sigma'_1}{\Phi'}A)$ \hfill by IH}
\prf{$\Gamma ; \Psi \lfs\sigma\Phi \sigma_1 = \lfs{\sigma'}\Phi \sigma_2 : \Phi', x{:}A$ \hfill by sem. equ. def.}
\\[1em]
}
% \highlight{Complete other cases}
 \end{proof}

 \begin{lemma}[Semantic Weakening Substitution Exist]\label{lm:semlfwk}\quad \\
If $\Gamma ; \Psi, \wvec{x{:}A} \vdash \wk{\hatctx\Psi} : \Psi$
then $\Gamma ; \Psi, \wvec{x{:}A} \Vdash \wk{\hatctx\Psi} = \wk{\hatctx\Psi} : \Psi$.
 \end{lemma}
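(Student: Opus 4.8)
The plan is to prove the statement by structural induction on the LF context $\Psi$ that serves as the codomain, following exactly the three shapes of $\Psi$ that the weak head reduction of $\wk{\hatctx\Psi}$ (Fig.~\ref{fig:lfwhnfred}) and the definition of semantic equality for LF substitutions (Fig.~\ref{fig:LFsemctx}) distinguish. Throughout, the well-typedness and definitional-equality side conditions packaged into Definition~\ref{def:typedwhnf} are discharged routinely: $\wk{\hatctx\Psi}$ is well typed at $\Psi$ by hypothesis, $\Gamma;\Psi,\wvec{x{:}A}\vdash \wk{\hatctx\Psi}\equiv\wk{\hatctx\Psi}:\Psi$ by reflexivity of $\equiv$, and well-formedness of all contexts that arise follows from the hypothesis by Lemma~\ref{lm:lfctxwf}.

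\textbf{Base cases.} If $\Psi = \cdot$, then $\wk{\hatctx\Psi} = \wk\cdot$ and $\wk\cdot \lfwhnf \cdot$, so $\Gamma;\wvec{x{:}A}\vdash \wk\cdot \lfwhnf \cdot : \cdot$; the rule for the empty codomain in Fig.~\ref{fig:LFsemctx} gives $\Gamma;\wvec{x{:}A}\Vdash \wk\cdot = \wk\cdot : \cdot$. If $\Psi = \psi$ is a context variable, then $\hatctx\psi = \psi$ and $\norm{\wk\psi}$, hence $\wk\psi \lfwhnf \wk\psi$ and $\Gamma;\psi,\wvec{x{:}A}\vdash \wk\psi \lfwhnf \wk\psi : \psi$; the rule for a context-variable codomain applies directly.

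\textbf{Step case.} Let $\Psi = \Psi', x_0{:}A_0$. The reduction $\wk{(\hatctx{\Psi'},x_0)}\lfwhnf \wk{\hatctx{\Psi'}}, x_0$ yields $\Gamma;\Psi',x_0{:}A_0,\wvec{x{:}A}\vdash \wk{\hatctx{\Psi'},x_0}\lfwhnf \wk{\hatctx{\Psi'}},x_0 : \Psi',x_0{:}A_0$, so by the rule for an extended codomain it suffices to establish: (i) $\Gamma;\Psi',x_0{:}A_0,\wvec{x{:}A}\Vdash \wk{\hatctx{\Psi'}} = \wk{\hatctx{\Psi'}} : \Psi'$, and (ii) $\Gamma;\Psi',x_0{:}A_0,\wvec{x{:}A}\Vdash x_0 = x_0 : \lfs{\wk{\hatctx{\Psi'}}}{\Psi'}A_0$. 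For (i) I apply the induction hypothesis to $\Psi'$ with the extension $x_0{:}A_0,\wvec{x{:}A}$; the required typing $\Gamma;\Psi',x_0{:}A_0,\wvec{x{:}A}\vdash \wk{\hatctx{\Psi'}} : \Psi'$ follows from the weakening typing rule using $\Gamma\vdash \Psi',x_0{:}A_0,\wvec{x{:}A}:\ctx$ (obtained from the hypothesis by inversion / Lemma~\ref{lm:lfctxwf}). For (ii) I use that a weakening renaming acts as the identity on a type, i.e.\ $\lfs{\wk{\hatctx{\Psi'}}}{\Psi'}A_0 = A_0$ (as already exploited in Lemma~\ref{lm:lfctxconv}); since in the restricted setting the declared type $A_0$ is $\tm$ and $x_0$ is neutral with $x_0 \lfwhnf x_0$, the variable rule of Fig.~\ref{fig:LFsem} gives $\Gamma;\Psi',x_0{:}A_0,\wvec{x{:}A}\Vdash x_0 = x_0 : \tm$. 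Combining (i) and (ii) through the extended-codomain rule closes the case.

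There is no genuine obstacle here; the only point requiring a little care is keeping the invariants of Definition~\ref{def:typedwhnf} straight in each case — in particular that $\lfs{\wk{\hatctx{\Psi'}}}{\Psi'}A_0$ equals $\tm$, so that clause (ii) really is an instance of semantic LF equality — but this is immediate from the LF substitution calculus (Fig.~\ref{fig:lfsubst}) and the elementary LF lemmas of Section~\ref{sec:proplf}.
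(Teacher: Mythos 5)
Your proof is correct and follows essentially the same route as the paper's: induction on the structure of $\Psi$, with the empty, context-variable, and extension cases handled via the corresponding weak head reductions of $\wk{\hatctx\Psi}$, the induction hypothesis for the tail, and the variable rule of semantic LF equality (using $\lfs{\wk{\hatctx{\Psi'}}}{\Psi'}A_0 = A_0$) for the head. No gaps.
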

 \begin{proof}
By induction on the LF context $\Psi$.
\LONGVERSIONCHECKED{
\pcase{$\Psi = \cdot$.}
\prf{$\Gamma ; \cdot, \wvec{x{:}A} \vdash \wk\cdot : \cdot$ \hfill by  assumption}
\prf{$\Gamma ; \cdot, \wvec{x{:}A} \vdash \wk\cdot \whnf \cdot : \cdot$ \hfill by $\whnf$ rule and typing}
\prf{$\Gamma ; \cdot, \wvec{x{:}A} \Vdash \wk\cdot = \wk\cdot : \cdot$ \hfill by semantic. def.}

\pcase{$\Psi = \Psi', y{:}B$}
\prf{$\Gamma; \Psi', y{:}B, \wvec{x{:}A} \vdash\wk{\hatctx\Psi',y} : \Psi', y{:}B$ \hfill by assumption}
\prf{$\Gamma ; \Psi', y{:}B, \wvec{x{:}A} \vdash \wk{\hatctx\Psi'} : \Psi'$ \hfill by typing }
\prf{$\Gamma ; \Psi', y{:}B, \wvec{x{:}A} \Vdash \wk{\hatctx\Psi'} = \wk{\hatctx\Psi'} : \Psi'$ \hfill by IH}
\prf{$\Gamma ; \Psi', y{:}B, \wvec{x{:}A} \Vdash y = y : B$ \hfill by semantic eq. for LF terms, the fact that $\norm x$, and $B = \lfs{\wk{\hatctx\Psi'}}{\Psi'} B$}
\prf{$\Gamma ; \Psi', y{:}B, \wvec{x{:}A} \vdash \wk{\hatctx\Psi',y} \whnf \wk{\hatctx\Psi'}, y : \Psi', y{:}B$ \hfill by $\whnf$ and typing rules}
\prf{$\Gamma ; \Psi', y{:}B, \wvec{x{:}A} \vdash \wk{\hatctx\Psi',y} = \wk{\hatctx\Psi',y} : \Psi', y{:}B$ \hfill by sem. eq. for LF substitutions}

\pcase{$\Psi = \psi$}
\prf{$\Gamma ; \psi, \wvec{x{:}A} \vdash \wk\psi : \psi$ \hfill by assumption}
\prf{$\Gamma ; \psi, \wvec{x{:}A} \vdash \wk\psi : \psi$ \hfill by $\whnf$ and typing and the fact that $\norm \wk\psi$}
\prf{$\Gamma ; \psi, \wvec{x{:}A} \vdash \wk\psi = \wk\psi: \psi$  \hfill by sem. eq. for LF subst.}
}
 \end{proof}

\begin{lemma}[Semantic LF Context Conversion]\label{lm:semlfctxconv}
% Assume $\Gamma \vdash \Psi, x:A_1 : \ctx$ and $\Gamma \vdash A_2 : \lftype$.
\quad
\begin{enumerate}
\item If $\Gamma ; \Psi, x{:}A_1 \Vdash M = N : B$ and $\Gamma ; \Psi \vdash A_1 \equiv A_2 : \lftype$
then $\Gamma ; \Psi, x{:}A_2 \Vdash M = N : B$

\item If $\Gamma ; \Psi, x{:}A_1 \Vdash \sigma = \sigma' : \Phi$ and $\Gamma ; \Psi \vdash A_1 \equiv A_2 : \lftype$
then $\Gamma ; \Psi, x{:}A_2 \Vdash \sigma = \sigma' : \Phi$.

\end{enumerate}
\end{lemma}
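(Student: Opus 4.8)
The plan is to prove both statements simultaneously by mutual induction on the given semantic equality derivations, following the structure of the definitions in Fig.~\ref{fig:LFsem} and Fig.~\ref{fig:LFsemctx}. Three observations drive the argument: (i) the raw weak head reduction relations $M \lfwhnf N$ and $\sigma \lfwhnf \sigma'$ are purely syntactic and depend neither on the LF context nor on type annotations, so they are untouched by replacing $A_1$ with $A_2$; (ii) the typing and definitional-equality side conditions bundled into a well-typed whnf (Def.~\ref{def:typedwhnf}) --- namely $\Gamma;\Psi,x{:}A_1 \vdash M : \tm$, $\Gamma;\Psi,x{:}A_1 \vdash M \equiv Q : \tm$, and the analogues for substitutions --- transfer from $\Psi,x{:}A_1$ to $\Psi,x{:}A_2$ by LF Context Conversion (Lemma~\ref{lm:lfctxconv}), using that $A_1\equiv A_2$ and the standard well-formedness of definitional equality to supply its side condition $\Gamma;\Psi\vdash A_2:\lftype$; and (iii) the type-inference function $\typeof(\Gamma \vdash t)$ appearing in the $\unbox$-cases depends only on $\Gamma$, hence is unaffected.

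Concretely, I would go case by case. In the $\capp$-case and the LF-variable case the recursion stays in the very same context $\Psi,x{:}A_1$, so the induction hypothesis applies directly and it only remains to reassemble the well-typed whnf via Lemma~\ref{lm:lfctxconv}. In the $\unbox$-case for terms, the premises $\typeof(\Gamma\vdash t_i)=\cbox{\Phi_i\vdash\tm}$, $\Gamma\vdash\Phi_1\equiv\Phi_2:\ctx$ and $\Gamma\vdash t_1\equiv t_2:\cbox{\Phi_1\vdash\tm}$ mention only $\Gamma$ and are carried over verbatim, the whnf steps by~(i), the typing/equality conditions by Lemma~\ref{lm:lfctxconv}, and $\Gamma;\Psi,x{:}A_1\Vdash\sigma_1=\sigma_2:\Phi_1$ becomes $\Gamma;\Psi,x{:}A_2\Vdash\sigma_1=\sigma_2:\Phi_1$ by the induction hypothesis of part~(2). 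The substitution cases (empty, $\wk\psi$, and cons $\sigma_1,M$) are handled the same way: the $\wk\psi$-case only requires that the context structure $\psi,\wvec{x{:}A}$ be preserved under converting its last declaration, which it is; the cons-case recurses on the prefix substitution at the unchanged domain $\Phi$ and on $M=N$ at $\lfs{\sigma_1}{\Phi}A$, both in $\Psi,x{:}A_1$, so the induction hypotheses of parts~(2) and~(1) apply.

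The one case that does not fit this pattern directly is the $\clam$-case: from $\Gamma;\Psi,x{:}A_1\vdash M\lfwhnf\clam M':\tm$, $\Gamma;\Psi,x{:}A_1\vdash N\lfwhnf\clam N':\tm$ and $\Gamma;(\Psi,x{:}A_1),x'{:}\tm\Vdash M'~x'=N'~x':\tm$ we must derive the corresponding statement over $\Psi,x{:}A_2$, which needs the induction hypothesis at the \emph{extended} context $(\Psi,x{:}A_1),x'{:}\tm$ --- i.e.\ converting a declaration that is no longer the last one. I expect this bookkeeping to be the main (and essentially only) obstacle, and I would resolve it by proving the slightly more general statement allowing a context suffix, namely that if $\Gamma;\Psi,x{:}A_1,\Psi'\Vdash M=N:B$ and $\Gamma;\Psi\vdash A_1\equiv A_2:\lftype$ then $\Gamma;\Psi,x{:}A_2,\Psi'\Vdash M=N:B$ (and likewise for substitutions); this is harmless since in the fixed signature $\Psi'$ consists only of $\tm$-declarations and hence does not mention $x$, so the companion generalization of Lemma~\ref{lm:lfctxconv} goes through by exactly the same argument. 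With that in hand, the $\clam$-case extends $\Psi'$ by $x'{:}\tm$, invokes the induction hypothesis, and reassembles the result. Throughout, symmetry, transitivity and type conversion for $\equiv$ are used to bring the definitional-equality components of Def.~\ref{def:typedwhnf} into the required shape, exactly as in the proof of Lemma~\ref{lm:semlfwf}.
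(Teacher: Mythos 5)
Your proof is workable, but it takes a genuinely different route from the paper's. The paper does not induct on the semantic equality derivation at all: it treats context conversion as an instance of semantic substitution. Concretely, it constructs two semantic renaming substitutions,
$\Gamma ; \Psi, x{:}A_2, y{:}A_1 \Vdash \wk{\hatctx\Psi}, y = \wk{\hatctx\Psi}, y : \Psi, x{:}A_1$ (sending $x$ to a fresh $y{:}A_1$, with the weakening prefix handled by Lemma~\ref{lm:semlfwk}) and
$\Gamma ; \Psi, x{:}A_2 \Vdash \wk{\hatctx\Psi}, x, x = \wk{\hatctx\Psi}, x, x : \Psi, x{:}A_2, y{:}A_1$ (collapsing $y$ back onto $x$, which is well-typed at $A_1$ by conversion since $A_2 \equiv A_1$), and then applies Lemma~\ref{lem:semlfeqsub} twice; the composite substitution acts as the identity on $M$, $N$ and $B$, so the conclusion lands in $\Psi, x{:}A_2$ with no case analysis. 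This buys exactly what your proposal has to pay for by hand: the binder cases ($\clam$, and $\lambda$ at $\Pi$-types) force you to strengthen the statement to converting a declaration in the \emph{middle} of the context and to generalize Lemma~\ref{lm:lfctxconv} accordingly, whereas the substitution-based argument sidesteps this because Lemma~\ref{lem:semlfeqsub} already carries the induction through binders. Your route is more elementary and self-contained, and your diagnosis of the $\clam$ obstacle and its repair via a context suffix is accurate, but it re-runs the induction that Lemma~\ref{lem:semlfeqsub} has already done and adds real bookkeeping (the suffix generalization, plus the observation that the suffix cannot mention $x$) that the paper avoids entirely.
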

\begin{proof} The idea is to use  $\Gamma ; \Psi \vdash A_1 \equiv A_2 : \lftype$  and build LF weakening substitutions
$\Gamma ; \Psi,  x{:}A_2, y{:}A_1 \Vdash \wk{\hatctx\Psi}, y = \wk{\hatctx\Psi}, y : \Psi, x{:}A_1$
and
$\Gamma ; \Psi,  x{:}A_2 \Vdash \wk{\hatctx\Psi}, x, x = \wk{\hatctx\Psi}, x, x : \Psi, x{:}A_2, y{:}A_1$.
Using semantic LF subst. (Lemma \ref{lem:semlfeqsub}), we can then
move $\Gamma ; \Psi, x{:}A_1 \Vdash M = N : B$ to the new LF context $\Psi, x{:}A_2$.
~
\LONGVERSIONCHECKED{\\[0.5em]
 (1): \fbox{ If $\Gamma ; \Psi, x{:}A_1 \Vdash M = N : B$ and $\Gamma ; \Psi \vdash A_1 \equiv A_2 : \lftype$
then $\Gamma ; \Psi, x{:}A_2 \Vdash M = N : B$}
\\[1em]
\prf{$\Gamma \vdash \Psi : \ctx$ \hfill by Well-Formedness of Sem. LF
  Equ. (Lemma \ref{lm:semlfwf}) \\
\mbox{$\quad$}\hfill and Well-formedness of LF context (Lemma \ref{lm:lfctxwf}) }
 \prf{$\Gamma \vdash \Psi, x{:}A_2 : \ctx$ \hfill by context well-formedness rules}
 \prf{$\Gamma ; \Psi \vdash A_2 \equiv A_1 : \lftype$ \hfill by symmetry}
 \prf{$\Gamma ; \Psi, x{:}A_2 \vdash A_2 \equiv A_1 : \lftype$ \hfill by LF weakening  }
 \prf{$\Gamma ; \Psi, x{:}A_2 \vdash x : A_2$ \hfill by typing rule using $\Gamma \vdash \Psi, x{:}A_2 : \ctx$}
 \prf{$\Gamma ; \Psi, x{:}A_2 \vdash x : A_1$ \hfill conversion using $\Gamma ; \Psi, x{:}A_2 \vdash A_2 \equiv A_1 : \lftype$}
 \prf{$\Gamma ; \Psi, x{:}A_2 \vdash x : \lfs{\wk{\hatctx\Psi}}{\Psi} A_1$ \hfill as $A_1 = \lfs{\wk{\hatctx\Psi}}{\Psi} A_1$ }
 \prf{$\Gamma ; \Psi, x{:}A_2 \vdash \wk{\hatctx{\Psi}}, x, x : \Psi, x{:}A_2, y{:}A_1$ \hfill by typing rules for LF substitution}
 \prf{$\Gamma ; \Psi, x{:}A_2, y{:}A_1 \vdash \wk{\hatctx\Psi}, y : \Psi, x{:}A_1$ \hfill by typing rule for LF substitution}
\prf{$\Gamma ; \Psi, x{:}A_2, y{:}A_1 \vdash \wk{\hatctx\Psi} : \Psi$ \hfill by typing}
\prf{$\Gamma ; \Psi, x{:}A_2, y{:}A_1 \Vdash \wk{\hatctx\Psi} = \wk{\hatctx\Psi} : \Psi$ \hfill by Lemma \ref{lm:semlfwk}}
\prf{$\Gamma ; \Psi,  x{:}A_2, y{:}A_1 \Vdash \wk{\hatctx\Psi}, y = \wk{\hatctx\Psi}, y : \Psi, x{:}A_1$
   \hfill by sem. equ. for LF subst. \\\mbox{\hspace{1cm}}\hfill using the fact that $\norm y$}
\prf{$\Gamma ; \Psi, x{:}A_2 \Vdash \wk{\hatctx\Psi} = \wk{\hatctx\Psi} : \Psi$ \hfill by Lemma \ref{lm:semlfwk}}
\prf{$\Gamma ; \Psi,  x{:}A_2 \Vdash \wk{\hatctx\Psi}, x, x = \wk{\hatctx\Psi}, x, x : \Psi, x{:}A_2, y{:}A_1$
   \hfill by sem. equ. for LF subst. \\ \mbox{\hspace{1cm}}\hfill using the fact that $\norm x$}
 \prf{$\Gamma ; \Psi, x{:}A_2 \Vdash \lfs{\wk{\hatctx{\Psi}}, x, x}  {\Psi, x, y}M' = \lfs{\wk{\hatctx{\Psi}}, x, x}{\Psi, x, y}N' :
   \lfs{\wk{\hatctx{\Psi}}, x, x}{\Psi, x, y} B$\\
\mbox{$\quad$}\hfill where $M' = \lfs{\wk{\hatctx\Psi}, y}{\Psi, x}M$
                    and $N' = \lfs{\wk{\hatctx\Psi}, y}{\Psi, x}N$  by semantic LF subst. (Lemma \ref{lem:semlfeqsub} twice)}
 \prf{$\lfs{\wk{\hatctx{\Psi}}, x, x}{\Psi, x, y}(\wk{\hatctx\Psi}, y) = \wk{\hatctx\Psi}, x$ \hfill by subst. def.}
 \prf{$\Gamma ; \Psi, x{:}A_2 \vdash \lfs{\wk{\hatctx\Psi}, x}{\Psi, x} M = \lfs{\wk{\hatctx\Psi}, x}{\Psi, x} N : \lfs{\wk{\hatctx\Psi}, x}{\Psi, x} B$ \hfill by previous lines}
 \prf{$\Gamma ; \Psi, x{:}A_2 \Vdash M = N : B$ \hfill using the fact that $\lfs{\wk{\hatctx\Psi}, x}{\Psi, x} M = M$, etc.}
\\[1em]
We prove (2): \fbox{If $\Gamma ; \Psi, x{:}A_1 \Vdash \sigma = \sigma' : \Phi$ and $\Gamma ; \Psi \vdash A_1 \equiv A_2 : \lftype$
  then $\Gamma ; \Psi, x{:}A_2 \Vdash \sigma = \sigma' : \Phi$.}
\\[1em]
\prf{$\Gamma ; \Psi,  x{:}A_2, y{:}A_1 \Vdash \wk{\hatctx\Psi}, y = \wk{\hatctx\Psi}, y : \Psi, x{:}A_1$
   \hfill constructed as for case (1)}
\prf{$\Gamma ; \Psi,  x{:}A_2 \Vdash \wk{\hatctx\Psi}, x, x = \wk{\hatctx\Psi}, x, x : \Psi, x{:}A_2, y{:}A_1$
   \hfill constructed as for case (1)}
 %%%%%% change from here
 \prf{$\Gamma ; \Psi, x{:}A_2 \Vdash \lfs{\wk{\hatctx{\Psi}}, x, x}  {\Psi, x, y}\sigma_1 = \lfs{\wk{\hatctx{\Psi}}, x, x}{\Psi, x, y}\sigma_2 : \Phi$ \\
\mbox{$\quad$}\hfill where $\sigma_1 = \lfs{\wk{\hatctx\Psi}, y}{\Psi, x} \sigma$
                    and $\sigma_2 = \lfs{\wk{\hatctx\Psi}, y}{\Psi, x}\sigma'$  by semantic LF subst. (Lemma \ref{lem:semlfeqsub} twice)}
 \prf{$\lfs{\wk{\hatctx{\Psi}}, x, x}{\Psi, x, y}(\wk{\hatctx\Psi}, y) = \wk{\hatctx\Psi}, x$ \hfill by subst. def.}
 \prf{$\Gamma ; \Psi, x{:}A_2 \vdash \lfs{\wk{\hatctx\Psi}, x}{\Psi, x} \sigma = \lfs{\wk{\hatctx\Psi}, x}{\Psi, x} \sigma' : \Phi$ \hfill by previous lines}
 \prf{$\Gamma ; \Psi, x{:}A_2 \Vdash \sigma = \sigma' : \Phi$ \hfill
   using the fact that $\lfs{\wk{\hatctx\Psi}, x}{\Psi, x} \sigma =
   \sigma$, etc.}
}
\end{proof}

Our semantic definitions are reflexive, symmetric, and transitive. Further they are stable under type conversions. Establishing these properties is tricky and intricate. We first establish these properties for LF and subsequently for computations. All proofs can be found in the long version.

\begin{lemma}[Symmetry, Transitivity, and Conversion of Semantic Equality for LF]\label{lem:semsymlf}
\quad\\
\LONGVERSION{A.~}For LF Terms:
\begin{enumerate}
\item \label{it:reflclf} (Reflexivity:) $\Gamma ; \Psi \Vdash M = M: A$.
\item \label{it:symclf} (Symmetry:)
      If $\Gamma ; \Psi \Vdash M = N : A$ then $\Gamma ; \Psi \Vdash N = M : A$.
\item \label{it:transclf} (Transitivity:)
    If $\Gamma ; \Psi \Vdash M_1 = M_2 : A$ and $\Gamma ; \Psi \Vdash M_2 = M_3 : A$
    then $\Gamma ; \Psi \Vdash M_1 = M_3 : A$.
\item \label{it:convclf} (Conversion:) If\/ $\Gamma ; \Psi \vdash A \equiv A' : \lftype$
     and $\Gamma ; \Psi \Vdash M = N : A$
     then $\Gamma ; \Psi \Vdash M = N : A'$.
\end{enumerate}

\LONGVERSION{B.~}For LF Substitutions:
\begin{enumerate}
\item \label{it:reflsub} (Reflexivity:) $\Gamma ; \Psi \Vdash \sigma = \sigma: \Phi$.
\item \label{it:symsub} (Symmetry:)
      If $\Gamma ; \Psi \Vdash \sigma = \sigma' : \Phi$ then $\Gamma ; \Psi \Vdash \sigma' = \sigma : \Phi$.
\item \label{it:transsub} (Transitivity:)
    If $\Gamma ; \Psi \Vdash \sigma_1 = \sigma_2 : \Phi$ and $\Gamma ; \Psi \Vdash \sigma_2 = \sigma_3 : \Phi$
    then $\Gamma ; \Psi \Vdash \sigma_1 = \sigma_3 : \Phi$.
\item \label{it:convsub} (Conversion:)
    If\/ $\Gamma \vdash \Phi \equiv \Phi' : \ctx$ and $\Gamma ; \Psi \Vdash \sigma = \sigma' : \Phi$
    then $\Gamma \Vdash \sigma = \sigma' : \Phi'$.
\end{enumerate}
\end{lemma}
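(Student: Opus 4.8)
The plan is to prove symmetry, transitivity and conversion by a single simultaneous induction on the derivations of $\Gamma ; \Psi \Vdash M = N : A$ and $\Gamma ; \Psi \Vdash \sigma = \sigma' : \Phi$, which are defined inductively by the rules of Fig.~\ref{fig:LFsem} and Fig.~\ref{fig:LFsemctx} and are mutually recursive (terms depend on substitutions through the $\unbox{t}{\sigma}$-case, substitutions on terms through the extension-case). Reflexivity (items~\ref{it:reflclf} and~\ref{it:reflsub}) then follows in one line: from $\Gamma ; \Psi \Vdash M = N : A$ we get $\Gamma ; \Psi \Vdash N = M : A$ by symmetry and hence $\Gamma ; \Psi \Vdash M = M : A$ by transitivity, and likewise for substitutions; so only symmetry, transitivity and conversion are discussed below. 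Throughout we freely invoke Lemma~\ref{lm:semlfwf} to recover the underlying typing and definitional-equality judgments, and Def.~\ref{def:typedwhnf} to re-assemble the well-typed weak-head-reduction data after rearranging $\equiv$.

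Symmetry is the easy direction: for each rule we swap the two sides, apply the induction hypothesis to the premises, and close up using symmetry of $\equiv$ for LF terms, LF substitutions, computations and LF contexts. The only case requiring attention is the $\unbox{t_1}{\sigma_1}$/$\unbox{t_2}{\sigma_2}$-case of Fig.~\ref{fig:LFsem}, where after swapping we must orient the inferred contextual types correctly: we use $\Phi_2 \equiv \Phi_1$ (symmetry of context equality), $\Gamma ; \Psi \Vdash \sigma_2 = \sigma_1 : \Phi_1$ (induction hypothesis, composed with conversion along $\Phi_1\equiv\Phi_2$ where the rule demands it), and $\Gamma\vdash t_2 \equiv t_1 : \cbox{\Phi_1 \vdash \tm}$ (symmetry of $\equiv$ on computations, possibly composed with conversion of $\equiv$ along $\cbox{\Phi_1\vdash\tm}\equiv\cbox{\Phi_2\vdash\tm}$). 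The analogous substitution case, and (in the extended presentation) the substitution-closure cases, are handled identically.

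Transitivity is the heart of the argument. Given $\Gamma ; \Psi \Vdash M_1 = M_2 : A$ and $\Gamma ; \Psi \Vdash M_2 = M_3 : A$, the first derivation yields a weak head normal form $M_2 \lfwhnf Q$ and the second $M_2 \lfwhnf Q'$; by determinacy of weak head reduction (Lemma~\ref{lem:detwhnf}) $Q = Q'$, and since the four rules of Fig.~\ref{fig:LFsem} impose disjoint head-shapes on the common reduct, both derivations are built by the \emph{same} rule, so their premises can be paired and the induction hypotheses applied (in the $\clam$-case in the extended context $\Psi, x{:}\tm$, in the $\capp$-case on both argument pairs, in the variable-case trivially). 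The delicate case is again $\unbox{}{}$: from the first derivation $\typeof(\Gamma\vdash t_2) = \cbox{\Phi_2\vdash\tm}$ with $\Phi_1\equiv\Phi_2$, and from the second $\typeof(\Gamma\vdash t_2) = \cbox{\Phi_2'\vdash\tm}$ with $\Phi_2'\equiv\Phi_3$; since $\typeof$ is a (partial) function $\Phi_2 = \Phi_2'$, whence $\Phi_1 \equiv \Phi_3$ by transitivity of context equality, $\Gamma\vdash t_1 \equiv t_3 : \cbox{\Phi_1\vdash\tm}$ by conversion of $\equiv$ on computations along $\Phi_1\equiv\Phi_2$ followed by transitivity, and $\Gamma;\Psi\Vdash\sigma_1 = \sigma_3 : \Phi_1$ by conversion of semantic substitution equality along $\Phi_2\equiv\Phi_1$ (a sibling statement of the same induction, applied to a sub-derivation) and then the transitivity induction hypothesis. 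Reassembling these gives the required instance of the $\unbox{}{}$-rule. I expect this case — realigning weak head normal forms via determinacy, and then discharging $\unbox{}{}$ using uniqueness of $\typeof$, transitivity and conversion of $\equiv$ on computations and LF contexts, and conversion of semantic substitution equality along equal domains — to be the main obstacle, and in particular one must check that the measure of the simultaneous induction is set up so that all these cross-calls land on structurally smaller derivations.

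Conversion splits into two parts. For LF terms, the type index $A$ in the semantic judgment is always $\tm$ here (in the extended presentation also the rigid type $\Pi x{:}\tm.\tm$), so $\Gamma;\Psi\vdash A\equiv A':\lftype$ forces $A'$ to have the same shape by Equality Inversion (Lemma~\ref{lm:lfeqinv}) and Injectivity of LF Pi-Types (Lemma~\ref{lm:lfpi-inj}); the semantic clauses do not otherwise consult $A$, so it only remains to re-derive the well-typed-whnf judgments of Def.~\ref{def:typedwhnf} at $A'$, which is immediate from the LF typing conversion rule. For LF substitutions we induct on the derivation of $\Gamma;\Psi\Vdash\sigma=\sigma':\Phi$: inversion on $\Gamma\vdash\Phi\equiv\Phi':\ctx$ shows $\Phi$ and $\Phi'$ share a head (empty, a context variable, or an extension $\Phi_0,x{:}A$ with $\Phi_0\equiv\Phi_0'$ and $\Gamma;\Phi_0\vdash A\equiv A':\lftype$); the empty and context-variable cases are immediate, and in the extension case we apply the induction hypothesis to $\sigma_1=\sigma_2:\Phi_0$, while for the term component we move $M=N$ from $\lfs{\sigma_1}{\Phi_0}A$ to $\lfs{\sigma_1}{\Phi_0}A'$ using that applying $\sigma_1$ preserves definitional equality (Lemma~\ref{lm:lfsubst}, or Functionality of LF Typing, Lemma~\ref{lm:func-lftyping}) and then term-level conversion (part one of this lemma, again a cross-call within the same induction). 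This completes the plan.
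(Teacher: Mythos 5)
Your overall strategy coincides with the paper's: reflexivity is derived from symmetry and transitivity; symmetry and conversion go by induction on the semantic derivation; transitivity uses determinacy of weak head reduction to force both derivations onto the same rule and then pairs up the premises; the $\unbox{}{}$ case is discharged via uniqueness of $\typeof$ together with symmetry/transitivity/conversion of $\equiv$ on computations and contexts; and conversion for terms reduces to injectivity of $\Pi$-types since the type index is otherwise rigid. The paper phrases the induction slightly differently (a lexicographic induction for transitivity, with a measure declaring $\sigma'$ smaller than $\sigma$ whenever $\sigma \lfwhnf \sigma'$, rather than your ``single simultaneous induction''), but you flag exactly the right concern about setting up that measure, and the cross-calls you describe are the same ones the paper makes.

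There is, however, one concrete step you omit, and it is the one the paper explicitly singles out as the crux: the substitution-extension case $\Gamma ; \Psi \Vdash \sigma = \sigma' : \Phi, x{:}A$ where $\sigma \lfwhnf \sigma_1, M$ and $\sigma' \lfwhnf \sigma_2, N$. For both symmetry and transitivity the two term-level premises live at \emph{different} types: the first derivation gives $\Gamma ; \Psi \Vdash M = N : \lfs{\sigma_1}{\Phi}A$ while the symmetric (resp.\ second) derivation needs the equality at $\lfs{\sigma_2}{\Phi}A$. You cannot simply ``pair the premises and apply the induction hypotheses'' here: one must first use Well-Formedness of Semantic LF Typing (Lemma~\ref{lm:semlfwf}) to extract $\Gamma ; \Psi \vdash \sigma_1 \equiv \sigma_2 : \Phi$, then Functionality of LF Typing (Lemma~\ref{lm:func-lftyping}) to obtain $\Gamma ; \Psi \vdash \lfs{\sigma_1}{\Phi}A \equiv \lfs{\sigma_2}{\Phi}A : \lftype$, and only then invoke term-level conversion (a cross-call within the same induction) to realign the types before applying symmetry or transitivity of the term equality. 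You perform exactly this realignment manoeuvre for the $\unbox{}{}$ case (via context equality) and you cite Lemma~\ref{lm:func-lftyping} in your treatment of conversion, so the repair is routine for you — but as written your transitivity and symmetry arguments for substitutions are incomplete at precisely the point the paper identifies as where functionality of LF typing is ``crucially'' needed.
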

\begin{proof}
Reflexivity follows directly from symmetry and transitivity. For LF terms and substitutions, we prove symmetry and conversion by induction on the derivation $\Gamma ; \Psi \Vdash M = N : A$ and $\Gamma; \Psi \Vdash \sigma = \sigma' : \Phi$ respectively. For transitivity, we use lexicographic induction.

We reason by induction on semantic equivalence relation where we consider any $\sigma'$ smaller than $\sigma$ if $\sigma \lfwhnf \sigma'$; the proofs is mostly straightforward exploiting symmetry of decl. equivalence ($\equiv$), determinacy of weak head reductions, and crucially relies on  well-formedness of semantic equality (Lemma \ref{lm:semlfwf}) and  functionality of LF typing (Lemma \ref{lm:func-lftyping}) for the case where $\sigma_i \lfwhnf \sigma'_i, M_i$.
\LONGVERSIONCHECKED{\\[0.5em]
\fbox{
Transitivity: If $\Gamma ; \Psi \Vdash \sigma_1 = \sigma_2 : \Phi$ and $\Gamma ; \Psi \Vdash \sigma_2 = \sigma_3 : \Phi$
    then $\Gamma ; \Psi \Vdash \sigma_1 = \sigma_3 : \Phi$.   }
\\[1em]
By lexicographic induction on the first wo derivations;
\\
\pcase{$\ianc{\Gamma ; \Psi \vdash \sigma_1 \lfwhnf \cdot : \cdot  \quad
              \Gamma ; \Psi \vdash \sigma_2 \lfwhnf \cdot : \cdot }
             {\Gamma ; \Psi \Vdash \sigma_1 = \sigma_2 : \cdot}{}$
}
\\
\prf{$\Gamma ; \Psi \Vdash \sigma_2 = \sigma_3 : \cdot$ \hfill by assumption}
\prf{$\Gamma ; \Psi \vdash \sigma_2 \lfwhnf \cdot : \cdot $ \hfill by inversion and determinacy (Lemma \ref{lem:detwhnf})}
\prf{$\Gamma ; \Psi \vdash \sigma_3 \lfwhnf \cdot : \cdot$ \hfill by inversion}
\prf{$\Gamma ; \Psi \Vdash \sigma_1 = \sigma_3 : \cdot$
\hfill using $\Gamma ; \Psi \vdash \sigma_1 \lfwhnf \cdot : \cdot$
        and $\Gamma ; \Psi \vdash \sigma_3 \lfwhnf \cdot : \cdot$}

\pcase{$\ianc{\Gamma ; \Psi, \wvec{x{:}A} \vdash \sigma_1 \lfwhnf \wk{\hatctx\Psi} : \Psi \quad
              \Gamma ; \Psi, \wvec{x{:}A} \vdash \sigma_2 \lfwhnf \wk{\hatctx\Psi} : \Psi}
             {\Gamma ; \Psi \Vdash \sigma_1 = \sigma_2 : \Psi}{}$
}
\prf{$\Gamma ; \Psi \vdash \sigma_2 \lfwhnf \cdot : \cdot $ \hfill by assumption}
\prf{$ \Gamma ; \Psi, \wvec{x{:}A} \vdash \sigma_2 \lfwhnf \wk{\hatctx\Psi} : \Psi$ \hfill  by inversion and determinacy (Lemma \ref{lem:detwhnf})}
\prf{$ \Gamma ; \Psi, \wvec{x{:}A} \vdash \sigma_3 \lfwhnf \wk{\hatctx\Psi} : \Psi$ \hfill by inversion}
\prf{$\Gamma \Vdash \sigma_1 = \sigma_3 : \cdot$
\hfill using $\Gamma ; \Psi, \wvec{x{:}A} \vdash \sigma_1 \lfwhnf \wk{\hatctx\Psi} : \Psi$
         and $\Gamma ; \Psi, \wvec{x{:}A} \vdash \sigma_3 \lfwhnf \wk{\hatctx\Psi} : \Psi$}

\pcase{$\ianc
  {\begin{array}{ll}
\Gamma ; \Psi \vdash \sigma_1 \lfwhnf \sigma'_1, M : \Phi, x{:}A  & \\
\Gamma ; \Psi \vdash \sigma_2 \lfwhnf \sigma'_2, N : \Phi, x{:}A &
\Gamma ; \Psi \Vdash \sigma'_1 = \sigma'_2 : \Phi \qquad \Gamma ; \Psi  \Vdash M = N : \lfs{\sigma'_1}{\Phi} A
\end{array}
}
{\Gamma ; \Psi \Vdash \sigma_1 = \sigma_2 : \Phi, x{:}A}{}$}
\prf{$\Gamma ; \Psi \Vdash \sigma_2 = \sigma_3 : \Phi, x{:}A$ \hfill by assumption}
\prf{$\Gamma ; \Psi \vdash \sigma_2 \lfwhnf \sigma'_2, N : \Phi, x{:}A$ \hfill by inversion and determinacy (Lemma \ref{lem:detwhnf})}
\prf{$\Gamma ; \Psi \vdash \sigma_3 \lfwhnf \sigma'_3, Q : \Phi, x{:}A$ \hfill by inversion}
\prf{$\Gamma ; \Psi  \Vdash N = Q : \lfs{\sigma_2}{\Phi} A$ \hfill inversion }
\prf{$\Gamma ; \Psi \Vdash \sigma'_2 = \sigma'_3 : \Phi, x{:}A$ \hfill by inversion}
\prf{$\Gamma ; \Psi \Vdash \sigma'_1 = \sigma'_3 : \Phi, x{:}A$ \hfill by IH}
\prf{$\Gamma ; \Psi \vdash \sigma'_2, N: \Phi, x{:}A$ \hfill by def. of well-typed whnf}
\prf{$\Gamma \vdash \Phi, x{:}A : \ctx$ \hfill by well-formedness of LF typing}
\prf{$\Gamma ; \Phi \vdash A : \lftype$ \hfill by well-formedness of LF contexts}
\prf{$\Gamma ; \Psi \vdash \sigma_1 \equiv \sigma_2 : \Phi$ \hfill by well-formedness of semantic equality (Lemma \ref{lm:semlfwf}) }
\prf{$\Gamma ; \Psi \vdash \lfs{\sigma_1}{\Phi} A \equiv \lfs{\sigma_2}{\Phi} A : \lftype$ \hfill by functionality of LF typing (Lemma \ref{lm:func-lftyping})}
\prf{$\Gamma ; \Psi \Vdash N = Q : \lfs{\sigma_1}{\Phi}A$ \hfill by IH (Conversion \ref{it:convclf})}
\prf{$\Gamma ; \Psi \Vdash M \equiv Q \lfs{\sigma_1}{\Phi} A$ \hfill by IH}
\prf{$\Gamma ; \Psi \Vdash \sigma_1 = \sigma_3 : : \Phi, x{:}A$ \hfill by sem. def. }

\SUBSTCLO{
\pcase{$\ianc {
      \begin{array}{lll}
      \Gamma ; \Psi \vdash \sigma_1 \lfwhnf (\sclo {\hatctx{\Phi}} {\unbox{t_1}{\sigma'_1})}  : \Phi
   & \typeof (\Gamma \vdash t_1) = \cbox{\Phi_1 \vdash \Phi'_1}  & \Gamma \vdash \Phi'_1 \equiv (\Phi, \wvec{x{:}A}) : \ctx
\\
\Gamma ; \Psi \vdash \sigma_2 \lfwhnf (\sclo {\hatctx\Phi} {\unbox{t_2}{\sigma'_2}}) : \Phi
    & \typeof (\Gamma \vdash t_2) = \cbox{\Phi_2 \vdash \Phi'_2} & \Gamma \vdash \Phi'_2 \equiv (\Phi, \wvec{x{:}A}) : \ctx
\\
\Gamma \vdash t_1 \equiv t_2 : \cbox{\Phi_1 \vdash \Phi'_1}
    & \Gamma ; \Psi \Vdash \sigma'_1 = \sigma'_2 : \Phi_1
    & \Gamma \vdash \Phi_1 \equiv \Phi_2 : \ctx
\end{array}
}
{\Gamma ; \Psi \Vdash \sigma_1 = \sigma_2 : \Phi}
{}$
}
\prf{$\Gamma ; \Psi \vdash \sigma_2 = \sigma_3 : \Phi$ \hfill by assumption}
\prf{$\Gamma ; \Psi \vdash \sigma_2 \lfwhnf (\sclo {\hatctx{\Phi}} {\unbox{t_2}{\sigma'_2})}  : \Phi$  \hfill by inversion and determinacy (Lemma \ref{lem:detwhnf})}
\prf{$\Gamma ; \Psi \vdash \sigma_3 \lfwhnf (\sclo {\hatctx{\Phi}} {\unbox{t_3}{\sigma'_3})}  : \Phi$  \hfill by inversion}
\prf{$\typeof (\Gamma \vdash t_2) = \cbox{\Phi_2 \vdash \Phi'_2}$ \hfill by inversion and uniqueness of $\typeof$}
\prf{$\typeof (\Gamma \vdash t_3) = \cbox{\Phi_3 \vdash \Phi'_3}$ \hfill by inversion}
\prf{$\Gamma ; \Psi \Vdash \sigma'_2 = \sigma'_3 : \Phi_2  $ \hfill by inversion}
\prf{$\Gamma \vdash t_2 \equiv t_3 : \cbox{\Phi_2 \vdash\Phi'_2} $ \hfill by inversion}
\prf{$\Gamma \vdash \Phi_2 \equiv \Phi_3 : \ctx $ \hfill by inversion}
\prf{$\Gamma \vdash \Phi'_2 \equiv (\Phi, \wvec{x{:}A'}) : \ctx $ \hfill by inversion}
\prf{$\Gamma \vdash \Phi'_3 \equiv (\Phi, \wvec{x{:}A'}) : \ctx $ \hfill by inversion}
\prf{$\Gamma \vdash (\Phi, \wvec{x{:}A})  \equiv (\Phi, \wvec{x{:}A'}) : \ctx$ \hfill by transitivity and symmetry}
\prf{$\Gamma \vdash \Phi'_3 \equiv (\Phi, \wvec{x{:}A}) :\ctx$ \hfill by transitivity and symmetry}
\prf{$\Gamma \vdash \Phi_1 \equiv \Phi_3 : \ctx$ \hfill by transitivity}
\prf{$\Gamma ; \Psi \Vdash \sigma'_2 = \sigma'_3 : \Phi_1$ \hfill by IH (\ref{it:convsub}) using $\Gamma \vdash \Phi_1 \equiv \Phi_2 : ctx$}
\prf{$\Gamma ; \Psi \Vdash \sigma'_1 = \sigma'_3 : \Phi_1$ \hfill by IH (\ref{it:transsub})}
\prf{$\Gamma \vdash t_1 \equiv t_3 : \cbox{\Phi_2 \vdash\Phi'_2} $ \hfill by transitivity}
\prf{$\Gamma ; \Psi \Vdash \sigma_1 = \sigma_3 : \Phi$ \hfill by def. of sem. typing}
}
\fbox{
Symmetry: If $\Gamma ; \Psi \Vdash \sigma_1 = \sigma_2 : \Phi$ then $\Gamma ; \Psi \Vdash \sigma_2 = \sigma_1 : \Phi$}
\\[1em]
By induction on semantic equivalence relation where we consider any $\sigma'$ smaller than $\sigma$ if $\sigma \lfwhnf \sigma'$; the proof is mostly straightforward exploiting symmetry of decl. equivalence ($\equiv$), but also relies again on  well-formedness of semantic equality (Lemma \ref{lm:semlfwf}) and  functionality of LF typing (Lemma \ref{lm:func-lftyping}) for the case where $\sigma_i \lfwhnf \sigma'_i, M_i$. We show the interesting case.
\\[1em]
\pcase{$\ianc
  {\begin{array}{ll}
\Gamma ; \Psi \vdash \sigma_1 \lfwhnf \sigma'_1, M : \Phi, x{:}A  & \\
\Gamma ; \Psi \vdash \sigma_2 \lfwhnf \sigma'_2, N : \Phi, x{:}A &
\Gamma ; \Psi \Vdash \sigma'_1 = \sigma'_2 : \Phi \qquad \Gamma ; \Psi  \Vdash M = N : \lfs{\sigma'_1}{\Phi} A
\end{array}
}
{\Gamma ; \Psi \Vdash \sigma_1 = \sigma_2 : \Phi, x{:}A}{}$}
\\[1em]
\prf{$\Gamma ; \Psi \Vdash \sigma'_2 = \sigma'_1 : \Phi$ \hfill by IH}
\prf{$\Gamma \vdash \Phi, x{:}A : \ctx$ \hfill by well-formedness of typing }
\prf{$\Gamma ; \Phi \vdash A : \lftype$ \hfill by well-formedness of LF contexts}
\prf{$\Gamma ; \Psi \vdash \sigma'_1 \equiv \sigma'_2 : \Phi$ \hfill by well-formedness of semantic equality (Lemma \ref{lm:semlfwf}) }
\prf{$\Gamma ; \Psi \vdash \lfs{\sigma_1}{\Phi} A \equiv \lfs{\sigma_2}{\Phi} A : \lftype$ \hfill by functionality of LF typing (Lemma \ref{lm:func-lftyping})}
\prf{$\Gamma ; \Psi \Vdash M = N : \lfs{\sigma_2}{\Phi}A$ \hfill by IH (Conversion \ref{it:convclf})}
\prf{$\Gamma ; \Psi \Vdash N = M : \lfs{\sigma_2}{\Phi}A$ \hfill by IH }
\prf{$\Gamma ; \Psi \Vdash \sigma_2 = \sigma_1 : \Phi, x{:}A$ \hfill by def. of semantic equivalence}

%%%%%%%%%%%%%%%%%%%
\vspace{1cm}
We now consider some cases for establishing symmetry and transitivity for semantic equality of LF terms.
\\[1em]
\pcase{$\ianc{
\begin{array}{lll}
\Gamma ; \Psi \vdash M_1 \lfwhnf \unbox {t_1} {\sigma_1} : \tm & \typeof(\Gamma \vdash t_1) = \cbox{\Phi_1 \vdash \tm}
& \Gamma \vdash \Phi_1 \equiv \Phi_2 : \ctx \\
\Gamma ; \Psi \vdash M_2 \lfwhnf \unbox {t_2} {\sigma_2} : \tm &
\typeof(\Gamma \vdash t_2) = \cbox{\Phi_2 \vdash \tm} &
\Gamma \vdash t_1 \equiv t_2 : \cbox{\Phi_1 \vdash \tm} \quad
\Gamma \Vdash \sigma'_1 = \sigma'_2 : \Phi_1
\end{array} }
            {\Gamma ; \Psi \Vdash M_1 = M_2 : \tm}{}$}
% \prf{By induction on $\Gamma ; \Psi \Vdash M = N : A$}
% \prf{\emph{Sub-Case} $M \lfwhnf \unbox t {\sigma}$ and $N \lfwhnf \unbox {t'} {\sigma'} $}
Symmetry for LF Terms.\\
\prf{$\Gamma ; \Psi \Vdash \sigma'_2 = \sigma'_1 : \Phi_1$ \hfill by IH}
\prf{$\Gamma ; \Psi \Vdash \sigma'_2 = \sigma'_1 : \Phi_2$ \hfill by IH (Conversion (\ref{it:convsub}))}
\prf{$\Gamma \vdash \cbox{\Phi_1 \vdash \tm} \equiv \cbox{\Phi_2 \vdash \tm} : u$ \hfill since $\Gamma \vdash \Phi_1 \equiv \Phi_2 : ctx$}
\prf{$\Gamma \vdash t_1 \equiv t_2 : \cbox{\Phi_2 \vdash \tm}$ \hfill by conversion  using $\Gamma \vdash \cbox{\Phi_1 \vdash \tm} \equiv \cbox{\Phi_2 \vdash \tm} : u$}
\prf{$\Gamma \vdash t_2 \equiv t_1 : \cbox{\Phi_2 \vdash \tm}$ \hfill by transitivity of $\equiv$}
\prf{$\Gamma ; \Psi \Vdash N = M : A$ \hfill by sem. def.}
\\
Transitivity for LF Terms.\\
\prf{$\Gamma \Vdash M_2 = M_3 : \tm$ \hfill by assumption}
\prf{$\Gamma \vdash M_2 \lfwhnf \unbox{t_2}{\sigma_2} : \tm$ \hfill by inversion and  determinacy (Lemma \ref{lem:detwhnf})}
\prf{$\Gamma \vdash M_3 \lfwhnf \unbox{t_3}{\sigma_3}$  \hfill by inversion}
\prf{$\typeof (\Gamma \vdash t_2) = \cbox{\Phi_2 \vdash \tm}$ \hfill by inversion and uniqueness of $\typeof$}
\prf{$\typeof (\Gamma \vdash t_3) = \cbox{\Phi_3 \vdash \tm}$ \hfill by inversion}
\prf{$\Gamma \vdash \Phi_2 \equiv \Phi_3 : \ctx$ \hfill by inversion}
\prf{$\Gamma \vdash \Phi_1 \equiv \Phi_3 : \ctx$ \hfill by transitivity ($\equiv$)}
\prf{$\Gamma \Vdash \sigma_2 = \sigma_3 : \Phi_2$ \hfill by inversion}
\prf{$\Gamma \Vdash \sigma_2 = \sigma_3 : \Phi_1$ \hfill by IH (Conversion \ref{it:convsub} using $\Gamma \Vdash \Phi_2 = \Phi_1 : \ctx$)}
\prf{$\Gamma \Vdash \sigma_1 = \sigma_3 : \Phi_1$ \hfill by IH}
\prf{$\Gamma \vdash t_2 \equiv t_3 : \cbox{\Phi_2 \vdash \tm}$ \hfill by inversion on $\Gamma ; \Psi \Vdash M_2 = M_3 : \tm$}
\prf{$\Gamma \vdash (\Phi_1 \vdash \tm) \equiv (\Phi_2 \vdash \tm) : u$ \hfill since $\Gamma \vdash \Phi_1 \equiv \Phi_2 : ctx$}
\prf{$\Gamma \vdash t_2 \equiv t_3 : \cbox{\Phi_1 \vdash \tm}$ \hfill by type conversion using $\Gamma \vdash (\Phi_1 \vdash \tm) \equiv (\Phi_2 \vdash \tm) : u$)}
\prf{$\Gamma \Vdash t_1 \equiv t_3 : \cbox{\Phi_1 \vdash \tm}$ \hfill by Transitivity of $\equiv$}
\prf{$\Gamma \Vdash M_1 = M_3 : \tm$}
}
\LONGVERSIONCHECKED{
\\[1em]
We concentrate here on proving the conversion properties:
\\[1em]
\fbox{(Conversion:) If\/ $\Gamma ; \Psi \vdash A \equiv A' : \lftype$
     and $\Gamma ; \Psi \Vdash M = N : A$
     then $\Gamma ; \Psi \Vdash M = N : A'$.}
\\[1em]
\pcase{
$\ianc
{ \Gamma ; \Psi \vdash M \lfwhnf \lambda x.M' : \Pi x{:}A.B \quad
  \Gamma ; \Psi \vdash N \lfwhnf \lambda x.N' : \Pi x{:}A.B \quad
 \Gamma ; \Psi, x{:}A \Vdash M' = N': B
}
{\Gamma ; \Psi \Vdash M = N: \Pi x{:}A. B}{}$
}
\prf{$\Gamma ; \Psi \vdash \Pi x{:}A.B \equiv \Pi x{:}A'.B' : \lftype$ \hfill by assumption}
\prf{$\Gamma ; \Psi, x{:}A \vdash B \equiv B' : \lftype$ and $\Gamma ; \Psi \vdash A \equiv A' : \lftype$ \hfill by injectivity of $\Pi$-types (Lemma \ref{lm:lfpi-inj})}
\prf{$\Gamma; \Psi, x{:}A \Vdash M' = N' : B'$ \hfill by IH}
\prf{$\Gamma ; \Psi, x{:}A' \Vdash M' = N' : B'$ \hfill by Semantic LF context conversion (Lemma \ref{lm:semlfctxconv})}
\prf{$\Gamma ; \Psi \Vdash M = N : \Pi x{:}A'.B'$ \hfill by sem. def.}
\\[1em]
Other cases are trivial since they are at type $\tm$.
}
\end{proof}

\subsection{Semantic Properties of Computations}

 \begin{lemma}[Well-Formedness of Semantic Typing] \quad \label{lm:semwf}
    If $\Gamma \Vdash t = t' : \ann\tau$ then
         $\Gamma \vdash t : \ann\tau$ and $\Gamma \vdash t' : \ann\tau$ and $\Gamma \vdash t \equiv t' : \ann\tau$.
\end{lemma}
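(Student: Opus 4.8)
The plan is to argue by induction on the semantic kinding derivation $\Gamma \Vdash \ann\tau : u$ of Figure~\ref{fig:semkind}, ordered lexicographically first by the universe level $u$ and then by the structure of that derivation, carrying along \emph{simultaneously} the companion statement for semantic equality of types (if $\Gamma \Vdash \ann\tau = \ann\tau' : u$ then $\Gamma \vdash \ann\tau : u$, $\Gamma \vdash \ann\tau' : u$ and $\Gamma \vdash \ann\tau \equiv \ann\tau' : u$) and, as explained below, a semantic reflexivity statement for neutral computations. The lexicographic order is forced because both families of semantic judgments in Figure~\ref{fig:sem} are defined by recursion on the semantic kinding of their classifying type, and the clause for a type that weak head reduces to a universe $u'$ descends to semantic kinding at the strictly smaller $u'$. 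In every case the first move is to unfold the relevant well-typed weak head normal form via Definition~\ref{def:typedwhnf}, which already supplies the typings $\Gamma \vdash t : \ann\tau$, $\Gamma \vdash t' : \ann\tau$ and the definitional equalities $\Gamma \vdash t \equiv w : \ann\tau$, $\Gamma \vdash t' \equiv w' : \ann\tau$ relating the terms to their head normal forms $w,w'$; the remaining work is to produce $\Gamma \vdash w \equiv w' : \ann\tau$ and then close under transitivity.

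\textbf{The routine cases.} For $\ann\tau = \tmctx$ the semantic equality $\Gamma \Vdash t = t' : \tmctx$ \emph{is} the definitional equality $\Gamma \vdash t \equiv t' : \tmctx$, whose presuppositions (syntactic validity of $\equiv$) give the two typings, while $\Gamma \Vdash \tmctx : u$ carries $\vdash \Gamma$ and hence $\Gamma \vdash \tmctx : u$ by its formation rule. When the classifying type is a neutral $x\,\vec s$, Definition~\ref{def:typedwhnf} gives $\Gamma \vdash t \equiv n$ and $\Gamma \vdash t' \equiv n'$ and the semantic clause gives $\Gamma \vdash n \equiv n' : x\,\vec s$, so transitivity yields $\Gamma \vdash t \equiv t' : x\,\vec s$; the type-level neutral case is identical. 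When the classifying type is $\cbox{\Psi \vdash A}$, I would apply Lemma~\ref{lm:semlfwf} to the LF hypothesis $\Gamma ; \Psi \Vdash \unbox{w}{\id} = \unbox{w'}{\id} : A$ to obtain $\Gamma ; \Psi \vdash \unbox{w}{\id} \equiv \unbox{w'}{\id} : A$, lift this to an equality of contextual objects, box it, and then apply the $\eta$-expansion rule $\cbox{\hatctx\Psi \vdash \unbox{t}{\wk{\hatctx\Psi}}} \equiv t$ of Figure~\ref{fig:etype} to both $w$ and $w'$; transitivity then gives $\Gamma \vdash w \equiv w' : \cbox{\Psi \vdash A}$. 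Finally, the clause of the term statement in which $\ann\tau$ weak head reduces to a universe $u'$ is, by the unfolding conventions of Figure~\ref{fig:sem}, exactly an instance of the type companion at the strictly smaller universe $u'$, which is available by the induction hypothesis.

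\textbf{The obstacle: function types.} The delicate case --- and the one I expect to be the main obstacle --- is the dependent function type $(y{:}\ann\tau_1) \arrow \tau_2$, at both the term and the type level. There the semantic judgment hands us only the head normal forms $w, w'$ together with the \emph{extensional} property that $\{\rho\}w\,s$ and $\{\rho\}w'\,s'$ are semantically related whenever $s, s'$ are; to turn this into the syntactic $\Gamma \vdash w \equiv w' : (y{:}\ann\tau_1)\arrow\tau_2$ I would instantiate it with the canonical weakening $\Gamma, y{:}\ann\tau_1 \leq_\rho \Gamma$ and $s = s' = y$. This requires $\Gamma, y{:}\ann\tau_1 \Vdash y = y : \ann\tau_1$, i.e.\ semantic reflexivity of the fresh variable, which is why this lemma has to be proved in tandem with a semantic reflexivity statement for neutral computations; the same fact also discharges the premise $\Gamma, y{:}\ann\tau_1 \Vdash \tau_2 : u$ of the semantic kinding rule for arrows, so the induction hypothesis applies at the strictly smaller derivation for $\tau_2$ and yields $\Gamma, y{:}\ann\tau_1 \vdash w\,y \equiv w'\,y : \tau_2$. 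From this, the $\eta$-law for the extensional computation-level function space together with the congruence rules gives $\Gamma \vdash w \equiv w' : (y{:}\ann\tau_1)\arrow\tau_2$, and transitivity with $\Gamma \vdash t \equiv w$ and $\Gamma \vdash t' \equiv w'$ closes the case. The points demanding care are thus the consistent use of the lexicographic measure across the interwoven statements (well-formedness, the type companion, reflexivity of neutrals) and the appeal to extensionality of the computation function space in the arrow case; the non-function cases are mechanical.
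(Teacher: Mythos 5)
Your overall strategy --- induction on the semantic kinding derivation $\Gamma \Vdash \ann\tau : u$, unfolding Definition~\ref{def:typedwhnf} in each case to obtain the typings of $t,t'$ and the definitional equalities to their weak head normal forms, then closing under symmetry and transitivity --- is exactly the route the paper takes (its entire proof is that one sentence), and your handling of the $\tmctx$, neutral-type, universe, and $\cbox{\Psi\vdash A}$ cases is sound; in the last of these, appealing to Lemma~\ref{lm:semlfwf} and then to the box-expansion rule of Fig.~\ref{fig:etype} is the right move.

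The function-type case is where you must be careful, and while you have correctly located both pressure points, one of them is a genuine risk rather than bookkeeping. To pass from $\Gamma,y{:}\ann\tau_1 \vdash w~y \equiv w'~y : \tau_2$ to $\Gamma \vdash w \equiv w' : (y{:}\ann\tau_1)\arrow\tau_2$ you invoke ``the $\eta$-law for the extensional computation-level function space''. Fig.~\ref{fig:etype} contains an expansion rule only at contextual types; unlike the LF layer, which explicitly has $M \equiv \lambda x.M~x$, no rule $t \equiv \tmfn y {t~y}$ is listed for computations, and the omitted rules are announced only as ``transitive closure and congruence rules''. When both $w$ and $w'$ are $\tmfn$-abstractions you can avoid $\eta$ entirely: $\beta$-reduce $w~y$ and $w'~y$, obtain the bodies definitionally equal, and close with the congruence rule for $\tmfn$ --- this mirrors how the paper's own proof of Lemma~\ref{lm:semlfwf} handles $\clam$. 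But when $w$ or $w'$ is neutral, the arrow clause of Fig.~\ref{fig:sem} gives you nothing but behaviour under application (there is no separate neutral subclause at arrow type), and without a computation-level $\eta$-rule the conclusion $\Gamma \vdash w \equiv w'$ is not derivable. Either that rule is tacitly part of the calculus, in which case your argument goes through, or this subcase cannot be closed; you need to pin this down before the proof is complete. Your second point --- the mutual recursion with semantic reflexivity of fresh variables (Lemma~\ref{lem:NeutSound}), which the paper states \emph{after} this lemma and proves \emph{using} it --- is real, and proving the two in tandem by a simultaneous induction on the kinding derivation is the correct fix; just make sure the arrow case invokes Neutral Soundness only at the domain kinding $\Gamma'\Vdash\{\rho\}\ann\tau_1 : u_1$, which is a premise of the arrow kinding and hence strictly smaller in your measure.
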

\begin{proof}
By induction on the induction on $\Gamma \Vdash \ann\tau : u$. . In each case, we refer the Def.~\ref{def:typedwhnf}.
\end{proof}

\begin{lemma}[Semantic Weakening for Computations]\label{lem:compsemweak}\quad
%For Computations:
\begin{enumerate}
  \item \label{it:weaksemtau} If $\Gamma \Vdash \ann\tau :u$ and $\Gamma' \leq_\rho \Gamma$ then $\Gamma' \Vdash \{\rho\}\ann\tau : u$.
  \item If\/ $\Gamma \Vdash \ann\tau = \ann\tau' : u$ and $\Gamma' \leq_\rho \Gamma$
    then $\Gamma' \Vdash \{\rho\}\ann\tau = \{\rho\}\ann\tau' : u$.
  \item  If\/ $\Gamma \Vdash t = t' : \ann\tau$ and $\Gamma' \leq_\rho \Gamma$
    then $\Gamma' \Vdash \{\rho\}t = \{\rho\} t : \{\rho\}\ann\tau$.
  \end{enumerate}
\end{lemma}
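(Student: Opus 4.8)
The plan is to prove all three statements simultaneously, by induction on the derivation of $\Gamma \Vdash \ann\tau : u$ from Fig.~\ref{fig:semkind}. Bundling is forced here: statements (2) and (3) are defined by recursion on the semantic kinding $\Gamma \Vdash \ann\tau : u$, so every appeal to (2) or (3) at a smaller kinding --- e.g.\ inside a function-type clause, or when (3) at a universe $u'$ delegates to semantic type equality at the strictly smaller universe $u'$ --- must be covered by the same induction, with the universe level as an outer measure to make the $u'$-step legitimate, exactly as in the well-foundedness argument that justifies the definition of semantic equality in the first place. Throughout I use three transport facts already in hand: weak head reduction is stable under renaming (Lemma~\ref{lem:weakwhnf}), type inference for neutral computations commutes with renaming (Lemma~\ref{lem:wktypinf}), and computation-level typing and definitional equality are preserved by renamings (Lemma~\ref{lem:weakcomp}); weakening also preserves $\norm$ and $\neut$ (Lemma~\ref{lem:weaknorm}). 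For the contextual-box clause of (3) I additionally invoke semantic weakening for LF (Lemma~\ref{lem:semweak}).

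For statement (1) I do case analysis on the last rule. When $\tau \whnf \cbox T$, $\tau \whnf u'$, or $\tau \whnf x\,\vec t$, I move the weak head reduction along $\rho$ with Lemma~\ref{lem:weakwhnf}, the accompanying $\equiv$-premise with Lemma~\ref{lem:weakcomp}, and the neutrality of $x\,\vec t$ with Lemma~\ref{lem:weaknorm}; the side condition $\vdash \Gamma'$ (needed for the $\tmctx$ rule and anywhere well-formedness is used) holds because $\Gamma' \leq_\rho \Gamma$ is in particular a well-typed computation-level substitution into a well-formed context. The function-type rule $\tau \whnf (y{:}\ann\tau_1)\arrow\tau_2$ is the crux: its premises quantify over all $\Gamma'' \leq_{\rho'} \Gamma$, and I must re-derive them for all $\Gamma'' \leq_{\rho'} \Gamma'$. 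Given such a $\Gamma''$, I compose renamings to obtain $\Gamma'' \leq_{\rho'\circ\rho}\Gamma$, instantiate the original premise there, and rewrite with compositionality of renaming, $\{\rho'\circ\rho\}(-) = \{\rho'\}\{\rho\}(-)$; the hypothesis $\Gamma'' \Vdash s = s : \{\rho'\}\{\rho\}\ann\tau_1$ is literally $\Gamma'' \Vdash s = s : \{\rho'\circ\rho\}\ann\tau_1$, so it feeds straight through.

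Statements (2) and (3) are then handled by unfolding the defining clause of Fig.~\ref{fig:sem} that matches the whnf shape of $\ann\tau$ fixed by the kinding derivation under induction. Every clause is a conjunction of: whnf reductions (transported by Lemma~\ref{lem:weakwhnf}), $\equiv$-judgments and schema equalities (transported by Lemma~\ref{lem:weakcomp}), semantic (in)equalities at strictly smaller kinding derivations (handled by the IH), and, in the $\cbox{\Psi \vdash A}$ clause of (3), a semantic LF equality between $\unbox w \id$ and $\unbox{w'}\id$ (transported by Lemma~\ref{lem:semweak}, after observing that $\{\rho\}$ commutes with the $\id$ substitution and with $\unbox{-}{-}$). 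The $\tmctx$ clause is immediate from Lemma~\ref{lem:weakcomp}. The only genuinely delicate steps are, once more, the two function-space clauses, where the nested quantifier over further renamings is discharged by renaming composition exactly as in (1); keeping that composition and the lexicographic measure straight is the main obstacle, while the rest is routine bookkeeping with the transport lemmas.
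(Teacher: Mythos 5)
Your proposal is correct and follows essentially the same route as the paper: induction on the semantic kinding derivation $\Gamma \Vdash \ann\tau : u$, transporting whnf reductions, definitional equalities, and neutrality along $\rho$ via the established renaming lemmas, composing renamings to discharge the nested Kripke quantifier in the function-space clauses, and appealing to semantic weakening for LF in the $\cbox{\Psi \vdash A}$ clause. The explicit remark about the universe measure for the $u'$ case is a sound (and welcome) elaboration of the well-foundedness the paper leaves implicit.
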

\begin{proof}
By induction on $\Gamma \Vdash \ann\tau : u$.
\LONGVERSIONCHECKED{
\\[0.25em]
We note that the theorem is trivial for $\ann\tau = \tmctx$. Hence we concentrate on proving it where $\ann\tau = \tau$ (i.e. it is a proper type).
\\[0.25em]
For better and easier readability we simply write for example
 $\tau = (y:\ann\tau_1) \arrow \tau_2$  instead of
\\
$\Gamma \Vdash \tau : u$ where
\begin{enumerate}
\item $\Gamma \vdash \tau \whnf (x:\ann\tau_1) \arrow \tau_2$
\item $(\forall \Gamma' \leq_\rho \Gamma. \Gamma' \Vdash \{\rho\}\ann\tau_1 : u_1)$
\item $\forall \Gamma'\leq_\rho \Gamma.~ \Gamma' \Vdash s = s' :\{\rho\}\ann\tau_1
    \Longrightarrow \Gamma' \Vdash \{\rho, s/x\} \tau_2 = \{\rho,s'/x\}\tau_2 : u_2$ and
\item $(u_1, u_2, u_3) \in \mathcal{R}$.
\end{enumerate}

\noindent
Weakening of semantic typing $\Gamma \Vdash \tau : u$: \\
By case analysis on $\Gamma \Vdash \tau : u$.
\\[0.5em]
    \pcase{$\tau  = (x: \ann\tau_1) \arrow \tau_2 $}
    %
    %
%     \prf{$\Gamma \vdash \tau \whnf (x:\tau_1) \arrow \tau_2$ \hfill by assumption of $\Gamma \Vdash \tau : u$}
    \prf{$\Gamma' \vdash \{\rho\}\tau \whnf \{\rho\}((y: \ann\tau_1) \arrow \tau_2) : u_3$ \hfill by Lemma \ref{lem:weakwhnf} using $\Gamma \Vdash \tau : u$}
    \prf{$\Gamma' \vdash \{\rho\}\tau \whnf (x:\{\rho\}\ann\tau_1) \arrow \{\rho,x/x\}\tau_2: u_3$ \hfill by substitution def.}
\\
    \prf{Suppose that $\Gamma_1 \leq_{\rho_1} \Gamma'$}
    \prf{$\Gamma' \leq_\rho \Gamma$ \hfill by assumption}
    \prf{$\Gamma_1 \leq_{\{\rho_1\}\rho} \Gamma$ \hfill by composition of substitution}
    \prf{$\Gamma_1 \Vdash \{\{\rho_1\}\rho\}\ann\tau_1 : u_1$  \hfill by $\Gamma \Vdash \tau : u$}
    \prf{$\Gamma_1 \Vdash \{\rho_1\}\{\rho\}\ann\tau_1  : u_1$ \hfill by composition of substitution}
\\
   % \prf{$(\forall \Gamma_1 \leq_{\rho_1} \Gamma'. \Gamma_1 \Vdash \{\rho_1\}(\{\rho\}\tau_1) : \flagred{u_1})$}
    \prf{Suppose that $\Gamma_1 \leq_{\rho_1} \Gamma'$ and $\Gamma_1 \Vdash s = s' : \{\rho_1\}(\{\rho\}\ann\tau_1)$}
    \prf{$\Gamma_1 \leq_{\{\rho_1\}\rho} \Gamma$ \hfill by composition of substitution}
    \prf{$\Gamma_1 \Vdash s = s' : \{\{\rho_1\}\rho\}\ann\tau_1$\hfill by composition of substitution}
    \prf{$\Gamma_1 \Vdash \{\{\rho_1\}\rho, s/x\} \tau_2 = \{\{\rho_1\}\rho, s'/x\}\tau_2: u_2$ \hfill by $\Gamma \Vdash \tau : u$}
    \prf{$\Gamma_1 \Vdash \{\rho_1, s/x\}(\{\rho,x/x\} \tau_2)= \{\rho_1, s'/x\}(\{\rho,x/x\} \tau_2) : u_2$ \hfill by composition of substitution}
\\
    \prf{$\Gamma' \Vdash \{\rho\}\tau : u_3$ \hfill by abstraction, since $\Gamma_1, \rho_1$ where arbitrary}
\\
    \pcase{$\tau = \cbox{T}$}
    \prf{$\Gamma' \der\{\rho\}\tau \whnf \{\rho\}\cbox{T}:u$ \hfill by Lemma \ref{lem:weakwhnf}  using $\Gamma \Vdash \tau : u$}
    \prf{$\Gamma' \der\{\rho\}\tau \whnf \cbox{\{\rho\}T}:u$ \hfill by substitution def.}
    % \prf{$\Gamma \semlf T$ \hfill by assumption}
    \prf{$\Gamma \semlf T = T : u$ \hfill by assumption}
    \prf{$\Gamma \vdash T \equiv T : u$\hfill by inversion}
    \prf{$\Gamma' \vdash \{\rho\}T \equiv \{\rho\}T : u$\hfill by substitution lemma}
    \prf{$\Gamma' \Vdash \{\rho\}\tau : u$ \hfill by def.}
\\
    %
% %
     \pcase{$\Gamma \Vdash t = t' : \cbox {\Psi \vdash A}$}
     \prf{$\Gamma \vdash t \whnf w   : \cbox {\Psi \vdash A}$ \hfill by assumption}
     \prf{$\Gamma' \vdash \{\rho\}t \whnf \{\rho\}w   : \{\rho\}\cbox {\Psi \vdash A}$ \hfill by \ref{lem:weakwhnf}}
     \prf{$\Gamma \vdash t' \whnf w'   : \cbox {\Psi \vdash A}$ \hfill by assumption}
     \prf{$\Gamma' \vdash \{\rho\}t' \whnf \{\rho\}w' :  \{\rho\}\cbox{\Psi \vdash A}$ \hfill by \ref{lem:weakwhnf}}
     \prf{$\Gamma ; \Psi \Vdash \unbox w \id = \unbox{w'}{\id} : A$ \hfill by assumption}
     \prf{$\Gamma' ; \{\rho\}\Psi \Vdash \{\rho\}\unbox{w}{\id} = \{\rho\}\unbox{w'}{\id} : \{\rho\} A$ \hfill by IH}
     \prf{$\Gamma' \Vdash \{\rho\}t = \{\rho\}t' : \{\rho\}\cbox{\Psi \vdash A}$ \hfill by sem. def. using subst. properties}
%     \prf{$\Gamma \vdash w \whnf w'   : \cbox T$ \hfill by assumption}
%     \prf{$\Gamma' \vdash \{\rho\}w \whnf \{\rho\}w'   : \{\rho\}\cbox T$ \hfill by part \ref{lem:semweak}.\ref{it:sweakcomp}}
%     \prf{$\Gamma' \Vdash \{\rho\}t = \{\rho\}t' :\{\rho\} \cbox T$ \hfill by definition\\}
% %
\\
    \pcase{$\Gamma \Vdash t = t' : (y: \ann\tau_1) \arrow \tau_2$}
    \prf{$\Gamma \vdash t \whnf w : (y: \ann\tau_1) \arrow \tau_2$ \hfill by assumption}
    \prf{$\Gamma' \vdash \{\rho\}t \whnf \{\rho\}w :\{\rho\}((y: \ann\tau_1) \arrow \tau_2)$ \hfill by Lemma \ref{lem:weakwhnf}}
    \prf{$\Gamma \vdash t'\whnf w': (y: \ann\tau_1) \arrow \tau_2$ \hfill by assumption}
    \prf{$\Gamma' \vdash \{\rho\}t' \whnf \{\rho\}w' :\{\rho\}((y: \ann\tau_1) \arrow \tau_2)$ \hfill by Lemma \ref{lem:weakwhnf}}
    \prf{Suppose that $\Gamma_1 \leq_{\rho_1} \Gamma'$ and $\Gamma_1 \vdash t_1 : \{\rho_1\}(\{\rho\}\ann\tau_1)$ and $\Gamma_1 \vdash t_2 : \{\rho_1\}(\{\rho\}\ann\tau_1)$}
    \prf{$\Gamma_1\leq_{\{\rho_1\}\rho}\Gamma'$ \hfill by definition}
    \prf{$\Gamma_1 \vdash t_1 : \{\{\rho_1\}\{\rho\}\ann\tau_1$\hfill by composition of substitution}
    \prf{$\Gamma_1 \vdash t_2 :\{ \{\rho_1\}\rho\}\ann\tau_1$ \hfill by composition of substitution}
    \prf{$\Gamma_1 \Vdash t_1 = t_2 : \{\{\rho_1\}\rho\}\ann\tau_1 \Longrightarrow \Gamma_1 \Vdash \{\{\rho_1\}\rho\}w~t_1 = \{\{\rho_1\}\rho\}w'~t_2 : \{\{\rho_1\}\rho, t_1/x\}\tau_2$ \hfill by assumption }
    \prf{$\Gamma_1 \Vdash t_1 = t_2 : \{\rho_1\}(\{\rho\}\ann\tau_1) \Longrightarrow \Gamma_1 \Vdash(\{\rho_1\}(\{\rho\}w))~t_1 = (\{\rho_1\}\{\rho\}w')~t_2 : \{\rho_1, t_1/x\}(\{\rho\}\tau_2)$ \hfill by composition of substitution}
    \prf{$\Gamma' \Vdash \{\rho\}t = \{\rho\}t' : \{\rho\}((y: \ann\tau_1) \arrow \tau_2)$ \hfill by def.}
}
\end{proof}

% \begin{lemma}[Semantic Reflexivity for Types\footnote{Currently not used. -bp}]\label{lem:semrefl}
% %    \begin{itemize}
%  If $\Gamma \Vdash \tau : u$ then $\Gamma \Vdash \tau = \tau : u$.
% % \item (Reflexivity for Terms): If $\Gamma \Vdash \tau : u$ and $\Gamma \vdash t : \tau$
% %    then $\Gamma \Vdash t = t : \tau$.
% \end{lemma}
% \begin{proof}
%   By induction on $\Gamma \sem \tau : u$.
% %
% \LONGVERSION{
% \\[0.5em]
% We only show the case for function types.
% \\[0.5em]
% \pcase{$\tau = (y : \tau_1) \arrow \tau_2$, i.e. $\Gamma \vdash \tau \whnf (y : \tau_1) \arrow \tau_2$: To Show \fbox{ $\Gamma \Vdash \tau = \tau : u$}}
% \prf{Assume $\Gamma' \leq_\rho \Gamma.$}
% \prf{$\Gamma' \Vdash \{\rho\}\tau_1 : u_1$ \hfill by $\Gamma \Vdash \tau : u$}
% \prf{$\Gamma' \Vdash \{\rho\}\tau_1 = \{\rho\}\tau_1 : u_1$ \hfill by IH}
% \prf{$\forall \Gamma' \leq_\rho \Gamma.~ \Gamma'\Vdash \{\rho\}\tau_1 = \{\rho\} \tau_1 : u_1$ \hfill since $\Gamma'$ and $\rho$ were arbitrary}
% %
% \\[-0.75em]
% \prf{Assume $\Gamma'\leq_\rho \Gamma.~ \Gamma' \Vdash s = s' : \{\rho\}\tau_1$}
% \prf{$\Gamma' \Vdash \{\rho,s/y\}\tau_2 = \{\rho,s'/y\}\tau_2 : u_2$ \hfill by $\Gamma \Vdash \tau : u$}
% \prf{$\forall \Gamma'\leq_\rho \Gamma.~ \Gamma' \Vdash s = s' : \{\rho\}\tau_1 \Longrightarrow \Gamma' \Vdash \{\rho, s/y\} \tau_2 = \{\rho,s'/y\}\tau_2 : u_2$ \hfill since $\Gamma'$ and $\rho$ were arbitrary}
% \\
% }
% \end{proof}

Our semantic equality definition is symmetric and transitive. Our semantic equality is also reflexive -- however, note we prove a weaker reflexivity statement which says that if $t_1$ is semantically equivalent to another term $t_2$ then it is also equivalent to itself. This suffices for our proofs. We also note that our semantic equality takes into account extensionality for terms at function types and contextual types; this is in fact baked into our semantic equality definition.

\begin{lemma}[Symmetry, Transitivity, and Conversion of Semantic Equality]\label{lem:semsym}
\quad\\
% \LONGVERSION{C.~}For Computations:
Let $\Gamma \Vdash \ann\tau : u$ and $\Gamma \Vdash \ann\tau' : u$ and $\Gamma \sem \ann\tau = \ann\tau' : u$ and $\Gamma \Vdash t_1 = t_2 : \ann\tau$.  Then:
\begin{enumerate}
\item\label{it:refl} (Reflexivity for Terms:) $\Gamma \sem t_1 = t_1 : \ann\tau$.
\item\label{it:sym} (Symmetry for Terms:) $\Gamma \Vdash t_2 = t_1 : \ann\tau$.
\item\label{it:trans} (Transitivity for Terms:) If\/ $\Gamma \Vdash t_2 = t_3 : \ann\tau$ then
  $\Gamma \Vdash t_1 = t_3 : \ann\tau$.
\item \label{it:symtyp} (Symmetry for Types:) $\Gamma \sem \ann\tau' = \ann\tau : u$
\item \label{it:transtyp} (Transitivity for Types:) If\/ $\Gamma \Vdash \ann\tau' = \ann\tau'' : u$ and $\Gamma \Vdash \ann\tau'' : u$ then
$\Gamma \Vdash \ann\tau = \ann\tau'' : u$.
\item\label{it:conv} (Conversion:) $\Gamma \Vdash t_1 = t_2 : \ann\tau'$.
\end{enumerate}
\end{lemma}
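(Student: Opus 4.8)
The plan is to prove all six statements simultaneously by induction on the semantic kinding derivation $\Gamma \Vdash \ann\tau : u$, mirroring the structure of the definitions in Fig.~\ref{fig:sem}, which are themselves given by recursion on $\Gamma \Vdash \ann\tau : u$. Reflexivity for terms (\ref{it:refl}) is then immediate: from $\Gamma \Vdash t_1 = t_2 : \ann\tau$ apply (\ref{it:sym}) to get $\Gamma \Vdash t_2 = t_1 : \ann\tau$ and then (\ref{it:trans}) to get $\Gamma \Vdash t_1 = t_1 : \ann\tau$; likewise (\ref{it:symtyp})--(\ref{it:transtyp}) yield reflexivity of types wherever it is needed. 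So the actual content is symmetry, transitivity and conversion in each shape of $\ann\tau$.

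For the shapes where $\ann\tau$ weak-head reduces to a ``flat'' type --- $\tmctx$, $\cbox T$ (i.e.\ $\cbox{\Psi \vdash A}$), a universe $u'$, or a neutral $x\,\vec t$ --- the semantic equality unfolds to a declarative equality ($\Gamma \vdash \Psi \equiv \Psi' : \tmctx$, $\Gamma \vdash T \equiv T'$, $\Gamma \vdash n \equiv n' : x\,\vec s$) plus the matching whnf facts, or, for terms at $\cbox{\Psi \vdash A}$, to an LF semantic equality $\Gamma ; \Psi \Vdash \unbox w \id = \unbox{w'}\id : A$. These cases I would discharge using symmetry/transitivity/conversion of $\equiv$, determinacy of weak head reduction (Lemma~\ref{lem:detwhnf}) to line up the whnfs, Definition~\ref{def:typedwhnf} together with well-formedness of semantic typing (Lemmas~\ref{lm:semlfwf},~\ref{lm:semwf}) to recover the typing side conditions, and --- for $\cbox{\Psi \vdash A}$ --- the already-established LF version, Lemma~\ref{lem:semsymlf}. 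The universe case needs one extra remark: semantic equality of terms at $u'$ is by definition semantic equality of types at $u'$, and since $u' < u$ this is a strictly smaller instance of the semantic kinding, so there (\ref{it:symtyp}), (\ref{it:transtyp}) and the type-level conversion statement at $u'$ are what we invoke; I would set up the induction measure so that it decreases across this step.

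The hard part will be the function-type case $\ann\tau = (y{:}\ann\tau_1) \arrow \tau_2$. The semantic kinding subderivations give, for every $\Gamma' \leq_\rho \Gamma$, the semantic kinding of $\{\rho\}\ann\tau_1$ and, whenever $\Gamma' \Vdash s = s : \{\rho\}\ann\tau_1$, the semantic kinding of $\{\rho, s/y\}\tau_2$, so the IH is available at the domain and at each codomain instance. For symmetry/transitivity of terms I would fix $\Gamma' \leq_\rho \Gamma$ and $\Gamma' \Vdash s = s' : \{\rho\}\ann\tau_1$, use the IH at $\{\rho\}\ann\tau_1$ to obtain also $\Gamma' \Vdash s' = s$, $\Gamma' \Vdash s = s$, $\Gamma' \Vdash s' = s'$, instantiate the extensionality clause of the hypotheses at the relevant pairs, and recombine with the IH at the codomain. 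The delicate point is that the hypotheses always deliver $\{\rho\}w\,a = \{\rho\}w'\,b$ at the type $\{\rho, a/y\}\tau_2$, so to assemble the required $\{\rho\}w'\,s = \{\rho\}w\,s' : \{\rho, s/y\}\tau_2$ one must re-index an instance from $\{\rho, s'/y\}\tau_2$ to $\{\rho, s/y\}\tau_2$; I would do this via the conversion statement (\ref{it:conv}) of the IH, after noting that these two codomain instances are semantically equal --- which follows from $\Gamma' \Vdash s = s' : \{\rho\}\ann\tau_1$ through the codomain clause of the semantic \emph{type} equality obtained along the way, or, if that is not directly available, by strengthening the semantic kinding of the arrow type with the functionality premise mentioned right after Fig.~\ref{fig:semkind}. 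Symmetry and transitivity for types at the arrow shape are analogous but smoother, since the type-equality definition already carries that functionality clause; conversion for both terms and types at the arrow shape is the routine propagation of $\tau' \whnf (y'{:}\ann\tau_1')\arrow\tau_2'$ through the clauses using the IH. Semantic weakening (Lemmas~\ref{lem:semweak},~\ref{lem:compsemweak}) is used everywhere to pass under the $\Gamma' \leq_\rho \Gamma$ quantifier, and well-formedness (Lemmas~\ref{lm:semlfwf},~\ref{lm:semwf}) to extract the declarative equalities that the $\equiv$-based side conditions need.
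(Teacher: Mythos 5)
Your proposal is correct and follows essentially the same route as the paper: a mutual, lexicographic induction on the universe and the semantic kinding derivation $\Gamma \Vdash \ann\tau : u$, with reflexivity derived from symmetry and transitivity, the flat cases discharged by determinacy of weak head reduction, well-formedness, and the declarative/LF-level properties, the universe case handled by the drop to $u' < u$, and the arrow case's re-indexing of the codomain instance $\{\rho, s'/y\}\tau_2$ to $\{\rho, s/y\}\tau_2$ resolved exactly as the paper does, via the functionality clause of the assumed semantic type equality $\Gamma \sem \ann\tau = \ann\tau' : u$ together with the inductive symmetry, transitivity, and conversion for types. No gaps.
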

\begin{proof}
  Reflexivity follows directly from symmetry and transitivity.
  We prove symmetry and transitivity for terms using a lexicographic induction on $u$ and
  $\Gamma \Vdash \tau : u$; we appeal to the induction hypothesis and
  use the corresponding properties on types if
  the universe is smaller; if the universe stays the same,
  then we may appeal to the property for terms if $\Gamma \Vdash \tau
  : u$ is smaller;

  To prove conversion and symmetry for types, we may also appeal to the induction hypothesis if $\Gamma \Vdash \tau' : u$ is smaller.
%  To prove for symmetry of types (\ref{it:symtyp}) and type conversion
%  (\ref{it:conv}) we proceed by simultaneous induction on $\Gamma \sem
%  \tau : u$ and $\Gamma \sem \tau' : u$; for
%  transitivity of types we proceed by simultaneous induction on
%  $\Gamma \sem \tau : u$ and $\Gamma \sem \tau'' :   u$.
% \highlight{Most proofs are completed in the long checked version -- some remaining cases should be checked}
\LONGVERSIONCHECKED{
  \\[1em]
\pcase{\fbox{\fbox{$\ann\tau = \tmctx$}}}
Symmetry for Terms (Prop. \ref{it:sym}): To Show: \fbox{$\Gamma \Vdash t_2 = t_1 : \tmctx$} \\
\prf{$t_2$ and $t_1$ stand for LF context $\Psi_2$ and $\Psi_1$ respectively}
\prf{$\Gamma \vdash \Psi_1 \equiv \Psi_2 : \tmctx$ \hfill by def. $\Gamma \Vdash t_1 = t_2 : \tau$}
\prf{$\Gamma \vdash \Psi_2 \equiv \Psi_1 : \tmctx$ \hfill by symmetry of $\equiv$}
\prf{$\Gamma \Vdash t_2 = t_2 : \tmctx$ \hfill by sem. equ. def.}
% \prf{$\Gamma \vdash t_1 \whnf w_1 : \tau$ \hfill by definition of $\Gamma \sem t_1 = t_2 : \tau$}
% \prf{$\Gamma \vdash t_2 \whnf w_2 : \tau$ \hfill by definition of $\Gamma \sem t_1 = t_2 : \tau$}
% \prf{\emph{Sub-Case}: $\Gamma \vdash t_1 \whnf \Psi_1 : \tmctx$
%   and $\Gamma \vdash t_2 \whnf \Psi_2 : \tmctx$ and $\Gamma \vdash \Psi_1 \equiv \Psi_2 :\tmctx$}
% \prf{$\Gamma \vdash \Psi_2 \equiv \Psi_1 : \tmctx$ \hfill by symmetry of $\equiv$}
% \prf{$\Gamma \Vdash t_2 = t_1 : \tmctx$ \hfill by semantic equ. def.}
% \prf{\emph{Sub-Case}: $\Gamma \vdash t_1 \whnf n_1 : \tmctx$
%   and $\Gamma \vdash t_2 \whnf n_2 : \tmctx$ and $\neut n_1, n_2$
%   and $\Gamma \vdash n_1 \equiv n_2 : \tmctx$}
% \prf{$\Gamma \vdash n_2 \equiv n_1 : \tmctx$ \hfill by symmetry of $\equiv$}
% \prf{$\Gamma \Vdash t_2 = t_1 : \tmctx$ \hfill by semantic equ. def.}
\\[0.5em]
Transitivity for Terms (Prop. \ref{it:trans}): To Show:
\fbox{If\/ $\Gamma \Vdash t_2 = t_3 : \tmctx $ then  $\Gamma \Vdash t_1 = t_3 : \tmctx$.}\\
\prf{$t_1$, $t_2$, and $t_3$ stand for LF context $\Psi_1$, $\Psi_2$, and $\Psi_3$ respectively}
\prf{$\Gamma \vdash \Psi_1 \equiv \Psi_2 : \tmctx$ \hfill by $\Gamma \Vdash t_1 = t_2 : \tmctx$}
\prf{$\Gamma \vdash \Psi_2 \equiv \Psi_3 : \tmctx$ \hfill by $\Gamma \Vdash t_2 = t_3 : \tmctx$}
\prf{$\Gamma \vdash \Psi_1 \equiv \Psi_3 : \tmctx$ \hfill by transitivity of $\equiv$}
\prf{$\Gamma \Vdash t_1 = t_3 : \tmctx$ \hfill by semantic equ. def.}
%
% \prf{$\Gamma \vdash t_1 \whnf w_1 : \tau$
%   \hfill by definition of  $\Gamma \Vdash t_1 = t_2 : \tau$}
% \prf{$\Gamma \vdash t_2 \whnf w_2 : \tau$
%   \hfill by definition of  $\Gamma \Vdash t_1 = t_2 : \tau$}
% \prf{$\Gamma \vdash t_2 \whnf w_2' : \tau$
%   \hfill by definition of  $\Gamma \Vdash t_2 = t_3 : \tau$}
% \prf{$\Gamma \vdash t_3 \whnf w_3 : \tau$
%   \hfill by definition of  $\Gamma \Vdash t_2 = t_3 : \tau$}
% \prf{$w_2 = w_2'$ \hfill by Lemma~\ref{lem:detwhnf} (\ref{it:comp-detwhnf})}
% \prf{\emph{Sub-Case}: $\Gamma \vdash t_1 \whnf \Psi_1 : \tmctx$
%   and $\Gamma \vdash t_2 \whnf \Psi_2 : \tmctx$
%   and $\Gamma \vdash t_3 \whnf \Psi_3 : \tmctx$
%   and $\Gamma \vdash \Psi_1 \equiv \Psi_2 :\tmctx$
%   and $\Gamma \vdash \Psi_2 \equiv \Psi_3 :\tmctx$}
% \prf{$\Gamma \vdash \Psi_1 \equiv \Psi_3 : \tmctx$ \hfill by transitivity of $\equiv$}
% \prf{$\Gamma \Vdash t_1 = t_3 : \tmctx$ \hfill by semantic equ. def.}
% \prf{\emph{Sub-Case}: $\Gamma \vdash t_1 \whnf n_1 : \tmctx$
%   and $\Gamma \vdash t_2 \whnf n_2 : \tmctx$
%   and $\Gamma \vdash t_3 \whnf n_3 : \tmctx$
%   and $\neut n_1, n_2, n_3$
%   and $\Gamma \vdash n_1 \equiv n_2 :\tmctx$
%   and $\Gamma \vdash n_2 \equiv n_3 :\tmctx$}
% \prf{$\Gamma \vdash n_1 \equiv n_3 :\tmctx$ \hfill by transitivity of $\equiv$}
% \prf{$\Gamma \Vdash t_1 = t_3 : \tmctx$ \hfill by semantic equ. def.}
\\[0.5em]
Other cases are trivial.
\\[1.5em]
%
% Symmetry for Types (Prop. \ref{it:symtyp}): To Show: \fbox{$\Gamma \sem \tau' = \tmctx : u$}\\
% \prf{$\Gamma \vdash \tau' \whnf \tmctx : u$ \hfill by $\Gamma \Vdash \tmctx = \tau' : u$}
% \prf{$\Gamma \tau' = \tau : u$ \hfill by semantic equ. def.}
% \\[0.5em]
% Transitivity for Types (Prop. \ref{it:transtyp}): To Show:
% \fbox{ If\/ $\Gamma \Vdash \tau' = \tau'' : u$ and $\Gamma \Vdash \tau'' : u$ then
% $\Gamma \Vdash \tmctx = \tau'' : u$.}\\
% \prf{$\Gamma \vdash \tau' \whnf \tmctx : u$ \hfill by $\Gamma \Vdash \tmctx = \tau' : u$}
% \prf{$\Gamma \vdash \tau'' \whnf \tmctx : u$ \hfill by $\Gamma \Vdash \tmctx = \tau'' : u$}
% \prf{$\Gamma \tau = \tau'' : u$ \hfill by semantic equ. def.}
% \\[0.5em]
% Conversion for Terms (Prop.\ref{it:conv}): To Show: \fbox{$\Gamma \Vdash t_1 = t_2 : \tau'$.}\\
% \prf{$\Gamma \vdash t_1 \whnf w_1 : \tmctx$ \hfill by definition of  $\Gamma \Vdash t_1 = t_2 : \tau$}
% \prf{$\Gamma \vdash t_2 \whnf w_2 : \tmctx$ \hfill by definition of  $\Gamma \Vdash t_1 = t_2 : \tau$}
% \prf{$\Gamma \vdash \tau' \whnf \tmctx : u$ by def. of $\Gamma \Vdash \tau = \tau' : u$}
% \prf{$\Gamma \Vdash t_1 = t_2 : \tau'$ \hfill by semantic equ. def.}
%
\pcase{\fbox{\fbox{$\tau = \cbox T$,~i.e. $\Gamma \vdash \tau \whnf \cbox T : u$
  and $\Gamma \Vdash_\LF T = T $ where $T = \Psi \vdash A$}}}
Symmetry for Terms (Prop. \ref{it:sym}): To Show: \fbox{$\Gamma \Vdash t_2 = t_1 : \cbox T$.}\\
\prf{$\Gamma \vdash t_1 \whnf w_1 : \tau$ \hfill by definition of  $\Gamma \Vdash t_1 = t_2 : \tau$}
\prf{$\Gamma \vdash t_2 \whnf w_2 : \tau$ \hfill by definition of  $\Gamma \Vdash t_1 = t_2 : \tau$}
\prf{\emph{Sub-Case}: $\Gamma \vdash t_1 \whnf \cbox{C} : \cbox{T}$
  and $\Gamma \vdash t_2 \whnf \cbox{C'} : \cbox T$
  and $\Gamma \Vdash_{\LF} C = C' : T$}
\prf{Consider $C = (\hatctx{\Psi} \vdash M)$ and $C' = (\hatctx{\Psi} \vdash N)$ and $T = \Psi \vdash A$ (proof is the same for case $C = (\hatctx{\Psi} \vdash \sigma)$)}
% \prf{$\Psi \lfwhnf \Psi'$ \hfill by $\Gamma \Vdash C = C' : T$}
\prf{$\Gamma ; \Psi \Vdash M = N : A$ \hfill by $\Gamma \Vdash_\LF C = C' : T$}
\prf{$\Gamma ; \Psi \Vdash N = M : A$ \hfill by Lemma \ref{lem:semsymlf} (\ref{it:symclf})}
\prf{$\Gamma \Vdash_\LF C' = C : T$ \hfill by sem. def.}
\prf{$\Gamma \Vdash t_2 = t_1 : \cbox T$ \hfill by semantic equ. def.}
\\[-0.75em]
\prf{\emph{Sub-Case}: $\neut w_1, w_2$ and $\Gamma \vdash w_1 \equiv w_2 : \cbox{T}$}
\prf{$\Gamma \vdash w_2 \equiv w_1 : \cbox{T}$ \hfill by symmetry of $\equiv$}
\prf{$\Gamma \Vdash t_2 = t_1 : \cbox T$ \hfill by semantic equ. def.}
\\
Transitivity for Terms (Prop.\ref{it:trans}):
To Show \fbox{If\/ $\Gamma \Vdash t_2 = t_3 : \cbox T $ then  $\Gamma \Vdash  t_1 = t_3 : \cbox T$.}
 \\%-- Follows similar principles as above
% \highlight{To Check} - checked -bp\\
\prf{$\Gamma \vdash t_1 \whnf w_1 : \tau$ \hfill by definition of  $\Gamma \Vdash t_1 = t_2 : \tau$}
\prf{$\Gamma \vdash t_2 \whnf w_2 : \tau$ \hfill by definition of  $\Gamma \Vdash t_1 = t_2 : \tau$}
\prf{$\Gamma \vdash t_2 \whnf w_2' : \tau$ \hfill by definition of  $\Gamma \Vdash t_2 = t_3 : \tau$}
\prf{$\Gamma \vdash t_3 \whnf w_3 : \tau$ \hfill by definition of  $\Gamma \Vdash t_2 = t_3 : \tau$}
\prf{$w_2 = w_2'$ \hfill by Lemma~\ref{lem:detwhnf} (\ref{it:comp-detwhnf})}
\prf{$\Gamma ; \Psi \Vdash \unbox{w_1}{\id} = \unbox{w_2}{\id} : A$ \hfill by def.  $\Gamma \Vdash t_1 = t_2 : \tau$}
\prf{$\Gamma ; \Psi \Vdash \unbox{w_2}{\id} = \unbox{w_3}{\id} : A$ \hfill by def.  $\Gamma \Vdash t_2 = t_3 : \tau$}
\prf{$\Gamma ; \Psi \Vdash \unbox{w_1}{\id} = \unbox{w_3}{\id} : A$ \hfill by Lemma~\ref{lem:semsymlf} (\ref{it:transclf})}
\prf{$\Gamma \Vdash t_1 = t_3 : \cbox{\Psi \vdash A}$ \hfill by sem. equ. def.}
% \\[-0.75em]
% \prf{\emph{Sub-Case}: $\Gamma \vdash t_1 \whnf \cbox{C_1} : \cbox{T}$
%   and $\Gamma \vdash t_2 \whnf \cbox{C_2} : \cbox T$ and $\Gamma \vdash t_3 \whnf \cbox{C_3} : \cbox T$
%   and $\Gamma \Vdash_\LF C_1 = C_2 : T$ and $\Gamma \Vdash_\LF C_2 = C_3 : T$}
% \prf{Consider $C_1 = (\hatctx{\Psi} \vdash M_1)$ and $C_2 = (\hatctx{\Psi} \vdash M_2)$
%   and $C_3 = (\hatctx{\Psi} \vdash M_3)$ and $T = \Psi \vdash A$}
% \prf{$\Gamma ; \Psi \Vdash M_1 = M_2 : A$ \hfill by $\Gamma \Vdash C_1 = C_2 : T$}
% \prf{$\Gamma ; \Psi \Vdash M_2 = M_3 : A$ \hfill by $\Gamma \Vdash C_2 = C_3 : T$}
% \prf{$\Gamma ; \Psi \Vdash M_1 = M_3 : A$ \hfill by Lemma~\ref{lem:semsymlf} (\ref{it:transclf})}
% \prf{$\Gamma \Vdash_\LF C_1 = C_3 : T$ \hfill by sem. def.}
% \prf{$\Gamma \Vdash t_1 = t_3 : \cbox T$ \hfill by semantic equ. def.}
% \\[-0.75em]
% \prf{\emph{Sub-Case}: $\neut w_1, w_2, w_3$ and $\Gamma \vdash w_1 \equiv w_2 : \cbox{T}$ and $\Gamma \vdash w_2 \equiv w_3 : \cbox{T}$}
% \prf{$\Gamma \vdash w_1 \equiv w_3 : \cbox{T}$ \hfill by transitivity of $\equiv$}
% \prf{$\Gamma \Vdash t_1 = t_3 : \cbox T$ \hfill by semantic equ. def.}
\\
Symmetry for Types (Prop.\ref{it:symtyp}): To Show: \fbox{$\Gamma \sem \tau' =  \cbox T : u$ where $T = \Psi \vdash A$}\\
% \highlight{To Check} - checked -bp\\
\prf{$\Gamma \vdash \tau' \whnf \cbox{T'} : u$ and $\Gamma \vdash T \equiv T'$
  \hfill by $\Gamma \Vdash \tau = \tau' : u$}
\prf{$\Gamma \vdash T' \equiv T$ \hfill by symmetry for LF equ.}
\prf{$\Gamma \Vdash \tau' = \tau : u$ \hfill by semantic equ. def.}
\\
Transitivity for Types (Prop.\ref{it:transtyp}): To Show:
\fbox{ If\/ $\Gamma \Vdash \tau' = \tau'' : u$ and $\Gamma \Vdash \tau'' : u$ then
$\Gamma \Vdash \cbox T = \tau'' : u$.}\\
% \highlight{To Check} - checked -bp\\
\prf{$\Gamma \vdash \tau' \whnf \cbox{T'} : u$ and $\Gamma \vdash T \equiv T'$
  \hfill by $\Gamma \Vdash \tau = \tau' : u$}
\prf{$\Gamma \vdash \tau'' \whnf \cbox{T''} : u$ and $\Gamma \vdash T' \equiv T''$
  \hfill by $\Gamma \Vdash \tau' = \tau'' : u$}
\prf{$\Gamma \vdash T \equiv T''$ \hfill by transitivity for LF equ.}
\prf{$\Gamma \Vdash \tau = \tau'' : u$ \hfill by semantic equ. def.}
\\
Conversion for Terms (Prop.\ref{it:conv}): To Show:
\fbox{$\Gamma \Vdash t_1 = t_2 : \tau'$.} \\ % -- Follows similar principles as above
% \highlight{To Check} - checked -bp\\
\prf{$\Gamma \vdash t_1 \whnf w_1 : \cbox T$ \hfill by definition of  $\Gamma \Vdash t_1 = t_2 : \tau$}
\prf{$\Gamma \vdash t_2 \whnf w_2 : \cbox T$ \hfill by definition of  $\Gamma \Vdash t_1 = t_2 : \tau$}
\prf{$\Gamma \vdash \tau' \whnf \cbox{T'} : u$ and $\Gamma \vdash T \equiv T'$ \hfill by def. of $\Gamma \Vdash \tau = \tau' : u$}
% \prf{$\Gamma \vdash T \equiv T'$ \hfill by $\Gamma \semlf T = T'$}
\prf{$\Gamma \vdash \cbox{T} \equiv \cbox{T'} : u$ \hfill by decl. equ. def.}
\prf{$\Gamma \vdash \tau \equiv \cbox{T} : u$ \hfill by $\Gamma \vdash
  \tau \whnf \cbox{T}$ (since $\whnf$ rules are a subset of $\equiv$)}
\prf{$\Gamma \vdash \tau' \equiv \cbox{T'} : u$ \hfill by $\Gamma \vdash \tau' \whnf \cbox{T'}$ (since $\whnf$ rules are a subset of $\equiv$)}
\prf{$\Gamma \vdash \tau \equiv \tau' : u$ \hfill by transitivity and
  symmetry of decl. equality ($\equiv$)}
\prf{$\Gamma \vdash t_i : \tau'$ and $\Gamma \vdash w_i : \tau'$ for $i = 1,2$
  \hfill by typing rules using $\Gamma \vdash t_i : \cbox T$}
\prf{$\Gamma \vdash t_1 \whnf w_1 : \tau' $ and $\Gamma \vdash t_2  \whnf w_2 : \tau' $
  \hfill by Def.~\ref{def:typedwhnf}}
\prf{$\Gamma \Vdash t_1 = t_2 : \tau'$ \hfill by semantic equ. def.}
\\[0.5em]
\pcase{\fbox{\fbox{$\tau = (y : \ann\tau_1) \arrow \tau_2$ i.e.\ $\Gamma \vdash \tau \whnf (y : \ann\tau_1) \arrow \tau_2 : u$}}}
Symmetry for Terms (Prop. \ref{it:sym}):  To Show:
\fbox{$\Gamma \Vdash t_2 = t_1 : (y : \ann\tau_1) \arrow \tau_2$ }\\
\prf{$\Gamma \vdash t_1 \whnf w_1 : \tau$ \hfill by definition of  $\Gamma \Vdash t_1 = t_2 : \tau$}
\prf{$\Gamma \vdash t_2 \whnf w_2 : \tau$ \hfill by definition of  $\Gamma \Vdash t_1 = t_2 : \tau$}
\prf{Assume $\Gamma' \leq_\rho \Gamma$ and $\Gamma' \Vdash s_2 = s_1 : \{\rho\}\ann\tau_1$}
\prf{\smallerderiv{$\Gamma' \Vdash \{\rho\}\ann\tau_1 : u_1$} \hfill by $\Gamma \Vdash \tau : u$}
\prf{$\Gamma' \sem s_1 = s_2 : \{\rho\}\ann\tau_1$
  \hfill by induction hypothesis (Prop. \ref{it:sym}), symmetry}
\prf{$\Gamma' \sem \{\rho\}w_1~s_1 = \{\rho\}w_2~s_2 : \{\rho,s_1/y\}\tau_2$
  \hfill by definition of  $\Gamma \Vdash t_1 = t_2 : \tau$}
%\prf{$\Gamma' \sem \{\rho,s_1/y\}\tau_2 = \{\rho,s_2/y\}\tau_2 : u_2$
%  \hfill by assumption $\Gamma \sem \tau : u$}
\prf{\smallerderiv{$\Gamma' \sem \{\rho,s_1/y\}\tau_2 : u_2$} \hfill by assumption $\Gamma \sem \tau : u$ }
\prf{$\Gamma' \sem \{\rho\}w_2~s_2 = \{\rho\}w_1~s_1 : \{\rho,s_1/y\}\tau_2$
  \hfill by induction hypothesis (Prop. \ref{it:sym}), symmetry}
\prf{$\Gamma \vdash \tau' \whnf (y : \ann\tau_1') \arrow \tau_2' : u$
  \hfill by definition of $\Gamma \Vdash \tau = \tau' : u$}
\prf{$\Gamma' \Vdash \{\rho,s_1/y\}\tau_2 = \{\rho,s_2/y\}\tau_2' : u_2$
  \hfill by definition of $\Gamma \Vdash \tau = \tau' : u$}
\prf{$\Gamma' \Vdash s_2 = s_2 : \{\rho\}\ann\tau_1$
  \hfill by induction hypothesis (Prop. \ref{it:refl}), reflexivity}
\prf{$\Gamma' \Vdash \{\rho,s_2/y\}\tau_2 = \{\rho,s_2/y\}\tau_2' : u_2$
  \hfill by definition of $\Gamma \Vdash \tau = \tau' : u$}
\prf{$\Gamma' \Vdash \{\rho,s_2/y\}\tau_2' = \{\rho,s_2/y\}\tau_2 : u_2$
  \hfill by induction hypothesis (Prop. \ref{it:symtyp}), symmetry}
\prf{$\Gamma' \Vdash \{\rho,s_1/y\}\tau_2 = \{\rho,s_2/y\}\tau_2 : u_2$
  \hfill by induction hypothesis (Prop. \ref{it:transtyp}), transitivity}
\prf{$\Gamma' \sem \{\rho\}w_2~s_2 = \{\rho\}w_1~s_1 : \{\rho,s_2/y\}\tau_2$
  \hfill by induction hypothesis (Prop. \ref{it:conv}), conversion}
\prf{$\Gamma \Vdash t_2 = t_1 : \tau$ \hfill since $\Gamma',\rho,s_2,s_1$ were arbitrary}
\\
Transitivity for Terms (Prop. \ref{it:trans}):
\fbox{ If\/ $\Gamma \Vdash t_2 = t_3 : (y : \ann\tau_1) \arrow \tau_2$ then
  $\Gamma \Vdash t_1 = t_3 : (y : \ann\tau_1) \arrow \tau_2$.}\\
\prf{$\Gamma \vdash t_1 \whnf w_1 : \tau$ \hfill by definition of  $\Gamma \Vdash t_1 = t_2 : \tau$}
\prf{$\Gamma \vdash t_2 \whnf w_2 : \tau$ \hfill by definition of  $\Gamma \Vdash t_1 = t_2 : \tau$}
\prf{$\Gamma \vdash t_2 \whnf w_2' : \tau$ \hfill by definition of  $\Gamma \Vdash t_2 = t_3 : \tau$}
\prf{$\Gamma \vdash t_3 \whnf w_3 : \tau$ \hfill by definition of  $\Gamma \Vdash t_2 = t_3 : \tau$}
\prf{$w_2 = w_2'$ \hfill by determinacy of weak head evaluation (Lemma \ref{lem:detwhnf})}
\prf{Assume $\Gamma' \leq_\rho \Gamma$ and $\Gamma' \Vdash s_1 = s_3 : \{\rho\}\ann\tau_1$}
\prf{\smallerderiv{$\Gamma' \Vdash \{\rho\}\ann\tau_1 : u_1$} \hfill by $\Gamma \Vdash \tau : u$}
\prf{$\Gamma' \sem s_1 = s_1 : \{\rho\}\ann\tau_1$
  \hfill by induction hypothesis (Prop. \ref{it:refl}), reflexivity}
%\prf{$\Gamma' \sem \{\rho,s_1/y\}\tau_2 = \{\rho,s_1/y\}\tau_2 : u_2$
%  \hfill by assumption $\Gamma \sem \tau : u$}
%
\prf{\smallerderiv{$\Gamma' \sem \{\rho,s_1/y\}\tau_2 : u_2$} \hfill  by assumption $\Gamma \sem \tau : u$ }
\prf{$\Gamma' \sem \{\rho\}w_1~s_1 = \{\rho\}w_2~s_1 : \{\rho,s_1/y\}\tau_2$
  \hfill by assumption $\Gamma \sem t_1 = t_2 : \tau$}
\prf{$\Gamma' \sem \{\rho\}w_2~s_1 = \{\rho\}w_3~s_3 : \{\rho,s_1/y\}\tau_2$
  \hfill by assumption $\Gamma \sem t_2 = t_3 : \tau$}
\prf{$\Gamma' \sem \{\rho\}w_1~s_1 = \{\rho\}w_3~s_3 : \{\rho,s_1/y\}\tau_2$
  \hfill by induction hypothesis (Prop. \ref{it:trans}), transitivity}
\prf{$\Gamma \Vdash t_1 = t_3 : \tau$  \hfill since $\Gamma',\rho,s_1,s_3$ were arbitrary}
\\
Symmetry for Types (Prop. \ref{it:symtyp}):\fbox{$\Gamma \sem \tau' = \tau : u$}\\
\prf{$\Gamma \sem (y : \ann\tau_1) \arrow \tau_2  = \tau' : u$ \hfill by assumption}
\prf{$\Gamma \vdash \tau' \whnf (y : \ann\tau_1') \arrow \tau_2'$
  \hfill by definition of $\Gamma \sem \tau = \tau' : u$}
\prf{Assume $\Gamma' \leq_\rho \Gamma$ and $\Gamma' \Vdash s' = s : \{\rho\}\ann\tau_1'$.}
\prf{\smallerderiv{$\Gamma' \Vdash \{\rho\}\ann\tau_1 : u_1$} \hfill by $\Gamma \Vdash \tau : u$}
\prf{$\Gamma' \Vdash \{\rho\}\ann\tau_1 = \{\rho\}\ann\tau_1' : u_1$ \hfill by $\Gamma \Vdash \tau = \tau' : u$}
\prf{$\Gamma' \Vdash \{\rho\}\ann\tau_1' = \{\rho\}\ann\tau_1 : u_1$ \hfill by induction hypothesis (Prop. \ref{it:symtyp})}
\prf{\highlight{$\Gamma' \vdash \{\rho\}\ann\tau_1' : u_1$} \hfill by $\Gamma \Vdash \tau' : u$}
\prf{$\Gamma' \Vdash s' = s : \{\rho\}\ann\tau_1$ \hfill by induction hypothesis (Prop. \ref{it:conv}), conversion}
\prf{$\Gamma' \Vdash s = s' : \{\rho\}\ann\tau_1$ \hfill by induction hypothesis (Prop. \ref{it:sym}), symmetry for terms}
\prf{$\Gamma' \sem \{\rho, s/y\}\tau_2 = \{\rho, s'/y\}\tau_2' : u_2$  \hfill by $\Gamma \sem \tau = \tau' : u$}
% \prf{$\Gamma' \Vdash \{\rho, s/y\}\tau_2 = \{\rho, s'/y\}\tau_2 : u_2$ \hfill by $\Gamma \Vdash \tau : u$}
\prf{$\Gamma' \sem s = s : \{\rho\}\ann\tau_1$
  \hfill by induction hypothesis (Prop. \ref{it:refl}), reflexivity}
\prf{\smallerderiv{$\Gamma' \Vdash \{\rho, s/y\}\tau_2 : u_2$} \hfill by $\Gamma \Vdash \tau : u$}
\prf{$\Gamma' \sem \{\rho, s'/y\}\tau_2' =  \{\rho, s/y\}\tau_2 : u$ \hfill by induction hypothesis (Prop. \ref{it:symtyp}), symmetry for types}
%
% \prf{$\Gamma' \sem \{\rho, s'/y\}\tau_2' = \{\rho, s/y\}\tau_2' : u$
%   \hfill by $\Gamma \Vdash \tau' : u$}
% \prf{$\Gamma' \sem \{\rho\}\ann\tau_1 = \{\rho\}\ann\tau_1' : u$ \hfill by $\Gamma \sem \tau = \tau' : u$}
% \prf{$\Gamma' \sem \{\rho\}\ann\tau_1' = \{\rho\}\ann\tau_1 : u$ \hfill by induction hypothesis (Prop. \ref{it:symtyp}, symmetry for types)}
% \prf{$\Gamma' \sem s = s' : \{\rho\}\ann\tau_1$
%    \hfill by induction hypothesis (Prop. \ref{it:conv}), type conversion}
%
% \prf{$\Gamma' \sem \{\rho, s'/y\}\tau_2 = \{\rho, s/y\}\tau_2 : u$
%   \hfill by $\Gamma \Vdash \tau : u$}
% \prf{$\Gamma' \sem \{\rho, s'/y\}\tau_2 = \{\rho, s/y\}\tau_2' : u$
%   \hfill by $\Gamma \sem \tau = \tau' : u$}
% \prf{$\Gamma' \sem \{\rho, s/y\}\tau_2' = \{\rho, s'/y\}\tau_2: u$
% \hfill  by induction hypothesis (Prop. \ref{it:symtyp}), symmetry for types}
% %\prf{$\Gamma' \sem \{\rho, s'/y\}\tau_2' = \{\rho, s'/y\}\tau_2: u$
% %  \hfill by induction hypothesis (Prop. \ref{it:transtyp}), transitivity for types}
% \prf{$\Gamma' \sem \{\rho,s'/y\}\tau_2' = \{\rho, s/y\}\tau_2 : u$
%   \hfill by induction hypothesis (Prop. \ref{it:transtyp}), transitivity for types}
\prf{$\Gamma \sem \tau' = \tau : u$ \hfill since $\Gamma', \rho, s, s'$ were arbitrary}
\\
Transitivity for Types (Prop. \ref{it:transtyp}):
\fbox{ If\/ $\Gamma \Vdash \tau' = \tau'' : u$ and $\Gamma \Vdash \tau'' : u$ then
$\Gamma \Vdash \tau = \tau'' : u$.}\\
\prf{$\Gamma \sem (y : \ann\tau_1') \arrow \tau_1 = \tau' : u$ \hfill by assumption}
\prf{$\Gamma \vdash \tau' \whnf (y:\ann\tau_2') \arrow \tau_2 : u$ \hfill by definition of $\Gamma \sem \tau = \tau':u$
and determinacy of reduction}
\prf{$\Gamma \vdash \tau'' \whnf (y:\ann\tau_3') \arrow \tau_3 : u$ \hfill by definition of $\Gamma \sem \tau' = \tau'' : u$ }
\prf{Assume $\Gamma' \leq_\rho \Gamma$ and $\Gamma' \sem s_1 = s_3 : \{\rho\}\ann\tau_1'$}
\prf{\smallerderiv{$\Gamma' \Vdash \{\rho\}\ann\tau_1' : u_1$} \hfill by $\Gamma \Vdash \tau : u$}
\prf{$\Gamma' \sem s_1 = s_1 : \{\rho\}\ann\tau_1'$
  \hfill  by induction hypothesis (Prop. \ref{it:refl}), reflexivity}
\prf{$\Gamma' \sem \{\rho,s_1/y\}\tau_1 = \{\rho, s_1/y\}\tau_2 : u_2$
   \hfill by definition of $\Gamma \sem \tau = \tau' :u$}
\prf{$\Gamma' \sem \{\rho\}\ann\tau_1' = \{\rho\}\ann\tau_2' : u_1$
   \hfill by definition of $\Gamma \sem \tau = \tau' :u$}
\prf{$\Gamma' \sem s_1 = s_1 : \{\rho\}\ann\tau_2'$ \hfill
      \hfill by induction hypothesis (Prop. \ref{it:conv}), type conversion}
\prf{$\Gamma' \sem \{\rho,s_1/y\}\tau_2 = \{\rho, s_1/y\}\tau_3 : u_2$
   \hfill by definition of $\Gamma \sem \tau' = \tau'' : u$}
% \prf{$\Gamma' \Vdash \{\rho, s_1/y\}\tau_1 = \{\rho, s_1/y\}\tau_1 : u_2$ \hfill by $\Gamma \Vdash \tau : u$}
\prf{\smallerderiv{$\Gamma' \Vdash \{\rho, s_1/y\}\tau_1 : u_2$} \hfill by $\Gamma \Vdash \tau : u$ }
\prf{$\Gamma' \sem \{\rho,s_1/y\}\tau_1 = \{\rho, s_1/y\}\tau_3 : u_2$
   \hfill by induction hypothesis (Prop. \ref{it:transtyp}), transitivity for types}
\prf{$\Gamma' \sem \{\rho\}\ann\tau_2' = \{\rho\}\ann\tau_3' : u_1$
   \hfill by definition of $\Gamma \sem \tau' = \tau'' :u$}
\prf{$\Gamma' \sem \{\rho\}\ann\tau_1' = \{\rho\}\ann\tau_3' : u_1$
   \hfill by induction hypothesis (Prop. \ref{it:transtyp}), transitivity for types}
\prf{$\Gamma' \sem s_1 = s_3 : \{\rho\}\tau_3'$
    \hfill by  induction hypothesis (\ref{it:conv}), type conversion}
\prf{$\Gamma' \sem \{\rho, s_1/y\}\tau_3 = \{\rho, s_3/y\}\tau_3 : u_2$
    \hfill by $\Gamma \sem \tau'' : u$}
\prf{$\Gamma \sem \{\rho, s_1/y\}\tau_1 = \{\rho, s_3/y\}\tau_3 : u_2$
   \hfill by induction hypothesis (Prop. \ref{it:transtyp}), transitivity for types}
\prf{$\Gamma \sem \tau = \tau'' : u$ \hfill since $\Gamma', \rho, s_1, s_3$ were arbitrary}
\\
Conversion (Prop. \ref{it:conv}). \fbox{$\Gamma \Vdash t_1 = t_2 : \tau'$.}
 %% We have a simultaneous induction on $\Gamma \sem \tau' : u$ and $\Gamma \sem \tau = \tau' : u$; we show the case for function types; the other cases is analogous.
%% \highlight{To Check}
\\
\prf{$\Gamma \der \tau \equiv \tau' : u$  \hfill by Well-formedness Lemma~\ref{lm:semwf}}
\prf{$\Gamma \vdash t_1 \whnf w_1 : \tau$ \hfill by definition of  $\Gamma \Vdash t_1 = t_2 : \tau$}
\prf{$\Gamma \vdash t_1 \whnf w_1 : \tau'$ \hfill by the conversion rule}
\prf{$\Gamma \vdash t_2 \whnf w_2 : \tau'$ \hfill ditto}
\prf{$\Gamma \vdash \tau' \whnf (y : \ann\tau_1') \arrow \tau_2' : u$ \hfill by definition of  $\Gamma \Vdash \tau = \tau' : u$}
\prf{Assume $\Gamma' \leq_\rho \Gamma$ and $\Gamma' \Vdash s_1 = s_2 : \{\rho\}\ann\tau_1'$}
\prf{$\Gamma' \Vdash \{\rho\}\ann\tau_1 =\{\rho\}\ann\tau_1' : u_1$
   \hfill by definition of $\Gamma \sem \tau = \tau' : u_1$}
\prf{\smallerderiv{$\Gamma' \Vdash \{\rho\}\ann\tau_1 : u_1$} \hfill by $\Gamma \Vdash \tau : u$}
 \prf{$\Gamma' \Vdash \{\rho\}\ann\tau_1' =\{\rho\}\ann\tau_1 : u_1$
    \hfill by induction hypothesis (Prop. \ref{it:symtyp}), symmetry}
\prf{\highlight{$\Gamma' \Vdash s_1 = s_2 : \{\rho\}\ann\tau_1$
  \hfill by induction hypothesis (Prop. \ref{it:conv}) on $\Gamma' \sem \{\rho\}\ann\tau_1' : u_1$, conversion}}
\prf{$\Gamma' \sem \{\rho\}w_1~s_1 = \{\rho\}w_2~s_2 : \{\rho,s_1/y\}\tau_2$
  \hfill by assumption $\Gamma \sem t_1 = t_2 : \tau$}
\prf{$\Gamma' \sem s_1 = s_1 : \{\rho\}\ann\tau_1$
  \hfill by induction hypothesis (\ref{it:refl}), reflexivity}
% \prf{$\Gamma' \sem \{\rho,s_1/y\}\tau_2 = \{\rho,s_1/y\}\tau_2 : u_2$
%  \hfill by definition of $\Gamma \sem \tau : u$}
\prf{\smallerderiv{$\Gamma' \sem \{\rho,s_1/y\}\tau_2 : u_2$} \hfill  by definition of $\Gamma \sem \tau : u$}
\prf{$\Gamma' \sem \{\rho,s_1/y\}\tau_2 = \{\rho,s_1/y\}\tau_2'$
  \hfill by definition of $\Gamma \sem \tau = \tau' : u$}
%\prf{$\Gamma' \sem \{\rho,s_1/y\}\tau_2' = \{\rho,s_1/y\}\tau_2$
%  \hfill  by induction hypothesis (Prop. \ref{it:sym}), symmetry}
\prf{$\Gamma' \sem \{\rho\}w_1~s_1 = \{\rho\}w_2~s_2 : \{\rho,s_1/y\}\tau_2'$
  \hfill by induction hypothesis (Prop. \ref{it:conv}), conversion}
\prf{$\Gamma \Vdash t_1 = t_2 : \tau'$  \hfill since $\Gamma',\rho,s_1,s_2$ were arbitrary}
% \pcase{$\tau = \univ k$}
% omitted -- To check
%
\\
% \highlight{To Check Remaining Cases} -- checked -bp
\pcase{\fbox{\fbox{$\tau = u'$,~i.e. $\Gamma \Vdash \tau : u$ where
$\Gamma \vdash \tau \whnf u' : u$ and $u' < u$}}}
Symmetry for Terms (Prop. \ref{it:sym}): To Show:
\fbox{$\Gamma \Vdash t_2 = t_1 : u'$}\\
\prf{$\Gamma \Vdash t_2 = t_1 : u'$ \hfill by IH using Symmetry for Types (Prop. \ref{it:symtyp}) (since $u' < u$)}
%
% \prf{By case analysis on $t_1 \whnf w_1$}
% \\[0.5em]
% \prf{\emph{Sub-case:} $\Gamma \vdash t_1 \whnf \cbox{T} : u'$}
% \prf{$\Gamma \vdash t_2 \whnf \cbox{T'} : u'$ and $\Gamma \semlf T = T'$
%   \hfill by $\Gamma \Vdash \cbox{T} = t_2 : u'$}
% \prf{$\Gamma \semlf T' = T$ \hfill by Lemma~\ref{lem:semsymlf} (Prop.\ref{it:symclf})}
% \prf{$\Gamma \Vdash t_2 = t_1 : u'$ \hfill by semantic equ. def.}
% \prf{\emph{Sub-case:} $\Gamma \vdash t_1 \whnf u'' : u'$}
% \prf{$\Gamma \vdash t_2 \whnf u'' : u'$ \hfill by $\Gamma \Vdash u'' = t_2 : u'$}
% \prf{$\Gamma \Vdash t_2 = t_2 : u'$ \hfill by semantic equ. def.}
% \prf{\emph{Sub-case:} $\Gamma \vdash t_1 \whnf (y : \ann\tau_1) \arrow \tau_2 : u$}
% \highlight{Finish subcase and other missing cases}
\\[1em]
Transitivity for Terms (Prop. \ref{it:trans}): To Show:
\fbox{If\/ $\Gamma \Vdash t_2 = t_3 : u'$ then  $\Gamma \Vdash t_1 = t_3 : u'$.}
\\[1em]
\prf{$\Gamma \vdash t_1 = t_3:u'$ \hfill by IH using Transitivity for Types (Prop. \ref{it:transtyp}) (since $u' < u$)}
\\[1em]
Symmetry for Types (Prop. \ref{it:symtyp}): To Show:
\fbox{$\Gamma \sem \tau' = u' : u$}
\\[1em]
\prf{$\Gamma \vdash \tau' \whnf u' : u$ \hfill by $\Gamma \Vdash u' = \tau' : u$}
\prf{$\Gamma \vdash \tau' = \tau : u$ \hfill since $\Gamma \vdash \tau \whnf u' : u$ and $u' < u$  (by assumption), and $\Gamma \vdash \tau' \whnf u' : u$ }
\\[1em]
Transitivity for Types (Prop. \ref{it:transtyp}): To Show:
\fbox{ If\/ $\Gamma \Vdash \tau' = \tau'' : u$ and $\Gamma \Vdash \tau'' : u$ then
$\Gamma \Vdash u' = \tau'' : u$.}
\\[1em]
\prf{$\Gamma \vdash \tau' \whnf u' : u$ \hfill $\Gamma \Vdash u' =  \tau' : u$ }
\prf{$\Gamma \vdash \tau'' \whnf u' : u$ \hfill by $\Gamma \Vdash \tau' = \tau'' : u$ }
\prf{$\Gamma \Vdash u' = \tau'' : u$ \hfill using sem. equ. def. and the assumption $u' < u$}
\\[1em]
Conversion (Prop. \ref{it:conv}): To Show: \fbox{$\Gamma \Vdash t_1 = t_2 : \tau'$.}
\\[1em]
\prf{$\Gamma \Vdash t_1 = t_2 : \tau$ and
     $\Gamma \Vdash \tau : u$ where $\Gamma \vdash \tau \whnf u' : u$ and $u' < u$\hfill by assumption}
\prf{$\Gamma \Vdash u' = \tau' : u$ and $u' < u$ \hfill by assumption}
\prf{$\Gamma \vdash \tau' \whnf u' : u$ \hfill by $\Gamma \Vdash u' = \tau' : u$ }
\prf{$\Gamma \Vdash t_1 = t_2 : \tau'$ \hfill since $\Gamma \Vdash \tau' : u$}
% \prf{\mbox{\hspace{1cm}}\hfill \unsure{we note\footnote{-bp not sure whether this case holds water ...} that sem. equ. def. for types is defined on the whnf of $\tau'$ which is $u'$.}} }
% \prf{$\Gamma \Vdash t_1 = t_2 : \tau'$ \hfill since $\Gamma \Vdash \tau' : u$}
% prf{\mbox{\hspace{1cm}}\hfill we note that sem. equ. def. for types is defined on the whnf of $\tau'$ which is $u'$.}
\\% [0.5em]
% \prf{We prove by induction on $\Gamma \Vdash t_1 : u'$:\\
% \fbox{If $\Gamma \Vdash \tau' : u$ and $\Gamma \Vdash t_1 = t_2 : u'$ and $\Gamma \Vdash u' = \tau' : u$ and $u' < u$ then $\Gamma \Vdash t_1 = t_2 : \tau'$}
% }
% \\
% \pcase{$t_1 = \cbox T$}
% \pcase{$t_1 = u''$, i.e. $u'' < u'$ and $\Gamma \vdash t_1 \whnf u'' : u'$}
% \prf{$\Gamma \vdash t_2 \whnf u'' : u'$ \hfill by $\Gamma \Vdash t_1 = t_2 : u'$}
% \prf{$\Gamma \vdash t_1 = t_2 : \tau'$ \hfill by sem. def. using $\Gamma \Vdash \tau' : u$}
% \\
% \pcase{$t_1 = (y : \ann\tau_1) \arrow \ann\tau_1'$}
% \pcase{$t_1 = \tmctx$}
% \pcase{$t_1 = x~\vec{t}$}
\pcase{\fbox{\fbox{$\tau = x~\vec{t}$ and $\Gamma \vdash \tau \whnf x~\vec{t} : u$ and $\neut (x~\vec{t})$}}}
\\
Symmetry for Terms (Prop. \ref{it:sym}): To Show: \fbox{$\Gamma \Vdash t_2 = t_1 : x~\vec{s}$.}
\\% [0.5em]
\prf{$\Gamma \Vdash t_1 = t_2 : x~\vec{s}$ \hfill by assumption}
\prf{$\Gamma \vdash t_1 \whnf n_1 : x ~\vec{s}$,~~$\Gamma \vdash t_2 \whnf n_2 : x~\vec{s}$,~~
$\Gamma \vdash n_1 \equiv n_2 : x~\vec{s}$,~~$\neut n_1, n_2$ \hfill by $\Gamma \Vdash t_1 = t_2 : x~\vec{s}$}
\prf{$\Gamma \vdash n_2 \equiv n_1 : x~\vec{s}$ \hfill by symmetry of $\equiv$}
\prf{$\Gamma \Vdash t_2 = t_1 : x~\vec{s}$ \hfill by sem. equ. definition}
\\[0.5em]
Transitivity for Terms (Prop. \ref{it:trans}):
To Show \fbox{If\/ $\Gamma \Vdash t_2 = t_3 : x~\vec{s}$ then  $\Gamma \Vdash  t_1 = t_3 : x~\vec{s}$.}
\\
\prf{$\Gamma \Vdash t_1 = t_2 :  x~\vec{s}$ \hfill by assumption}
\prf{$\Gamma \vdash t_1 \whnf n_1 : x ~\vec{s}$,~~$\Gamma \vdash t_2 \whnf n_2 : x~\vec{s}$,~~$\Gamma \vdash n_1 \equiv n_2 : x~\vec{s}$,~~$\neut n_1, n_2$ \hfill by $\Gamma \Vdash t_1 = t_2 : x~\vec{s}$}
\prf{$\Gamma \vdash t_3 \whnf n_3 : x ~\vec{s}$,
     $\Gamma \vdash n_2 \equiv n_3 : x~\vec{s}$,~~$\neut n_3$ \hfill by $\Gamma \Vdash t_2 = t_3 : x~\vec{s}$}
\prf{$\Gamma \vdash n_1 \equiv n_3 : x~\vec{s}$ \hfill by transitivity of $\equiv$}
\prf{$\Gamma \Vdash t_1 = t_3 : x~\vec{s}$ \hfill by sem. equ. definition}
\\[0.5em]
Symmetry for Types (Prop. \ref{it:symtyp}): To Show: \fbox{$\Gamma \sem \tau' = x~\vec{s} : u$}
\\
 \prf{$\Gamma \Vdash x~\vec{s} = \tau' : u$ where $\Gamma \vdash \tau \whnf x~\vec{s} : u$ \hfill by assumption}
 \prf{$\Gamma \vdash \tau' \whnf x~\vec{t} : u$ and $\Gamma \vdash x~\vec{s} \equiv x~\vec{t} : u$}
 \prf{$\Gamma \vdash x~\vec{t} \equiv x~\vec{s} : u$ \hfill by symmetry of $\equiv$}
 \prf{$\Gamma \Vdash \tau' = \tau : u$ \hfill by sem. equ. definition}
\\[0.5em]
Transitivity for Types (Prop. \ref{it:transtyp}): To Show: \fbox{If\/ $\Gamma \Vdash \tau' = \tau'' : u$ and $\Gamma \Vdash \tau'' : u$ then
 $\Gamma \Vdash x~\vec{s} = \tau'' : u$.}
\\
\prf{$\Gamma \Vdash x~\vec{s} = \tau' : u$ \hfill by assumption}
\prf{$\Gamma \vdash \tau' \whnf x~\vec{s'} : u$ and $\Gamma \vdash x~\vec{s} \equiv x~\vec{s'} : u$ \hfill by $\Gamma \Vdash x~\vec{s} = \tau' : u$ }
\prf{$\Gamma \Vdash \tau' = \tau'' : u$ \hfill by assumption}
\prf{$\Gamma \vdash \tau'' \whnf x~\vec{s''}:u$ and $\Gamma \vdash x~\vec{s'} \equiv x~\vec{s''}:u$ \hfill by $\Gamma \Vdash \tau' = \tau'' : u$}
\prf{$\Gamma \vdash x~\vec{s} = x~\vec{s''} : u$ \hfill by transitivity of $\equiv$}
\prf{$\Gamma \vdash \tau = \tau'' : u$ \hfill by sem. equ. def.}

Conversion (Prop. \ref{it:conv}): \fbox{$\Gamma \Vdash t_1 = t_2 : \tau'$.}
\\
\prf{$\Gamma \Vdash t_1 = t_2 : x~\vec{s} $ \hfill by assumption}
\prf{$\Gamma \vdash t_1 \whnf n_1 : x~\vec{s}$ and $\Gamma \vdash t_2 \whnf n_2 : x~\vec{s}$ and $\Gamma \vdash n_1 \equiv n_2 : x~\vec{s}$ \hfill by $\Gamma \Vdash t_1 = t_2 : x~\vec{s} $}
\prf{$\Gamma \Vdash x~\vec{s} = \tau' : u$ \hfill by assumption}
\prf{$\Gamma \vdash \tau' \whnf x~\vec{s'} : u$ and $\Gamma \vdash x~\vec{s} \equiv x~\vec{s'} : u$ \hfill by $\Gamma \Vdash x~\vec{s} = \tau' : u$ }
\prf{$\Gamma \vdash n_1 \equiv n_2 : x~\vec{s'} $ \hfill using type conversion}
\prf{$\Gamma \vdash t_1 \whnf n_1 : x~\vec{s'}$ and $\Gamma \vdash t_2 \whnf n_2 : x~\vec{s'}$ \hfill using type conversion}
\prf{$\Gamma \Vdash t_1 = t_2 : \tau'$ \hfill by sem. equ. def.}
}
\end{proof}

Finally we establish various elementary properties about our semantic definition that play a key role in the fundamental lemma which we prove later. First, we show that all neutral terms are in the semantic definition.

 \begin{lemma}[Neutral Soundness]\label{lem:NeutSound}$\quad$$\quad$\\
%   \begin{enumerate}
%   \item If\/ $\Gamma \vdash \tau : u$ and $\neut \tau$ then $\Gamma \Vdash \tau : u$
%   \item
If\/ $\Gamma \Vdash \ann\tau : u$ and $\Gamma \vdash t : \ann\tau$ and $\Gamma \vdash t' : \ann\tau$ and $\Gamma \vdash t \equiv t' : \ann\tau$ and $\neut t, t'$ then $\Gamma \Vdash t = t' : \ann\tau$.
%   \item \highlight{If\/ $\Gamma \Vdash \tau : u$ and $\Gamma \vdash \tau : u$ and $\Gamma \vdash \tau' : u$ and $\Gamma \vdash \tau \equiv \tau' : u$ and and $\neut \tau, \tau'$ then $\Gamma \Vdash \tau = \tau' : u$.}
%  and $\neut \tau, \tau'$
%   \end{enumerate}

 \end{lemma}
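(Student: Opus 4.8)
\textbf{Proof proposal for Neutral Soundness (Lemma~\ref{lem:NeutSound}).}
The plan is to proceed by induction on the semantic kinding derivation $\Gamma \Vdash \ann\tau : u$, mirroring the case structure of Fig.~\ref{fig:semkind} and unfolding the definition of semantic equality for terms from Fig.~\ref{fig:sem} in each case. The key observation is that a neutral computation $t$ with $\neut t$ is already in weak head normal form: $t \whnf t$ by the first rule of Fig.~\ref{fig:whnfred} (using $\norm t$, which follows from $\neut t$). So in every case we will take the required whnf witnesses for $t$ and $t'$ to be $t$ and $t'$ themselves, and then discharge whatever residual semantic obligation the definition imposes.

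First I would handle the easy leaf cases. When $\ann\tau = \tmctx$, the statement is vacuous in the sense that $t,t'$ stand for LF contexts and $\Gamma \Vdash t = t' : \tmctx$ just means $\Gamma \vdash t \equiv t' : \tmctx$, which is a hypothesis. When $\Gamma \vdash \tau \whnf u' : u$ with $u' < u$, semantic equality at $u'$ is (by the recursive clause) the same definition applied at the smaller universe; I would appeal to the induction hypothesis — but care is needed here since the recursion in Fig.~\ref{fig:sem} at sort $u'$ is on $\Gamma \Vdash t : u'$, not on $\Gamma \Vdash \tau : u$, so strictly this case needs the lemma to be stated/proved by a suitable well-founded induction (lexicographic on $u$ then on the kinding derivation), exactly as in Lemma~\ref{lem:semsym}; I would set up the induction the same way. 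When $\Gamma \vdash \tau \whnf x~\vec t : u$ is itself neutral, the semantic equality clause $\Gamma \Vdash t = t' : x~\vec s$ requires $t \whnf n$, $t' \whnf n'$ with $\neut n, n'$ and $\Gamma \vdash n \equiv n' : x~\vec s$; take $n = t$, $n' = t'$ and use the hypotheses $\neut t, t'$ and $\Gamma \vdash t \equiv t' : \ann\tau$ together with the fact (from $\whnf$ rules $\subseteq \equiv$) that $\Gamma \vdash \tau \equiv x~\vec s$, so type conversion gives $\Gamma \vdash t \equiv t' : x~\vec s$.

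The two substantive cases are $\tau \whnf \cbox{T} : u$ and $\tau \whnf (y{:}\ann\tau_1)\arrow\tau_2 : u$. For the box case, semantic equality $\Gamma \Vdash t = t' : \cbox{\Psi\vdash A}$ asks for $t \whnf w$, $t' \whnf w'$ and $\Gamma;\Psi \Vdash \unbox{w}{\id} = \unbox{w'}{\id} : A$; taking $w=t$, $w'=t'$, I must produce semantic LF equality between $\unbox{t}{\id}$ and $\unbox{t'}{\id}$. Since $\neut t$ (so $\neut t$ as a computation) we have $\unbox{t}{\id} \lfwhnf \unbox{t}{\id}$ (the rule for $\unbox{\cdot}{\cdot}$ with a neutral body) and similarly for $t'$; now $A = \tm$ by our restricted signature, so I invoke the clause of Fig.~\ref{fig:LFsem} for $M,N$ both reducing to $\unbox{t_i}{\sigma_i}$: I need $\typeof(\Gamma\vdash t)$ and $\typeof(\Gamma\vdash t')$ to be $\cbox{\Phi\vdash\tm}$ for suitable $\Phi$ — this follows from Lemma~\ref{lm:typeof} applied to $\Gamma \vdash t : \cbox{\Psi\vdash\tm}$ (and $\neut t$), giving $\Gamma \vdash \Psi \equiv \Phi_1 : \ctx$, similarly $\Phi_2$ for $t'$; the required $\Gamma \vdash t \equiv t' : \cbox{\Phi_1\vdash\tm}$ is the hypothesis $\Gamma \vdash t \equiv t' : \ann\tau$ transported along type conversion; and $\Gamma;\id \Vdash \id = \id : \Phi_1$ is supplied by Lemma~\ref{lm:semlfwk} (semantic weakening/identity substitutions). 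This is the main obstacle: threading the $\typeof$ bookkeeping and the context-conversion coercions correctly so that the premises of the $\unbox{\cdot}{\cdot}$-clause of semantic LF equality are all met. For the function case, semantic equality unfolds to: for all $\Gamma' \leq_\rho \Gamma$ and all $\Gamma' \Vdash s = s' : \{\rho\}\ann\tau_1$, we need $\Gamma' \Vdash \{\rho\}t~s = \{\rho\}t'~s' : \{\rho, s/y\}\tau_2$. Now $\{\rho\}t$ and $\{\rho\}t'$ are still neutral by Lemma~\ref{lem:weaknorm}, hence $\{\rho\}t~s$ and $\{\rho\}t'~s'$ are neutral, and I close by the induction hypothesis applied at the type $\{\rho,s/y\}\tau_2$: to apply it I need $\Gamma' \Vdash \{\rho,s/y\}\tau_2 : u_2$, which comes from the semantic-kinding premise instantiated with $s$ (using reflexivity/symmetry of semantic equality, Lemma~\ref{lem:semsym}, to get $\Gamma' \Vdash s = s : \{\rho\}\ann\tau_1$ from $\Gamma' \Vdash s = s' : \{\rho\}\ann\tau_1$), the typing $\Gamma' \vdash \{\rho\}t~s : \{\rho,s/y\}\tau_2$ and likewise for $t'$ by the application typing rule, the definitional equality $\Gamma' \vdash \{\rho\}t~s \equiv \{\rho\}t'~s' : \{\rho,s/y\}\tau_2$ from congruence of $\equiv$ on application together with $\Gamma' \vdash s \equiv s' : \{\rho\}\ann\tau_1$ (via Lemma~\ref{lm:semwf}), and $\neut (\{\rho\}t~s), \neut(\{\rho\}t'~s')$. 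I expect the write-up to be a straightforward-but-bookkeeping-heavy case analysis once the induction measure is fixed as in Lemma~\ref{lem:semsym}.
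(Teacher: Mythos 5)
Your proposal is correct and follows essentially the same route as the paper's proof: induction on $\Gamma \Vdash \ann\tau : u$, using that neutral computations are their own weak head normal forms, Lemma~\ref{lm:typeof} and Lemma~\ref{lm:semlfwk} in the contextual-type case, and Lemma~\ref{lem:weaknorm} plus the induction hypothesis at $\{\rho,s/y\}\tau_2$ in the function case. Your extra care about the induction measure in the universe case (lexicographic on the sort, as in Lemma~\ref{lem:semsym}) is a reasonable refinement of what the paper leaves implicit.
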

 \begin{proof}
By induction on $\Gamma \Vdash \tau : u$.
% the fact that if $t$ is neutral and well-typed then $\typeof (\Gamma \vdash t)$ exists and returns a type that is equivalent to $\tau$.
% \\[1em]
%
\LONGVERSIONCHECKED{
\\
\noindent
%We show (1): \fbox{If\/ $\Gamma \vdash t : \tau$ and $\Gamma \vdash t' : \tau$ and $\Gamma \vdash t \equiv t' : \tau$ and $\neut t, t'$ then $\Gamma \Vdash t = t' : \tau$.}
%\\
% By induction on $\tau$.
% \\[1em]
\pcase{$\tau = u$}
% We show (2): \fbox{If\/ $\Gamma \vdash t : u$ and $\Gamma \vdash t' : u$ and $\Gamma \vdash t \equiv t' : u$ and $\neut \tau, \tau'$ then $\Gamma \Vdash \tau = \tau' : u$.}
%\\[1em]
\prf{$\neut t$ and $\neut t'$ \hfill by assumption}
\prf{$t = x~\vec{s}$ and $t' = x~\vec{s'}$ \hfill since $\neut t$ and $\neut t$, $\Gamma \vdash t : u$, $\Gamma \vdash t':u$ and $\Gamma \vdash t \equiv t' : u$}
\prf{$\Gamma \vdash x~\vec{s} \equiv x~\vec{s'} : u$ \hfill by assumption $\Gamma \vdash t \equiv t' : u$}
\prf{$\Gamma \vdash t' \whnf  x~\vec{s'}$ and $\Gamma \vdash t \whnf  x~\vec{s}$ \hfill since $\neut t$ and $\neut t'$}
\prf{$\Gamma \vdash x~\vec{s} = t' : u$ \hfill by sem. def.}
%
% \prf{This case holds by (2)}
%
\\
\pcase{$\tau =\cbox{T}$ where $\Psi \vdash \tm$}
\prf{$\Gamma \vdash t : \cbox{T}$ and $\Gamma \vdash t' : \cbox{T}$\hfill by assumption}
\prf{$\neut t$ and $\neut t'$\hfill by assumption}
\prf{$\norm t$ and $\norm t'$\hfill by def. of $\norm / \neut$}
\prf{$t \whnf t$ and $t' \whnf t'$ \hfill by def. of $\whnf$}
\prf{$\Gamma \vdash t \equiv t' : \cbox{T}$ \hfill by assumption}
\prf{$\Gamma \vdash t \whnf t : \cbox{T}$  and $\Gamma \vdash t' \whnf t' : \cbox{T}$ \hfill by Def.~\ref{def:typedwhnf}}
\prf{$\Gamma ; \Psi \vdash \unbox t \id \lfwhnf \unbox t \id : A$ \hfill since $\neut t$}
\prf{$\Gamma ; \Psi \vdash \unbox {t'} \id \lfwhnf \unbox {t'} \id : A$ \hfill since $\neut t'$}
\prf{$\Gamma ; \Psi \vdash \typeof (\Gamma \vdash t) = \cbox{\Phi \vdash \tm}$ and $\Gamma \vdash \Psi \equiv \Phi : \ctx$ \hfill by Lemma \ref{lm:typeof} }
\prf{$\Gamma ; \Psi \vdash \typeof (\Gamma \vdash t') = \cbox{\Phi' \vdash \tm}$ and $\Gamma \vdash \Psi \equiv \Phi' : \ctx$ \hfill by Lemma \ref{lm:typeof} }
\prf{$\Gamma \vdash \Phi \equiv \Phi'  : \ctx$ \hfill by symmetry and transitivity of $\equiv$}
\prf{$\Gamma ; \Psi \Vdash \id = \id : \Psi$ \hfill by Lemma \ref{lm:semlfwk}}
\prf{$\Gamma ; \Psi \Vdash \unbox t \id =  \unbox {t'}\id : A$ \hfill by sem. def.}
\prf{$\Gamma \Vdash t = t' : \cbox{T}$ \hfill by semantic def.}
\\
\pcase{$\tau = (y:\ann\tau_1) \arrow \tau_2$}
\prf{$\Gamma \vdash t : (y:\ann\tau_1) \arrow \tau_2$ and $\Gamma \vdash t' : (y:\ann\tau_1) \arrow \tau_2$ \hfill by assumption}
\prf{$\neut t$ and $\neut t'$ \hfill by assumption}
\prf{$\norm t$ and $\norm t'$ \hfill by def. of $\norm / \neut$}
\prf{$t \whnf t$ and $t' \whnf t'$ \hfill by def. of $\whnf$}
\prf{$\Gamma \vdash t \equiv t' : (y:\ann\tau_1)\arrow \tau_2$ \hfill by assumption}
\prf{$\Gamma \vdash t \whnf t : (y:\ann\tau_1) \arrow \tau_2$ \hfill by Def.~\ref{def:typedwhnf}}
\prf{$\Gamma \vdash t' \whnf t' : (y:\ann\tau_1) \arrow \tau_2$ \hfill by Def.~\ref{def:typedwhnf}}
\prf{Assume $\forall \Gamma' \leq_\rho \Gamma.~\Gamma' \Vdash s = s' :  \{\rho\}\ann\tau_1$}
\prf{\mbox{$\quad$}$\Gamma' \vdash \{\rho\}t \equiv \{\rho\}t': \{\rho\}((y:\ann\tau_1) \arrow \tau_2)$ \hfill by Weakening Lemma \ref{lem:weakcomp}}
\prf{\mbox{$\quad$}$\Gamma' \vdash \{\rho\}t \equiv \{\rho\}t': (y:\{\rho\}\ann\tau_1) \arrow \{\rho, y/y\}\tau_2$ \hfill by subst. def.}
\prf{\mbox{$\quad$}$\Gamma' \vdash s \equiv s' : \{\rho\}\ann\tau_1$ \hfill by  Well-formedness Lemma~\ref{lm:semwf} }
\prf{\mbox{$\quad$}$\Gamma' \vdash \{\rho\}t~s \equiv \{\rho\}t'~s' : \{\rho, s/y\}\tau_2$\hfill by rule}
\prf{\mbox{$\quad$}$\neut \{\rho\}t$ and $\neut \{\rho\}t'$ \hfill by Lemma \ref{lem:weaknorm}}
\prf{\mbox{$\quad$}$\neut \{\rho\}t~s$ and $\neut \{\rho\}t'~s'$ \hfill by def. of $\norm / \neut$}
% \prf{\mbox{$\quad$}$\Gamma' \Vdash \{\rho,s/y\}\tau_2 = \{\rho, s'/y\}\tau_2 : u_2$ \hfill by $\Gamma \Vdash \tau : u$}
\prf{\mbox{$\quad$}\smallerderiv{$\Gamma' \Vdash \{\rho,s/y\}\tau_2 : u_2$} \hfill by $\Gamma \Vdash \tau : u$}
\prf{\mbox{$\quad$}$\Gamma' \Vdash \{\rho\}t~s = \{\rho\}t~s' : \{\rho, s/y\}\tau_2$\hfill by IH}
\prf{$\Gamma \Vdash t = t' : (y:\ann\tau_1) \arrow \tau_2$ \hfill by semantic def.}

\pcase{$\tau = x~\vec{s}$}
\prf{$\Gamma \vdash \tau \whnf x~\vec{s} : u$ and $\neut(x~\vec{s})$ \hfill by $\Gamma \Vdash \tau : u$ }
\prf{$\Gamma \vdash t \equiv t' : x~\vec{s}$ and $\neut t, t'$ \hfill by assumption}
\prf{$\Gamma \vdash t \whnf t : x~\vec{s}$ \hfill since $\neut t$}
\prf{$\Gamma \vdash t' \whnf t' : x~\vec{s}$ \hfill since $\neut t'$}
\prf{$\Gamma \Vdash t = t' : x~\vec{s}$ \hfill by sem. equ. def.}

}
 \end{proof}

We also show that semantic definition are backwards closed.
% In particular we show our definition is sound, i.e. if a term is semantically well-typed then it is also syntactically well typed.
% Last, we prove semantic typing rules that correspond to the syntactic typing rules.

\begin{lemma}[Backwards Closure for Computations]\label{lem:bclosed}\quad
%  \begin{enumerate}
%  \item\label{it:lfclosed} If $\Gamma ; \Psi \sem M = N : A$ and $\Gamma \vdash M \lfwhnf R : A$ and $\Gamma \vdash M' \lfwhnf R : A$
%    then $\Gamma \sem M' = N : A$.
If $\Gamma \sem t_1 = t_2: \ann\tau$ and $\Gamma \vdash t_1 \whnf w : \ann\tau$ and $\Gamma \vdash t_1' \whnf w : \ann\tau$
then $\Gamma \sem t_1' = t_2: \ann\tau$.
%  \end{enumerate}
\end{lemma}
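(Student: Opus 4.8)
The plan is to proceed by case analysis on the derivation of $\Gamma \sem \ann\tau : u$, which is implicitly available since $\Gamma \sem t_1 = t_2 : \ann\tau$ is only defined by recursion on this semantic kinding. The key observation is uniform across all cases: unfolding the definition of $\Gamma \sem t_1 = t_2 : \ann\tau$ (and, where applicable, the shorthand conventions of Fig.~\ref{fig:sem}), there is always a weak head normal form $w_1$ with $\Gamma \vdash t_1 \whnf w_1 : \ann\tau$, some $w_2$ with $\Gamma \vdash t_2 \whnf w_2 : \ann\tau$, and a remaining condition that refers only to $w_1$ and $w_2$ --- e.g.\ $\Gamma ; \Psi \Vdash \unbox{w_1}{\id} = \unbox{w_2}{\id} : A$ in the contextual case, the extensionality condition relating $\{\rho\}w_1\,s$ and $\{\rho\}w_2\,s'$ in the function case, $\Gamma \vdash w_1 \equiv w_2 : x\,\vec s$ in the neutral case, and the corresponding semantic type equality on the whnf of $t_1$ in the universe case.

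First I would invoke determinacy of weak head reduction (Lemma~\ref{lem:detwhnf}): from $\Gamma \vdash t_1 \whnf w_1 : \ann\tau$ and the hypothesis $\Gamma \vdash t_1 \whnf w : \ann\tau$ we obtain $w_1 = w$. Hence $\Gamma \vdash t_1' \whnf w_1 : \ann\tau$ holds by hypothesis. Since the remaining condition mentions only $w_1$ and $w_2$, it is unchanged, and $\Gamma \vdash t_2 \whnf w_2 : \ann\tau$ still holds; feeding these back into the defining clause for the relevant shape of $\ann\tau$ yields $\Gamma \sem t_1' = t_2 : \ann\tau$. The well-typedness side-conditions of Def.~\ref{def:typedwhnf} for $t_1'$ are exactly the components of the assumption $\Gamma \vdash t_1' \whnf w : \ann\tau$, so nothing extra is needed; in the degenerate case $\ann\tau = \tmctx$ there are no reductions at all, so $\Gamma \vdash t_1 \whnf w : \tmctx$ forces $w = t_1$ and $t_1' = w = t_1$, making the claim immediate.

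The main obstacle I anticipate is the type-level (universe) case $\ann\tau = u'$: there semantic equality of computations is, by definition, semantic equality of the \emph{types} $t_1$ and $t_2$, which is itself given by the clauses of Fig.~\ref{fig:sem} read with the shorthand convention spelled out in the text, so that e.g.\ ``$\Gamma \sem \cbox T = \tau' : u$'' abbreviates ``$\Gamma \sem \tau = \tau' : u$ where $\Gamma \vdash \tau \whnf \cbox T : u$''. One must notice that these clauses already bake in a weak head reduction of the left-hand type, so the premise $\Gamma \sem t_1 = t_2 : u'$ itself fixes a weak head normal form $w_1$ of $t_1$, and the same determinacy step then transfers backwards closure to the type level. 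This requires no new induction beyond the outer case analysis, and the remaining cases are routine once the determinacy step is in place.
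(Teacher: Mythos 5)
Your proposal is correct and follows essentially the same route as the paper, which proves this lemma by exactly the case analysis you describe on $\Gamma \sem \ann\tau : u$, observing that each defining clause fixes the weak head normal forms of both sides and that determinacy of weak head reduction lets $t_1'$ be substituted for $t_1$ without disturbing the residual conditions on $w_1$ and $w_2$. Your treatment of the universe case and the degenerate $\tmctx$ case is consistent with the paper's conventions, so no gap remains.
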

\begin{proof}
We analyse the definition of $\Gamma \sem t_1 = t_2: \ann\tau$ considering different cases of $\Gamma \sem \ann\tau : u$.
% To prove \ref{it:lfclosed}, To Do.
\end{proof}

\begin{lemma}[Typed Whnf is Backwards Closed]\label{lem:typeclosed}
%    \begin{enumerate}
%   \item \label{it:fun}
    If\/ $\Gamma \vdash t \whnf w : (y:\ann\tau_1) \arrow \tau_2$
    and  $\Gamma \vdash s : \ann\tau_1$
    and  $\Gamma \vdash w~s \whnf v : \{s/y\}\tau_2$
    then $\Gamma \vdash t~s \whnf v : \{s / y \}\tau_2$.
\end{lemma}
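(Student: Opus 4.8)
The plan is to unfold both definitions of well-typed weak head reduction (Def.~\ref{def:typedwhnf}) and assemble the three required components: typing of $t~s$ at $\{s/y\}\tau_2$, typing of $v$ at $\{s/y\}\tau_2$, definitional equality $\Gamma \vdash t~s \equiv v : \{s/y\}\tau_2$, and the bare reduction $t~s \whnf v$. The typing judgments come essentially for free: from $\Gamma \vdash t \whnf w : (y:\ann\tau_1)\arrow\tau_2$ we extract $\Gamma \vdash t : (y:\ann\tau_1)\arrow\tau_2$, so with $\Gamma \vdash s : \ann\tau_1$ the application rule gives $\Gamma \vdash t~s : \{s/y\}\tau_2$; and from $\Gamma \vdash w~s \whnf v : \{s/y\}\tau_2$ we directly read off $\Gamma \vdash v : \{s/y\}\tau_2$.

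For the definitional equality, I would chain three facts. First, $\Gamma \vdash t \whnf w$ gives $\Gamma \vdash t \equiv w : (y:\ann\tau_1)\arrow\tau_2$ (part of Def.~\ref{def:typedwhnf}), hence by congruence $\Gamma \vdash t~s \equiv w~s : \{s/y\}\tau_2$. Second, $\Gamma \vdash w~s \whnf v$ gives $\Gamma \vdash w~s \equiv v : \{s/y\}\tau_2$. Transitivity of $\equiv$ then yields $\Gamma \vdash t~s \equiv v : \{s/y\}\tau_2$.

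The only genuinely operational step is the bare reduction $t~s \whnf v$. Here I would case-split on how $t \whnf w$ terminated, equivalently on the shape of $w$ (which is in $\norm$). If $w$ is a neutral term, then $w~s \whnf v$ must have been derived by the neutral-application rule, so $v = w~s$ (with $\norm v$), and the whnf rule $\infer{t_1\;t_2 \whnf w~t_2}{t_1 \whnf w & \neut w}$ applied with $t_1 = t$, $t_2 = s$ gives $t~s \whnf w~s = v$. Otherwise $w$ is not neutral; since $\Gamma \vdash w : (y:\ann\tau_1)\arrow\tau_2$ and $w$ is a well-typed whnf at a function type, $w$ must be of the form $\tmfn y {t_0}$, and then $w~s \whnf v$ was derived from $\{s/y\}t_0 \whnf v$; the whnf rule $\infer{t_1\;t_2 \whnf v}{t_1 \whnf \tmfn y {t_0} & \{t_2/y\}t_0 \whnf v}$ applied to $t \whnf \tmfn y {t_0}$ and $\{s/y\}t_0 \whnf v$ gives $t~s \whnf v$ directly.

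The main obstacle, such as it is, is dispatching the case analysis on $w$ cleanly: one needs that a well-typed weak head normal form at a function type is either neutral or a $\lambda$-abstraction, which follows by inspection of the whnf grammar (Fig.~\ref{fig:whnf}) together with the typing rules — a $\cbox{C}$ or a $u$ or a $\cbox{T}$ cannot have a function type. Once that dichotomy is in hand, each branch is a one-line application of the corresponding weak head reduction rule, and the equality and typing components follow as above by transitivity/congruence of $\equiv$ and the application typing rule. I would present the proof by first establishing the three ``static'' components, then doing the two-case analysis for the reduction itself.
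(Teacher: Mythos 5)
Your proposal is correct and follows essentially the same route as the paper's proof: establish the typing and definitional-equality components via the application rule, congruence, and transitivity, then case-split on the shape of the well-typed whnf $w$ at function type (neutral versus $\lambda$-abstraction) and apply the corresponding weak head reduction rule. Your explicit remark that the dichotomy on $w$ needs the typing information to rule out the other whnf forms is a point the paper's proof leaves implicit, but it is the same argument.
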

\begin{proof}
Proof by unfolding definition and typing rules and considering different cases for $w$.
\LONGVERSIONCHECKED
{
 \quad\\
% Functions (\ref{it:fun}):
%\\[0.25em]
\prf{$\Gamma \vdash t : (y:\ann\tau_1) \arrow \tau_2$ \hfill by def. of $\Gamma \vdash t \whnf w : (y:\ann\tau_1) \arrow \tau_2$}
\prf{$\Gamma \vdash s : \ann\tau_1$ \hfill by assumption}
\prf{$\Gamma \vdash t~s : \{s/y\}\tau_2$ \hfill by typing rule}
\prf{$\Gamma \vdash s \equiv s : \ann\tau_1$ \hfill by reflexivity of $\equiv$}
\prf{$\Gamma \vdash t \equiv w : (y:\ann\tau_1) \arrow \tau_2$ \hfill by def. of $\Gamma \vdash t \whnf w : (y:\ann\tau_1) \arrow \tau_2$}
\prf{$\Gamma \vdash t~s \equiv w~s : \{s/y\}\tau_2$ \hfill by congruence rules of $\equiv$}
\prf{$\Gamma \vdash w~s \equiv v : \{s/y\}\tau_2$ \hfill by def. of $\Gamma \vdash w~s \whnf v : \{s/y\}\tau_2$ }
\prf{$\Gamma \vdash t~s \equiv v : \{s/y\}\tau_2$ \hfill by symmetry and transitivity of $\equiv$}
\prf{$t \whnf w$ \hfill by  def. of $\Gamma \vdash t \whnf w : (y:\ann\tau_1) \arrow \tau_2$}
\prf{$w~s \whnf v$ \hfill by def. of $\Gamma \vdash w~s \whnf v : \{s/y\}\tau_2$ }
\prf{$\Gamma \vdash v : \{s/y\}\tau_2$ \hfill by def. of $\Gamma \vdash w~ s \whnf v : \{s/y\}\tau_2$ }
% \prf{$t~s \whnf v$ \hfill by def. of $\whnf$ using $t \whnf w$}
% \prf{$\Gamma \vdash t~s \whnf v : \{s / y \}\tau_2$ \hfill by def. }
\prf{$\norm w$ \hfill by definition of $\whnf$}
\\
\prf{\emph{Sub-case}: $t \whnf \tmfn x {t'}$ and $w = \tmfn x {t'}$}
\prf{$(\tmfn x t')~s \whnf v$ where $\{s/x\}t' \whnf v$ \hfill by $\Gamma \vdash w~s\whnf v: \{s/y\}\tau_2$}
\prf{$t~s \whnf v$ \hfill since $t \whnf \tmfn x t'$}
\prf{$\Gamma \vdash t~s \whnf v : \{s / y \}\tau_2$ \hfill by def. }
\\
% \prf{\emph{Sub-case}: $t \whnf \trec{\R}{\tm}{\IH}~s_1$  and $w = \trec{\R}{\tm}{\IH}~s_1$}
% \prf{$(\trec{\R}{\tm}{\IH}~s_1)~s \whnf v$ \hfill by previous lines}
% \prf{$s_1 \whnf \cbox{\Psi}$ and $s \whnf \cbox{\hatctx{\Phi} \vdash M}$ and
% $\trec{\R}{\tm}{\IH} @ (\Psi)~(\hatctx\Phi \vdash M) \whnf v$ \hfill by inversion on $w~s \whnf v$}
% \prf{$t~s \whnf v$ \hfill by rules}
% \prf{$\Gamma \vdash t~s \whnf v : \{s / y \}\tau_2$ \hfill by def. }
% \\
\prf{\emph{Sub-case}: $t \whnf w$ where $\neut w$ % and $w \neq
                                % \trec{\R}{\tm}{\IH}~s_1$
                       }
\prf{$w~s\whnf w~s$ \hfill  since $\neut (w~s)$}
\prf{$t~s \whnf w~s$ \hfill by rule}
\prf{$\Gamma \vdash t~s \whnf v : \{s / y \}\tau_2$ \hfill by def. }
}
\end{proof}

\begin{lemma}[Semantic Application]\label{lem:SemTypeApp}
If $\Gamma \Vdash t = t': (y:\ann\tau_1) \arrow \tau_2$ and $\Gamma \Vdash s = s': \ann\tau_1$ then
$\Gamma \Vdash t~s = t'~s': \{s/y\}\tau_2$
 \end{lemma}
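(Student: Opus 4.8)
The goal is to show that semantic equality at a function type is preserved under semantic application. The key is to unfold the definition of $\Gamma \Vdash t = t' : (y{:}\ann\tau_1) \arrow \tau_2$ from Fig.~\ref{fig:sem} and use the identity renaming as the chosen $\rho$. First I would extract from the hypothesis $\Gamma \Vdash t = t' : (y{:}\ann\tau_1) \arrow \tau_2$ the weak head normal forms: $\Gamma \vdash t \whnf w : (y{:}\ann\tau_1) \arrow \tau_2$ and $\Gamma \vdash t' \whnf w' : (y{:}\ann\tau_1) \arrow \tau_2$, together with the extensionality condition that for all $\Gamma' \leq_\rho \Gamma$ and all $\Gamma' \Vdash r = r' : \{\rho\}\ann\tau_1$ we have $\Gamma' \Vdash \{\rho\}w~r = \{\rho\}w'~r' : \{\rho, r/y\}\tau_2$. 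Instantiating this with $\Gamma' = \Gamma$, $\rho = \id$ (the identity renaming, which is a valid renaming), and $r = s$, $r' = s'$ supplied by the second hypothesis $\Gamma \Vdash s = s' : \ann\tau_1$, I obtain directly $\Gamma \Vdash w~s = w'~s' : \{s/y\}\tau_2$.

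\textbf{Key steps.} It remains to transport this from $w~s$ and $w'~s'$ to $t~s$ and $t'~s'$. For this I would use backwards closure for computations (Lemma~\ref{lem:bclosed}) twice. To apply it I need to produce the whnf witnesses, i.e.\ I need $\Gamma \vdash t~s \whnf v : \{s/y\}\tau_2$ and $\Gamma \vdash w~s \whnf v : \{s/y\}\tau_2$ for a common $v$. This is exactly what Lemma~\ref{lem:typeclosed} (Typed Whnf is Backwards Closed) provides: since $\Gamma \vdash t \whnf w : (y{:}\ann\tau_1) \arrow \tau_2$, $\Gamma \vdash s : \ann\tau_1$ (which follows from $\Gamma \Vdash s = s' : \ann\tau_1$ via well-formedness of semantic typing, Lemma~\ref{lm:semwf}), and $\Gamma \vdash w~s \whnf v : \{s/y\}\tau_2$ for whatever $v$ the term $w~s$ reduces to, we conclude $\Gamma \vdash t~s \whnf v : \{s/y\}\tau_2$. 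The same reasoning with $t'$, $w'$, $s'$ gives $\Gamma \vdash t'~s' \whnf v' : \{s/y\}\tau_2$ with $\Gamma \vdash w'~s' \whnf v' : \{s/y\}\tau_2$. Note the well-typedness of $w~s$ (needed to even speak of its whnf) comes from Lemma~\ref{lm:semwf} applied to $\Gamma \Vdash w~s = w'~s' : \{s/y\}\tau_2$.

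\textbf{Assembling.} From $\Gamma \Vdash w~s = w'~s' : \{s/y\}\tau_2$, I first apply Lemma~\ref{lem:bclosed} with the reduction $\Gamma \vdash w~s \whnf v$ and $\Gamma \vdash t~s \whnf v$ to replace $w~s$ by $t~s$ on the left, obtaining $\Gamma \Vdash t~s = w'~s' : \{s/y\}\tau_2$. Then, after using symmetry of semantic equality (Lemma~\ref{lem:semsym}(\ref{it:sym})), I apply Lemma~\ref{lem:bclosed} again with $\Gamma \vdash w'~s' \whnf v'$ and $\Gamma \vdash t'~s' \whnf v'$ to replace $w'~s'$ by $t'~s'$, and symmetry once more yields $\Gamma \Vdash t~s = t'~s' : \{s/y\}\tau_2$, which is the claim.

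\textbf{Main obstacle.} The only delicate point is the bookkeeping around the chosen renaming and the identity substitution: one must check that $\id$ really is a legitimate element of the $\leq_\rho$ relation and that $\{\id\}w = w$, $\{\id\}\tau_2 = \tau_2$, so that the instantiation of the extensionality clause is literally what we want. Everything else is a mechanical combination of backwards closure (Lemmas~\ref{lem:bclosed} and~\ref{lem:typeclosed}), well-formedness of semantic typing (Lemma~\ref{lm:semwf}), and symmetry (Lemma~\ref{lem:semsym}); no new semantic reasoning is required, which is why this lemma is stated at this point in the development rather than proved by a fresh induction.
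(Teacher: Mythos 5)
Your proposal is correct and follows essentially the same route as the paper's proof: unfold the hypothesis at function type, instantiate the extensionality clause with the identity renaming and the pair $s, s'$, recover $\Gamma \vdash s : \ann\tau_1$ via well-formedness (Lemma~\ref{lm:semwf}), transfer the whnf reductions from $w~s$ and $w'~s'$ to $t~s$ and $t'~s'$ via Lemma~\ref{lem:typeclosed}, and finish with backwards closure (Lemma~\ref{lem:bclosed}) and symmetry. No substantive differences.
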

 \begin{proof}
Proof using Well-formedness Lemma \ref{lm:semwf}, Backwards closed properties (Lemma \ref{lem:typeclosed} and \ref{lem:bclosed}), and Symmetry of semantic equality (Lemma Prop. \ref{it:sym}).
\LONGVERSIONCHECKED{
$\;$\\ [1em]
\prf{$\Gamma \vdash t \whnf w   : (y:\ann\tau_1) \arrow \tau_2$ ~\mbox{and}~\\
$\Gamma \vdash t' \whnf w' : (y:\ann\tau_1) \arrow \tau_2$ ~\mbox{and}~
\\
$\forall \Gamma' \leq_\rho \Gamma.~\Gamma' \Vdash s = s' : \{\rho\}\ann\tau_1
\Longrightarrow \Gamma' \Vdash (\{\rho\}w)~s = \{\rho\}w'~s' : \{\rho, s/y\}\tau_2$
\hfill by sem. def.}
\prf{$\Gamma \Vdash w~s = w'~s': \{s/y\}\tau_2$ \hfill choosing $\Gamma$ for $\Gamma'$, $\rho$ to be the identity substitution }
\prf{$\Gamma \vdash w~s \whnf v : \{s/y\}\tau_2$ \hfill by def. of $\Gamma \Vdash w~s = w'~s': \{s/y\}\tau_2$}
\prf{$\Gamma \vdash s : \tau_1$ and $\Gamma \vdash s' \vdash \tau_1$ \hfill by Well-formedness Lemma \ref{lm:semwf}}
\prf{$\Gamma \vdash t~s \whnf v : \{s/y\}\tau_2$ \hfill by Whnf Backwards closed (Lemma \ref{lem:typeclosed})}
\prf{$\Gamma \vdash w'~s' \whnf v: \{s/y\}\tau_2$ \hfill by def. of $\Gamma \Vdash w~s = w'~s': \{s/y\}\tau_2$}
\prf{$\Gamma \vdash t'~s' \whnf v : \{s/y\}\tau_2$ \hfill by Whnf Backwards closed (Lemma \ref{lem:typeclosed})}
\prf{$\Gamma \Vdash t~s = t'~s': \{s/y\}\tau_2$ \hfill by Semantic Backwards
  Closure for Computations (Lemma \ref{lem:bclosed}) \\
\mbox{\quad} \hfill and Symmetry (Lemma Prop. \ref{it:sym})}
}
 \end{proof}

% \begin{lemma}[Into the Logical Relation]
% If $\Gamma \Vdash t \whnf w : \tau$ then $\Gamma \Vdash t = w : \tau$.
% \end{lemma}

\section{Validity in the Model}\label{sec:validity}
For normalization, we need to establish that well-typed terms are
logically related. However, as we traverse well-typed terms, they do
not remain closed. As is customary, we now extend our logical relation
to substitutions defining semantic substitutions \fbox{$\Gamma \Vdash \theta = \theta' : \Gamma$}:

\[
  \begin{array}{l}
%\mbox{Semantic Substitutions}~
%\fbox{$\Gamma' \Vdash \theta  = \theta': \Gamma$}\\[1em]
\infer{\Gamma' \Vdash \cdot = \cdot : \cdot}{\vdash \Gamma' }
\qquad
\infer{\Gamma' \Vdash \theta, t/x = \theta', t'/x : \Gamma, x{:}\ann\tau}
{ \Gamma' \Vdash \theta = \theta' : \Gamma &
 \Gamma' \Vdash \{\theta\}\ann\tau = \{\theta'\}\ann\tau : u &
 \Gamma' \Vdash t = t' : \{\theta\}\ann\tau &
 \Gamma' \Vdash \{\theta\}\ann\tau : u}
  \end{array}
\]

\begin{lemma}[Semantic Weakening of Computation-level Substitutions]\label{lem:weakcsub}
If $\Gamma' \Vdash \theta = \theta' : \Gamma$ and $\Gamma'' \leq_\rho \Gamma'$
then $\Gamma'' \Vdash \{\rho\}\theta = \{\rho\}\theta' : \Gamma$.
\end{lemma}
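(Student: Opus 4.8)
The plan is to proceed by induction on the derivation of $\Gamma' \Vdash \theta = \theta' : \Gamma$, which is built up by the two rules defining semantic substitutions. There are exactly two cases to consider, corresponding to the empty substitution and the extension of a substitution by a term.

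In the base case, $\theta = \theta' = \cdot$ and $\Gamma = \cdot$. From $\Gamma'' \leq_\rho \Gamma'$ we obtain $\vdash \Gamma''$ (renamings carry a well-formed target context, which follows from Lemma \ref{lm:wfcsub} since renamings are a special case of computation-level substitutions, or directly from the definition of $\leq_\rho$). Since $\{\rho\}\cdot = \cdot$ by the definition of substitution, the rule for the empty semantic substitution gives $\Gamma'' \Vdash \cdot = \cdot : \cdot$, which is what we need.

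In the step case, $\theta = \theta_0, t/x$ and $\theta' = \theta_0', t'/x$ and $\Gamma = \Gamma_0, x{:}\ann\tau$, with premises $\Gamma' \Vdash \theta_0 = \theta_0' : \Gamma_0$, $\Gamma' \Vdash \{\theta_0\}\ann\tau = \{\theta_0'\}\ann\tau : u$, $\Gamma' \Vdash t = t' : \{\theta_0\}\ann\tau$, and $\Gamma' \Vdash \{\theta_0\}\ann\tau : u$. Applying the induction hypothesis to the first premise yields $\Gamma'' \Vdash \{\rho\}\theta_0 = \{\rho\}\theta_0' : \Gamma_0$. For the remaining premises I would invoke semantic weakening for computations (Lemma \ref{lem:compsemweak}): part (\ref{it:weaksemtau}) applied to $\Gamma' \Vdash \{\theta_0\}\ann\tau : u$ gives $\Gamma'' \Vdash \{\rho\}\{\theta_0\}\ann\tau : u$; its part (2) applied to $\Gamma' \Vdash \{\theta_0\}\ann\tau = \{\theta_0'\}\ann\tau : u$ gives $\Gamma'' \Vdash \{\rho\}\{\theta_0\}\ann\tau = \{\rho\}\{\theta_0'\}\ann\tau : u$; and its part (3) applied to $\Gamma' \Vdash t = t' : \{\theta_0\}\ann\tau$ gives $\Gamma'' \Vdash \{\rho\}t = \{\rho\}t' : \{\rho\}\{\theta_0\}\ann\tau$. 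Then I would use compositionality of computation-level substitution to rewrite $\{\rho\}\{\theta_0\}\ann\tau$ as $\{\{\rho\}\theta_0\}\ann\tau$ (and similarly for $\theta_0'$), and likewise $\{\rho\}(\theta_0, t/x) = (\{\rho\}\theta_0), \{\rho\}t/x$ by the definition of substitution on substitutions. Assembling these with the extension rule for semantic substitutions yields $\Gamma'' \Vdash \{\rho\}(\theta_0, t/x) = \{\rho\}(\theta_0', t'/x) : \Gamma_0, x{:}\ann\tau$.

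The only mildly delicate point is the bookkeeping around composition of substitutions: we must know that applying the renaming $\rho$ after $\theta_0$ agrees with applying the composite $\{\rho\}\theta_0$, so that the semantic judgments produced by Lemma \ref{lem:compsemweak} land at exactly the types demanded by the extension rule. This is a routine compositionality fact for computation-level substitutions, so I do not expect any real obstacle; the proof is essentially a direct structural induction once semantic weakening for computations is in hand.
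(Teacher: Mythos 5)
Your proof is correct and follows essentially the same route as the paper: a structural induction on the derivation of $\Gamma' \Vdash \theta = \theta' : \Gamma$, discharging the premises of the extension rule via semantic weakening and compositionality of substitution. If anything, your citation of Lemma~\ref{lem:compsemweak} (semantic weakening for computations) is the more apt reference for the type-level and term-level premises, whereas the paper's one-line proof points to the LF weakening lemma.
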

\begin{proof}
% CHECKED -bp 21 May 2018
By induction on $\Gamma' \Vdash \theta = \theta' : \Gamma$ and using semantic weakening lemma \ref{lem:semweak}.
\end{proof}

 \begin{lemma}[Semantic Substitution Preserves Equivalence]\label{lem:semsubst}
  Let $\Gamma' \sem \theta = \theta' : \Gamma$;
    \begin{enumerate}
    \item If\/$\Gamma ; \Psi \vdash M \equiv M : A$ then
      $\Gamma;  \{\theta\}\Psi \vdash \{\theta\}M \equiv \{\theta'\}M : \{\theta\}A$.
    \item If\/$\Gamma ; \Psi \vdash \sigma \equiv \sigma : \Phi$ then
      $\Gamma;  \{\theta\}\Psi \vdash \{\theta\}\sigma \equiv \{\theta'\}\sigma : \{\theta\}\Phi$.
%    \item If\/$\Gamma \vdash \Psi \equiv \Psi : \ctx$ then $\Gamma \vdash \{\theta\}\Psi \equiv \{\theta'\}\Psi : \ctx$.
%    \item If\/$\Gamma \vdash \theta_1 \equiv \theta_1 : \Gamma_0$ then $\Gamma' \vdash \{\theta\}\theta_1 \equiv \{\theta'\}\theta_1 : \Gamma_0$.
%    \item If\/ $\Gamma \vdash \tau \equiv \tau : u$
%          then $\Gamma' \vdash \{\theta\}\tau \equiv \{\theta\}\tau : u$.
    \item If\/ $\Gamma \vdash t \equiv t : \ann\tau$
        then $\Gamma' \vdash \{\theta\}t \equiv \{\theta'\}t : \{\theta\}\ann\tau$.
    \end{enumerate}
  \end{lemma}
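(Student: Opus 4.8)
The plan is to prove all three parts simultaneously by mutual induction on the given typing/equivalence derivation, closely mirroring the proof of the computation-level simultaneous substitution lemma (Lemma \ref{lm:compsubst}) but now tracking the two substitutions $\theta$ and $\theta'$ in parallel. The overall shape is: given a derivation $\D$ witnessing $\Gamma;\Psi \vdash M \equiv M : A$ (resp.\ the LF-substitution or computation judgment), I recurse on $\D$; at each rule I invoke the induction hypotheses on the subderivations to get the $\{\theta\}\,{-}\equiv\{\theta'\}\,{-}$ equivalences on the components, then reassemble via the congruence rules of definitional equality together with the compositionality of substitution ($\{\theta\}(\{s/y\}\tau_2) = \{\{\theta\}s/y\}(\{\theta,y/y\}\tau_2)$, and similarly for the LF substitution operation on terms).

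The base cases are where the work happens. For an LF variable $x$ with $\Psi(x)=A$, $\{\theta\}x$ and $\{\theta'\}x$ are syntactically the same LF variable (computation substitution leaves LF variables untouched apart from acting on the context), so reflexivity of $\equiv$ suffices; but one must use Lemma \ref{lm:compsubst} to know the term is well-typed in $\{\theta\}\Psi$. The genuinely interesting base case is the context variable: for a computation-level variable $y:\ann\tau$ occurring in $\Gamma$, we have $\{\theta\}y = t_y$ and $\{\theta'\}y = t'_y$, and the premise $\Gamma' \sem \theta = \theta' : \Gamma$ gives exactly $\Gamma' \Vdash t_y = t'_y : \{\theta_0\}\ann\tau$ for the appropriate prefix $\theta_0$; applying well-formedness of semantic typing (Lemma \ref{lm:semwf}) converts this semantic equality into the desired definitional equality $\Gamma' \vdash t_y \equiv t'_y : \{\theta\}\ann\tau$, after noting $\ann\tau$ does not depend on later variables so $\{\theta_0\}\ann\tau = \{\theta\}\ann\tau$. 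The other base case, $\tmfn y t$, is handled by the usual ``identity extension'' manoeuvre: from $\Gamma' \Vdash \theta = \theta' : \Gamma$ and $\Gamma' \Vdash \{\theta\}\ann\tau_1 : u$ (obtained via Lemma \ref{lem:compsemweak}) build $\Gamma', y{:}\{\theta\}\ann\tau_1 \Vdash \theta, y/y = \theta', y/y : \Gamma, y{:}\ann\tau_1$ using Neutral Soundness (Lemma \ref{lem:NeutSound}) to show $y$ is semantically equal to itself, then apply the IH to the body.

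For the closure cases — $\unbox t \sigma$ inside an LF term, or a $\cbox C$ — I push the substitutions through syntactically ($\{\theta\}(\unbox t \sigma) = \unbox{\{\theta\}t}{\{\theta\}\sigma}$, etc.), apply the IHs to $t$ and to $\sigma$, and close up with the congruence rules; the reduction rule $\Gamma;\Psi \vdash \unbox{\cbox{\hatctx\Phi\vdash N}}{\sigma} \equiv \lfs\sigma\Phi N$ is respected because substitution commutes with the $\lfs{-}{-}{-}$ operation. The recursor case is the most delicate: $\{\theta\}(\titer{\R}{}{\IH}\rappto\Psi~t)$ substitutes into each branch of $\R$, into $\Psi$, into $\IH$, and into $t$; I invoke the IH on $t$, on each branch (extending $\theta,\theta'$ with fresh renamings of the pattern variables $\psi,m,n,f_m,f_n$ using the weakening/identity-extension corollaries), and then reassemble by the congruence rule for the recursor.

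The main obstacle I anticipate is the bookkeeping in the recursor and $\tmfn y t$ cases — specifically, making the ``extend $\theta$ and $\theta'$ by the identity on the freshly-bound variable'' step precise while staying inside the semantic substitution relation $\Gamma' \Vdash \theta = \theta' : \Gamma$, which requires both $\{\theta\}\ann\tau : u$ and the semantic equality $\{\theta\}\ann\tau = \{\theta'\}\ann\tau : u$ as side conditions; obtaining the latter needs a companion fact (semantic substitution preserves semantic equality of types), which in turn is what makes the mutual-induction structure with Lemma \ref{lem:compsemweak} and Lemma \ref{lem:semsym} essential. Apart from that, every step is a routine application of a congruence rule plus compositionality of substitution, exactly paralleling Lemma \ref{lm:compsubst}.
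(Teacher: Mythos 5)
Your core argument matches the paper's: the proof is a congruence induction over the term (or, equivalently, the derivation) whose only substantive case is a computation-level variable $y$, where $t_y/y \in \theta$ and $t'_y/y \in \theta'$ in a semantically equal pair of substitutions yield $\Gamma' \Vdash t_y = t'_y : \{\theta\}\ann\tau$, and Well-formedness of Semantic Typing (Lemma~\ref{lm:semwf}) converts this to the required definitional equality. That is exactly what the paper does.

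However, your handling of the binder cases introduces a circularity. To recurse under $\tmfn y t$ you propose to establish $\Gamma', y{:}\{\theta\}\ann\tau_1 \Vdash \theta,y/y = \theta',y/y : \Gamma, y{:}\ann\tau_1$, which by the definition of semantic equality of substitutions requires $\Gamma' \Vdash \{\theta\}\ann\tau_1 : u$ and $\Gamma' \Vdash \{\theta\}\ann\tau_1 = \{\theta'\}\ann\tau_1 : u$ (and Neutral Soundness, Lemma~\ref{lem:NeutSound}, likewise presupposes the semantic kinding). Neither is derivable from the hypotheses of this lemma; the ``companion fact'' you name --- that semantic substitution preserves semantic equality of types --- is precisely validity of types, i.e.\ a piece of the Fundamental Theorem (Lemma~\ref{lm:fundtheo}), whose proof depends on the present lemma via Validity of Recursion (Lemma~\ref{lem:ValidRec}). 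So the induction, as you have set it up, does not go through at this point in the development. The fix is to notice that the conclusion is only a \emph{definitional} equality, so the invariant to carry under binders is much weaker than $\Vdash$-equality of the substitutions: it suffices that the two substitutions are pointwise definitionally equal, $\Gamma'' \vdash \{\theta\}x \equiv \{\theta'\}x : \{\theta\}\ann\tau_x$ for each $x$ in the (extended) domain. For the variables of $\Gamma$ this is extracted once from Lemma~\ref{lm:semwf}; for a freshly bound $y$ it holds by reflexivity of $\equiv$, with no semantic kinding needed. With that adjustment the remaining cases are, as you say, congruence plus compositionality of substitution.
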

  \begin{proof}
 By induction on $M$, $\sigma$ $\tau$, and $t$. The proof is mostly straightforward; in the case where $t = x$ we know by assumption that $t_x/x \in \theta$ and $t'_x/x \in \theta'$ where $\Gamma' \sem t_x = t'_x : \{\theta\}\tau_x$. From Well-formedness of semantic typing (Lemma \ref{lm:semwf}), we know that $\Gamma' \vdash t_x \equiv t'_x : \{\theta\}\tau_x$.
  \end{proof}

Last, we define validity of computation-level contexts and and
computation-level types and terms referring to their semantic
definitions (Fig.~\ref{fig:valid}). This allows us to define compactly
the fundamental lemma which states that well typed terms correspond to
valid terms in our model. Validity here is defined in terms of the
semantic definition (Fig.~\ref{fig:sem}). In particular, we say that
two terms $M$ and $N$ are equal in our model, if for all
computation-level instantiations $\theta$ and $\theta'$ which are
considered semantically equal, we have that $\{\theta\}M$ and
$\{\theta'\}N$ are semantically equal.

\begin{figure}[htb]
  \centering
\[
  \begin{array}{l}
\mbox{Validity of Context}: \fbox{$\models \Gamma$}\quad
\\[1em]
\infer{\models \cdot}{}
\quad
\infer{\models \Gamma, x:\ann\tau}{\models \Gamma & \Gamma \models \ann\tau : u}
\\[1em]
\mbox{Validity of LF Objects}: \fbox{$\Gamma ; \Psi \models M = N : A$} \\[1em]
\infer{\Gamma ; \Psi \models M = N : A}
{\models \Gamma &
 \forall \Gamma',~\theta,~\theta'.~\Gamma' \Vdash \theta = \theta' :
                  \Gamma \Longrightarrow \Gamma' ; \{\theta\}\Psi \Vdash \{\theta\}M  = \{\theta'\}N : \{\theta\}A}
\\[1em]
\mbox{Validity of LF Substitutions}: \fbox{$\Gamma ; \Psi \models \sigma = \sigma' : \Phi$} \\[1em]
\infer{\Gamma ; \Psi \models \sigma = \sigma' : \Phi}
{\models \Gamma &
 \forall \Gamma',~\theta,~\theta'.~\Gamma' \Vdash \theta = \theta' :
                  \Gamma \Longrightarrow \Gamma' ; \{\theta\}\Psi \Vdash \{\theta\}\sigma_1  = \{\theta'\}\sigma' : \{\theta\}\Phi}
\\[1em]
\mbox{Validity of Types}: \fbox{$\Gamma \models \ann\tau = \ann\tau':u$} \quad\mbox{and}\quad \fbox{$\Gamma \models \ann\tau : u$}
\\[1em]
\infer{\Gamma \models \ann\tau = \ann\tau' : u}
      {\models \Gamma \quad
       \forall \Gamma',~\theta,~\theta'.~\Gamma' \Vdash \theta = \theta' : \Gamma
       \Longrightarrow \Gamma' \Vdash \{\theta\}\ann\tau = \{\theta'\}\ann\tau' : u}
\qquad
\infer{\Gamma \models \ann\tau : u}{\Gamma \models \ann\tau = \ann\tau : u}
\\[1em]
\mbox{Validity of Terms}: \fbox{$\Gamma \models t = t' : \ann\tau$} \quad\mbox{and}\quad \fbox{$\Gamma \models t : \ann\tau$}\\[1em]
\infer{\Gamma \models t = t' : \ann\tau}
{\models \Gamma & \Gamma \models \ann\tau : u &
 \forall \Gamma',~\theta,~\theta'.~\Gamma' \Vdash \theta = \theta' :
                  \Gamma \Longrightarrow \Gamma' \Vdash \{\theta\}t  =
                  \{\theta'\}t' : \{\theta\}\ann\tau}
\quad
\infer{\Gamma \models t : \ann\tau}
{\Gamma \models t = t : \ann\tau}
  \end{array}
\]

  \caption{Validity Definition}\label{fig:valid}
\end{figure}

\begin{lemma}[Well-formedness of Semantic Substitutions]\label{lem:wfsemsub}
If $\Gamma' \Vdash \theta = \theta' : \Gamma$ then $\Gamma' \vdash \theta : \Gamma$ and $\Gamma' \vdash \theta' : \Gamma$ and $\Gamma' \vdash \theta \equiv \theta' : \Gamma$.
\end{lemma}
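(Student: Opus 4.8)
<br>

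The plan is to proceed by straightforward induction on the derivation of $\Gamma' \Vdash \theta = \theta' : \Gamma$, extracting at each step the syntactic typing and definitional equality for the individual term instantiations from the semantic equality premises. There are only two cases to consider, matching the two rules defining semantic substitutions.

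In the base case, $\theta = \theta' = \cdot$ and $\Gamma = \cdot$, and the only premise is $\vdash \Gamma'$. Then $\Gamma' \vdash \cdot : \cdot$ holds by the typing rule for the empty simultaneous substitution, and $\Gamma' \vdash \cdot \equiv \cdot : \cdot$ holds trivially (reflexivity of $\equiv$, or as a base case of the congruence rules for simultaneous substitution equality). In the step case, $\theta = \theta_0, t/x$ and $\theta' = \theta_0', t'/x$ with $\Gamma = \Gamma_0, x{:}\ann\tau$, and we have the premises $\Gamma' \Vdash \theta_0 = \theta_0' : \Gamma_0$, $\Gamma' \Vdash \{\theta_0\}\ann\tau = \{\theta_0'\}\ann\tau : u$, $\Gamma' \Vdash t = t' : \{\theta_0\}\ann\tau$, and $\Gamma' \Vdash \{\theta_0\}\ann\tau : u$. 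By the induction hypothesis applied to the first premise, $\Gamma' \vdash \theta_0 : \Gamma_0$ and $\Gamma' \vdash \theta_0' : \Gamma_0$ and $\Gamma' \vdash \theta_0 \equiv \theta_0' : \Gamma_0$. By Well-Formedness of Semantic Typing (Lemma~\ref{lm:semwf}) applied to $\Gamma' \Vdash t = t' : \{\theta_0\}\ann\tau$, we obtain $\Gamma' \vdash t : \{\theta_0\}\ann\tau$, $\Gamma' \vdash t' : \{\theta_0\}\ann\tau$, and $\Gamma' \vdash t \equiv t' : \{\theta_0\}\ann\tau$.

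From these we assemble the three conclusions. For well-typedness of $\theta$: we have $\Gamma' \vdash \theta_0 : \Gamma_0$ and $\Gamma' \vdash t : \{\theta_0\}\ann\tau$, so by the typing rule for extending a simultaneous substitution, $\Gamma' \vdash \theta_0, t/x : \Gamma_0, x{:}\ann\tau$. For well-typedness of $\theta'$: we have $\Gamma' \vdash \theta_0' : \Gamma_0$, but $t'$ is typed at $\{\theta_0\}\ann\tau$, not $\{\theta_0'\}\ann\tau$; we convert using $\Gamma' \vdash \theta_0 \equiv \theta_0' : \Gamma_0$ together with the computation-level functionality/substitution-equivalence machinery (Lemma~\ref{lem:semsubst}, or more directly Lemma~\ref{lm:semwf} applied to the second premise $\Gamma' \Vdash \{\theta_0\}\ann\tau = \{\theta_0'\}\ann\tau : u$, which gives $\Gamma' \vdash \{\theta_0\}\ann\tau \equiv \{\theta_0'\}\ann\tau : u$) to derive $\Gamma' \vdash t' : \{\theta_0'\}\ann\tau$, and then the typing rule yields $\Gamma' \vdash \theta_0', t'/x : \Gamma_0, x{:}\ann\tau$. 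For definitional equality: combining $\Gamma' \vdash \theta_0 \equiv \theta_0' : \Gamma_0$ with $\Gamma' \vdash t \equiv t' : \{\theta_0\}\ann\tau$ via the congruence rule for extending substitution equality gives $\Gamma' \vdash \theta_0, t/x \equiv \theta_0', t'/x : \Gamma_0, x{:}\ann\tau$.

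The main obstacle is purely bookkeeping: the semantic substitution rule types the new term $t'$ at $\{\theta_0\}\ann\tau$ rather than $\{\theta_0'\}\ann\tau$, so establishing well-typedness of $\theta'$ requires a type conversion step, and one must be careful that the needed ingredient — equivalence of $\{\theta_0\}\ann\tau$ and $\{\theta_0'\}\ann\tau$ — is exactly what the second premise $\Gamma' \Vdash \{\theta_0\}\ann\tau = \{\theta_0'\}\ann\tau : u$ provides (after passing through Lemma~\ref{lm:semwf}). No deep argument is involved; the lemma is essentially the "well-formedness" counterpart for semantic substitutions of Lemma~\ref{lm:semwf} for semantic typing, and its proof mirrors that structure.
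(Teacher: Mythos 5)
Your proposal is correct and follows the same route as the paper, which proves this lemma simply by induction on the derivation of $\Gamma' \Vdash \theta = \theta' : \Gamma$; your elaboration of the step case (extracting syntactic typing and equality via Lemma~\ref{lm:semwf} and using the premise $\Gamma' \Vdash \{\theta\}\ann\tau = \{\theta'\}\ann\tau : u$ to justify the type conversion needed for $\theta'$) is exactly the bookkeeping the paper leaves implicit.
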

\begin{proof}
By induction on   $\Gamma' \Vdash \theta = \theta' : \Gamma$.
\end{proof}

\begin{lemma}[Symmetry and Transitivity of Semantic Substitutions]\quad
  \label{lm:symsemsub}
  Assume $\models \Gamma$.
  \begin{enumerate}
  \item If $\Gamma' \Vdash \theta_1 = \theta_2 : \Gamma$ then
    $\Gamma'\Vdash \theta_2 = \theta_1: \Gamma$.
  \item If $\Gamma' \Vdash \theta_1 = \theta_2 : \Gamma$
    and $\Gamma' \Vdash \theta_2 = \theta_3 : \Gamma$
    then $\Gamma' \Vdash \theta_1 = \theta_3 : \Gamma$.
  \end{enumerate}
\end{lemma}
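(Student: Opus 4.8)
The plan is to prove both statements by induction on the derivation of $\Gamma' \Vdash \theta_1 = \theta_2 : \Gamma$, following exactly the structure used for the analogous properties of semantic equality for terms (Lemma~\ref{lem:semsym}) and for LF substitutions (Lemma~\ref{lem:semsymlf}). The base case is the empty substitution: if $\Gamma' \Vdash \cdot = \cdot : \cdot$, then trivially $\Gamma' \Vdash \cdot = \cdot : \cdot$ for both symmetry and transitivity, using only $\vdash \Gamma'$, which is available from the premise. For the step case we have $\Gamma = \Gamma_0, x{:}\ann\tau$ and, inverting the definition, $\theta_1 = \theta_{1,0}, t_1/x$ and $\theta_2 = \theta_{2,0}, t_2/x$ with $\Gamma' \Vdash \theta_{1,0} = \theta_{2,0} : \Gamma_0$, together with $\Gamma' \Vdash \{\theta_{1,0}\}\ann\tau = \{\theta_{2,0}\}\ann\tau : u$, $\Gamma' \Vdash t_1 = t_2 : \{\theta_{1,0}\}\ann\tau$, and $\Gamma' \Vdash \{\theta_{1,0}\}\ann\tau : u$ (and by well-formedness of $\Gamma$, namely $\Gamma_0 \models \ann\tau : u$ extracted from $\models \Gamma$, we also have $\Gamma' \Vdash \{\theta_{2,0}\}\ann\tau : u$).

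For \emph{symmetry}, I would proceed as follows. By the induction hypothesis, $\Gamma' \Vdash \theta_{2,0} = \theta_{1,0} : \Gamma_0$. From $\models \Gamma$ we get $\Gamma_0 \models \ann\tau : u$, hence $\Gamma' \Vdash \{\theta_{2,0}\}\ann\tau = \{\theta_{1,0}\}\ann\tau : u$ by applying validity of $\ann\tau$ to the pair $(\theta_{2,0}, \theta_{1,0})$; this also gives $\Gamma' \Vdash \{\theta_{2,0}\}\ann\tau : u$. Now $\Gamma' \Vdash t_1 = t_2 : \{\theta_{1,0}\}\ann\tau$ together with type conversion (Lemma~\ref{lem:semsym}~(\ref{it:conv})), using $\Gamma' \Vdash \{\theta_{1,0}\}\ann\tau = \{\theta_{2,0}\}\ann\tau : u$, yields $\Gamma' \Vdash t_1 = t_2 : \{\theta_{2,0}\}\ann\tau$; then symmetry for terms (Lemma~\ref{lem:semsym}~(\ref{it:sym})) gives $\Gamma' \Vdash t_2 = t_1 : \{\theta_{2,0}\}\ann\tau$. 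Assembling these with the rule for semantic substitutions yields $\Gamma' \Vdash \theta_{2,0}, t_2/x = \theta_{1,0}, t_1/x : \Gamma_0, x{:}\ann\tau$, i.e. $\Gamma' \Vdash \theta_2 = \theta_1 : \Gamma$.

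For \emph{transitivity}, given additionally $\Gamma' \Vdash \theta_2 = \theta_3 : \Gamma$, invert it to get $\theta_3 = \theta_{3,0}, t_3/x$ with $\Gamma' \Vdash \theta_{2,0} = \theta_{3,0} : \Gamma_0$, $\Gamma' \Vdash t_2 = t_3 : \{\theta_{2,0}\}\ann\tau$, and $\Gamma' \Vdash \{\theta_{2,0}\}\ann\tau = \{\theta_{3,0}\}\ann\tau : u$. By the induction hypothesis, $\Gamma' \Vdash \theta_{1,0} = \theta_{3,0} : \Gamma_0$. Using $\Gamma_0 \models \ann\tau : u$ on the pair $(\theta_{1,0}, \theta_{3,0})$ gives $\Gamma' \Vdash \{\theta_{1,0}\}\ann\tau = \{\theta_{3,0}\}\ann\tau : u$, and on $(\theta_{2,0},\theta_{2,0})$ gives $\Gamma' \Vdash \{\theta_{2,0}\}\ann\tau : u$. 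To combine the term-level equalities I must put them at a common type: convert $\Gamma' \Vdash t_2 = t_3 : \{\theta_{2,0}\}\ann\tau$ to $\Gamma' \Vdash t_2 = t_3 : \{\theta_{1,0}\}\ann\tau$ via conversion along $\Gamma' \Vdash \{\theta_{2,0}\}\ann\tau = \{\theta_{1,0}\}\ann\tau : u$ (the latter by symmetry for types, Lemma~\ref{lem:semsym}~(\ref{it:symtyp})); then transitivity for terms (Lemma~\ref{lem:semsym}~(\ref{it:trans})) with $\Gamma' \Vdash t_1 = t_2 : \{\theta_{1,0}\}\ann\tau$ gives $\Gamma' \Vdash t_1 = t_3 : \{\theta_{1,0}\}\ann\tau$. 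Reassembling produces $\Gamma' \Vdash \theta_1 = \theta_3 : \Gamma$.

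The one delicate point — and the main obstacle — is keeping the \emph{types on which the term-equalities live} coherent across the three substitutions, since the rule for semantic substitutions records $\Gamma' \Vdash t = t' : \{\theta\}\ann\tau$ only with respect to the \emph{first} substitution's instantiation of $\ann\tau$. This is why the argument must invoke semantic validity of $\ann\tau$ (obtained from $\models \Gamma$), semantic type symmetry, and semantic type conversion for terms before applying term-level symmetry/transitivity; getting these conversions in the right order is the crux. All the ingredients are already available: well-formedness of semantic substitutions (Lemma~\ref{lem:wfsemsub}), the symmetry/transitivity/conversion package for semantic equality (Lemma~\ref{lem:semsym}), and the fact that $\models \Gamma$ supplies $\Gamma_0 \models \ann\tau : u$. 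Everything else is a routine reassembly using the single introduction rule for $\Gamma' \Vdash \theta = \theta' : \Gamma$.
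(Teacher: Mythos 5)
Your proof is correct and takes essentially the same route as the paper's: induction on the derivation(s) of $\Gamma' \Vdash \theta_1 = \theta_2 : \Gamma$, inversion of the single formation rule, and the symmetry/transitivity/conversion package of Lemma~\ref{lem:semsym} to realign the term-level equalities before reassembling. If anything you are more explicit than the paper about the type-coherence step (converting the term equality to the appropriate instantiation of $\ann\tau$ in the symmetry case, and sourcing the needed type equalities from $\models \Gamma$ rather than only from the inverted premises), which is the one delicate point you correctly identify.
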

\begin{proof}
We prove symmetry by induction on the derivation and transitivity by induction on both derivations using Symmetry, Transitivity, and Conversion for semantic equality (Lemma \ref{lem:semsym});
; reflexivity follows from symmetry and transitivity.
\LONGVERSIONCHECKED{\\[1em]Symmetry:  By induction on derivation.
\\[1em]
\pcase{\ianc{}{\Gamma' \Vdash \cdot = \cdot : \cdot}{}}
\prf{$\Gamma' \Vdash \cdot = \cdot : \cdot$ \hfill by def.}
\\
\pcase{$\D = \ibnc{\Gamma' \Vdash \theta = \theta' : \Gamma}
            {\Gamma' \Vdash t = t' : \{\theta\}\ann\tau}
            {\Gamma' \Vdash \theta, t/x = \theta', t'/x : \Gamma, x{:}\ann\tau}{}$}
\prf{$\Gamma' \Vdash \theta' = \theta : \Gamma$ \hfill by IH}
\prf{$\Gamma' \Vdash t' = t : \{\theta\}\ann\tau$\hfill by Lemma \ref{lem:semsym} (Symmetry)}
\prf{$\Gamma' \Vdash \theta', t'/x = \theta, t/x : \Gamma, x{:}\ann\tau$ \hfill by rule}
\\
Transitivity: By induction on both derivations.\\[0.5em]
\pcase{$\theta_1= \cdot$, $\theta_2 = \cdot$ and $\theta_3 = \cdot$}
\prf{$\Gamma' \Vdash \cdot = \cdot : \cdot$ \hfill by def.}
\\
\pcase{$\theta_1= \theta_1', t_1/x$, $\theta_2 = \theta_2', t_2/x$ and $\theta_3 =\theta_3', t_3/x$
   and $\Gamma = \Gamma_0, x{:}\ann\tau$}
\prf{$\Gamma' \Vdash \theta_1' = \theta_2' : \Gamma_0$ \hfill by inversion}
\prf{$\Gamma' \Vdash \theta_2' = \theta_3' : \Gamma_0$ \hfill by inversion}
\prf{$\Gamma' \Vdash \theta_1' = \theta_3' : \Gamma_0$ \hfill by IH}
\prf{$\Gamma' \Vdash t_1 = t_2 : \{\theta'_1\}\ann\tau$ \hfill by inversion}
\prf{$\Gamma' \Vdash t_2 = t_3: \{\theta'_2\}\ann\tau$ \hfill by inversion}
\prf{$\Gamma' \Vdash \{\theta'_1\}\ann\tau = \{\theta'_2\}\ann\tau : u$\hfill by inversion}
\prf{$\Gamma' \Vdash \{\theta'_2\}\ann\tau = \{\theta'_3\}\ann\tau : u$\hfill by inversion}
\prf{$\Gamma' \Vdash \{\theta'_2\}\ann\tau = \{\theta'_1\}\ann\tau : u$\hfill by Lemma \ref{lem:semsym} (Symmetry)}
\prf{$\Gamma' \Vdash t_2 = t_3 : \{\theta'_1\}\ann\tau$ \hfill by Lemma \ref{lem:semsym} (Conversion)}
\prf{$\Gamma' \Vdash t_1 = t_3: \{\theta'_1\}\ann\tau$ \hfill by Lemma \ref{lem:semsym} (Transitivity)}
\prf{$\Gamma' \Vdash \{\theta'_1\}\ann\tau = \{\theta'_3\}\ann\tau$ \hfill by Lemma \ref{lem:semsym} (Transitivity)}
\prf{$\Gamma' \Vdash \theta_1 = \theta_3 : \Gamma$ \hfill by rule}
}
\end{proof}

\begin{lemma}[Context Satisfiability]\label{lem:ctxsat}
If $\models \Gamma$ then $\vdash \Gamma$ and $\Gamma \Vdash \id(\Gamma) = \id(\Gamma) : \Gamma$ where
\[
\begin{array}{lcl}
  \id(\cdot) & = & \cdot\\
  \id(\Gamma, x{:}\tau) & = & \id(\Gamma), x/x
\end{array}
\]
\end{lemma}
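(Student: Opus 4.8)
The plan is to proceed by induction on the derivation of $\models \Gamma$, using the validity definitions of Fig.~\ref{fig:valid} and the semantic-substitution formation rule. In the base case $\Gamma = \cdot$, both conclusions are immediate: $\vdash \cdot$ holds by rule, and $\cdot \Vdash \cdot = \cdot : \cdot$ holds by the corresponding formation rule for semantic substitutions, whose only premise is $\vdash \cdot$.

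For the step case $\Gamma = \Gamma_0, x{:}\ann\tau$ we have $\models \Gamma_0$ and $\Gamma_0 \models \ann\tau : u$, and by the induction hypothesis $\vdash \Gamma_0$ together with $\Gamma_0 \Vdash \id(\Gamma_0) = \id(\Gamma_0) : \Gamma_0$. I would first establish $\vdash \Gamma_0, x{:}\ann\tau$: instantiating the definition of $\Gamma_0 \models \ann\tau : u$ at $\Gamma' := \Gamma_0$ and $\theta = \theta' := \id(\Gamma_0)$, and using $\{\id(\Gamma_0)\}\ann\tau = \ann\tau$, gives $\Gamma_0 \Vdash \ann\tau = \ann\tau : u$ and hence $\Gamma_0 \Vdash \ann\tau : u$; by Well-Formedness of Semantic Typing (Lemma~\ref{lm:semwf}), read at sort $u$, we get $\Gamma_0 \vdash \ann\tau : u$, and thus $\vdash \Gamma_0, x{:}\ann\tau$ by the context-formation rule. (When $\ann\tau = \tmctx$ this step is direct.)

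Next I would establish $\Gamma_0, x{:}\ann\tau \Vdash \id(\Gamma_0), x/x = \id(\Gamma_0), x/x : \Gamma_0, x{:}\ann\tau$ by applying the semantic-substitution rule and discharging its four premises. Let $\rho$ be the weakening renaming with $\Gamma_0, x{:}\ann\tau \leq_\rho \Gamma_0$, which exists because $\vdash \Gamma_0, x{:}\ann\tau$. (i) From the induction hypothesis and Semantic Weakening of Computation-level Substitutions (Lemma~\ref{lem:weakcsub}) we get $\Gamma_0, x{:}\ann\tau \Vdash \{\rho\}\id(\Gamma_0) = \{\rho\}\id(\Gamma_0) : \Gamma_0$, and since $\{\rho\}\id(\Gamma_0)$ is again the list of variables of $\Gamma_0$, this is the first premise $\Gamma_0, x{:}\ann\tau \Vdash \id(\Gamma_0) = \id(\Gamma_0) : \Gamma_0$. (ii) and (iv): from $\Gamma_0 \Vdash \ann\tau = \ann\tau : u$, Semantic Weakening for Computations (Lemma~\ref{lem:compsemweak}), and $\{\rho\}\ann\tau = \ann\tau$ (as $x \notin \FV(\ann\tau)$), together with $\{\id(\Gamma_0)\}\ann\tau = \ann\tau$, we obtain $\Gamma_0, x{:}\ann\tau \Vdash \ann\tau = \ann\tau : u$ and hence $\Gamma_0, x{:}\ann\tau \Vdash \ann\tau : u$. (iii): $\Gamma_0, x{:}\ann\tau \vdash x : \ann\tau$ and $\Gamma_0, x{:}\ann\tau \vdash x \equiv x : \ann\tau$ hold by the variable rule and reflexivity of $\equiv$, and $x$ is neutral, so Neutral Soundness (Lemma~\ref{lem:NeutSound}) with the semantic kinding from (ii) yields $\Gamma_0, x{:}\ann\tau \Vdash x = x : \{\id(\Gamma_0)\}\ann\tau$; when $\ann\tau = \tmctx$, $x = \psi$ is a context variable and $\Gamma_0, \psi{:}\tmctx \Vdash \psi = \psi : \tmctx$ holds directly by reflexivity of $\equiv$. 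Assembling the four premises via the rule gives the claim.

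The argument is essentially bookkeeping against the definitions; the only delicate points are noticing that applying a renaming to $\id(\Gamma_0)$ leaves the underlying list of variables unchanged (so the weakened identity is again the identity, now with the larger domain), invoking the substitution-level weakening lemma versus the type/term-level one in the right places, and treating the $\tmctx$ case of the last declaration separately since $\tmctx$ is not a proper type and so does not fall directly under Neutral Soundness.
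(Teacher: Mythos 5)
Your proposal is correct and follows essentially the same route as the paper's proof: induction on the structure of $\Gamma$, instantiating the validity of $\ann\tau$ at the identity substitution to obtain $\Gamma_0 \vdash \ann\tau : u$ via well-formedness of semantic typing, then extending the identity substitution using Neutral Soundness for the new variable and Semantic Weakening of computation-level substitutions for the tail. If anything, you are slightly more explicit than the paper in discharging all four premises of the semantic-substitution formation rule (including the weakened semantic kinding of $\ann\tau$) and in separating out the $\tmctx$ case, but the argument is the same.
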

\begin{proof}
By induction on $\Gamma$ using  Neutral Soundness (Lemma \ref{lem:NeutSound}) and Semantic Weakening (Lemma \ref{lem:weakcsub}).
\LONGVERSIONCHECKED{By induction on $\Gamma$.\\[0.5em]
\pcase{$\Gamma = \cdot$}
\prf{$\vdash \cdot $ \hfill by rules}
\prf{$\id(\cdot) = \cdot$ \hfill by def. of $\id$}
\prf{$\Gamma' \Vdash \cdot = \cdot : \cdot$ \hfill by semantic def.}

\pcase{$\Gamma = \Gamma_0, x{:}\ann\tau$}
\prf{$\models \Gamma_0, x{:}\ann\tau$ \hfill by assumption}
\prf{$\models \Gamma_0$ and $\Gamma_0 \models \ann\tau : u$ \hfill by def. validity}
\prf{$\vdash \Gamma_0$ and $\Gamma_0 \Vdash \id(\Gamma_0) = \id(\Gamma_0) : \Gamma_0$ \hfill by IH}
\prf{$\forall \Gamma',~\theta,~\theta'.~\Gamma' \Vdash \theta = \theta' : \Gamma
\Longrightarrow \Gamma' \Vdash \{\theta\}\ann\tau  =  \{\theta\}\ann\tau : \{\theta\}u$  \hfill by def. validity}
\prf{$\Gamma_0 \Vdash \{\id(\Gamma_0)\}\ann\tau = \{\id(\Gamma_0)\}\ann\tau : \{\id(\Gamma_0)\}u$ \hfill by previous lines}
\prf{$\Gamma_0 \Vdash \ann\tau = \ann\tau : u$ \hfill by subst. def.}
\prf{$\Gamma_0 \vdash \ann\tau : u$ \hfill by semantic typing}
\prf{$\vdash \Gamma_0, x{:}\ann\tau$ \hfill by context def.}
\prf{$\neut x$ \hfill by def.}
% \prf{$\norm x$ \hfill by def.}
% \prf{$x \whnf x$ \hfill by def.}
\prf{$\Gamma_0, x{:}\ann\tau \vdash x : \ann\tau$ \hfill by typing rule}
\prf{$\Gamma_0, x{:}\ann\tau \vdash x \equiv x : \ann\tau$ \hfill by typing rule}
\prf{$\Gamma_0, x{:}\ann\tau \Vdash x = x : \ann\tau$ \hfill by Neutral Soundness Lemma \ref{lem:NeutSound}}
\prf{$\Gamma_0, x{:}\ann\tau \Vdash \id({\Gamma_0}) = \id({\Gamma_0}) : \Gamma_0$ \hfill by Sem. Weakening Lemma \ref{lem:weakcsub} }
\prf{$\Gamma_0, x{:}\ann\tau \Vdash \id({\Gamma_0}), x/x = \id({\Gamma_0}), x/x : \Gamma_0, x{:}\ann\tau$ \hfill by semantic def.}
\prf{$\Gamma_0, x{:}\ann\tau \Vdash \id({\Gamma_0, x{:}\tau}) = \id({\Gamma_0, x{:}\tau}) : \Gamma_0, x{:}\ann\tau$  \hfill by def. of $\id$}
}
\end{proof}

\begin{lemma}[Symmetry and Transitivity of Validity]\quad
  \begin{enumerate}
%  \item If $\Gamma \models \tau = \tau'$ then $\Gamma \models \tau' = \tau$.
%  \item If $\Gamma \models \tau_1 = \tau_2$ and $\Gamma \models \tau_2 =  \tau_3$
%    then $\Gamma \models \tau_1 = \tau_3$.
  \item If $\Gamma \models t = t' : \ann\tau$ then $\Gamma \models t' = t : \ann\tau$.
  \item If $\Gamma \models t_1 = t_2 : \ann\tau$ and $\Gamma \models t_2 =  t_3 : \ann\tau$
    then $\Gamma \models t_1 = t_3 : \ann\tau$.
  \end{enumerate}
\end{lemma}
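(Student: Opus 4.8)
The statement asserts symmetry and transitivity of \emph{validity} for computation-level terms; the natural approach is to reduce both claims directly to the corresponding properties of the underlying \emph{semantic} equality (Lemma~\ref{lem:semsym}) together with symmetry and transitivity of semantic substitutions (Lemma~\ref{lm:symsemsub}). Recall that $\Gamma \models t = t' : \ann\tau$ unfolds to: $\models \Gamma$, $\Gamma \models \ann\tau : u$, and for all $\Gamma'$, $\theta$, $\theta'$ with $\Gamma' \Vdash \theta = \theta' : \Gamma$ we have $\Gamma' \Vdash \{\theta\}t = \{\theta'\}t' : \{\theta\}\ann\tau$. So the proof is essentially an unfolding of the validity definition, an application of the semantic-level lemmas, and a re-folding.

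\textbf{Symmetry.} Assume $\Gamma \models t = t' : \ann\tau$. The side conditions $\models \Gamma$ and $\Gamma \models \ann\tau : u$ carry over unchanged (they are symmetric in $t$, $t'$). For the main clause, suppose $\Gamma' \Vdash \theta = \theta' : \Gamma$. By Lemma~\ref{lm:symsemsub}(1) we get $\Gamma' \Vdash \theta' = \theta : \Gamma$, and instantiating the hypothesis with this flipped pair yields $\Gamma' \Vdash \{\theta'\}t = \{\theta\}t' : \{\theta'\}\ann\tau$. Applying symmetry of semantic equality (Lemma~\ref{lem:semsym}(\ref{it:sym})) gives $\Gamma' \Vdash \{\theta\}t' = \{\theta'\}t : \{\theta'\}\ann\tau$. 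The only remaining mismatch is the type annotation: we need the conclusion at $\{\theta\}\ann\tau$, not $\{\theta'\}\ann\tau$. Since $\Gamma' \Vdash \theta = \theta' : \Gamma$ and $\Gamma \models \ann\tau : u$, validity of types gives $\Gamma' \Vdash \{\theta\}\ann\tau = \{\theta'\}\ann\tau : u$; hence by conversion of semantic equality (Lemma~\ref{lem:semsym}(\ref{it:conv})) we may transport the conclusion to $\{\theta\}\ann\tau$, establishing $\Gamma \models t' = t : \ann\tau$.

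\textbf{Transitivity.} Assume $\Gamma \models t_1 = t_2 : \ann\tau$ and $\Gamma \models t_2 = t_3 : \ann\tau$; again the context and type side conditions are immediate. Fix $\Gamma' \Vdash \theta = \theta' : \Gamma$. The subtlety here is the well-known one with PER-style logical relations: the first hypothesis wants a pair $(\theta,\theta')$ and the second wants a pair starting from $\theta'$, so we need a ``reflexive'' instantiation in the middle. Concretely, from $\Gamma' \Vdash \theta = \theta' : \Gamma$ and Lemma~\ref{lm:symsemsub} (symmetry then transitivity) we obtain $\Gamma' \Vdash \theta' = \theta' : \Gamma$ and $\Gamma' \Vdash \theta = \theta : \Gamma$. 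Applying the first validity hypothesis to $(\theta,\theta)$ gives $\Gamma' \Vdash \{\theta\}t_1 = \{\theta\}t_2 : \{\theta\}\ann\tau$, and applying the second to $(\theta,\theta')$ gives $\Gamma' \Vdash \{\theta\}t_2 = \{\theta'\}t_3 : \{\theta\}\ann\tau$. Transitivity of semantic equality (Lemma~\ref{lem:semsym}(\ref{it:trans})) then yields $\Gamma' \Vdash \{\theta\}t_1 = \{\theta'\}t_3 : \{\theta\}\ann\tau$, as required; re-folding the validity definition completes the argument.

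\textbf{Main obstacle.} The proof is routine once the semantic-level lemmas are in hand, so the only real care-point is bookkeeping: making sure the type annotation in the semantic conclusion always sits at $\{\theta\}\ann\tau$ (using validity of types plus semantic conversion to transport it) and choosing the right ``diagonal'' instantiations $(\theta,\theta)$ or $(\theta',\theta')$ so that the two hypotheses compose. No genuinely hard step is involved; everything reduces to Lemma~\ref{lem:semsym} and Lemma~\ref{lm:symsemsub}, and reflexivity of validity follows from symmetry and transitivity exactly as at the semantic level.
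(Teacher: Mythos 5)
Your proof is correct and follows essentially the same route as the paper: both parts reduce to symmetry/transitivity of semantic substitutions (Lemma~\ref{lm:symsemsub}) and of semantic equality (Lemma~\ref{lem:semsym}), with the same diagonal instantiation $(\theta,\theta)$ then $(\theta,\theta')$ for transitivity and the same type-conversion step (via validity of $\ann\tau$) to realign the annotation in the symmetry case. No gaps.
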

\begin{proof}
Using Lemma \ref{lem:ctxsat} (Context Satisfiability), Lemma \ref{lm:symsemsub} (Symmetry and Transitivity for Substitutions),  Lemma \ref{lem:semsym} (Symmetry and Transitivity for Terms),  and Lemma \ref{lem:semsym} (Conversion).
\LONGVERSIONCHECKED{
\\[1em]
    \pcase{$\ibnc{\models \Gamma}
                {\Gamma \models \ann\tau : u \qquad
                 \forall \Gamma',~\theta,~\theta'.~\Gamma' \Vdash \theta = \theta' : \Gamma
         \Longrightarrow \Gamma' \Vdash \{\theta\}t  = \{\theta'\}t' : \{\theta\}\ann\tau}{\Gamma \models t = t' : \ann\tau}{}$}
\\
\prf{Assume $\Gamma' \Vdash \theta' = \theta : \Gamma$}
     \prf{$\Gamma' \Vdash \theta = \theta' : \Gamma$ \hfill by Lemma \ref{lm:symsemsub} (Symmetry)}
    \prf{$\Gamma' \Vdash \{\theta\}t  = \{\theta'\}t' : \{\theta\}\ann\tau$ \hfill by assumption $\Gamma \Vdash t = t' : \ann\tau$}
    \prf{$\Gamma' \Vdash \{\theta\}t'  = \{\theta'\}t : \{\theta\}\ann\tau$ \hfill by Lemma \ref{lem:semsym} (Symmetry)}
    \prf{$\models \Gamma$ \hfill by assumption}
    \prf{$\Gamma \Vdash \id(\Gamma) = \id(\Gamma) : \Gamma$ \hfill by Lemma \ref{lem:ctxsat}}
    \prf{$\Gamma \models \ann\tau : u$ \hfill by assumption}
    \prf{$\Gamma \models \ann\tau = \ann\tau : u$ \hfill by def.}
    \prf{$\Gamma' \Vdash \{\theta'\}\ann\tau = \{\theta\}\ann\tau : u$\hfill by $\Gamma \models \ann\tau = \ann\tau : u$}
    \prf{$\Gamma' \Vdash \{\theta'\}t'  = \{\theta\}t : \{\theta'\}\ann\tau$ \hfill by Lemma \ref{lem:semsym} (Conversion)}
    \prf{$\Gamma \models t' = t : \ann\tau$ \hfill by rule}
\\
\pcase{$\ibnc{\models \Gamma}{
             \Gamma \models \ann\tau : u \quad
             \forall \Gamma',~\theta,~\theta'.~\Gamma' \Vdash \theta = \theta' : \Gamma
\Longrightarrow \Gamma' \Vdash \{\theta\}t_1  =  \{\theta'\}t_2 : \{\theta\}\ann\tau}
           {\Gamma \models t_1 = t_2 : \ann\tau}{}$ and}
\pcase{$\ibnc{\models \Gamma}{
             \Gamma \models \ann\tau : u \quad
             \forall \Gamma',~\theta,~\theta'.~\Gamma' \Vdash \theta = \theta' : \Gamma
\Longrightarrow \Gamma' \Vdash \{\theta\}t_2  =   \{\theta'\}t_3 : \{\theta\}\ann\tau}{\Gamma \models t_2 = t_3 : \ann\tau}{}$}
\prf{Assume $\Gamma' \Vdash \theta = \theta' : \Gamma$}
\prf{$\Gamma' \Vdash \theta = \theta : \Gamma$ \hfill by symmetry and transitivity of substitution (Lemma \ref{lm:symsemsub})}
\prf{$\Gamma' \Vdash \{\theta\}t_1  = \{\theta\}t_2 : \{\theta\}\ann\tau$ \hfill by $\Gamma \models t_1 = t_2 : \ann\tau$}
\prf{$\Gamma' \Vdash \{\theta\}t_2  = \{\theta'\}t_3 : \{\theta\}\ann\tau$ \hfill by $\Gamma \models t_2 = t_3 : \ann\tau$}
\prf{$\Gamma' \Vdash \{\theta\}t_1  = \{\theta'\}t_3 : \{\theta\}\ann\tau$ \hfill by Lemma \ref{lem:semsym} (Transitivity)}
\prf{$\Gamma \models t_1 = t_3 : \ann\tau$ \hfill by rule}
}
\end{proof}

\begin{lemma}[Function Type Injectivity is valid.]
If $\Gamma \models (y:\ann\tau_1) \arrow \tau_2 = (y:\ann\tau_1') \arrow \tau'_2 : u_3$ then
$\Gamma \models \ann\tau_1 = \ann\tau_1' : u_1$ and
$\Gamma, y{:}\ann\tau_1 \models \tau_2 = \tau'_2 : u_2$ and $(u_1,~u_2,~u_3) \in \Ru$
\end{lemma}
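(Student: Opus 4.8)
The plan is to unfold the validity definition and reduce the statement to semantic function-type injectivity at the level of the Kripke model, which in turn follows from the way the semantic kinding and semantic equality for function types are defined. First I would take an arbitrary $\Gamma' \Vdash \theta = \theta' : \Gamma$ and, from the hypothesis $\Gamma \models (y:\ann\tau_1) \arrow \tau_2 = (y:\ann\tau_1') \arrow \tau'_2 : u_3$, obtain $\Gamma' \Vdash \{\theta\}((y:\ann\tau_1)\arrow \tau_2) = \{\theta'\}((y:\ann\tau_1')\arrow \tau'_2) : u_3$, i.e.\ $\Gamma' \Vdash (y:\{\theta\}\ann\tau_1)\arrow \{\theta\}\tau_2 = (y:\{\theta'\}\ann\tau_1')\arrow \{\theta'\}\tau'_2 : u_3$. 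By the definition of semantic equality for types at a function type (Fig.~\ref{fig:sem}), this unpacks into: the right-hand side weak-head reduces to a function type (which it already is), together with $\forall \Gamma'' \leq_\rho \Gamma'.\ \Gamma'' \Vdash \{\rho\}\{\theta\}\ann\tau_1 = \{\rho\}\{\theta'\}\ann\tau_1' : u_1$ and the pointwise clause for the codomains, where $(u_1,u_2,u_3)\in\Ru$. Instantiating the first clause at the identity renaming $\rho = \id$ gives $\Gamma' \Vdash \{\theta\}\ann\tau_1 = \{\theta'\}\ann\tau_1' : u_1$; since $\Gamma'$, $\theta$, $\theta'$ were arbitrary, this yields $\Gamma \models \ann\tau_1 = \ann\tau_1' : u_1$, once we also record validity $\Gamma \models \ann\tau_1 : u_1$ and $\models \Gamma$, both of which come from inverting $\Gamma \models (y:\ann\tau_1)\arrow \tau_2 : u_3$ via the semantic kinding rule for function types in Fig.~\ref{fig:semkind}.

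For the codomain, I would take an arbitrary $\Gamma' \Vdash \theta = \theta' : \Gamma, y{:}\ann\tau_1$. Writing $\theta = \theta_0, s/y$ and $\theta' = \theta'_0, s'/y$, inversion on the definition of semantic substitutions gives $\Gamma' \Vdash \theta_0 = \theta'_0 : \Gamma$ and $\Gamma' \Vdash s = s' : \{\theta_0\}\ann\tau_1$. Applying the hypothesis at $\theta_0, \theta'_0$ and then using the pointwise codomain clause of semantic type equality with $\rho = \id$ and the pair $s = s'$ (noting $\{\theta_0\}\ann\tau_1$ is the $\{\rho\}\ann\tau_1$-slot), I obtain $\Gamma' \Vdash \{\theta_0, s/y\}\tau_2 = \{\theta'_0, s'/y\}\tau'_2 : u_2$, i.e.\ $\Gamma' \Vdash \{\theta\}\tau_2 = \{\theta'\}\tau'_2 : u_2$. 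Since $\Gamma'$, $\theta$, $\theta'$ were arbitrary and we also have $\models \Gamma, y{:}\ann\tau_1$ (from $\models\Gamma$ and $\Gamma \models \ann\tau_1 : u_1$) and $\Gamma, y{:}\ann\tau_1 \models \tau_2 : u_2$ (by inverting the semantic kinding of the function type, using $\Gamma \Vdash \ann\tau_1$-reflexivity from Context Satisfiability, Lemma~\ref{lem:ctxsat}, and semantic weakening, Lemma~\ref{lem:semweak}), we conclude $\Gamma, y{:}\ann\tau_1 \models \tau_2 = \tau'_2 : u_2$. The constraint $(u_1,u_2,u_3)\in\Ru$ is simply carried over from the definitions.

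The main subtlety — and the step I would be most careful about — is the bookkeeping around the universes $u_1$ and $u_2$ and the side conditions in the semantic kinding judgment of Fig.~\ref{fig:semkind}: that definition only gives $\Gamma'' \Vdash \{\rho\}\ann\tau_1 : u_1$ and the pointwise clause yielding $\Gamma'' \Vdash \{\rho, s/y\}\tau_2 : u$ (not $u_2$), whereas semantic \emph{equality} of function types carries the split $(u_1, u_2, u)\in\Ru$ with the codomain equality at $u_2$. So to extract $\Gamma, y{:}\ann\tau_1 \models \tau_2 : u_2$ I would need to invoke the semantic equality hypothesis (not just the kinding one) at the identity substitution, using reflexivity of validity, and then read off the $u_2$ from there. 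A second minor point is checking that $\{\theta_0\}\ann\tau_1$ genuinely matches the $\{\rho\}\ann\tau_1$-slot of the codomain clause when $\rho = \id$; this is immediate once one notes $\ann\tau_1$ is a type (not $\tmctx$) in the non-trivial case, and the $\tmctx$ case is vacuous since then there is no codomain-substitution subtlety. Everything else is a routine unfolding of the validity and semantic-equality definitions together with symmetry/transitivity/conversion from Lemma~\ref{lem:semsym} and the substitution bookkeeping from Lemma~\ref{lm:symsemsub}.
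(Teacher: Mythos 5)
Your proposal is correct and follows essentially the same route as the paper: unfold the validity definition, read off the domain and codomain clauses from the semantic equality of function types, and instantiate the universally quantified renaming at the identity (and, for the codomain, decompose the semantic substitution on $\Gamma, y{:}\ann\tau_1$ into $\theta_0, s/y$ exactly as the paper does). Your extra bookkeeping about $\models \Gamma, y{:}\ann\tau_1$ and the $u$ versus $u_2$ discrepancy in the semantic kinding rule is a reasonable tightening of details the paper leaves implicit, not a different argument.
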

\begin{proof}
Proof by unfolding the semantic definitions.
\LONGVERSIONCHECKED{
\quad\\[0.5em]
\prf{$\Gamma \models (y:\ann\tau_1) \arrow \tau_2 = (y:\ann\tau_1') \arrow \tau'_2 : u_3$ \hfill by assumption}
\prf{$\models \Gamma$  \hfill}
\prf{$ \forall \Gamma',~\theta,~\theta'.~\Gamma' \Vdash \theta = \theta' : \Gamma
     \Longrightarrow \Gamma' \Vdash \{\theta\}(y:\ann\tau_1) \arrow \tau_2  =
                                     \{\theta'\}(y:\ann\tau_1') \arrow \tau'_2  : \{\theta\}u_3$ \hfill by def. of validity}
\\[0.15em]
\prf{To prove: $\Gamma \models \ann\tau_1 = \ann\tau_1' : u_1$ }
\prf{Assume an arbitrary $\Gamma' \Vdash \theta = \theta' : \Gamma$.}
\prf{$\Gamma' \Vdash \{\theta\}(y:\ann\tau_1) \arrow \tau_2  =
                    \{\theta'\}(y:\ann\tau_1') \arrow \tau'_2  : u_3$ \hfill by previous lines and $\{\theta\}u_3 = u_3$}
\prf{$\Gamma' \Vdash (y: \{\theta\}\ann\tau_1) \arrow \{\theta,~y/y\}\tau_2  =
                     (y: \{\theta'\}\ann\tau_1') \arrow \{\theta', y/y\}\tau'_2  : u_3$ \hfill by subst. def.}
\prf{$(y: \{\theta'\}\ann\tau_1') \arrow \{\theta',~y/y\}\tau'_2  \whnf
  (y: \{\theta'\}\ann\tau_1') \arrow \{\theta',~y/y\}\tau'_2$ \hfill by
  unfolding semantic def. and $\whnf$\\
\mbox{\quad}\hfill since $\norm ((y: \{\theta'\}\ann\tau_1') \arrow \{\theta',~y/y\}\tau'_2 )$}
\\[0.15em]
\prf{$\forall \Gamma_0 \leq_\rho \Gamma'.~
     \Gamma_0 \Vdash  \{\rho\}\{\theta\}\ann\tau_1 = \{\rho\}\{\theta\}\ann\tau_1' : u_1$ \hfill by sem. def.}
\prf{$\Gamma' \Vdash \{\theta\}\ann\tau_1 = \{\theta\}\ann\tau_1':u_1$
   \hfill choosing $\Gamma_0 = \Gamma'$ and $\rho = \id(\Gamma')$ and the fact that $\{\id(\Gamma')\}\theta = \theta$}
\prf{$\forall \Gamma', \theta, \theta'.~\Gamma' \Vdash \theta = \theta' : \Gamma \Longrightarrow \Gamma' \Vdash \{\theta\}\ann\tau_1 = \{\theta'\}\ann\tau_1' : u_1 $ \hfill previous lines }
\prf{$\Gamma \models \ann\tau_1 = \ann\tau_1' : u_1$ \hfill def. of validity ($\models$), since $\Gamma', \theta, \theta'$ arbitrary}
\\[0.15em]
\prf{To prove: $\Gamma, y{:}\ann\tau_1 \models \tau_2 = \tau'_2 : u_2$ }
\prf{Assume an arbitrary $\Gamma' \Vdash \theta_2 = \theta'_2 : \Gamma, y{:}\ann\tau_1$.}
\prf{$\theta_2 = \theta, s/y$ and $\theta'_2 = \theta', s'/y$ \\
 $\Gamma' \Vdash \theta = \theta' : \Gamma$ and
 $\Gamma' \Vdash s = s' : \{\theta\}\ann\tau_1$ \hfill by inversion on  $\Gamma' \Vdash \theta_2 = \theta'_2 : \Gamma, y{:}\ann\tau_1$}
\prf{$\forall \Gamma_0 \leq_\rho \Gamma'.
~\Gamma_0 \Vdash s = s' : \{\rho\}\{\theta\}\ann\tau_1  \Longrightarrow
 \Gamma' \Vdash \{\rho, s/y\}\{\theta, y/y\}\tau_2 = \{\rho,s'/y\}\{\theta',y/y\}\tau'_2$ \hfill by sem. def. }
\prf{$\Gamma' \Vdash s = s' : \{\theta\}\ann\tau_1 \Longrightarrow
     \Gamma' \Vdash \{s/y\}\{\theta,y/y\}\tau_2 = \{s'/y\}\{\theta',y/y\}\tau'_2$
  \hfill by choosing $ \rho = \id(\Gamma')$}
\prf{$\Gamma' \Vdash s = s' : \{\theta\}\ann\tau_1 \Longrightarrow
     \Gamma' \Vdash \{\theta,s/y\}\tau_2 = \{\theta',s'/y\}\tau'_2$ \hfill by subst. def.}
\prf{$\Gamma' \Vdash \{\theta, s/y\}\tau_2 = \{\theta', s'/y\}\tau'_2$ \hfill by previous line}
\prf{$\Gamma' \Vdash \{\theta_2\}\tau_2 = \{\theta'_2\}\tau'_2 $ \hfill by previous lines}
\prf{$\Gamma , y{:}\ann\tau_1 \models \tau_2 = \tau'_2 : u_2$ \hfill by def. of validity, since $\Gamma', \theta, \theta'$ arbitrary}
}
\end{proof}

\begin{theorem}[Completeness of Validity]\label{lem:SemComplete}
If $\Gamma \models t = t' : \ann\tau$ then $\Gamma \vdash t : \ann\tau$ and
$\Gamma \vdash t' : \ann\tau$ and
$\Gamma \vdash t \equiv t' : \ann\tau$ and
$\Gamma \vdash \ann\tau : u$.
\end{theorem}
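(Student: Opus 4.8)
The plan is to prove Completeness of Validity by instantiating the validity hypothesis at the identity substitution and then reading off the conclusion from the well-formedness lemmas already established for the semantic relation. Concretely, from $\Gamma \models t = t' : \ann\tau$ we first extract $\models \Gamma$ and $\Gamma \models \ann\tau : u$ by inversion on the validity rule. Applying Context Satisfiability (Lemma~\ref{lem:ctxsat}) to $\models \Gamma$ gives $\vdash \Gamma$ together with $\Gamma \Vdash \id(\Gamma) = \id(\Gamma) : \Gamma$. We then feed this semantic substitution into the universal statement packaged inside $\Gamma \models t = t' : \ann\tau$, obtaining $\Gamma \Vdash \{\id(\Gamma)\}t = \{\id(\Gamma)\}t' : \{\id(\Gamma)\}\ann\tau$, and likewise into $\Gamma \models \ann\tau : u$, obtaining $\Gamma \Vdash \{\id(\Gamma)\}\ann\tau = \{\id(\Gamma)\}\ann\tau : u$.

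The next step is to discharge the identity substitution: since $\{\id(\Gamma)\}s = s$ for any $s$ (term, type, or otherwise), these simplify to $\Gamma \Vdash t = t' : \ann\tau$ and $\Gamma \Vdash \ann\tau = \ann\tau : u$. From here I would invoke Well-Formedness of Semantic Typing for computations (Lemma~\ref{lm:semwf}) on $\Gamma \Vdash t = t' : \ann\tau$ to get $\Gamma \vdash t : \ann\tau$, $\Gamma \vdash t' : \ann\tau$, and $\Gamma \vdash t \equiv t' : \ann\tau$. For the last conjunct $\Gamma \vdash \ann\tau : u$, I would apply Lemma~\ref{lm:semwf} again — this time to $\Gamma \Vdash \ann\tau = \ann\tau : u$, viewing $\ann\tau$ as a term at the type $u$ (a universe), which is exactly the shape the semantic typing of terms at sort $u$ is set up to handle — yielding $\Gamma \vdash \ann\tau : u$. (Alternatively one uses the analogue of Lemma~\ref{lm:semlfwf}/\ref{lm:semwf} phrased directly for semantic kinding.)

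There is genuinely little content here beyond bookkeeping, so the main obstacle is purely a matter of making sure the identity-substitution simplification $\{\id(\Gamma)\}(-) = (-)$ is available as a syntactic identity (it follows from the definition of simultaneous computation-level substitution and a routine induction, which can be assumed), and that the semantic well-formedness lemma applies uniformly to the case $\ann\tau = \tmctx$ — but in that degenerate case $\Gamma \vdash \tmctx : u$ is again mere notation and holds whenever $\vdash \Gamma$, which we already have. Thus the proof is: invert validity, apply Context Satisfiability, specialize to $\id(\Gamma)$, simplify, and conclude with Lemma~\ref{lm:semwf}.
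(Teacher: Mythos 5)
Your proposal is correct and follows essentially the same route as the paper: invert the validity definition, use Context Satisfiability (Lemma~\ref{lem:ctxsat}) to produce the identity semantic substitution, instantiate the universal quantifier, discharge $\{\id(\Gamma)\}$, and conclude via Well-Formedness of Semantic Typing (Lemma~\ref{lm:semwf}). Your explicit derivation of the fourth conjunct $\Gamma \vdash \ann\tau : u$ by a second application of Lemma~\ref{lm:semwf} to $\Gamma \Vdash \ann\tau = \ann\tau : u$ (with the $\tmctx$ degenerate case handled separately) is in fact slightly more careful than the paper, which attributes all four conclusions to a single use of that lemma even though its statement only yields the first three.
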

\begin{proof}
Unfolding of validity definition relying on context satisfiability
(Lemma \ref{lem:ctxsat}) and Well-Formedness of Semantic Typing (Lemma \ref{lm:semwf}).
\LONGVERSIONCHECKED{\\[0.5em]\quad
\prf{$\Gamma \models t = t' : \ann\tau$ \hfill by assumption}
\prf{$\models \Gamma$ \hfill by validity definition    }
\prf{$\vdash \Gamma$ and $\Gamma \Vdash \id(\Gamma) = \id(\Gamma) : \Gamma$ \hfill by Lemma \ref{lem:ctxsat}}
\prf{$\forall \Gamma',~\theta,~\theta'.~\Gamma' \Vdash \theta = \theta' :
                  \Gamma \Longrightarrow \Gamma' \Vdash \{\theta\}t  =
                  \{\theta\}t' : \{\theta\}\ann\tau$ \hfill by validity definition    }
\prf{$\Gamma \Vdash \{\id(\Gamma)\}t = \{\id(\Gamma)\}t' : \{\id(\Gamma)\}\ann\tau$ \hfill by previous lines }
\prf{$\Gamma \Vdash t = t' : \ann\tau$ \hfill by subst. def. }
\prf{$\Gamma \vdash t' : \ann\tau$,~ $\Gamma \vdash t: \ann\tau$,~
$\Gamma \vdash t \equiv t' : \ann\tau$ and $\Gamma \vdash \ann\tau : u$ \hfill
    by Well-Formedness of Seman. Typ. (Lemma \ref{lm:semwf})}
}
\end{proof}

% \begin{theorem}[Completeness (Generalized) ]\label{lem:SemCompleteGen}
% If $\Gamma' \models \theta = \theta' : \Gamma$ then $\Gamma \vdash t : \tau$ and
% $\Gamma \vdash t' : \tau$ and
% $\Gamma' \vdash \{\theta\}t \equiv \{\theta'\}t' : \{\theta\}\tau$ and
% $\Gamma' \vdash \{\theta\}\tau : u$.
% \end{theorem}
% \begin{proof}
% To do
% \end{proof}

% \subsection{Fundamental Lemma}
The fundamental lemma (Lemma \ref{lm:fundtheo}) states that well-typed
terms are valid. The proof proceeds by mutual induction on the typing derivation for LF-objects and computations. To structure the proof of the fundamental lemma that well-typed computations are valid, we consider the validity of type conversion, computation-level functions, applications, and recursion
individually.

\begin{lemma}[Validity of Type Conversion]\label{lem:validtypconv}
If $\Gamma \models \ann\tau = \ann\tau' : u$ and $\Gamma \models t : \ann\tau$ then $\Gamma \models t : \ann\tau'$.
\end{lemma}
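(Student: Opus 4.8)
The plan is to unfold the validity definitions on both sides and then patch them together using the semantic machinery already established. By assumption $\Gamma \models \ann\tau = \ann\tau' : u$ gives us $\models \Gamma$ together with the fact that for all $\Gamma' \Vdash \theta = \theta' : \Gamma$ we have $\Gamma' \Vdash \{\theta\}\ann\tau = \{\theta'\}\ann\tau' : u$; and $\Gamma \models t : \ann\tau$, i.e.\ $\Gamma \models t = t : \ann\tau$, gives us $\Gamma \models \ann\tau : u$ together with $\Gamma' \Vdash \{\theta\}t = \{\theta'\}t : \{\theta\}\ann\tau$ for every such $\theta,\theta'$. The goal is to produce $\Gamma \models t = t : \ann\tau'$, so I need $\models \Gamma$ (immediate), $\Gamma \models \ann\tau' : u$, and $\Gamma' \Vdash \{\theta\}t = \{\theta'\}t : \{\theta\}\ann\tau'$ for all $\Gamma' \Vdash \theta = \theta' : \Gamma$.

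First I would establish $\Gamma \models \ann\tau' : u$, i.e.\ $\Gamma \models \ann\tau' = \ann\tau' : u$. Given $\Gamma' \Vdash \theta = \theta' : \Gamma$, symmetry and transitivity of semantic substitutions (Lemma~\ref{lm:symsemsub}, available since $\models\Gamma$) yield $\Gamma' \Vdash \theta' = \theta : \Gamma$ and hence $\Gamma' \Vdash \theta' = \theta' : \Gamma$. Feeding $\Gamma' \Vdash \theta = \theta' : \Gamma$ into $\Gamma \models \ann\tau = \ann\tau' : u$ gives $\Gamma' \Vdash \{\theta\}\ann\tau = \{\theta'\}\ann\tau' : u$; combining with $\Gamma \models \ann\tau = \ann\tau : u$ applied to $\theta'=\theta'$ (giving $\Gamma' \Vdash \{\theta'\}\ann\tau = \{\theta'\}\ann\tau : u$) and using symmetry/transitivity for types (Lemma~\ref{lem:semsym}, items~\ref{it:symtyp} and~\ref{it:transtyp}) we get $\Gamma' \Vdash \{\theta'\}\ann\tau' = \{\theta'\}\ann\tau' : u$, which is what is needed; a similar patching also yields $\Gamma' \Vdash \{\theta\}\ann\tau' = \{\theta'\}\ann\tau' : u$ if one prefers to prove the full equality version directly.

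Next I would establish the term component. Fix $\Gamma' \Vdash \theta = \theta' : \Gamma$. From $\Gamma \models t = t : \ann\tau$ we get $\Gamma' \Vdash \{\theta\}t = \{\theta'\}t : \{\theta\}\ann\tau$. From $\Gamma \models \ann\tau = \ann\tau' : u$ applied to $\theta = \theta$ (using $\Gamma' \Vdash \theta = \theta : \Gamma$, obtained by symmetry/transitivity of substitutions) we get $\Gamma' \Vdash \{\theta\}\ann\tau = \{\theta\}\ann\tau' : u$. Now conversion for semantic equality of terms (Lemma~\ref{lem:semsym}, item~\ref{it:conv}), whose hypotheses require $\Gamma' \Vdash \{\theta\}\ann\tau : u$, $\Gamma' \Vdash \{\theta\}\ann\tau' : u$ and $\Gamma' \Vdash \{\theta\}\ann\tau = \{\theta\}\ann\tau' : u$ — all in hand from the previous paragraph and the hypotheses — lets us conclude $\Gamma' \Vdash \{\theta\}t = \{\theta'\}t : \{\theta\}\ann\tau'$. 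Assembling the three pieces gives $\Gamma \models t = t : \ann\tau'$, hence $\Gamma \models t : \ann\tau'$.

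The only mildly delicate point is bookkeeping the side conditions of the semantic conversion and transitivity-for-types lemmas (they require both $\ann\tau$ and $\ann\tau'$ to be semantically well-kinded under the relevant substitutions, not merely semantically equal), so I would be careful to first produce $\Gamma \models \ann\tau' : u$ before invoking conversion on terms; once that ordering is fixed the argument is a routine unfolding. I expect no real obstacle beyond this ordering of steps.
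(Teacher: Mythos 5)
Your proposal is correct and follows essentially the same route as the paper's proof: unfold the validity definitions, fix $\Gamma' \Vdash \theta = \theta' : \Gamma$, and apply semantic type conversion (Lemma~\ref{lem:semsym}, item~\ref{it:conv}) to transport $\Gamma' \Vdash \{\theta\}t = \{\theta'\}t$ from $\{\theta\}\ann\tau$ to $\{\theta\}\ann\tau'$. Your extra care in first deriving $\Gamma \models \ann\tau' : u$ and in arranging the type equality as $\{\theta\}\ann\tau = \{\theta\}\ann\tau'$ (via reflexivity of $\theta$) is a welcome tightening of the side conditions that the paper's terse proof leaves implicit.
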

\begin{proof}
By definition relying on semantic type conversion lemma (Lemma \ref{lem:semsym} (\ref{it:conv})).
 \LONGVERSIONCHECKED{
\quad\\[0.5em]
\prf{$\Gamma \models t : \ann\tau$ \hfill by assumption }
\prf{$\Gamma \models t = t : \ann\tau$ \hfill by validity def.}
\prf{Assume $\Gamma' \Vdash \theta = \theta' : \Gamma$}
\prf{$\Gamma' \Vdash \{\theta\}t = \{\theta'\}t:\{\theta\}\ann\tau$ \hfill by validity def. $\Gamma \models t = t : \ann\tau$}
\prf{$\Gamma \Vdash \{\theta\}\ann\tau = \{\theta'\}\ann\tau' : u$ \hfill by validity def. $\Gamma \models \ann\tau = \ann\tau' : u$}
\prf{$\Gamma \Vdash \{\theta\}t = \{\theta'\} : \{\theta\}\ann\tau'$ \hfill by Lemma \ref{lem:semsym} (Conversion)}
\prf{$\Gamma \models t = t : \ann\tau'$ \hfill since $\Gamma', \theta,\theta'$ arbitrary}
}
\end{proof}

\begin{lemma}[Validity of Functions]\label{lem:ValidFun}
If  $\Gamma, y{:}\ann\tau_1 \models t : \tau_2$ then $\Gamma \models \tmfn y t : (y:\ann\tau_1) \arrow \tau_2$.
\end{lemma}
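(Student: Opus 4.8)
The plan is to unfold the definition of validity and discharge the two obligations it imposes on $\Gamma \models \tmfn y t : (y{:}\ann\tau_1) \arrow \tau_2$: (i) $\Gamma \models (y{:}\ann\tau_1) \arrow \tau_2 : u_3$ for a suitable $u_3$ with $(u_1,u_2,u_3) \in \Ru$, and (ii) for every $\Gamma' \Vdash \theta = \theta' : \Gamma$, the semantic equality $\Gamma' \Vdash \tmfn y {\{\theta,y/y\}t} = \tmfn y {\{\theta',y/y\}t} : (y{:}\{\theta\}\ann\tau_1) \arrow \{\theta,y/y\}\tau_2$. From the hypothesis $\Gamma, y{:}\ann\tau_1 \models t : \tau_2$, unfolding the validity definitions (Fig.~\ref{fig:valid}) gives $\models \Gamma$, $\Gamma \models \ann\tau_1 : u_1$, $\Gamma, y{:}\ann\tau_1 \models \tau_2 : u_2$, and validity of $t$ under every semantic substitution into $\Gamma, y{:}\ann\tau_1$; these are the raw materials.

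For obligation (ii), I would fix $\Gamma' \Vdash \theta = \theta' : \Gamma$ and unfold the semantic equality at function type (Fig.~\ref{fig:sem}). The whnf clauses are immediate because $\tmfn y {-}$ is already a weak head normal form; the well-typedness side conditions of Def.~\ref{def:typedwhnf} follow from the typing rule for abstractions together with well-formedness of semantic substitutions (Lemma~\ref{lem:wfsemsub}), the computation-level substitution lemma (Lemma~\ref{lm:compsubst}), and validity of the function type from step (i). The heart of the matter is the last clause: given $\Gamma'' \leq_\rho \Gamma'$ and $\Gamma'' \Vdash s = s' : \{\rho\}\{\theta\}\ann\tau_1$, I must show $\Gamma'' \Vdash (\tmfn y {\{\rho\}\{\theta,y/y\}t})~s = (\tmfn y {\{\rho\}\{\theta',y/y\}t})~s' : \{\rho,s/y\}\{\theta,y/y\}\tau_2$. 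The idea is to build the extended semantic substitution $\Gamma'' \Vdash \{\rho\}\theta, s/y = \{\rho\}\theta', s'/y : \Gamma, y{:}\ann\tau_1$ — using semantic weakening of substitutions (Lemma~\ref{lem:weakcsub}) to push $\theta = \theta'$ down to $\Gamma''$, the hypothesis on $s,s'$, and validity of $\ann\tau_1$ for the two type premises of the semantic-substitution rule — and then invoke validity of $t$ to get $\Gamma'' \Vdash \{\{\rho\}\theta, s/y\}t = \{\{\rho\}\theta', s'/y\}t : \{\{\rho\}\theta, s/y\}\tau_2$. Composition of computation-level substitutions (with $y$ chosen fresh) rewrites both sides as $\{s/y\}\{\rho\}\{\theta,y/y\}t$ and $\{s'/y\}\{\rho\}\{\theta',y/y\}t$ and the type as $\{\rho,s/y\}\{\theta,y/y\}\tau_2$. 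Since $(\tmfn y u)~s$ weak head reduces exactly as $\{s/y\}u$ does, the application and the reduct share a whnf, so backwards closure for computations (Lemma~\ref{lem:bclosed}), applied once on each side with symmetry (Lemma~\ref{lem:semsym}(\ref{it:sym})) interposed, lifts the equality of reducts to the required equality of applications.

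Obligation (i) follows the same pattern one universe down: unfolding semantic equality of types at function type (Fig.~\ref{fig:sem}) reduces it to $\Gamma'' \Vdash \{\rho\}\{\theta\}\ann\tau_1 = \{\rho\}\{\theta'\}\ann\tau_1 : u_1$, immediate from validity of $\ann\tau_1$ and Lemma~\ref{lem:weakcsub}, together with $\Gamma'' \Vdash \{\rho,s/y\}\{\theta,y/y\}\tau_2 = \{\rho,s'/y\}\{\theta',y/y\}\tau_2 : u_2$ whenever $\Gamma'' \Vdash s = s' : \{\rho\}\{\theta\}\ann\tau_1$, which again comes from forming $\Gamma'' \Vdash \{\rho\}\theta, s/y = \{\rho\}\theta', s'/y : \Gamma, y{:}\ann\tau_1$ and applying validity of $\tau_2$; then take $u_3 = \mathsf{max}(u_1,u_2)$, so $(u_1,u_2,u_3) \in \Ru$.

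I expect the only real friction to be bookkeeping rather than conceptual: keeping the substitution-composition identities straight (they need $y$ fresh in $\theta, \theta', \rho$, which is ensured by the variable conventions) and threading the various well-typedness side conditions through the typed-whnf judgments and through backwards closure. The logical-relations content is the standard "closed under $\beta$" argument, and no lemma beyond those already established is required.
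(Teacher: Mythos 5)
Your proposal is correct and follows essentially the same route as the paper's proof: the same decomposition into validity of the function type and semantic equality of the abstraction, the same construction of the extended semantic substitution $\{\rho\}\theta, s/y = \{\rho\}\theta', s'/y$ via Lemma~\ref{lem:weakcsub}, and the same use of backwards closure (Lemma~\ref{lem:bclosed}) once on each side with symmetry interposed to lift the equality of the $\beta$-reducts to the applications. No gaps.
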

\begin{proof}
We unfold the validity definitions, relying on % reflexivity of semantic equality for types (Lemma \ref{lem:semrefl}),
completeness of validity (Lemma \ref{lem:SemComplete}),
semantic weakening of computation-level substitutions (Lemma \ref{lem:weakcsub}),
Well-formedness Lemma \ref{lm:semwf},  Backwards Closure Lemma \ref{lem:bclosed},  Symmetry property of semantic equality (Lemma \ref{lem:semsym}).
\LONGVERSIONCHECKED{
\quad\\[0.5em]
\prf{$\Gamma, y{:}\ann\tau_1 \models t : \tau_2$ \hfill by assumption}
\prf{$\Gamma, y:\ann\tau_1 \models t = t : \tau_2$ \hfill by def. validity}
\prf{$\forall \Gamma',~\theta,~\theta'.
~\Gamma' \Vdash \theta = \theta' :\Gamma, y{:}\ann\tau_1
\Longrightarrow
\Gamma' \Vdash \{\theta\}t  = \{\theta'\}t : \{\theta\}\tau_2$ \hfill by def. of validity}
% \prf{{\flagred{$\Gamma, y:\ann\tau_1 \vdash t : \tau_2$  \hfill by def. of validity}}}
\prf{
$\Gamma, y:\ann\tau_1 \models \tau_2 : u$ \hfill by def. of validity\\
$\Gamma, y:\ann\tau_1 \models \tau_2 = \tau_2 : u$ \hfill by inversion on validity\\
$\models \Gamma, y:\ann\tau_1$ \hfill by inversion on validity\\
$\models \Gamma$ \hfill by inversion on validity\\
$\Gamma \models \ann\tau_1 : u$ \hfill by inversion on validity \\
$\Gamma \models \ann\tau_1 = \ann\tau_1 : u$ \hfill by inversion on validity
\\[-0.5em]
}
% \\[-0.75em]
TO SHOW:
  \begin{enumerate}
  \item $\models \Gamma$
  \item $\Gamma \models (y:\ann\tau_1) \arrow \tau_2 : u$, i.e. \\
$\forall \Gamma', ~\theta,~\theta'. ~\Gamma' \Vdash \theta = \theta' :\Gamma \Longrightarrow \Gamma' \Vdash \{\theta\}((y:\ann\tau_1) \arrow \tau_2) = \{\theta'\}((y:\ann\tau_1) \arrow \tau_2) : u$
  \item $\forall \Gamma', ~\theta,~\theta'. ~
    \Gamma' \Vdash \theta = \theta' :\Gamma
    \Longrightarrow
    \Gamma' \Vdash \{\theta\}(\tmfn y t) = \{\theta'\}(\tmfn y t) : \{\theta\}((y:\ann\tau_1) \arrow \tau_2)$
  \end{enumerate}
\prf{~\\[0.5em](1) SHOW: $\models \Gamma$}
\prf{$\models \Gamma, y{:}\ann\tau_1$ \hfill by assumption $\Gamma, y:\ann\tau_1 \models t = t : \tau_2$}
\prf{$\models \Gamma$ \hfill by inversion on $\models \Gamma, y{:}\ann\tau_1$\\[-0.5em]}
\prf{(2) SHOW: $\forall \Gamma' \Vdash \theta = \theta' :\Gamma \Longrightarrow \Gamma' \Vdash \{\theta\}((y:\ann\tau_1) \arrow \tau_2) = \{\theta'\}((y:\ann\tau_1) \arrow \tau_2) : u$}
\prf{Assume $\Gamma' \Vdash \theta = \theta' :\Gamma$ \\[-0.75em]}
\prf{(2.a) SHOW: $\forall \Gamma'' \leq_\rho \Gamma'. \Gamma'' \Vdash s = s' : \{\rho\}\{\theta\}\ann\tau_1 \Longrightarrow
 \Gamma'' \Vdash \{\{\rho\}\theta, s/y\}\tau_2 = \{\{\rho\}\theta', s'/y\}\tau_2 : u$}
\prf{\mbox{$\quad$}Assume $\Gamma'' \leq_\rho \Gamma'. \Gamma'' \Vdash s = s' : \{\rho\}\{\theta\}\ann\tau_1$}
\prf{\mbox{$\quad$}$\Gamma'' \Vdash \{\rho\}\theta = \{\rho\}\theta' : \Gamma$ \hfill by weakening sem. subst. (Lemma \ref{lem:weakcsub})}
\prf{\mbox{$\quad$}$\Gamma'' \Vdash \{\rho\}\theta, s/y = \{\rho\}\theta', s'/y : \Gamma, y{:}\ann\tau_1$ \hfill by sem. def.}
% \\[0.25em]
%\prf{\mbox{$\quad$}$\Gamma'' \Vdash \{\{\rho\}\theta, s/y\}t = \{\{\rho\}\theta', s'/y\}t : \{\{\rho\}\theta, s/y\}\tau_2$ \hfill by previous lines}
\prf{\mbox{$\quad$}$\Gamma'' \Vdash \{\{\rho\}\theta, s/y\}\tau_2 = \{\{\rho\}\theta', s'/y\}\tau_2 : u$ \hfill by def. of $\Gamma, y:\ann\tau_1 \models \tau_2 = \tau_2 : u$ \\[-0.5em]}
%\prf{$\forall \Gamma'' \leq_\rho \Gamma'. \Gamma'' \Vdash s = s' : \{\rho\}\{\theta\}\ann\tau_1 \Longrightarrow
% \Gamma'' \Vdash \{\{\rho\}\theta, s/y\}\tau_2 = \{\{\rho\}\theta', s'/y\}\tau_2 : u$ \hfill since $\Gamma'', \rho$ where arbitrary}
%
\prf{(2.b) SHOW: $\forall \Gamma'' \leq_\rho \Gamma'. \Gamma'' \Vdash \{\rho\}\{\theta\}\ann\tau_1 = \{\rho\}\{\theta'\}\ann\tau_1 : u$\\}
\prf{\mbox{$\quad$}$\Gamma'' \Vdash \{\rho\}\theta = \{\rho\}\theta' : \Gamma$ \hfill by weakening sem. subst. (Lemma \ref{lem:weakcsub})}
\prf{\mbox{$\quad$}$\Gamma'' \Vdash \{\rho\}\{\theta\}\ann\tau_1 = \{\rho\}\{\theta'\}\ann\tau_1 : u$ \hfill by def. of $\Gamma \models \ann\tau_1 = \ann\tau_1 : u$\\[-0.5em] }
\prf{(2.c) SHOW: $\Gamma' \vdash \{\theta'\}((y:\ann\tau_1) \arrow \tau_2) \whnf \{\theta'\}((y:\ann\tau_1) \arrow \tau_2) :u$}
\prf{\mbox{$\quad$}$\Gamma \vdash (y:\ann\tau_1) \arrow \tau_2 : u$ \hfill by Completeness of Validity (Lemma \ref{lem:SemComplete})\\
\mbox{\hspace{1cm}}\hfill (using validity of functions which we show under (3))}
\prf{\mbox{$\quad$}$\Gamma' \vdash \theta : \Gamma$ and $\Gamma' \vdash \theta' : \Gamma$ \hfill by Well-formedness of semantic subst. (Lemma \ref{lem:wfsemsub})}
\prf{\mbox{$\quad$}$\Gamma' \vdash \{\theta'\}((y:\ann\tau_1) \arrow \tau_2) : u$ \hfill by computation-level subst. lemma (Lemma \ref{lm:compsub} )}
\prf{\mbox{$\quad$}$\norm ((y:\{\theta'\}\ann\tau_1) \arrow \{\theta', y/y\}\tau_2)$ \hfill by def. and subst. prop.\\}
\prf{(3) SHOW: $\forall \Gamma', ~\theta,~\theta'. ~
    \Gamma' \Vdash \theta = \theta' :\Gamma
    \Longrightarrow
    \Gamma' \Vdash \{\theta\}(\tmfn y t) = \{\theta'\}(\tmfn y t) : \{\theta\}((y:\ann\tau_1) \arrow \tau_2)$}
\prf{Assume $\Gamma' \Vdash \theta = \theta' :\Gamma$; Showing: $\Gamma' \Vdash \{\theta\}(\tmfn y t) = \{\theta'\}(\tmfn y t) : \{\theta\}((y:\ann\tau_1) \arrow \tau_2)$ \\[-0.75em]}
\prf{(3.a) SHOW: $\Gamma' \vdash \{\theta\}(\tmfn y t) \whnf w : \{\theta\}((y:\ann\tau_1) \arrow \tau_2)$ \\\mbox{$\quad\qquad$}and
                 $\Gamma' \vdash \{\theta'\}(\tmfn y t) \whnf w' : \{\theta\}((y:\ann\tau_1) \arrow \tau_2)$ }
\prf{\mbox{$\quad$}$\tmfn y \{\theta,y/y\} t \whnf w$ \hfill where $w = \tmfn y \{\theta,y/y\}t$ since $\norm (\tmfn y \{\theta,y/y\}t)$}
\prf{\mbox{$\quad$}$\tmfn y \{\theta',y/y\} t \whnf w'$ \hfill where $w' = \tmfn y \{\theta',y/y\}t$ since $\norm (\tmfn y \{\theta',y/y\}t)$}
\prf{\mbox{$\quad$}$\Gamma, y{:}\ann\tau_1 \vdash t : \tau_2$ \hfill by Well-formedness of semantic typing (Lemma \ref{lm:semwf})}
\prf{\mbox{$\quad$}$\Gamma' \vdash \theta : \Gamma$ and $\Gamma' \vdash \theta' : \Gamma$ \hfill by Well-formedness of semantic subst. (Lemma \ref{lem:wfsemsub})}
\prf{\mbox{$\quad$}$\Gamma', y{:}\{\theta\}\ann\tau_1 \vdash \theta, y/y : \Gamma, y{:}\ann\tau_1$
     and $\Gamma', y{:}\{\theta'\}\ann\tau_1 \vdash \theta', y/y : \Gamma, y{:}\ann\tau_1'$  \hfill by comp. subst.}
\prf{\mbox{$\quad$}$\Gamma', y{:}\{\theta\}\tau_1 \vdash \{\theta, y/y\} t : \{\theta, y/y\}\tau_2$ and \\
     \mbox{$\quad$}$\Gamma', y{:}\{\theta'\}\tau_1 \vdash \{\theta', y/y\} t : \{\theta', y/y\}\tau_2$ \hfill by computation-level subst. lemma (Lemma \ref{lm:compsub})}
\prf{\mbox{$\quad$}$\Gamma' \vdash \{\theta\}(\tmfn y t) : \{\theta\}((y:\ann\tau_1) \arrow \tau_2)$ and\\
     \mbox{$\quad$}$\Gamma' \vdash \{\theta'\}(\tmfn y t) : \{\theta'\}((y:\ann\tau_1) \arrow \tau_2)$ \hfill by typing rule}
\prf{\mbox{$\quad$}$\Gamma' \vdash \{\theta\}((y:\ann\tau_1) \arrow \tau_2) = \{\theta'\}((y:\ann\tau_1) \arrow \tau_2) : u$ \hfill by (2)}
\prf{\mbox{$\quad$}$\Gamma' \vdash \{\theta'\}(\tmfn y t) : \{\theta\}((y:\ann\tau_1) \arrow \tau_2)$ \hfill Conversion (Lemma \ref{lem:semsym}(\ref{it:conv}))\\[0.5em]}
\prf{(3.b) SHOW: $\forall \Gamma'' \leq_\rho \Gamma'. \Gamma'' \Vdash s = s' : \{\rho\}\{\theta\}\ann\tau_1 \Longrightarrow
  \Gamma'' \Vdash \{\rho\}w~s = \{\rho\}w'~s' : \{\rho, s/y\}\{\theta, y/y\}\tau_2$}
\prf{\mbox{$\quad$}Assume $ \Gamma'' \leq_\rho \Gamma'. \Gamma'' \Vdash s = s' : \{\rho\}\{\theta\}\ann\tau_1$}
\prf{\mbox{$\quad$}$\Gamma'' \Vdash \{\rho\}\theta = \{\rho\}\theta' : \Gamma$ \hfill by weakening sem. subst. (Lemma \ref{lem:weakcsub})}
\prf{\mbox{$\quad$}$\Gamma'' \Vdash \{\rho\}\theta, s/y = \{\rho\}\theta', s'/y : \Gamma, y{:}\ann\tau_1$ \hfill by sem. def.}
\prf{\mbox{$\quad$}$\Gamma'' \Vdash \{\{\rho\}\theta, s/y\}t = \{\{\rho\}\theta', s'/y\}t :  \{\{\rho\}\theta, s/y\}\tau_2$ \hfill using $\Gamma, y:\ann\tau_1 \models t = t : \tau_2$}
\prf{\mbox{$\quad$}$\Gamma'' \Vdash  \{\{\rho\}\theta, s/y\}t \whnf v :  \{\{\rho\}\theta, s/y\}\tau_2$ \\
\mbox{$\quad$} $\Gamma''\Vdash \{\{\rho\}\theta', s'/y\}t \whnf v':  \{\{\rho\}\theta, s/y\}\tau_2$
\hfill sem. definition $\Vdash$}
% \prf{\mbox{$\quad$}$\{\{\rho\}\theta, s/y\}t \whnf v$ and $\Gamma'' \vdash v :  \{\{\rho\}\theta, s/y\}\tau_2$ \hfill by typed whnf def.}
\prf{\mbox{$\quad$}$\Gamma'' \Vdash \{\rho\}\{\theta\}\tmfn y t \whnf \{\rho\}\{\theta\}\tmfn y t : \{\rho\}\{\theta\}((y:\ann\tau_1) \arrow \tau_2)$ \hfill  since $\norm (\{\rho\}\{\theta\}\tmfn y t)$ \\
\mbox{$\quad$}\hfill and $\Gamma'' \vdash  \tmfn y \{\{\rho\}\theta, y/y\}t : \{\rho\}\{\theta\}((y:\ann\tau_1) \arrow \tau_2)$}
\prf{\mbox{$\quad$}$\Gamma'' \Vdash (\tmfn y  \{\{\rho\}\theta,y/y\}t)~s \whnf v : \{\{\rho\}\theta, s/y\}\tau_2$
\hfill by rules (typing and $\whnf$)}
\prf{\mbox{$\quad$}$\Gamma'' \Vdash (\tmfn y \{\{\rho\}\theta,y/y\}t) ~s =
       \{\{\rho\}\theta', s'/y\}t : \{\{\rho\}\theta, s/y\}\tau_2$ \hfill by Backwards Closed (Lemma \ref{lem:bclosed})
}
\prf{\mbox{$\quad$} $\Gamma'' \Vdash \{\{\rho\}\theta, s/y\}\tau_2 = \{\{\rho\}\theta', s'/y\}\tau_2 : u$
\hfill by $\Gamma, y:\ann\tau_1 \models \tau_2 = \tau_2 : u$ }
\prf{\mbox{$\quad$} $\Gamma'' \Vdash  \{\{\rho\}\theta', s'/y\}t = (\tmfn y \{\{\rho\}\theta,y/y\}t) ~s : \{\{\rho\}\theta, s/y\}\tau_2$
\hfill Symmetry (Lemma \ref{lem:semsym}(\ref{it:sym}))}
\prf{\mbox{$\quad$}$\Gamma'' \Vdash \{\rho\}\{\theta'\}\tmfn y t \whnf \{\rho\}\{\theta'\}\tmfn y t : \{\rho\}\{\theta\}((y:\ann\tau_1) \arrow \tau_2)$ \hfill  since $\norm (\{\rho\}\{\theta'\}\tmfn y t)$ \\
\mbox{$\quad$}\hfill and $\Gamma'' \vdash  \tmfn y \{\{\rho\}\theta', y/y\}t : \{\rho\}\{\theta\}((y:\ann\tau_1) \arrow \tau_2)$}
\prf{\mbox{$\quad$}$\Gamma'' \Vdash (\tmfn y  \{\{\rho\}\theta',y/y\}t)~s' \whnf v' : \{\{\rho\}\theta, s/y\}\tau_2$
\hfill by rules (typing and $\whnf$)}
\prf{\mbox{$\quad$}$\Gamma'' \Vdash (\tmfn y  \{\{\rho\}\theta',y/y\}t)~s' = (\tmfn y  \{\{\rho\}\theta,y/y\}t)~s : \{\rho\{\theta\}, s/y\}\tau_2$ \hfill  \\
\mbox{$\quad$}\hfill by Backwards Closed (Lemma \ref{lem:bclosed})}
\prf{\mbox{$\quad$}$\Gamma'' \Vdash (\tmfn y  \{\{\rho\}\theta,y/y\}t)~s = (\tmfn y  \{\{\rho\}\theta',y/y\}t)~s': \{\rho\{\theta\}, s/y\}\tau_2$   \\
\mbox{$\quad$}\hfill by  Symmetry (Lemma \ref{lem:semsym}(\ref{it:sym}))}
\prf{\mbox{$\quad$}$\Gamma'' \Vdash (\{\rho\}\tmfn y  \{\theta,y/y\}t)~s = (\{\rho\}\tmfn y  \{\theta',y/y\}t)~s': \{\rho, s/y\}\{\theta, y/y\}\tau_2$ \hfill by comp. subst.}
\prf{\mbox{$\quad$}$\Gamma'' \Vdash \{\rho\}w~s = \{\rho\}w'~s' : \{\rho, s/y\}\{\theta, y/y\}\tau_2$ \hfill since $w = \tmfn y \{\theta,y/y\}t$  \\\mbox{$\quad$}\hfill and $w' = \tmfn y \{\theta',y/y\}t$}
}
\end{proof}

\begin{lemma}[Validity of Recursion] \label{lem:ValidRec}
\[
  \begin{array}{ll@{~}c@{~}l}
\multicolumn{4}{p{14cm}}{Let
$\Gamma \vdash \tmrec {\IH} {b_v} {b_{\mathsf{app}}} {b_{\clam}} \rappto \Psi~t: \{\Psi/\psi, t/m\}\tau$
% \\
% \multicolumn{4}{p{13cm}}{
and $\IH = (\psi:\tmctx) \arrow (m:\cbox{\unboxc{\psi} \vdash \tm}) \arrow \tau$.}
\\[0.25em]
\mbox{If} & \Gamma \models \IH : u ~\mbox{and}~
 \Gamma \models t : \cbox{\Psi \vdash \tm}~\mbox{and}\\
 & \Gamma, \psi:{\tmctx}, p:\cbox{\unboxc{\psi} \vdash_\# \tm} & \models &  t_v : \{p / y\}\tau ~\mbox{and}\\
& \Gamma,  \psi:{\tmctx}, m:\cbox{\unboxc{\psi} \vdash \tm}, n:\cbox{\unboxc{\psi} \vdash \tm},\\
& \qquad        f_m: \{m/y\}\tau, f_n: \{n/y\}\tau & \models & t_{\mathsf{app}} : \{\cbox{\unboxc{\psi} \vdash\capp~\unbox{m}{\id}~\unbox{n}{\id}}/y\}\tau ~\mbox{and}\\
& \Gamma, \psi:{\tmctx},  m:\cbox{\unboxc{\psi}, x:\tm \vdash \tm}, & & \\
&  \qquad        f_m:\{\cbox{\unboxc{\psi}, x:\tm}/\psi, m /y \} \tau & \models &  t_{\clam} : \{\cbox{\unboxc{\psi} \vdash \clam~\lambda x.\unbox{m}{\id}~} / y\}\tau
\\[0.25em]
\multicolumn{4}{p{13cm}}{then $\Gamma \models \tmrec {\IH} {b_v} {b_{\mathsf{app}}} {b_{\clam}} \rappto \Psi~t:   \{\Psi/\psi, t/m\}\tau$.}
  \end{array}
\]
\end{lemma}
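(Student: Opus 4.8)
The plan is to unfold the validity definition and show that for any pair of semantically equal substitutions $\Gamma' \Vdash \theta = \theta' : \Gamma$, we have $\Gamma' \Vdash \{\theta\}(\tmrec{\IH}{b_v}{b_{\mathsf{app}}}{b_{\clam}}\rappto\Psi~t) = \{\theta'\}(\dots) : \{\theta\}(\{\Psi/\psi,t/y\}\tau)$. First I would use \emph{Completeness of Validity} (Theorem~\ref{lem:SemComplete}) on all the hypotheses to extract the corresponding typing judgments $\Gamma \vdash t : \cbox{\Psi \vdash \tm}$, the well-typedness of each branch, and of $\IH$; combined with the computation-level substitution lemma (Lemma~\ref{lm:compsubst}) and well-formedness of semantic substitutions (Lemma~\ref{lem:wfsemsub}), this gives that both $\{\theta\}(\tmrec{\dots}\rappto\Psi~t)$ and $\{\theta'\}(\dots)$ are well-typed at the expected type, which is needed to invoke Definition~\ref{def:typedwhnf} for the semantic equality. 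The key observation is that validity of $t$ at $\cbox{\Psi\vdash\tm}$ means $\{\theta\}t \whnf w$ and $\{\theta'\}t \whnf w'$ with $\Gamma' ; \{\theta\}\Psi \Vdash \unbox{w}{\id} = \unbox{w'}{\id} : \tm$, so the real work is an \emph{inner induction on the derivation of semantic LF equality $\Gamma';\{\theta\}\Psi \Vdash \unbox{w}{\id} = \unbox{w'}{\id} : \tm$} (Fig.~\ref{fig:LFsem}).

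The inner induction has four cases mirroring the weak head normal form of the boxed term. In the $\capp$ case, $\{\theta\}t$ reduces to $\cbox{\hatctx\Psi \vdash \capp M_1 M_2}$ and $\{\theta'\}t$ to $\cbox{\hatctx\Psi \vdash \capp N_1 N_2}$ with $M_i$ semantically equal to $N_i$; the recursor weak-head-reduces (Fig.~\ref{fig:whnfred}) to the instantiated branch body $\{\vartheta\}t_{\mathsf{app}}$, and I would build a semantic substitution extending $\theta$ by $\Psi/\psi$, $\cbox{\hatctx\Psi\vdash M_i}/m,n$, and the recursive calls $\trec{\R}{\tm}{\IH}\rappto\Psi\cbox{\hatctx\Psi\vdash M_i}/f_m,f_n$; the crucial point is that these recursive-call entries are themselves semantically equal to their primed counterparts — this follows from the \emph{outer} induction hypothesis applied to the structurally smaller boxed arguments $\cbox{\hatctx\Psi\vdash M_i}$ (smaller than $\capp M_1 M_2$, matching the inner-induction subterms $M_i$). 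Then the validity hypothesis for $t_{\mathsf{app}}$ applied to this extended substitution pair gives semantic equality of the branch bodies, and Backwards Closure for Computations (Lemma~\ref{lem:bclosed}) together with Typed-Whnf-is-Backwards-Closed (Lemma~\ref{lem:typeclosed}) transports this back to the recursor terms. The $\clam$ case is analogous but extends the LF context by $x{:}\tm$ and correspondingly passes $\cbox{\hatctx\Psi,x\vdash M'}/m$ plus one recursive call at the extended context $(\Psi,x{:}\tm)$. The variable case uses the hypothesis for $t_v$ with the extension $\cbox{\hatctx\Psi\vdash x}/p$, and requires no recursive call. The neutral case — where $w$ reduces to $\cbox{\hatctx\Psi \vdash \unbox{n}{\sigma}}$ with $n$ neutral — produces a neutral recursor, and semantic equality follows from Neutral Soundness (Lemma~\ref{lem:NeutSound}) after checking the two neutral recursors are definitionally equal via congruence, using well-formedness of the semantic hypotheses.

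Two bookkeeping points need care throughout. First, the semantic equalities in Fig.~\ref{fig:LFsem} for the $\capp$ and $\clam$ bodies are stated at type $\tm$ in an extended LF context, not as contextual objects, so I would repackage $\Gamma';\{\theta\}\Psi \Vdash M_i = N_i : \tm$ into $\Gamma' \Vdash \cbox{\hatctx\Psi\vdash M_i} = \cbox{\hatctx\Psi\vdash N_i} : \cbox{\{\theta\}\Psi\vdash\tm}$ using the semantic equality definition for boxed terms (Fig.~\ref{fig:sem}) plus the identity-substitution lemma (Lemma~\ref{lm:semlfwk}); here I also need that $\unbox{w}{\id}$-style expansion (the computation-level $\eta$ rule in Fig.~\ref{fig:etype}) lines up the whnf of the box with the whnf of its contents. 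Second, the types decorating all these semantic judgments must be converted into the form demanded by the validity conclusion $\{\theta\}(\{\Psi/\psi,t/y\}\tau)$; this uses compositionality of substitution and the semantic type-conversion/transitivity lemmas (Lemma~\ref{lem:semsym}), relying on $\Gamma\models\IH:u$ to know $\tau$ behaves well semantically under the branch-instantiating substitutions. \textbf{The main obstacle} is the well-foundedness of the combined induction: the recursive-call entries $f_m, f_n$ fed into the branch bodies must be shown semantically equal to their primed counterparts, and this is exactly an appeal to the statement being proved on a \emph{smaller} contextual argument — so the whole lemma must be set up as an induction on the (semantic) size of the boxed term $t$ reduces to, with the inner case analysis on its whnf, and one must verify that in the $\clam$ case the recursive call at the \emph{extended} context $(\Psi,x{:}\tm)$ is still covered (its boxed argument $\cbox{\hatctx\Psi,x\vdash M'}$ is structurally smaller), which is where the measure has to be chosen as the size of the LF term $M$ rather than anything involving the context.
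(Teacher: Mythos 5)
Your proposal matches the paper's proof essentially step for step: assume $\Gamma' \Vdash \theta = \theta' : \Gamma$, case on the semantic equality $\Gamma' \Vdash \{\theta\}t = \{\theta'\}t : \cbox{\{\theta\}\Psi \vdash \tm}$, and in the boxed case run an inner induction on the semantic LF equality of the unboxed weak head normal forms, building the extended semantic substitutions (whose recursive-call entries are justified by that same induction on the smaller subterms $M_i$, including the $\clam$ case at the extended context), invoking the branch validity hypotheses, converting types via $\Gamma \models \IH : u$, and transporting back with backwards closure; the stuck/neutral cases are handled exactly as you describe. The only cosmetic difference is that the paper discharges the neutral recursor case directly from congruence of $\equiv$ plus backwards closure rather than naming Neutral Soundness, but the content is the same.
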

\begin{proof}
We assume $\Gamma' \Vdash \theta = \theta' : \Gamma$, and show
$\Gamma' \Vdash \{\theta\}(\trec{\R}{\tm}\IH \rappto \Psi~t )
    = \{\theta'\}(\trec{\R}{\tm}\IH \rappto \Psi~t): \{\theta\}\{\Psi/\psi,~t/m\}\tau$ by considering different cases for
$\Gamma' \Vdash \{\theta\}t = \{\theta'\}t : \cbox{\{\theta\}\Psi \vdash \tm}$. Let $\Psi' = \{\theta\}\Psi$. In the case where $\Gamma' \vdash \{\theta\}t \whnf \cbox{\hatctx\Psi' \vdash M} : \cbox{\Psi' \vdash \tm}$ and $\Gamma' \vdash \{\theta'\}t \whnf \cbox {\Psi' \vdash N} : \cbox{\{\theta\}\Psi \vdash \tm}$ and $\Gamma' ; \Psi' \Vdash M = N : \tm$, we proceed by an inner induction on $\Gamma' ; \Psi' \Vdash M = N : \tm$ exploiting on Back. Closed Lemma (\ref{lem:bclosed}).
\LONGVERSIONCHECKED{
$\quad$\\[1em]
We now give the full proof.
\\[1em]
Let $\trec{\R}{\tm}\IH = \tmrec {\IH} {b_v} {b_{\mathsf{app}}} {b_{\clam}}$
\\[1em]
\prf{Assume $\Gamma' \Vdash \theta = \theta' : \Gamma$; \\
     TO SHOW: $\quad\Gamma' \Vdash \{\theta\}(\trec{\R}{\tm}\IH \rappto \Psi~t )
    = \{\theta'\}(\trec{\R}{\tm}\IH \rappto \Psi~t): \{\theta\}\{\Psi/\psi,~t/m\}\tau$
}
\\
\prf{$\Gamma' \Vdash \{\theta\}t = \{\theta'\}t : \cbox{\{\theta\}\Psi \vdash \tm}$ \hfill by validity of $\Gamma \models t : \cbox{\Psi \vdash \tm}$\\[-0.5em]}
 \prf{Let $\Psi' = \{\theta\}\Psi$. We now proceed to prove:
 \\[0.25em]
\pcase{$\Gamma' \vdash \{\theta\}t \whnf w : \cbox{\Psi' \vdash \tm}$  \\
     \mbox{\hspace{0.2cm}}and $\Gamma' \vdash \{\theta'\}t \whnf w' : \cbox{\Psi' \vdash \tm}$ \\
%     \mbox{\hspace{0.2cm}}and $\Gamma' \Vdash_\LF C = C' : (\{\theta\}\Psi \vdash \tm)$
     \mbox{\hspace{0.2cm}}and $\Gamma' ; \Psi' \Vdash \unbox w \id = \unbox {w'}\id : \tm$
}
% \prf{$\Gamma' ; \{\theta\}\Psi \Vdash M = N : \tm$ \hfill by $\Gamma' \Vdash_\LF C = C' : (\{\theta\}\Psi \vdash \tm)$ }
% \\[0.2em]
\prf{We write $M$ for $\unbox w \id$ and $N$ for $\unbox {w'}\id$ below.}
 \begin{center}
 \begin{tabular}{p{13cm}}
 If $\Gamma' ; \Psi' \Vdash M = N : \tm$ \\
     then $\Gamma' \Vdash \{\theta\}(\trec{\R}{\tm}{\IH}~\rappto \cbox{\Psi})~\cbox{\hatctx{\Psi} \vdash M}
              = \{\theta'\}(\trec{\R}{\tm}{\IH}~\rappto \cbox{\Psi})~ \cbox{\hatctx{\Psi} \vdash N} :
      \{\theta,~\Psi'/\psi, \cbox{\hatctx{\Psi} \vdash M}/m\}\tau$   \\[0.25em] $\quad$
 \end{tabular}
 \end{center}
 by induction on $M$, i.e. we may appeal to the induction hypothesis if the term $M$ has made progress and has stepped using $\lfwhnf$ and hence is ``smaller'' \\[0.5em]}
 \prf{\emph{Sub-case}:
   $\Gamma'; \Psi' \vdash M \lfwhnf \capp M_1~M_2 : \tm$ and $\Gamma' ; \Psi' \vdash N \lfwhnf \capp N_1~N_2 : \tm$ \\
 \mbox{\qquad\qquad}$\Gamma'; \Psi' \Vdash M_1 = N_1 : \tm$ and $\Gamma' ; \Psi' \Vdash M_2 = N_2 : \tm$\\[-0.5em]}
 \prf{$\Gamma' \Vdash \{\theta\}(\trec{\R}{\tm}{\IH}~\rappto \cbox{\Psi})~\cbox{\hatctx{\Psi} \vdash M_1}
              = \{\theta'\}(\trec{\R}{\tm}{\IH}~\rappto \cbox{\Psi})~\cbox{\hatctx{\Psi} \vdash N_1} :
      \{\theta,~\Psi'/\psi, \cbox{\hatctx{\Psi} \vdash M_1/m}\}\tau$
      \hfill by IH(i)}
 \prf{$\Gamma' \Vdash \{\theta\}(\trec{\R}{\tm}{\IH}~\rappto \cbox{\Psi})~\cbox{\hatctx{\Psi'} \vdash M_2}
                    = \{\theta'\}(\trec{\R}{\tm}{\IH}~\rappto \cbox{\Psi'})~\cbox{\hatctx{\Psi'} \vdash N_2} :
      \{\theta,~\Psi'/\psi, \cbox{\hatctx{\Psi} \vdash M_2}/m\}\tau $
      \hfill by IH(i)}
 \prf{$\Gamma' \vdash \{\theta\}(\trec{\R}{\tm}{\IH}~\rappto \Psi)~\cbox{\hatctx \Psi \vdash M_i} :  \{\theta,~\Psi'/\psi, \cbox{\hatctx\Psi \vdash M_i}/m\}\tau$ \hfill by Well-form. of Sem. Def.}
 \prf{$\Gamma' \vdash \{\theta\}(\trec{\R}{\tm}{\IH}~\rappto \Psi)~\cbox{\hatctx \Psi \vdash N_i} :  \{\theta',~\Psi'/\psi, \cbox{\hatctx\Psi \vdash N_i}/m\}\tau$ \hfill by Well-form. of Sem. Def. and Type Conv.}
 \prf{let $\theta_\capp = \theta, \Psi'/\psi, \cbox{\hatctx{\Psi} \vdash M_1} / m,~\cbox{\hatctx{\Psi} \vdash M_2} / n\\
 \mbox{\qquad\quad$\;$\quad} \{\theta\}(\trec{\R}{\tm}{\IH}~\rappto \cbox{\Psi})~\cbox{\hatctx{\Psi} \vdash M_1} / f_m,
  \{\theta\}(\trec{\R}{\tm}{\IH}~\rappto \Psi)~\cbox{\hatctx{\Psi} \vdash M_2} / f_n
          $
 }
 \prf{let $\theta'_\capp = \theta', \Psi'/\psi, \cbox{\hatctx{\Psi} \vdash  N_1} / m,~\cbox{\hatctx{\Psi} \vdash  N_2} / n\\
 \mbox{\qquad\quad$\;$\quad} \{\theta'\}(\trec{\R}{\tm}{\IH} \rappto {\Psi})~\cbox{\hatctx{\Psi} \vdash N_1} / f_m,
  \{\theta'\}(\trec{\R}{\tm}{\IH} \rappto {\Psi})~\cbox{\hatctx{\Psi} \vdash N_2} / f_n
          $
 }
 \prf{let $\Gamma_\capp =  \Gamma,  \psi:\cbox{\tmctx}, m:\cbox{\unboxc{\psi} \vdash \tm}, n:\cbox{\unboxc{\psi} \vdash \tm},  f_m: \{m/y\}\tau, f_n: \{n/y\}\tau$}
 \prf{$\Gamma' \Vdash \theta_\capp = \theta'_\capp : \Gamma_\capp$}
 \prf{$\Gamma' \Vdash \{\theta_\capp\}t_\capp = \{\theta'_\capp\}t_\capp :
       \{\theta_\capp\} \{\cbox{\psi \vdash\capp~\unbox{m}{\id}~\unbox{n}{\id}}/y\}\tau$
    \hfill by sem. def. of $t_\capp$}
 \prf{$\Gamma' \Vdash  \{\theta_\capp\}t_\capp = \{\theta'_\capp\}t_\capp :
      \{\theta, \Psi'/\psi,~\cbox{\hatctx \Psi \vdash\capp~M_1~M_2}/m\}\tau$
    \hfill by subst. def.}
  \prf{$\Gamma' \Vdash C = C : (\{\theta\}\Psi \vdash \tm)$ \hfill by reflexivity (Lemma \ref{lem:semsymlf})
 where $C = \hatctx\Psi \vdash \capp M_1~M_2$ }
 \prf{$\Gamma' \Vdash \{\theta\}t = \cbox{\hatctx{\Psi} \vdash \capp M_1~M_2} : \Psi' \vdash \tm$
    \hfill since $\Gamma' \vdash \{\theta\}t \whnf \cbox{C} : (\{\theta\}\Psi \vdash \tm)$ }
 \prf{$\Gamma' \Vdash \{\theta, \Psi'/\psi,~\{\theta\}t/m\}\tau =  \{ \theta,~\Psi'/\psi, \cbox{\hatctx\Psi \vdash \capp M_1~M_2}/m\}\tau : u$
        \hfill by sem. def. of $\Gamma \Vdash \IH : u$}
 \prf{$\Gamma' \Vdash  \{\theta_\capp\}t_\capp = \{\theta'_\capp\}t_\capp :
         \{\theta, \Psi'/\psi,~\{\theta\}t/m\}\tau$ \hfill by type conversion}
 \prf{$\Gamma' \vdash \{\theta_\capp\}t_\capp \whnf v :  \{\theta, \Psi'/\psi,~\{\theta\}t/m\}\tau$ \hfill by previous sem. def. }
 \prf{$\Gamma' ; \Psi' \vdash  \{\theta\}(\trec{\R}{\tm}{\IH}~@~\Psi~t) \whnf v :   \{\theta, \Psi'/\psi,~\{\theta\}t/m\}\tau$ \\
 \mbox{\qquad} \hfill since $\Gamma' \vdash \{\theta\}t \whnf \cbox{\hatctx \Psi \vdash M}: \tm$ and $\Gamma' ; \Psi' \vdash M \lfwhnf \capp M_1~M_2 : \tm$}
 \prf{$\Gamma' \Vdash \{\theta\}(\trec{\R}{\tm}{\IH}~@~\Psi~t) =  \{\theta'_\capp\}t_\capp :
         \{\theta, \Psi'/\psi,~\{\theta\}t/m\}\tau$ \hfill by Back. Closed (Lemma \ref{lem:bclosed})}
 \prf{$\Gamma' \vdash \{\theta'_\capp\}t_\capp \whnf v' :  \{\theta, \Psi'/\psi,~\{\theta\}t/m\}\tau$ \hfill by previous sem. def. }
 \prf{$\Gamma' ; \Psi' \vdash  \{\theta'\}(\trec{\R}{\tm}{\IH}~@~\Psi~t) \whnf v' :   \{\theta, \Psi'/\psi,~\{\theta\}t/m\}\tau$ \hfill (using type conversion)\\
 \mbox{\qquad} \hfill since $\Gamma' \vdash \{\theta'\}t \whnf \cbox{\hatctx \Psi \vdash N} : \tm$ and $\Gamma' ; \Psi' \vdash N \lfwhnf \capp N_1~N_2 : \tm$}
 \prf{$\Gamma' \Vdash \{\theta\}(\trec{\R}{\tm}{\IH}~@~\Psi~t) = \{\theta'\}(\trec{\R}{\tm}{\IH}~@~\Psi~t) :
         \{\theta, \Psi'/\psi,~\{\theta\}t/m\}\tau$ \hfill by Back. Closed (Lemma \ref{lem:bclosed})}
 \prf{$\Gamma \models \trec{\R}{\tm}{\IH}\rappto \Psi~t : \IH$ \hfill since $\Gamma', \theta, \theta'$ were arbitrary\\[0.5em]}
 \prf{\emph{Sub-Case.}  Other Sub-Cases where $\Gamma' ; \Psi' \vdash \{\theta\}M \lfwhnf \clam \lambda x.M : \tm$ and
  $\Gamma' ; \Psi' \vdash \{\theta\}M \lfwhnf x : \tm$ where $x:\tm \in \Psi'$ are similar.\\[0.5em]}
 \prf{\emph{Sub-Case.} $\Gamma' ; \Psi' \vdash \{\theta\}M \lfwhnf r_1  : \tm$ where $r_1 = \unbox{t_1}{\sigma_1}$ and $\neut t_1$\\
 \mbox{\hspace{0.8cm}}and $\Gamma' ; \Psi' \vdash \{\theta'\}M \lfwhnf r_2 : \tm$ where $r_2 = \unbox{t_2}{\sigma_2}$ and $\neut t_2$ }
\prf{$\Gamma' ; \Psi' \Vdash \sigma_1 = \sigma_2 : \Phi$ \hfill since $\Gamma' ; \Psi' \Vdash_{LF} \{\theta\}M = \{\theta'\}M : \tm$}
\prf{$\Gamma' ; \Psi' \vdash \sigma_1 \equiv \sigma_2 : \Phi$ \hfill by Well-formedness Lemma}
\prf{$\Gamma' \vdash t_1 \equiv t_2 : \cbox{\Phi \vdash \tm}$ \hfill since $\Gamma' ; \Psi' \Vdash_{LF} \{\theta\}M = \{\theta'\}M : \tm$}
\prf{$\Gamma' ; \Psi' \vdash r_1 \equiv r_2 : \tm$}
\prf{$\Gamma' \vdash \{\theta\}\Psi \equiv \{\theta'\}\Psi : \tmctx$ \hfill Sem. Subst. Preserve Equiv. (Lemma \ref{lem:semsubst})\\
\mbox{\hspace{1cm}}\hfill since $\Gamma' \vdash \theta \equiv \theta' : \Gamma$}
\prf{$\Gamma' \vdash \{\theta\}(\trec{\R}{\tm}{\IH}\rappto \Psi)~\cbox{\hatctx\Psi \vdash r_1}
\equiv \{\theta'\}(\trec{\R}{\tm}{\IH}\rappto \Psi)~\cbox{\hatctx\Psi \vdash r_2} : \{\theta, \Psi'/\psi,~\{\theta\}t/m\}\tau$ \hfill by $\equiv$}
\prf{$\Gamma' \vdash \{\theta\}(\trec{\R}{\tm}{\IH}\rappto \Psi)~\{\theta\}t \whnf \{\theta\}(\trec{\R}{\tm}{\IH}\rappto \Psi)~\cbox{\hatctx\Psi \vdash r_1}  : \{\theta, \Psi'/\psi,~\{\theta\}t/m\}\tau$  \hfill by $\whnf$}
\prf{$\Gamma' \vdash \{\theta'\}(\trec{\R}{\tm}{\IH}\rappto \Psi)~\{\theta'\}t \whnf \{\theta'\}(\trec{\R}{\tm}{\IH}\rappto \Psi)~\cbox{\hatctx\Psi \vdash r_2} :  \{\theta, \Psi'/\psi,~\{\theta\}t/m\}\tau$ \hfill by $\whnf$ and type conversion}
\prf {$\Gamma' \Vdash \{\theta\}(\trec{\R}{\tm}{\IH}~@~\Psi~t) = \{\theta'\}(\trec{\R}{\tm}{\IH}~@~\Psi~t) :
         \{\theta, \Psi'/\psi,~\{\theta\}t/m\}\tau$ \hfill Back. Closed (Lemma \ref{lem:bclosed})\\
\mbox{\hspace{1cm}}\hfill and Symmetry}
 %
 %\vspace{2cm}
 \pcase{$\neut w$ and $\neut w'$ and $\Gamma' \vdash w \equiv w' : \cbox{\Psi' \vdash \tm}$}
\prf{$\neut \{\theta\}(\trec{\R}{\tm}{\IH}~@~\Psi)~w$ and $\neut \{\theta'\}(\trec{\R}{\tm}{\IH}~@~\Psi)~w'$ \hfill by def. of $\neut$}
\prf{$\Gamma' \vdash  \{\theta\}\trec{\R}{\tm}{\IH} \equiv  \{\theta'\}\trec{\R}{\tm}{\IH} $ (short for all the branches remain equivalent) \hfill Sem. Subst. Preserve Equiv. (Lemma \ref{lem:semsubst})}
\prf{$\Gamma' \vdash \{\theta\}\Psi \equiv \{\theta'\}\Psi : \tmctx$ \hfill Sem. Subst. Preserve Equiv. (Lemma \ref{lem:semsubst})}
\prf{$\Gamma' \Vdash \{\theta\}(\trec{\R}{\tm}{\IH}~@~\Psi)~w \equiv \{\theta'\}(\trec{\R}{\tm}{\IH}~@~\Psi)~w' :
         \{\theta, \Psi'/\psi,~\{\theta\}t/m\}\tau$ \hfill by $\equiv$ }
%\prf{$\Gamma' \vdash \{\theta\}(\trec{\R}{\tm}{\IH} \rappto \Psi)~w \equiv \{\theta'\}(\trec{\R}{\tm}{\IH} \rappto \Psi)~w' :
%
 \prf{$\Gamma' \vdash  \{\theta\}(\trec{\R}{\tm}{\IH} \rappto \Psi)~t \whnf  \{\theta\}(\trec{\R}{\tm}{\IH}\rappto \Psi)~w
           :  \{\theta, \Psi/\psi,~\{\theta\} t/m\}\tau $
     \hfill by type conversion and $\whnf$ rule}
 \prf{$\Gamma' \vdash  \{\theta\}(\trec{\R}{\tm}{\IH} \rappto \Psi)~t \whnf  \{\theta\}(\trec{\R}{\tm}{\IH}~\rappto \Psi)~w'
           :  \{\theta, \Psi/\psi,~\{\theta\} t/m\}\tau $
     \hfill by type conversion and $\whnf$ rule}
  \prf{$\Gamma' \Vdash \{\theta\}(\trec{\R}{\tm}{\IH}~\rappto\Psi~t) = \{\theta'\}(\trec{\R}{\tm}{\IH}~\rappto \Psi ~t):
             \{\theta, \Psi/\psi,~\{\theta\}t/m\}\tau$
    \hfill by Back. Closed Lemma (twice) (\ref{lem:bclosed}) \\ \mbox{\qquad}\hfill and Symmetry }
 \prf{$\Gamma \models \trec{\R}{\tm}{\IH}\rappto \Psi~t : \IH$ \hfill since $\Gamma', \theta, \theta'$ were arbitrary\\[0.5em]}
}
\end{proof}

\begin{lemma}[Validity of Application]\label{lem:ValidTypeApp}
  If $\Gamma \models t : (y:\ann\tau_1) \arrow \tau_2$ and $\Gamma \models s : \ann\tau_1$
then $\Gamma \models t~s : \{s/y\}\tau_2$.
\end{lemma}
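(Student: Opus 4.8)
The plan is to unfold the validity definition on both sides and reduce the statement to Semantic Application (Lemma~\ref{lem:SemTypeApp}) together with compositionality of the computation-level substitution; all the real content is already contained in those earlier results.

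First I would extract from $\Gamma \models t : (y{:}\ann\tau_1) \arrow \tau_2$ the side conditions $\models \Gamma$ and $\Gamma \models (y{:}\ann\tau_1) \arrow \tau_2 : u$; inverting the latter via the validity of function-type injectivity established just above gives $\Gamma \models \ann\tau_1 : u_1$ and $\Gamma, y{:}\ann\tau_1 \models \tau_2 : u_2$ with $(u_1,u_2,u) \in \Ru$. To discharge the requirement $\Gamma \models \{s/y\}\tau_2 : u$ appearing in the conclusion, I would first prove the auxiliary fact that $\Gamma, y{:}\ann\tau_1 \models \tau_2 : u_2$ together with $\Gamma \models s : \ann\tau_1$ implies $\Gamma \models \{s/y\}\tau_2 : u_2$: given an arbitrary $\Gamma' \Vdash \theta = \theta' : \Gamma$, we have $\Gamma' \Vdash \{\theta\}s = \{\theta'\}s : \{\theta\}\ann\tau_1$ and $\Gamma' \Vdash \{\theta\}\ann\tau_1 : u_1$, hence $\Gamma' \Vdash \theta, \{\theta\}s/y = \theta', \{\theta'\}s/y : \Gamma, y{:}\ann\tau_1$, so $\Gamma' \Vdash \{\theta, \{\theta\}s/y\}\tau_2 = \{\theta', \{\theta'\}s/y\}\tau_2 : u_2$, and by compositionality of substitution this is $\Gamma' \Vdash \{\theta\}(\{s/y\}\tau_2) = \{\theta'\}(\{s/y\}\tau_2) : u_2$. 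Taking $u \defiff u_2$ then secures the kinding premise of the conclusion.

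For the main clause, fix an arbitrary $\Gamma' \Vdash \theta = \theta' : \Gamma$ and show $\Gamma' \Vdash \{\theta\}(t~s) = \{\theta'\}(t~s) : \{\theta\}(\{s/y\}\tau_2)$. Unfolding the substitution, $\{\theta\}(t~s) = (\{\theta\}t)~(\{\theta\}s)$, and from the two hypotheses (and the substitution definition on $(y{:}\ann\tau_1)\arrow\tau_2$) we have $\Gamma' \Vdash \{\theta\}t = \{\theta'\}t : (y{:}\{\theta\}\ann\tau_1) \arrow \{\theta,y/y\}\tau_2$ and $\Gamma' \Vdash \{\theta\}s = \{\theta'\}s : \{\theta\}\ann\tau_1$. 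Semantic Application (Lemma~\ref{lem:SemTypeApp}) then yields $\Gamma' \Vdash (\{\theta\}t)~(\{\theta\}s) = (\{\theta'\}t)~(\{\theta'\}s) : \{\{\theta\}s/y\}(\{\theta,y/y\}\tau_2)$, and compositionality rewrites the type as $\{\theta\}(\{s/y\}\tau_2)$, which is exactly what is required. Since $\Gamma'$, $\theta$, $\theta'$ were arbitrary, $\Gamma \models t~s : \{s/y\}\tau_2$ follows.

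I do not expect a genuine obstacle here: the argument is essentially bookkeeping on top of Lemma~\ref{lem:SemTypeApp} (which internally carries the backwards-closure and symmetry reasoning) and the validity of function-type injectivity. The only points needing care are keeping the universe indices consistent when invoking $\Ru$ for the kinding of $\{s/y\}\tau_2$, and applying the substitution compositions $\{\{\theta\}s/y\}(\{\theta,y/y\}\tau_2) = \{\theta\}(\{s/y\}\tau_2)$ in the correct order.
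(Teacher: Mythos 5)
Your proposal is correct and follows essentially the same route as the paper: the paper's proof is a one-line appeal to the definition of validity together with Semantic Application (Lemma~\ref{lem:SemTypeApp}), which is exactly the core of your argument. Your additional care in discharging the kinding premise $\Gamma \models \{s/y\}\tau_2 : u$ via function-type injectivity and the compositionality of substitution is a detail the paper leaves implicit, and it is handled correctly.
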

\begin{proof}
By definition relying on semantic type application lemma (Lemma \ref{lem:SemTypeApp}).
\end{proof}

 \begin{theorem}[Fundamental Theorem]\quad
   \label{lm:fundtheo}
  \begin{enumerate}
  \item If $\vdash \Gamma$ then $\models \Gamma$.
  \item If $\Gamma ; \Psi \vdash M : A$ then $\Gamma ; \Psi \models M = M : A$.
  \item If $\Gamma ; \Psi \vdash \sigma  : \Phi$ then $\Gamma ; \Psi \models \sigma = \sigma : \Phi$.
  \item If $\Gamma ; \Psi \vdash M \equiv N : A$ then $\Gamma ; \Psi \models M = N : A$.
  \item If $\Gamma ; \Psi \vdash \sigma \equiv \sigma' : \Phi$ then $\Gamma ; \Psi \models \sigma = \sigma' :  \Phi$.
%   \item If $\Gamma \vdash T$ then $\Gamma \models_\LF T = T$.
  \item If $\Gamma \vdash t : \tau$ then $\Gamma \models t : \tau$.
  \item If $\Gamma \vdash t \equiv t' : \tau$ then $\Gamma \models t = t' : \tau$.
  \end{enumerate}
\end{theorem}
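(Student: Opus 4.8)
The plan is to prove the seven statements by a single large mutual induction on the typing and definitional-equality derivations, following the standard recipe for fundamental lemmas of logical relations. Statement (1) is the easiest: it proceeds by induction on the structure of $\Gamma$, using that the empty context is trivially valid, and for $\Gamma, x{:}\ann\tau$ we get $\models \Gamma$ from the inner derivation of $\vdash\Gamma$ and $\Gamma \models \ann\tau : u$ from part (6) applied to the well-formedness premise $\Gamma \vdash \ann\tau : u$ (noting that for $\ann\tau = \tmctx$ this is immediate). The LF statements (2)--(5) and the computation statements (6)--(7) are then proved simultaneously, since LF typing refers to computations via $\unbox t\sigma$ and computation typing refers to LF via $\cbox{C}$.

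For the LF cases (2)--(5), I would unfold the validity definition: given $\Gamma' \Vdash \theta = \theta' : \Gamma$, I must show the semantic-equality judgment in Fig.~\ref{fig:LFsem}/\ref{fig:LFsemctx} holds for the substituted objects. The variable case uses that $\theta$ is well-formed (Lemma~\ref{lem:wfsemsub}) together with looking up the LF variable; the $\lambda$-case uses the extension of the context with a fresh LF variable and the semantic-equality clause for $\clam$; the application case follows $\capp$-compatibility; the closure case $\unbox t\sigma$ is the crucial link, where I invoke part (6) on $t$ to learn $\Gamma' \Vdash \{\theta\}t = \{\theta'\}t : \cbox{\Phi\vdash A}$, weak-head-reduce, and feed the resulting data (the neutral computation, its $\typeof$, and the semantically equal LF substitutions) into the LF semantic-equality rule. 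For the conversion rules I use semantic conversion for LF (Lemma~\ref{lem:semsymlf}(\ref{it:convclf}),(\ref{it:convsub})), for $\beta\eta$ and the closure-reduction rules I use backwards closure (Lemma~\ref{lem:lfbclosed}) and the fact that whnf reductions are compatible; symmetry/transitivity of $\equiv$ in the definitional-equality rules is mirrored by the semantic symmetry/transitivity lemmas. Semantic LF substitution (Lemma~\ref{lem:semlfeqsub}) handles the $\beta$-reduction rules where a substitution is applied.

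For the computation cases (6)--(7), most rules are dispatched by the auxiliary validity lemmas already proved: type conversion by Lemma~\ref{lem:validtypconv}, function introduction by Lemma~\ref{lem:ValidFun}, application by Lemma~\ref{lem:ValidTypeApp}, and the recursor by Lemma~\ref{lem:ValidRec}; the variable rule uses that $\Gamma' \Vdash \theta = \theta':\Gamma$ provides $\Gamma' \Vdash \{\theta\}x = \{\theta'\}x : \{\theta\}\ann\tau$ directly; the universe and $\cbox{T}$ formation rules unfold to semantic kinding facts; the $\cbox{C}$-introduction rule passes to part (2)/(3) via the definition of semantic equality at contextual type (using that $\unbox{w}{\id}$ reductions behave); and the $\tmctx$ schema-checking rules are immediate since $\Gamma \Vdash \Psi = \Psi : \tmctx$ just means $\Gamma \vdash \Psi \equiv \Psi : \tmctx$. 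The $\beta$-rule for computation functions and the recursor unfolding rules in Fig.~\ref{fig:etype} use computation-level substitution (Lemma~\ref{lm:compsubst}), backwards closure (Lemma~\ref{lem:bclosed}), and the corresponding validity lemmas; the $\eta$-expansion at $\cbox{\Psi\vdash A}$ uses the semantic-equality clause at contextual type together with $\cbox{\hatctx\Psi \vdash \unbox t {\wk{\hatctx\Psi}}} \equiv t$.

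The main obstacle is threading the mutual induction correctly: because LF typing and computation typing are mutually defined and the semantic definitions are stratified (LF semantic equality is defined first, then computation semantic equality is defined by recursion on semantic kinding), I must make sure each appeal to a ``later'' part of the theorem is on a structurally smaller derivation. In particular, the closure case of (2) appeals to (6) on $t$, and the $\cbox{C}$ case of (6) appeals to (2)/(3) on $C$; this is fine because these are genuine subderivations, but the proof must be organized as one induction over all derivation forms at once rather than proving the LF parts entirely before the computation parts. The second delicate point is the recursor rule in (6) and its reduction rules in (7): here one must check that the hypotheses of Lemma~\ref{lem:ValidRec} are exactly what the induction hypotheses supply for each branch and for the scrutinee $t$, and that the definitional-equality rules for the recursor (Fig.~\ref{fig:etype}) are validated by combining the branch validity with backwards closure after weak-head-reducing the scrutinee to a $\capp$, $\clam$, or variable form --- essentially re-running the argument inside the proof of Lemma~\ref{lem:ValidRec}. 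Everything else is routine bookkeeping once the weakening, backwards-closure, and semantic symmetry/transitivity/conversion lemmas are in hand.
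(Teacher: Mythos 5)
Your proposal is correct and follows essentially the same route as the paper: a single mutual induction on the typing and definitional-equality derivations, discharging the computation cases via the pre-established validity lemmas (type conversion, functions, application, recursion) and the LF cases via semantic LF substitution, backwards closure, and the semantic symmetry/transitivity/conversion lemmas, with the $\unbox t\sigma$ case appealing to the computation part on the subderivation for $t$. The delicate points you flag (organizing the mutual induction so every cross-appeal is on a subderivation, and re-running the scrutinee case analysis for the recursor's reduction rules) are exactly where the paper's proof spends its effort.
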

\begin{proof}
By induction on the first derivation using the previous lemma on validity of application (\ref{lem:ValidTypeApp}), Backwards Closed (\ref{lem:bclosed}), Well-formedness Lemma~\ref{lm:semwf}, % Lemma \ref{lem:NeutSound},
Lemma  \ref{lem:validtypconv}, Lemma \ref{lm:ctxwf}, Lemma \ref{lem:ValidFun}, Sem. weakening lemma \ref{lem:semweak}, Validity of Recursion Lemma \ref{lem:ValidRec}.
\LONGVERSIONCHECKED{
\\[0.5em]
\fbox{ If $\Gamma \vdash t : \tau$ then $\Gamma \models t : \tau$.}
\\[0.5em]
\paragraph{Case} $\D = \ianc{\Gamma \vdash t : \cbox{\Phi \vdash A} \quad \Gamma ; \Psi \vdash \sigma : \Phi}
                            {\Gamma ; \Psi \vdash \unbox t \sigma : [\sigma]A}{}$\\[0.5em]
\prf{$\Gamma \models t : \cbox{\Phi \vdash A}$ \hfill by IH}
\prf{$\Gamma ; \Psi \models \sigma = \sigma: \Phi$ \hfill by IH}\\[-0.75em]
\prf{Assume $\Gamma' \Vdash \theta = \theta' : \Gamma$}
\prf{$\Gamma' \Vdash \{\theta\}t = \{\theta'\}t : \{\theta\}\cbox{\Phi \vdash A}$ \hfill by def. $\Gamma \models t : \cbox{\Phi \vdash A}$}
% \prf{$\Gamma' ; \{\theta\}\Psi \Vdash \{\theta\}\sigma = \{\theta'\}\sigma : \{\theta\}\Phi$ \hfill by def. $\Gamma ; \Psi \models \sigma = \sigma : \Phi$}
\\[-0.75em]
\prf{\emph{Sub-case}: $\Gamma' \vdash \{\theta\}t \whnf \cbox C : \{\theta\}\cbox{\Phi \vdash A}$ and $\Gamma' \vdash \{\theta'\}t \whnf \cbox {C'} : \{\theta\}\cbox{\Phi \vdash A}$ \\
\mbox{\qquad} and $\Gamma' \Vdash C = C' : \{\theta\}(\Phi \vdash A)$}
\prf{$C = \hatctx{\Phi} \vdash M$ and $C' = \hatctx{\Phi} \vdash N$
   \hfill by inversion on $\Gamma' \Vdash C = C' : \{\theta\}(\Phi \vdash A)$}
\prf{$\Gamma' ; \{\theta\}\Psi \Vdash \{\theta\} \sigma = \{\theta'\}\sigma: \{\theta\}\Phi$
   \hfill by $\Gamma ; \Psi \models \sigma : \Phi$  }
\prf{$\Gamma' ; \{\theta\}\Phi \Vdash M = N: \{\theta\}A$
   \hfill by def. of $\Gamma' \Vdash C = C' : \{\theta\}(\Phi \vdash A)$}
\prf{$\Gamma' ; \{\theta\}\Psi \Vdash [\{\theta\}\sigma] M = [\{\theta'\}\sigma]N: \{\theta\}[\sigma]A$
   \hfill by Lemma \ref{lem:semlfeqsub} }
% \prf{Backwards Closed}\\
\prf{$\Gamma' ; \{\theta\}\Psi \vdash [\{\theta\}\sigma] M \lfwhnf W : \{\theta\}[\sigma]A$ \hfill by previous line}
\prf{$\Gamma' \vdash \{\theta\}t \whnf \cbox C : \{\theta\}\cbox{\Phi \vdash A}$ \hfill by restating the condition of the case we are in}
\prf{$\Gamma' ; \{\theta\}\Psi \vdash \{\theta\}(\unbox t \sigma) \lfwhnf W : \{\theta\}[\sigma]A$ \hfill by whnf rules}
\prf{$\Gamma' ; \{\theta\}\Psi \vdash [\{\theta'\}\sigma]N \lfwhnf W' : \{\theta\}[\sigma]A$ \hfill by previous line}
\prf{$\Gamma' \vdash \{\theta'\}t \whnf \cbox {C'} : \{\theta\}\cbox{\Phi \vdash A}$ \hfill by restating the condition of the case we are in}
\prf{$\Gamma' ; \{\theta\}\Psi \vdash \{\theta'\}(\unbox t \sigma) \lfwhnf W'  : \{\theta\}[\sigma]A$ \hfill by whnf rules}
\prf{$\Gamma' ; ; \{\theta\}\Psi  \Vdash \{\theta\}(\unbox t \sigma) = \{\theta'\}(\unbox t \sigma) : \{\theta\}[\sigma]A$
   \hfill by Backwards Closed Lemma \ref{lem:bclosed} (twice) \\
\mbox{$\quad$}\hfill and symmetry.}
\prf{$\Gamma ; \Psi \models \unbox t \sigma : [\sigma]A$
  \hfill by abstraction, since $\Gamma', \theta, \theta'$ were arbitrary}
\\
\prf{\emph{Sub-case}: $\Gamma' \vdash \{\theta\}t \whnf w : \{\theta\}\cbox{\Phi \vdash A}$ and
                      $\Gamma' \vdash \{\theta'\}t \whnf w': \{\theta\}\cbox{\Phi \vdash A}$ \\
\mbox{\qquad} and $\neut w$ and $\neut w'$ and $\Gamma \vdash w \equiv w' : \{\theta\}\cbox{\Phi \vdash A}$}
\\[-0.75em]
\prf{$\Gamma' \vdash \theta(\unbox t \sigma) \whnf \unbox w {\{\theta\}\sigma}$ \hfill by whnf rules}
\prf{$\Gamma' \vdash \theta'(\unbox t \sigma) \whnf \unbox {w'} {\{\theta\}\sigma}$ \hfill by whnf rules}
\prf{$\Gamma' \{\theta\}\Psi \Vdash \{\theta\}\sigma = \{\theta'\}\sigma : \{\theta\}\Phi$ \hfill by def. $\Gamma ; \Psi \models \sigma = \sigma : \Phi$}
\prf{$\Gamma' ; \{\theta\}\Psi \vdash \{\theta\}\sigma \equiv \{\theta'\}\sigma : \{\theta\}\Phi$ \hfill by Well-formedness Lemma~\ref{lm:semwf}}
\prf{$\Gamma' ; \{\theta\}\Psi \vdash \{\theta\}(\unbox t \sigma) \equiv \{\theta'\}(\unbox t \sigma) : \{\theta\}[\sigma]A$ \hfill by $\equiv$ rules}
\prf{$\Gamma' \Vdash \{\theta\}\Phi = \{\theta\}\Phi : \ctx$ \hfill Reflexivity}
\prf{$\typeof (\Gamma' \vdash w) = \cbox{\{\theta\}\Phi \vdash \tm}$ \hfill since $\Gamma' \vdash w : \{\theta\}\cbox{\Phi \vdash A}$ }
\prf{$\typeof (\Gamma' \vdash w') = \cbox{\{\theta\}\Phi \vdash \tm}$ \hfill since $\Gamma' \vdash w' : \{\theta\}\cbox{\Phi \vdash A}$ }
\prf{$\Gamma' ; \{\theta\}\Psi \Vdash \{\theta\}(\unbox t \sigma) = \{\theta'\}(\unbox t \sigma) : \{\theta\}[\sigma]A$
    \hfill by semantic def. }
\prf{$\Gamma ; \Psi \models \unbox t \sigma : [\sigma]A$
   \hfill by abstraction, since $\Gamma', \theta, \theta'$ were
   arbitrary}
}
\LONGVERSIONCHECKED{\\
\pcase{$\D = \ianc {y:\ann\tau \in \Gamma}{\Gamma \vdash y : \ann\tau}{}$}
\prf{$\Ca: ~~\vdash \Gamma$ and $\Ca < \D$ \hfill by Lemma \ref{lm:ctxwf}}
% \prf{$\models \Gamma$ \hfill  by IH }
% \prf{$\Gamma \models \tau : u$ \hfill by $\models \Gamma$}
% \prf{$\neut y$ \hfill by definition of neutral terms }
\prf{Assume $\Gamma', \theta, \theta'.~\Gamma' \Vdash \theta = \theta' : \Gamma$}
\prf{$\Gamma' \Vdash t = s : \{\theta_i\}\ann\tau$ \hfill by sem. def. of $\Gamma' \Vdash \theta = \theta' : \Gamma$}
% \prf{$\Gamma \Vdash y = y : \tau$ \hfill by reflexivity of semantic def.}
\prf{$\Gamma' \Vdash \{\theta\}y = \{\theta'\}y : \{\theta\}\ann\tau$ \hfill subst. def. where $\{\theta\}y = t$ and $\{\theta'\}y = s$}
\prf{$\Gamma \models y = y : \ann\tau$ \hfill by def. of validity}
\\
\pcase{ $\D = \ibnc
{\Gamma \vdash t : \tau'}{\Gamma \vdash \tau' \equiv \tau : u}
{\Gamma \vdash t : \tau}{}$}
\prf{$\Gamma \models t : \tau'$ \hfill by IH }
\prf{$\Gamma \models \tau = \tau' : u$ \hfill by IH }
\prf{$\Gamma \models t : \tau$ \hfill by Lemma \ref{lem:validtypconv}}
\\
\pcase{$\D = \ianc{\Gamma \vdash C : T}
                           {\Gamma \vdash \cbox C : \cbox T}{}$
}
\prf{$\Ca: ~~\vdash \Gamma$ and $\Ca < \D$ \hfill by Lemma \ref{lm:ctxwf}}
\prf{$\models \Gamma$ \hfill  by IH}
\prf{Let $C = \hatctx{\Psi} \vdash M$ and $T = \Psi \vdash A$.}
\prf{$\Gamma ; \Psi \vdash M : A$ \hfill by inversion on $\Gamma \vdash C : T$}
\prf{$\Gamma ; \Psi \models M : A$ \hfill by IH}
\prf{Assume $\Gamma',~\theta,~\theta'.~ \Gamma' \Vdash \theta = \theta'$}
\prf{$\Gamma ; \{\theta\}\Psi \Vdash \{\theta\}M = \{\theta'\}M : \{\theta\}A$ \hfill using $\Gamma ; \Psi \models M : A$}
\prf{$\Gamma \Vdash \{\theta\}(\hatctx{\Psi} \vdash M) = \{\theta'\}(\hatctx{\Psi} \vdash M) : \{\theta\}T$ \hfill by sem. def.}
\prf{$\Gamma \Vdash \{\theta\}\cbox C = \{\theta'\}\cbox C: \{\theta\}\cbox T$ \hfill by previous line}
\prf{$\Gamma \models \cbox C = \cbox C: \cbox T$ \hfill by abstraction, since $\Gamma', ~\theta,~\theta'$ were arbitrary}
\prf{$\Gamma \models \cbox C : \cbox T$ \hfill by def. of validity}
\\
\pcase{$\D = \ibnc
{\Gamma \vdash t : (y:\ann\tau_1) \arrow \tau_2}{ \Gamma \vdash s : \ann\tau_1}
{\Gamma \vdash t~s : \{s/y\}\tau_2}{}
$}
\prf{$\Gamma \models s : \ann\tau_1$ \hfill by IH }
\prf{$\Gamma \models t : (y:\ann\tau_1) \arrow \tau_2$ \hfill by IH}
\prf{$\Gamma \models t~s : \{s/y\}\tau_2$ \hfill by Lemma \ref{lem:ValidTypeApp}}
\pcase{$\D = \ianc{\Gamma, y:\ann\tau_1 \vdash t : \tau_2 }
                            {\Gamma \vdash \tmfn y t : (y:\ann\tau_1) \arrow \tau_2}{}$}
\\[0.2em]
\prf{$\Ca: ~~\vdash \Gamma, y{:}\ann\tau_1$ and $\Ca < \D$ \hfill by Lemma \ref{lm:ctxwf}}
\prf{$\models \Gamma, y{:}\ann\tau_1$ \hfill  by IH }
\prf{$\models \Gamma$ \hfill by def. of validity }
\prf{$\Gamma, y:\ann\tau_1 \models t : \tau_2$ \hfill by IH}
\prf{$\Gamma \models  (\tmfn y t) : (y:\ann\tau_1) \arrow \tau_2$ \hfill Lemma \ref{lem:ValidFun}}

\pcase{$\D = \ianc{\vdash \Gamma}{\Gamma \vdash u_1: u_2}{(u_1, u_2) \in \Ax}$}
\prf{$\Gamma \vdash u_1 \whnf u_1 : u_2$ \hfill by rules and typing assumption}
\prf{$u_1 \leq u_2$ \hfill by $(u_1, u_2) \in \Ax$}
\\
\pcase{$\D = \ibnc {\Gamma \vdash \ann\tau_1 : u_1}
       {\Gamma, y{:}\ann\tau_1 \vdash \tau_2 : u_2}
       {\Gamma \vdash (y:\ann\tau_1) \arrow \tau_2 : u_3}{(u_1,~u_2,~u_3) \in \Ru}$}
% \\% [1em]
\prf{$\Gamma \models \ann\tau_1 : u_1$ \hfill by IH}
\prf{$\forall \Gamma'.~\Gamma' \Vdash \theta = \theta' : \Gamma \Longrightarrow \Gamma' \Vdash \{\theta\}\ann\tau_1 = \{\theta'\}\ann\tau_1 : u_1$ \hfill by sem. def.}
\prf{$\Gamma, y{:}\ann\tau_1 \models \tau_2 : u_2$ \hfill by IH}
\prf{$\forall ~\Gamma' \Vdash \theta = \theta' : \Gamma, y{:}\ann\tau_1 \Longrightarrow \Gamma' \Vdash  \{\theta\}\tau_2 = \{\theta'\}\tau_2 : u_1$ \hfill by sem. def.}
\prf{Assume $\Gamma' \Vdash \theta = \theta' : \Gamma$}
\prf{$\Gamma' \Vdash \{\theta\}\ann\tau_1 = \{\theta'\}\ann\tau_1 : u_1$ \hfill by previous lines}
\prf{Assume $\Gamma'' \leq_\rho \Gamma'$}
\prf{$\Gamma'' \Vdash \{\rho\}\{\theta\}\ann\tau_1 = \{\rho\}\{\theta'\}\ann\tau_1 : u_1$ \hfill sem. weakening for computations lemma \ref{lem:compsemweak}}
\prf{$\forall \Gamma''.~\Gamma'' \leq_\rho \Gamma'. \Gamma'' \Vdash \{\rho\}\{\theta\}\ann\tau_1 = \{\rho\}\{\theta'\}\ann\tau_1 : u_1$\hfill by previous lines}
\prf{Assume $\Gamma'' \leq_\rho \Gamma'$ and $\Gamma'' \Vdash t = t' : \{\{\rho\}\theta\}\ann\tau_1$}
\prf{$\Gamma'' \Vdash \{\rho\}\theta = \{\rho\}\theta' : \Gamma$ \hfill by sem. weakening lemma \ref{lem:weakcsub}}
\prf{$\Gamma '' \Vdash \{\rho\}\theta, t/y = \rho\{\theta'\}, t'/y : \Gamma, y{:}\ann\tau_1$ \hfill by sem. def.}
\prf{$\Gamma'' \Vdash \{\{\rho\}\theta, t/y\}\tau_2 = \{\rho\{\theta'\}, t'/y\}\tau_2 : u_2$ \hfill by $\Gamma, y{:}\ann\tau_1 \models \tau_2 : u_2$ \\
\mbox{$\quad$}\hfill choosing $\theta =  \{\rho\}\theta, t/y$ and $\theta' = \rho\{\theta'\}, t'/y$ }
\prf{$\Gamma'' \Vdash \{\rho, t/y\}\{\theta, y/y\}\tau_2 = \{\rho, t'/y\}\{\theta', y/y\}\tau_2 : u_2$\hfill by subst. def.}
\prf{$\forall \Gamma''.~\Gamma'' \leq_\rho \Gamma'. ~\Gamma'' \Vdash t = t' : \{\{\rho\}\theta\}\ann\tau_1 \Longrightarrow \Gamma'' \Vdash \{\rho, t/y\}\{\theta, y/y\}\tau_2 = \{\rho, t'/y\}\{\theta', y/y\}\tau_2 : u_2$ \hfill \\
\mbox{$\quad$}\hfill by previous lines}
\prf{$\Gamma' \Vdash \{\theta\}((y:\ann\tau_1) \arrow \tau_2 : u_3) = \{\theta'\}((y:\ann\tau_1) \arrow \tau_2 : u_2)$ \hfill by sem. def.}
\prf{$\Gamma \models ((y:\ann\tau_1) \arrow \tau_2 : u_3) = ((y:\ann\tau_1) \arrow \tau_2 : u_3)$ \hfill by sem. def.}
\prf{$\Gamma \models (y:\ann\tau_1) \arrow \tau_2 : u_3$ \hfill by def. of validity }
\\
\pcase{$\D = \ianc{\Gamma \vdash T}{\Gamma \vdash \cbox{T} : u}{}$}
% \\[1em]
\prf{$\models \Gamma$ \hfill }
\prf{Assume $\Gamma' \Vdash \theta = \theta' : \Gamma $}
% \prf{$\Gamma' \vdash \cbox{\{\theta'\}T} : u$ \hfill by subst. lemma}
% \prf{$\cbox{\{\theta'\}T} \whnf \cbox{\{\theta'\}T} $ \hfill since $\norm \cbox{\{\theta'\}T}$ }
% \prf{$\Gamma' \vdash \cbox{\{\theta'\}T} \whnf \cbox{\{\theta'\}T} : \univ k$ \hfill since Def. \ref{def:typedwhnf} }
% \prf{$\Gamma' \vdash  \cbox{\{\theta\}T} \equiv \cbox{\{\theta'\}T} : \univ k$ \hfill by fundamental lemma for LF terms and contexts ???}
% \prf{$\{\theta'\}\cbox{T} = \cbox{\{\theta'\}T}$ and $\{\theta\}\cbox{T} = \cbox{\{\theta\} T}$\hfill by subst. def.}
\prf{$\Gamma \models_\LF T = T$ \hfill by IH}
\prf{$\Gamma' \Vdash_\LF \{\theta\} T = \{\theta'\}T$ \hfill since $\Gamma \models_\LF T = T$ }
\prf{$\Gamma' \vdash \cbox{\{\theta\}T}\whnf \cbox{\{\theta\}T} : u$ \hfill since $\norm \cbox{\{\theta\}T}$}
\prf{$\Gamma' \Vdash \{\theta\}\cbox{T} = \{\theta'\}\cbox{T} : u$ \hfill by sem. def.}
\prf{$\Gamma \models \cbox{T}  = \cbox{T} : \univ k$ \hfill since $\Gamma', \theta, \theta'$ are arbitrary}
\prf{$\Gamma \models \cbox{T} : \univ k$ \hfill by def. of validity}
}
\LONGVERSIONCHECKED{
\\[1em]
\pcase{$\D =
\ianc
{\Gamma \vdash t : \cbox{\Psi \vdash \tm} \quad
 \Gamma \vdash \IH : u \quad
 \Gamma \vdash b_v : \IH \quad
 \Gamma \vdash b_{\mathsf{app}} : \IH \quad
 \Gamma \vdash b_{\mathsf{lam}} : \IH}
{\Gamma \vdash \tmrec {\IH} {b_v} {b_{\mathsf{app}}} {b_{\clam}} \rappto \Psi~t : \{{\Psi}/\psi,~t/y\}\tau}{}
$\\[0.25em]
\mbox{where}~$\IH = (\psi : {\tmctx}) \arrow (y:\cbox{\psi \vdash \tm}) \arrow \tau$}
\\[-1em]
\prf{$\Gamma \models \IH : u $ \hfill by IH}
\prf{$\Gamma \models t : \cbox{\Psi \vdash \tm}$ \hfill by IH}
% \prf{Subase: $b_v$}
\prf{$ \Gamma, \psi:\tmctx, p:\cbox{~\psi \vdash_\# \tm}  \vdash  t_v : \{p / y\}\tau$ \hfill by typing inversion}
\prf{$\Gamma,  \psi:\tmctx, m:\cbox{~\psi \vdash \tm}, n:\cbox{~\psi \vdash \tm}
         f_m: \{m/y\}\tau, f_n: \{n/y\}\tau$}
\prf{\mbox{$\qquad$}\hfill$\qquad \vdash  t_{\mathsf{app}} : \{\cbox{~\psi \vdash\capp~\unbox{m}{\id}~\unbox{n}{\id}}/y\}\tau$$\qquad$ by typing inversion}
\prf{$\Gamma, \phi:{\tmctx},  m:\cbox{~\unboxc{\phi}, x:\tm \vdash \tm},
          f_m:\{\cbox{~\unboxc{\phi}, x:\tm}/\psi, m /y \} \tau$}
\prf{\mbox{$\qquad$}$\quad$\hfill $\vdash t_{\clam} : \{\phi/\psi, \cbox{~\unboxc{\phi} \vdash \clam~\lambda x.\unbox{m}{\id}~} / y\}\tau$ $\qquad$ by typing inversion}
% \\[1em]
\prf{$\Ca: ~~\vdash \Gamma, \psi:\tmctx, p:\cbox{~\unboxc{\psi} \vdash_\# \tm}$ and $\Ca < \D$ \hfill by Lemma \ref{lm:ctxwf}}
% \prf{$\Gamma, \psi:\tmctx, p:\cbox{~\unboxc{\psi} \vdash_\# \tm}  \qquad$  by IH }
\prf{$\models \Gamma$ \hfill by def. of validity }
\prf{$ \Gamma, \psi:\tmctx, p:\cbox{~\unboxc{\psi} \vdash_\# \tm}  \models  t_v : \{p / y\}\tau$ \hfill by IH}
\prf{$\Gamma,  \psi:\tmctx, m:\cbox{~\unboxc{\psi} \vdash \tm}, n:\cbox{~\unboxc{\psi} \vdash \tm}
         f_m: \{m/y\}\tau, f_n: \{n/y\}\tau$}
\prf{\mbox{$\qquad$}\hfill$\qquad \models  t_{\mathsf{app}} : \{\cbox{~\unboxc{\psi} \vdash\capp~\unbox{m}{\id}~\unbox{n}{\id}}/y\}\tau\qquad$ by IH}
\prf{$\Gamma, \phi:\tmctx,  m:\cbox{~\unboxc{\phi}, x:\tm \vdash \tm},
          f_m:\{\cbox{~\unboxc{\phi}, x:\tm}/\psi, m /y \} \tau$}
\prf{\mbox{$\quad$}\hfill $\models t_{\clam} : \{\phi/\psi, \cbox{~\unboxc{\phi} \vdash \clam~\lambda x.\unbox{m}{\id}~} / y\}\tau\qquad$  by IH}
\prf{$\Gamma \models \tmrec {\IH} {b_v} {b_{\mathsf{app}}} {b_{\clam}} \rappto \Psi~t :   \IH$
 \hfill by Validity of Recursion Lemma \ref{lem:ValidRec}}
\\
\fbox{ If $\Gamma \vdash t \equiv t' : \tau$ then $\Gamma \models t =  t' : \tau$.}
\\[1em]
\pcase{
$\D = \ianc {\Gamma \vdash \tmfn y t : (y{:}\ann\tau_1) \arrow \tau_2
     \qquad \Gamma \vdash s : \ann\tau_1}
           {\Gamma  \vdash (\tmfn y t)~s \equiv \{s/y\}t : \{s/y\}\tau_2}{}$
                }
\prf{To Show: $\Gamma \models (\tmfn y t)~s = \{s/y\}t : \{s/y\}\tau_2
  $}
\prf{$\Ca: ~~\vdash \Gamma$ and $\Ca < \D$ \hfill by Lemma \ref{lm:ctxwf}}
\prf{$\models \Gamma$ \hfill by IH}
\prf{Assume $\Gamma' \Vdash \theta = \theta' : \Gamma$}
\prf{$\Gamma \models \tmfn y t : (y : \ann\tau_1) \arrow \tau_2$ \hfill by IH}
\prf{$\Gamma \models s : \ann\tau_1$ \hfill by IH}
\prf{$\Gamma' \Vdash \{\theta\}s = \{\theta'\}s : \{\theta\}\ann\tau_1$ \hfill
   by $\Gamma \models s : \ann\tau_1$}
\prf{$\Gamma \Vdash \{\theta\}(\tmfn y t) = \{\theta'\}(\tmfn y t) : \{\theta\}((y : \ann\tau_1) \arrow \tau_2)$ \hfill by previous lines (def. of $\models$)}
\prf{$\Gamma \vdash \{\theta\}(\tmfn y t) \whnf \{\theta\}(\tmfn y t) : \{\theta\}((y : \ann\tau_1) \arrow \tau_2)$ \hfill by sem. equ. def.}
\prf{$\Gamma \vdash \{\theta'\}(\tmfn y t) \whnf \{\theta'\}(\tmfn y t) : \{\theta\}((y : \ann\tau_1) \arrow \tau_2)$ \hfill by sem. equ. def.}
\prf{$\Gamma \Vdash \{\theta\}(\tmfn y t)~\{\theta\}s = \{\theta'\}(\tmfn y t)~\{\theta'\}s' : \{\theta, \{\theta\}s/y\}\tau_2$ \hfill by sem. equ. def.}
\prf{$\Gamma \vdash \{\theta\}((\tmfn y t)~s) \whnf w :  \{\theta, s/y\}\tau_2$ \hfill by sem. equ. def}
\prf{$\Gamma \vdash \{\theta, \{\theta\}s/y\}t \whnf w :  \{\theta, s/y\}\tau_2$ \hfill by inversion on $\whnf$}
\prf{$\Gamma \vdash \{\theta'\}((\tmfn y t)~s) \whnf w' :  \{\theta, s/y\}\tau_2$ \hfill by sem. equ. def}
\prf{$\Gamma \vdash \{\theta', \{\theta'\}s/y\}t \whnf w' :  \{\theta, s/y\}\tau_2$ \hfill by inversion on $\whnf$}
\prf{$\Gamma \Vdash (\tmfn y t)~s = \{s/y\}t : \{s/y\}\tau_2$ \hfill by Backwards Closure (Lemma \ref{lem:bclosed})}
\\[0.5em]
\pcase{
$\D =  \ianc{\Gamma \vdash t : \cbox{\Psi \vdash A}}
            {\Gamma \vdash \cbox{\hatctx \Psi \vdash \unbox{t}{\wk{\hatctx\Psi}}} \equiv t : \cbox{\Psi \vdash A}}{}$
}
\prf{To Show: $\Gamma \models \cbox{\hatctx \Psi \vdash \unbox{t}{\wk{\hatctx\Psi}}} = t : \cbox{\Psi \vdash A}$}
\prf{Assume $\Gamma' \Vdash \theta = \theta' : \Gamma$}
\prf{$\Gamma \models t : \cbox{\Psi \vdash A}$ \hfill by IH}
\prf{$\Gamma' \Vdash \{\theta\} t = \{\theta'\}t : \{\theta\} \cbox{\Psi \vdash A}$ \hfill by $\Gamma \models t : \cbox{\Psi \vdash A}$}
\prf{$\Gamma' \vdash \{\theta\} t \whnf w : \{\theta\} \cbox{\Psi \vdash A}$ \hfill by sem. def.}
\prf{$\Gamma' \vdash \{\theta'\} t \whnf w' : \{\theta\} \cbox{\Psi \vdash A}$ \hfill by sem. def.}
\prf{$\Gamma' ; \{\theta\}\Psi \Vdash \unbox w \id = \unbox{w'}\id : \{\theta\}A$ \hfill by sem. def.}
% \prf{$\Gamma' ; \{\theta\}\Psi \vdash \unbox w \id \whnf N : \{\theta\}A$ where $\norm N$ \hfill by sem. def.}
\prf{We note that $w$ is either $n$ where $\neut n$ or $\cbox{\hatctx\Psi \vdash M}$ }
% \\[1em]
\prf{\emph{Sub-case} $w$ is neutral, i.e. $\neut w$}
\prf{$\Gamma' ; \{\theta\}\Psi \vdash \unbox {\{\theta\}t} {\wk{\hatctx\Psi}} \lfwhnf \unbox w {\id} : \{\theta\}A$
  \hfill since $\Gamma' \vdash \{\theta\} t \whnf w : \{\theta\} \cbox{\Psi \vdash A}$ }
\prf{$\Gamma' ; \{\theta\}\Psi \vdash \unbox{\{\theta\}\cbox{\hatctx\Psi \vdash   \unbox{t}{\wk{\hatctx\Psi}} }}\id \lfwhnf \unbox w {\id} : \{\theta\}A$  \hfill by $\lfwhnf$ }
\prf{$\Gamma ; \{\theta\}\Psi \Vdash \unbox{\{\theta\}\cbox{\hatctx\Psi \vdash   \unbox{t}{\wk{\hatctx\Psi}} }}\id = \unbox{w'}\id : \{\theta\}A$
\hfill \\\mbox{\hspace{2cm}}\hfill using $\Gamma' ; \{\theta\}\Psi \Vdash \unbox w \id = \unbox{w'}\id : \{\theta\}A$ and Backwards Closure (Lemma \ref{lem:bclosed})}
\prf{$\Gamma ; \{\theta\}\Psi \vdash \{\theta\}\cbox{\hatctx \Psi \vdash \unbox{t}{\wk{\hatctx\Psi}}} \whnf \{\theta\}\cbox{\hatctx \Psi \vdash \unbox{t}{\wk{\hatctx\Psi}}} : \{\theta\}\cbox{\Psi \vdash A}$\hfill since $\norm \{\theta\}\cbox{\hatctx \Psi \vdash \unbox{t}{\wk{\hatctx\Psi}}}$}
\prf{$\Gamma' \Vdash \{\theta\}\cbox{\hatctx \Psi \vdash \unbox{t}{\wk{\hatctx\Psi}}} = \{\theta'\} t : \{\theta\}\cbox{\Psi \vdash A}$ \hfill by sem. def.}
\\% [1em]
\prf{\emph{Sub-case} $w = \cbox{\hatctx \Psi \vdash M}$}
\prf{$\Gamma' ; \{\theta\}\Psi \vdash \unbox{\cbox{\hatctx\Psi \vdash M}}{\id} \lfwhnf N : \{\theta\}A$ \hfill
 by $\Gamma' ; \{\theta\}\Psi \Vdash \unbox w \id = \unbox{w'}\id : \{\theta\}A$ }
\prf{$\Gamma'; \{\theta\}\Psi \vdash  M \lfwhnf N : \{\theta\}A$ \hfill by $\lfwhnf$}
\prf{$\Gamma' ; \{\theta\}\Psi \vdash \unbox{\{\theta\}\cbox{\hatctx\Psi \vdash   \unbox{t}{\wk{\hatctx\Psi}} }}\id \lfwhnf N : \{\theta\}A $}
\prf{$\Gamma' ; \{\theta\}\Psi \Vdash N = \unbox{w'}\id : \{\theta\}A$ \hfill by sem. equ. } %% THIS IS SOMETHING ONE WOULD NEED TO PROVE
\prf{$\Gamma' ; \{\theta\}\Psi \Vdash \unbox{\{\theta\}\cbox{\hatctx\Psi \vdash   \unbox{t}{\wk{\hatctx\Psi}} }}\id = \unbox{w'}\id: \{\theta\}A$ \hfill by Backwards Closure (Lemma  \ref{lem:bclosed})}
\prf{$\Gamma' \Vdash \{\theta\}\cbox{\hatctx \Psi \vdash \unbox{t}{\wk{\hatctx\Psi}}} = \{\theta'\} t : \{\theta\}\cbox{\Psi \vdash A}$ \hfill by sem. def.}
}
\end{proof}

From the fundamental lemma follows our main theorem normalization and subject reduction.
% \section{Consequences}

\begin{theorem}[Normalization and Subject Reduction]\label{lm:norm}
If $\Gamma \vdash t : \tau$ then $t \whnf w$ and $\Gamma \vdash t \equiv w : \tau$
\end{theorem}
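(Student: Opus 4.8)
The plan is to obtain the statement as a direct consequence of the Fundamental Theorem (Lemma~\ref{lm:fundtheo}) together with Context Satisfiability (Lemma~\ref{lem:ctxsat}), and then to read the weak head normal form off the definition of semantic equality.

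First I would apply part~(6) of the Fundamental Theorem to $\Gamma \vdash t : \tau$ to get $\Gamma \models t : \tau$. Unfolding the validity definition (Fig.~\ref{fig:valid}) gives $\models \Gamma$, $\Gamma \models \tau : u$ for some sort $u$, and, for all $\Gamma'$, $\theta$, $\theta'$ with $\Gamma' \Vdash \theta = \theta' : \Gamma$, the semantic equality $\Gamma' \Vdash \{\theta\}t = \{\theta'\}t : \{\theta\}\tau$. By Context Satisfiability, $\models \Gamma$ yields $\vdash \Gamma$ and $\Gamma \Vdash \id(\Gamma) = \id(\Gamma) : \Gamma$. Instantiating with $\Gamma' = \Gamma$ and $\theta = \theta' = \id(\Gamma)$, and using $\{\id(\Gamma)\}t = t$ and $\{\id(\Gamma)\}\tau = \tau$ by the substitution definition, yields $\Gamma \Vdash t = t : \tau$; from $\Gamma \models \tau : u$ we likewise get $\Gamma \Vdash \tau : u$.

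Next I would inspect the definition of $\Gamma \Vdash t = t : \tau$, which proceeds by recursion on the semantic kinding $\Gamma \Vdash \tau : u$ (Fig.~\ref{fig:sem} together with Fig.~\ref{fig:semkind}). In each case --- $\tau$ weak head reduces to a boxed type $\cbox{T}$, to a universe $u'$, to a neutral $x\,\vec{t}$, or to a function type $(y{:}\ann\tau_1) \arrow \tau_2$ --- the first conjunct of the corresponding clause of semantic equality asserts $\Gamma \vdash t \whnf w : \tau$ for some $w$; in the universe case the relevant fact is obtained by the inner recursion on the smaller semantic kinding $\Gamma \Vdash t : u'$, every clause of which again begins with such a typed reduction. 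The case $\tau = \tmctx$ does not arise, since $\tau$ is a genuine computation-level type. Hence in all cases we have $\Gamma \vdash t \whnf w : \tau$ for some $w$, and unfolding Def.~\ref{def:typedwhnf} this is exactly $t \whnf w$ together with $\Gamma \vdash t \equiv w : \tau$ (and also $\Gamma \vdash w : \tau$, which is not needed), which is the claim.

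This is a refinement of Completeness of Validity (Theorem~\ref{lem:SemComplete}): there one only extracts typing and definitional equality via Lemma~\ref{lm:semwf}, whereas here we additionally retain the weak head reduction $t \whnf w$ that is built into the semantic equality clauses through Def.~\ref{def:typedwhnf}. The only actual work is the case analysis on $\Gamma \Vdash \tau : u$, and I do not expect any genuine obstacle: each case follows immediately from its defining clause, and the well-foundedness of the definition (recursion on the semantic kinding, with a descent to a strictly smaller universe in the type-as-term clause) has already been settled in the earlier development, so no new termination argument is required.
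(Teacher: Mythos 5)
Your proposal is correct and follows essentially the same route as the paper: apply the fundamental theorem with the identity substitution (justified by context satisfiability) to obtain $\Gamma \Vdash t = t : \tau$, and then read the typed weak head reduction off the clauses of the semantic equality definition. The only cosmetic difference is that you extract $\Gamma \vdash t \equiv w : \tau$ directly from Def.~\ref{def:typedwhnf}, whereas the paper routes it through $\Gamma \Vdash t = w : \tau$ and well-formedness of semantic typing (Lemma~\ref{lm:semwf}); both are fine.
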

\begin{proof}
By the Fundamental theorem (Lemma \ref{lm:fundtheo}), we have $\Gamma \Vdash t = t : \tau$ (choosing the identity substitution for $\theta$ and $\theta'$).
This includes a definition $t \whnf w$. Since $w$ is in weak head normal form (i.e. $\norm w$), we have $w \whnf w$. Therefore, we can easily show that also $\Gamma \Vdash t = w : \tau$.  By well-formedness (Lemma \ref{lm:semwf}), we also have that $\Gamma \vdash t \equiv w : \tau$ and more specifically, $\Gamma \vdash w : \tau$.
\end{proof}

Using the fundamental lemma, we can also show that every term has a unique type. This requires first showing some standard inversion lemmas and then showing function type injectivity.
\LONGVERSION{
\begin{lemma}[Inversion]
  \begin{enumerate}
  \item If $\Gamma \vdash x : \ann\tau$ then $x:\ann\tau' \in \Gamma$ for some $\ann\tau'$ and $\Gamma \vdash \ann\tau \equiv \ann\tau' : u$.
\item If $\Gamma \vdash \tmfn y t : \tau$ then $\Gamma, y:\ann\tau_1 \vdash t : \tau_2$ for some $\ann\tau_1$, $\tau_2$
  and If $\Gamma \vdash \tau \equiv (y : \ann\tau_1) \arrow \tau_2 : u$.
\item If $\Gamma \vdash t~s : \tau$ then
     $\Gamma \vdash t : (y : \ann\tau_1) \arrow \tau_2$ and
     $\Gamma \vdash s : \ann\tau_1$ for some $\ann\tau_1$ and $\tau_2$ and
     $\Gamma \vdash \tau \equiv \{s/y\}\tau_2 : u$.
\item If $\Gamma \vdash \cbox C : \tau$ then
     $\Gamma \vdash \cbox C : \cbox T$ for some contextual type $T$ and
     $\Gamma \vdash \tau \equiv \cbox T : u$.
\item If $\Gamma \vdash \titer {(b_v \mid b_\capp \mid b_\clam)}{}\IH~\Psi~t : \tau$ where
      $\IH = (\psi : \tmctx) \arrow (y : \cbox{\Psi \vdash \tm}) \arrow \tau$
     then
   $\Gamma \vdash t : \cbox{\Psi \vdash \tm}$ and
   $\Gamma \vdash \IH : u$ and
   $\Gamma \vdash b_v : \IH$ and $\Gamma \vdash b_\capp : \IH$ and $\Gamma \vdash b_\clam : \IH$ and
   $\Gamma \vdash \tau \equiv \{\Psi/\psi, t/y\}\tau : u$.
\item If $\Gamma \vdash u_1 : \tau$ then there some $u_2$ s.t. $(u_1, u_2) \in \Ax$ and $\Gamma \vdash \tau \equiv u_2 : u$.
\item If $\Gamma \vdash (y : \ann\tau_1) \arrow \tau_2 : \tau$ then there is some $u_1$, $u_2$, and $u_3$ s.t.
  $(u_1, u_2, u_3) \in \Ru$ and $\Gamma \vdash \ann\tau_1 : u_1$, $\Gamma, y:\ann\tau_1 \vdash \tau_2 : u_2$ and
$\Gamma \vdash u_3 \equiv \tau : u$.
  \end{enumerate}
\end{lemma}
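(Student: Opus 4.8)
The plan is to prove all seven statements simultaneously, each by induction on the derivation of the typing judgment for the term in question: statement~(1) by induction on $\Gamma \vdash x : \ann\tau$, statement~(2) by induction on $\Gamma \vdash \tmfn y t : \tau$, statement~(3) on $\Gamma \vdash t~s : \tau$, statement~(4) on $\Gamma \vdash \cbox C : \tau$, statement~(5) on the recursor typing, statement~(6) on $\Gamma \vdash u_1 : \tau$, and statement~(7) on $\Gamma \vdash (y{:}\ann\tau_1) \arrow \tau_2 : \tau$. The key observation is that the computation typing rules are syntax-directed modulo conversion: for a term with a fixed head, the last rule of a typing derivation is either the unique structural rule whose conclusion matches that head (the variable rule, the $\tmfn{}{}$-introduction rule, the application rule, the $\cbox{C}$-introduction rule, the recursor rule, the axiom rule for $(u_1,u_2)\in\Ax$, or the $\Pi$-formation rule), or it is the conversion rule $\infer{\Gamma \vdash t : \tau}{\Gamma \vdash t : \tau'' & \Gamma \vdash \tau'' \equiv \tau : u}$. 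No other rule can conclude a judgment about a term of that shape.

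In the structural case there is nothing to do beyond reading off the premises, which are exactly the conclusions claimed; moreover the type in the rule's conclusion is already the ``canonical'' one $\tau_{\mathrm{can}}$ (for instance $\{s/y\}\tau_2$ for application, $\cbox{T}$ for $\cbox{C}$, $u_2$ for the axiom, $u_3$ for $\Pi$-formation), so $\Gamma \vdash \tau \equiv \tau_{\mathrm{can}} : u$ holds by reflexivity of definitional equality. Reflexivity requires $\Gamma \vdash \tau_{\mathrm{can}} : u$, which follows either by inspecting the premises of the structural rule or, uniformly, from the Fundamental Theorem (Lemma~\ref{lm:fundtheo}) together with Completeness of Validity (Theorem~\ref{lem:SemComplete}). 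In the conversion case we apply the induction hypothesis to the strictly shorter derivation $\Gamma \vdash t : \tau''$, which yields all the desired premises together with $\Gamma \vdash \tau'' \equiv \tau_{\mathrm{can}}$; chaining this with the second premise $\Gamma \vdash \tau'' \equiv \tau : u$ of the conversion rule via symmetry and transitivity of $\equiv$ gives $\Gamma \vdash \tau \equiv \tau_{\mathrm{can}}$, completing the step.

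The single point that needs genuine care — and where I expect the only friction — is the bookkeeping of universe levels when composing the two equivalences in the conversion case: the induction hypothesis delivers $\tau'' \equiv \tau_{\mathrm{can}}$ at some level $u'$, while the conversion rule supplies $\tau'' \equiv \tau$ at $u$, and transitivity of $\equiv$ as stated needs a common universe. I would resolve this by first using well-formedness (again Lemma~\ref{lm:fundtheo} plus Theorem~\ref{lem:SemComplete}) to observe that $\tau$, $\tau''$, and $\tau_{\mathrm{can}}$ are all well-kinded, and then appealing to a small auxiliary fact — that definitional equality between well-kinded types is independent of the universe annotation carried by the judgment, proved by a routine induction on the equality derivation — so that both equivalences can be re-derived at whichever level is convenient and transitivity applies uniformly. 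With this in place every one of the seven statements reduces to the two-case analysis above, and the whole lemma follows.
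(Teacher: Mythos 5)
Your proposal is correct and follows essentially the same route as the paper, whose entire proof of this lemma is the one-line ``By induction on the typing derivation'': the case split between the unique syntax-directed rule for each head and the conversion rule, closed off by symmetry and transitivity of $\equiv$, is exactly the intended argument. The universe-level bookkeeping you flag is a real subtlety the paper silently elides, and your handling of it is reasonable.
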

\begin{proof}
By induction on the typing derivation.
\end{proof}
}

\begin{lemma}[Injectivity of Function Type]\label{lm:funinj}
If $\Gamma \vdash (y : \ann\tau_1) \arrow \tau_2 \equiv (y : \ann\tau'_1) \arrow \tau'_2 :u$
then
$\Gamma \vdash \ann\tau_1 \equiv \ann\tau'_1 : u_1$ and
$\Gamma, y:\ann\tau_1 \vdash \tau_2 \equiv \tau_2 : u_2$ and $(u_1, u_2, u) \in \Ru$.
\end{lemma}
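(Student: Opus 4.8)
The plan is to pass through the Kripke model: first turn the syntactic equality of the two function types into a semantic equality via the Fundamental Theorem, then apply the already-established semantic injectivity of function types, and finally transport the resulting semantic equalities back to the judgmental system using completeness of validity. Concretely, I would start from the hypothesis $\Gamma \vdash (y : \ann\tau_1) \arrow \tau_2 \equiv (y : \ann\tau'_1) \arrow \tau'_2 : u$ and apply the Fundamental Theorem (Theorem~\ref{lm:fundtheo}, the clause for definitional equality of computations) to obtain $\Gamma \models (y : \ann\tau_1) \arrow \tau_2 = (y : \ann\tau'_1) \arrow \tau'_2 : u$. Note this is legitimate because in this pure type system the function types are themselves terms of sort $u$, so the equality clause of the Fundamental Theorem applies verbatim.

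Next I would invoke the earlier lemma on validity of function-type injectivity, which simply unfolds the semantic definition of equality at a function type: two function types that are semantically equal at $u$ must weak-head reduce to function types whose domains are semantically equal and whose codomains are semantically equal under arbitrary weakenings instantiated by related arguments. Reading off that definition yields $\Gamma \models \ann\tau_1 = \ann\tau'_1 : u_1$ and $\Gamma, y{:}\ann\tau_1 \models \tau_2 = \tau'_2 : u_2$ for the universe levels $u_1, u_2$ with $(u_1, u_2, u) \in \Ru$ that are fixed by the semantic kinding derivation. I would then apply completeness of validity (Theorem~\ref{lem:SemComplete}) to each of these: from $\Gamma \models \ann\tau_1 = \ann\tau'_1 : u_1$ it delivers $\Gamma \vdash \ann\tau_1 \equiv \ann\tau'_1 : u_1$ together with $\Gamma \vdash \ann\tau_1 : u_1$, and from $\Gamma, y{:}\ann\tau_1 \models \tau_2 = \tau'_2 : u_2$ it delivers $\Gamma, y{:}\ann\tau_1 \vdash \tau_2 \equiv \tau'_2 : u_2$. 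Combining these with $(u_1, u_2, u) \in \Ru$ is exactly the conclusion.

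The one edge case worth checking is when $\ann\tau_1 = \tmctx$ (the domain of discourse is a context schema rather than a proper type): there $\Gamma \models \tmctx = \tmctx : u_1$ holds trivially and the corresponding judgmental fact is just the reflexive equality of $\tmctx$ at $u_1$, so the argument is uniform. Beyond that, the proof is essentially pure bookkeeping — all the real work (soundness of the model, the extensional shape of semantic equality at function types, and the completeness direction that recovers judgmental equality from validity) is already in hand. The only subtlety to be careful about is that the universe levels reported by completeness of validity are precisely the $u_1, u_2$ extracted in the semantic injectivity step; this is fine because semantic kinding pins those levels down, so there is no mismatch to reconcile. I therefore expect no genuine obstacle here, only the need to cite the right prior results in the right order.
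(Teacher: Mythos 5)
Your proposal is correct and takes essentially the same route as the paper: apply the Fundamental Theorem (Lemma~\ref{lm:fundtheo}) to the definitional equality, read off the semantic equalities of the domains and codomains from the semantic definition of equality at function types, and then recover the judgmental equalities. The only cosmetic difference is that the paper instantiates validity with the identity substitution and concludes via Well-Formedness of Semantic Typing (Lemma~\ref{lm:semwf}), whereas you route through the already-proved lemma on validity of function-type injectivity followed by Completeness of Validity (Theorem~\ref{lem:SemComplete}); both intermediate results are in the paper and the two packagings of the argument coincide.
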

\begin{proof}
By the fundamental theorem (Lemma \ref{lm:fundtheo})
$\Gamma \Vdash (y : \ann\tau_1) \arrow \tau_2 \equiv (y : \ann\tau'_1) \arrow \tau'_2 :u$
(choosing the identity substitution for $\theta$ and $\theta'$).
By the sem. equality def., we have $\Gamma \Vdash \ann\tau_1 = \ann\tau'_1 : u_1$
and $\Gamma, y:\ann\tau_1 \Vdash \tau_2 = \tau'_2 : u_2$ and
$(u_1, u_2, u) \in \Ru$.
By well-formedness of semantic typing (Lemma \ref{lm:semwf}), we have
$\Gamma \vdash \ann\tau_1 \equiv \ann\tau'_1 : u_1$ and
and $\Gamma, y:\ann\tau_1 \vdash \tau_2 \equiv \tau'_2 : u_2$
\end{proof}

\begin{theorem}[Type Uniqueness]$\quad$
  \begin{enumerate}
  \item If $\Gamma ; \Psi \vdash M : A$ and $\Gamma ; \Psi \vdash M : B$ then $\Gamma \vdash A \equiv B : \lftype$.
  \item If $\Gamma \vdash t : \ann\tau$ and $\Gamma \vdash t : \ann\tau'$ then $\Gamma \vdash \ann\tau \equiv \ann\tau' : u$.
  \end{enumerate}
\end{theorem}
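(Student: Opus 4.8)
The plan is to prove the two statements simultaneously by mutual induction on the structure of the LF term $M$ (for part~1) and of the computation $t$ (for part~2); the measures interleave in the obvious way, since the $\unbox{t}{\sigma}$ case of part~(1) appeals to part~(2) on the structurally smaller computation $t$, and the $\cbox{C}$ case of part~(2) appeals to part~(1) on the LF term inside the contextual object. Because either of the two given typing derivations may end with an application of the conversion rule, the first move in every case is to apply the relevant inversion lemma — the Inversion lemma stated above for computations, together with its evident LF analogue (same routine induction) — which strips all conversions and exposes, for each of the two derivations, a canonical typing shape whose result type is definitionally equal to the type we started from. It then suffices to prove that the two canonical types are definitionally equal; transitivity and symmetry of $\equiv$ then close the goal.

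\textbf{Trivial and application cases.} For LF and computation variables and constants the canonical type is read off uniquely from $\Psi$ (resp.\ from $\Gamma$ and the signature $\Sigma$), so the two canonical types are literally identical; likewise for sorts (via $\Ax$) and for the formation of $(y{:}\ann\tau_1)\arrow\tau_2$ (via $\Ru$ together with the induction hypotheses on $\ann\tau_1$ and $\tau_2$). For an LF application $M_1~M_2$, inverting both derivations gives $\Gamma;\Psi\vdash M_1 : \Pi x{:}A_1^i.B_1^i$ and $\Gamma;\Psi\vdash M_2 : A_1^i$ with result type $[M_2/x]B_1^i$; the induction hypothesis on $M_1$ yields $\Gamma;\Psi\vdash \Pi x{:}A_1^1.B_1^1 \equiv \Pi x{:}A_1^2.B_1^2 : \lftype$, injectivity of LF $\Pi$-types (Lemma~\ref{lm:lfpi-inj}) gives $\Gamma;\Psi,x{:}A_1^1\vdash B_1^1 \equiv B_1^2 : \lftype$, and applying the substitution $[M_2/x]$ to this definitional equality — an instance of the LF substitution lemma (Lemma~\ref{lm:lfsubst}), since $\JLF$ ranges over $M\equiv N:A$ — gives the equality of result types. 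The computation-level application case $t_1~s$ is completely parallel, using injectivity of computation function types (Lemma~\ref{lm:funinj}) and computation-level substitution of a definitional equality (Lemma~\ref{lm:compsubst}, since $\Jcomp$ ranges over $t\equiv t':\ann\tau$). The recursor case $\titer{\R}{}{\IH}\rappto\Psi~t$ is immediate: the invariant $\IH$ is part of the term, so both derivations use the same $\IH=(\psi{:}\tmctx)\arrow(y{:}\cbox{\psi\vdash\tm})\arrow\tau$ and hence produce the same canonical type $\{\Psi/\psi,~t/y\}\tau$.

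\textbf{The two coupling cases.} For $\cbox{\hatctx\Psi\vdash M}$, inversion gives canonical type $\cbox{\Psi\vdash A_i}$ with $\Gamma;\Psi\vdash M:A_i$; the induction hypothesis on $M$ (part~1) gives $\Gamma;\Psi\vdash A_1\equiv A_2:\lftype$, and congruence (with reflexivity of $\equiv$ on $\Psi$) lifts this to $\Gamma\vdash\cbox{\Psi\vdash A_1}\equiv\cbox{\Psi\vdash A_2}$. For $\unbox{t}{\sigma}$, inversion gives $\Gamma\vdash t:\cbox{\Phi_i\vdash A_i}$ (or the $\vdash_\#$ variant), $\Gamma;\Psi\vdash\sigma:\Phi_i$, and canonical type $\lfs{\sigma}{\Phi_1}A_1$ resp.\ $\lfs{\sigma}{\Phi_2}A_2$; part~(2) on the smaller computation $t$ yields $\Gamma\vdash\cbox{\Phi_1\vdash A_1}\equiv\cbox{\Phi_2\vdash A_2}:u$, and — exactly as in the proof of Lemma~\ref{lm:funinj}, by running the fundamental theorem (Lemma~\ref{lm:fundtheo}) and reading off the semantic equality definition for boxed types, then inverting the (single) congruence rule for contextual-type equality — we get $\Gamma\vdash\Phi_1\equiv\Phi_2:\ctx$ and $\Gamma;\Phi_1\vdash A_1\equiv A_2:\lftype$. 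Functionality of LF typing (Lemma~\ref{lm:func-lftyping}), applied with $\sigma\equiv\sigma$, then transports this to $\Gamma;\Psi\vdash\lfs{\sigma}{\Phi_1}A_1\equiv\lfs{\sigma}{\Phi_2}A_2:\lftype$.

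\textbf{The main obstacle.} The delicate point is the abstraction cases, $\lambda x.M$ for part~(1) and $\tmfn y t$ for part~(2), where the binder carries no domain annotation: inversion only produces $\Gamma\vdash\tau\equiv(y{:}\ann\tau_1)\arrow\tau_2$ with $\ann\tau_1$ existentially quantified, so a priori the two derivations could choose different domains. The way I would handle this is to observe that in \cocon\ the domain of a binder occurring in a \emph{well-typed} term is in fact recoverable — for LF binders because the fixed signature $\Sigma=\{\tm,\clam,\capp\}$ forces every $\lambda$ that can appear in a well-typed LF object to abstract a variable of type $\tm$ (the only base type, and the only type at which $\clam$ applies), so the two canonical $\Pi$-types necessarily share the domain $\tm$ and the induction hypothesis on the body finishes the case; for computation binders one similarly analyses, via inversion, the way the abstraction is consumed. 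Pinning down this domain-determinacy argument precisely (and delimiting the terms it covers) is the only genuinely non-routine part of the proof; everything else is the inversion-plus-injectivity bookkeeping above.
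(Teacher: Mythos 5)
Your overall route matches the paper's: the paper's entire proof is the one-line appeal ``by mutual induction on the typing derivation exploiting typing inversion lemmas,'' preceded by exactly the lemmas you invoke (the inversion lemmas, injectivity of LF $\Pi$-types, Lemma~\ref{lm:lfpi-inj}, and injectivity of computation function types, Lemma~\ref{lm:funinj}, the latter via the fundamental theorem). Your treatment of the variable, constant, sort, application, $\cbox{C}$ and $\unbox{t}{\sigma}$ cases — including extracting $\Gamma\vdash\Phi_1\equiv\Phi_2:\ctx$ and $\Gamma;\Phi_1\vdash A_1\equiv A_2:\lftype$ from $\Gamma\vdash\cbox{\Phi_1\vdash A_1}\equiv\cbox{\Phi_2\vdash A_2}:u$ by reading off the semantic equality for boxed types, and then transporting along $\sigma$ with the substitution lemma and functionality of LF typing — is consistent with the paper's toolkit.

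The genuine gap is the one you flag yourself, and your proposed repair does not work. The LF rule for $\lambda x.M$ has the single premise $\Gamma;\Psi,x{:}A\vdash M:B$ and constrains $A$ only to be a well-formed type, so it is not true that the fixed signature forces every $\lambda$-bound variable to have type $\tm$: the term $\lambda x.x$ is well typed both at $\Pityp x \tm \tm$ and at $\Pityp x {(\Pityp y \tm \tm)}{(\Pityp y \tm \tm)}$, and by equality inversion (Lemma~\ref{lm:lfeqinv}) these two $\Pi$-types could only be definitionally equal if $\tm\equiv\Pityp y \tm \tm:\lftype$ were derivable, which it is not. The same failure occurs at the computation level: $\tmfn y y$ inhabits $(y{:}\ann\tau_1)\arrow\ann\tau_1$ for every suitable $\ann\tau_1$ (e.g.\ $\univ 0$ versus $\cbox{\cdot\vdash\tm}$, discriminated by Lemma~\ref{lm:typdisc}), and ``analysing how the abstraction is consumed'' cannot help because a bare abstraction need not be consumed at all. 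So the abstraction cases cannot be closed by a domain-determinacy argument; as stated, the theorem needs either domain annotations on binders or a restriction to terms whose types are synthesizable, and the paper's one-sentence proof gives no indication of how it intends to discharge these cases either. Until that is resolved, your proof (like the paper's) is incomplete precisely at the $\lambda x.M$ and $\tmfn y t$ cases.
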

\begin{proof}
By mutual induction on the typing derivation exploiting typing inversion lemmas.
\end{proof}

Last but not least, the fundamental lemma allows us to show that not every type is inhabited and thus \cocon can be used as a logic. To establish this stronger notion of consistency, we first prove that we can discriminate type constructors.

\begin{lemma}[Type Constructor Discrimination]\label{lm:typdisc}
Neutral types, sorts, and function types are can be discriminated.
\LONGVERSION{\begin{enumerate}
\item If $\Gamma \vdash u_1 \equiv u_2 : u_3$  then $u_1 = u_2$ (they are the same).
\item $\Gamma \vdash x~\vec t \neq u : u'$.
\item $\Gamma \vdash x~\vec t  \neq (y : \ann\tau_1) \arrow \tau_2 $.
\item $\Gamma \vdash x~\vec t \neq \cbox T$.
\item $\Gamma \vdash u \neq (y : \ann\tau_1) \arrow \tau_2$.
\end{enumerate} }
\end{lemma}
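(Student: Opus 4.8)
The plan is to prove Lemma~\ref{lm:typdisc} by first passing to the semantic model via the Fundamental Theorem (Lemma~\ref{lm:fundtheo}) and then reading off the required disequalities from the shape-analysis that is baked into the semantic equality definition for types (Fig.~\ref{fig:sem}). The key observation is that each of the five type forms in question --- sorts $u$, neutral types $x~\vec t$, function types $(y{:}\ann\tau_1)\arrow\tau_2$, and boxed types $\cbox T$ --- reduces to a distinct kind of weak head normal form, and semantic equality of types requires the two sides to have \emph{matching} whnf shapes.

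For each part I would proceed as follows. For (1), suppose $\Gamma\vdash u_1 \equiv u_2 : u_3$. By the Fundamental Theorem (part 7, instantiated with the identity substitution for $\theta,\theta'$ via Lemma~\ref{lem:ctxsat}) we get $\Gamma \Vdash u_1 = u_2 : u_3$. Since $u_1$ and $u_2$ are already in whnf, unfolding $\Gamma \Vdash \ann\tau = \ann\tau' : u$ in the sort case (the line ``$\Gamma \Vdash u' = \tau' : u \defiff \Gamma\vdash\tau'\whnf u' : u$'') forces $u_2$'s whnf to be literally $u_1$; by determinacy of weak head reduction (Lemma~\ref{lem:detwhnf}) and the fact that $u_2 \whnf u_2$, we conclude $u_1 = u_2$ syntactically. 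For (2)--(5) I argue by contradiction: suppose $\Gamma \vdash x~\vec t \equiv u : u'$ (resp.\ the other three); apply the Fundamental Theorem to obtain $\Gamma \Vdash x~\vec t = u : u'$; then unfold the semantic type equality. In the neutral-type clause, $\Gamma \Vdash x~\vec t = \tau' : u'$ requires $\Gamma \vdash \tau' \whnf x~\vec s : u'$ for some $\vec s$; but $u$ (resp.\ $(y{:}\ann\tau_1)\arrow\tau_2$, resp.\ $\cbox T$) is already a whnf of a different shape, so by determinacy its whnf cannot be of the form $x~\vec s$ --- contradiction. The same reductio works for (5) using the function-type clause, which demands $\Gamma\vdash\tau'\whnf (y'{:}\ann\tau_1')\arrow\tau_2' : u$, incompatible with the whnf $u$. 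Whenever the statement is phrased as ``$\Gamma \vdash \sigma_1 \neq \sigma_2$'' without a classifying universe, I read it as: there is no $u'$ with $\Gamma \vdash \sigma_1 \equiv \sigma_2 : u'$, and the argument above covers this since the Fundamental Theorem applies for any such $u'$.

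There is no serious obstacle here; this is essentially a ``discrimination by whnf shape'' lemma of the kind standard in logical-relations developments. The one point requiring a little care is that semantic type equality is defined by recursion on the semantic \emph{kinding} of the left-hand type ($\Gamma \Vdash \tau : u$), so to unfold it I must first note that both sides are semantically well-kinded --- but that is immediate from well-formedness of semantic typing (Lemma~\ref{lm:semwf}) applied to $\Gamma \Vdash \sigma_1 = \sigma_2 : u'$, which yields $\Gamma \vdash \sigma_1 : u'$ and $\Gamma \vdash \sigma_2 : u'$, and then the Fundamental Theorem (part 6) gives semantic kinding. A secondary subtlety is that the excerpt cuts off before the $\neq$ notation is formally introduced; I would state explicitly at the start of the proof that $\Gamma \vdash \sigma_1 \neq \sigma_2 : u'$ abbreviates ``$\neg(\Gamma \vdash \sigma_1 \equiv \sigma_2 : u')$'' so the reductios are unambiguous. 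The main engine throughout is simply the combination of the Fundamental Theorem, determinacy of weak head reduction, and inspection of Fig.~\ref{fig:sem}.
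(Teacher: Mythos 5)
Your proposal is correct and follows essentially the same route as the paper: apply the Fundamental Theorem (with the identity substitution) to convert the assumed definitional equality into a semantic equality, and then observe that the shape-matching built into the semantic type equality of Fig.~\ref{fig:sem} rules out mismatched head constructors. The extra care you take with determinacy of weak head reduction and with reading $\neq$ as negation of definitional equality is a reasonable elaboration of the same argument.
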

\begin{proof}
Proof by contradiction. To show for example that $\Gamma \vdash x~\vec t  \neq (y : \ann\tau_1) \arrow \tau_2 $, we
assume $\Gamma \vdash x~\vec t \equiv (y : \ann\tau_1) \arrow \tau_2 : u$. By the fundamental lemma (Lemma \ref{lm:fundtheo}), we have
$\Gamma \Vdash  x~\vec t \equiv (y : \ann\tau_1) \arrow \tau_2 : u$ (choosing the identity substitution for $\theta$ and $\theta'$); but this is impossible given the semantic equality definition (Fig.~\ref{fig:sem}).
\end{proof}

\begin{theorem}[Consistency]\label{lm:consistency}
$x:u_0 \not\vdash t : x$.
\end{theorem}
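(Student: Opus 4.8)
The plan is to argue by contradiction. Suppose $x{:}u_0 \vdash t : x$. By the Fundamental Theorem (Theorem~\ref{lm:fundtheo}) we get $x{:}u_0 \models t : x$, and by Context Satisfiability (Lemma~\ref{lem:ctxsat}) we have $x{:}u_0 \Vdash \id(x{:}u_0) = \id(x{:}u_0) : (x{:}u_0)$. Unfolding the validity of $t$ against this identity substitution yields $x{:}u_0 \Vdash t = t : x$, together with the semantic kinding $x{:}u_0 \Vdash x : u$ for some sort $u$ (necessarily obtained by the neutral-type rule of Fig.~\ref{fig:semkind}, since $x$ is neutral with $x{:}u_0 \vdash x : u_0$). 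By the semantic equality definition at a neutral type (last clause of Fig.~\ref{fig:sem}) this gives a term $n$ with $x{:}u_0 \vdash t \whnf n : x$ and $\neut n$; in particular, by Def.~\ref{def:typedwhnf}, $x{:}u_0 \vdash n : x$. So it suffices to show that no neutral computation term has type $x$ in the context $x{:}u_0$.

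The key step is an auxiliary claim: \emph{if $n$ is a neutral computation term and $x{:}u_0 \vdash n : \tau$, then $x{:}u_0 \vdash \tau \equiv u_0 : u'$ for some sort $u'$.} I would prove this by induction on the structure of the neutral term $n$ (Fig.~\ref{fig:whnf}). If $n$ is a variable, the only declaration in the context forces $n = x$, and inversion on variable typing gives $\tau \equiv u_0$. If $n = n'\;s$ with $n'$ neutral, inversion gives $x{:}u_0 \vdash n' : (y{:}\ann\tau_1) \arrow \tau_2$, so by the induction hypothesis $x{:}u_0 \vdash (y{:}\ann\tau_1)\arrow\tau_2 \equiv u_0 : u'$, contradicting type constructor discrimination (Lemma~\ref{lm:typdisc}); hence this case is vacuous. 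If $n$ is a neutral recursor $\titer{\R}{\tm}{\IH}~\rappto \Psi~r$, then by Fig.~\ref{fig:whnf} either $r$ is a neutral computation term $n'$ — and inversion on the recursor gives $x{:}u_0 \vdash n' : \cbox{\Psi \vdash \tm}$ — or $r = \cbox{\hatctx\Psi \vdash \unbox{t'}{\sigma}}$ with $t'$ neutral, and inversion on the recursor, on the box, and on the LF closure $\unbox{t'}{\sigma}$ gives $x{:}u_0 \vdash t' : \cbox{\Phi \vdash A}$ (or $\cbox{\Phi \vdash_\# A}$) for some $\Phi,A$. In both subcases a strict subterm of $n$ is a neutral computation term whose type is a boxed contextual type, so the induction hypothesis would make that boxed type convertible to the sort $u_0$ — again impossible by discriminating sorts from boxed contextual types (a further instance of the technique of Lemma~\ref{lm:typdisc}). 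Thus the only well-typed neutral computation term in $x{:}u_0$ is $x$ itself, of type convertible to $u_0$.

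Applying the claim to our $n$, whose type is $x$, we get $x{:}u_0 \vdash x \equiv u_0 : u'$; but this is an instance of type constructor discrimination ($x$, a neutral type, cannot be equal to a sort), a contradiction. Hence no such $t$ exists. The main obstacle is the auxiliary claim about neutral terms: it relies on having clean inversion lemmas for the computation typing judgment — in particular for applications, recursors, and LF closures $\unbox{\cdot}{\cdot}$ occurring inside boxed contextual objects — and on extending the type constructor discrimination argument so that a sort is provably distinct from every boxed contextual type; once these ingredients are in place the structural induction is routine.
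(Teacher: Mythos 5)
Your proof is correct and follows essentially the same route as the paper: both reduce $t$ to a weak head normal form, argue that this whnf must be neutral (you via the semantic-equality clause at a neutral type, the paper via subject reduction plus type-constructor discrimination applied to the shape of the whnf), observe that the only well-typed neutral computation in the context $x{:}u_0$ is $x$ itself, and derive the impossible equation $x \equiv u_0$. Your explicit structural induction on neutral terms (ruling out applications and recursors by discriminating $u_0$ from function and boxed contextual types) is in fact more careful than the paper's proof, which simply asserts that the neutral whnf ``must be $x$'' without spelling out that step.
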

\begin{proof}
Assume $x:u_0 \vdash t : x$. By subject reduction (Lemma \ref{lm:norm}), there is some $w$ s.t. $t \whnf w$ and $\Gamma \vdash t \equiv w : x$ and in particular, we must have $\Gamma \vdash w : x$. As $x$ is neutral, it cannot be equal $u$, $(y : \ann\tau_1) \arrow \tau_2$, or $\cbox T$ (Lemma \ref{lm:typdisc}). Thus $w$ can also not be a sort, function, or contextual object. Hence, $w$ can only be neutral, i.e. given the assumption $x:u_0$, the term $w$ must be $x$. This implies that $\Gamma \vdash x : x$ and implies $\Gamma \vdash x \equiv u_0 : u$ by inversion lemma for typing. But this is impossible by Lemma \ref{lm:typdisc}.
\end{proof}

% \begin{theorem}[Consistency]

% \end{theorem}

% \section{Consequences}

%%% Local Variables:
%%% mode: latex
%%% TeX-master: "main"
%%% End:

\section{Related Work}
% \subsection{HOAS within dependent type theory}
% \subsection{Metaprogramming}
\paragraph{HOAS within dependent type theory} We propose a new
type theoretic foundation where LF is integrated within a Martin
L{\"o}f type theory. This is in some sense a radical step. A more
lightweight approach is to integrate at least some of the benefits of
LF within an existing type theory. This is for example accomplished by
weak HOAS approaches \cite{Despeyroux:TLCA95,Chlipala:ICFP08} where we
get $\alpha$-renaming for free but still have to deal with
capture-avoiding substitutions. The Hybrid library \cite{Felty12} in
Coq goes further supporting both $\alpha$-renaming and substitution
by encoding a specification logic within Coq. 
However it is unclear whether these approaches scale to dependently
typed encodings and can be integrated smoothly into practice. 

\paragraph{Metaprogramming}Dependent type theory is  flexible enough to serve as its own
metaprogramming language.
% that is, a language in which one can write
%programs that assist in the construction and manipulation of terms in
%dependent type theory itself. 
For this reason many dependently typed
systems \cite{Walt:IFL12,Ebner:ICFP17,Christiansen:PHD,Christiansen:IFL14} try to support
meta-programming in practice using quote operator to turn an
expression into its syntactical representation and unquote operator to
escape the quotation and refer to another computation whose value will
be plugged in at its place. However, a clean theoretical foundation is
missing.

\citet{Davies:ACM01} observed the similarity between modal types in S4
and quasi-quotation to support simply typed metaprogramming. However
their work concentrated on closed simply-typed code. In
\cocon, we can describe open pieces of code, i.e. code that
depends on a context of assumptions, and work within a Martin L{\"o}f
type theory. Hence, \cocon has the potential to provide a basis for dependently
typed metaprogramming. 

% This is often mired with difficulties in
% practice.  Agda's and Idris quote
% mechanisms return the de Bruijn representation of a piece of code
% which means users need to work with the low-level internal
% representation.
% However, the theoretical foundations for
% quotation in dependently typed system are not well understood.
% An
% essential problem, that already exists in the simply typed setting, is
% the ``variable problem'', which is well known from programming with
% hygienic macros: The variable \lstinline!x! is not free in the
% expression  \lstinline!quote (x + 3)!  (or in any quotation). However,
% \lstinline!x! is free in   \lstinline!eval (quote(x + 3)!,  because
% \lstinline!eval (quote (x + 3)) = x + 3!.

% This similarity between quasi-quoting and eval
% mechanisms in LISP and its relationship to modal types in S4
% was already observed by \citet{Davies:ACM01}.
% However Davies and Pfenning's work concentrated on closed simply-typed code. In
% \cocon, we can describe open pieces of code, i.e. code that
% depends on a context of assumptions. Hence the variable goes away, as
% we can only unbox the term $\cbox{x \vdash x + 3}$ in an environment
% where we know what $x$ is supposed to stand for. As a consequence, we
% do not need any special treatment for substitution as in
% \cite{Farmer:IC18}.  \cocon's support for box and unbox in \cocon is
% reminiscent to quote and unquote operators in systems such as Agda,
% Idris or Lean and   may be viewed a first step toward
% providing a type-theoretic foundation for quotation in the dependently
% typed setting.

\section{Conclusion}
\cocon is a first step towards integrating LF methodology into
Martin-L{\"o}f style dependent type theories and and bridges the
longstanding gap between these two worlds. We have established
type uniqueness, normalization, and consistency. The next immediate step is
to derive an equivalence algorithm based on weak head reduction and
show its completeness. We expect that this will follow a similar
Kripke-style logical relation as the one we described. This would
allow us to justify that type checking \cocon programs is decidable.
% that is
% key for a practical implementation.

It should be possible to implement \cocon as an extension to \beluga
-- from a syntactic point of view, it would be a small change. It also
 seems possible to extend existing implementation of Agda, however this
 might be more work, as in this case one needs to implement the LF
 infrastructure.

% \cocon is a core type theory and in particularly lacks inductive
% families. Although adding tem to a dependent type theory is never
% trivial, we believe extending \cocon with them should not pose any
% fundamentally new problems.

%%% Local Variables:
%%% mode: latex
%%% TeX-master: "main"
%%% End:

%% Acknowledgments
\begin{acks}                            %% acks environment is optional
                                        %% contents suppressed with 'anonymous'
  %% Commands \grantsponsor{<sponsorID>}{<name>}{<url>} and
  %% \grantnum[<url>]{<sponsorID>}{<number>} should be used to
  %% acknowledge financial support and will be used by metadata
  %% extraction tools.
  This material is based upon work supported by the Humboldt
  Foundation
  % \grantsponsor{GS100000001}{Humboldt Foundation}{http://dx.doi.org/10.13039/100000001} under Grant
  % No.~\grantnum{GS100000001}{nnnnnnn}
   and NSERC (Natural Science and Engineering Research Council, Canada)  No.~\grantnum{GS100000001}{mmmmmmm}.  Any opinions, findings, and
  conclusions or recommendations expressed in this material are those
  of the author and do not necessarily reflect the views of the
  funding agencies.
\end{acks}

%% Bibliography
% \bibliography{../bibliography/bibi,../bibliography/bp,additional}
%\bibliography{bibi-extract}

\begin{thebibliography}{28}
\providecommand{\natexlab}[1]{#1}
\providecommand{\url}[1]{\texttt{#1}}
\expandafter\ifx\csname urlstyle\endcsname\relax
  \providecommand{\doi}[1]{doi: #1}\else
  \providecommand{\doi}{doi: \begingroup \urlstyle{rm}\Url}\fi

\bibitem[Abel and Scherer(2012)]{Abel:LMCS12}
A.~Abel and G.~Scherer.
\newblock On irrelevance and algorithmic equality in predicative type theory.
\newblock \emph{Logical Methods in Computer Science}, 8\penalty0 (1), 2012.
\newblock \doi{10.2168/LMCS-8(1:29)2012}.
\newblock URL \url{https://doi.org/10.2168/LMCS-8(1:29)2012}.

\bibitem[Abel et~al.(2018)Abel, {\"{O}}hman, and Vezzosi]{Abel:POPL18}
A.~Abel, J.~{\"{O}}hman, and A.~Vezzosi.
\newblock Decidability of conversion for type theory in type theory.
\newblock \emph{35th ACM SIGPLAN-SIGACT Symposium on Principles of Programming
  Languages (POPL'18)}, {PACMPL} 2\penalty0 ({POPL}):\penalty0 23:1--23:29,
  2018.

\bibitem[Bertot and Cast\'{e}ran(2004)]{bertot/casteran:2004}
Y.~Bertot and P.~Cast\'{e}ran.
\newblock \emph{Interactive Theorem Proving and Program Development. Coq'Art:
  The Calculus of Inductive Constructions}.
\newblock Springer, 2004.

\bibitem[Cave and Pientka(2012)]{Cave:POPL12}
A.~Cave and B.~Pientka.
\newblock Programming with binders and indexed data-types.
\newblock In \emph{{39th {ACM} SIGPLAN-SIGACT Symposium on Principles of
  Programming Languages (POPL'12)}}, pages 413--424. ACM Press, 2012.

\bibitem[Chlipala(2008)]{Chlipala:ICFP08}
A.~J. Chlipala.
\newblock Parametric higher-order abstract syntax for mechanized semantics.
\newblock In J.~Hook and P.~Thiemann, editors, \emph{13th ACM SIGPLAN
  {I}nternational {C}onference on {F}unctional {P}rogramming (ICFP'08)}, pages
  143--156. ACM, 2008.

\bibitem[Christiansen(2014)]{Christiansen:IFL14}
D.~R. Christiansen.
\newblock Type-directed elaboration of quasiquotations: A high-level syntax for
  low-level reflection.
\newblock In \emph{Proceedings of the 26nd International Symposium on
  Implementation and Application of Functional Languages (IFL'14)}, pages
  1:1--1:9. ACM, 2014.
\newblock ISBN 978-1-4503-3284-2.
\newblock \doi{10.1145/2746325.2746326}.
\newblock URL \url{http://doi.acm.org/10.1145/2746325.2746326}.

\bibitem[Christiansen(2015)]{Christiansen:PHD}
D.~R. Christiansen.
\newblock \emph{Practical Reflection and Metaprogramming for Dependent Types,}.
\newblock PhD thesis, IT University Copenhagen, Denmark, 2015.

\bibitem[Davies and Pfenning(2001)]{Davies:ACM01}
R.~Davies and F.~Pfenning.
\newblock A modal analysis of staged computation.
\newblock \emph{Journal of the ACM}, 48\penalty0 (3):\penalty0 555--604, 2001.
\newblock \doi{10.1145/382780.382785}.

\bibitem[Despeyroux and Leleu(1999)]{Despeyroux99}
J.~Despeyroux and P.~Leleu.
\newblock Primitive recursion for higher order abstract syntax with dependent
  types.
\newblock In \emph{International Workshop on Intuitionistic Modal Logics and
  Applications (IMLA)}, 1999.

\bibitem[Despeyroux et~al.(1995)Despeyroux, Felty, and
  Hirschowitz]{Despeyroux:TLCA95}
J.~Despeyroux, A.~P. Felty, and A.~Hirschowitz.
\newblock Higher-order abstract syntax in coq.
\newblock In M.~Dezani{-}Ciancaglini and G.~D. Plotkin, editors, \emph{2nd
  International Conference on Typed Lambda Calculi and Applications ({TLCA}
  '95)}, Lecture Notes in Computer Science (LNCS 902), pages 124--138.
  Springer, 1995.
\newblock \doi{10.1007/BFb0014049}.
\newblock URL \url{https://doi.org/10.1007/BFb0014049}.

\bibitem[Despeyroux et~al.(1997)Despeyroux, Pfenning, and
  Sch{\"u}rmann]{Despeyroux97}
J.~Despeyroux, F.~Pfenning, and C.~Sch{\"u}rmann.
\newblock Primitive recursion for higher-order abstract syntax.
\newblock In \emph{Proceedings of the Third International Conference on Typed
  Lambda Calculus and Applications (TLCA'97)}, pages 147--163. Springer, 1997.
\newblock Extended version available as Technical Report CMU-CS-96-172,
  Carnegie Mellon University.

\bibitem[Ebner et~al.(2017)Ebner, Ullrich, Roesch, Avigad, and
  de~Moura]{Ebner:ICFP17}
G.~Ebner, S.~Ullrich, J.~Roesch, J.~Avigad, and L.~de~Moura.
\newblock A metaprogramming framework for formal verification.
\newblock \emph{Proc. ACM Program. Lang.}, 1\penalty0 (ICFP):\penalty0
  34:1--34:29, Aug. 2017.
\newblock ISSN 2475-1421.
\newblock \doi{10.1145/3110278}.
\newblock URL \url{http://doi.acm.org/10.1145/3110278}.

\bibitem[Felty and Momigliano(2012)]{Felty12}
A.~Felty and A.~Momigliano.
\newblock Hybrid: A definitional two-level approach to reasoning with
  higher-order abstract syntax.
\newblock \emph{Journal of Automated Reasoning}, 48\penalty0 (1):\penalty0
  43--105, 2012.

\bibitem[Harper and Pfenning(2005)]{Harper03tocl}
R.~Harper and F.~Pfenning.
\newblock On equivalence and canonical forms in the {LF} type theory.
\newblock \emph{ACM Transactions on Computational Logic}, 6\penalty0
  (1):\penalty0 61--101, 2005.

\bibitem[Harper et~al.(1993)Harper, Honsell, and Plotkin]{Harper93jacm}
R.~Harper, F.~Honsell, and G.~Plotkin.
\newblock A framework for defining logics.
\newblock \emph{{Journal of the ACM}}, 40\penalty0 (1):\penalty0 143--184,
  January 1993.

\bibitem[Jacob-Rao et~al.(2018)Jacob-Rao, Pientka, and
  Thibodeau]{JacobRao:stratified2018}
R.~Jacob-Rao, B.~Pientka, and D.~Thibodeau.
\newblock Index-stratified types.
\newblock In H.~Kirchner, editor, \emph{3rdd International Conference on Formal
  Structures for Computation and Deduction ({FSCD}'18)}, LIPIcs. Schloss
  Dagstuhl - Leibniz-Zentrum f{\"u}r Informatik, January 2018.

\bibitem[Martin-L{\"o}f(1973)]{Martin-Loef73a}
P.~Martin-L{\"o}f.
\newblock An intuitionistic theory of types: predicative part.
\newblock In H.~E. Rose and J.~C. Shepherdson, editors, \emph{Logic Colloquium
  '73}, Amsterdam, 1973. North-Holland.

\bibitem[Nanevski et~al.(2008)Nanevski, Pfenning, and Pientka]{Nanevski:ICML05}
A.~Nanevski, F.~Pfenning, and B.~Pientka.
\newblock Contextual modal type theory.
\newblock \emph{{ACM Transactions on Computational Logic}}, 9\penalty0
  (3):\penalty0 1--49, 2008.

\bibitem[Norell(2007)]{Norell:phd07}
U.~Norell.
\newblock \emph{Towards a practical programming language based on dependent
  type theory}.
\newblock PhD thesis, Department of Computer Science and Engineering, Chalmers
  University of Technology, Sept. 2007.
\newblock {T}echnical {R}eport 33D.

\bibitem[Pientka(2008)]{Pientka:POPL08}
B.~Pientka.
\newblock A type-theoretic foundation for programming with higher-order
  abstract syntax and first-class substitutions.
\newblock In \emph{{35th {ACM} SIGPLAN-SIGACT Symposium on Principles of
  Programming Languages (POPL'08)}}, pages 371--382. ACM Press, 2008.

\bibitem[Pientka and Abel(2015)]{Pientka:TLCA15}
B.~Pientka and A.~Abel.
\newblock Structural recursion over contextual objects.
\newblock In T.~Altenkirch, editor, \emph{13th International Conference on
  Typed Lambda Calculi and Applications (TLCA'15)}, pages 273--287. Leibniz
  International Proceedings in Informatics (LIPIcs) of Schloss Dagstuhl, 2015.

\bibitem[Pientka and Cave(2015)]{Pientka:CADE15}
B.~Pientka and A.~Cave.
\newblock Inductive {B}eluga:{P}rogramming {P}roofs ({S}ystem {De}scription).
\newblock In A.~P. Felty and A.~Middeldorp, editors, \emph{25th International
  Conference on Automated Deduction (CADE-25)}, Lecture Notes in Computer
  Science (LNCS 9195), pages 272--281. Springer, 2015.

\bibitem[Pientka and Dunfield(2010)]{Pientka:IJCAR10}
B.~Pientka and J.~Dunfield.
\newblock Beluga: a framework for programming and reasoning with deductive
  systems ({System Description}).
\newblock In J.~Giesl and R.~Haehnle, editors, \emph{{5th International Joint
  Conference on Automated Reasoning (IJCAR'10)}}, Lecture Notes in Artificial
  Intelligence (LNAI 6173), pages 15--21. Springer, 2010.

\bibitem[Scott(1976)]{scott:dataTypesAsLattices}
D.~S. Scott.
\newblock Data types as lattices.
\newblock \emph{SIAM Journal on Computing}, 5\penalty0 (3):\penalty0 522--587,
  1976.
\newblock \doi{10.1137/0205037}.
\newblock URL \url{https://doi.org/10.1137/0205037}.

\bibitem[van Der~Walt and Swierstra(2012)]{Walt:IFL12}
P.~van Der~Walt and W.~Swierstra.
\newblock Engineering proof by reflection in agda.
\newblock In R.~Hinze, editor, \emph{Implementation and Application of
  Functional Languages}, Lecture Notes in Computer Science (LNCS 8241), pages
  157--173. Springer, 2012.

\bibitem[Werner(1992)]{Werner:1992}
B.~Werner.
\newblock A normalization proof for an impredicative type system with large
  elimination over integers.
\newblock In \emph{International Workshop on Types for Proofs and Programs
  (TYPES)}, pages 341--357, 1992.

\bibitem[Xi and Pfenning(1999)]{Xi99popl}
H.~Xi and F.~Pfenning.
\newblock Dependent types in practical programming.
\newblock In \emph{26th ACM SIGPLAN-SIGACT Symposium on Principles of
  Programming Languages (POPL'99)}, pages 214--227. ACM Press, 1999.

\bibitem[Zenger(1997)]{Zenger:TCS97}
C.~Zenger.
\newblock Indexed types.
\newblock \emph{Theoretical Computer Science}, 187\penalty0 (1-2):\penalty0
  147--165, 1997.

\end{thebibliography}

 % use the generated bbl file instead of the bib

%% Appendix
\appendix
\section{Appendix}
% \subsection{Definitional Equality}
% \subsection{Well-formed Computation-Level Types and Contexts}

\section*{Examples}

\paragraph{Cost-Semantics}
As a first example, consider the definition of a cost semantics for our small term language described earlier. As we aim to reason by structural induction on the evaluation judgement, we define the evaluation judgment $\evaln m v k$ which says that the term $m$ evaluates in at most steps $k$ to the value $v$ as an inductive type:

\begin{lstlisting}
inductive Eval : (m:[ |- tm])(n:[ |- tm])(k: nat) type =
| E_Lam : (m : [x:tm |- tm])(k:nat)
          Eval [ |- lam \x.u<m>u] [ |- lam \x.u<m>u] k
| E_App : (m : [ |- tm]) (n: [ |- tm])(k:nat)(l:nat)(j:nat)
          (m': [x:tm |- tm])(v:[ |- tm])(w:[ |- tm])
          Eval m [ |- lam \x.u<m'>u] k -> Eval n v l -> Eval [ |- u< m'>u with $\unbox{\texttt{v}}{}$/x ] w j
       -> Eval [ |- app u<m>u u<n>u] w (k + l + j + 1)
| E_Let : (m : [ |- tm])(n: [x:tm |- tm])(k:nat)(l:nat)(v : [ |- tm]) (w : [ |- tm])
          Eval m v k
       -> Eval [ |- u<n>u with u<v>u/x ] w l
       -> Eval [ |- letv u<m>u \x.u<n>u ] w (k + l + 1)
\end{lstlisting}

We define here an inductive type that relates closed \lstinline!tm!-objects \lstinline!m! and \lstinline!n! with the cost \lstinline!k!. We rely on the inductive type \lstinline!nat! in addition to functions such as addition on natural numbers. In a dependent type theory such as Coq or Agda, we would not be able to define an inductive type over (open) LF objects and exploit HOAS. In this example, we clearly state that we evaluate closed terms. To evaluate \lstinline![ |- app u<m>u u<n>u]! where \lstinline!m! and \lstinline!n! denote a closed \lstinline!tm!-objects, we evaluate \lstinline!m! and \lstinline!n! respectively. Note that when we refer to variables inside a box (or quoted) expression, we need to first unbox (or unquote) them. We write the unboxing here as \lstinline!u<m>u!.

In general, we are unboxing open terms, i.e. terms that may contain free variables. Hence, we are unboxing a term together with a substitution. For example, to evaluate the body \lstinline!m'! where we replace the LF variable \lstinline!x! with the closed term \lstinline!v!, we unbox \lstinline!m'! with the substitution \lstinline!u<v>u/x!. This is written as \lstinline! [ |- u<m'>u with u<v>u/x]!. In general, we may omit mentioning the substitution that is associated with every unbox-operation, if the substitution is simply the identity.

In \beluga, an inductive definition about open LF objects is possible, but we would not be able to compute \lstinline!k + l + 1!, as \beluga is an indexed language -- not a full dependently typed language. Therefore, we cannot refer to addition function in the index.

\paragraph{Compilation} As a simple example of compilation, we consider here a function \lstinline!trans! which eliminates let-expressions. As we also must traverse the body of lambda-abstractions and let-expression, this function takes a term in the context $\gamma$ as input and returns a term in the same context as output. As for example in \beluga, contexts are first-class in our language and we specify their shape using a context schema \lstinline!tm_ctx! which states that the context only contains \lstinline!tm! declarations.

\begin{lstlisting}
rec trans: (\gamma : [tm_ctx]) [\gamma |- tm] -> [\gamma |- tm] =
fun  \gamma   (p : [\gamma |-$_\#$ tm])        = p
   | \gamma  [\gamma |- app u<m>u u<n>u]       = [\gamma |- app u<trans \gamma m>u u<trans \gamma n>u]
   | \gamma  [\gamma |- lam \x.u<m>u]        = [\gamma |- lam \x. u<trans (\gamma,x:tm) m>u ]
   | \gamma  [\gamma |- letv u<m>u \y.u<n>u]   = [\gamma |- app (lam \x. u<trans (\gamma,x:tm) n>u u<trans \gamma m>u ]

\end{lstlisting}

We write the translation by pattern matching on the HOAS tree of type \lstinline![\gamma |- tm]!. Four different cases arise. First, we might encounter a variable from $\gamma$. As in \beluga, we use a pattern variable \lstinline!p! of type \lstinline![\gamma $\vdash_\#$ tm]! which can only be instantiated with variables from $\gamma$. Second, we translate \lstinline![\gamma |- app u<m>u u<n>u]! by simply recursively translating \lstinline!m! and \lstinline!n! and rebuilding our term. Third, we translate \lstinline![\gamma |- lam \x.u<m>u]! by translating \lstinline!m!. Note that \lstinline!m! has type \lstinline![\gamma, x:tm |- tm]! and hence \lstinline!trans m! returns a term in the context \lstinline!\gamma, x:tm!. Last, we translate \lstinline![\gamma |- letv u<m>u \y.u<n>u]! by translating each part and replacing the let-expression with the application of a lambda-abstraction.

The function \lstinline!trans! is close to what we can already write in \beluga with one major differences: we are able to inline recursive calls such as \lstinline!trans n! within a HOAS tree by supporting the boxing (quote) and the unboxing (unquote) of contextual objects; this is in contrast to \beluga where we are forced to write programs in a let-box style. Further, we only distinguish between LF variables that are bound by a $\lambda$-abstraction or in a LF context and computation-level variables. If computation-level variables have a contextual type, then we can use them to construct an LF object (or LF context) by unboxing them. In \beluga, we essentially distinguish between three different kinds of variables: LF variables, computation-level variables, and meta-variables (or contextual variables) that are of contextual type. Our treatment here unifies the latter two classes into one.

We now prove that the operational meaning of a term is preserved and it still evaluates in at most \lstinline!k! steps. In other words, our optimization did not add any additional costs. This is done by recursively analyzing and pattern matching on the derivation
\lstinline!Eval m v k!.  We write the type of each of the recursive calls as comments to illuminate what is happening in the background.

 % Unlike our previous function, we omit writing some arguments that are implicit to keep the example more readable.

\begin{lstlisting}
rec ctrns : [ |- tm] -> [ |- tm] = fun m => trans [ ] m ;

rec val_preserve : (m:[ |- tm])(v:[ |- tm])(k : nat) Eval m v k -> Eval (ctrns m) (ctrns v) k =
fun [ |- lam \x.u<m>u] [ |- lam \x.u<m>u] k (E_Lam m k) = E_Lam (trans [x:tm] m) k
  | [ |- app u<m>u u<n>u ] w (k+l+j+1) (E_App m n k l j m' v w e1 e2 e3) =
    E_App (ctrns m) (ctrns n) k l j (trans [x:tm] m') (ctrns v) (ctrns w)
          (val_preserve m [ |- lam \x.u<m'>u] k e1)
                       % Eval (ctrns m) [ |- lam \x.u<trans [x:tm] m'>u] k
          (val_preserve n v l e2)                      % Eval (ctrns n) (ctrns v) l
          (subst [ |- tm] (fun e => Eval e (ctrns w) j)
                 (ctrns [ |- u<m'>u with u<v>u/x]) ([ |- u<trans [x:tm] m'>u with u<ctrns v>u/x])
                 (lemma_trans [] m' v) (val_preserve [ |- u<m'>u with u<v>u/x] w j e3))
                       % Eval (ctrns [ |- u<m'>u with u<v>u/x]) (ctrns w) j
  | [ |- letv u<m>u \y.u<n>u] w (k+l+1) (E_Let m n k l v w e1 e2) =
    E_App [ |- lam \x. u<trans [x:tm] n>u] (ctrns m) 0 k l (trans [x:tm] n) (ctrns v) (ctrns w)
          (E_Lam (trans [x:tm] n) 0)
                                  % Eval (ctrns [ |- lam \y.u<n>u]) (ctrns [ |- lam \y.u<n>u]) 0)
          (val_preserve m v k e1)                    % Eval (ctrns m) (ctrns v) k)
          (subst [ |- tm] (fun e => Eval e (ctrns w) l)
                 (ctrns [ |- u<n>u with u<v>u/x]) ([ |- u<trans [x:tm] n>u with u<ctrns v>u/x])
                 (lemma_trans [] n v) (val_preserve [ |- u<n>u with u<v>u/x] w l e2))
                                 % Eval (ctrns [ |- u<n>u with u<v>u/x]) (ctrns w) l)
\end{lstlisting}

The proof above relies on a lemma that states that it does not matter whether we
translate first a term \lstinline!m! of type \lstinline![\gamma, x:tm |- tm]! and
then replace \lstinline!x! with the translation of the term \lstinline!v! or we
translate directly the term \lstinline!m! where we already substituted for \lstinline!x!
the term \lstinline!v!. It is applied by using substitutivity of identity type whose type
is:

\noindent
\lstinline!subst : (A : type) (P : A -> type) (a b : A) a = b -> P a -> P b!

\begin{lstlisting}
rec lemma_trans : (\gamma : ctx) (m : [\gamma, x:tm |- tm]) (v : [\gamma |- tm])
         trans \gamma [\gamma |- u<m>u with $\wk{\gamma}$, u<v>u/x]
         = [\gamma |- u<trans (\gamma, x:tm) m>u with $\wk{\gamma}$, u<trans \gamma v>u/x] =
fun \gamma [\gamma, x |- x] = refl  % since trans \gamma v = [\gamma |- u<trans \gamma v >u]
  | \gamma [\gamma, x |- u<p>u with $\wk{\gamma}$] where p : [\gamma |-$_\#$ \tm] =
       refl  %trans \gamma p = [\gamma |- u<p>u with $\wk\gamma$]
  | \gamma [\gamma, x |- lam \y.u<m>u] =
    cong [\gamma,y:tm |- tm] [\gamma,y:tm |- tm]
      (fun e => [\gamma |- lam (\y. u<e>u)])
      (trans (\gamma,y:tm) [\gamma,y |- u<m>u with $\wk\gamma$,u<v>u/x,y/y])
      [\gamma,y |- u<trans (\gamma,x:tm,y:tm) m>u with $\wk\gamma$, u<trans \gamma v>u/x, y/y]
      (lemma_trans (\gamma, y:tm) [\gamma, y, x |- u<m>u with $\wk\gamma$,x/x, y/y])

  | \gamma [\gamma, x |- app u<m>u u<n>u] =
    cong2 [\gamma |- tm] [\gamma |- tm] [\gamma |- tm]
      (fun m n => [\gamma |- app u<m>u u<n>u])
      (trans \gamma [\gamma |- u<m with $\wk\gamma$,u<v>u/x>u])
      ([\gamma |- u<trans (\gamma,x:tm) m with $\wk\gamma$,u<trans (\gamma) v>u,x>u])
      (trans \gamma [\gamma |- u<n with $\wk\gamma$,u<v>u/x>u])
      ([\gamma |- u<trans (\gamma,x:tm) n with $\wk\gamma$,u<trans (\gamma) v>u,x>u])
      (lemma_trans \gamma m v) (lemma_trans (\gamma) n v)

  | \gamma [\gamma, x |- letv u<m>u \y.u<n>u] =
    cong2 [\gamma |- tm] [\gamma |- tm] [\gamma |- tm]
      (fun m n => [\gamma |- app (lam \y.u<n>u) u<m>u])
      (trans \gamma [\gamma |- u<m>u with $\wk\gamma$,u<v>u/x])
      [\gamma |- u<trans (\gamma,x:tm) m>u with $\wk\gamma$,u<trans \gamma v>u/x]
      (trans (\gamma,y:tm)[\gamma,y |- u<n>u with $\wk\gamma$,u<v>u/x, y/y])
      [\gamma,y |- u<trans (\gamma,x:tm,y:tm) n>u with $\wk\gamma$,u<trans \gamma v>u/x,y/y]
      (lemma_trans \gamma m v)
      (lemma_trans (\gamma,y:tm) [\gamma, y, x |- u<n>u with $\wk\gamma$,x/x,y/y])
\end{lstlisting}
% Version of lam without subst:
% let refl = lemma_trans (\gamma,y:tm) [\gamma, y, x |- u<m>u with $\wk\gamma$,x/x, y/y]
      %   trans (\gamma, y:tm) [\gamma, y |- u<m>u with $\wk\gamma$, u<v>u/x, y/y]
      % $\equiv$ [\gamma  |- u<trans (\gamma, y:tm, x:tm) [\gamma, y, x |- u<m>u with  $\wk\gamma$,x/x, y/y]>u with $\wk{\gamma}$, u<trans \gamma v>u/x]
%     in refl
       %  trans (\gamma) [\gamma|- lam \y.u<m>u with $\wk\gamma$, u<v>u/x, y/y]
       % $\equiv$ [\gamma |- lam \y. u<trans (\gamma, x:tm,y:tm) m>u with $\wk\gamma$, u<v>u/x, y/y]
       % =  [\gamma |- u<trans (\gamma, x:tm) [\gamma, x |- lam \y.u<m>u]>u with $\wk{\gamma}$, u<trans \gamma v>u/x]
       % $\equiv$  [\gamma |- u< [\gamma, x:tm |- lam \y.u<trans (\gamma,x:tm, y:tm) u<m>u> >u ] with $\wk{\gamma}$, u<trans \gamma v>u/x]
       % $\equiv$  [\gamma |- lam \y.u<trans (\gamma, x:tm, y:tm) u<m>u> with $\wk\gamma$, u<trans \gamma v>u/x, y/y]

In the first case, we have by definitional equality that indeed \lstinline!trans \gamma v = [\gamma |- u<trans \gamma v >u]!.
In the second case, we in addition exploit that composing
\lstinline!$\wk{\gamma}$, u<trans \gamma v>u/x! with $\wk\gamma$ simply is $\wk\gamma$ effectively dropping
\lstinline!u<trans \gamma v>u/x!.

The recursive cases are handled by means of congruence. We rebuild the term on each side by joining together the equalities
obtained for the subterms. The types for the functions on congruences are respectively:
\begin{lstlisting}
cong1 : (A B : type) (f : A -> B) (a1 a2 : A)-> a1 = a2 -> (f a1) = (f a2)
cong2 : (A B C : type) (f : A -> B -> C) (a1 a2 : A) (b1 b2 : B) -> a1 = a2 -> b1 = b2 -> (f a1 b1)=(f a2 b2)
\end{lstlisting}
In the \lstinline!lam! case and in the \lstinline!letv! case, we exploit that we can build a substitution that exchanges variables such that we can keep the
\lstinline!x!, the variable that we want to replace, to the right most position. This is a standard technique employed in HOAS systems.

Note that we would not be able to implement such functions in Beluga for several reasons:

\begin{itemize}
\item We directly refer to the function \lstinline!ctrns! in the type of \lstinline!val_preserve! to indicate that if \lstinline!m! evaluates to a value \lstinline!v! then the translation of \lstinline!m! (i.e. \lstinline!ctrns m!) also evaluates to the translation of the value of \lstinline!v! (i.e. \lstinline!ctrns v!). In Beluga, we would need to reify the function \lstinline!ctrns! as an inductive type and then pass it as an additional argument to \lstinline!val_preserve!.

\item The lemma \lstinline!lemma_trans! also directly references the function type of \lstinline!trans! in stating that the equality property.

\item We reason by equality, in particular we use the functions \lstinline!subst!, \lstinline!cong1!, and \lstinline!cong2! which all are polymorphic. Polymorphism is presently not supported in Beluga, or any system we are aware of that supports HOAS.

\end{itemize}

% \paragraph{Polymorphism using universes}
% Here we revisit the case study on proving bisimulation for a lazy simply-typed ML-like language (see for example \cite{}).
% As this case study illustrated, higher-order abstract syntax can be exploited to our advantage and looking at the problem % through the lenses of contextual type theory provided an elegant framework for proving bisimulation.

% However, as \beluga, das not support polymorphism, we are unable to state the bisimulation relation generically. Using
% universes we can now state and prove generically.

%%% Local Variables:
%%% mode: latex
%%% TeX-master: "../main"
%%% End:

\section*{Recursors over dependently typed LF objects}
LF definitions are not inductive -- however, we can generate recursors
for each LF type following the procedure described in
\cite{Pientka:TLCA15}.

To illustrate concretely, how recursors look for a dependently typed
LF signature, we consider here another example: the representation of well-typed
terms. In this case, our LF signature contains the following type
families and constants.

\[
  \begin{array}{p{3cm}@{~}l@{~}r@{~}l}
LF Signature & \Sigma & \bnfas & \tp:\lftype, \mathsf{nat}:\tp,~
%                                  \mathsf{cross}:\Pityp y \tm {\Pityp y \tm \tm},~
                                 \mathsf{arr}:\Pityp y \tp {\Pityp y \tp \tp} \\
& & & \tm:\Pityp a \tp \lftype,~
     ~ \mathsf{z}:\tm~\mathsf{nat},~
       \mathsf{suc}:\Pityp y {\tm~\mathsf{nat}} {\tm~\mathsf{nat}},~\\
% & & &  \mathsf{pair}:\Pityp a \tp {\Pityp b \tp {\Pityp x {\tm~a} {\Pityp y {\tm~b} {\tm~(\mathsf{cross}~a~b)} }}}\\
% & & &  \mathsf{fst}:\Pityp a \tp {\Pityp b \tp {\Pityp x {\tm~(\mathsf{cross}~a~b)} {\tm a}} },\\
% &&&    \mathsf{snd}:\Pityp a \tp {\Pityp b \tp {\Pityp x {\tm~(\mathsf{cross}~a~b)} {\tm b}} },~ \\
& & & \clam: \Pityp a \tp {\Pityp b \tp {\Pityp y {(\Pityp x {\tm~a} {\tm~b})} {\tm~(\mathsf{arr}~a~b)}}},~\\
&&&    \mathsf{app}: \Pityp a \tp {\Pityp b \tp {\Pityp x {\tm~(\mathsf{arr}~a~b)} {\Pityp y {\tm~a} {\tm~b}}}}
  \end{array}
\]

For easier readability, we simply write how it looks when we declare these constants in \beluga or Twelf. Note that we write simply \lstinline!->! if $B$ does not depend on $x$ in $\Pi x{:}A.B$. We also omit abstracting over the implicit arguments -- this is common practice in logical frameworks, as the type for \lstinline!A! and \lstinline!B! can be inferred.

\begin{lstlisting}
tp : type
nat: tp.
arr: tp -> tp -> tp.

tm: tp -> type.
z  : tm nat.
suc: tm nat -> tm nat.
lam: (tm A -> tm B) -> tm (arr A B).
app: tm (arr A B) -> tm A -> tm B.
\end{lstlisting}

To build the recursor for the type family $\mathsf{tm}~a$ we proceed as follows:

\begin{itemize}
\item We generalize the recursor to $\titer{\R}{}{\IH} \rappto\Psi~s~t$. Here the intention is that $s$ has type $\cbox{ \vdash \mathsf{\tp}}$ and $t$ has type $\cbox{\Psi \vdash \tm~\unbox{s}{\wk{\hatctx\Psi}}}$. Hence \lstinline!t! depends not only on the LF context $\Psi$ but also on the type \lstinline!s!. As \lstinline!s! denotes a closed type, we weaken it to be used within the LF context $\Psi$. \newline
 In general, we have a vector $\vec s$ to describe all the implicit arguments $t$ depends on. Note that even in an LF signature that is simply typed, i.e. we have for example defined $\tm : \lftype$, the type of $t$ already depends on $\Psi$, since it has contextual type $\cbox{\Psi \vdash \tm}$. Moreover, we already are tracking this dependency on the LF context. Hence, the generalization to more dependent arguments is quite natural.

\item The recursor for iterating over contextual terms of type $\cbox{\psi \vdash \mathsf{tm}~a}$ will have 5 branches: 4 branches covering each constructor and one branch for the variable case.

\[
\trec {(b_v \mid b_{\mathsf{z}} \mid b_{\mathsf{suc}} \mid b_{\mathsf{app}}  \mid b_{\mathsf{lam}})}{}{\IH}
\]

where

\[
  \begin{array}{l@{~}r@{~}l}
b_v         & \bnfas & \psi, a, p \mto t_v \\
b_{\mathsf{z}} & \bnfas & \psi \mto t_{\mathsf{z}} \\
b_{\mathsf{suc}} & \bnfas & \psi, n, f_n \mto t_{\mathsf{suc}} \\
% b_{\mathsf{pair}} & \bnfas & \psi, a, b, m, n, f_n, f_m \mto t_{\mathsf{pair}} \\
% b_{\mathsf{fst}} & \bnfas & \psi, a, b, n, f_n \mto t_{\mathsf{fst}} \\
% b_{\mathsf{snd}} & \bnfas & \psi, a, b, n, f_n \mto t_{\mathsf{snd}} \\
b_{\mathsf{lam}} & \bnfas & \psi, a, b, m, f_m \mto t_{\mathsf{lam}} \\
b_{\mathsf{app}} & \bnfas & \psi, a, b, n, m, f_n, f_m \mto t_{\mathsf{app}}
  \end{array}
\]

\item We give the typing rules for the recursor over  terms of type $\cbox{\psi \vdash \mathsf{tm}~a}$. Each branch gives rise to a specific typing rule and we label the $\vdash_{l}$ with the label $l$ where $l = \{v,~\mathsf{z},~\mathsf{suc},~\mathsf{lam},~\mathsf{app}
\}$ for clarity.

\[
\begin{array}{c}
\multicolumn{1}{l}{\mbox{Recursor over LF Terms}~\IH = (\psi : \tmctx)
  \arrow (a:\cbox{~\vdash \tp}) \arrow (y:\cbox{\psi \vdash \tm~a}) \arrow \tau }\\[0.5em]
\multicolumn{1}{l}{\mbox{where}~l = \{
v,~\mathsf{z},~\mathsf{suc},~\mathsf{lam},~\mathsf{app}
\} }\\[0.5em]
\infer% [\mbox{where}~\IH = (\psi : \tmctx) \arrow (y:\cbox{\psi \vdash \tm}) \arrow \tau]
{\Gamma \vdash
\trec {(b_v \mid b_{\mathsf{z}} \mid b_{\mathsf{suc}} \mid b_{\mathsf{lam}}
   \mid b_{\mathsf{app}})}{}{\IH} \rappto \Psi~s~t : \{{\Psi}/\psi,~s/a,~t/y\}\tau}
{\Gamma \vdash s : \cbox{~\vdash \tp} \qquad
 \Gamma \vdash t :  \cbox{\Psi \vdash \tm~\unbox{s}{\wk{\Psi}}} & \Gamma \vdash \IH : u &
 \Gamma \vdash_l b_{l} : \IH }
\\[1em]
\multicolumn{1}{p{13.5cm}}{\mbox{Branches where}
% \quad$\IH = (\psi : \tmctx) \arrow (y:\cbox{\psi \vdash \tm}) \arrow
% \tau$
}
\\[1em]
\infer{\Gamma \vdash_v ({\psi,a, p \mto t_v}) : \IH }
{ \Gamma, \psi:\tmctx, a:\cbox{~ \vdash \tp},~p:\cbox{~\unboxc{\psi} \vdash_\# \tm~\unbox{a}{\wk\psi}}  \vdash t_v : \{a / a,~p / y\}\tau}
\\[1em]
\infer{\Gamma \vdash_{\mathsf{z}} (\psi \mto t_{\mathsf{z}}) : \IH }
{\Gamma, \psi:\tmctx \vdash t_{\mathsf{z}} : \{\cbox{~\vdash \mathsf{nat}} / a,~\cbox{\psi\vdash \mathsf{z}}/y\}\tau}
\\[1em]
\infer{\Gamma \vdash_{\mathsf{suc}} (\psi, m \mto t_{\mathsf{suc}}) : \IH}{
 \begin{array}{l@{}lcl}
  \Gamma & \psi:\tmctx, m:\cbox{\psi \vdash \tm~\mathsf{nat}} & & \\
          & f_m : \{\cbox{~\vdash\mathsf{nat}} / a,~m/y\}\tau               & \vdash & t_{\mathsf{suc}} :
 \{\cbox{~\vdash \mathsf{nat}}/a,~\cbox{\psi \vdash \mathsf{suc}~\unbox{m}{\id}}/y\}\tau
  \end{array}
 }
\\[1em]
\infer{\Gamma \vdash_{\mathsf{app}} (\psi, a, b, m, n, f_n, f_m \mto t_{\mathsf{app}}) : \IH}
{
  \begin{array}{l@{}lcl}
\Gamma, &\psi:\tmctx, a:\cbox{~\vdash \tp},~b:\cbox{~\vdash \tp},& &    \\
& m:\cbox{~\unboxc{\psi} \vdash \tm~(\mathsf{arr}~\unbox{a}{\wk\psi}~\unbox{b}{\wk\psi})}, n:\cbox{\psi \vdash \tm~\unbox{a}{\id}}& & \\
&         f_m: \{\mathsf{arr}~\unbox{a}{\id}~\unbox{b}{\id}/a,m/y\}\tau, f_n: \{a/a,~n/y\}\tau
& \vdash &  t_{\mathsf{app}} : \{b/a,~\cbox{~\unboxc{\psi} \vdash \capp~\unbox{m}{\id}~\unbox{n}{\id}}/y\}\tau
  \end{array}
}
\\[1em]
\infer{\Gamma \vdash_{\clam} (\psi, a, b, m, f_m \mto t_{\clam}) : \IH}
{ \begin{array}{l@{}lcl}
\Gamma, & \phi:\tmctx,  a:\cbox{~ \vdash \tp}, b:\cbox{~\vdash \tp}& & \\
        & m:\cbox{\phi, x:\tm~\unbox{a}{\wk\psi} \vdash \tm~\unbox{b}{\wk\psi}}, & & \\
        &  f_m:\{(\phi, x:\tm~a)/\psi, b/a,~m /y \} \tau & \vdash & t_{\clam} : \{\phi/\psi,~\cbox{~\vdash \mathsf{arr}~\unbox{a}{\id}~\unbox{b}{\id}}/a,~\cbox{~\unboxc{\phi} \vdash \clam~\lambda x.\unbox{m}{\id}~} / y\}\tau
 \end{array}
 }
\end{array}
\]
\end{itemize}

\end{document}